\setlist[description]{font=\normalfont\itshape\textbullet\space}
\renewcommand{\paragraph}[1]{\vspace{6pt} \noindent \textbf{#1}\xspace}
\theoremstyle{plain}
\newtheorem{theorem}{Theorem}[section]
\newtheorem{corollary}[theorem]{Corollary}
\newtheorem{lemma}[theorem]{Lemma}
\newtheorem{proposition}[theorem]{Proposition}
\newtheorem{claim}[theorem]{Claim}
\newtheorem{fact}[theorem]{Fact}
\theoremstyle{definition}
\newtheorem{definition}[theorem]{Definition}
\newtheorem{example}[theorem]{Example}
\newcommand{\GL}{\mathrm{GL}}
\newcommand{\F}{\mathbb{F}}
\newcommand{\centre}{\mathrm{Z}}
\newcommand{\mtupletomatrix}{\mathtt{Matrix}}
\newcommand{\rowvectorspace}{\mathtt{RowVec}}
\newcommand{\rowtuplespace}{\mathtt{RowTuple}}
\newcommand{\poly}{\mathrm{poly}}
\newcommand{\rad}{\mathrm{rad}}
\newcommand{\M}{\mathrm{M}}
\newcommand{\charpoly}{\mathrm{ch}}
\newcommand{\vecz}{\mathbf{0}}
\newcommand{\tuple}[1]{\mathbf{#1}}
\newcommand{\spa}[1]{\mathcal{#1}}
\newcommand{\vspan}{\mathrm{span}}
\newcommand{\proj}{\mathrm{Proj}}
\newcommand{\rowvecproj}{\mathtt{RowVecProj}}
\newcommand{\EC}{\mathtt{EC}}
\newcommand{\IBCtuple}{IBC-tuple\ }
\newcommand{\IBCtuplenospace}{IBC-tuple}
\newcommand{\IBCtuples}{IBC-tuples\ }
\newcommand{\IBCtuplesnospace}{IBC-tuples}
\newcommand{\cA}{\spa{A}}
\newcommand{\cC}{\spa{C}}
\newcommand{\cG}{\spa{G}}
\newcommand{\vA}{\tuple{A}}
\newcommand{\vB}{\tuple{B}}
\newcommand{\vC}{\tuple{C}}
\newcommand{\vD}{\tuple{D}}
\newcommand{\vE}{\tuple{E}}
\newcommand{\vF}{\tuple{F}}
\newcommand{\vG}{\tuple{G}}
\newcommand{\vH}{\tuple{H}}
\newcommand{\vK}{\tuple{K}}
\newcommand{\vP}{\tuple{P}}
\newcommand{\vQ}{\tuple{Q}}
\newcommand{\va}{\tuple{a}}
\newcommand{\vb}{\tuple{b}}
\newcommand{\vc}{\tuple{c}}
\newcommand{\vd}{\tuple{d}}
\newcommand{\ve}{\tuple{e}}
\newcommand{\vf}{\tuple{f}}
\newcommand{\vg}{\tuple{g}}
\newcommand{\vq}{\tuple{q}}
\newcommand{\ibcspace}{V}
\newcommand{\ibcspacekernel}{K}
\newcommand{\ibcextensionspace}{E}
\newcommand{\ibcextensionspacekernel}{F}
\newcommand{\rowtupvs}[2]{{#1}^{{(#2)}}}
\DeclareMathOperator{\wc}{wc}
\DeclareMathOperator{\diag}{diag}
\DeclareMathOperator{\Env}{Env}
\newcommand{\too}%
{\xrightarrow{\text{\raisebox{-3pt}{$\sim$}}\,}}
\def\DJ{{\hbox{D\kern-.8em\raise.15ex\hbox{--}\kern.35em}}}
\title{
Canonical forms for matrix tuples in polynomial time
}
\author{
Youming Qiao~\thanks{\texttt{youming.qiao@uts.edu.au}. University of Technology Sydney.}
\and 
Xiaorui Sun~\thanks{\texttt{xiaorui@uic.edu}. University of Illinois at Chicago.}
}
\date{}
\begin{document}

\maketitle

\begin{abstract}
Left-right and conjugation actions on matrix tuples have received considerable attention in theoretical computer science due to their connections with polynomial identity testing, group isomorphism, and tensor isomorphism.
In this paper, we present polynomial-time algorithms for computing canonical forms of matrix tuples over a finite field under these actions. 
Our algorithm builds upon new structural insights for matrix tuples, which can be viewed as a generalization of Schur's lemma for irreducible representations to general representations.

\end{abstract}

\thispagestyle{empty}

\newpage

\pagenumbering{arabic}  

\section{Introduction}
\vspace{-.2cm}Representing objects in a canonical and succinct way that can exhibit the underlying properties and structures of the objects is a fundamental problem in mathematics and computer science. 

A classic example is the Jordan normal form for matrices in linear algebra. It not only transforms the matrices into a canonical form under the similarity relation\footnote{Two $n\times n$ matrices $A$ and $B$ are similar if there exists an invertible matrix $T$ such that $A=TBT^{-1}$. Then $A$ and $B$ are similar if and only if their Jordan normal forms are the same.
}, but it also demonstrates important structural information such as characteristic polynomials, algebraic and geometric eigenvalue multiplicities, the structure of generalized eigenvectors, and invariant subspace decompositions. The Jordan normal form is an important archetype in some mathematical areas. For example, it leads to Jordan--Chevalley decompositions, a useful tool in the study of linear algebraic groups \cite{Borel}. It also implies that the classification problem for matrices under the conjugation action is ``tame'' which basically means it is classifiable in the representation theory of finite-dimensional algebras \cite{Ringel}. The Jordan normal form is also found useful in spectral graph theory \cite{DGT17,Her94}.

The general canonical form problem aims to transform %\xnote{was turn} 
combinatorial and algebraic objects, such as graphs~\cite{Bab19, babai2013faster, babai1983canonical,sun2015faster,schweitzer2019unifying,weisfeiler1968reduction}, tensors~\cite{NQT24,Wei84}, and groups~\cite{BEO02}, into a canonical representation such that for equivalent inputs, the output representation is the same. The study of these canonical forms leads to the development of a wide range of structural theories~\cite{babai2013faster, sun2015faster, schweitzer2019unifying, weisfeiler1968reduction}.

In this paper, we study the canonical representations of matrix tuples over finite fields. A matrix tuple is a sequence of matrices of the same size over the same finite field. 
We consider two actions for matrix tuples: the left-right action and the conjugation action. 
For a matrix tuple $(A_1, \dots, A_\ell)$ of size $n \times m$, 
the \emph{left-right action} by an $n\times n$ invertible matrix $L$ and an $m \times m$ invertible matrix $R$ transforms %\xnote{was sends} 
$(A_1, \dots, A_\ell)$ into another matrix tuple $(LA_1R^{-1}, \dots, LA_mR^{-1})$.
For the \emph{conjugation action}, it requires the matrix tuple to be square matrices, and the conjugation action by an invertible matrix $L$ sends $(A_1, \dots, A_\ell)$ to $(LA_1L^{-1}, \dots, LA_\ell L^{-1})$. 
Two matrix tuples are \emph{equivalent} if there exists a left-right action that transforms %\xnote{was send} 
one matrix tuple into another, and two square matrix tuples are \emph{conjugate} if there is a conjugation action that transforms %\xnote{was sends} 
one matrix tuple into another.

%Two matrix tuples $(A_1, \dots, A_\ell)$ and $(B_1, \dots, B_\ell)$ of size $n\times m$  are \emph{equivalent} if there is a left-right action to send one matrix tuple to another, i.e., there are $n\times n$ invertible matrix $L$ and $m \times m$ invertible matrix $R$ such that $(LA_1R^{-1}, \dots, LA_mR^{-1})=(B_1, \dots, B_m)$.  They are \emph{conjugate}, if $n=m$, and there is a conjugation action to send one matrix tuple to another, i.e., there exists invertible matrix $L$, such that $(LA_1L^{-1}, \dots, LA_\ell L^{-1})=(B_1, \dots, B_\ell)$.

%\paragraph{Notations and definitions.} We use $\M(n\times n, q)$ to denote the linear space of $n\times m$ matrices over $\F_q$, the finite field of order $q$. Let $\GL(n, q)$ be the general linear group consisting of $n\times n$ invertible matrices over $\F_q$. 

%Two matrix tuples $(A_1, \dots, A_\ell)$ and $(B_1, \dots, B_\ell)$ in $\M(n\times m, q)^\ell$ are \emph{equivalent}, if there exist $L\in \GL(n, q)$ and $R\in\GL(m, q)$, such that $(LA_1R^{-1}, \dots, LA_mR^{-1})=(B_1, \dots, B_m)$. They are \emph{conjugate}, if $n=m$, and there exists $L\in\GL(n, q)$, such that $(LA_1L^{-1}, \dots, LA_\ell L^{-1})=(B_1, \dots, B_\ell)$.

A \emph{canonical form} algorithm for the matrix tuples in the equivalence case needs to satisfy the following conditions: given a matrix tuple $\vA=(A_1, \dots, A_\ell)$, the algorithm outputs $\vA^*=(A_1^*, \dots, A_\ell^*)$, such that $\vA$ and $\vA^*$ are equivalent, and for any matrix tuple $\vA'=(A_1', \dots, A_\ell')$ equivalent to $\vA$, the algorithm outputs the same $\vA^*$. In other words, $\vA^*$ serves as a representative in the set of matrix tuples equivalent to $\vA$. 
A canonical form algorithm in the matrix tuple conjugation case is defined in the same way by replacing ``equivalent'' with ``conjugate'' in the above.

As the main result of this article, we present polynomial-time canonical form algorithms for matrix tuples under equivalence or conjugation actions over finite fields.
In the following, we shall introduce motivations for studying this problem and then describe our results in more detail.

\vspace{-.2cm}\subsection{Motivations} 
Matrix tuples, which encode systems of linear transformations or bilinear forms, have been studied in various scenarios. 
We motivate the study of matrix tuples and their canonical form from both theoretical computer science and mathematics perspectives. 

%In particular, recent progress suggests that matrix tuple is the key to studying important isomorphism problems such as group isomorphism, graph isomorphism, and tensor isomorphism.

\vspace{-.2cm}\subsubsection{Motivations from theoretical computer science}

\vspace{-.25cm}\paragraph{Orbit closure intersection problems.} 
Matrix tuples under the left-right action have received considerable attention in theoretical computer science \cite{AGL+18,DM20,GGOW16,HH21,IQ23,IQS17,IQS18}. 

One reason for interest in this action is the symbolic determinant identity testing (SDIT) problem. 
SDIT asks whether, for a given matrix tuple, the linear span of the matrices in this tuple contains a full-rank matrix. Derandomizing SDIT implies circuit lower bounds that seem beyond current techniques \cite{KI04}.
As the left-right action preserves matrix ranks, it is desirable to study the equivalence classes of matrix tuples under the left-right action, as pursued in several works mentioned above as well as in \cite{Mul17,MW21}.

In particular, the orbit closure intersection problems for the left-right and conjugation actions have surprising connections to many areas of mathematics \cite{GGOW16,IQS18}. By \cite{DM17}, they can be formulated as an instance of symbolic determinant identity testing. Recent advances \cite{AGL+18,DM20,IQ23} provide deterministic polynomial-time algorithms for these orbit closure intersection problems.

\paragraph{Matrix tuples from group isomorphism.} 
Testing the isomorphism of (finite) groups has been extensively studied since the 1970s. However, even after more than half a century, the best-known algorithm for group isomorphism remains a quasi-polynomial time algorithm dating back to the 1970s~\cite{FN70,Mil78}. 
Improving the running time for group isomorphism is of interest in both computer science and mathematics, as evidenced by Gowers' question \cite{Gow11}, which led to Wilson's work \cite{Wil19}.
%and it also attracted a lot of mathematicians, including Field Medalist Timothy Gowers. 
On the other hand, due to the recent breakthrough in graph isomorphism by Babai \cite{Bab16}, group isomorphism has become a major bottleneck in making any further progress on graph isomorphism \cite{Bab16}. 

Very recently, Sun proposed an $n^{O((\log n)^{5/6})}$-time algorithm for testing isomorphism of $p$-groups of class $2$ and exponent $p$ \cite{Sun23}. 
This result removes a major barrier for an $n^{o(\log n)}$-time algorithm for group isomorphism. 
In Sun's work, the problem is reduced to understanding three matrix tuples under different actions, two of which are left-right actions. 
Therefore, it is conceivable that understanding the structure of matrix tuples can shed new light on further improvements to group isomorphism.

\paragraph{Towards understanding tensor isomorphism and canonical form.} 
Matrix tuples under equivalence actions also serve as an intermediate step to generalize our knowledge from matrices to tensors. 
Tensors have become increasingly important for computer science, not to mention their natural roles in statistics and quantum information. %Some recent progress on group isomorphism \cite{LQ17,IQ17}, including the breakthrough by Sun \cite{Sun23}, relies on transforming groups to tensors and then studying the tensor isomorphism problem as an intermediate step.

Equivalence relations of tensors are natural generalizations of equivalence relations of matrices, such as the similarity relation discussed in the context of Jordan normal forms.
One natural equivalence between tensors is as follows: Let $(A_1, \dots, A_n)$ and $(B_1, \dots, B_n)$ be two tuples of $n\times n$ matrices. 
We say that they are \emph{isomorphic as tensors} if there exist $n\times n$ invertible matrices $L, R, T=(t_{i,j})$, such that for every $i\in[n]$, $LA_iR^{-1}=\sum_{j\in[n]}t_{i,j}B_j$. 
The tensor isomorphism problem then asks to decide whether two given matrix tuples are isomorphic as tensors, and the tensor canonical form problem asks to compute a canonical form of the input matrix tuple that is invariant under the isomorphism as tensors.

The tensor isomorphism problem has been studied in a series of works~\cite{TI3,TI1,TI2,GQT22,TI4}, with applications found in quantum information~\cite{TI3,TI1}.
Current evidence suggests that tensor isomorphism is a hard problem. For $n\times n\times n$ tensors over a finite field $\F_q$, 
the best algorithm with worst-case analysis runs in time $q^{\tilde O(n^{1.5})}$ \cite{Frattini,Sun23}, with average-case analysis in time $q^{O(n)}$ \cite{BLQW,LQ17}, and with heuristic analysis in time $q^{\frac{1}{2}n}\cdot \poly(n, \log q)$ \cite{NQT24}. Indeed, because of these difficulties, digital signature schemes based on the assumed hardness of tensor isomorphism or equivalent problems have been proposed, including MEDS \cite{MEDS,MEDS_conf} and ALTEQ \cite{ALTEQ,ALTEQ_conf}, which are in submission to the NIST call for post-quantum digital signature schemes \cite{NIST_call}. Furthermore, recent research suggests that the tensor canonical form problem is also important to study in cryptography. For example, in~\cite{NQT24}, canonical forms for tensors are used as an isomorphism invariant for birthday paradox-based algorithms.

We believe that computing canonical forms for matrix tuples is an important intermediate step for the isomorphism and canonical form problems of tensors because matrix tuple equivalence is a more restricted form of tensor isomorphism.
Indeed, the best algorithms for tensor isomorphism~\cite{Frattini,NQT24,Sun23} are obtained by partially fixing the matrices in one direction of the tensor and then reducing the problem to certain equivalence problems for matrix tuples. 
Hence, we believe that understanding the structural and canonical form of matrix tuples is an important intermediate step toward understanding the structure and canonical form of tensors.

\vspace{-.25cm}\subsubsection{Motivations from mathematics}

\vspace{-.2cm}\paragraph{Matrix tuples as a wild classification problem.} In the representation theory of associative algebras, classifying representations of quivers is a central topic \cite{Ringel}, dating back to the work of Gelfand and Ponomarev\cite{gel-pon}. 
Roughly speaking, a classification problem is tame if it is classifiable (such as Jordan normal forms), and wild if it is not classifiable (defined as ``containing'' the problem of classifying pairs of matrices under simultaneous conjugation) (cf. \cite{Ben98}). 
The celebrated tame-wild dichotomy was proved by Drozd \cite{Drozd}. 
Classifying matrix tuples under equivalence (for no less than three matrices) or conjugation relations (for no less than two matrices) are well-known wild classification problems.

\vspace{-.15cm}\paragraph{Isomorphism and canonical form algorithms.} 
The wildness in classifying matrix tuples does not obstruct solutions to computational 
%\xnote{was other} 
problems about them. 
Indeed, polynomial-time algorithms are known for testing whether two matrix tuples are conjugate or equivalent~\cite{BL08,IKS10,IQ17}. 
In particular, testing whether two matrix tuples are conjugate is a central problem in computer algebra \cite{BW15}, with practical algorithms implemented in computer algebra systems such as GAP \cite{GAP} and Magma \cite{Magma}. 
Interestingly, to the best of our knowledge, these algorithms have not led to a canonical form algorithm for matrix tuple conjugation or equivalence so far.

For canonical form problems, Belitskii and Sergeichuk presented algorithms for the canonical form problems for these actions over algebraically closed fields \cite{Bel00,bel-ser_compl,ser}. 
However, the complexity of the Belitskii--Sergeichuk algorithm seems missing in the literature\footnote{For example, there may be issues with the field extension degree required for the resulting canonical forms.}, and it is unclear to us whether their algorithm extends to the finite field setting. 
Nevertheless, their algorithms indicate that non-trivial algorithms can be designed for the canonical form problem, despite the wildness of the classification problem.

\subsection{Main results}  

We now state our main result. 
Let $\M(n\times m, \F_q)$ denote the linear space of $n\times m$ matrices over $\F_q$  and $\M(n \times m, \F_q)^\ell$ denote the linear space of matrix tuples of length $\ell$ with each matrix in $\M(n\times m, \F_q)$. 

\begin{theorem}\label{thm:main}
    There is a randomized Las-Vegas algorithm to compute a canonical form of a matrix tuple in $\M(n\times m, \F_q)^\ell$  under the equivalence relation in $\poly(n, m, \ell, \log q)$ time. 
\end{theorem}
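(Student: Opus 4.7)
The plan is to proceed via Krull--Schmidt: decompose $\vA$ into indecomposable summands under the left-right action, canonicalize each indecomposable using the generalization of Schur's lemma announced in the abstract, and then assemble the summands in a canonical block-diagonal order. Concretely the algorithm has three stages: (i) indecomposable decomposition; (ii) canonicalization of each indecomposable; (iii) lexicographic sorting and assembly.

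For stage (i), the endomorphism algebra
\[
\End(\vA) = \{ (L, R) \in \M(n\times n, \F_q) \times \M(m\times m, \F_q) : L A_i = A_i R \text{ for all } i \in [\ell] \}
\]
is obtained by solving a linear system of size $\poly(n, m, \ell)$. Using the Las Vegas polynomial-time algorithms of R\'onyai and Friedl--R\'onyai for the Wedderburn structure of finite-dimensional associative $\F_q$-algebras, I would extract the Jacobson radical of $\End(\vA)$, lift a complete set of primitive orthogonal idempotents from the semisimple quotient, and apply these simultaneously to $\F_q^n$ and $\F_q^m$ to obtain a decomposition $\vA = \vA^{(1)} \oplus \cdots \oplus \vA^{(k)}$ with each $\vA^{(i)}$ indecomposable. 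By Krull--Schmidt the multiset of isomorphism classes of the summands is an invariant of $\vA$, and the known polynomial-time isomorphism tests for matrix tuples then partition the summands into isomorphism classes.

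For stage (ii) the structural input of the paper enters. For each indecomposable $\vA^{(i)}$, Fitting's lemma makes $\End(\vA^{(i)})$ a local algebra, but in contrast with Schur's classical setting it need not be a division algebra. The promised generalization of Schur's lemma should pin down the $\hom$-spaces between indecomposables and the unit group $\Aut(\vA^{(i)})$ rigidly enough to admit a canonical basis procedure---for instance by showing that $\Aut(\vA^{(i)})$ acts compatibly with canonically defined socle/radical filtrations on $\F_q^{n_i}$ and $\F_q^{m_i}$, so that a basis can be canonicalized graded piece by graded piece via lexicographic minimization over the remaining small coset. Stage (iii) is then a routine sort: order isomorphism classes by (dimension pair, multiplicity, lexicographic form), and emit the block-diagonal tuple.

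The main obstacle is stage (ii). Matrix-tuple classification is wild for $\ell \ge 3$, so the algorithm must produce canonical forms orbit by orbit rather than from any finite enumeration of indecomposables. The delicate technical point is to guarantee that the canonical basis chosen for $\vA^{(i)}$ depends \emph{only} on the isomorphism class of $\vA^{(i)}$, and not on the particular presentation handed to us by stage (i): two runs of the idempotent splitting may yield presentations differing by an element of $\Aut(\vA^{(i)})$, and the generalization of Schur's lemma must supply enough rigidity to cancel this ambiguity within polynomial time. This is precisely the role of the promised ``generalization of Schur's lemma for irreducible representations to general representations''.
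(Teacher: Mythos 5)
Your stage (i)/stage (ii) split breaks down precisely at the point the paper is designed to handle, and the fix is not a black box you can defer to a ``promised lemma''.

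The first problem is that an indecomposable decomposition $\vA = \vA^{(1)} \oplus \cdots \oplus \vA^{(k)}$ is not canonical when summands repeat up to isomorphism. If $\vA$ contains two copies of the same indecomposable $\vB$ (which is the generic situation over a finite field), the lift of primitive orthogonal idempotents from $\End(\vA)/\rad(\End(\vA))$ involves a genuine choice: the set of subspaces carrying a direct-summand copy of $\vB$ is, roughly, a projective line over $\End(\vB)/\rad(\End(\vB))$, and no element of it is distinguished. Sorting by (dimension, multiplicity, isomorphism class) in stage (iii) only canonicalizes the \emph{multiset of isomorphism types}; it does not canonicalize the actual subspaces you split off, and the output matrix tuple depends on those. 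You acknowledge this as an ``ambiguity within $\Aut(\vA^{(i)})$'', but the ambiguity is not within a single $\Aut(\vA^{(i)})$ — it lives in the Hom-spaces and transvections between isomorphic blocks and therefore in the full automorphism group of $\vA$, and cancelling it is exactly the hard part of the whole problem, not a bookkeeping issue after decomposition.

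The second problem is stage (ii) itself: ``lexicographic minimization over the remaining small coset'' is not polynomial. For an indecomposable $\vB$, $\End(\vB)$ is a local $\F_q$-algebra, but its unit group $\Aut(\vB)$ has size about $q^{\dim_{\F_q}\End(\vB)}$, which is exponential in $n,m$. There is no a priori reason the socle/radical filtration reduces the residual ambiguity to anything searchable. The paper's Theorem~\ref{thm:main_space_structure} is the tool that makes canonical selection feasible without enumeration, and it says something quite different from what you gesture at: it is not a rigidity statement about $\Aut$ of an indecomposable, but a statement that the set of row-submatrix tuples right-equivalent to a fixed indecomposable block is (up to a linear kernel $K_\vB$) a \emph{linear space} $V_\vB$. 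That near-linearity is what lets one canonically pick representatives by solving linear algebra rather than minimizing over cosets of a unit group. Consequently the paper never performs your stage (i) at all: it builds a hierarchical row tuple decomposition out of characteristic block-compatible subspaces, passes to quotient matrix tuples to peel off one layer at a time (Sections~\ref{sec:conjugation}--\ref{sec:quotient_matrix_tuple_main}), and uses the near-linear structure of \IBCtuple spaces to run the selection algorithm of Lemma~\ref{lem:canonical_with_ibctuples}. Absent that structural theorem and that machinery, your outline has no polynomial-time route from a non-canonical Krull--Schmidt splitting to a canonical form.
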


%See Section~\ref{sec:conjugation} for more details. 
%Matrix algebras come into the picture because if we start with a square matrix tuple,
%the matrix algebra generated by the matrices provides a characterization of the algebraic structure of the matrix algebra. 
%and compute a linear basis of the linear space that spans the matrix algebra generated by $\tA$ in a certain canonical way. This new tuple $\tA'$ gives us more handle on the original $\tA$. 
%The case of the generated algebra being an extension field serves as the base case when a canonical form can be computed. 

Theorem~\ref{thm:main} relies on computing matrix algebra structures by Friedl, Ivanyos, and R\'onyai~\cite{FR85,Iva00,Ron90}. 
Since the algorithms for computing matrix algebra structures over finite fields utilize polynomial factoring~\cite{Ber67,CZ81}, our algorithm for Theorem~\ref{thm:main} is a Las-Vegas randomized algorithm.

It is well-known that a canonical form algorithm for matrix tuple equivalence implies a canonical form algorithm for matrix tuple conjugation.  
Therefore, Theorem~\ref{thm:main} also provides a canonical form algorithm for matrix tuple conjugation.

\begin{corollary}\label{cor:main}
There is a randomized Las-Vegas algorithm to compute a canonical form of a matrix tuple in $\M(n \times n, \F_q)^\ell$  under the conjugation relation in $\poly(n, \ell, \log q)$ time. 
\end{corollary}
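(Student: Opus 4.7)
The plan is to reduce the canonical form problem for matrix tuple conjugation to the equivalence canonical form problem of Theorem~\ref{thm:main} by appending an identity matrix. Given $\vA = (A_1, \dots, A_\ell) \in \M(n \times n, \F_q)^\ell$, I would form the augmented tuple $\til{\vA} = (A_1, \dots, A_\ell, I_n) \in \M(n \times n, \F_q)^{\ell+1}$ and invoke Theorem~\ref{thm:main} on $\til{\vA}$ to produce its equivalence canonical form, which I denote by $(C_1, \dots, C_\ell, M)$ where $C_i = L A_i R^{-1}$ and $M = L R^{-1}$ for the (implicit) transformation pair $(L, R)$.

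The key observation is that two augmented tuples $(A_1, \dots, A_\ell, I_n)$ and $(B_1, \dots, B_\ell, I_n)$ are equivalent under the left-right action if and only if the original tuples $(A_i)$ and $(B_i)$ are conjugate: any equivalence $(L, R)$ between the augmented tuples must satisfy $L I_n R^{-1} = I_n$, which forces $L = R$; conversely, any conjugation $L$ between $(A_i)$ and $(B_i)$ extends to the equivalence pair $(L, L)$. Consequently, the equivalence canonical form of $\til{\vA}$ depends only on the conjugacy class of $\vA$.

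To produce a representative that is genuinely conjugate (and not merely equivalent) to the input $\vA$, I would output $\vA^{*} = (M^{-1} C_1, \dots, M^{-1} C_\ell)$. A direct computation gives $M^{-1} C_i = R L^{-1} \cdot L A_i R^{-1} = R A_i R^{-1}$, so $\vA^{*}$ is conjugate to $\vA$ via $R$, and by the previous paragraph the tuple $\vA^{*}$ depends only on the conjugacy class of $\vA$. The total running time is the cost of one call to the algorithm of Theorem~\ref{thm:main} on a length-$(\ell{+}1)$ tuple of $n \times n$ matrices, plus $O(n^3 \ell)$ for the products $M^{-1} C_i$, giving $\poly(n, \ell, \log q)$ overall.

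Since the reduction is elementary, there is no substantive obstacle here: the only non-trivial ingredient is Theorem~\ref{thm:main} itself, and the Las-Vegas randomness of the corollary is inherited directly from that theorem.
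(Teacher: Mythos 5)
Your proposal is correct and follows essentially the same approach as the paper: append $I_n$ to the tuple, apply the equivalence canonical form algorithm of Theorem~\ref{thm:main}, and then normalize using the image of the identity slot. The paper's algorithm right-multiplies by $E_{\ell+1}^{-1}$ to recover $L A_i L^{-1}$ whereas you left-multiply by $M^{-1}$ to recover $R A_i R^{-1}$; these are cosmetically different but both are well-defined functions of the canonical output $\vE$ and hence both depend only on the conjugacy class of $\vA$, so the proof goes through identically.
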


The key to Theorem~\ref{thm:main} is a structural result for matrix tuples. To state our structural result, we introduce some definitions. 
Similar to a block-diagonal matrix, a block diagonal matrix tuple is a matrix tuple where a sequence of matrix tuple blocks lies along the diagonal, and all other entries in each matrix are zero.
A matrix tuple is decomposable if it is equivalent to a block-diagonal matrix tuple with at least two blocks; otherwise, it is indecomposable.

%Let $\M(n\times m, \F_q)$ be the linear space of $n\times m$ matrices over $\F_q$. Let $\GL(n, \F_q)$ be the general linear group consisting of $n\times n$ matrices over $\F_q$. For $n\in \N$, $[n]:=\{1, 2, \dots, n\}$. 
%Let $\vA=(A_1, \dots, A_\ell)\in\M(n\times m, \F_q)^\ell$ be a matrix tuple. We say that $\vA$ is \emph{decomposable}, if there exist $L\in \GL(n, \F_q)$ and $R\in\GL(m, \F_q)$, such that $L\vA R^{-1}:=(LA_1R^{-1}, \dots, LA_\ell R^{-1})=(A_1', \dots, A_\ell')$, such that for every $i\in[\ell]$, $A_i'=\begin{bmatrix}
%    B_i & 0 \\
%    0 & C_i
%\end{bmatrix}$, where $B_i$ is of size $d\times e$, $d\in[n-1]$ and $e\in[m-1]$. That is, by applying change-of-basis matrices, $\vA$ is block-diagonal as a matrix tuple. We say that $\vA$ is \emph{indecomposable} if it is not decomposable. 
 % (see Section~\ref{subsec:discussion}). 
%\begin{theorem}[informal, cf. Theorem~\ref{thm:indecomposable_intro}]\label{thm:indecomposable_informal}
%For an indecomposable matrix tuple $\vA \in \M(n \times m, \F_q)^\ell$, let $\ibcspace_\vA'$ be the set of all the $L\vA$ for some $n \times n$ invertible matrix $L$ such that there is another $m\times m$ invertible matrix $R$ such that $L \vA R^{-1} = \vA$, and $\ibcspace_{\vA}$ be the linear span of $\ibcspace_{\vA}'$. Then there exists a subspace $\ibcspacekernel_{\vA}$ of $\vA$ such that $V_\vA'=V_\vA\setminus K_\vA$.
%\end{theorem}
%sLet $\GL(n, \F_q)$ be the group of $n\times n$ invertible matrices over $\F_q$.

A row-submatrix tuple $\vB$ of a matrix tuple $\vA$ is a matrix tuple such that there exists a (not necessarily square) matrix $L$ such that $\vB =  L \vA$. 
A row-submatrix tuple $\vB$ of $\vA$ is an \emph{indecomposable-block-corresponding row-submatrix tuple}, or \emph{\IBCtuple}for short, if $\vA$ is equivalent to a block-diagonal matrix tuple $\vD$ such that $\vB$ corresponds to an indecomposable block of $\vD$.
In other words, there are invertible matrices $L$ and $R$ such that $\vD = \diag(\vD_1, \dots, \vD_d) = L \vA R^{-1}$ and $\vB R^{-1} = (B_1 R^{-1}, \dots, B_\ell R^{-1})$ equals $\begin{bmatrix}
    0 & \vD_i & 0
\end{bmatrix}$
as a row-submatrix tuple of $\vD$ for some block $\vD_i$. 
%Two \IBCtuples are equivalent if they are equivalent as matrix tuples. 
%Two \IBCtuples $\vB$ and $\vB'$ are right-equivalent if there is an invertible matrix $R$ such that $\vB' = \vB R^{-1}$. 

\begin{theorem}\label{thm:main_space_structure}
For a matrix tuple $\vA$ and an \IBCtuple $\vB$ of $\vA$, 
let $\ibcspace_\vB'$ denote the set of all the \IBCtuples $\vB'$ right-equivalent to $\vB$, i.e., there exists an invertible matrix $R$ such that $\vB' = \vB R^{-1}$, and let $\ibcspace_{\vB}$ be the linear span of $\ibcspace_\vB'$. 
Then there exists a subspace $K_\vB$ of $V_\vB$, such that $V_\vB'=V_\vB\setminus K_\vB$. 
\end{theorem}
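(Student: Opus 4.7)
The plan is to translate the statement into the representation theory of the $\ell$-Kronecker quiver (two vertices $v,w$ with $\ell$ arrows from $v$ to $w$). The tuple $\vA$ corresponds to a finite-dimensional representation $M$ with $V_M = \F_q^m$ at $v$, $W_M = \F_q^n$ at $w$, and arrows given by the $A_k$; left-right equivalence becomes isomorphism of representations, block-diagonal forms become direct-sum decompositions, and Krull--Schmidt gives $M \cong \bigoplus_{j=1}^{t} N_j^{k_j}$ with pairwise non-isomorphic indecomposables $N_j$. An IBC-tuple $\vB$ then corresponds to a split surjection $f \colon M \twoheadrightarrow N$ onto an indecomposable summand $N$, with $\vB = f^W \vA$ where $f^W$ is the matrix of $f$ at vertex $w$.

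Fix $N = N_{j_0}$ with multiplicity $k = k_{j_0}$ and write $M = N^k \oplus M'$ where $M'$ has no summand isomorphic to $N$; set $n_N := \dim W_N$. By Fitting's lemma $\End(N)$ is a local ring with unique maximal ideal $\mathfrak{m}$, and
\[
\hom(M,N) \;\cong\; \End(N)^k \oplus \hom(M',N).
\]
A direct argument using locality shows that a morphism $h = (h_1, \dots, h_k, h')$ is a split surjection if and only if at least one $h_i \in \End(N)$ is a unit; equivalently, the non-split locus is the proper linear subspace $K_0 := \mathfrak{m}^k \oplus \hom(M',N)$. I then introduce the linear map
\[
\Phi \colon \hom(M,N) \longrightarrow \M(n_N \times m, \F_q)^\ell, \qquad h \mapsto h^W \vA,
\]
and propose $V_\vB := \Phi(\hom(M,N))$ and $K_\vB := \Phi(K_0)$, both linear subspaces with $K_\vB \subseteq V_\vB$.

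The identification $V_\vB' = \Phi(\hom(M,N) \setminus K_0)$ proceeds in two steps. First, for any $\phi \in \aut(M)$, the composite $f\phi$ is again a split surjection $M \twoheadrightarrow N$, and
\[
(f\phi)^W \vA \;=\; f^W \phi^W \vA \;=\; f^W \vA\, \phi^V \;=\; \vB\,\phi^V,
\]
so $\vB\phi^V$ lies in $V_\vB'$ via $R^{-1} = \phi^V \in \GL_m$. Second, by a standard corollary of Krull--Schmidt, any two split surjections $M \twoheadrightarrow N$ differ by precomposition with an automorphism of $M$: any $\aut(N)$-factor appearing a priori can be absorbed by extending it via the identity on $\ker f$ to an element of $\aut(M)$. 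Combined, every IBC-tuple right-equivalent to $\vB$ has the form $\vB\phi^V$ for some $\phi \in \aut(M)$, and taking linear spans yields $V_\vB = \Phi(\hom(M,N))$.

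It remains to verify $V_\vB' \cap K_\vB = \emptyset$. If $\Phi(h_1) = \Phi(h_2)$ with $h_1 \notin K_0$ and $h_2 \in K_0$, then $h_1 - h_2 \in \ker\Phi$. Using the morphism identity $h^W \vA = \vD_N h^V$, any $h \in \ker\Phi$ has $h^V$ landing in $\ker \vD_N$; since $(\ker \vD_N, 0, 0)$ is a subrepresentation of the indecomposable $N$ with trivial $w$-part, it vanishes for any non-trivial $N$ (i.e., $W_N \neq 0$), so $h^V = 0$, whence each diagonal $\End(N)$-component of $h$ has zero $V$-part and therefore lies in $\mathfrak{m}$; thus $\ker\Phi \subseteq K_0$. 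Since $K_0$ is closed under addition, $h_1 = h_2 + (h_1 - h_2) \in K_0$, a contradiction. The main obstacle is the second step of the $V_\vB'$-identification: showing that every $R \in \GL_m$ making $\vB R^{-1}$ IBC can be realized (modulo the right kernel of $\vB$) as $\phi^V$ for some $\phi \in \aut(M)$, which requires a careful Krull--Schmidt analysis of when two IBC-realizations of the same matrix tuple necessarily come from an automorphism of $M$.
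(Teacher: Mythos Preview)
Your approach is genuinely different from the paper's. The paper does \emph{not} give a direct structural proof of this theorem; instead it obtains the statement as a corollary of the entire algorithmic development: the canonical-form algorithm produces a representative IBC-tuple sequence, Lemma~\ref{lemma:extension_main} certifies that each member satisfies the ``space property'' of Definition~\ref{def:four_prop}, and the space property is exactly the assertion of the theorem. So your representation-theoretic route via $\hom(M,N)$, Fitting's lemma, and the map $\Phi$ is a real alternative, and a much shorter one if it can be completed.

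The gap you flag is real, and your phrasing (``realize $R$ as $\phi^V$'') points at the harder direction than necessary. What you actually need is only $V_\vB'\cap K_\vB=\emptyset$, and this follows from a lemma you have already almost proved. In your $\ker\Phi$ argument you use that for indecomposable $N$ with $W_N\neq 0$ one has $\bigcap_k\ker D_k=0$; note that this needs the observation that $(\bigcap_k\ker D_k,\,0)$ is a direct \emph{summand} of $N$ (choose any vector-space complement at $v$), not merely a subrepresentation, so it must vanish. The same fact yields: for every $e\in\mathfrak m\subseteq\End(N)$ the map $e^W$ is singular, since otherwise $\ker e$ would be a nonzero subrepresentation supported only at $v$, hence contained in $\bigcap_k\ker D_k=0$. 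Now suppose $\vB'\in V_\vB'\cap K_\vB$. On one hand $\vB'=h^W\vA$ with $h\in K_0$, so each block $h_i^W$ is singular. On the other hand $\vB'$ is an IBC-tuple, hence $\vB'=(g')^W\vA$ for a split surjection $g'\colon M\twoheadrightarrow N'$ with $N'\cong N$; assuming $\vA$ has trivial left kernel (harmless), $(g')^W=h^W$, so each $(g'_i)^W=h_i^W$. Choosing an isomorphism $\beta\colon N'\to N$, some $\beta g'_i\in\End(N)$ is a unit, whence $\beta^W h_i^W$ is invertible --- impossible since $h_i^W$ is singular. This closes the gap without ever producing the automorphism $\phi$ explicitly.

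Two minor points: your argument tacitly assumes $\vA$ has no nonzero left kernel (so that $h^W\vA=(g')^W\vA$ forces $h^W=(g')^W$); this is harmless but should be stated. And the degenerate summands $N$ with $W_N=0$ give $0\times m$ IBC-tuples, which the paper's nonzero-block convention excludes, so you may assume $W_N\neq 0$ throughout.
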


%Let $\vA$ be a matrix tuple. 
%Let $\vB$ be a row-submatrix tuple that corresponds to a block in some minimum block-diagonal matrix tuple equivalent to $\vA$. 
%All 
%\end{theorem}

%\begin{theorem}\label{thm:indecomposable_intro}
%Let $\vA\in\M(n\times m, \F_q)^\ell$ be an indecomposable matrix tuple. Let $V_\vA':=\{L\vA\mid L\in \GL(n, \F_q), \exists R\in \GL(m, \F_q), L\vA R^{-1}=\vA\}$, and let $V_\vA\subseteq \M(n, \F_q)$ be the linear span of $V_\vA'$. Then there exists a subspace $K_\vA$ of $V_\vA$, such that $V_\vA'=V_\vA\setminus K_\vA$.
%\end{theorem}

%Indecomposable matrix tuples correspond to indecomposable modules of the Kronecker quivers in the representation theory of associative algebras \cite{Ben98}. It is known that from the classification viewpoint, indecomposable modules are the building blocks of general modules, and to classify indecomposable modules is the bottleneck task~\xnote{seems this sentence is not complete}. 

%Our main technical result is the following property of indecomposable matrix tuples, which generalizes a consequence of the celebrated Schur's lemma\xnote{need a citation here} on irreducible modules.

To understand why Theorem~\ref{thm:main_space_structure} is interesting, note that the structure of $V_\vB'$, the set of \IBCtuples right-equivalent to $\vB$, could be highly nonlinear.
%is determined by a set of invertible matrices, so its structure could be highly nonlinear. 
On the other hand, $V_\vB$ is a linear space of row-submatrix tuples. Theorem~\ref{thm:main_space_structure} then suggests that $V_\vB'$ can be viewed as a quotient space of $V_\vB$ over the subspace $\ibcspacekernel_{\vB}$, so it is close to being linear.

Theorem~\ref{thm:main_space_structure} can be seen as a generalization of the fundamental Schur's lemma in representation theory. Roughly speaking, a matrix tuple $\vA$ under the left-right action can be viewed as a representation of the so-called Kronecker quivers \cite{Ben98}. 
Schur's lemma states that if $\vA$ is not just indecomposable but also simple (also known as irreducible, a condition stronger than indecomposable, see \cite{Ben98}), then the endomorphism algebra of $\vA$ is a division algebra, which implies Theorem~\ref{thm:main_space_structure} for indecomposable and simple matrix tuples by specifying $K_\vA$ as the zero space.

%Suppose $\vA$ itself is indecomposable, so $\vB = L\vA$ for an invertible matrix $L$. Furthermore, suppose $\vA$ is simple (also known as irreducible, a condition stronger than indecomposable, see \cite{Ben98}). Then Schur's lemma tells us that the endomorphism algebra of $\vA$ is a division algebra. From this, we can deduce Theorem~\ref{thm:main_space_structure} by specifying $K_\vA$ as the zero space. 

Our result can be viewed as a generalization of Schur's lemma to general representations. We remark that such a generalization is nontrivial, even with the known characterization of endomorphism algebras of indecomposable modules as local algebras \cite[Section 2, Theorem 2]{Alp93}. In particular, 
Theorem~\ref{thm:main_space_structure} for indecomposable subrepresentations in the general situation is affected by the interactions between non-equivalent indecomposable subrepresentations. 
Furthermore, unlike Schur's lemma, which is concerned with endomorphism algebras or homomorphisms, the subject in Theorem~\ref{thm:main_space_structure} is indecomposable subrepresentations up to right equivalence, which seems not studied in the literature.
Our main contribution is to discover and utilize this structural result in the process of devising canonical form algorithms.

\section{Overview}\label{sec:overview}

In this section, we provide a high-level overview of the algorithm to compute canonical forms for matrix tuples under the left-right action in polynomial time.

For convenience, we assume that all the vectors are row vectors throughout the paper.
%\begin{definition}
    For a matrix tuple $\vA = (A_1, \dots, A_\ell) \in \M(n\times m, \F_q)^\ell$, we define a tuple of row vectors $\va \in (\F_q^m)^\ell$ as a \emph{row tuple} of $\vA$ if there exists a row vector $v \in \F_q^n$ such that $\va = v \vA$, where $v \vA = (v A_1, \dots, v A_\ell)$. We refer to $v A_i$ as the \emph{$i$-th coordinate} of $\va$ for every $i \in [\ell]$. 
%We say a matrix tuple $\vC$ is a \emph{row-submatrix tuple} of $\vA$ if every row tuple of $\vC$ is a row tuple of $\vA$. 
%\end{definition}
For example, 
    consider the matrix tuple $\vA = (A_1, A_2) \in \M(3 \times 4, \F_2)^2$:
    \[
A_1 = \begin{bmatrix}
    1 & 0 & 1 & 0 \\
    0 & 1 & 0 & 0 \\
    0 & 1 & 1 & 1 \\
\end{bmatrix},
A_2 = \begin{bmatrix}
    0 & 0 & 0 & 1 \\
    1 & 1 & 0 & 0 \\
    0 & 0 & 1 & 0 \\
\end{bmatrix}.
\]
Then $\va = ((1, 1, 0, 1), (0, 0, 1, 1))$ is a row tuple of $\vA$ because $\va = (1, 0, 1) \vA$, and   $(1,1,0,1)$ is the first coordinate of $\va$. 
On the other hand, 
$\va' = ((1, 1, 1, 1), (1, 1, 0, 1))$ is not because no row vector $v\in \F_2^3$ satisfies $\va' = v \vA$. \\

In this paper, we investigate \IBCtuples of a matrix tuple. 
Structurally, we prove Theorem~\ref{thm:main_space_structure}. 
Algorithmically, we give an algorithm to compute a representative \IBCtuple sequence, denoted as $\vB_1, \dots, \vB_k$, for an input matrix tuple $\vA$. 
This sequence consists of \IBCtuples that are mutually non-equivalent (treating \IBCtuples as matrix tuples), and every \IBCtuple of $\vA$ is equivalent to one of the \IBCtuples in the sequence. 
Moreover, the representative \IBCtuple sequence produced by our algorithm is canonical. That is, for two equivalent matrix tuples $\vA$ and $\vA'$, the representative \IBCtuple sequences $\vB_1, \dots, \vB_k$ for $\vA$ and $\vB_1', \dots, \vB_k'$ for $\vA'$ satisfy that $\vB_i$ and $\vB_i'$ are right-equivalent for every $i \in[k]$. 

Based on such a representative \IBCtuple sequence, we can compute the canonical form of the matrix tuple by selecting \IBCtuples in a certain way from the linear spaces spanned by \IBCtuples right-equivalent to each of $\vB_1, \dots, \vB_k$ according to Theorem~\ref{thm:main_space_structure} (Section~\ref{sec:canonical_form_from_rep}). 

In this overview, we focus on computing a representative \IBCtuple sequence for a matrix tuple canonically. 
We emphasize that this is a challenging task because an \IBCtuple may have a lot of equivalent \IBCtuples that are not right-equivalent to itself. 
Selecting a representative \IBCtuple among these equivalents canonically requires an in-depth analysis of the structure of equivalent \IBCtuples in the input matrix tuple.

Our algorithm systematically explores the structure of the matrix tuple by detecting non-trivial characteristic subspaces of row tuples.
It continues this process until the information gathered from these characteristic subspaces enables the construction of the desired \IBCtuples in a canonical way. 
Here, a linear subspace of row tuples (or row vectors)  for a matrix tuple is characteristic if, under any automorphism of the matrix tuple by left-right action, the subspace remains invariant.

The characteristic subspace naturally emerges when specific row tuples or row vectors are distinguished from others.
For instance, let us consider a matrix tuple where the first matrix is not full row rank. 
In this case, all the row tuples with zero vector as their first coordinate form a characteristic row tuple subspace within the linear space spanned by all the row tuples of the matrix tuple.
As another example, 
all the row vectors in the first matrix of a matrix tuple form a characteristic row vector subspace.
However, in some cases, identifying the characteristic row tuple subspace is not as straightforward as illustrated in previous examples, 
requiring a careful analysis of the matrix tuple structure.

\subsection{Matrix tuple without non-trivial characteristic row tuple subspace}\label{subsec:finite_field}
The first question is under which circumstances the matrix tuple does not have a non-trivial characteristic row tuple subspace and how to compute a representative  \IBCtuple sequence canonically in such cases. 
To address this question, let us consider the case in which the matrix tuple is square. Without loss of generality, we assume that every matrix in the matrix tuple is full rank. Otherwise, it would be possible to obtain some characteristic row tuple subspace using the aforementioned approach.

For a length-$\ell$ matrix tuple $\vA = (A_1, \dots, A_\ell)$, we investigate the induced matrix tuple $\vG = (A_1^{-1} A_2, \dots, A_1^{-1}A_\ell)$ of length $\ell - 1$ under the conjugation action. 
%This is because requiring the left-right action to preserve the invertible matrix $A_1$ establishes a correlation between the left and right matrices, which reduces to the conjugation setting. 
We show that $\vA$ has no nontrivial characteristic row tuple subspace if only if the matrix algebra generated by $\vG$ is isomorphic to a finite field. 
Furthermore, our algorithm verifies whether the matrix algebra generated by $\vG$ is a finite field.
If it is not, our algorithm proceeds to produce a characteristic row tuples subspace of $\vA$ based on the classification of matrix algebra by the Artin–Wedderburn–Mal’tsev theory.
If the matrix algebra generated by $\vG$ is isomorphic to a finite field, we show that all the \IBCtuples of the matrix tuple are equivalent, and thus, any representative \IBCtuple sequence contains a single \IBCtuplenospace. In addition, we give an algorithm to select a representative  \IBCtuple canonically.
That is, the \IBCtuple selected for the sequence among equivalent input matrix tuples are right-equivalent (Lemma~\ref{lem:conjugation_canonical_main}). 
%If the matrix algebra generated by $\vG$ is not isomorphic to a finite field, our algorithm proceeds to produce a characteristic row tuples subspace based on the classification of matrix algebra by the Artin–Wedderburn–Mal’tsev theory. 

Let us consider an example with $\vA = (A_1, A_2, A_3) \in \M(4 \times 4, \F_2)^3$ being the following matrix tuple 
\[
A_1 = \begin{bmatrix}
    1 & 0 & 1 & 0 \\
    0 & 1 & 0 & 0 \\
    0 & 1 & 1 & 1 \\
    1 & 1 & 1 & 1 \\
\end{bmatrix},
A_2 = \begin{bmatrix}
    0 & 0 & 0 & 1 \\
    1 & 1 & 0 & 0 \\
    0 & 0 & 1 & 0 \\
    0 & 1 & 1 & 0 \\
\end{bmatrix}, 
A_3 = \begin{bmatrix}
    1 & 0 & 1 & 1 \\
    1 & 0 & 0 & 0 \\
    0 & 1 & 0 & 1 \\
    1 & 0 & 0 & 1 \\
\end{bmatrix}.
\]
It can be verified that all the $A_i$ matrices are invertible. Let $\vG = (A_1^{-1} A_2, A_1^{-1} A_3)$. The matrix algebra generated by $\vG$ is isomorphic to $\F_4$. % as an \xnote{what does $\F_2$ algebra mean here?}\ynote{An algebra is first a vector space. An $\F_2$ algebra means that the underlying vector space of this algebra is a vector space over $\F_2$. }$\F_2$-algebra. 

For the case that the matrix algebra generated by $\vG$ is isomorphic to a finite field, we observe that if an \IBCtuple $\vB$ contains a row tuple $\va$ (i.e., there is a row vector $v$ such that $\va = v \vB$), then all the row tuples containing a vector that is a linear combination of the coordinates of $\va$ must also be in $\vB$. 
For example, start with the row tuple $\va_ 1= e_1 \vA = ((1, 0, 1, 0), (0, 0, 0, 1), (1, 0, 1, 1))$, 
the row tuple $\va_2 = (1, 1, 0, 1)\vA = ((0, 0, 0, 1), (1, 0, 1, 1), (1, 0, 1, 0))$ is in any \IBCtuple containing $\va_1$
%\ynote{This explanation is not quite intuitive for me... Perhaps a bit more why the first vector of $\va_1$ being the same as the second vector of $\va_2$ means that they stay in the same block?}\xnote{Fixed} 
because the first coordinate of $\va_2$ is the same as the second coordinate of $\va_1$.

On the other hand, we show that in this case, any row-submatrix tuple $\vC$ can be decomposed into one or a few \IBCtuples if and only if any row tuple containing a coordinate being a row vector in $\vC$ is contained in $\vC$. 
As any row tuple containing a row vector that is a linear combination of coordinates of $\va_1$ and $\va_2$ is also a linear combination of $\va_1$ and $\va_2$, 
the row-submatrix tuple $\vB$ of two rows with $\va_1$ as the first row and $\va_2$ as the second row
is an \IBCtuple of $\vA$. 
%In addition, start from any $\va_1'$ that is an arbitrary row tuple of $\vA$, letting $\va_2'$ be another row tuple with the first row vector the same as the second row vector of $\va_1'$, 
%$\begin{bmatrix}
%    \va_1' \\ \va_2'
%\end{bmatrix}$
%is an \IBCtuple right-equivalent to $\vB$. 

Furthermore, let $\va_1'$ be an arbitrary row tuple of $\vA$ and $\va_2'$ be the row tuple of $\vA$ whose first coordinate equals the second coordinate of $\va_1'$. Then the row-submatrix tuple $\vB'$ of two rows with $\va_1'$ as the first row and $\va_2'$ as the second row is also an \IBCtuple of $\vA$, right isomorphic $\vB$. 
Hence, 
%a square input matrix tuple with $\vG$ isomorphic to a finite field, 
a representative \IBCtuple can be obtained by identifying row tuples that have to belong to the same \IBCtuple (as $\va_1$ and $\va_2$ in this example)
and 
defining the relations of these row tuples canonically (as the second coordinate of the first row of the \IBCtuple equals the first coordinate of the second row in this example). 

\begin{comment}
Furthermore, suppose we choose two arbitrarily different \IBCtuples $\vB_1$, $\vB_2$ constructed in the same way as $\vB$, such that $\vB_1$ and $\vB_2$ are right-equivalent to $\vB$. For example, we set $\vB_1=\vB$, and $\vB_2$ as $e_3\vA=((0,1,1,1),(0,0,1,0),(0,1,0,1))$ and $(1, 0, 1, 1)\vA=((0,0,1,0),(0,1,0,1),(0,1,1,1))$. Note that $\vB_2$ is right-equivalent to $\vB_1$, as witnessed by the second coordinate of $e_3\vA$ being the same as the first coordinate of $(1,0,1,1)\vA$.
Let $L$ be the matrix such that $L\vA$ has the first two rows equal to $\vB_1$ and the last two rows equal to $\vB_2$, and $R$ be the invertible matrix such that $L A_1 R^{-1} = I$, then we get a canonical form of $\vA$ as 
\[
L A_1 R^{-1}= \begin{bmatrix}
    1 & 0 & 0 & 0 \\
    0 & 1 & 0 & 0 \\
    0 & 0 & 1 & 0 \\
    0 & 0 & 0 & 1 \\
\end{bmatrix},
L A_2 R^{-1} = \begin{bmatrix}
    0 & 1 & 0 & 0 \\
    1 & 1 & 0 & 0 \\
    0 & 0 & 0 & 1 \\
    0 & 0 & 1 & 1 \\
\end{bmatrix}, 
L A_3 R^{-1} = \begin{bmatrix}
    1 & 1 & 0 & 0 \\
    1 & 0 & 0 & 0 \\
    0 & 0 & 1 & 1 \\
    0 & 0 & 1 & 0 \\
\end{bmatrix}.
\]
We remark that this canonical form is independent of the choices of $\vB_1$ and $\vB_2$ (as long as they are different but right-equivalent to $\vB$). 
\end{comment}

As another example, let $\vA' = (A_1', A_2', A_3')$ be the matrix tuple with
\[
A_1' = \begin{bmatrix}
    1 & 0 & 0 & 0 \\
    0 & 0 & 0 & 1 \\
    0 & 0 & 1 & 0 \\
    0 & 1 & 0 & 0 \\
\end{bmatrix},
A_2' = \begin{bmatrix}
    0 & 1 & 0 & 0 \\
    0 & 0 & 1 & 0 \\
    0 & 0 & 0 & 1 \\
    1 & 0 & 1 & 0 \\
\end{bmatrix}
A_3' = \begin{bmatrix}
    0 & 1 & 0 & 0 \\
    0 & 0 & 1 & 1 \\
    0 & 0 & 0 & 1 \\
    1 & 1 & 0 & 0 \\
\end{bmatrix}.
\]
Because the matrix algebra generated by $((A'_1)^{-1} A'_2, (A'_1)^{-1}A'_3)$ has a nontrivial radical (i.e., the largest nilpotent ideal of the matrix algebra) generated by the following matrices
\[\begin{bmatrix}
    0 & 0 & 1 & 0 \\
    0 & 0 & 0 & 0 \\
    0 & 0 & 0 & 0 \\
    0 & 0 & 0 & 0 \\
\end{bmatrix},
\begin{bmatrix}
    0 & 0 & 0 & 1 \\
    0 & 0 & 0 & 0 \\
    0 & 0 & 0 & 0 \\
    0 & 0 & 0 & 0 \\
\end{bmatrix},
\begin{bmatrix}
    0 & 0 & 0 & 0 \\
    0 & 0 & 1 & 0 \\
    0 & 0 & 0 & 0 \\
    0 & 0 & 0 & 0 \\
\end{bmatrix},
\begin{bmatrix}
    0 & 0 & 0 & 0 \\
    0 & 0 & 0 & 1 \\
    0 & 0 & 0 & 0 \\
    0 & 0 & 0 & 0 \\
\end{bmatrix},
\]
the linear span of the row vectors in the matrices of the radical, i.e., $\langle \{ (0, 0, 0, 1), (0, 0, 1, 0)\}\rangle$, is a characteristic row vector subspace of $\vA'$. Consequently, $\langle \{ e_2 \vA', e_3 \vA' \}\rangle$ is a characteristic row tuple subspace because the row tuples in this subspace have all coordinates in the characteristic row vector subspace. 
We will address this situation in Section~\ref{sec:overview_hier}. 

Next, we turn to the scenario where the matrix tuple is a rectangle with no obvious characteristic row tuple subspace. 
In this case, the rectangle must have more columns than rows because otherwise, there are characteristic row tuple subspaces, like the linear space spanned by row tuples with the $i$-th coordinate being zero.

On the other hand, 
although there is no characteristic row tuple subspace for such rectangle matrix tuples, there must be nontrivial characteristic row vector subspaces because there are more columns than rows. 
Let $U_i$ be the linear space spanned by row vectors of the $i$-th matrix in the matrix tuple. 
$U_i$ is a nontrivial characteristic row vector subspace for each $i$ if the $i$-th matrix is nonzero.

In our algorithm, we utilize the correspondence between these characteristic row vector subspaces to either identify some characteristic row tuple subspaces or reduce the rectangle matrix tuple problem to the square matrix tuple problem. 
Specifically, if $U_i$ and $U_j$ have some nontrivial intersection, then such an intersection allows us to find some nontrivial characteristic row tuple subspace.
Otherwise, the entire row vector space corresponds to a direct sum of some of the $U_i$ row vector subspaces. In this case, one can canonically define the correspondence between row vectors from different $U_i$ subspaces and then create a square matrix tuple corresponding to the rectangle matrix tuple in terms of the \IBCtuple structure.
 Hence, computing a representative \IBCtuple sequence for the rectangle matrix tuple is reduced to computing a representative \IBCtuple sequence for the constructed square matrix tuple.

\subsection{Matrix tuple with direct sum row tuple decomposition}\label{sec:direct_sum_overview}

In Section~\ref{subsec:finite_field}, we either obtain some characteristic row tuple subspace or have a complete characterization of the \IBCtuples and have an algorithm to construct a representative \IBCtuple sequence for such a matrix tuple. 
The next major question is how to canonically compute a representative \IBCtuple sequence given some nontrivial characteristic row tuple subspaces. 
In this section, we consider the base case that the entire row tuple space is the direct sum of some characteristic row tuple subspaces. We will study the general case in Section~\ref{sec:overview_hier}.

The solution for the direct sum of characteristic row tuple subspaces is to identify the correspondence between row tuples from different characteristic row tuple subspaces and construct a new matrix tuple without nontrivial characteristic row tuple subspaces by merging corresponding row tuples from different characteristic row tuple subspaces into a row tuple in the new matrix tuple (Section~\ref{sec:direct_sum_decomposition}). 
For example, consider the following matrix tuple $\vA = (A_1, \dots, A_4)\in \M(4 \times 4, \F_2)^4$. 
\[
A_1 = \begin{bmatrix}
    0 & 0 & 0 & 0 \\
    0 & 0 & 0 & 0 \\
    0 & 0 & 1 & 0 \\
    0 & 0 & 0 & 1 \\
\end{bmatrix},
A_2 = \begin{bmatrix}
    1 & 0 & 0 & 0 \\
    0 & 1 & 0 & 0 \\
    0 & 0 & 0 & 0 \\
    0 & 0 & 0 & 0 \\
\end{bmatrix},
A_3 = \begin{bmatrix}
    0 & 1 & 0 & 0 \\
    1 & 0 & 0 & 0 \\
    0 & 0 & 0 & 1 \\
    0 & 0 & 1 & 0 \\
\end{bmatrix},
A_4 = \begin{bmatrix}
    0 & 0 & 0 & 0 \\
    0 & 0 & 0 & 0 \\
    1 & 0 & 0 & 0 \\
    0 & 1 & 0 & 0 \\
\end{bmatrix}
\]
Naturally, $\vA$ has two characteristic row tuple subspaces: $T_1$, which contains all the row tuples of $\vA$ whose first coordinates are zero, i.e., $T_1 = \langle \{e_1 \vA, e_2 \vA\}\rangle $, and $T_2$ which contains all the row tuples of $\vA$ whose second coordinates are zero, i.e., $T_2 = \langle \{e_3 \vA, e_4 \vA\}\rangle $. 
Consequently, let $U_1$ be the linear space spanned by the second coordinates of the row tuples in $T_1$ (i.e., $U_1 = \langle \{e_1, e_2\}\rangle$) and $U_2$ be the linear space spanned by the first coordinates of the row tuples in $T_2$ (i.e., $U_2 = \langle \{e_3, e_4\}\rangle$). 
$U_1$ and $U_2$ are characteristic row vector subspaces, and the entire row vector space of $\vA$ is the direct sum of $U_1$ and $U_2$.

Since both the second coordinates of $T_1$ and the fourth coordinates of $T_2$ are row vectors in $U_2$, we can define an isomorphism $f : T_1 \rightarrow T_2$ such that $f(\va) = \vb$ for any $\va \in T_1, \vb \in T_2$ if and only if the second coordinate of $\va$ is equal to the fourth coordinate of $\vb$. 
Then, by choosing an arbitrary linear basis of the row tuples in $T_1$, we can construct a matrix tuple $\vC = (C_1, \dots, C_8) = \M(2 \times 4, \F_2)^8$ such that for any $\vc$ as a row tuple of $\vC$, the first four coordinates of $\vc$ corresponds to a row tuple $\va$ of $T_1$, and the last four coordinates of $\vc$ correspond to the $f(\va)$ of $T_2$. 
One example of $\vC$ is 
\vspace{-.1cm}\[
C_1 = \begin{bmatrix}
    0 & 0 & 0 & 0 \\
    0 & 0 & 0 & 0 \\
\end{bmatrix},
C_2 = \begin{bmatrix}
    1 & 0 & 0 & 0 \\
    0 & 1 & 0 & 0 \\
\end{bmatrix},
C_3 = \begin{bmatrix}
    0 & 1 & 0 & 0 \\
    1 & 0 & 0 & 0 \\
\end{bmatrix},
C_4 = \begin{bmatrix}
    0 & 0 & 0 & 0 \\
    0 & 0 & 0 & 0 \\
\end{bmatrix}
\]
\[
C_5 = \begin{bmatrix}
    0 & 0 & 1 & 0 \\
    0 & 0 & 0 & 1 \\
\end{bmatrix},
C_6 = \begin{bmatrix}
    0 & 0 & 0 & 0 \\
    0 & 0 & 0 & 0 \\
\end{bmatrix},
C_7 = \begin{bmatrix}
    0 & 0 & 0 & 1 \\
    0 & 0 & 1 & 0 \\
\end{bmatrix},
C_8 = \begin{bmatrix}
    1 & 0 & 0 & 0 \\
    0 & 1 & 0 & 0 \\
\end{bmatrix}.
\]

\vspace{-.1cm}By the correspondence between $\vC$ and $\vA$, 
we show that the \IBCtuples of $\vC$ and $\vA$ have a one-to-one correspondence. Therefore, once we obtain a canonical representative \IBCtuple sequence of $\vC$ according to Section~\ref{subsec:finite_field}, we can compute a representative \IBCtuple sequence of $\vA$ canonically. 

\vspace{-.2cm}\subsection{Matrix tuple with hierarchical row tuple decomposition}\label{sec:overview_hier}

\vspace{-.1cm}We discuss the algorithm for constructing the representative \IBCtuple sequence of a matrix tuple for general characteristic row tuple subspaces. 
The challenge arises from the potential impossibility of decomposing the entire row tuple space of the matrix tuple into a direct sum of a few characteristic row tuple subspaces, as demonstrated by $\vA'$ defined in Section~\ref{subsec:finite_field}.

\vspace{-.3cm}\subsubsection{Hierarchical row tuple
decomposition and quotient matrix tuple}

\vspace{-.1cm}We organize the characteristic row tuple subspaces hierarchically to unveil the direct sum property for characteristic row tuple subspaces level by level via maintaining a \emph{hierarchical row tuple decomposition} (Definition~\ref{def:row_tuple_decomposition}). 
For simplicity, in this overview, we assume the hierarchical row tuple decomposition contains only two levels.

A two-level hierarchical row tuple decomposition consists of a sequence of characteristic row tuple subspaces $T_1, \dots, T_\zeta$ and a parameter $1 < h \leq \zeta$.
The decomposition is hierarchical in the following sense: Let $W$ be the linear space spanned by the row vectors in the row tuples of $T_{h}, \dots, T_\zeta$.
Then, the following two conditions hold: 
\begin{enumerate}
\vspace{-.2cm}\item 
Let $S$ be the linear space spanned by row tuples with all coordinates in $W$. Then $T_{h}, \dots, T_\zeta$ is a direct sum decomposition of $S$.
\vspace{-.25cm}\item
    All the row tuples of the matrix tuple become the direct sum of $T_{1}, \dots, T_{h - 1}$ after shrinking all the row vectors in $W$ to zero for each row tuple of the matrix tuple. (The row tuples in $T_h, \dots, T_\zeta$ and the row tuples in $T_1, \dots, T_{h-1}$ that are also in $S$ shrink to zero row tuples.)
\end{enumerate}

%, i.e., %$\rowvectorspace(T_{h_i}) / W_{i+1} = W_{i} / W_{i+1}$ and 
%\begin{equation}\label{equ:overview}
%\[
%\langle T_{h_i} \cup \dots \cup T_\zeta \rangle / W_{i+1} = (T_{h_i} / W_{i + 1}) \oplus \dots \oplus (T_{h_{i + 1} - 1} / W_{i + 1}). 
%\]
%(see Section~\ref{sec:prelim} for the precise definition of the matrix tuple subspace quotient with a row vector subspace).
%\end{equation}
\vspace{-.2cm}Based on a two-level hierarchical row tuple decomposition of a matrix tuple, 
we can construct a \emph{quotient matrix tuple} for the matrix tuple by shrinking all vectors in $W$ to zero. 
We ensure that the resulting matrix tuple has all the rows linearly independent by arbitrarily choosing a linear basis of the row tuples after shrinking. 
%More formally, a $\gamma$-quotient matrix tuple $\vQ$ is a matrix tuple of $m_\gamma := \dim(\F_q^m / W_\gamma)$ columns with a formatting vector $(v_1 + W_\gamma, \dots, v_{m_\gamma} + W_\gamma)$ such that $v_1 + W_\gamma, \dots, v_{m_\gamma} + W_\gamma$ form a linear basis of $\F_q^m / W_\gamma$ and $\vQ$ is a matrix tuple with linear independent rows such that for every row vector $v$, there is a row vector $u$ such that  $v \vA + W_\gamma = u \vQ (v_1 + W_\gamma, \dots, v_{m_\gamma} + W_\gamma)^T$, where $\va (v_1 + W_\gamma, \dots, v_{m_\gamma} + W_\gamma)^T$ is defined as $(a_1 \cdot (v_1 + W_\gamma, \dots, v_{m_\gamma} + W_\gamma)^T), \dots, a_\ell \cdot (v_1 + W_\gamma, \dots, v_{m_\gamma} + W_\gamma)^T)$ for $\va = (a_1, \dots, a_\ell) \in (\F_q^{m_\gamma})^\ell$.
For example, let $\vA = (A_1, A_2) \in \M(8 \times 8, \F_2)^2$ with \begin{equation}\label{equ:overview_1}
A_1 = \begin{bmatrix}
    1 & 0 & 0 & 0 & 0 & 0 & 1 & 0\\
    0 & 1 & 0 & 0 & 0 & 0 & 0 & 0\\
    0 & 0 & 1 & 0 & 0 & 0 & 0 & 1\\
    0 & 0 & 0 & 1 & 0 & 0 & 0 & 0\\
    0 & 1 & 0 & 0 & 1 & 0 & 0 & 0\\
    0 & 1 & 0 & 0 & 0 & 1 & 0 & 0\\
    0 & 0 & 0 & 0 & 0 & 0 & 1 & 0\\
    0 & 0 & 0 & 1 & 0 & 0 & 0 & 1\\
\end{bmatrix},
A_2 = \begin{bmatrix}
    0 & 1 & 0 & 0 & 0 & 0 & 0 & 1\\
    1 & 1 & 1 & 0 & 0 & 0 & 0 & 0\\
    0 & 0 & 0 & 1 & 0 & 0 & 1 & 1\\
    0 & 0 & 1 & 1 & 0 & 0 & 0 & 0\\
    1 & 1 & 1 & 0 & 0 & 1 & 0 & 0\\
    1 & 1 & 1 & 0 & 1 & 1 & 1 & 0\\
    0 & 0 & 0 & 0 & 0 & 0 & 0 & 1\\
    0 & 0 & 1 & 1 & 0 & 0 & 1 & 1\\
\end{bmatrix}. 
\end{equation}
Let $T_1$ be the linear space spanned by all the row tuples of $\vA$. 
By Section~\ref{subsec:finite_field}, using the radical of the matrix algebra generated by $A_1^{-1}A_2$, $\vA$ has a characteristic row tuple subspace $T_2 = \{e_3 \vA, e_4 \vA, e_7 \vA, e_8 \vA\}$. 
$T_1$ and $T_2$ with $h=1$ form a two-level hierarchical row
tuple decomposition of $\vA$. 
Consequently, $W = \langle \{e_3, e_4, e_7, e_8\}\rangle$, and the matrix tuple $\vQ = (Q_1, Q_2)$ with
\begin{equation}\label{equ:overview_2}
Q_1 = \begin{bmatrix}
    1 & 0 & 0 & 0 \\
    0 & 1 & 0 & 0 \\
    0 & 1 & 1 & 0 \\
    0 & 1 & 0 & 1 \\
\end{bmatrix},
Q_2 = \begin{bmatrix}
    0 & 1 & 0 & 0 \\
    1 & 1 & 0 & 0 \\
    1 & 1 & 0 & 1 \\
    1 & 1 & 1 & 1 \\
\end{bmatrix}
\end{equation}
is a quotient matrix tuple of $\vA$ such that $e_1, e_2, e_3$ and  $e_4$ of $\vQ$ correspond to $e_1 + W, e_2 + W, e_5 + W$ and $e_6 + W$, where $v + W$ denotes $\{v + w : w \in W\}$.

\vspace{-.2cm}\subsubsection{\IBCtuple construction via quotient matrix tuple}
To obtain the \IBCtuples for an input matrix tuple, we investigate the relations between the \IBCtuples of the quotient matrix tuple and the input matrix tuple (Section~\ref{sec:quotient_matrix_tuple_main}).

As the base case, by the definition of a two-level hierarchical row
tuple decomposition, any quotient matrix tuple is associated with a direct sum row tuple decomposition induced by $T_1, \dots, T_{h-1}$. Hence, we can use the approach in Section~\ref{sec:direct_sum_overview} to obtain a representative \IBCtuple sequence for an arbitrary quotient matrix tuple. 
%Note that for any $1 \leq \gamma < \gamma' \leq \beta$, a $\gamma$-quotient matrix tuple for $\vA$ is also a $\gamma$-quotient matrix tuple of any $\gamma'$-quotient matrix tuple of $\vA$.Hence, if there is an algorithm to compute a representative \IBCtuple sequence (with \IBCtuple space for each \IBCtuple in the sequence) for matrix tuple $\vA$ with a depth-$\beta$ hierarchical row tuple decomposition based on a representative \IBCtuple sequence (with \IBCtuple space for each \IBCtuple in the sequence) for a $\beta$-quotient matrix tuple of $\vA$ for any $\gamma \in [\beta]$, then we can compute a representative \IBCtuple sequence for $\vA$ by recursively applying the algorithm to compute a representative \IBCtuple sequence from a representative \IBCtuple sequence for its quotient matrix tuple. 
In the rest of this section, we give an overview of our approach to computing a representative \IBCtuple sequence for a matrix tuple $\vA$ based on a representative \IBCtuple sequence for a quotient matrix tuple $\vQ$ of $\vA$.

%We remark that such an algorithm is sufficient to compute a representative \IBCtuple sequence for the matrix tuple $\vA$ by 
%recursively applying the algorithm
%because for a matrix tuple with a depth-$\beta$ hierarchical row tuple decomposition, for any $1 \leq \gamma < \beta$, a $\gamma$-quotient matrix tuple for $\vA$ is also a $\gamma$-quotient matrix tuple for any $\gamma + 1$ quotient matrix tuple of $\vA$. %Thus, one can compute a representative \IBCtuple sequence for the matrix tuple $\vA$ by recursively applying the algorithm mentioned above. 

%this is without loss of generality because when constructing a representative \IBCtuple sequence for a $\gamma + 1$-quotient matrix tuple from a representative \IBCtuple sequence for a $\gamma$-quotient matrix tuple, we can view the $\gamma$-quotient matrix tuple as $\vQ$ and the $(\gamma + 1)$-quotient matrix tuple as $\vA$. 

%Let $\vB_1, \dots, \vB_k$ denote the representative \IBCtuple sequence for $\vQ$ with \IBCtuple spaces $\ibcspace_{\vB_1}, \dots, \ibcspace_{\vB_k}$. 

First, consider an arbitrary block-diagonal matrix tuple $\vD$ equivalent to $\vA$ such that all the blocks of $\vD$ are indecomposable. 
We observe that the row vectors from different blocks of $\vD$ that correspond to row vectors in $W$ (the row vectors shrunk to zero when computing the quotient matrix tuple) of $\vA$ via a left-right action span a row vector subspace of $\vD$ corresponding to $W$ of $\vA$.
If we shrink the row vectors in this row vector subspace to zero for $\vD$, similar to constructing a quotient matrix tuple for $\vA$, then we get a block-diagonal matrix tuple equivalent to $\vQ$. 
We show that this block-diagonal matrix tuple has every block corresponding to a few \IBCtuples of $\vQ$. 
So, we want to understand the following question: given an \IBCtuple $\vB$ for $\vQ$, if $\vA$ has an \IBCtuple $\vC$ containing a row-submatrix tuple corresponding to $\vB$ via the correspondence between $\vA$ and $\vQ$, what is such a row-submatrix tuple?

To answer this question, we extend the \IBCtuples for $\vQ$ to row-submatrix tuples of $\vA$. 
We say a row-submatrix tuple $\vC$ of $\vA$ is an \emph{extension} of an \IBCtuple $\vB$ of $\vQ$ in $\vA$ if 
$\vC$ corresponds to $\vB$ via the correspondence between $\vA$ and $\vQ$. For example, consider the matrix tuple $\vA$ defined by Equation (\ref{equ:overview_1}) and quotient matrix tuple $\vQ$ defined by Equation (\ref{equ:overview_2}). 
Using the result from Section~\ref{subsec:finite_field}, 
the following $\vB$ is an \IBCtuple of $\vQ$, 
\begin{equation}\label{equ:overview_4}
\vB = \begin{pmatrix} \begin{bmatrix}
    1 & 0 & 0 & 0 \\
    0 & 1 & 0 & 0 
\end{bmatrix}, 
\begin{bmatrix}
    0 & 1 & 0 & 0 \\
    1 & 1 & 0 & 0 
\end{bmatrix}\end{pmatrix}\end{equation}
and the following $\vC$ is an extension of $\vB$ in $\vA$. 
\begin{equation}\label{equ:overview_3}
\vC = 
\begin{pmatrix} \begin{bmatrix}
    1 & 0 & 0 & 0 & 0 & 0 & 1 & 0\\
    0 & 1 & 0 & 0 & 0 & 0 & 0 & 0
\end{bmatrix}, 
\begin{bmatrix}
    0 & 1 & 0 & 0 & 0 & 0 & 0 & 1\\
    1 & 1 & 1 & 0 & 0 & 0 & 0 & 0
\end{bmatrix}
\end{pmatrix}
\end{equation}

There are many feasible extensions for an \IBCtuple of $\vQ$ in $\vA$, but not all of them can be contained in an \IBCtuple of $\vA$. 
By investigating the relations between extensions and row vectors in $W$, we observe that for an extension to be contained in an \IBCtuple of $\vA$,
the linear span of row vectors in the extension must contain only the necessary row vectors from $W$. 
We refer to such extensions as ``essential extensions''.  %(Definition~\ref{def:essential_extension}).
%$\vB$ to reveal the row vectors in $W$ that have to belong to any \IBCtuple that contains an extension of $\vB$. 
%More formally, we say an extension $\vC$ of an \IBCtuple $\vB$ is \emph{essential} if every row vector in the linear space spanned by row vectors in $\vC$ that are also in $W$ belongs to any \IBCtuples of $\vA$ that contain an extension of $\vB$. 

For example, the $\vC$ defined by Equation (\ref{equ:overview_3}) is not essential for $\vB$ as defined by Equation (\ref{equ:overview_4}) because there is an \IBCtuple of $\vA$ containing an extension of $\vB$ but not containing the row vectors $e_7$ and $e_8$, which are vectors in $W$ that are also in the linear space spanned by row vectors of $\vC$. 
The row-submatrix tuple $\vC'$ defined below is an essential extension of $\vB$ because all the \IBCtuples of $\vA$ containing an extension of $\vB$ always contain the row vector $e_3$, which is the only vector in $W$ and also in the linear space spanned by the row vectors of $\vC'$. 
\[
\vC' = 
\begin{pmatrix} \begin{bmatrix}
    1 & 0 & 0 & 0 & 0 & 0 & 0 & 0\\
    0 & 1 & 0 & 0 & 0 & 0 & 0 & 0
\end{bmatrix}, 
\begin{bmatrix}
    0 & 1 & 0 & 0 & 0 & 0 & 0 & 0\\
    1 & 1 & 1 & 0 & 0 & 0 & 0 & 0
\end{bmatrix}
\end{pmatrix}
\]

We show that such essential extensions exist, and if an \IBCtuple of $\vA$ contains a row-submatrix tuple corresponding to $\vB$, then the \IBCtuple of $\vA$ must contain an essential extension of $\vB$.  
We also give an algorithm to compute the canonical essential extensions.  
Furthermore, we show that if the \IBCtuples right-isomorphic to a given \IBCtuple of $\vQ$ satisfy Theorem~\ref{thm:main_space_structure}, 
then the essential extensions obtained for these \IBCtuples by our algorithm also have the linearity similar to Theorem~\ref{thm:main_space_structure}. 

%By the linearity of \IBCtuples of $\vQ$ that are right-equivalent (Theorem~\ref{thm:main_space_structure}), the essential extensions obtained for \IBCtuples right-equivalent to an \IBCtuple in the representative \IBCtuple sequence can also be represented by a linear space and can be computed efficiently. 

%The essential extension allows us to divide all the blocks of $\vQ$ into two groups, one group with the essential extensions have no row vector in $W_\beta$, and another group with essential extensions have row vectors in $W_\beta$. (If one \IBCtuple has both essential extensions for  )

Second, we explore the connection between essential extensions of \IBCtuples of $\vQ$ and row tuples in the linear span of $T_h, \dots, T_\zeta$ by constructing a new matrix tuple, called the \emph{compression matrix tuple}. 
Roughly speaking, to construct the compression matrix tuple, 
for each essential extension obtained, we use a new row tuple to canonically encode the intersection of $W$ and the linear space spanned by row vectors of the essential extension.
By the linearity of essential extensions for right-equivalent \IBCtuples of $\vQ$, the new row tuples constructed for extensions of right-equivalent \IBCtuples of $\vQ$ also span a linear space of row tuples.

The compression matrix tuple $\vE$ consists of row tuples with each coordinate in $W$. It contains two parts: one part corresponds to the row tuples of $\vA$ with all row vectors in $W$ (i.e., row tuples in $T_{h}, \dots, T_\zeta$), and another part corresponds to the newly constructed row tuples from essential extensions.

%we first compress the row vectors each essential extension into a single row tuple which cosisits the row vectors in $W_\beta$ for all the 

%to determine how essential extensions and row tuples in $T_{h_\beta}, \dots, T_\zeta$ form \IBCtuples of $\vA$. 

%Roughly speaking, there are two parts in the compression matrix: one part corresponds to the row tuples of $\vA$ that are in $T_{h_\beta}, \dots, T_{\zeta}$, and another part encodes the 

The construction of the compression matrix tuple naturally results in a direct sum decomposition of the characteristic row tuple subspaces. 
Therefore, we apply the algorithm described in Section~\ref{sec:direct_sum_overview} to find the \IBCtuples of the compression matrix tuple. 
If the algorithm in Section~\ref{sec:direct_sum_overview} 
returns a new characteristic row tuple subspace of $\vE$, then we can use this characteristic row tuple subspace to further refine the hierarchical row tuple decomposition we have for $\vA$.
We then restart the entire process with the refined hierarchical row tuple decomposition.

Finally, we study the consequence
of the algorithm in Section~\ref{sec:direct_sum_overview} returning a representative \IBCtuple sequence for the compression matrix tuple $\vE$.
We show that 
in this case,
for any block-diagonal matrix tuple $\vD_\vA$ equivalent to $\vA$, 
there exists a block-diagonal matrix tuple $\vD_\vE$ equivalent to $\vE$, such that there is a one-to-one correspondence between the blocks of $\vD_\vA$ 
and blocks of $\vD_\vE$, thereby
implying a correspondence between \IBCtuples of $\vA$ and \IBCtuples of $\vE$. 
By carefully analyzing this correspondence, we give an algorithm for constructing a representative \IBCtuple sequence for $\vA$ canonically guided by the \IBCtuples of $\vE$.

\paragraph{Paper organization }  In Section~\ref{sec:prelim}, we define the notations and provide the preliminaries. 
Section~\ref{sec:ibctuple} formally defines \IBCtuple and hierarchical row tuple decomposition and proves some useful properties regarding these definitions. 
Section~\ref{sec:conjugation} gives an algorithm for the case of square matrix tuple with all the matrices full rank. 
Section~\ref{sec:direct_sum_decomposition} gives an algorithm for the case that the matrix tuple is associated with a direct sum row tuple decomposition. 
Section~\ref{sec:quotient_matrix_tuple_main} presents an algorithm for the case of a general hierarchical row tuple decomposition.  %representative  \IBCtuple sequence for a quotient matrix tuple. 
Section~\ref{sec:overall_algorithm} gives the overall canonical form algorithm and proves Theorem~\ref{thm:main} and Corollary~\ref{cor:main}.

\section{Preliminary}\label{sec:prelim}

In this section, we introduce notations and the basic facts we use throughout the paper. 

\paragraph{Vector spaces.} 
Let $V$ be a vector space. We use $U\leq V$ to denote that $U$ is a subspace of $V$, and $U < V$ to denote that $U$ is a subspace of $V$ and $U$ is not equal to $V$. For $S\subseteq V$, we use $\langle S\rangle$ to denote the linear space spanned by the elements in $S$.

We use %\xnote{Here, it is not necessarily $\F^n$. we can define direct sum for any linear space.} 
$V=U_1\oplus \dots \oplus U_k$ to denote that $V$ is a direct sum of subspaces $U_1, \dots, U_k$. It should be noted that in this paper when we talk about a direct sum decomposition, we mean an \emph{ordered} tuple of vector spaces $(U_1, \dots, U_k)$ such that $V=U_1\oplus \dots\oplus U_k$.

%\paragraph{Row vector and row vector spaces.}
%For a field $\F$, $\F^n$ is the linear space of length-$n$ \emph{row} vectors over $\F$.  For a vector $v$ in a row vector space $V$ and $U_1, \dots, U_k$ such that $V = U_1 \oplus \dots \oplus U_k$, we use $\rowvecproj_{U_1, \dots, U_k}(v, U_i)$ to denote $v_i$, where $v = v_1 + \dots + v_k$ such that $v_i \in U_i$ for any $1 \leq i \leq k$.  For convenience, we use $\rowvecproj(v, U_i)$ to denote $\rowvecproj_{U_1, \dots, U_k}(v, U_i)$ when there is no confusion. 

\paragraph{Finite fields.} We use $\F_q$ to denote the finite field of order $q$. We will reserve $q$ as the field order and $p$ as the field characteristic. 
%We use $\K/\F_q$ to denote that $\K$ is an extension field of $\F_q$. \xnote{It seems that we never use this notation}

We use the following basic facts for finite fields.

   \begin{fact}\label{fact:eigenval}
        Let $f$ be an irreducible polynomial over $\F_q$ with degree $d$. All the roots of $f$ are over $\F_{q^d}$, and are distinct. Furthermore, if $\lambda$ is a root of $f$, then $\lambda^q, \lambda^{q^2}, \dots, \lambda^{q^{d-1}}$ are all the other roots. 
    \end{fact}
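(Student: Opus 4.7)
The plan is to identify the splitting field of $f$ as $\F_{q^d}$, deduce separability from the standard fact that $x^{q^d}-x$ has nonzero derivative, and then use the Frobenius automorphism to produce the full list of roots.

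First, I would construct a concrete root. The quotient $K=\F_q[x]/(f(x))$ is a field since $f$ is irreducible, and $\dim_{\F_q} K = d$, so $|K|=q^d$ and $K\cong \F_{q^d}$. The image of $x$ in $K$ is a root of $f$ living in $\F_{q^d}$. Next, to see that \emph{all} roots lie in $\F_{q^d}$, I would note that $\F_{q^d}^\times$ is cyclic of order $q^d-1$, so every element of $\F_{q^d}$ satisfies $\alpha^{q^d}=\alpha$; hence $f$ divides $x^{q^d}-x$, and since the latter splits completely over $\F_{q^d}$, so does $f$.

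Second, distinctness of the roots follows from separability: the derivative of $x^{q^d}-x$ is $-1$ (as $q$ is a power of $\chr \F_q$), so $\gcd(x^{q^d}-x,(x^{q^d}-x)')=1$, meaning $x^{q^d}-x$ has distinct roots, and therefore its divisor $f$ does as well.

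For the Frobenius orbit, I would apply the Frobenius automorphism $\sigma:\alpha\mapsto\alpha^q$ of $\F_{q^d}/\F_q$ to the equation $f(\lambda)=0$; since the coefficients of $f$ are fixed by $\sigma$, I get $f(\lambda^q)=0$, and iterating yields that each $\lambda^{q^i}$ is a root of $f$. The only real thing to check is that $\lambda,\lambda^q,\dots,\lambda^{q^{d-1}}$ are pairwise distinct. Suppose $\lambda^{q^i}=\lambda^{q^j}$ with $0\le i<j\le d-1$; then $\lambda^{q^{j-i}}=\lambda$, so $\lambda$ is fixed by $\sigma^{j-i}$ and hence lies in $\F_{q^{j-i}}$. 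But then the minimal polynomial of $\lambda$ over $\F_q$, which equals $f$ up to a unit, has degree at most $j-i<d$, a contradiction. Since $f$ has degree $d$ and we have exhibited $d$ distinct roots, these are all of them. The only mildly subtle step is the distinctness argument for the Frobenius orbit; everything else is a direct invocation of the standard structure of $\F_{q^d}$.
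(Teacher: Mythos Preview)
Your proof is correct and complete. The paper states this result as a standard fact about finite fields without proof, so there is no approach in the paper to compare against; your argument via the quotient construction, separability of $x^{q^d}-x$, and the Frobenius orbit is the textbook route.
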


We need to factor an irreducible polynomial over $\F_q$ of degree $d$ in $\F_{q^d}$. In our algorithms, we use a representation of $\F_{q^d}$ by fixing an irreducible polynomial of degree $d$ over $\F_q$. 

%\xnote{need citation here.}

    \begin{theorem}[\hspace{-.02cm}\cite{Ber67,CZ81}]\label{thm:factoring}
    Given a degree $d$ irreducible polynomial $f(x) \in \F_q[x]$, there is a randomized Las-Vegas algorithm with $\poly(d, \log q)$ running time to compute the roots of $f(x)$ over an extension field $\F_{q^d}$. 
\end{theorem}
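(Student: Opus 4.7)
The plan is to produce the roots by equal-degree factorization of $f(x)$ inside $\F_{q^d}[x]$. By Fact~\ref{fact:eigenval}, once we have any root $\lambda \in \F_{q^d}$, the full root set is $\{\lambda^{q^i} : 0 \le i \le d-1\}$, and over $\F_{q^d}$ the polynomial $f$ splits into $d$ distinct monic linear factors. So the task reduces to exhibiting these linear factors inside a fixed representation of $\F_{q^d}$, say $\F_q[y]/(h(y))$ for a pre-chosen irreducible $h \in \F_q[y]$ of degree $d$; raising any output to Frobenius powers then yields the rest.

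To find one root, I would run Cantor--Zassenhaus equal-degree factorization in $\F_{q^d}[x]$, viewing $f$ as a product of degree-$1$ factors. For odd $q$, I would sample a uniformly random $g(x) \in \F_{q^d}[x]$ of degree less than $d$, compute $r(x) = g(x)^{(q^d - 1)/2} \bmod f(x)$ by repeated squaring, and form $\gcd(f(x), r(x) - 1)$. The standard analysis shows that for each pair of roots $\lambda_i \ne \lambda_j$, the evaluations $g(\lambda_i)^{(q^d-1)/2}$ and $g(\lambda_j)^{(q^d-1)/2}$ are essentially independent elements of $\{+1,-1\}$, so the gcd is a nontrivial proper factor of $f$ with probability at least a constant close to $1/2$. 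I would then recurse on each resulting factor until every factor is linear, reading off the $d$ roots.

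For the running time, the expensive step per iteration is the modular exponentiation $g^{(q^d-1)/2} \bmod f$, which costs $O(d \log q)$ multiplications in $\F_{q^d}[x]/(f(x))$, and each such multiplication is $\poly(d, \log q)$ bit operations in the representation $\F_q[y]/(h(y))$; summing over the expected $O(\log d)$ splitting rounds gives the claimed $\poly(d, \log q)$ bound. The algorithm is Las Vegas because candidate roots can be verified deterministically by plugging back into $f$; the only randomness controls the expected number of retries.

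The main obstacle I expect is handling small characteristic, where $(q^d-1)/2$ is not meaningful. For $p = 2$ I would replace the exponentiation trick with the additive Frobenius trace $\Tr(z) = z + z^2 + z^4 + \cdots + z^{2^{d-1}}$, which is an $\F_2$-linear map onto $\F_2 = \{0,1\}$; then $\gcd(f(x), \Tr(g(x)) \bmod f(x))$ plays the role of the multiplicative gcd above, with the same constant success probability. With this characteristic-$2$ adjustment the argument above goes through uniformly in $q$, and appealing to \cite{Ber67,CZ81} for the precise constants completes the proof.
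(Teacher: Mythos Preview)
The paper does not prove this statement; Theorem~\ref{thm:factoring} is a cited preliminary from \cite{Ber67,CZ81}, stated without proof and used as a black box. Your sketch is a faithful outline of the standard Cantor--Zassenhaus equal-degree factorization argument from those references, so there is nothing to compare against in the paper itself.

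One minor slip worth noting: in the characteristic-$2$ case you wrote the trace as $z + z^2 + \cdots + z^{2^{d-1}}$, but since you are working in $\F_{q^d}$ with $q = 2^k$, the absolute trace down to $\F_2$ has $kd$ terms, namely $\sum_{i=0}^{kd-1} z^{2^i}$. Your formula is only correct when $q=2$. This does not affect the overall correctness of the approach or the $\poly(d,\log q)$ running time, but you should fix the index range if you write this out in full.
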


\paragraph{Matrices and row vectors.} Let $\M(n \times m, \F)$ be the linear space of $n \times m$ matrices over a field $\F$. We use $\M(n, \F)$ to denote $\M(n\times n, \F)$ and $\GL(n, \F)$ to denote the general linear group of degree $n$ over $\F$. 
%For $A\in \M(n\times m, \F)$, we use $\rowspan(A)\leq \F^m$ to denote the linear span of the rows of $A$. 
For a matrix $A\in\M(n, \F)$, the characteristic polynomial of $A$ is denoted by $\charpoly(A)$. 

For $i\in[d]$, let $D_i\in \M(n_i\times m_i, \F)$. Let $n=\sum_{i\in[d]}n_i$ and $m=\sum_{i\in[d]} m_i$. Then we use $\diag(D_1, \dots, D_d)$ to denote the block-diagonal $n\times m$ matrix $$\begin{bmatrix}
    D_1 & 0 & \dots & 0 \\
0 & D_2 & \dots & 0 \\
\vdots & \vdots & \ddots & \vdots \\
0 & 0 & \dots & D_d
\end{bmatrix}.$$

For a field $\F$, $\F^n$ is the linear space of length-$n$ \emph{row} vectors over $\F$. 
We use $e_i$ for some integer $i$ to denote the row vector whose elements are all zero except the $i$-th element equals 1. 

For a vector $v$ in a row vector space $V$ and $U_1, \dots, U_k$ such that $V = U_1 \oplus \dots \oplus U_k$, 
we use $\rowvecproj_{U_1, \dots, U_k}(v, U_i)$ to denote $v_i$, where $v = v_1 + \dots + v_k$ such that $v_i \in U_i$ for any $1 \leq i \leq k$. 
For convenience, we use $\rowvecproj(v, U_i)$ to denote $\rowvecproj_{U_1, \dots, U_k}(v, U_i)$ when there is no confusion. 

For a matrix $A \in \M(n \times m, \F)$ and a row vector $v \in \F^m$, we say $v$ is a row vector of $A$ if there is a $u \in \F^n$ such that $v = u A$.  
We use $\rowvectorspace(A)$ to denote the linear space spanned by all the row vectors of $A$, i.e., $\rowvectorspace(A) = \{v \in \F^m : \exists u \in \F^n \text{ s.t. } v = uA\}$.

For a matrix $A \in \M(n \times m, \F)$ and a row vector space $U \leq \F^n$, 
we use $U A$ to denote $\langle \{u A : u \in U\}\rangle$.

\paragraph{Matrix tuples and row tuples.} A matrix tuple $\vA=(A_1, \dots, A_\ell)\in\M(n\times m, \F)^\ell$ is a vector of matrices with the same size over the same field. 
Let $L$ and $R$ be matrices in $\M(n, \F)$ and $\M(m, \F)$ respectively, we denote $L \vA = (LA_1, \dots, L A_\ell)$ and $\vA R = (A_1 R, \dots, A_\ell R)$. 
A matrix tuple $\vC \in \M(s \times m, \F)^\ell$ for some integer $s$ is a row-submatrix of $\vA$ if there exists a $s \times n$ matrix $L$ such that $\vC = L \vA$. 
%For an $n \times m$ matrix tuple $\vA = (A_1, \dots, A_\alpha)$, 

We use $\mtupletomatrix(\vA)$ to denote the $(n\cdot \ell) \times m$ matrix such that for any $1 \leq i \leq n, 1 \leq j \leq \ell$, the $((j - 1)\cdot n + i)$-th row of $\mtupletomatrix(\vA)$ equals the $i$-th row of $A_j$. In other words, \[\mtupletomatrix(\vA)=\begin{bmatrix}
    A_1 \\ A_2 \\ \vdots \\ A_\alpha
\end{bmatrix}.\]

%the \emph{row tuple matrix} of $\vA$ is $\rtm(\vA):=\begin{bmatrix} A_1 & \dots & A_m\end{bmatrix}$ \xnote{I changed the notation here, seems ok?}, which is a matrix of size $\ell\times nm$. 

Recall that $\F^n$ is the linear space of length-$n$ row vectors. For $u\in \F^n$, we call the length $\ell$ vector of row vectors $u \cdot \vA = (u A_1, \dots, u A_\ell)\in (\F^m)^\ell$ as a row tuple vector of $\vA$.
We use $\va, \vb,$ et al. to denote row tuples, i.e., $\va\in (\F^m)^\ell$ is a row tuple of $\vA$ if and only if there exists $u \in \F^n$ such that $\va = u \cdot \vA$. 

For a row tuple $\va$ of $\vA$ with $\va = u \vA$, we use $\va^{(i)}$ to denote the $i$-th coordinate of $\va$ for any $1 \leq i \leq \ell$, i.e., $\va^{(i)} = u A_i$.

\paragraph{Vector spaces associated with matrix tuples.} 
For a matrix tuple $\vA$, we use $\rowtuplespace(\vA)$ to denote the linear space spanned by all the row tuples of $\vA$, or the row tuple space of $\vA$, i.e., \[\rowtuplespace(\vA) = \{\va \in (\F^m)^\ell : \exists u \in \F^n \text{ s.t. } v = u \vA\}.\]
We also use $\rowvectorspace(\vA)$ to denote the linear space spanned by the row vectors for all the matrices of $\vA$, i.e., \[\rowvectorspace(\vA) = \langle \rowvectorspace(A_1) \cup \dots \cup \rowvectorspace(A_\ell)\rangle.\]

Let $V\leq \rowtuplespace(\vA)$ be a subspace of the row tuple space of $\vA$. We use $\rowtupvs{V}{i}$ to denote the linear space spanned by $i$-th coordinate of row tuples in $V$, i.e., $\rowtupvs{V}{i} = \{a_i : \va =  (a_1, \dots, a_\ell) \in V\}$.
We use $\rowvectorspace(V)$ to denote $\langle \rowtupvs{V}{1} \cup \dots \cup \rowtupvs{V}{\ell}\rangle$.

%\xnote{not sure if the following paragraph is needed.}Let $U\leq \F^n$ be a subspace. \ynote{Check the definition here.} Then for $i\in[\ell]$, $UA_i=\{uA_i\mid u\in U\}$ is referred to as a \emph{row vector space associated with $\vA$}, or a $\vA$-row space for short. We call $U \cdot \vA=\{u \vA \mid u\in U\}$ a \emph{row tuple vector space associated with $\vA$}, or a $\vA$-row-tuple space for short.

\paragraph{Multiplication with column vectors. }
Let $(v_1, \dots, v_m)$ be a vector such that linear combination of $v_i$ over $\F^m$ is well defined for every $b \in \F$ and all the $i \in [m]$. 
For a vector $u = (u_1, \dots, u_m) \in \F^m$, we use $u \cdot (v_1, \dots, v_m)^T$ to denote $u_1 \cdot v_1 + \dots + u_m \cdot v_m$. 
For a row tuple $\va = (a_1, \dots, a_\ell)$ with $a_i \in (\F^m)^\ell$, we use $\va \cdot (v_1, \dots, v_m)^T$ to denote $(a_1 \cdot (v_1, \dots, v_m)^T, \dots, a_\ell \cdot (v_1, \dots, v_m)^T)$. 
For a matrix tuple $\vA \in \M(n\times m, \vF)^\ell$ represented as
\[\vA = \begin{bmatrix}
    \va_1 \\ \vdots \\  \va_n
\end{bmatrix},\]
we use $\vA (v_1, \dots, v_m)^T$ to denote 
\[\begin{bmatrix}
    \va_1 \cdot (v_1, \dots, v_m)^T \\ \vdots \\ \va_n \cdot (v_1, \dots, v_m)^T
\end{bmatrix}.\]

\paragraph{Quotient Space.}    Let $V$ be a row vector space and $W$ be a subspace of $V$. 
    For each vector $v \in V$, the coset of $v$ with respect to $W$ is $v + W := \{v + w : w \in W\}$. 
    The quotient space of $V$ by $W$, denoted as $V / W$, is $\{v + W : v \in V\}$. 
%    We also generalize the quotient spaces to the case that $W$ is not a subspace $V$. The quotient space of $V$ by $W$, denoted as $V / W$, is $span(V, W) / W$.  
    Let $\va = (a_1, \dots, a_\ell)$ be a row tuple with $a_i \in V$ for any $i$. 
    We define \[\va + W:= (a_1 + W, \dots, a_\ell + W).\]

    Let $T$ be a linear space of row tuples over vector space $V$. %, and $W$ be a subspace of $V$. 
    We use $T / W$ to denote $\langle \{\va + W : \va \in T\} \rangle$. 
    Let $$\vA = \begin{bmatrix}
    \va_1 \\  \vdots \\ \va_\sigma
    \end{bmatrix}$$ be a matrix tuple such that $\va_i \in V^\ell$ for any $i \in [\sigma]$.
    We use  $\vA + W$ to denote 
    $$\begin{bmatrix}
    \va_1 + W \\ \vdots \\ \va_\sigma + W
\end{bmatrix}.$$    
    Let $Z$ be a linear space of matrix tuples such that for each $\vA \in Z$, $\rowtuplespace(\vA) \leq V^{\ell}$. 
    We use $Z / W$ to denote $\langle \{\vA  + W : \vA \in Z\} \rangle$.

\paragraph{Group actions on matrix tuples.} Let $\vA=(A_1, \dots, A_\ell)\in\M(n \times m, \F)^\ell$ be a matrix tuple. The left-right action of $(L, R)\in \GL(n, \F)\times \GL(m, \F)$ sends $\vA$ to $L\vA R^{-1}:=(LA_1R^{-1}, \dots, LA_\ell R^{-1})$. %\xnote{why not $L\vA R$? Are other papers using inversion?}\ynote{This is to make this consistent with a group action, as otherwise the composition of two actions would be inconsistent with the order of actions.}
We say that $\vA, \vB\in\M(n \times m, \F)^\ell$ are \emph{equivalent}, if they are in the same orbit under this group action. 

Let $\vA=(A_1, \dots, A_\ell)\in\M(n, \F)^\ell$ be a square matrix tuple. The conjugation action of $T\in\GL(n, \F)$ sends $\vA$ to $T\vA T^{-1}$. We say that $\vA, \vB\in\M(n, \F)^\ell$ are \emph{conjugate}, if they are in the same orbit under this group action.

\paragraph{Matrix algebras.} Let $\F$ be a field. A matrix algebra $\cA$ over $\F$ is a linear space of matrices in $\M(n, \F)$ that is closed under matrix additions and multiplications, that is, for any $A, B\in \cA$, $AB\in \cA$. Given $\vA\in\M(n, \F)^m$, the matrix algebra generated by $\vA$, denoted by $\Env(\vA)$, is the smallest matrix algebra that contains every $A\in \vA$. 
\begin{proposition}[\hspace{-.02cm}{\cite{IR99}}]\label{prop:linear_basis}
    Given $\vA\in\M(n, \F_q)^m$, there is a Linear Basis Algorithm to compute a linear basis $(L_1, \dots, L_\ell)$ of $\Env(\vA)$ in time $\poly(n, m, \log q)$ satisfying the following condition: For two inputs $\vA$ and $\vA'$ that have an invertible matrix $P$ satisfying $\vA' = P \vA P^{-1}$, 
    then for any invertible matrix $P$ such that $\vA' = P \vA P^{-1}$, 
    $(L_1', \dots, L_\ell') = P (L_1, \dots, L_\ell)P^{-1}$, where $(L_1, \dots, L_\ell)$ and $(L_1', \dots, L_\ell')$ denote the outputs for $\vA$ and $\vA'$ respectively. 
\end{proposition}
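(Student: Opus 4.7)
The plan is to construct the basis by a deterministic, index-driven breadth-first enumeration whose only non-trivial decisions are linear-independence tests. Since the indexing of generators and all linear dependencies among matrices are preserved by simultaneous conjugation $X\mapsto PXP^{-1}$, such an algorithm is automatically equivariant; no quantity that would depend on the choice of coordinate frame ever needs to be computed.

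Concretely, I would maintain an ordered list $B=(L_1,L_2,\dots)$, initially empty. First, for $i=1,\dots,m$ in order, append $A_i$ to $B$ whenever $A_i\notin \langle B\rangle$. Then process $B$ with a pointer $k=1$: while $k$ does not exceed the current length of $B$, iterate $j=1,\dots,m$ and append $L_k A_j$ to $B$ whenever $L_k A_j\notin \langle B\rangle$, then increment $k$. Since $|B|\le n^2$ throughout and each linear-dependence test reduces to Gaussian elimination on vectorized $n^2$-dimensional vectors, the total running time is $\poly(n,m,\log q)$. Correctness follows because $\langle B\rangle$ contains all generators and is closed under right multiplication by every $A_j$, hence by linearity is closed under right multiplication by every element of $\Env(\vA)$; since every $L_s$ is itself a product of generators, $\langle B\rangle=\Env(\vA)$.

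For the equivariance claim, I would induct on the length $t$ of $B$ produced so far. Assume that after $t$ decisions the runs on $\vA$ and $\vA'=P\vA P^{-1}$ have produced $(L_1,\dots,L_t)$ and $(L_1',\dots,L_t')$ with $L_s'=PL_sP^{-1}$ for all $s\le t$. The $(t+1)$-st candidate is selected by an identical index rule on both runs, so it is either $A_i$ versus $PA_iP^{-1}$ or a product $L_k A_j$ versus $L_k'A_j'=P(L_kA_j)P^{-1}$. A dependence $X=\sum_s c_s L_s$ holds if and only if $PXP^{-1}=\sum_s c_s L_s'$, so both runs accept or reject the candidate in lockstep, and on acceptance the new entries again satisfy the conjugation relation. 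The statement for \emph{any} valid $P$ then follows because any other $P'$ with $\vA'=P'\vA(P')^{-1}$ differs from $P$ by an element of the centralizer of every $A_i$, which centralizes all of $\Env(\vA)$ and hence commutes with every $L_s$, giving $P'L_s(P')^{-1}=PL_sP^{-1}=L_s'$.

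The main obstacle I expect is ensuring that no step ever breaks symmetry by appealing to a coordinate-dependent quantity, such as a row-echelon procedure that records a particular pivot choice as part of the output. The point is that the only information the algorithm is allowed to extract from a linear-independence test is the intrinsic dependency bit, and the only information it is allowed to use for ordering is the indices $i$, $j$, $k$ of the generators and of previously added basis elements. With those restrictions the ``algebraic trace'' of the computation on $\vA$ and on $P\vA P^{-1}$ is identical, and equivariance becomes a routine induction as above.
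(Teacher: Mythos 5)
Your argument is correct. The paper does not prove Proposition~\ref{prop:linear_basis} at all --- it is cited verbatim from \cite{IR99} --- so there is no in-paper proof to compare against, but your construction is a self-contained and sound derivation of the cited fact, and it is the standard way such canonical enveloping-algebra bases are built (deterministic breadth-first enumeration of products in lexicographic index order, with membership tests against the span of what has been collected so far). The three key observations are all present and correct: (i) the collected span is closed under right multiplication by each generator, so by an induction on word length it contains every product of generators and hence all of $\Env(\vA)$, while being contained in $\Env(\vA)$ because each $L_s$ is itself a product of generators; (ii) conjugation by $P$ is a linear algebra automorphism of $\M(n,\F_q)$, so at every step the candidate matrices, the set of previously accepted matrices, and the linear-dependence verdict transport exactly under $X\mapsto PXP^{-1}$, making the two runs proceed in lockstep; and (iii) independence of the choice of intertwiner $P$ reduces to the fact that any two valid $P$, $P'$ differ by a matrix in the joint centralizer of $A_1,\dots,A_m$, which centralizes all of $\Env(\vA)$ and hence fixes every $L_s$ under conjugation. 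Your closing remark --- that the output must consist of the raw accepted matrices rather than any echelon-normalized data, since the latter would smuggle in a coordinate dependence --- is exactly the subtlety one must avoid; the proposal handles it correctly.
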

%A linear basis of $\Env(\vA)$ can be computed efficiently in polynomial time in the input size \xnote{with respect to $n$ and $m$? How about dependence on $\F$?}\ynote{It is polynomial in $n$, $m$, and the bit size of the entries of the matrices i.e. a true polynomial time.}\xnote{When we make a statement about computation, maybe it is better to add a lemma with a proper citation.}, in analogy with the breadth first search on graphs. 

\paragraph{Associative algebras.} %\xnote{What does associative algebra mean here? Never defined. I think TCS people may not be familiar with this name.} 
The structure of finite-dimensional associative algebras over fields is a classical topic known as the Artin--Wedderburn--Mal'tsev theory. 

Given a matrix algebra $\cA$ over a field $\F$, an ideal of $\cA$ is a linear subspace of $\cA$ that is closed under left and right multiplications of elements from $\cA$. An algebra is simple if it has no non-trivial ideals. 

The \emph{Jacobson radical} $\rad(\cA)$ is the largest nilpotent ideal of $\cA$. An algebra $\cA$ is called semisimple, if $\rad(\cA)=0$. A semisimple $\cA$ is a direct sum of its minimal ideals $\cA=\cA_1\oplus \dots\oplus \cA_k$, and each $\cA_i$ is a simple algebra, called a simple component of $\cA$. By Wedderburn's theorem, a simple algebra $\cA$ is isomorphic to $\M(n, D)$ where $D$ is a division algebra over $\F$. When $\F=\F_q$ is a finite field, Wedderburn's little theorem shows that every division algebra over $\F$ is a field, so every simple algebra $\cA$ over a finite field $\F_q$ is isomorphic to $\M(n, \F_r)$ where $r=q^k$. 

For an algebra $\cA$, the quotient algebra $\cA/\rad(\cA)$ is semisimple, so it is a direct sum of simple algebras. In the case of matrix algebras over finite fields, there exists a subalgebra of $\cA$, $\wc(\cA)$, such that $\wc(\cA)\cap \rad(\cA)=0$ and $\wc(\cA)+\rad(\cA)=\cA$, and $\wc(\cA)\cong \cA/\rad(\cA)$. Such $\wc(\cA)$ is called a \emph{Wedderburn complement} of $\cA$, and it captures the semisimple structure of $\cA$. All Wedderburn complements are conjugate from one to another by units of the algebra.

\paragraph{Computing algebra structures over finite fields.} In several important works, Friedl, R\'onyai, and Ivanyos demonstrated how to compute the structure of a given associative algebra over several important fields, such as finite fields and algebraic number fields \cite{FR85,Iva00,Ron90}. 
\begin{theorem}[\hspace{-.02cm}\cite{FR85,Iva00,Ron90}]\label{thm:algebra_struct}
    Let $\cA\subseteq\M(n, \F_q)$ be a matrix algebra with linear basis $(A_1, \dots, A_\ell)$. There is a Las-Vegas randomized Matrix Algebra Structure Algorithm in $\poly(n, \log q)$ running time that computes one of the following outputs: \begin{enumerate}
    \item ($\cA$ is not semisimple) the non-trivial Jacobson radical $\rad(\cA)$,
    \item ($\cA$ is semisimple but not simple) a direct sum decomposition of $\cA=C_1\oplus \dots \oplus C_s$ where $C_i\subseteq \cA$ are the simple components, represented by linear bases, 
    \item ($\cA$ is simple) an isomorphism $f$ from $\cA$ to $\M(m, \F_{r})$ given by $f(A_1), \dots, f(A_\ell)$ for some $m = O(n), r = q^{O(n)}$ such that $\F_{r}$ is an extension field of $\F_q$.   
    \end{enumerate}
    such that for any two matrix algebras $\cA$ with linear basis $(A_1, \dots, A_\ell)$ and $\cA'$ with linear basis $(A_1', \dots, A_\ell')$ satisfying that there exists an invertible matrix $P$ such that \begin{equation}\label{equ:prelim}(A_1', \dots, A_\ell') = P (A_1, \dots, A_\ell) P^{-1},\end{equation} then for any invertible matrix $P$  such that Equation (\ref{equ:prelim}) holds, the outputs for $\cA$ and $\cA'$ satisfying the following conditions:
    \begin{enumerate}
        \item If the output for $\cA$ is $\rad(\cA)$, then the output for $\cA'$ is $\rad(\cA')$ such that $\rad(\cA') = P \cdot \rad(\cA) \cdot  P^{-1}$. 
        \item If the output for $\cA$ is a decomposition of $\wc(\cA)$ into more than one simple component, then the output for $\cA'$ is a decomposition of $\wc(\cA')$ with the same number of simple components. 
        \item If the output for $\cA$ the an isomorphism $f$ from $\cA$ to $\M(m, \F_{r})$, then the output for $\cA'$ is another isomorphism $f'$ from $\cA'$ to $\M(m, \F_{r})$ such that there is an invertible matrix $P' \in \GL(m, \F_{r})$ such that $f'(A_i') = P' f(A_i) (P')^{-1}$. %\xnote{Is there any issue for the automorphism of $\F_r$ here?}
    \end{enumerate}

\end{theorem}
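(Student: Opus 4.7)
The plan is to invoke the Friedl--R\'onyai--Ivanyos algorithms \cite{FR85,Iva00,Ron90}, which already provide the $\poly(n,\log q)$ running time, and focus the argument on the canonicity properties. The algorithm first computes $\rad(\cA)$; if this is nonzero it outputs case~1. Otherwise it computes the center $\centre(\cA)$, factors its primitive idempotents $e_1,\dots,e_s$ using polynomial factorization over $\F_q$ (Theorem~\ref{thm:factoring}, which is where the Las-Vegas randomness enters), and outputs case~2 if $s>1$ via the decomposition $\cA=e_1\cA\oplus\dots\oplus e_s\cA$. If $s=1$, the algorithm constructs a minimal left ideal $I$ of $\cA$, identifies $D=\End_\cA(I)$ as a finite field $\F_r$ (using Wedderburn's little theorem and a primitive-element computation), and outputs the resulting isomorphism $f\colon\cA\to\M(m,\F_r)$ arising from the action of $\cA$ on $I$.

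For the canonicity in case~1, I would use that $\rad$ is an intrinsic algebraic invariant: since any algebra homomorphism sends $\rad$ to $\rad$, conjugation by $P$ must send $\rad(\cA)$ onto $\rad(\cA')$, so $\rad(\cA')=P\cdot\rad(\cA)\cdot P^{-1}$ automatically. For case~2, the number $s$ of simple components equals $\dim_\F \centre(\cA)-\dim_\F \rad(\centre(\cA))$ divided appropriately, i.e., it is an isomorphism-invariant of $\cA$ itself, hence equals the analogous quantity for $\cA'$. This gives the required matching of the number of simple components regardless of the choice of $P$.

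The main obstacle is case~3, where canonicity is delicate because the algorithm makes internal random choices (e.g.\ selecting the minimal left ideal and a generator for $\F_r$). Here the plan is to use the Skolem--Noether theorem. Given $P$ with $\cA'=P\cA P^{-1}$, define an auxiliary isomorphism $g\colon\cA'\to\M(m,\F_r)$ by $g(A'):=f(P^{-1}A'P)$. Then both $g$ and the algorithm's output $f'$ are $\F_q$-algebra isomorphisms $\cA'\to\M(m,\F_r)$ with the same target, so by Skolem--Noether applied to the simple algebra $\M(m,\F_r)$ over its center $\F_r$, they differ by an inner automorphism: there exists $P'\in\GL(m,\F_r)$ with $f'(\cdot)=P'\,g(\cdot)\,(P')^{-1}$. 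Evaluating at $A'_i=PA_iP^{-1}$ gives $f'(A_i')=P'\,f(A_i)\,(P')^{-1}$, as required. A subtlety I would need to verify is that the ambient field $\F_r$ is pinned down by $\cA$ itself (as a representation of the center of a simple component, or equivalently via the degree $k$ with $r=q^k$ determined by $\dim_\F\centre(\cA)$), so that the algorithm does not accidentally produce a larger extension for $\cA'$ than for $\cA$; this is handled by fixing, inside the algorithm, a canonical representation of $\F_r$ as $\F_q[x]/(h(x))$ for the minimal polynomial $h$ of a computed generator of $\centre(\cA)$ and noting that Skolem--Noether absorbs any mismatch at the level of the final inner automorphism.
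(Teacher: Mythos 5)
You invoke Skolem--Noether to conclude that $g$ and $f'$ differ by conjugation by some $P'\in\GL(m,\F_r)$, but Skolem--Noether applies to $k$-algebra isomorphisms into a \emph{central} simple $k$-algebra, i.e.\ to maps that fix the center $\F_r$ pointwise. Your $g$ and $f'$ are only $\F_q$-algebra isomorphisms, and the composite $f'\circ g^{-1}$ is then an $\F_q$-algebra automorphism of $\M(m,\F_r)$. The group of such automorphisms is $\mathrm{PGL}(m,\F_r)\rtimes\Gal(\F_r/\F_q)$, so $f'\circ g^{-1}$ could restrict on the center to a nontrivial Frobenius power, in which case it is \emph{not} inner and no $P'\in\GL(m,\F_r)$ exists with $f'=P'g(\cdot)(P')^{-1}$. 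Your closing remark that ``Skolem--Noether absorbs any mismatch at the level of the final inner automorphism'' is exactly the assertion that needs a proof: an outer automorphism induced by a Galois automorphism cannot be absorbed into an inner one. To close the gap you would need to argue that the algorithm's identification of the center $\centre(\cA)$ with the fixed model $\F_q[x]/(h(x))$ is itself equivariant under the conjugation $\cA\mapsto P\cA P^{-1}$, which is not automatic: picking a distinguished root of $h$ breaks the Galois symmetry and a priori the two runs can pick roots differing by a Frobenius power. It is worth noting that the paper itself is aware of this: in the proof of Claim~\ref{claim:full_matrix_algebra_main} it explicitly allows ``$g$ is an automorphism of $\M(m,\F_r)$ up to an automorphism of $\F_r$'' and then shows the downstream algorithm tolerates that Galois twist, so a safe repair here is either to weaken the conclusion of condition~3 to allow an additional $\sigma\in\Gal(\F_r/\F_q)$, or to add a specific mechanism in the structure algorithm pinning the center isomorphism in a conjugation-invariant way. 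Your arguments for cases~1 and~2 (that $\rad$ and the number of simple components are isomorphism invariants) are correct and unproblematic, as is the reduction of running time and Las-Vegas randomness to the cited structure algorithms and polynomial factoring.
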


\section{Indecomposable-block-corresponding row-submatrix tuple and hierarchical row tuple decomposition}\label{sec:ibctuple}

In this section, we introduce a few important definitions that we will use in the later sections and prove some basic properties of these definitions. 

In particular, this section defines the indecomposable-block-corresponding row-submatrix tuple and the hierarchical row tuple decomposition. 
Based on these definitions, in Section~\ref{sec:canonical_form_from_rep}, 
we provide an algorithm to construct a canonical form of the matrix tuple, assuming that indecomposable-block-corresponding row-submatrix tuples satisfying certain properties are given.

\subsection{Indecomposable-block-corresponding row-submatrix tuple}

We define the key definition indecomposable-block-corresponding row-submatrix tuple in this subsection. 

We start by extend the block-diagonal matrix to matrix tuple. 
As an analog of the block-diagonal matrix, 
for a sequence of $d$ non-zero matrix tuples $\vD_i\in \M(\nu_i\times \mu_i, \F)^\ell$, 
we use $\diag(\vD_1, \dots, \vD_d)$ to denote the block-diagonal $n\times m$ matrix tuple in $\M(n \times m, \F)^\ell$ $$\begin{bmatrix}
    \vD_1 & 0 & \dots & 0 \\
0 & \vD_2 & \dots & 0 \\
\vdots & \vdots & \ddots & \vdots \\
0 & 0 & \dots & \vD_d
\end{bmatrix},$$
where $n=\sum_{i\in[d]}\nu_i$ and $m=\sum_{i\in[d]} \mu_i$.
We call each of $\vD_1, \dots, \vD_d$ a block of $\vD$.

Let $\vA\in\M(n\times m, \F)^\ell$ be a matrix tuple. We say that $\vA$ is \emph{block-diagonalizable}, if it is equivalent to $\vD=\diag(\vD_1, \dots, \vD_d)\in \M(n\times m, \F)^\ell$ for some $d\geq 2$ and non-zero matrix tuples $\vD_1, \dots, \vD_d$, i.e., there exist $L \in \GL(n, \F)$ and $R\in \GL(m, \F)$ such that $\vD = L \vA R^{-1}$. 
Such a matrix tuple $\vD$ is referred to as a block-diagonalization of $\vA$.

We say that $\vA$ is \emph{indecomposable}, if it is not block-diagonalizable. 
We say that the block-diagonalization $\vD$ is minimal if every $\vD_i$ is indecomposable. 
We say that $\vD = \diag(\vD_1, \dots, \vD_d)$ is a minimum block-diagonalization of $\vA$ if there is no $L'$ and $R'$ such that $L' \vA (R')^{-1} = \diag(\vD'_1, \dots, \vD'_{d'})$ for some $d' > d$. 

%A matrix tuple $\vD\in\M(n\times m, \F)^\ell$ is a simultaneous diagonalization of $\vA$ if $\vD = P \vA Q$ for some invertible $P\in\GL(n, \F)$ and $Q\in\GL(m, \F)$ and $\vD = \diag(\vD_1, \dots, \vD_\zeta)$ for some $\zeta$, where $\vD_i\in\M(n_i\times m_i, \F)$, $n_i\in\{1, \dots, n-1\}$, $m_i\in\{1, \dots, m-1\}$, and $\sum_{i\in[\zeta]}n_i=n$, $\sum_{i\in[\zeta]}m_i=m$. We say that  $\vD$ is a minimum simultaneous diagonalization of $\vA$ if it is a  simultaneous diagonalization of $\vA$, and there is no $P'$ and $Q'$ such that $P' \vA Q' = \diag(\vD_1, \dots, \vD_\xi)$ for some $\xi > \zeta$. 

%Suppose $\vD = \diag(\vD_1, \dots, \vD_\zeta)$ is a simultaneous diagonalization of $\vA$ such that $\vD_i$ is of dimension $n_i \times m_i$ for any $1 \leq i \leq \zeta$.  Let $P$ and $Q$ be two matrices such that $\vA = P \vD Q$. Let \[S_{\vD, Q, i} = \vspan(e_{k + 1} \vD Q, \dots, e_{k + n_i}\vD Q)\subseteq \F^{m\ell},\] where $k = \sum_{j = 1}^{i - 1} n_j$. That is, $S_{\vD, Q, i}$ is spanned by those row tuple vectors corresponding to the $i$th component in the simultaneous diagonalization $\vD$ (and right multiplied with $Q$) of $\vA$.

%Let $\vB = \diag(\vB_1, \dots, \vB_\xi)$ be a simultaneous diagonalization of $\vA$ with $\vB_i$ of dimension $n_i \times m_i$, and $P, Q$ be two matrices such that $\vA = P \vB Q$. 
%We denote \[S_i = \{\va_i : \va_i Q \in span(e_{\sum_{j = 1}^{i-1} n_j + 1}) P \vB Q, \dots, e_{\sum_{j = 1}^{i} n_j}) P \vB Q  \}.\] All the row tuples of $\vA$ are the direct sum of $S_1, \dots, S_\xi$. 

The following is a consequence of the Krull--Schmidt theorem for quiver representations, by viewing a matrix tuple as a representation of the Kronecker quiver. 
\begin{lemma}[\hspace{-.02cm}{\cite[Theorem 1.11]{Kir16}}]
For a matrix tuple $\vA$, we have
\begin{enumerate}
\item Any minimal block-diagonalization of $\vA$ is a minimum block-diagonalization. 
\item Let $\vD = \diag(\vD_1, \dots, \vD_d)$ and $\vD' = \diag(\vD_1', \dots, \vD_d')$ be two minimum block-diagonalizations of $\vA$. 
Then there exists a bijection $f: [d] \rightarrow [d]$ such that for any $1 \leq i \leq d$, there exist invertible matrices $L_i$ and $R_i$ such that $\vD'_{f(i)} = L_i \vD_{i} (R_i)^{-1}$. 
\end{enumerate}
\end{lemma}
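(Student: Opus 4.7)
The plan is to invoke the Krull--Schmidt theorem for representations of the generalized Kronecker quiver, cited from \cite{Kir16}, after the standard translation between matrix tuples under the left-right action and quiver representations. Concretely, a matrix tuple $\vA = (A_1, \dots, A_\ell) \in \M(n \times m, \F)^\ell$ corresponds to the representation of the Kronecker quiver with two vertices $s, t$ and $\ell$ arrows $s \to t$, assigning $\F^n$ to $s$, $\F^m$ to $t$, and the linear map $v \mapsto v A_i$ to the $i$-th arrow. Under this dictionary, left-right equivalence of matrix tuples matches isomorphism of representations (a morphism of such representations is a pair of linear maps on the two vertex spaces, which when invertible corresponds precisely to a pair $(L, R)$ of invertible matrices with $L \vA = \vA' R$), direct sums of representations correspond to block-diagonal matrix tuples via $\diag(\cdot)$, and indecomposable representations correspond exactly to indecomposable matrix tuples in the sense defined above. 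Krull--Schmidt then asserts that $\vA$ admits at least one decomposition into indecomposable summands, and any two such decompositions agree up to permutation and isomorphism of summands; let $d^*$ denote the common number of indecomposable summands.

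For part 1, let $\vD = \diag(\vD_1, \dots, \vD_d)$ be a minimal block-diagonalization, so every $\vD_i$ is indecomposable and $d = d^*$. If there existed a block-diagonalization $\vD' = \diag(\vD_1', \dots, \vD_{d'}')$ with $d' > d$, one could further decompose each $\vD_j'$ into $k_j \geq 1$ indecomposable summands, yielding a decomposition of $\vA$ into $\sum_j k_j \geq d'$ indecomposables; this contradicts $d^* = d < d'$. Hence no such $d' > d$ exists and $\vD$ is a minimum block-diagonalization. For part 2, if $\vD = \diag(\vD_1, \dots, \vD_d)$ and $\vD' = \diag(\vD_1', \dots, \vD_{d'}')$ are both minimum block-diagonalizations, the same refinement argument forces $d = d' = d^*$ and also forces every block of each to be indecomposable (otherwise one could refine to get strictly more blocks, contradicting the minimum-ness hypothesis). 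The uniqueness clause of Krull--Schmidt then supplies a bijection $f : [d] \to [d]$ together with representation isomorphisms $\vD_{f(i)}' \cong \vD_i$, which translate back to the desired invertible $L_i, R_i$ with $\vD_{f(i)}' = L_i \vD_i R_i^{-1}$.

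The main obstacle is really only the bookkeeping in the translation step, since the substantive content is carried entirely by Krull--Schmidt. One small point to be careful about is that the paper forbids zero blocks in a block-diagonalization, so the refinement argument above should decompose each $\vD_j'$ into a direct sum of strictly nonzero indecomposable summands; this is automatic from the representation-theoretic version of Krull--Schmidt, since the zero representation is excluded from any decomposition into indecomposables by convention. The other point to verify is that an isomorphism of subrepresentations arising from an isomorphism of two direct-sum decompositions restricts to an honest pair of invertible linear maps on the relevant vertex spaces of each summand, which is immediate from the definition of morphisms of quiver representations.
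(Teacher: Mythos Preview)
Your proposal is correct and matches the paper's approach exactly: the paper does not give a proof but simply states the lemma as ``a consequence of the Krull--Schmidt theorem for quiver representations, by viewing a matrix tuple as a representation of the Kronecker quiver,'' citing \cite[Theorem 1.11]{Kir16}. Your write-up is just a careful unpacking of that one sentence, spelling out the dictionary between left-right equivalence and isomorphism of Kronecker-quiver representations and then reading off both parts from the existence and uniqueness clauses of Krull--Schmidt.
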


We investigate the relations between row-submatrix tuples of $\vA$ and blocks of a block signalization of $\vA$. 
We say a matrix tuple $\vC$ is a row-submatrix tuple of $\vA$ if every row tuple of $\vC$ is a row tuple of $\vA$. 
Let $L$ and $R$ be two invertible matrices such that $\vD = \diag(\vD_1, \dots, \vD_d) = L \vA R^{-1}$ with $\nu_j$ being the number of rows of $\vD_j$ for each $j\in[d]$. 
Denote 
\[\vA_{i, \vD, L} := \begin{bmatrix}
    e_{k + 1} L \vA \\ 
    \vdots \\
    e_{k + \nu_i} L \vA
\end{bmatrix},
\]
where $k = (\sum_{j = 1}^{i-1} \nu_j)$. 
%We call $\vA_{i, \vD, L}$ to be the corresponding matrix tuple of $\vA$ the matrix tuple corresponding to $i$-th submatrix of $\vD$ in $\vA$. 
Intuitively, $\vA_{i, \vD, L}$ is the row-submatrix tuple of $\vA$ such that $\vA_{i, \vD, L} \cdot R^{-1}$ corresponds to the block $\vD_i$ in $\vD$. 

By the correspondence between $\vA$ and its block-diagonalization $\vD$, the row tuple space of $\vA$ and the row vector space of $\vA$ are direct sum of row tuple spaces and row vector spaces of $\vA_{1, \vD, L}, \dots, \vA_{d, \vD, L}$ respectively.

\begin{fact}\label{fact:ibctuple_diagonalzation_most_basic}
Let $\vA$ be a matrix tuple in $\M(n\times m, \F_q)^\ell$. The following properties hold:
\begin{enumerate}
\item Let $\vD = \diag(\vD_1, \dots, \vD_d)$ be a block-diagonalization of $\vA$. Then for any invertible matrices $L$ and $R$ such that $\vD = L \vA R^{-1}$, we have 
\begin{equation}\label{equ:ibctuple_diag_basic_01}\rowtuplespace(\vA) = \rowtuplespace(\vA_{1, \vD, L}) \oplus \dots \oplus \rowtuplespace(\vA_{d, \vD, L}),\end{equation}
and \begin{equation}\label{equ:ibctuple_diag_basic_02}\rowvectorspace(\vA) = \rowvectorspace(\vA_{1, \vD, L}) \oplus \dots \oplus \rowvectorspace(\vA_{d, \vD, L}).\end{equation}
\item If there is a sequence of matrix tuples $\vA_1 \in \M(\sigma_1 \times m, \F)^\ell, \dots, \vA_k \in \M(\sigma_k \times m, \F)^\ell$ for some integers $\sigma_1, \dots, \sigma_k$ satisfying the following conditions:
\begin{enumerate}
    \item $n = \sigma_1 + \dots + \sigma_k$;
    \item $\rowtuplespace(\vA) = \rowtuplespace(\vA_1) \oplus \dots \oplus \rowtuplespace(\vA_k)$;
    \item $\rowvectorspace(\vA) = \rowvectorspace(\vA_1) \oplus \dots \oplus \rowvectorspace(\vA_k)$,
\end{enumerate}
then 
there is an invertible matrix $R$ such that 
\begin{equation}\label{equ:ibctuple_diag_basic_03}\begin{bmatrix}
    \vA_1 \\ \vdots \\ \vA_k
\end{bmatrix} R^{-1} = \diag(\vD_1, \dots, \vD_k)\end{equation}
is a block-diagonalization of $\vA$ with $\vA_{i, \vD, I_n} = \vA_i$ for any $1 \leq i \leq k$. 
\end{enumerate}

\end{fact}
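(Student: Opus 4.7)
My plan is to handle the two parts separately. Part 1 is a transport-of-structure argument using the hypothesized $\vD=L\vA R^{-1}$, while Part 2 requires constructing both $L$ and $R$ out of conditions (a)--(c).

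For Part 1, I first observe that left-multiplication by the invertible $L$ preserves the row tuple space and the row vector space, so $\rowtuplespace(\vA)=\rowtuplespace(L\vA)$ and $\rowvectorspace(\vA)=\rowvectorspace(L\vA)$. By definition, the rows of $L\vA$ are partitioned into those of $\vA_{1,\vD,L},\dots,\vA_{d,\vD,L}$, which immediately yields the sum $\rowtuplespace(L\vA)=\sum_i \rowtuplespace(\vA_{i,\vD,L})$ and similarly for row vectors. For directness, I would transport to $\vD$ via the coordinate-wise bijection $\va\mapsto\va R^{-1}$ on $(\F^m)^\ell$: note that $\vA_{i,\vD,L}R^{-1}$ is the row-submatrix tuple of $\vD$ corresponding to block $i$, so any row tuple in $\rowtuplespace(\vA_{i,\vD,L})R^{-1}$ has coordinates supported only in the columns of block $i$. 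These supports are mutually disjoint across $i$, so the sum after transport is direct, hence so is the sum before. The row-vector decomposition is proved identically.

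For Part 2, let $\vA'$ denote the stacked matrix tuple obtained by vertically concatenating $\vA_1,\dots,\vA_k$. I construct $R$ first from (c) and then $L$ from (b). For $R$, let $U_i=\rowvectorspace(\vA_i)\le \F^m$; by (c), $U_1\oplus\dots\oplus U_k=\rowvectorspace(\vA)$, so I pick any complement $U_0$ such that $\F^m=U_1\oplus\dots\oplus U_k\oplus U_0$. Concatenating ordered bases of the $U_i$ yields a basis of $\F^m$; let $R$ be the invertible matrix whose rows are these basis vectors. Right-multiplication by $R^{-1}$ expresses row vectors in this basis, and since every row of $\vA_i$ lies in $U_i$, its expression has support only in the block-$i$ columns. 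Hence $\vA' R^{-1}$ is block-diagonal with blocks $\vD_i$, and the $U_0$-columns (if any) are zero, which I absorb into $\vD_k$.

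For $L$, condition (b) ensures each row of $\vA_i$ lies in $\rowtuplespace(\vA)$, so there exists a (possibly non-invertible) matrix $L^0$ satisfying $L^0\vA=\vA'$. To upgrade to invertible $L$, I invoke the linear-algebra fact that if $\phi,\psi\colon \F^n\to W$ are surjective linear maps onto the same target, some invertible $L\in \GL(n,\F)$ satisfies $\psi=\phi\circ L$; this follows by picking complements $\F^n=C\oplus \ker\phi=C'\oplus \ker\psi$, defining $L$ on $C'$ as $(\phi|_C)^{-1}\circ \psi|_{C'}$ and on $\ker\psi$ as any isomorphism to $\ker\phi$ (the kernels have matching dimension $n-\dim W$, with the total $n$ fixed by (a)). Applying this to $\phi(u)=u\vA$ and $\psi(u)=u\vA'$---both surjecting onto $W=\rowtuplespace(\vA)=\rowtuplespace(\vA')$ by (b)---yields invertible $L$ with $\vA'=L\vA$. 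Combining, $L\vA R^{-1}=\vA'R^{-1}=\diag(\vD_1,\dots,\vD_k)$ is a block-diagonalization of $\vA$ in which the $i$-th row block of $L\vA$ is exactly $\vA_i$, giving the claimed identification. The main obstacle is the invertibility of $L$ here: it relies crucially on the two maps $u\mapsto u\vA$ and $u\mapsto u\vA'$ having \emph{identical} images, which is precisely what condition (b) guarantees; without it one would only obtain a non-invertible $L^0$. A secondary subtlety, already handled in the construction of $R$ by completing the basis and folding the zero $U_0$-columns into $\vD_k$, is that $\rowvectorspace(\vA)$ may be a proper subspace of $\F^m$.
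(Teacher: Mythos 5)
Your proof is correct and follows essentially the same route as the paper: Part 1 transports the decomposition from $\vD$ through $L\vA R^{-1}$ using disjoint column supports, and Part 2 builds $R$ from bases of the $\rowvectorspace(\vA_i)$ and then finds an invertible $L$ with $L\vA$ equal to the stacked matrix tuple. You spell out two steps the paper leaves implicit: the completion by $U_0$ when $\rowvectorspace(\vA)$ is a proper subspace of $\F^m$ (the paper just declares $R$ invertible, which would fail if $\sum_i \dim \rowvectorspace(\vA_i) < m$), and the standard linear-algebra argument that two surjections $\F^n \twoheadrightarrow W$ with equal-dimensional kernels differ by precomposition with an invertible map, which justifies the existence of $L$ the paper asserts from condition (b). Same idea, tidier execution.
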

\begin{proof}
%    For any $v \in \F_p^n$, $v P^{-1} \vA = v \vD Q$. So, every row tuple of $\vD Q$ is a row tuple of $\vA$, and thus we have the first property.
We first prove the first property. Suppose $\vD_i$ is a $\nu_i \times \mu_i$ dimension matrix tuple for each $i \in [d]$. 
By the definition of block-diagonalization, we have 
$n = \sum_{i = 1}^d \nu_i$, $m = \sum_{i = 1}^d \mu_i$, 
\[\rowtuplespace(\vD) = \bigoplus_{i = 1}^d \rowtuplespace\left(\left \langle \left\{e_{(\sum_{j = 1}^{i - 1} \nu_j) + 1} \vD, \dots, e_{(\sum_{j = 1}^{i} \nu_j)} \vD\right\}\right\rangle\right),\]
and \[\rowvectorspace(\vD) = \bigoplus_{i = 1}^d \rowvectorspace\left(\left \langle\left\{e_{(\sum_{j = 1}^{i - 1} \nu_j) + 1} \vD, \dots, e_{(\sum_{j = 1}^{i} \nu_j)} \vD\right\}\right\rangle\right).\]
Since $\vD = L \vA R^{-1}$, 
by the definition of $\vA_{i, \vD, L}$, Equation (\ref{equ:ibctuple_diag_basic_01}) and Equation (\ref{equ:ibctuple_diag_basic_02}) hold.

For the second property, 
let $v_{i, 1}, \dots, v_{i, \dim(\rowvectorspace(\vA_i))}$ be an arbitrary linear basis of $\rowvectorspace(\vA_i)$. 
Let 
\[R = \begin{bmatrix}
    v_{1, 1} \\ \vdots \\ v_{1, \dim(\rowvectorspace(\vA_1))} \\ v_{2,1} \\ \vdots \\  v_{k, \dim(\rowvectorspace(\vA_k))}
\end{bmatrix}.\]
$R$ is an invertible matrix such that Equation (\ref{equ:ibctuple_diag_basic_03}) is satisfied. 
On the other hand, by the condition (b), there is an invertible matrix $L$ such that $L \vA = \begin{bmatrix}
    \vA_1 \\ \vdots \\ \vA_k
\end{bmatrix}$. Then the second property holds. 
\end{proof}

Now we are ready to define indecomposable-block-corresponding row-submatrix tuple. Roughly speaking, a indecomposable-block-corresponding row-submatrix tuple is a row-submatrix tuple that corresponds to a indecomposable block of a block diagonalization of the matrix tuple. 

\begin{definition}[Indecomposable-Block-Corresponding Row-Submatrix Tuple]\label{def:ibctuple}
    For a matrix tuple $\vA \in \M(n\times m, \F_q)^\ell$, an \emph{indecomposable-block-corresponding row-submatrix tuple} (or \emph{\IBCtuplenospace} for short) $\vB$ of $\vA$ is a row-submatrix tuple of $\vA$ satisfying the following condition: 
    There exists a minimum block-diagonalization $\vD = \diag(\vD_1, \dots, \vD_d)$ of $\vA$, invertible matrices $L \in \GL(n, \F_q)$ and $R \in \GL(m, \F_q)$ such that $\vD = L \vA R^{-1}$, and an integer $i \in [d]$ such that $\vB = \vA_{i, \vD, L}$. 
    
%    equal to $\vA_{i, \vD, L}$ for some block-diagonalization $\vD = \diag(\vD_1, \dots, \vD_d)$ of $\vA$, an invertible matrix $L \in \GL(n, \F_q)$, and an integer $i \in [d]$ satisfying the following conditions:
%    \begin{enumerate}
%        \item $L\vA R^{-1} = \vD$.
%        \item $\vD_1$ contains $\sigma$ rows.
%        \item The induced matrix tuple of the first $\sigma$ row tuples of $L \vA$ equals  $\vB$, i.e., $e_i L \vA = e_i \vB$ for any $1 \leq i \leq \sigma$.
%    \end{enumerate}
\end{definition}

We define the relations between \IBCtuples in the following definition.

\begin{definition}\label{def:ibctuple_rel}
    The relations between \IBCtuples are defined as follows:
    \begin{enumerate}
    \item
    Two \IBCtuples $\vB$ and  $\vB'$ (maybe for different matrix tuples) are \emph{equivalent} if they contain the same number of rows and columns, and there are invertible matrices $L$ and $R$ such that $\vB = L \vB' R^{-1}$.
    \item
    Two \IBCtuples $\vB$ and $\vB'$ (maybe for different matrix tuples) are \emph{right-equivalent} if they contain the same number of rows and columns, and there is an invertible matrix $R$ such that $\vB = \vB' R^{-1}$.
    \item
    Two \IBCtuples $\vB$ and $\vB'$ for the same matrix tuple are \emph{strongly correlated} if they are right-equivalent and $\vB - \vB'$ is not an \IBCtuplenospace. 
    \item
    Two \IBCtuples $\vB$ and $\vB'$ for the same matrix tuple are \emph{correlated} if they are right-equivalent and there are invertible matrices $L$ and $L'$ such that 
    $L\vB$ is strongly correlated to $\vB'$, and $L' \vB'$ is strongly correlated to $\vB$. 
    \end{enumerate}
\end{definition}

We also define the representative \IBCtuple sequence for a matrix tuple as follows. 

\begin{definition}[Representative \IBCtuple Sequence]\label{def:representative_ibctuple_sequence}
A sequence of \IBCtuples $\vB_1, \dots, \vB_k$ of matrix tuple $\vA$ is a \emph{representative relevant \IBCtuple sequence} of $\vA$ if the following two conditions hold:
\begin{enumerate}
    \item Any two different \IBCtuples in the sequence are not equivalent. 
    \item Any \IBCtuple of $\vA$ is equivalent to one of $\vB_1, \dots, \vB_k$. 
\end{enumerate}
\end{definition}

We remark that computing a desirable canonical representative \IBCtuple sequence is nontrivial because the algorithm needs to select the \IBCtuples that result in equivalent blocks in a minimum block-diagonalization of the input matrix tuple by fixing the invertible matrix multiplied on the left canonically, ensuring that the \IBCtuples selected for different input matrix tuples are right-equivalent.

An important goal of our algorithm for computing a canonical form of a given matrix tuple is to compute a ``canonical'' representative \IBCtuple sequence in the following sense: for two equivalent matrix tuples $\vA$ and $\vA'$ (i.e., there exist invertible matrices $L$ and $R$ such that $\vA' =  L \vA R^{-1}$), the representative \IBCtuple sequence $\vB_1, \dots, \vB_k$ computed for $\vA$ and $\vB_1', \dots, \vB_k'$ computed for $\vA'$ satisfy that $\vB_i$ and $\vB_i'$ are right-equivalent for all the $i\in[k]$.

We remark that computing a desirable canonical representative \IBCtuple sequence is nontrivial because the algorithm needs to choose the \IBCtuples that result in equivalent blocks in a minimum block-diagonalization of the input matrix tuple by fixing the invertible matrix multiplied on the left canonically, ensuring that the \IBCtuples selected for different input matrix tuples are right-equivalent.

%Suppose $\vD = \diag(\vD_1, \dots, \vD_\zeta)$ is a simultaneous diagonalization of $\vA$ such that $\vD_i$ is of dimension $n_i \times m_i$ for any $1 \leq i \leq \zeta$. Let $P$ and $Q$ be two matrices such that $\vA = P \vD Q$.  Let \[S_{\vD, Q, i} = \vspan(e_{k + 1} \vD Q, \dots, e_{k + n_i}\vD Q)\subseteq \F^{m\ell},\] where $k = \sum_{j = 1}^{i - 1} n_j$. That is, $S_{\vD, Q, i}$ is spanned by those row tuple vectors corresponding to the $i$th component in the simultaneous diagonalization $\vD$ (and right multiplied with $Q$) of $\vA$.

\subsection{Hierarchical row tuple decomposition}

In this subsection, we introduce the hierarchical row tuple decomposition, which contains a sequence of row tuple subspaces that are both characteristic and block-compatible.
The characteristic property of a row tuple space or a row vector space has been briefly discussed in Section~\ref{sec:overview}. We provide a formal definition here. We also define the block-compatible property, which is an important definition to analyze block-diagonal matrix tuples.

\begin{definition}[Characteristic Subspace]\label{def:characteristic_subspace}
For a matrix tuple $\vA$, a subspace of row tuples $T \leq \rowtuplespace(\vA)$ is characteristic if for any invertible matrices $L$ and $R$ such that $\vA = L \vA R^{-1}$, 
\[T R^{-1} = \langle \{\va R^{-1} : \va \in T \} \rangle = T.\]

A subspace of row vectors $W \leq \rowvectorspace(\vA)$ is characteristic if for any invertible matrices $L$ and $R$ such that $\vA = L \vA R^{-1}$, 
\[ W R^{-1}= \langle \{v R^{-1} : v \in W \} \rangle = W.\]
\end{definition}

In addition to the characteristic property, we also require the row tuple subspaces in the hierarchical row tuple decomposition to satisfy the block-compatible property as defined in Definition~\ref{def:ibctuple_invariant_subspace}. This property enables us to decompose a row tuple subspace into the direct sum of its projections on blocks of any block-diagonalization of the matrix tuple.

\begin{definition}[Projection of Row-Submatrix Tuple on Blocks]\label{def:projection_matrix_tuple_diag}
For matrix tuple $\vA \in \M(n\times m, \F_q)^\ell$, and a sequence of row tuple subspaces $V_1, \dots, V_\ell \leq \rowtuplespace(\vA)$ such that $\rowtuplespace(\vA) = V_1 \oplus \dots \oplus V_\ell$. 
For an arbitrary row tuple $\va \in \rowtuplespace(\vA)$, let $\va_1, \dots, \va_\ell$ be the row tuples such that $\va = \sum_{i = 1}^\ell \va_i$, and $\va_i \in V_i$ for any $1 \leq i \leq \ell$. 
$\va_i$ is called the projection of $\va$ on $V_i$ with respect to $V_1, \dots, V_\ell$, denoted as $\proj_{V_1, \dots, V_\ell}(\va, i)$. 

For a block-diagonalization $\vD = \diag(\vD_1, \dots, \vD_d)$ of $\vA$ with invertible matrices $L$ and $R$ such that $\vD = L \vA R^{-1}$, 
the projection of row tuple $\va$ on $i$-th block of $\vD$, denoted as $\proj_{\vD, L}(\va, i)$, is 
\[\proj_{\vD, L}(\va, i) = \proj_{\rowtuplespace(\vA_{1, \vD, L}), \dots, \rowtuplespace(\vA_{d, \vD, L})}(\va, i).\]

Let $\vC$ be a $c \times m$ row-submatrix tuple of $\vA$, i.e., $\rowtuplespace(\vC) \leq \rowtuplespace(\vA)$.  
The projection of $\vC$ on $i$-th block of $\vD$, denoted as $\proj_{\vD, L}(\vC, i)$, is a $c\times m$ row-submatrix tuple of $\vA$ such that
\[\proj_{\vD, L}(\vC, i) := \begin{bmatrix}
    \proj_{\vD, L}(e_1 \vC, i) \\
    \vdots \\
    \proj_{\vD, L}(e_c \vC, i) 
    %\proj_{\rowtuplespace(\vA_{1, \vD, L}), \dots, \rowtuplespace(\vA_{d, \vD, L})}(e_1 \vC, i) \\
    %\vdots \\
    %\proj_{\rowtuplespace(\vA_{1, \vD, L}), \dots, \rowtuplespace(\vA_{d, \vD, L})}(e_c \vC, i) 
\end{bmatrix}.\]
\end{definition}

The block-compatible property for a row tuple or row vector subspace is that the projection of each row tuple or row vector in the subspace on any block of an arbitrary block-diagonalization of the matrix tuple remains within the subspace.

\begin{definition}[Block-Compatible Subspace]\label{def:ibctuple_invariant_subspace}
For a matrix tuple $\vA$, a subspace of row tuples $T \leq \rowtuplespace(\vA)$ is block-compatible if for any block-diagonalization $\vD$ of $\vA$ and any invertible matrices $L$ and $R$ such that $\vD = L \vA R^{-1}$, 
$\proj_{\vD, L}(\va, i)$ is in $T$ for any $\va \in T$ and $i \in [d]$.

A subspace of row vectors $W \leq \rowvectorspace(\vA)$ is block-compatible if for any block-diagonalization $\vD$ of $\vA$ and any invertible matrices $L$ and $R$ such that $\vD = L \vA R^{-1}$, 
\[\rowvecproj_{\rowvectorspace(\vA_{1, \vD, L}), \dots, \rowvectorspace(\vA_{d, \vD, L})}(v, \rowvectorspace(\vA_{i, \vD, L}))\] is in $W$ for any $v \in W$ and $i \in [d]$.
\end{definition}

We remark that the characteristic property and block-compatible property are not equivalent. For an indecomposable matrix tuple, every row tuple subspace and every row vector subspace is block-compatible by definition, but it may not be characteristic.

\begin{fact}\label{fact:ibctuple_invariant_fund}
Let $\vA$ be a matrix tuple, $\vD = \diag(\vD_1, \dots, \vD_d)$ be a block-diagonalization of $\vA$, and $L, R$ be arbitrary invertible matrices such that $\vD = L \vA R^{-1}$.
Then for any block-compatible row tuple subspace $T \leq \rowtuplespace(\vA)$, 
    \[ T = (T \cap \rowtuplespace(\vA_{1, \vD, L})) \oplus \dots \oplus (T \cap \rowtuplespace(\vA_{d, \vD, L})),\]
    and for any block-compatible row vector subspace $W \leq \rowvectorspace(\vA)$, 
    \[W = (W \cap \rowvectorspace(\vA_{1, \vD, L})) \oplus \dots \oplus (W \cap \rowvectorspace(\vA_{d, \vD, L})).\]
\end{fact}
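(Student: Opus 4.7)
My plan is to prove both claims by the same short argument: the ambient direct sum decomposition from Fact~\ref{fact:ibctuple_diagonalzation_most_basic} already decomposes every element, and block-compatibility says each summand is forced back into the subspace.

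For the row-tuple statement, I will start from Equation~(\ref{equ:ibctuple_diag_basic_01}) of Fact~\ref{fact:ibctuple_diagonalzation_most_basic}, which gives $\rowtuplespace(\vA) = \bigoplus_{i=1}^d \rowtuplespace(\vA_{i, \vD, L})$. For any $\va \in T \leq \rowtuplespace(\vA)$, the unique decomposition $\va = \sum_{i=1}^d \va_i$ with $\va_i \in \rowtuplespace(\vA_{i, \vD, L})$ is, by Definition~\ref{def:projection_matrix_tuple_diag}, exactly $\va_i = \proj_{\vD, L}(\va, i)$. Block-compatibility of $T$ (Definition~\ref{def:ibctuple_invariant_subspace}) says precisely that each $\proj_{\vD, L}(\va, i)$ lies in $T$, hence $\va_i \in T \cap \rowtuplespace(\vA_{i, \vD, L})$. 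This yields the nontrivial inclusion $T \subseteq \sum_{i=1}^d (T \cap \rowtuplespace(\vA_{i, \vD, L}))$; the reverse inclusion is immediate since each $T \cap \rowtuplespace(\vA_{i, \vD, L}) \subseteq T$. Because $\sum_{i=1}^d \rowtuplespace(\vA_{i, \vD, L})$ is already a direct sum in $\rowtuplespace(\vA)$, restricting to the subspaces $T \cap \rowtuplespace(\vA_{i, \vD, L})$ preserves directness, proving the first equality.

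For the row-vector statement, the argument is structurally identical: replace Equation~(\ref{equ:ibctuple_diag_basic_01}) with Equation~(\ref{equ:ibctuple_diag_basic_02}), and use the row-vector clause of Definition~\ref{def:ibctuple_invariant_subspace}, which guarantees that for any $v \in W$, each component $\rowvecproj_{\rowvectorspace(\vA_{1, \vD, L}), \dots, \rowvectorspace(\vA_{d, \vD, L})}(v, \rowvectorspace(\vA_{i, \vD, L}))$ lies in $W$, hence in $W \cap \rowvectorspace(\vA_{i, \vD, L})$.

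There is no real obstacle; this fact is essentially a bookkeeping consequence of unpacking the definition of block-compatibility against the ambient direct sum decomposition furnished by Fact~\ref{fact:ibctuple_diagonalzation_most_basic}. The only point meriting care is verifying that the component $\va_i$ arising from the ambient direct sum decomposition of $\rowtuplespace(\vA)$ literally coincides with the projection $\proj_{\vD, L}(\va, i)$ appearing in Definition~\ref{def:ibctuple_invariant_subspace}, and analogously in the row-vector case; both coincidences are built directly into Definition~\ref{def:projection_matrix_tuple_diag}.
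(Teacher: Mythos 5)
Your proof is correct and follows essentially the same route as the paper's: decompose $\va$ as $\sum_i \proj_{\vD, L}(\va, i)$, use block-compatibility to land each projection in $T \cap \rowtuplespace(\vA_{i, \vD, L})$, and appeal to the ambient direct sum for directness. Your version is a bit more explicit about matching the ambient-decomposition components with the projections from Definition~\ref{def:projection_matrix_tuple_diag} and about why the sum of intersections remains direct, but the argument is the same.
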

\begin{proof}
    Let $\va$ be an arbitrary row tuple in $T$, we have 
    \[\va = \sum_{i = 1}^d \proj_{\vD, L}(\va, i) \]
    by Definition~\ref{def:projection_matrix_tuple_diag}. Since $\proj_{\vD, L}(\va, i) \in T$ and $\proj_{\vD, L}(\va, i) \in \rowtuplespace(\vA_{i, \vD, L})$ for every $i \in [d]$,
    $\proj_{\vD, L}(\va, i) \in T \cap \rowtuplespace(\vA_{i, \vD, L})$ for any $i\in[d]$.
    Since $\rowtuplespace(\vA)$ is the direct sum of $\rowtuplespace(\vA_{1, \vD, L}), \dots, \rowtuplespace(\vA_{d, \vD, L})$, 
    $T$ is the direct sum of $T \cap \rowtuplespace(\vA_{i, \vD, L})$ for all the $i\in[d]$. 
    The proof for $W$ is similar. 
\end{proof}

In the following lemma, we provide several methods for constructing row tuple and row vector subspaces that are both characteristic and block-compatible. 
These methods serve as the basic operations for us to construct and refine the hierarchical row tuple decomposition throughout the rest of this paper.

\begin{fact}\label{fact:ibctuple_invariant_basic} For a matrix tuple $\vA \in \M(n\times m, \F_q)^\ell$, 
\begin{enumerate}
    \item Suppose $T_1$ and $T_2$ (or $W_1$ and $W_2$) are two characteristic block-compatible row tuple (or row vector) subspaces for $\vA$, then $T_1 \cap T_2$ and $\langle T_1 \cup T_2 \rangle$ (or $W_1 \cap W_2$ and $\langle W_1 \cup W_2 \rangle$) are characteristic block-compatible row tuple (or row vector) subspaces for $\vA$. 
    %\item Suppose $W_1$ and $W_2$ are two characteristic block-compatible row vector subspaces for $\vA$, then $W_1 \cap W_2$ and $\langle W_1 \cup W_2 \rangle$ are a characteristic block-compatible row vector subspaces for $\vA$. 
    \item $\rowtuplespace(\vA)$ (or $\rowvectorspace(\vA)$) and the row tuple space that contains only the zero row tuple (or zero row vector) are characteristic block-compatible row tuple (or row vector) subspaces.
    %\item $\rowvectorspace(\vA)$ and the row vector space that contains only the zero vector are characteristic block-compatible row vector subspaces.
    %\item For any characteristic block-compatible row tuple subspace $T$ and any $a_1, \dots, a_\ell \in \F_q$, \[\left\langle \left\{\sum_{i = 1}^\ell \left(a_i \cdot \va^{(i)}\right) : \va \in T\right\}\right\rangle\] is a characteristic block-compatible row vector subspace. 
    \item For a characteristic block-compatible row tuple subspace $T$ and an arbitrary row vector $u \in \F_q^\ell$, $W = \langle \{u \cdot \mtupletomatrix(\va) : \va \in T\}\rangle $ is a characteristic block-compatible row vector space.  
    \item For a characteristic block-compatible row tuple subspace $T$ and a characteristic block-compatible row vector subspace $W$, $\langle \{\va \in T : \va^{(i)} \in W\} \rangle$ is a characteristic block-compatible row tuple subspace for any $i \in [\ell]$, where $\va^{(i)}$ denotes the $i$-th row vector of $\va$. 
    \item For a characteristic block-compatible row tuple subspace $T$, a sequence of characteristic block-compatible row vector subspaces $W_1, \dots, W_w$  such that $\rowvectorspace(W_1 \cup \dots \cup W_w) = W_1 \oplus \dots \oplus W_w$, for any $i \in[\ell]$ and $j \in [w]$, 
    \[W' = \langle \{\rowvecproj_{W_1, \dots, W_w}(\va^{(i)}, W_j) : \va \in T, \va^{(i)} \in \rowvectorspace(W_1 \cup \dots \cup W_w)\}\rangle\]
    is a characteristic block-compatible row vector subspace, and \[T' = \langle \{ \va \in T : \va^{(i)} \in \rowvectorspace(W_1 \cup \dots \cup W_w), \rowvecproj_{W_1, \dots, W_w}(\va^{(i)}, W_j) = 0\}\rangle \]
    is a characteristic block-compatible row tuple subspace.

    \item If $\vA \in \GL(n, \F_q)^\ell$, then for any matrix $A$ that is a polynomial of $A_1^{-1}A_2, \dots, A_1^{-1}A_\ell$, $\rowvectorspace(A)$ is a characteristic block-compatible row vector space 
\end{enumerate}
\end{fact}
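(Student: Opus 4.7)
The plan is to verify each part by checking the two defining properties against the constructions, using two recurring templates. For the characteristic property, I fix $(L, R)$ with $L\vA R^{-1} = \vA$ and show that each generator of the claimed subspace is preserved under right multiplication by $R^{-1}$. For the block-compatible property, I fix a block-diagonalization $\vD = \diag(\vD_1, \dots, \vD_d) = L\vA R^{-1}$ and $k \in [d]$, and show that each generator's projection onto $\rowtuplespace(\vA_{k, \vD, L})$ or $\rowvectorspace(\vA_{k, \vD, L})$ lies back in the subspace. Linearity of the generating sets then extends the property from generators to the whole subspace.

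Parts~1 and~2 are routine. For Part~1, both intersection and linear span commute with the linear operation of right multiplication by $R^{-1}$, and block-compatibility propagates through sums by writing $\va = \va_1 + \va_2$ and projecting each piece separately, while for intersections it follows since a vector in $T_1 \cap T_2$ projects to an element of each $T_i$. For Part~2, the characteristic property follows because $\va = v\vA$ yields $\va R^{-1} = v L^{-1} \vA \in \rowtuplespace(\vA)$, while block-compatibility is immediate from the direct-sum decomposition in Fact~\ref{fact:ibctuple_diagonalzation_most_basic}. For Part~3, the key identities are
\begin{equation*}
(u \cdot \mtupletomatrix(\va)) R^{-1} = u \cdot \mtupletomatrix(\va R^{-1})
\quad \text{and} \quad
\rowvecproj(u \cdot \mtupletomatrix(\va), \rowvectorspace(\vA_{k, \vD, L})) = u \cdot \mtupletomatrix(\proj_{\vD, L}(\va, k)),
\end{equation*}
so both properties transfer directly from $T$ to $W$. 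Part~4 follows by combining the hypotheses on $T$ and $W$ coordinate-by-coordinate.

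Part~5 needs one additional observation: since each $W_j$ is block-compatible, Fact~\ref{fact:ibctuple_invariant_fund} gives $W_j = \bigoplus_k (W_j \cap \rowvectorspace(\vA_{k, \vD, L}))$, so the projection $P_j$ onto $W_j$ within $W_1 \oplus \dots \oplus W_w$ and the projection $Q_k$ onto $\rowvectorspace(\vA_{k, \vD, L})$ within $\rowvectorspace(\vA)$ commute on $\rowvectorspace(W_1 \cup \dots \cup W_w)$. Combined with the identity $Q_k(\va^{(i)}) = (\proj_{\vD, L}(\va, k))^{(i)}$, which is obtained by expanding $\va = u\vA = u L^{-1} \vD R$ along the block structure of $\vD$, one verifies directly that $W'$ and $T'$ are stable under the block projections. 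The characteristic statement is analogous: each $W_j R^{-1} = W_j$ ensures $P_j$ commutes with right multiplication by $R^{-1}$ on $\rowvectorspace(W_1 \cup \dots \cup W_w)$.

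Part~6 is the most substantial and is where I expect the main work. The crucial computation is that if $L \vA R^{-1} = \vA$, then
\begin{equation*}
A_1^{-1} A_j = (L A_1 R^{-1})^{-1}(L A_j R^{-1}) = R A_1^{-1} A_j R^{-1} \quad \text{for every } j \in [\ell],
\end{equation*}
so any polynomial $A$ in $A_1^{-1} A_2, \dots, A_1^{-1} A_\ell$ satisfies $A = R A R^{-1}$, i.e.\ $A R^{-1} = R^{-1} A$. This immediately yields $\rowvectorspace(A) R^{-1} = \{vA R^{-1}\} = \{v R^{-1} A\} = \rowvectorspace(A)$, giving the characteristic property. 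Applying the same identity to a block-diagonalization $\vD = L \vA R^{-1}$ shows that $R A R^{-1}$ is block-diagonal with respect to the standard column decomposition $\F_q^m = U_1 \oplus \dots \oplus U_d$ coming from the block shape of $\vD$; writing $\rowvectorspace(A) = \rowvectorspace(R A R^{-1}) R$ and using the identity $U_k R = \rowvectorspace(\vA_{k, \vD, L})$ then expresses $\rowvectorspace(A)$ as the direct sum of its intersections with the $\rowvectorspace(\vA_{k, \vD, L})$, which is exactly block-compatibility. The main technical care will be keeping track of the two different roles played by $R$: as an automorphism witness in the characteristic property and as the column change-of-basis underlying the block decomposition.
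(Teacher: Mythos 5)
Your proposal is correct and follows essentially the same route as the paper's proof: the characteristic property is reduced in each part to commuting the generating operation with right multiplication by $R^{-1}$, block-compatibility is reduced to stability under the block projections $\proj_{\vD, L}(\cdot, k)$ (with the identity $\rowvecproj(u\cdot\mtupletomatrix(\va), \rowvectorspace(\vA_{k,\vD,L})) = u\cdot\mtupletomatrix(\proj_{\vD,L}(\va,k))$ doing the work in Parts~3--5), and Part~6 rests on the computation $A_1^{-1}A_j = R A_1^{-1}A_j R^{-1}$, which shows $RAR^{-1}$ is block-diagonal and hence $\rowvectorspace(A) = \rowvectorspace(RAR^{-1})R$ splits along the blocks. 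The only differences from the paper are cosmetic: you state the commuting-projections observation in Part~5 as a lemma up front rather than deriving it inline, and in Part~6 you formulate block-compatibility via $\rowvectorspace(A)=\bigoplus_k(\rowvectorspace(A)\cap\rowvectorspace(\vA_{k,\vD,L}))$ and the converse direction of Fact~\ref{fact:ibctuple_invariant_fund}, where the paper constructs the summands $S_i$ explicitly; both are sound.
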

\begin{proof}
Thorough this proof, let $\vD = \diag(\vD_1, \dots, \vD_d)$ be an arbitrary block-diagonalization of $\vA$, and $L, R$ be arbitrary invertible matrices such that $\vD = L \vA R^{-1}$. Let $L', R'$ be arbitrary invertible matrices such that $\vA = L' \vA (R')^{-1}$. 

For the first property, for any $\va \in T_1 \cap T_2$, by the definition of characteristic 
property, $\va (R')^{-1}$ and $\va R'$ are also in $T_1 \cap T_2$.  Hence, $T_1 \cap T_2$ is also characteristic. 
By the block-compatible property for $T_1$ and $T_2$, $\proj_{\vD, L}(\va, i)$ is in $T_1$ and also $T_2$ for any $i \in [d]$. Hence, $T_1 \cap T_2$ is block-compatible. 

For $\langle T_1 \cup T_2\rangle$, 
$\va_1 + \va_2$ is in $\langle T_1 \cup T_2\rangle$ for any $\va_1 \in T_1$ and $\va_2 \in T_2$.
On the other hand, 
any $\va \in T_1 \cup T_2$ has $\va_1 \in T_1$ and $\va_2 \in T_2$ such that $\va = \va_1 + \va_2$. 
Hence, we have $\va (R')^{-1} = \va_1 (R')^{-1} + \va_2 (R')^{-1} \in \langle T_1 \cup T_2\rangle$ and  $\va R' = \va_1 R' + \va_2 R' \in \langle T_1 \cup T_2\rangle$. Thus, $\langle T_1 \cup T_2\rangle$ is characteristic. 
For the block-compatible property, since $\proj_{\vD, L}(\va_1, i) \in T_1$ and $\proj_{\vD, L}(\va_2, i) \in T_2$, we have 
\begin{equation}\label{equ:ibctuple_diag_basic_04}\va = \sum_{i = 1}^d \left(\proj_{\vD, L}(\va_1, i) + \proj_{\vD, L}(\va_2, i)\right).\end{equation}
On the other hand, by Fact~\ref{fact:ibctuple_invariant_fund}, there are unique $\vc_i \in \rowtuplespace(\vA_{i, \vD, L})$ for each $i\in[d]$ such that $\va = \sum_{i = 1}^d \vc_i$, and $\proj_{\vD, L}(\va, i) = \vc_i$. 
By Equation (\ref{equ:ibctuple_diag_basic_04}), $\proj_{\vD, L}(\va, i) = \proj_{\vD, L}(\va_1, i) + \proj_{\vD, L}(\va_2, i)$ for any $i \in [d]$. Thus, $\langle T_1 \cup T_2\rangle$ is block-compatible. 
The case of row vector subspaces is similar. 

The second property is obtained by Definition~\ref{def:characteristic_subspace}, Definition~\ref{def:ibctuple_invariant_subspace}, and
Fact~\ref{fact:ibctuple_invariant_fund}.  

    For the third property, let $v$ be an arbitrary row vector in $W$. By the linearity, there exists a row tuple $\va_v \in T$ such that $u \cdot \mtupletomatrix(\va_v) = v$. 
    Since both $\va_v (R')^{-1}$ and $\va_v R'$ are in $T$,
    we have 
    \[v (R')^{-1} = u \cdot \mtupletomatrix(\va_v) (R')^{-1} = u \cdot  \mtupletomatrix(\va_v (R')^{-1}) \in W \]
    and     
    \[v R' = u \cdot \mtupletomatrix(\va_v) R' = u \cdot  \mtupletomatrix(\va_v R') \in W.\]
    Hence, $W$ is characteristic.

    By the definition of block-compatible row tuple subspace, $\va_v = \sum_{i=1}^d \proj_{\vD, L}(\va_v, i)$. 
    Since $\proj_{\vD, L}(\va_v, i)$  for every $i \in [d]$ is in $T$, $u \cdot \mtupletomatrix(\proj_{\vD, L}(\va, i))$ is in $W$ for every $i \in [d]$.
    Note that \[\rowvecproj_{\rowvectorspace(\vA_{1, \vD, L}), \dots, \rowvectorspace(\vA_{d, \vD, L})}(v, \rowvectorspace(\vA_{i, \vD, L})) = u \cdot \mtupletomatrix(\proj_{\vD, L}(\va, i)) \ in W\]
    for every $i \in [d]$. 
    Hence, $W$ is block-compatible, and thus 
    the third property holds. 
    
\begin{comment}
For the third property, let $v$ be an arbitrary vector in \[S = \left\langle \left\{\sum_{i = 1}^\ell \left(a_i \cdot \va^{(i)}\right) : \va \in T\right\}\right\rangle,\] and $\va$ be an arbitrary row tuple in $T$ such that \[\sum_{i = 1}^\ell \left(a_i \cdot \va^{(i)}\right) = v.\] 
Let $\va' = \va (R')^{-1}$ and $\va'' = \va R'$. By the characteristic property of $T$,  $\va'$ and $\va''$ are in $T$. 
Hence, we have 
\[v (R')^{-1} = \sum_{i = 1}^\ell \left(a_i \cdot \va^{(i)} (R')^{-1}\right) = \sum_{i = 1}^\ell \left(a_i \cdot (\va')^{(i)}\right) \in S,\]
and \[v R' = \sum_{i = 1}^\ell \left(a_i \cdot \va^{(i)} R'\right) = \sum_{i = 1}^\ell \left(a_i \cdot (\va')^{(i)}\right) \in S,\]
and thus $T$ is characteristic. 
On the other hand, 
by the block-compatible property for $T$, $\proj_{\vD, L} (\va, j)$ is in $T$ for any $j \in [d]$, and thus 
\[\sum_{i = 1}^\ell \left(a_i \cdot \left(\proj_{\vD, L} (\va, j) \right)^{(i)}\right)\] is in $S$. Hence, $S$ is block-compatible. 
\end{comment}

The fourth property is obtained in a way similar to the third property. 

    For the fifth property, we observe that \[T'' := \langle \{\va \in T : \va^{(i)} \in \rowvectorspace(W_1 \cup \dots \cup W_w)\}\rangle\] is a characteristic block-compatible row tuple subspace by the second and the sixth property. 
    Let $\va$ be an arbitrary vector in $T''$, and $a_k = \rowvecproj_{W_1, \dots, W_w}(\va^{(i)}, W_k)$ for any $k \in [w]$. 
    Since  
    \[\left(\va (R')^{-1}\right)^{(i)} = \va^{(i)} (R')^{-1} = \sum_{k = 1}^w \left(a_k (R')^{-1}\right),\]
    %and     
    %\[\left(\va R'\right)^{(i)} = \va^{(i)} R' = \sum_{k = 1}^w \left(a_k R'\right),\]
    by the fact that $\rowvectorspace(W_1 \cup \dots \cup W_w) = W_1 \oplus \dots \oplus W_w$, and $W_1, \dots, W_w$ are characteristic, 
    we have 
    \[\rowvecproj_{W_1, \dots, W_w}((\va (R')^{-1})^{(i)}, W_j) = a_j (R')^{-1} = \rowvecproj_{W_1, \dots, W_w}(\va^{(i)}, W_j) (R')^{-1}.\]
    %and 
    %\[\rowvecproj_{W_1, \dots, W_w}((\va R')^{(i)}, W_j) = a_j R' = \rowvecproj_{W_1, \dots, W_w}(\va^{(i)}, W_j) R'.\]
    Hence, 
    \begin{align*}W' = & \langle \{\rowvecproj_{W_1, \dots, W_w}(\va^{(i)}, W_j) : \va \in T''\rangle \\ = & \langle \{\rowvecproj_{W_1, \dots, W_w}((\va (R')^{-1})^{(i)}, W_j) : \va \in T''\rangle  \\ = & W' R^{-1}\end{align*}
    and thus $W'$ is characteristic. 

    For the block-compatible property of $W'$, 
    let $v$ be an arbitrary vector in $W'$ and $\va_v$ be a row tuple in $T''$ such that $\rowvecproj_{W_1, \dots, W_w}(\va_v^{(i)}, W_j) = v$. 
    By the block-compatible property of $T''$, 
    $\proj_{\vD, L}(\va_v, d')$ is in $T''$ for any $d' \in [d]$. 
    By the block-compatible property of $W_j$ and the definition of $W'$, we have
    \[\rowvecproj_{W_1, \dots, W_w}((\proj_{\vD, L}(\va_v, d'))^{(i)}, W_j) \in W'.\]
    On the other hand, we have 
    \begin{align*}& \rowvectorspace_{W_1, \dots, W_w}(\rowvecproj_{\rowvectorspace(\vA_{1, \vD, L}), \dots, \vA_{d, \vD, L}}(v, \rowvectorspace(\vA_{d', \vD, L})), W_j) \\ =&  \rowvecproj_{W_1, \dots, W_w}((\proj_{\vD, L}(\va_v, d'))^{(i)}, W_j),\end{align*}
    and thus $W'$ is block-compatible. 
    The characteristic and block-compatible property for $T'$ is similar. 

    %Since $W_1, \dots, W_w$ are characteristic block-compatible row vector subspaces,  we have \[W_{w'} = (W_{w'} \cap \rowvectorspace(\vA_{1, \vD, L})) \oplus  \dots \oplus (W_{w'} \cap \rowvectorspace(\vA_{d, \vD, L}))\] for any $w' \in [w]$.  Since $\rowvectorspace((T'')^{(i)}) \leq W_1\oplus \dots \oplus W_w$ by the definition of $T''$,  $\proj_{\vD, L}(\va, k)$ for any $\va \in T''$ and any $k \in [d]$ is also a row tuple in $T''$.  Hence, $\rowvecproj_{W_1, \dots, W_w}((\proj_{\vD, L}(\va, k))^{(i)}, W_j)$ is in $W_j \cap \rowvectorspace(\vA_{k, \vD, L})$ for any $k \in [d]$.  Since $\rowvectorspace(W_1 \cup \dots \cup W_w) = W_1 \oplus \dots \oplus W_w$, if $\proj_{W_1, \dots, W_w}(\va^{(i)}, W_j)$ is a zero vector, then $\rowvecproj_{W_1, \dots, W_w}((\proj_{\vD, L}(\va, k))^{(i)}, W_j)$ is a zero vector. Hence, $T'$ is a block-compatible row tuple subspace. 

    Now we prove the sixth property. Suppose $A = f(A_1^{-1}A_2, \dots, A_1^{-1}A_\ell)$ for a multivariate polynomial $f$. We first prove $\rowvectorspace(A)$ is characteristic. Recall that $L'$ and $R'$ are  two arbitrary invertible matrices such that $\vA = L' \vA (R')^{-1}$. 
    Thus, $A_i = L' A_i (R')^{-1}$ for any $i \in [\ell]$, and consequently, \[A_1^{-1} A_i = R' A_1^{-1} (L')^{-1} L' A_i (R')^{-1} = R' A_1^{-1} A_i (R')^{-1}\] for any $i \in \{2, \dots, \ell\}$. Hence, for any $C$ that is a monomial of $A_1^{-1}A_2, \dots, A_1^{-1}A_\ell$, $R' C (R')^{-1} = C$, and consequently, $R' A (R')^{-1} = A$. 
    Since $R'$ is an invertible matrix tuple, we have  \[\rowvectorspace(A) = \rowvectorspace(R' A(R')^{-1}) = \rowvectorspace(A R^{-1}).\] Thus, the $\rowvectorspace(A)$ is characteristic. 

    We prove $\rowvectorspace(A)$ is block-compatible. Recall that $L$ and $R$ are arbitrary invertible matrices such that $\vD = (D_1, \dots, D_\ell) = \diag(\vD_1, \dots, \vD_d) = L \vA R^{-1}$. % be a block-diagonalization of $\vA$. 
    Suppose $\vD_i = (D_{i, 1}, \dots, D_{i, \ell})$. 
    Then $D_j = \diag(D_{1, j}, \dots, D_{d, j})$. 
    Since $A_1, \dots, A_\ell$ are all full rank, $\vD_1, \dots, \vD_d$ are all square blocks, and $D_{i, j}$ is invertible for any $i \in [d]$ and $j \in [\ell]$. 
    Suppose $\vD_i$ is an $n_i \times n_i$ block. Let $u_0 =0$ and $u_i = \sum_{k = 1}^{i} n_i$ for each $i \in [d]$.
    For any $A_1^{-1} A_j$, we have 
    \[D_1^{-1} D_j = \diag(D_{1, 1}^{-1}D_{1, j}, \dots, D_{d, 1}^{-1}D_{d, j}) = R A_1^{-1} L^{-1} L A_j R^{-1} = R A_1^{-1}  A_j R^{-1}.\]
    Hence, \begin{equation}\label{equ:polynomial_block_compatible}A = f(A_1^{-1}A_2, \dots, A_1^{-1}A_\ell) = R f(D_1^{-1}D_2, \dots, D_1^{-1}D_\ell) R^{-1}.\end{equation}
    Let $E = f(D_1^{-1}D_2, \dots, D_1^{-1} D_\ell)$. Since $f$ is a polynomial function, $E$ is also a block-diagonal matrix that can be represented as $\diag(E_1, \dots, E_d)$ for  $E_i$ of dimension $n_i \times n_i$ for each $i\in [d]$. 

    Let $S_i = \langle e_{u_{i-1} + 1} E R^{-1}, \dots, e_{u_i} E R^{-1}\rangle$ for any $i \in [d]$. 
    By the block-diagonal structure of $E$,
    $S_i$ is a subspace of $\langle e_{u_{i-1} + 1} R^{-1}, \dots, e_{u_i} R^{-1}\rangle$. 
    By Equation (\ref{equ:polynomial_block_compatible}), we have 
    \[\rowvectorspace(A) = \rowvectorspace(R A) = S_1 \oplus \dots \oplus S_d.\]
    On the other hand, by the definition of $\vA_{i, \vD, L}$ for any $i \in [d]$, we have 
    \[\rowvectorspace(\vA_{i, \vD, L}) = \langle \{e_{u_{i - 1} + 1} R^{-1}, \dots, e_{u_i} R^{-1}\}\rangle.\] Hence, $S_i$ is a subspace of $\rowvectorspace(\vA_{i, \vD, L})$. 
    Thus, $\rowvectorspace(A)$ is a direct sum of row vector subspaces of $\rowvectorspace(\vA_{i, \vD, L})$ for each $i\in [d]$, and then $\rowvectorspace(A)$ is block-compatible.
\end{proof} 

Based on the characteristic and block-compatible properties, we define the hierarchical row tuple decomposition as follows. 

\begin{definition}[Hierarchical Row Tuple Decomposition]\label{def:row_tuple_decomposition}
    For a matrix tuple $\vA \in \M(n \times m, \F_q)^\ell$, 
    a sequence of characteristic block-compatible row tuple subspaces $T_1, \dots, T_\zeta \leq \rowtuplespace(\vA)$ with parameters $h_0, \dots, h_\beta$ for some integer $\beta \geq 0$ is a \emph{depth-$\beta$ hierarchical row tuple decomposition} of $\vA$ if the following conditions hold with 
    $T_{\zeta + 1}:=\vecz$, %\ynote{Was: be the linear space that contains only the zero row tuple}, 
    $h_{\beta+1}: = \zeta + 1$, 
     $W_i:=\langle \rowvectorspace(T_{h_i}) \cup \dots \cup \rowvectorspace(T_\zeta)\rangle$ for $0 \leq i \leq \beta$, and $W_{\beta + 1} := \vecz$.  %, and  $\langle T_{h_i} \cup \dots \cup T_\zeta\rangle$ by $U_i$.\ynote{Where is $U_i$ used in the following?}
    % for every $1 \leq i \leq \beta$.
    %Let $R_i$ be the linear span of all the coordinates of all the row tuples in the space spanned by $T_{h_i}, \dots, T_{\ell}$. Let $U_i$ be the linear space of all the linear combinations of row tuples in $T_{h_i}, \dots, T_\ell$.)
    \begin{enumerate}
        %\item $T_0$ contains only the zero row tuple vector.
        \item $1 = h_0 < h_1 < h_2 < \dots < h_\beta \leq \zeta$.  %Let $U_{i} = span(T_{h_i}, \dots, T_{\ell})$ for any $1 \leq i \leq \beta$ and $U_{\ell + 1}$ be the linear space that contains only the zero row tuple. Let $W_i$ be the linear span of all the coordinates of all the row tuples in $U_i$ for every $1 \leq i \leq \beta + 1$. 
        \item $\rowtuplespace(\vA) = \langle T_1 \cup  \dots \cup T_\zeta \rangle$.
        \item For any $0 \leq i \leq \beta$, $\rowvectorspace(T_{h_i}) / W_{i + 1} = W_i  / W_{i+1}$, and \[\langle \{\va \in \rowtuplespace(\vA) :\forall j\in[\ell],  \va^{(j)} \in W_i\}\rangle = \langle T_{h_i} \cup \dots \cup T_\zeta \rangle.\]
        \item For any $0 \leq i \leq \beta$, 
        \[\langle \{v \vA : v \in \F^n \text{ and }  \forall j \in [\ell],  v A_{j}  \in   W_{i} \} \rangle / W_{i+1} = (T_{h_i} / W_{i+1}) \oplus \dots \oplus (T_{h_{i + 1} - 1} / W_{i+1}).\]
        %\item For each $1 \leq \ell'< \ell'' \leq \ell$, if $\dim(T_{\ell'} \cap T_{\ell''}) > 0$ and there exists an $i$ such that $\ell' \leq h_i < \ell''$, then $T_{\ell'}$ is a subspace of $T_{\ell''}$. 
        %\item For any $h_{i} <  \ell'< \ell'' \leq h_{i+1}$, if $T_{\ell'} \cap T_{\ell''} \leq U_{h_{i}}$, where $U_{k} = span(T_1, \dots, T_{k})$ for any $1 \leq k \leq \ell$. 
        %\item For any $h_{i} <  \ell' \leq h_{i+1}$, $\dim(T_{\ell'}\cap U_{\ell' - 1}) = \dim(T_{\ell'}\cap U_{h_i})$.
        %s\item  $\sum_{\ell' = 1}^{\ell} (\dim(T_{\ell'}) - \dim(T_{\ell'} \cap U_{\ell' - 1})) = n$. 

    \end{enumerate}
    In addition, for each $1 \leq i \leq \zeta$, we use $K_i$ to denote $T_i \cap \langle T_{h_\gamma} \cup \dots \cup T_{\zeta} \rangle$, where $\gamma$ is the integer such that $h_{\gamma - 1} \leq i < h_\gamma$.
 
\end{definition}

\subsection{Properties of \IBCtuplesnospace}

In this subsection, we define four properties of \IBCtuples and prove some basic observations regarding these properties. 

In the end, we will show that these four properties hold for all the \IBCtuples for a matrix tuple, and thus Theorem~\ref{thm:main_space_structure} holds. 
However, such a proof is not straightforward. We will prove these properties for the matrix tuple by alternatively proving them for some induced submatrix tuples of the matrix tuple and investigating the implications of these properties for the induced submatrix tuples to the matrix tuple.

\begin{definition}\label{def:four_prop}
    
For an indecomposable-block-corresponding row-submatrix tuple $\vB\in\M(\sigma\times m, \F)^\ell$ of $\vA$, we define properties for $\vB$:
\begin{enumerate}
    \item (Space property) Let $\ibcspace_\vB$ be the linear space spanned by all the \IBCtuples that are right-equivalent to $\vB$, i.e., \[\ibcspace_\vB =\left\langle \left\{\vB' \mid \vB'\text{ is an \IBCtuple of }\vA\text{, and }\exists Q\in\GL(m, \F), \vB'=\vB Q\right\}\right\rangle.\] $\vB$ satisfies the space property if there is a subspace $\ibcspacekernel_\vB \leq \ibcspace_\vB$ such that
    every element of $\ibcspacekernel_\vB$ is not an \IBCtuple of $\vA$ right-equivalent to $\vB$, and every element in $\ibcspace_\vB$ but not in $\ibcspacekernel_\vB$ is an \IBCtuple of $\vA$ right-equivalent to $\vB$.
    If an \IBCtuple $\vB$ of $\vA$ satisfies the space property, then we call $\ibcspace_{\vB}$ the \IBCtuple space of $\vB$, and $\ibcspacekernel_{\vB}$ the \IBCtuple space kernel of $\vB$. 
    \item (Row tuple space property) $\vB$ satisfies the row tuple space property with respect to a depth-$\beta$ hierarchical row tuple decomposition $T_1, \dots, T_\zeta$ with parameters $h_0, \dots, h_\beta$ if $\vB$ satisfies the space property, and 
    there are row tuple space parameters $1 \leq t_{\vB, 1}, \dots, t_{\vB, \sigma} \leq \zeta$ such that 
    for every $1 \leq i \leq \sigma$, 
    $\langle \{e_i \vB' \mid \vB' \in \ibcspace_{\vB}\}\rangle = T_{t_{\vB, i}}$, and for every $\vB' \in \ibcspace_\vB$, $\vB' \in \ibcspacekernel_\vB$ if and only if $e_i \vB' \in K_{t_{\vB, i}}$. 
    %\item (Linear independent property) Suppose $1 \leq i_1, \dots, i_{k} \leq \sigma$ be all the integers such that $B_{i_j} = B_{i_1}$ for any $1 \leq j \leq k$. Then for any $\vB \in H$, a linear combination of $i_1, \dots, i_k$ row tuples of $\vB$ is in $H_0$ iff all the coefficients are zero. 
    \item (Dimension property) $\vB$ satisfies the dimension property if for any \IBCtuple $\vB'$ that is right-equivalent to $\vB$, $\dim(\rowvectorspace(\vB')) = \dim(\rowvectorspace(\vB))$. Furthermore, for two \IBCtuples $\vB'$ and $\vB''$ right-equivalent to $\vB$, $\vB'$ and $\vB''$ are strongly correlated iff $\dim(\rowvectorspace(\vB' - \vB'')) < \dim(\rowvectorspace(\vB'))$. 
    \item (Block-compatible property)
    For any block-diagonalization $\vD = \diag(\vD_1, \dots, \vD_d)$ of $\vA$ with $\vD = L \vA R^{-1}$, 
    the projection of any $\vB' \in \ibcspace_\vB$ on any block of $\vD$ is in $\ibcspace_\vB$,
    and the projection of $\vB' \in \ibcspacekernel_\vB$ on any block of $\vD$ is in $\ibcspacekernel_\vB$, i.e., for any $1 \leq i\leq d$, $\proj_{\vD, L}(\vB', i) \in \ibcspace_\vB$ for any $\vB' \in \ibcspace_\vB$, and $\proj_{\vD, L}(\vB'', i) \in \ibcspacekernel_\vB$ for any $\vB'' \in \ibcspacekernel_\vB$.
\end{enumerate}
\end{definition}

In the rest of this section, we prove a few useful observations regarding the four properties of the \IBCtuples of a matrix tuple. 
First, we provide a subroutine to extract the row vectors of an \IBCtuplenospace. 

\begin{framed}
\noindent \textbf{Row Vector Extraction Algorithm}

\noindent \textbf{Input:} A matrix tuple $\vB \in \M(n\times m, \F_q)^\ell$. 

\noindent \textbf{Output:} Two matrices $Y_\vB$ and $Z_\vB$. 

\noindent \textbf{Algorithm:}

\begin{enumerate}

    \item Let $s = 0$.
    \item For every $1 \leq i \leq \ell\cdot n$, if $e_i \cdot \mtupletomatrix(\vB)$ is not in $\langle \{x_1, \dots, x_s \}\rangle$, then let $s = s + 1$,  $x_{s} = e_i \cdot \mtupletomatrix(\vB)$, and $a_s = i$. 
    %\\
    %Note that after completing this step, we obtain $\rk(\mtupletomatrix(\vA))$ many $x_{\mu_\vA, i}$ and $y_{\mu_\vA, i}$. 
    \item 
    Let $Y_\vA$ be a $s \times (\ell \cdot  n)$ matrix such that $i$-th row of $Y_\vB$ is $e_{a_i}$. 
    Let $Z_\vB$ be a $(\ell \cdot n) \times s$ matrix such that $Z_\vB Y_\vB \mtupletomatrix(\vB) = \mtupletomatrix(\vB)$. 
\end{enumerate}
\vspace{-.4cm}
\end{framed}

\begin{fact}\label{fact:ibctuple_char_matrix}
For a matrix $\vB \in \M(\sigma \times m, \F_q)^\ell$ for a matrix tuple $\vA$, we have the following observations for the Row Vector Extraction Algorithm:
\begin{enumerate}
    \item The running time of the algorithm is $\poly(\sigma, m, \ell, \log q)$. 
    \item $(I - Z_\vB Y_\vB) \mtupletomatrix(\vB)$ is a zero matrix.
    \item  Let $\vB'$ be a matrix tuple such that there is an invertible matrix $R$ such that $\vB' = \vB R^{-1}$. The following properties hold.
    \begin{enumerate}
        \item $Y_{\vB} = Y_{\vB'}$ and $Z_{\vB} = Z_{\vB'}$.  
        \item If $\vB$ is an \IBCtuple satisfying the dimension property, then $\vB$ and $\vB'$ are strongly correlated iff $Y_{\vB}(\mtupletomatrix(\vB) - \mtupletomatrix(\vB'))$ is not full row rank. 
        \item $(I - Z_\vB Y_\vB) \vB'$ is a zero matrix. 
    \end{enumerate}
\end{enumerate}

\end{fact}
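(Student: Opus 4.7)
The plan is to prove the five sub-claims in the order (1), (2), (3a), (3c), (3b), since each later claim reuses the previous ones. The running-time claim is routine: there are $\ell\sigma$ iterations, each performing a linear-dependence check of a vector in $\F_q^m$ against at most $\ell\sigma$ already-selected vectors, which costs $\poly(\sigma,m,\ell,\log q)$ via Gaussian elimination; the final step of computing $Z_\vB$ is an $(\ell\sigma)\times s$ linear system over $\F_q$ and is handled the same way. Claim (2) is immediate from the defining equation $Z_\vB Y_\vB\mtupletomatrix(\vB)=\mtupletomatrix(\vB)$. Before I continue, I first record a uniqueness observation that will be used repeatedly: the rows $e_{a_1}\mtupletomatrix(\vB),\ldots,e_{a_s}\mtupletomatrix(\vB)$ are linearly independent by construction, so $Y_\vB\mtupletomatrix(\vB)$ has full row rank $s$; since its row span is $\rowvectorspace(\vB)$ and contains every row of $\mtupletomatrix(\vB)$, the matrix $Z_\vB$ expressing these rows as linear combinations is uniquely determined.

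For (3a), observe that $\mtupletomatrix(\vB R^{-1})=\mtupletomatrix(\vB)\,R^{-1}$ and that right multiplication by the invertible $R^{-1}$ is a linear isomorphism of $\F_q^m$, hence preserves linear dependence relations among row vectors. Inductively on the iteration index $i$, the test ``$e_i\mtupletomatrix(\vB)\in\langle x_1,\dots,x_s\rangle$'' yields the same outcome for $\vB$ and $\vB'$, so the sequence of selected indices $a_1,\ldots,a_s$ is identical, proving $Y_\vB=Y_{\vB'}$. For $Z_\vB=Z_{\vB'}$: $Z_{\vB'}$ is the unique matrix with $Z_{\vB'}Y_\vB\mtupletomatrix(\vB)R^{-1}=\mtupletomatrix(\vB)R^{-1}$; cancelling $R^{-1}$ on the right gives $Z_{\vB'}Y_\vB\mtupletomatrix(\vB)=\mtupletomatrix(\vB)$, and uniqueness forces $Z_{\vB'}=Z_\vB$. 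Claim (3c) is then immediate: $(I-Z_\vB Y_\vB)\mtupletomatrix(\vB')=(I-Z_{\vB'}Y_{\vB'})\mtupletomatrix(\vB')=0$ by (3a) and (2) applied to $\vB'$.

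The main claim is (3b). Combining (2) applied to $\vB$ with (3c) gives
\[
(I-Z_\vB Y_\vB)\bigl(\mtupletomatrix(\vB)-\mtupletomatrix(\vB')\bigr)=0,
\]
so $\mtupletomatrix(\vB)-\mtupletomatrix(\vB')=Z_\vB\cdot Y_\vB\bigl(\mtupletomatrix(\vB)-\mtupletomatrix(\vB')\bigr)$. Every row of $\mtupletomatrix(\vB)-\mtupletomatrix(\vB')$ is therefore a linear combination of the $s$ rows of $Y_\vB(\mtupletomatrix(\vB)-\mtupletomatrix(\vB'))$; the reverse containment is trivial since $Y_\vB$ simply selects rows. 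Hence
\[
\rank\bigl(Y_\vB(\mtupletomatrix(\vB)-\mtupletomatrix(\vB'))\bigr)=\dim\rowvectorspace(\vB-\vB').
\]
Since $Y_\vB(\mtupletomatrix(\vB)-\mtupletomatrix(\vB'))$ has exactly $s=\dim\rowvectorspace(\vB)$ rows, it fails to have full row rank iff $\dim\rowvectorspace(\vB-\vB')<\dim\rowvectorspace(\vB)$. By the dimension property of $\vB$ (noting that $\vB'=\vB R^{-1}$ is right-equivalent to $\vB$ and hence has the same row-vector-space dimension), this is precisely the condition that $\vB$ and $\vB'$ are strongly correlated.

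The main obstacle is step (3b): one must notice that although $\vB-\vB'$ need not be an \IBCtuple and the algorithm was never run on it, the identity $(I-Z_\vB Y_\vB)\mtupletomatrix=0$ for both $\vB$ and $\vB'$ forces the row space of $\mtupletomatrix(\vB-\vB')$ to be generated by the same $s$ selected rows. Once this is in hand, the rank characterization of strong correlation follows by a direct appeal to the dimension property. The remaining work is bookkeeping and the uniqueness of $Z_\vB$ observed above.
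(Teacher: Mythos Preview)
Your proof is correct and follows essentially the same approach as the paper's. Your treatment of (3b) is in fact more explicit than the paper's: the paper asserts that the row span of $\mtupletomatrix(\vB-\vB')$ equals that of $Y_\vB\mtupletomatrix(\vB-\vB')$ without spelling out why $Y_\vB$ (as opposed to $Y_{\vB-\vB'}$) suffices, whereas you derive this cleanly from $(I-Z_\vB Y_\vB)\mtupletomatrix(\vB-\vB')=0$.
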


\begin{proof}
The running time of the algorithm is obtained by the definition of the algorithm. The second property is obtained by the definition of $Y_\vB$ and $Z_\vB$. 

    For the property 3(a), %since $\vB$ and $\vB'$ are right-equivalent, there is an invertible $R$ such that $\vB = \vB' (R)^{-1}$. 
    by the definition of $\mtupletomatrix(\vB)$ and $\mtupletomatrix(\vB')$, we have $\mtupletomatrix(\vB) = \mtupletomatrix(\vB') R^{-1}$. 
    Hence, for any $v \in \F^{\ell\cdot n}$, $v \cdot \mtupletomatrix(\vB) = v \cdot \mtupletomatrix(\vB') R^{-1}$. 
    
    Let $s_{\vB, i}$ and $s_{\vB', i}$ denote the value of $s$ for $\vB$ and $\vB'$ respectively at the end of $i$-th iteration of Step 2 of the algorithm, and .
    By induction on $i$ of Step 2, $s_{\vB, i} = s_{\vB', i}$ for each $i$ of Step 2. Thus $s_\vB = s_{\vB'}$, and $a_{\vB, j} = a_{\vB', j}$ for any $1 \leq j \leq s_{\vB}$ after Step 2.
    Hence, we have $Y_{\vB} = Y_{\vB'}$ and \[Y_{\vB} \mtupletomatrix(\vB) = Y_{\vB} \mtupletomatrix(\vB') R^{-1} = Y_{\vB'}  \mtupletomatrix(\vB') R^{-1}.\]
    Then, we have
    \begin{align*}Z_{\vB'} Y_{\vB'} \mtupletomatrix(\vB') R^{-1}  =  & \mtupletomatrix(\vB') R^{-1} \\ = &\mtupletomatrix(\vB) \\ = & Z_\vB Y_\vB \mtupletomatrix(\vB) \\ = & Z_\vB Y_{\vB'} \mtupletomatrix(\vB')R^{-1}.\end{align*} 
    On the other hand, since $R$ is full rank, and $Y_{\vB} \mtupletomatrix(\vB)$ and $Y_{\vB'} \mtupletomatrix(\vB')$ are full row ranks, $Z_{\vB}$ and $Z_{\vB'}$ are unique. Hence, $Z_\vB = Z_{\vB'}$. 

    For the property 3(b), 
    by the definition, 
    for any matrix tuple $\vC$, $\rowvectorspace(\vC)$ equals the linear space spanned by the row vectors of $\mtupletomatrix(\vC)$.  
    On the other hand, by the Row Vector Extraction Algorithm, the linear space spanned by the row vectors of $\mtupletomatrix(\vC)$ equals the linear space spanned by the row vectors $Y_\vC \mtupletomatrix(\vC)$. 
    Hence, $Y_{\vB}(\mtupletomatrix(\vB) - \mtupletomatrix(\vB')) = Y_{\vB}\mtupletomatrix(\vB - \vB')$ is not full row rank if and only if $\dim(\rowvectorspace(\vB' - \vB)) < \dim(\rowvectorspace(\vB'))$, which, by the dimension property, if and only if $\vB'$ is not strongly correlated to $\vB$. 

    The property 3(c) is obtained by the first property and the fact that $Z_{\vB'} Y_{\vB'} \mtupletomatrix(\vB') = \mtupletomatrix(\vB')$ from the algorithm. 
\end{proof}

\subsection{Representative \IBCtuple sequence based matrix tuple canonical form}\label{sec:canonical_form_from_rep}

%\ynote{The algorithm in the following claim may need to be specified further. For example, does this require us to compute a minimal diagonalization? How to test if two \IBCtuples are correlated?}

We give an algorithm to compute a minimum block-diagonalization of the matrix tuple based on a given representative \IBCtuple sequence, along with the associated \IBCtuple space, \IBCtuple space kernel, and parameters for each \IBCtuple in the sequence. 
Additionally, the algorithm yields a canonical form of the matrix tuple under the condition that equivalent input matrix tuples have representative \IBCtuple sequences satisfying the property that equivalent \IBCtuples are right-equivalent (as described in the second property of Lemma~\ref{lem:canonical_with_ibctuples}).

Consequently, computing a canonical form of a matrix tuple reduces to computing a representative \IBCtuple sequence such that equivalent \IBCtuples from the representative sequences for equivalent input matrix tuples are right-equivalent. We address this question in Section~\ref{sec:conjugation}, Section~\ref{sec:direct_sum_decomposition}, and Section~\ref{sec:quotient_matrix_tuple_main}.

Formally, we prove the following lemma in the rest of this section.

\begin{lemma} \label{lem:canonical_with_ibctuples}
There is an \IBCtuple Selection Algorithm that satisfies the following conditions.
\begin{enumerate}
    \item For a matrix tuple $\vA \in \M(n \times m, \F_q)^\ell$, a hierarchical row tuple decomposition $T_1, \dots, T_\zeta$ with parameters $h_0, \dots, h_\beta$, and a representative \IBCtuple sequence $\vB_1, \dots, \vB_k$ of $\vA$ such that each of $\vB_i$ satisfies the four properties of Definition~\ref{def:four_prop} with \IBCtuple space $\ibcspace_{\vB_i}$, \IBCtuple space kernel $\ibcspacekernel_{\vB_i}$, and parameters for each of $\vB_1, \dots, \vB_k$, in $\poly(n, m, \ell, \log q)$ running time, the \IBCtuple Selection Algorithm outputs a minimum block-diagonalization of $\vA$ such that for different executions of the algorithm, the output is fixed. 

    \item The algorithm is canonical in the following sense: For two inputs of $\vA, T_1, \dots, T_\zeta, \vB_1, \dots, \vB_k$ and $\vA', T_1', \dots, T_{\zeta'}', \vB_1', \dots, \vB_{k'}'$ of the algorithm such that there exist invertible matrices $L$ and $R$ satisfying the following conditions:
    \begin{enumerate}
        \item $\vA' = L \vA R^{-1}$. 
        \item $\zeta' = \zeta$, and $T_i' = T_i R^{-1}$ for any $i \in [\zeta]$. 
        \item The parameters of the two hierarchical row tuple decompositions are the same. 
        \item $k' = k$. For any $i\in[k]$,  $\vB_i'$ is right-equivalent to $\vB_i$,
        $\ibcspace_{\vB_i'} = \langle \{\vE R^{-1} : \vE \in \ibcspace_{\vB_i} \} \rangle$, and $\ibcspacekernel_{\vB_i'} = \langle \{\vE R^{-1} : \vE \in \ibcspacekernel_{\vB_i} \} \rangle$.
        \item For any $i\in[k]$,  and $t_{\vB_i', j} = t_{\vB_i, j}$ for any $j \in [\sigma_i]$, where $\sigma_i$ is the number of rows of $\vB_i$. 
    \end{enumerate}
    the outputs of the algorithm for $\vA$ and $\vA'$ are the same. 
\end{enumerate}
\end{lemma}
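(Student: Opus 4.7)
The plan is to use the given representative IBC-tuples together with their spaces $V_{\vB_i}$ and kernels $K_{\vB_i}$ to canonically select a family of IBC-tuples of $\vA$ whose row vector spaces together form a direct sum equal to $\rowvectorspace(\vA)$, and then invoke the second part of Fact~\ref{fact:ibctuple_diagonalzation_most_basic} to assemble a block-diagonal matrix tuple. The output is determined by three canonical choices: the multiplicity of each representative, a canonical list of right-equivalent IBC-tuples in each $V_{\vB_i}$, and a canonical change-of-basis matrix on the right.

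First, for each representative $\vB_i$ I would determine the multiplicity $m_i$ of blocks equivalent to $\vB_i$ that appear in any minimum block-diagonalization of $\vA$. By the Krull--Schmidt uniqueness and the block-compatible property (Definition~\ref{def:four_prop}, property 4), $m_i$ equals the maximum number of elements in $V_{\vB_i}\setminus K_{\vB_i}$ whose row vector spaces form a direct sum. This quantity is computable by greedy search: the space property guarantees that $V_{\vB_i}$ is a linear space and $K_{\vB_i}$ a subspace; membership in $V_{\vB_i}\setminus K_{\vB_i}$ is checkable via the dimension property using the matrices $Y_{\vB_i}, Z_{\vB_i}$ from the Row Vector Extraction Algorithm (Fact~\ref{fact:ibctuple_char_matrix}); and row-vector-space independence is linear algebra.

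Second, I would canonically select the IBC-tuples and assemble the block-diagonal form. Fix a total order on $\F_q$ and the induced lexicographic order on matrix tuples. Compute the reduced-row-echelon basis of each $V_{\vB_i}$, process representatives in order $i=1,\ldots,k$, and within each $i$ greedily pick $\vC_{i,1},\ldots,\vC_{i,m_i}\in V_{\vB_i}\setminus K_{\vB_i}$ as the lex-smallest choices whose row vector spaces are independent from those of all previously chosen IBC-tuples (across all indices). Build $L^*$ by stacking the row selectors so that $L^*\vA$ has rows $\vC_{1,1},\ldots,\vC_{k,m_k}$, and build $R^*$ by taking an ordered basis of $\rowvectorspace(\vA)$ obtained by concatenating the row vectors of the selected IBC-tuples in order (extending to $\F_q^m$ by a canonical echelon completion if $\rowvectorspace(\vA)<\F_q^m$). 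Then $L^*\vA(R^*)^{-1}$ is block-diagonal with indecomposable blocks by Fact~\ref{fact:ibctuple_diagonalzation_most_basic}. Canonicity follows step by step from the equivariance hypothesis: under $\vA\mapsto L\vA R^{-1}$ we have $V_{\vB_i'}=V_{\vB_i}R^{-1}$, $K_{\vB_i'}=K_{\vB_i}R^{-1}$, the parameters $t_{\vB_i,j}$ are preserved, and hence the reduced-row-echelon basis, the lex-greedy selections, and the assembled $R^*$ all transform by $R^{-1}$ on the right, while $L^*$ transforms by $L^{-1}$ on the right (i.e., $(L^*)'=L^*L^{-1}$), leaving $L^*\vA(R^*)^{-1}$ invariant.

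The main obstacle will be verifying that the sequential greedy selection across representatives never gets stuck below the target multiplicity $m_i$. The potential issue is that choices of $\vC_{i',\cdot}$ for $i'<i$ could in principle cut into the row-vector degrees of freedom needed for $V_{\vB_i}$. To resolve this I would combine Fact~\ref{fact:ibctuple_diagonalzation_most_basic} with the block-compatible property: any minimum block-diagonalization $\vD$ of $\vA$ exhibits a direct-sum decomposition of $\rowvectorspace(\vA)$ indexed by blocks, and the projections of elements of $V_{\vB_i}$ onto blocks equivalent to $\vB_i$ lie back inside $V_{\vB_i}$ (property~4). Thus any partial choice that is extendable to a minimum block-diagonalization remains extendable after each greedy step, so the lex-first canonical selection achieves the full multiplicity $m_i$ for every $i$.
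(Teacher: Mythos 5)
The overall plan is close to the paper's: select a family of IBC-tuples from the given spaces $\ibcspace_{\vB_i}$, check that their row-tuple and row-vector spaces form direct sums, stack them, and then change basis to produce a block-diagonal form. But the canonicity mechanism you propose is broken, and fixing it requires the insight you are missing.

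You attempt to canonicalize the selection by computing reduced row-echelon bases of each $\ibcspace_{\vB_i}$ and taking lexicographically smallest choices. You then claim that under the correspondence $\vA \mapsto L\vA R^{-1}$, with $\ibcspace_{\vB_i'} = \ibcspace_{\vB_i}R^{-1}$, ``the reduced-row-echelon basis, the lex-greedy selections, and the assembled $R^*$ all transform by $R^{-1}$ on the right.'' This is false: reduced row-echelon form and lex-order on $\F_q^m$ are defined relative to the standard coordinate basis and are not equivariant under an arbitrary change of coordinates $R^{-1}$. The RREF of $\ibcspace_{\vB_i}R^{-1}$ is not, in general, the RREF of $\ibcspace_{\vB_i}$ multiplied by $R^{-1}$, so your lex-smallest $\vC_{i,j}'$ need not equal $\vC_{i,j}R^{-1}$, and the argument for property~2 collapses at exactly the point where you wave it through. (Your handling of multiplicity and the greedy extension argument via Fact~\ref{fact:ibctuple_diagonalzation_most_basic} and the block-compatible property are fine and mirror what the paper does with Claim~\ref{claim:sequential_ibctuple_simultaneous}; the gap is purely in making the selection canonical.)

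The paper avoids this trap by not canonicalizing the selection at all. In its Step~2 it picks \emph{arbitrary} $\vB^\dagger \in \ibcspace_{\vB_i}$ whose distinguished row is new to the running span $H$, continuing until $H = T_{s_i}$. The crucial observation you are missing is that the final output $\vC P^{-1}$ (where $P$ is assembled in Step~4--5 from the row vectors of the selected IBC-tuples) is \emph{independent of these arbitrary choices}: since the selected family gives a minimum block-diagonalization, the row-tuple and row-vector spaces of the blocks are canonically determined, and expressing $\vC$ in the basis $P$ of its own row vectors washes out the freedom in the selection. Canonicity under $(\vA,T_\bullet,\vB_\bullet) \mapsto (L\vA R^{-1}, T_\bullet R^{-1}, \vB_\bullet R^{-1})$ then follows immediately: if $\vC_1,\ldots,\vC_\delta$ is one valid execution for $\vA$, then $\vC_1 R^{-1},\ldots,\vC_\delta R^{-1}$ is a valid execution for $\vA'$, and Step~5 gives identical outputs. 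You would need to prove this invariance of $\vC P^{-1}$ under arbitrary valid selections to make your proposal go through; the lex/RREF substitute does not do the job.
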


Before proving Lemma~\ref{lem:canonical_with_ibctuples}, we first prove a few lemmas for the \IBCtuples satisfying the properties defined in Definition~\ref{def:four_prop}. 
The following fact gives a sufficient and necessary condition for two \IBCtuples to be correlated. 

\begin{fact}\label{fact:ibctuple_correlated_necessary_condition}
    For an \IBCtuple $\vB$ of a matrix tuple $\vA$, if $\vB$ satisfies the space property and dimension property, then for another \IBCtuple $\vB'$ that is right-equivalent to $\vB$, 
    $\vB$ and $\vB'$ are correlated iff
    there is an invertible matrix $L$ such that 
    $L\vB$ is right-equivalent to $\vB$, and
    $L\vB$ and $\vB'$ are strongly correlated.
\end{fact}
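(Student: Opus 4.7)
The plan is to verify both implications of the equivalence separately; the forward direction is little more than transitivity of right-equivalence, while the backward direction is where both hypothesized properties enter and is the interesting content. For ($\Rightarrow$), suppose $\vB$ and $\vB'$ are correlated. Unpacking Definition~\ref{def:ibctuple_rel}, $\vB$ and $\vB'$ are right-equivalent and there is an invertible $L$ with $L\vB$ strongly correlated to $\vB'$ (the second invertible matrix supplied by the definition plays no role here). Strong correlation entails right-equivalence, so $L\vB$ is right-equivalent to $\vB'$, and composing with the right-equivalence of $\vB'$ and $\vB$ yields that $L\vB$ is right-equivalent to $\vB$, as required.

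For ($\Leftarrow$), suppose an invertible $L$ is given with $L\vB$ right-equivalent to $\vB$ and $L\vB$ strongly correlated to $\vB'$. The first witness of correlation is already supplied by $L$; my candidate for the second witness is $L^{-1}$, and the task is to verify that $L^{-1}\vB'$ is strongly correlated to $\vB$. Writing $L\vB=\vB R_1^{-1}$ and $\vB'=\vB Q^{-1}$ for invertible $R_1, Q$ (using the two right-equivalences at hand), a direct computation gives $L^{-1}\vB = \vB R_1$ and hence $L^{-1}\vB' = \vB R_1 Q^{-1}$, so $L^{-1}\vB'$ is right-equivalent to $\vB$.

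It remains to show that $L^{-1}\vB' - \vB$ is not an \IBCtuple. By the space property, both $L\vB$ and $\vB'$ lie in $\ibcspace_\vB$, so $L\vB-\vB'\in \ibcspace_\vB$; since $L\vB$ is strongly correlated to $\vB'$, this difference is not an \IBCtuple right-equivalent to $\vB$, and the space property then places it in $\ibcspacekernel_\vB$. Invoking the dimension property on the pair $L\vB, \vB'$ yields $\dim(\rowvectorspace(L\vB-\vB'))<\dim(\rowvectorspace(\vB))$. Because left multiplication by an invertible matrix preserves the row vector space of every individual matrix, and hence of the tuple, one obtains
\[
\dim(\rowvectorspace(L^{-1}\vB' - \vB)) = \dim(\rowvectorspace(L^{-1}(L\vB-\vB'))) = \dim(\rowvectorspace(L\vB-\vB')) < \dim(\rowvectorspace(\vB)).
\]
Applying the dimension property in the converse direction to $L^{-1}\vB', \vB$ (both right-equivalent to $\vB$) identifies this strict drop as strong correlation, finishing the proof. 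The main point to watch is that the left-multiplication invariance of the row vector space must be applied to the difference tuple so that the dimension property can be invoked twice; beyond that, the argument is a direct manipulation of the definitions.
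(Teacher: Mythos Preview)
Your proof is correct, and the backward direction takes a genuinely different route from the paper's. The paper first establishes two stability facts---left multiplication by $L$ sends every \IBCtuple right-equivalent to $\vB$ to another such, and sends $\ibcspacekernel_\vB$ into itself---then exploits the finite-field fact that $L^c=I$ for some $c$ to take $L'=L^{c-1}$ and write $\vB-L^{c-1}\vB'=L^{c-1}(L\vB-\vB')$, which lands in $\ibcspacekernel_\vB$ by iterating the second stability fact. You instead take $L'=L^{-1}$ directly and transport the strict dimension drop of $L\vB-\vB'$ to $\vB-L^{-1}\vB'$ via left-invariance of $\rowvectorspace$, then read off strong correlation from the dimension property. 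This is shorter, avoids the separate $\ibcspacekernel_\vB$-stability lemma, and does not rely on the field being finite.

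One point worth making explicit: the dimension property, as stated, applies only to pairs of \IBCtuplesnospace, so before the final step you should note that $L^{-1}\vB'$ is itself an \IBCtuple of $\vA$. This holds because left multiplication by an invertible matrix preserves being an \IBCtuplenospace: if $\vB'=\vA_{i,\vD,L_0}$ for a minimum block-diagonalization $\vD=L_0\vA R_0^{-1}$, then premultiplying the $i$-th block of $L_0$ by $L^{-1}$ yields another minimum block-diagonalization witnessing that $L^{-1}\vB'$ is an \IBCtuplenospace. The paper's first stability fact rests on the same observation, also without spelling it out.
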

\begin{proof}
    If $\vB$ and $\vB'$ are correlated, then the existence of $L$ is by Definition~\ref{def:ibctuple}. In the rest of this proof, we prove that the existence of $L$ implies that $\vB$ and $\vB'$ are correlated by showing that the existence of $L$ implies that there is an invertible matrix $L'$ such that $L'\vB'$ is right-equivalent to $\vB$, and $\vB$ and $L'\vB'$ are strongly correlated. 

    We first prove the following two properties:
    \begin{enumerate}
    \item For any $\vB_1$ right-equivalent to $\vB$, $L \vB_1$ is also an \IBCtuple right-equivalent to $\vB$. 
    \item For any $\vB_1 \in \ibcspacekernel_\vB$, $L \vB_1$ is also in $\ibcspacekernel_\vB$. 
    \end{enumerate}

    For the first property, because $L \vB$ is right-equivalent to $\vB$, then there is an invertible matrix $R$ such that $L\vB = \vB (R)^{-1}$. 
    Since $\vB_1$ is right-equivalent to $\vB$, so $\vB_1 = \vB (R_1)^{-1}$ for some invertible $R_1$. 
    Hence, $L \vB_1 = L\vB (R_1)^{-1} = \vB (R)^{-1} (R_1)^{-1}$, which means $L\vB_1$ is right-equivalent to $\vB$ by Definition~\ref{def:ibctuple}.

    To prove the second property, 
    For an arbitrary $\vB_2$ that is in $\ibcspace_\vB$ but not in $\ibcspacekernel_\vB$ and an arbitrary $\vB_1 \in \ibcspacekernel_\vB$, $\vB_2$ and $\vB_2 + \vB_1$ are strongly correlated by the space property. 
    Hence, $L \vB_2$ and $L(\vB_2 + \vB_1)$ are both right-equivalent to $\vB$ by the first property. By the space property, both $L \vB_2$ and $L(\vB_2 + \vB_1)$ are in $\ibcspace_\vB$ but not in $\ibcspacekernel_\vB$. Because $\ibcspace_\vB$ is a linear space, $L \vB_1$ is also in $\ibcspace_\vB$. 
    
    On the other hand, by the dimension property, since $\vB_1$ is in $\ibcspacekernel_\vB$ and $\vB_2$ is in $\ibcspace_\vB$ but not in $\ibcspacekernel_\vB$, $\dim(\rowvectorspace(\vB_1)) < \dim(\rowvectorspace(\vB_2))$, and thus $\dim(\rowvectorspace(L\vB_1)) < \dim(\rowvectorspace(\vB_2))$. Since $L \vB_1$ is in $\ibcspace_\vB$, by the dimension property, $L \vB_1$ is in $\ibcspacekernel_\vB$. Hence, the second property holds. 

    To prove the lemma, since $L$ is an invertible matrix over a finite field, there is an integer $c$ such that $P^c = I$.
    Since both $\vB$ and $L^{c - 1}\vB'$ are in $\ibcspace_\vB$ by the first property, 
    $\vB - L^{c-1}\vB' = L^{c - 1} (L \vB - \vB')$ is in $\ibcspacekernel_\vB$, because $L\vB - \vB'$ is in $\ibcspacekernel_\vB$. 

    Let $L' = L^{c - 1}$. Both $\vB$ and $L'
    \vB'$ are \IBCtuples right-equivalent to $\vB$ by the first property, and thus they are strongly correlated.
\end{proof}

The following fact gives a characterization of the linear space spanned by all the \IBCtuples correlated to an \IBCtuple $\vB$.

\begin{fact}
Suppose an \IBCtuple $\vB$ of $\vA$ satisfies the space property and dimension property. 
Let $U$ be the linear space spanned by all the \IBCtuples that are correlated to $\vB$, then $\ibcspacekernel_\vB \leq U$, and for any $\vB' \in U$ but not in $\ibcspacekernel_\vB$, $\vB'$ and $\vB$ are correlated. 
\end{fact}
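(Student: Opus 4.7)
The statement has two parts: showing $\ibcspacekernel_\vB \leq U$, and showing every $\vB' \in U \setminus \ibcspacekernel_\vB$ is correlated to $\vB$. Both parts rely on combining the space property, the dimension property, and Fact~\ref{fact:ibctuple_correlated_necessary_condition}.

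For $\ibcspacekernel_\vB \leq U$, I first note that $\vB$ is trivially correlated to itself (taking $L = L' = I$ in Definition~\ref{def:ibctuple_rel}, since $\vB - \vB = 0$ is not an IBC-tuple), so $\vB \in U$. For any $\vK \in \ibcspacekernel_\vB$, I consider $\vB + \vK$. Since $\vB \notin \ibcspacekernel_\vB$ (as $\vB$ is an IBC-tuple right-equivalent to itself, hence in $\ibcspace_\vB \setminus \ibcspacekernel_\vB$ by the space property), the tuple $\vB + \vK$ likewise lies in $\ibcspace_\vB \setminus \ibcspacekernel_\vB$, so the space property makes $\vB + \vK$ an IBC-tuple right-equivalent to $\vB$. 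Applying the dimension property to $\vB$ and $\vB + \vK$, their strong correlation is equivalent to a drop in row-vector-space dimension of the difference $\vK$, forcing $\vK$ to be not an IBC-tuple. Hence $\vB$ and $\vB + \vK$ are strongly correlated, thus correlated, placing $\vB + \vK \in U$; therefore $\vK = (\vB + \vK) - \vB \in U$.

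For the second part, take $\vB' \in U \setminus \ibcspacekernel_\vB$. Since every correlated IBC-tuple is by definition right-equivalent to $\vB$, $U \leq \ibcspace_\vB$, and the space property forces $\vB'$ to be an IBC-tuple right-equivalent to $\vB$. Write $\vB' = \sum_i c_i \vB_i$ with each $\vB_i$ correlated to $\vB$; by Fact~\ref{fact:ibctuple_correlated_necessary_condition}, each $\vB_i = L_i \vB - \vK_i$ with $L_i$ invertible, $L_i \vB$ right-equivalent to $\vB$, and $\vK_i \in \ibcspacekernel_\vB$ (the last from strong correlation together with the space property). Setting $\tilde{L} = \sum_i c_i L_i$ and $\tilde{\vK} = \sum_i c_i \vK_i \in \ibcspacekernel_\vB$, the identity $\tilde{L}\vB = \vB' + \tilde{\vK} \in \ibcspace_\vB \setminus \ibcspacekernel_\vB$ shows that $\tilde{L}\vB$ is itself an IBC-tuple right-equivalent to $\vB$.

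The final step is to produce an invertible $L^*$ with $L^*\vB - \vB' \in \ibcspacekernel_\vB$; then the dimension property gives strong correlation of $L^*\vB$ and $\vB'$, and Fact~\ref{fact:ibctuple_correlated_necessary_condition} yields that $\vB'$ is correlated to $\vB$. The plan is to adjust $\tilde{L}$ within the coset $\tilde{L} + \{M : M\vB \in \ibcspacekernel_\vB\}$ until it meets $\GL(\sigma, \F_q)$. I expect the main obstacle to be showing this coset intersects $\GL$ over a finite field even though $\tilde{L}$ itself need not be invertible. Two sources of flexibility should suffice: matrices $M$ in the left-nullspace of $\vB$, which preserve $\tilde{L}\vB$ exactly, and matrices $M$ with $M\vB \in \ibcspacekernel_\vB$, which preserve $\tilde{L}\vB$ modulo $\ibcspacekernel_\vB$. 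Combined with the fact that each $L_i$ in the sum is already invertible, this slack should guarantee an invertible $L^*$ in the required coset, completing the argument.
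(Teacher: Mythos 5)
Your outline follows the same strategy as the paper's proof: both reduce the problem to Fact~\ref{fact:ibctuple_correlated_necessary_condition}, both express correlated \IBCtuples as $L_i\vB$ plus an element of $\ibcspacekernel_\vB$, and both ultimately need to exhibit a single invertible matrix $L^*$ with $L^*\vB - \vB' \in \ibcspacekernel_\vB$. The paper phrases Part~2 in terms of a difference of two correlated \IBCtuples and leaves the extension to arbitrary linear combinations implicit; your version handles the general combination $\vB' = \sum_i c_i \vB_i$ directly, which is a small tidying-up, not a different proof. Part~1 is essentially identical.

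The genuine gap is the last step, which you flag as unresolved: you anticipate having to hunt in the coset $\tilde L + \{M : M\vB \in \ibcspacekernel_\vB\}$ for an invertible element. That is unnecessary, and the worry dissolves once you use the space property one more time. You have already shown $\tilde L\vB = \vB' + \tilde{\vK} \in \ibcspace_\vB \setminus \ibcspacekernel_\vB$, so by the space property $\tilde L\vB$ is itself an \IBCtuple right-equivalent to $\vB$. An \IBCtuple with $\sigma$ rows has $\sigma$ linearly independent row tuples (its row tuple space is a direct summand of $\rowtuplespace(\vA)$ of dimension $\sigma$, by Fact~\ref{fact:ibctuple_diagonalzation_most_basic}); so does $\vB$. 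If some nonzero $v\in\F_q^\sigma$ satisfied $v\tilde L = 0$, then $v(\tilde L\vB) = 0$, contradicting linear independence of the rows of $\tilde L\vB$. Hence $\tilde L$ is already invertible and serves as $L^*$; the dimension property then makes $\tilde L\vB$ strongly correlated to $\vB'$, and Fact~\ref{fact:ibctuple_correlated_necessary_condition} finishes. The paper's proof relies on exactly this implicit step (it applies Fact~\ref{fact:ibctuple_correlated_necessary_condition} with $L' - L''$, whose invertibility follows in the same way), so you were on the right track — you just did not notice that the hypothesis "$\tilde L\vB$ is an \IBCtuple" already carries the invertibility for free.
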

\begin{proof}
    We show that if two \IBCtuples $\vB'$ and $\vB''$ both are correlated to $\vB$ but not strongly correlated, then $\vB' - \vB''$ and $\vB$ are correlated. 

    Because $\vB$ and $\vB'$ are correlated, there is an invertible matrix $L'$ such that $\vB' = L' \vB + \vC'$ for some $\vC' \in \ibcspacekernel_\vB$. 
    Similarly, there is an invertible matrix $L''$ such that $\vB'' = L'' \vB + \vC''$ for some $\vC'' \in \ibcspacekernel_\vB$.
    Hence, we have \[\vB' - \vB'' = L' \vB + \vC' - L'' \vB - \vC'' = (L' - L'') \vB + (\vC' - \vC'').\]

    Since $\vB' - \vB''$ is in $\ibcspace_\vB$ by the space property, $\vB' - \vB''$ is either an \IBCtuple right-equivalent to $\vB$, or in $\ibcspacekernel_\vB$ and thus not an \IBCtuple right-equivalent to $\vB$. 
    
    If $\vB' - \vB''$ is an \IBCtuple right-equivalent to $\vB$, then $(L' - L'') \vB$ is an \IBCtuple right-equivalent to $\vB$ by the space property, and thus equivalent to $\vB' - \vB''$. 
    Furthermore, $(L' - L'') \vB$ is strongly correlated to $\vB' - \vB''$. By Fact~\ref{fact:ibctuple_correlated_necessary_condition}, $\vB' - \vB''$ and $\vB$ are correlated.

    Hence, any matrix tuple in $U$ is either an \IBCtuple correlated to $\vB$ or is not an \IBCtuplenospace. 
    Since $\ibcspacekernel_\vB$ is a subspace of $U$ because $\vB + \vC$ is in $\ibcspace_\vB$ for any $\vC \in \ibcspacekernel_\vB$,  every matrix tuple in $U$ but not in $\ibcspacekernel_\vB$ is an \IBCtuple correlated to $\vB$.     
\end{proof}

We also characterize the projection of an \IBCtuple on blocks of a block-diagonalization of the matrix tuple.  

\begin{claim}\label{claim:ibctuple_projection_iso}
    Let $\vA$ be a matrix tuple such that every \IBCtuple satisfies the space property and the block-compatible property, $\vD = \diag(\vD_1, \dots, \vD_d)$ be a minimum block-diagonalization of $\vA$, and $L, R$ be invertible matrices such that $\vD = L \vA (R)^{-1}$. 
    For an \IBCtuple $\vB$ of $\vA$ and an integer $i$, 
    if $\proj_{\vD, L}(\vB, i)$ is not in $\ibcspacekernel_\vB$, 
    then $\vA_{i, \vD, L}$ and $\vB$ are equivalent. 
\end{claim}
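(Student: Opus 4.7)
The plan is to first use the assumed properties of $\vA$'s \IBCtuples to promote the projection $\vB' := \proj_{\vD, L}(\vB, i)$ to an \IBCtuple right-equivalent to $\vB$, and then to identify the row-tuple space of $\vB'$ with that of $\vA_{i, \vD, L}$ via a local-endomorphism-ring argument for indecomposable matrix tuples.

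First, I will invoke the block-compatible property (item 4 of Definition~\ref{def:four_prop}) applied to $\vB \in \ibcspace_\vB$ to conclude $\vB' \in \ibcspace_\vB$. Since $\vB' \notin \ibcspacekernel_\vB$ by hypothesis, the space property then yields that $\vB'$ is itself an \IBCtuple of $\vA$ right-equivalent to $\vB$; in particular there exists an invertible $Q$ with $\vB = \vB' Q$, and $\vB'$ has the same number $\nu$ of linearly independent rows as $\vB$. By the very definition of $\proj_{\vD, L}(\cdot, i)$, the row tuples of $\vB'$ lie in $\rowtuplespace(\vA_{i, \vD, L})$, so I may write $\vB' = M \vA_{i, \vD, L}$ for some $\nu \times \nu_i$ matrix $M$ of full row rank, where $\nu_i$ denotes the number of rows of $\vA_{i, \vD, L}$.

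The main step is to show that $M$ is in fact square and invertible, i.e., $\rowtuplespace(\vB') = \rowtuplespace(\vA_{i, \vD, L})$. Being an \IBCtuplenospace, $\vB'$ corresponds to one indecomposable block of some minimum block-diagonalization $\vD^\ast$ of $\vA$; by Fact~\ref{fact:ibctuple_diagonalzation_most_basic} this furnishes a direct-sum decomposition $\rowtuplespace(\vA) = \rowtuplespace(\vB') \oplus \rowtuplespace(\vW)$, for a complementary row-submatrix tuple $\vW$, that lifts to a direct-sum decomposition of $\vA$ itself as a matrix tuple. Let $\pi$ denote the projection of $\vA$ onto its $\vB'$-summand along $\vW$. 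Because $\rowtuplespace(\vB') \subseteq \rowtuplespace(\vA_{i, \vD, L})$, the restriction $\phi := \pi|_{\vA_{i, \vD, L}}$ sends $\vA_{i, \vD, L}$ into itself (with image exactly $\vB'$) and is an idempotent endomorphism of the matrix tuple $\vA_{i, \vD, L}$. Since $\vA_{i, \vD, L}$ is indecomposable, its endomorphism algebra is local (a standard consequence of Fitting's lemma for finite-dimensional representations, as noted in the introduction), hence contains only the trivial idempotents $0$ and $\id$. Because $\vB' \neq 0$, $\phi$ is nonzero and must therefore equal $\id$, which forces $\rowtuplespace(\vA_{i, \vD, L}) \subseteq \rowtuplespace(\vB')$ and, together with the reverse inclusion, the desired equality.

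With $M$ now square and invertible, $\vB' = M \vA_{i, \vD, L}$ witnesses that $\vB'$ and $\vA_{i, \vD, L}$ are equivalent as matrix tuples, and consequently $\vB = \vB' Q = M \vA_{i, \vD, L} Q$ is equivalent to $\vA_{i, \vD, L}$, as claimed. The main obstacle is the third step: one must carefully promote the row-tuple-level decomposition $\rowtuplespace(\vA) = \rowtuplespace(\vB') \oplus \rowtuplespace(\vW)$ to a decomposition of $\vA$ as matrix tuples, verify that the associated projection restricts to a bona-fide endomorphism of $\vA_{i, \vD, L}$ (using $\vB' \subseteq \vA_{i, \vD, L}$), and then invoke the locality of endomorphism algebras of indecomposable matrix tuples to rule out non-trivial idempotents.
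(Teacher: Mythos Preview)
Your proposal is correct. Both your argument and the paper's hinge on the same two observations: (i) the block-compatible and space properties make $\vB' := \proj_{\vD,L}(\vB,i)$ an \IBCtuple right-equivalent to $\vB$, hence a direct summand of $\vA$ with $\rowtuplespace(\vB') \leq \rowtuplespace(\vA_{i,\vD,L})$; (ii) this summand, being contained in the indecomposable $\vA_{i,\vD,L}$, must coincide with it. For (ii), you invoke the locality of the endomorphism ring of an indecomposable (Fitting's lemma): the projection of $\vA$ onto its $\vB'$-summand restricts to a nonzero idempotent endomorphism of $\vA_{i,\vD,L}$, hence equals the identity. The paper instead unpacks this concretely by contradiction: assuming $\vD_i$ has strictly more rows than $\vB'$, it projects $\vA_{i,\vD,L}$ onto the two blocks of the $\vB'$-induced diagonalization $\vE = \diag(\vE_1,\vE_2)$ and checks both projections are nonzero, so $\vD_i$ decomposes, contradicting that $\vD$ is a minimum block-diagonalization. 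These are equivalent arguments --- a nontrivial idempotent is exactly the data of a nontrivial direct-sum splitting --- so the difference is one of packaging: your version is cleaner and more categorical but imports the local-ring statement as a black box, while the paper's stays self-contained within its concrete matrix-tuple language. The one point you correctly flag as requiring care is that the idempotent has a component on the row-vector side as well; since $\rowvectorspace(\vB') \leq \rowvectorspace(\vA_{i,\vD,L})$, that component also preserves $\vA_{i,\vD,L}$, so the restriction is indeed a bona fide endomorphism.
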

\begin{proof}
    %Without loss of generality, we assume the integer $i$ equals $1$ because there are invertible matrices $L'$ and $R'$ such that $\diag(\vD_1, \dots, \vD_d) = L' \diag(\vD_i, \vD_1, \dots, \vD_{i-1}, \vD_{i+1}, \dots, \vD_d) (R')^{-1}$. 
    We prove this by contradiction. Suppose $\proj_{\vD, L}(\vB, i)$ is not in $\ibcspacekernel_\vB$, and $\vA_{i, \vD, L}$ is not equivalent to $\vB$. 

    By the block-compatible property, $\proj_{\vD, L}(\vB, i)$ is in $\ibcspace_\vB$. 
    Since $\proj_{\vD, L}(\vB, i)$ is not in $\ibcspacekernel_\vB$, $\proj_{\vD, L}(\vB, i)$ is an \IBCtuple right-equivalent to $\vB$ by the space property. The number of rows in $\vD_i$ cannot be smaller than the number of rows in $\proj_{\vD, L}(\vB, i)$, because every row tuple of $\proj_{\vD, L}(\vB, i)$ is also a row tuple of $\vD$. The number of rows in $\vD_i$ also cannot be equal to the number of rows in $\proj_{\vD, L}(\vB, i)$, because otherwise $\proj_{\vD, L}(\vB, i)$ and $\vA_{i, \vD, L}$ are equivalent. So the number of rows in $\vD_i$ must be larger than the number of rows in $\proj_{\vD, L}(\vB, i)$. 

    Next, we show that $\vD_i$ is not indecomposable, and thus $\vD$ is not a minimum block-diagonalization. 
    By Definition~\ref{def:projection_matrix_tuple_diag}, $\rowtuplespace(\proj_{\vD, L}(\vB, i))$ is a subpsace of $\rowtuplespace(\vA_{i, \vD, L})$. 
    Since $\vD_i$ has more rows than $\proj_{\vD, L}(\vB, i)$, \[\dim(\proj_{\vD, L}(\vB, i)) < \dim(\rowtuplespace(\vA_{i, \vD, L})).\] 

    On the other hand,    
    because $\proj_{\vD, L}(\vB, i)$ is an \IBCtuplenospace, so there exist $L'$ and $R'$ such that $L' \vA (R)^{-1} = \vE =  \diag(\vE_1, \vE_2)$ such that 
    $\vA_{1, \vE, L'} = \proj_{\vD, L}(\vB, i)$. 
    %$\vB'$ corresponds to $\vE_1$. 
    Both $\proj_{\vE, L}(\vA_{i, \vD, L}, 1)$ and $\proj_{\vE, L}(\vA_{i, \vD, L}, 2)$ are non-zero matrix tuples. 
    By the definition of block-diagonalization, we have \[\vA_{i, \vD, L} = \proj_{\vE, L}(\vA_{i, \vD, L}, 1) \oplus \proj_{\vE, L}(\vA_{i, \vD, L}, 2),\] and \[\rowtuplespace(\vA_{i, \vD, L}) = \rowtuplespace(\proj_{\vE, L}(\vA_{i, \vD, L}, 1)) \oplus \rowtuplespace(\proj_{\vE, L}(\vA_{i, \vD, L}, 2)),\]  
    and thus $\vD_i$ is not indecomposable by Fact~\ref{fact:ibctuple_diagonalzation_most_basic}. 
\end{proof}

Consider the following scenario: Suppose we want to construct a minimum block-diagonalization of a matrix tuple by selecting \IBCtuples corresponding to the blocks of the minimum block-diagonalization. 
Suppose a few block corresponding \IBCtuples have already been selected. 
The following claim provides a sufficient condition for selecting another \IBCtuplenospace.

\begin{claim}\label{claim:sequential_ibctuple_simultaneous}
Let $\vB_1, \dots, \vB_{d_0}$ for some integer $d_0$ be a sequence of \IBCtuples of $\vA$ that are not necessarily equivalent such that 
all of $\vB_1, \dots, \vB_{d_0}$ satisfies the four properties of Definition~\ref{def:four_prop}, and 
there is a minimum block-diagonalization $\vD = \diag(\vD_1, \dots, \vD_d)$ of $\vA$ for some integer $d > d_0$ with $\vD = L \vA R^{-1}$ such that $\vB_i= \vA_{i, \vD, L}$ for all $1 \leq i \leq d_0$. 
Let $\vB$ be an \IBCtuple of $\vA$ satisfying the four properties of Definition~\ref{def:four_prop},  $S = \{i \in [d_0] : \vB_i \text{ equivalent to } \vB\}$, and $L_i$ for $i \in S$ be an invertible matrix such that $L_i \vB_{i} \in \ibcspace_\vB$. 
Let $U = \langle \{\vB' \in \ibcspace_\vB : \exists i\in S \text{ s.t. } \vB'  \text{ is correlated to } L_i \vB_i\} \rangle$.

If $\vB$ is not in $U$, then 
there is a minimum block-diagonalization $\vD' = \diag(\vD_1', \dots, \vD_d)$ of $\vA$ such that $\vB_i= \vA_{i, \vD', L}$ for all $1 \leq i \leq d_0$, and $\vB = \vA_{d_0 + 1, \vD', L}$. 
\end{claim}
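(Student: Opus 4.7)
The plan is to identify a block of $\vD$ with index $i^* > d_0$ whose equivalence class matches $\vB$, and then modify $\vD$ on that block to place $\vB$ at position $d_0+1$ while keeping $\vB_1, \ldots, \vB_{d_0}$ unchanged. The core technical step is using the hypothesis $\vB \notin U$ to force a non-trivial projection of $\vB$ onto a block beyond index $d_0$.

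First, I invoke the block-compatible property of $\vB$: since $\vB \in \ibcspace_\vB$, the decomposition $\vB = \sum_{i=1}^d \proj_{\vD, L}(\vB, i)$ has every summand in $\ibcspace_\vB$, and by the space property each summand is either in $\ibcspacekernel_\vB$ or is an \IBCtuple right-equivalent to $\vB$. Whenever the latter occurs at index $i$, Claim~\ref{claim:ibctuple_projection_iso} forces the corresponding block of $\vD$ to be equivalent to $\vB$; for $i \leq d_0$ this forces $i \in S$. To exploit $\vB \notin U$, I show that for each $i \in S$ with $\proj_{\vD, L}(\vB, i) \notin \ibcspacekernel_\vB$, the projection is correlated to $L_i \vB_i$: it shares the row-tuple span of $\vB_i$ and has full row-vector dimension (by the dimension property), hence equals $M_i \vB_i = (M_i L_i^{-1})(L_i \vB_i)$ for an invertible $M_i$, and Fact~\ref{fact:ibctuple_correlated_necessary_condition} yields correlation. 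A separate check establishes $\ibcspacekernel_\vB \subseteq U$: for any $\vK \in \ibcspacekernel_\vB$ and any $i \in S$, the two \IBCtuples $L_i \vB_i$ and $L_i \vB_i + \vK$ both lie in $\ibcspace_\vB \setminus \ibcspacekernel_\vB$, are right-equivalent to $\vB$, and are each correlated to $L_i \vB_i$; their difference $-\vK$ therefore lies in $U$. Combining these, if every $i > d_0$ had $\proj_{\vD, L}(\vB, i) \in \ibcspacekernel_\vB$, the whole decomposition of $\vB$ would lie in $U$, contradicting $\vB \notin U$. Hence some $i^* > d_0$ satisfies $\vB^* := \proj_{\vD, L}(\vB, i^*) \notin \ibcspacekernel_\vB$, and Claim~\ref{claim:ibctuple_projection_iso} gives $\vD_{i^*}$ equivalent to $\vB$.

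To construct $\vD'$, permute the blocks past $d_0$ (via a block-diagonal column permutation of $R$ that fixes the first $d_0$ blocks) so that $i^* = d_0+1$. Write $\vB^* = M \, \vA_{d_0+1, \vD, L}$ with $M$ invertible (using the dimension property), and $\vB = \vB^* Q$ for invertible $Q$. Absorbing $M$ into the $(d_0+1)$-th block-row and $Q$ into the $(d_0+1)$-th block-column of the multipliers yields a block-diagonalization $\vD'$ whose first $d_0$ blocks are unchanged and whose $(d_0+1)$-th block realizes $\vB$. The remaining blocks inherit indecomposability from the minimum decomposition $\vD$; since the new $(d_0+1)$-th block is the \IBCtuple $\vB$, $\vD'$ is again a minimum block-diagonalization.

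The main obstacle is the second step: verifying $\ibcspacekernel_\vB \subseteq U$ and the correlation of projections with $L_i \vB_i$. Both rest on a delicate interplay of the four properties of Definition~\ref{def:four_prop} with Fact~\ref{fact:ibctuple_correlated_necessary_condition}. The dimension property is crucial for invertibility of the transition matrices $M_i$, and the space property is used to convert between ``not in $\ibcspacekernel_\vB$'' and ``an \IBCtuple right-equivalent to $\vB$''. The third step is largely mechanical once $i^*$ is located, though care is needed to preserve the literal identity $\vA_{d_0+1, \vD', L} = \vB$ with respect to the given $L$.
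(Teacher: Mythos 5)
Your argument for locating an index $i^* > d_0$ with $\vB^* := \proj_{\vD,L}(\vB,i^*) \notin \ibcspacekernel_\vB$ is a reasonable, more detailed version of the paper's short appeal to Claim~\ref{claim:ibctuple_projection_iso} and $\vB \notin U$ (the paper simply asserts existence of such a $j > d_0$). The real issue is your third step. You write $\vB^* = M\,\vA_{d_0+1,\vD,L}$ with $M$ invertible and $\vB = \vB^* Q$ with $Q$ invertible, and then claim that absorbing $M$ into the $(d_0+1)$-th block-row and $Q$ into the $(d_0+1)$-th block-column of the multipliers produces a block-diagonalization $\vD'$ with $\vA_{d_0+1,\vD',L'}=\vB$. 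This does not work: $\vA_{i,\vD',L'}$ depends only on $L'$ and the block sizes of $\vD'$, not on the right multiplier $R'$, so ``absorbing $Q$ on the column side'' leaves $\vA_{d_0+1,\vD',L'}$ unchanged, and absorbing $M$ into $L$ yields $\vA_{d_0+1,\vD',L'}=M\,\vA_{d_0+1,\vD,L}=\vB^*$, not $\vB$. The right-equivalence $\vB=\vB^*Q$ relates $\vB$ and $\vB^*$ as abstract matrix tuples but does \emph{not} express the rows of $\vB$ (as row tuples of $\vA$) as left-multiples of the rows of $\vB^*$; they span different subspaces of $\rowtuplespace(\vA)$ in general, so no block-diagonal tweak of $(L,R)$ converts $\vB^*$ into $\vB$ in the $(d_0+1)$-th slot.

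What is actually needed, and what the paper does, is to show the two direct-sum decompositions $\rowtuplespace(\vA) = \rowtuplespace(\vB_1) \oplus \cdots \oplus \rowtuplespace(\vB_{d_0}) \oplus \rowtuplespace(\vB) \oplus T'$ and $\rowvectorspace(\vA) = \rowvectorspace(\vB_1) \oplus \cdots \oplus \rowvectorspace(\vB_{d_0}) \oplus \rowvectorspace(\vB) \oplus S'$, where $T', S'$ come from the blocks $\neq d_0+1$ of the permuted $\vD$. These follow because the restriction of the projection onto $\rowtuplespace(\vA_{d_0+1,\vD,L})$ (resp.\ $\rowvectorspace(\vA_{d_0+1,\vD,L})$) to $\rowtuplespace(\vB)$ (resp.\ $\rowvectorspace(\vB)$) has image $\rowtuplespace(\vB^*)$ (resp.\ $\rowvectorspace(\vB^*)$) of full dimension (here the dimension property is used), hence is injective. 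One then invokes Fact~\ref{fact:ibctuple_diagonalzation_most_basic}(2) to realize $\vB_1,\dots,\vB_{d_0},\vB$ as the first $d_0+1$ row-submatrix tuples of a block-diagonalization; minimality of $\vD'$ is automatic since it has the same block count $d$ as the minimum $\vD$, which also makes your ``remaining blocks inherit indecomposability'' remark unnecessary.
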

\begin{proof}
We turn $\vD$ into $\vD'$ as required by multiplying invertible matrices from the left and right.
We investigate the projections of $\vB$ on blocks of $\vD$. 
By the block-compatible property, $\proj_{\vD, L}(\vB, i)$ is in $\ibcspace_\vB$. 
By Claim~\ref{claim:ibctuple_projection_iso} and the fact that $\vB$ is not in $U$, there is a $j > d_0$ such that $\proj_{\vD, L}(\vB, j)$ is in $\ibcspace_\vB$ but not in $\ibcspacekernel_\vB$. 
So there exist invertible matrices $L''$ and $R''$ such that 
\[\vD'' = L'' \vD (R'')^{-1} = \diag(\vD_1, \dots, \vD_{d_0}, \vD_j, \vD_{d_0 + 1}, \dots, \vD_{j-1}, \vD_{j+1}, \dots, \vD_d) = L'' L \vD (R)^{-1} (R'')^{-1}.\]
The new matrix tuple is still a minimum block-diagonalization of $\vA$, and we have $\vB_i = \vA_{i, \vD'', L'' L}$ for all $1 \leq i \leq d_0$.

%Let $\vD''$ denote this new simultaneous diagonalization, and  $P''$ and $Q''$ be matrices such that $P'' \vA Q'' = \vD''$ with $(\sum_{j = 1}^{i-1} n_i) + i'$-th row of $P'' \vA$ equals $i'$-th row of $\vB_i$ for any $1 \leq i \leq k$ and $1 \leq i' \leq n_i$, where $n_i$ and $m_i$ denote the number of rows and columns respectively for $\vD_i$. 

By the definition of block-diagonalization, we have 
\[\rowtuplespace(\vA) = \rowtuplespace(\vA_{1, \vD'', L'' L}) \oplus  \dots \oplus \rowtuplespace(\vA_{d, \vD'', L'' L}),\]
and \[\rowvectorspace(\vA) = \rowvectorspace(\vA_{1, \vD'', L'' L}) \oplus  \dots \oplus \rowvectorspace(\vA_{d, \vD'', L'' L}).\]
Let $T = \oplus_{i \in [d] \setminus \{d_0 + 1\}}\rowtuplespace(\vA_{i, \vD'', L'' L})$ and $S = \oplus_{i \in [d] \setminus \{d_0 + 1\}}\rowvectorspace(\vA_{i, \vD'', L'' L})$, we have 
\[\rowtuplespace(\vA) = \rowtuplespace(\vA_{d_0 + 1, \vD'', L'' L}) \oplus T \text{ and }
\rowvectorspace(\vA) = \rowvectorspace(\vA_{d_0 + 1, \vD'', L'' L}) \oplus S.\]

In the rest of this proof, we show that \begin{equation}\label{equ:sequnetial_ibctuple_selection}\rowtuplespace(\vA) = \rowtuplespace(\vB) \oplus T \text{ and }
\rowvectorspace(\vA) = \rowvectorspace(\vB) \oplus S,\end{equation}
and then the claim follows by the definition of block-diagonalization. 

Since $\proj_{\vD'', L'' L}(\vB, d_0 + 1)$ is in $\ibcspace_\vB$ but not in $\ibcspacekernel_\vB$, 
by Claim~\ref{claim:ibctuple_projection_iso}, $\vA_{d_0 + 1, \vD'', L'' L}$ is equivalent to $\vB$. 
Hence, we have $\dim(\rowtuplespace(\vB)) = \dim(\rowtuplespace(\vA_{d_0 + 1, \vD'', L'' L}))$ and $\dim(\rowvectorspace(\vB)) = \dim(\rowvectorspace(\vA_{d_0 + 1, \vD'', L'' L}))$.

Moreover, since $\vA_{d_0 + 1, \vD'', L'' L}$ is equivalent to $\vB$, every non-zero row tuple in $\rowtuplespace(\vB)$ is not in $T$, and  every non-zero row vector in $\rowvectorspace(\vB)$ is not in $S$. 
Thus,  $\rowtuplespace(\vB) \cap T$ contains only the zero row tuple, and $\rowvectorspace(\vB) \cap V$ contains only the zero vector.

We show that $\rowtuplespace(\vB) \cap T$ contains only the zero row tuple, and $\rowvectorspace(\vB) \cap V$ contains only the zero vector. 
Then by Claim~\ref{claim:ibctuple_projection_iso}, we have $\dim(\rowtuplespace(\vB)) = \dim(\rowtuplespace(\vA_{d_0 + 1, \vD'', L'' L}))$ and $\dim(\rowvectorspace(\vB)) = \dim(\rowvectorspace(\vA_{d_0 + 1, \vD'', L'' L}))$. Thus, we have Equation (\ref{equ:sequnetial_ibctuple_selection}). 
\end{proof}

Now we give the \IBCtuple Selection Algorithm and prove Lemma~\ref{lem:canonical_with_ibctuples}.

\begin{framed}
\noindent \textbf{\IBCtuple Selection Algorithm}

\noindent \textbf{Input:} 
\begin{enumerate}
    \item Matrix tuple $\vA \in \M(n\times m, \F_q)^\ell$ with hierarchical row tuple decomposition $T_1, \dots, T_\zeta$ and parameters $h_0, \dots, h_
    \beta$. 
    \item Representative \IBCtuple sequence $\vB_1, \dots, \vB_k$ for $\vA$ with \IBCtuple space $\ibcspace_{\vB_i}$, \IBCtuple space kernel $\ibcspacekernel_{\vB_i}$, and parameters $t_{\vB_i, 1}, \dots, t_{\vB_i, \sigma_i}$ for each $i \in [k]$.
\end{enumerate}

\noindent \textbf{Output:} A matrix tuple. 

\noindent \textbf{Algorithm:} 
\begin{enumerate}
\item Let $\delta = 0$. 
\item For $i = 1, \dots, k$, let $s_i = \min_{j \in [\sigma_i]} t_{\vB_i, j}$, $u_i = \min \{j \in [\sigma_i] : t_{\vB_i, j} = s_i\}$, $H = K_{s_j}$, and repeat the following procedure until $H = T_{s_i}$:
\begin{enumerate}
    \item Choose an arbitrary $\vB^\dagger \in \ibcspace_{\vB_i}$ such that $e_{s_i} \vB^\dagger$ is not in $H$. 
    \item Let $\delta = \delta + 1$ and $\vC_\delta = \vB^\dagger$. 
    \item For $j = 1$ to $\sigma_i$, if $t_{\vB_i, j} = s_i$, then let $H = \langle H \cup \{e_j \vB^\dagger\}\rangle$. 
\end{enumerate}
\item Let 
\[\vC = (C_1, \dots, C_\ell) = \begin{bmatrix}
     \vC_1 \\ \vdots \\ \vC_\delta
\end{bmatrix},\]
and $\sigma$ be the number of rows of $\vC$, $w = 0$, $S$ be a row vector space initially contains only the zero vector. 
\item  For $i = 1, \dots, \ell$, for $j = 1, \dots, \sigma$, if $e_j C_i$ is not in $S$, then $w = w + 1$, $v_i = e_j C_i$, and $S = \langle S \cup \{e_j C_i\}\rangle$.
\item Let $P \in \M(s \times m, \F_q)$ be the matrix such that $e_i P = v_i$ for any $i \in [s]$. Return $\vC P^{-1}$.  
\end{enumerate}
\vspace{-.3cm}
\end{framed}

\vspace{.2cm}\begin{proof}[Proof of Lemma~\ref{lem:canonical_with_ibctuples}]
By the space property and row tuple space property of $\vB_1, \dots, \vB_k$, 
for any $k\in[\delta]$
$\vC_k$ selected in Step 2 is an \IBCtuple not in the linear space spanned by \IBCtuples in $\vC_1, \dots, \vC_{k-1}$ that are right-equivalent to $\vC_k$ and their correlated \IBCtuplesnospace. 
By the definition of \IBCtuple and Claim~\ref{claim:sequential_ibctuple_simultaneous}, the output of the algorithm is a minimum block-diagonalization of the input matrix. 

In addition, since the output is a minimum block-diagonalization of $\vA$, we have \[\rowtuplespace(\vA) = \rowtuplespace(\vC_1) \oplus \dots \oplus \rowtuplespace(\vC_\delta)\] and \[\rowvectorspace(\vA) = \rowvectorspace(\vC_1) \oplus \dots \oplus \rowvectorspace(\vC_\delta).\]
Hence, for different choices of the matrix tuples $\vC_1, \dots, \vC_\delta$ in Step 2 of the algorithm, the output is the same. Hence, the first property holds.

For the second property, 
suppose we first run the algorithm for $\vA$,
By the correspondence between the two inputs, the numbers of matrix tuples selected in Step 2 of the algorithm for $\vA$ and $\vB$ are the same. 
 Let $\vC_1, \dots, \vC_\delta$ and $\vC_1', \dots, \vC_{\delta}'$ denote the matrix tuples selected in Step 2 of the algorithm for $\vA$ and $\vB$ respectively. 
By the correspondence between the two inputs, 
for any $j \in [\delta]$, if $\vC_k' = \vC_k R^{-1}$ for all the $k\in[j-1]$, 
when selecting $\vC_j'$ for $\vA'$, $\vC_j R^{-1}$ is a valid choice for $\vC_j'$. 
Hence, by induction on each iteration of Step 2 for $\vA'$, $\vC_1 R^{-1}, \dots, \vC_\delta R^{-1}$ is a sequence of matrix tuples selected by one execution of the algorithm for $\vA'$. 
By Step 5 of the algorithm, the outputs for $\vA$ and $\vA'$ are the same. 
\end{proof}

%The goal is to have an algorithm either give the canonical form of a matrix tuple or find new row vector subspaces and row tuple subspaces canonically. 

%\begin{definition}
%    Let $(S_0, \dots, S_k)$ be a tuple of subspaces of $\F^n$ (every $S_i$ is a subspace of $\F^n$) satisfying the following conditions:
%    \begin{enumerate}
%        \item $S_0$ contains only zero vector and $S_k = \F^n$.
%        \item If $S_i \cap S_j$ is of dimension greater than 1, then $S_i \cap S_j$ is a subspace 
%    \end{enumerate}
%\end{definition}

%We aim to construct the tuple of subspaces canonically. 

%Here are some basic way to construct the subspace canonically. 
%\begin{enumerate}
%    \item Let $A_i \in \vA$ be a non-zero matrix, then $\ker(A_i)$ is a canonical subspace of $\F^n$.
%\end{enumerate}

%\subsection{Canonical form algorithms}

\section{\IBCtuples for full rank square matrix tuples}\label{sec:conjugation}

In this section, we show that for a square matrix tuple in which each matrix is full rank, there exists an algorithm to either compute a nontrivial characteristic block-compatible row tuple subspace if it exists, or compute a representative \IBCtuple sequence containing a single \IBCtuple $\vB$ (implying that all the \IBCtuples of $\vA$ are equivalent to $\vB$). More formally, we prove the following lemma in this section. 

\begin{lemma}\label{lem:conjugation_canonical_main}
There is a Full Rank \IBCtuple Space Algorithm satisfying the following properties:
\begin{enumerate}
\item
    Given a matrix tuple 
    $\vA \in \M(n, \F_q)^\ell$ such that every $A_i \in \vA$ is full rank, in $\text{poly}(n, \ell, \log q)$ time, the algorithm gives one of the following outputs:
    \begin{enumerate}
        \item A representative \IBCtuple sequence $\vB$ of $\vA$ with $\ibcspace_{\vB}$,  $\ibcspacekernel_\vB$, and $t_{\vB, 1}, \dots, t_{\vB, \eta}$ where $\eta$ is the number of rows of $\vB$ satisfying the following conditions:
            \begin{enumerate}
                \item Every \IBCtuple of $\vA$ is equivalent to $\vB$. 
                \item $\vB$ satisfies the space property with \IBCtuple space $\ibcspace_\vB$ and \IBCtuple space kernel $\ibcspacekernel_{\vB}$ such that $\ibcspacekernel_{\vB}$ contains only the zero matrix tuple.
                \item $\vB$ satisfies row tuple space property with parameters $t_{\vB, 1} = \dots =  t_{\vB, \eta} = 1$ such that $T_1 = \rowtuplespace(\vA)$ and $K_1$ contains only the zero row tuple of $T_1$.
                \item $\vB$ satisfies the dimension property and block-compatible property.  
            \end{enumerate}
        
        \item A representative \IBCtuple sequence of $\vA$ containing a single \IBCtuple $\vB$ with \IBCtuple space $\ibcspace_{\vB}$ and  \IBCtuple space kernel $\ibcspacekernel_{\vB}$ such that  $\ibcspacekernel_{\vB} < \ibcspace_{\vB}$ contains only the zero matrix tuple;
        \item A non-trivial characteristic block-compatible row tuple subspace $S < \rowtuplespace(\vA)$. 
    \end{enumerate}
    \item The algorithm is canonical in the following sense: 
    Let $\vA' \in \M(n, \F_q)^\ell$ be another matrix tuple such that there exist invertible matrices $L$ and $R$ satisfying $\vA' = L \vA R^{-1}$. The outputs for $\vA$ and $\vA'$ satisfy the following conditions:
    (Let $L$ and $R$ be arbitrary invertible matrices such that such that $\vA' = L \vA R^{-1}$.)
    \begin{enumerate}
        \item If the output for $\vA$ is a representative \IBCtuple sequence of $\vA$ containing a single \IBCtuple $\vB$ with $\ibcspace_{\vB}$ and $\ibcspacekernel_{\vB}$, then the output for $\vA'$ is a representative \IBCtuple sequence of $\vA'$ containing a single \IBCtuple $\vB'$ right-equivalent to $\vB$ with $\ibcspace_{\vB'} = \{\vC R^{-1} : \vC \in \ibcspace_{\vB}\}$ and $\ibcspacekernel_{\vB'} = \{\vC R^{-1} : \vC \in \ibcspace_{\vB}\}$. 
        \item If the output for $\vA$ is a subspace $S < \rowtuplespace(\vA)$, then the output for $\vA'$ is $S R^{-1}$. 
    \end{enumerate}
\end{enumerate}

\end{lemma}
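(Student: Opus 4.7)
The plan is to compute $\vG = (A_1^{-1}A_2,\dots,A_1^{-1}A_\ell)$ (well-defined since every $A_i$ is full rank) and reduce the analysis to the algebra $\Env(\vG)$, using Proposition~\ref{prop:linear_basis} to obtain a linear basis and then the Matrix Algebra Structure Algorithm of Theorem~\ref{thm:algebra_struct} to unravel its structure. Following the overview in Section~\ref{subsec:finite_field}, the dichotomy is that $\vA$ admits a proper nontrivial characteristic block-compatible row tuple subspace iff $\Env(\vG)$ is not isomorphic to a finite field, and the algorithm branches on the three outputs of Theorem~\ref{thm:algebra_struct}.

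In all cases where $\Env(\vG)$ is not a field (not semisimple, semisimple with $\geq 2$ simple components, or simple $\cong \M(m,\F_r)$ with $m>1$), the plan is to extract a canonical proper nontrivial row vector subspace $W < \F_q^n$ from the algebra structure: either $W = \langle\rowvectorspace(X):X\in\rad(\Env(\vG))\rangle$ (non-semisimple case), or $W$ as the row span of a canonically chosen simple component (multi-component case, using dimension with lexicographic tie-break on the basis returned by Proposition~\ref{prop:linear_basis}), or $W$ built from a canonical idempotent extracted via the structure isomorphism (simple $m>1$ case). Each such $W$ is characteristic and block-compatible by Fact~\ref{fact:ibctuple_invariant_basic}(6) combined with closure under sums from part~(1) (since every element of $\Env(\vG)$ is a polynomial in the $A_1^{-1}A_j$), nontrivial by nonvanishing of the chosen substructure, and proper because a nilpotent ideal cannot have row span $\F_q^n$ and the row spans of distinct simple components form a direct sum. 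We then form $T = \{\va\in\rowtuplespace(\vA) : \va^{(i)}\in W\text{ for all }i\in[\ell]\}$ by iterating Fact~\ref{fact:ibctuple_invariant_basic}(4), verify it is proper and nontrivial, and output $T$ as case~(c). If in the simple $m>1$ sub-case no canonical $W$ can be extracted, we instead fall through to the single-IBC-tuple branch, producing case~(b) with nontrivial kernel.

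When $\Env(\vG)\cong \F_r$, the centralizer of $\Env(\vG)$ in $\M(n,\F_q)$ is $\M(n/k,\F_r)$ (with $r=q^k$), which acts transitively on nonzero vectors of $\F_q^n$; since any automorphism $(L,R)$ of $\vA$ forces $R$ to lie in this centralizer with $L = A_1 R A_1^{-1}$, no proper nontrivial characteristic row tuple subspace exists. The canonical IBC-tuple $\vB$ is then built by the closure procedure of Section~\ref{subsec:finite_field}: start from $e_1\vA$ and iteratively adjoin the row tuple whose first coordinate equals the $j$-th coordinate of a previously included row tuple, for every $j\in[\ell]$, terminating when the set is closed. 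The overview's principle that such closures are exactly IBC-tuples in the field case guarantees $\vB$ is a valid IBC-tuple, every IBC-tuple of $\vA$ is equivalent to $\vB$, and the space, dimension, block-compatible, and row tuple space properties of Definition~\ref{def:four_prop} can all be read off the construction, with $T_1=\rowtuplespace(\vA)$, $K_1=0$, $\ibcspacekernel_\vB=0$, and $t_{\vB,i}=1$ for every $i$, yielding case~(a).

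Canonicalness is inherited from the canonicalness clauses of Theorem~\ref{thm:algebra_struct} and Proposition~\ref{prop:linear_basis}: for $\vA' = L\vA R^{-1}$ we have $\vG' = R\vG R^{-1}$ and $\Env(\vG') = R\,\Env(\vG)\,R^{-1}$, so the radical, simple-component decomposition, and structure isomorphism for $\vA'$ are all the $R$-conjugates of those for $\vA$. Consequently every $W$ we construct transforms to $W R^{-1}$, and the row tuple subspace $T$ to $T R^{-1}$, matching property~2(b) of the lemma. The main obstacle is in the field case: the seed $e_1\vA'$ does not literally equal $(e_1\vA)R^{-1}$, so we must argue the two closure procedures yield IBC-tuples right-equivalent under $R$. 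The key ingredient will be that, under the transitive centralizer action, any two seeds $v\vA$ and $v'\vA$ produce right-equivalent closures, so the field structure makes the closure uniquely determined up to the intended right-equivalence, ensuring the outputs for $\vA$ and $\vA'$ align as required.
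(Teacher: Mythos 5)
Your high-level outline matches the paper: compute $\vG=(A_1^{-1}A_2,\dots,A_1^{-1}A_\ell)$, pass $\Env(\vG)$ through the Matrix Algebra Structure Algorithm, branch on radical / multiple simple components / simple $\cong\M(m,\F_r)$, and when $\Env(\vG)$ is a field build the IBC-tuple by the first-coordinate-closure procedure. The radical and multi-component branches, the translation from a canonical singular matrix to a characteristic block-compatible row tuple subspace via Fact~\ref{fact:ibctuple_invariant_basic}, and the field-case transitivity-of-centralizer argument are all in the spirit of the paper.

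The genuine gap is in the simple $\Env(\vG)\cong\M(m,\F_r)$, $m>1$, branch, which is the technical heart of Section~\ref{sec:conjugation}. You write that you will ``extract a canonical idempotent via the structure isomorphism'' and, if that fails, ``fall through to the single-IBC-tuple branch, producing case (b).'' Neither half of this is justified. First, $\M(m,\F_r)$ has no distinguished nontrivial idempotent: $\GL(m,\F_r)$ acts transitively on rank-$k$ idempotents for each $k$, and the structure isomorphism $f$ of Theorem~\ref{thm:algebra_struct} is itself only determined up to conjugation. The whole point of the paper's Full Matrix Algebra Algorithm (plus Claim~\ref{claim:subspace_linear_basis}, Lemma~\ref{lem:diagonal_new}, and Claim~\ref{claim:full_matrix_algebra_main}) is to work over the splitting field $\F_{r^m}$, diagonalize a canonically chosen basis element, and locate either a rank-$1$ element or an element with reducible characteristic polynomial in a way that is genuinely invariant under conjugation and lexicographic in the canonical basis. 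That is a substantial argument that your proposal replaces with a single clause. Second, the fall-through to case~(b) is not a valid escape. When $\Env(\vG)$ is not a field, the closure procedure of Section~\ref{subsec:finite_field} does not produce a single IBC-tuple (the monomial algebra generated by $A_1^{-1}A_j$ is no longer one-dimensional over the first coordinate, so the closure is not an indecomposable block), and you offer no alternative way to compute a case~(b) output canonically, nor an argument that all IBC-tuples are equivalent so that a single representative even exists. In the paper's proof the Full Rank \IBCtuple Space Algorithm never emits case~(b) at all: it shows case~(c) always applies when $\Env(\vG)$ is not a field, precisely because a canonical singular matrix in $\Env(\vG)$ (or a canonically chosen nontrivial subalgebra) can always be found. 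Without the machinery of Section~\ref{sec:conjugation_sub}, the ``if no canonical $W$ can be extracted'' contingency is exactly the case the lemma requires you to handle, and the proof does not handle it.
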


We provide an overview of our approach to prove Lemma~\ref{lem:conjugation_canonical_main}.
Let the input matrix tuple be denoted by $\vA \in \GL(n, \F_q)^\ell$.
The high-level idea of our algorithm is to try to find a non-zero singular matrix (a non-zero matrix that is not full rank) as a linear combination of matrices in $\vA$. 

If such a singular matrix exists, the algorithm will canonically find a desirable singular matrix and use the row vectors in this matrix to compute a characteristic block-compatible row tuple subspace. 
Otherwise, after confirming the non-existence of a desirable singular matrix, the algorithm canonically constructs \IBCtuples of the matrix by analyzing the consequences of this non-existence.

We study the matrix tuple $\vG = (G_1, \dots, G_{\ell - 1})$ such that $G_i = A_1^{-1} A_{i+1}$ for all $i\in[\ell - 1]$ under the conjugation action over $\F_q$. For this action, it is natural to use the matrix algebra generated by $\vG$, i.e., the \emph{enveloping} algebra $\cG:=\Env(\vG)$.
We show that if we can find a non-zero singular matrix in $\cG$, then the row vectors in the singular matrix form a non-trivial characteristic block-compatible row vector subspace, and then we can construct a corresponding characteristic block-compatible row tuple subspace based on the characteristic block-compatible row vector subspace. 

According to the Artin–Wedderburn–Mal’tsev theory, the matrix algebra $\cG$ either contains a non-zero singular matrix or the matrix algebra is isomorphic to a finite field.

As our main algorithmic contribution, we give an algorithm to canonically find a non-zero singular matrix in an arbitrary finite matrix algebra that is not isomorphic to a finite field based on a given canonical basis of the matrix algebra. 
We remark that this is not an easy task, because the total number of singular matrices in the matrix algebra might be a very tiny fraction of all the matrices in the algebra, and it is very inefficient to enumerate a desirable matrix. 
Hence, to achieve a polynomial running time, 
we carefully analyze the structure of the matrix algebra and explore the representation of the matrices in the extension field to exhibit the structure of the singular matrices.

In addition, we explore the finite field condition to directly construct \IBCtuples for $\vA$. Specifically, we show that all the \IBCtuples are equivalent and give an algorithm to canonically select a representative \IBCtuple by exploring the relations between different matrices in $\vA$.

\subsection{Conjugation action for full rank matrix tuples}\label{sec:conjugation_sub}

We give an algorithm to either compute a canonical singular matrix in the matrix algebra or certify the matrix algebra is isomorphic to a finite field. More specifically, we prove the following lemma in this subsection.

\begin{lemma}\label{lemma:conjugate_algebra_main}
There is a Conjugation Matrix Algebra Algorithm satisfying the following properties:
\begin{enumerate}   
    \item Let $\vA = (A_1, \dots, A_{\ell+1}) \in \M(n, \F_q)^{\ell+1}$ be a matrix tuple such that $A_i$ is full rank for each $i\in[\ell + 1]$, and $\vG  = (G_1, \dots, G_\ell)\in \GL(n, \F_q)^\ell$ be the matrix tuple such that $G_i = A_1^{-1} A_{i+1}$ for each $i \in [\ell]$. With $\poly(n, \ell, \log q)$ running time, the Conjugation Matrix Algebra Algorithm with $\vG$ as the input outputs a matrix $C \in  \Env(\vG)$ such that 
    \begin{enumerate}
        \item If $C$ is a zero matrix, then $\Env(\vG)$ is a finite field. 
        \item If $C$ is a non-zero matrix, then $C$ is in $\Env(\vG)$ and $C$ is not full rank. 
    \end{enumerate}
    \item Let $\vG, \vG' \in \GL(n, \F_q)^\ell$ be two inputs of the Conjugation Matrix Algebra Algorithm such that there exists an invertible matrix $P$ such that $\vG' = P \vG P^{-1}$. 
    Let $C$ and $C'$ denote the output for $\vG$ and $\vG'$ respectively. 
    Then for any invertible matrix $P$ such that $\vG' = P \vG P^{-1}$,  $C' = P C P^{-1}$. 
\end{enumerate}
\end{lemma}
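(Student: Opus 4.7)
The plan is to apply the Matrix Algebra Structure Algorithm of Theorem~\ref{thm:algebra_struct} to the enveloping algebra $\cG := \Env(\vG)$, which runs in $\poly(n, \log q)$ time and produces canonical structural data; the canonicity of that algorithm (outputs conjugate by $P$ whenever $\vG' = P\vG P^{-1}$) propagates directly to our output $C$. I split into three cases by its output. In the non-semisimple case---where Theorem~\ref{thm:algebra_struct} returns the Jacobson radical $\rad(\cG) \neq 0$---I apply Proposition~\ref{prop:linear_basis} to $\rad(\cG)$ to obtain a canonical basis $N_1, \ldots, N_k$ and output $C := N_1 + \cdots + N_k$; this is nonzero by linear independence and nilpotent (since $\rad(\cG)$ is a nilpotent ideal), hence singular.

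In the semisimple but non-simple case, where the algorithm returns a decomposition $\cG = \cC_1 \oplus \cdots \oplus \cC_s$ with $s \geq 2$, each simple component $\cC_i$ has a central idempotent $e_i$ projecting onto its isotypic component $V_i \leq \F_q^n$. Since the $V_i$ are pairwise non-isomorphic as simple $\cG$-modules, they admit a canonical total order via intrinsic invariants---for instance $\F_q$-dimension first, then lex comparison of the matrices of $G_1, \ldots, G_\ell$ acting on $V_i$ in a canonical basis. I output $C := e_1$, the idempotent of the smallest component, which is a nontrivial idempotent of rank less than $n$ and hence singular.

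The main obstacle is the simple case, where Theorem~\ref{thm:algebra_struct} returns an isomorphism $f : \cG \to \M(m, \F_r)$. If $m = 1$ then $\cG \cong \F_r$ is a finite field and I output $C := 0$, correctly certifying the field condition. When $m \geq 2$, the inner automorphisms of $\M(m, \F_r)$---which represent the ambiguity in the choice of $f$---fix only the central scalars, so any naive choice $C := f^{-1}(X_0)$ for a noncentral $X_0$ fails to be canonical. My strategy is to use the canonical basis $L_1, \ldots, L_d$ of $\cG$ from Proposition~\ref{prop:linear_basis} together with $f$ to canonically locate an element $x \in \cG$ whose minimum polynomial $\mu_x(t) \in \F_q[t]$ is reducible. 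I enumerate $x$ through a canonical sequence (individual $L_i$, then canonical linear combinations $L_i - \alpha L_j$, then small products and commutators), compute $\mu_x$, and factor it over $\F_q$ via Theorem~\ref{thm:factoring}, stopping at the first reducible $\mu_x$. If $\mu_x = p \cdot q$ with $\gcd(p, q) = 1$, I output $C := p(x)$, a nonzero zero divisor; if $\mu_x = p^e$ with $e \geq 2$, I output $C := p(x)^{e-1}$, a nonzero nilpotent. Either way $C$ is singular.

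The hard part of the analysis is bounding this enumeration to polynomial length. It suffices to exhibit one canonically constructible $x$ with reducible $\mu_x$, and such an $x$ always exists when $m \geq 2$: nontrivial idempotents of $\cG$ (preimages under $f$ of rank-$j$ idempotents in $\M(m, \F_r)$ for $1 \leq j < m$) satisfy $x^2 = x$, so $\mu_x$ divides $t(t-1)$ and is reducible. Using $f$ and the structure of the $\F_r/\F_q$ extension to canonically extract such an element from combinations of the $L_i$ completes the argument. Canonicity is preserved throughout because each building block---Proposition~\ref{prop:linear_basis}, Theorem~\ref{thm:algebra_struct}, Theorem~\ref{thm:factoring}, and all canonical orderings---transforms equivariantly under $\vG \mapsto P \vG P^{-1}$, and the $\poly(n, \ell, \log q)$ runtime follows from the polynomial time of each sub-procedure.
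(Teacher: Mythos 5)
Your three-case decomposition (non-semisimple / semisimple-not-simple / simple) matches the paper's, and your handling of the first two cases is close but has technical slips, while the simple case with $m \geq 2$ has a genuine gap at exactly the step the paper works hardest.

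In the semisimple-not-simple case, your proposal to order the simple components by ``intrinsic invariants---$\F_q$-dimension first, then lex comparison of the matrices of $G_1, \dots, G_\ell$ acting on $V_i$ in a canonical basis'' does not work as stated. Two simple components can have identical dimensions and isomorphic matrix algebras (e.g.\ $\cG \cong \M(2,\F_q) \oplus \M(2,\F_q)$), and the lex comparison is circular: there is no canonical basis of the isotypic component $V_i$, since a simple module admits a nontrivial automorphism group. The fix, which is what the paper does, is to compare the idempotents $e_i$ themselves as $n\times n$ matrices, expressed in coordinates relative to the canonical basis of $\cG$ from Proposition~\ref{prop:linear_basis} (this is what the Subspace Linear Basis Algorithm supplies). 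A similar issue affects your radical case: Proposition~\ref{prop:linear_basis} canonicalizes $\Env(\vA)$ given generators $\vA$, not an arbitrary subspace; Theorem~\ref{thm:algebra_struct} only guarantees $\rad(\cA') = P\rad(\cA)P^{-1}$ as a set, so you still need a canonicalization of a basis of $\rad(\cG)$ relative to the ambient canonical basis before you can sum its elements.

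The core gap is the simple case with $m \geq 2$. Your plan is to enumerate ``canonical'' elements $x$ (individual $L_i$, then $L_i - \alpha L_j$, small products, commutators) until one has reducible minimum polynomial, and you note that nontrivial idempotents always exist, so some such $x$ exists. But this punts on the hard step: the isomorphism $f : \cG \to \M(m,\F_r)$ returned by Theorem~\ref{thm:algebra_struct} is only well-defined up to an inner automorphism of $\M(m,\F_r)$, so $f^{-1}(E_{11})$ is \emph{not} canonical---for two conjugate inputs the two choices of $f$ differ by conjugation by some $P' \in \GL(m,\F_r)$ that does not correspond to conjugation by $P$ on the matrix side. Hence ``using $f$ \dots\ to canonically extract such an element'' is exactly what needs to be constructed, and your enumeration has neither a concrete termination guarantee nor a canonicity argument. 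The paper's Full Matrix Algebra Algorithm solves this by first checking (Step 3(c)(i)) whether any canonical basis element has a nontrivial centralizer and recursing on that subalgebra if so; and otherwise running an intricate construction (roots of $\charpoly(f(G_\alpha))$ over $\F_{r^m}$, the diagonalizing matrices $P_i$, the subspaces $S_{i,j}$ and $V_{i,j}$, and the rank-one matrices $C_{i,j}$) whose outputs are shown in Lemma~\ref{lem:diagonal_new} and Claim~\ref{claim:full_matrix_algebra_main} to be independent of the choice of $P_i$ and to transform equivariantly under conjugation. That independence argument is the substance you would need to supply, and it is not reducible to a generic ``enumerate and stop at the first hit'' scheme.
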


We start by giving an algorithm to compute a canonical linear basis of a given subspace for a linear space given by a canonical linear basis. 
Here the canonical basis means that the basis are invariant under the conjugation operation.

\begin{framed}
\noindent \textbf{Subspace Linear Basis Algorithm}

\noindent \textbf{Input:} Matrix space $S \leq \M(n \times m, \F_q)$ and matrix tuple $\vA = (A_1, \dots, A_\ell) \in \M(n \times m, \F_q)^\ell$ such that $A_1, \dots, A_\ell$ are linearly independent and $S \leq \langle \{A_1, \dots, A_\ell\}\rangle$.

\noindent \textbf{Output:} Linear basis $B_1, \dots,  B_{\dim(S)}$  for $S$. 

\noindent \textbf{Algorithm}
\begin{enumerate}
\item Let $t = 0$. 
\item For every $i = 1, \dots, \ell$, if $\left( A_i + \langle \{A_{i+1}, \dots, A_\ell \} \rangle\right) \cap S \neq \emptyset$, then $t = t + 1$ and  $d_t = i$. 
\item For $i = 1, \dots, t$, let $B_i$ be the matrix in $S$ such that $B_i = \sum_{j = 1}^\ell b_{i, j} A_j$ for $b_{i, j} \in \F_q$ and $b_{i, d_i} = 1$ and $b_{i, d_j} = 0$ for every $j \in \{1, \dots, i - 1\} \cup \{i + 1, \dots, t\}$.
\item Return $B_1, \dots, B_t$. 
\end{enumerate}
\vspace{-.3cm}
\end{framed}

\begin{claim}\label{claim:subspace_linear_basis}
    The following properties for the Subspace Linear Basis Algorithm hold:
    \begin{enumerate}
    \item For an input $S \leq \M(n\times m, \F_q)$ and $\vA \in \M(n \times m, \F_q)^\ell$, in time $\poly(n, m, \ell, \log q)$, the output is a linear basis of $S$. 
    \item The algorithm is canonical in the following sense: for two inputs $S, \vA$ and $S', \vA'$ such that there exist invertible matrices $L$ and $R$ satisfying the following conditions:
    \begin{enumerate}
        \item $\dim(S) = \dim(S')$;
        \item $S' = \langle \{L A R^{-1} : A \in S\} \rangle$;
        \item $\vA' = L \vA R^{-1}$,        
    \end{enumerate}
    then 
    $B_i' = L B_i R^{-1}$ for each $i\in[\dim(S)]$, where $B_1, \dots, B_{\dim(S)}$ and $B_1', \dots, B_{\dim(S)}'$ are the outputs for $S, \vA$ and $S', \vA'$ respectively. 
    
    \end{enumerate}
\end{claim}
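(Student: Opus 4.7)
The plan is to view the algorithm as computing the reduced row echelon form (RREF) of $S$ with respect to the linear coordinates induced by the basis $\vA$, and then to exploit the canonicity of RREF. Concretely, since $A_1,\dots,A_\ell$ are linearly independent, the map $\phi\colon \F_q^\ell \to \M(n\times m, \F_q)$ given by $\phi(c_1,\dots,c_\ell)=\sum_{j=1}^\ell c_j A_j$ is injective, so $S$ corresponds to the coefficient subspace $\widehat S := \phi^{-1}(S) \le \F_q^\ell$, which has the same dimension as $S$. The indices $d_1<\dots<d_t$ selected in Step~2 are exactly the pivot coordinates of $\widehat S$: the condition that $(A_i + \langle A_{i+1},\dots,A_\ell\rangle)\cap S\neq \emptyset$ is equivalent to saying that $\widehat S$ contains a vector whose first nonzero coordinate is $i$ (with value $1$, which we may always arrange by rescaling).

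Given this coordinate picture, I would prove Part~1 by standard linear algebra on $\widehat S$: one has $t = \dim(\widehat S) = \dim(S)$, and for each $i \in [t]$ there is exactly one vector in $\widehat S$ whose coordinate at $d_i$ is $1$ and whose coordinate at every other pivot $d_j$ ($j\neq i$) is $0$. Existence is by Gaussian elimination applied to any basis of $\widehat S$; uniqueness is because the difference of two such candidates would be an element of $\widehat S$ that vanishes on every pivot coordinate, hence (by the definition of the pivot set) must be zero. Pulling back through $\phi$ yields the unique $B_i$ described in Step~3, and these span $S$. Each step of the algorithm reduces to a subspace membership test and solving a linear system, giving $\poly(n,m,\ell,\log q)$ running time.

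For Part~2 (canonicity), I would observe that the conjugation-style map $\psi\colon A\mapsto LAR^{-1}$ is a linear isomorphism from $\langle A_1,\dots,A_\ell\rangle$ to $\langle A_1',\dots,A_\ell'\rangle$ that sends $A_i$ to $A_i'$ and $S$ to $S'$. Consequently the coefficient vector of any $C$ in the basis $\vA$ equals the coefficient vector of $\psi(C)$ in the basis $\vA'$, so $\widehat S = \widehat{S'}$ as subspaces of $\F_q^\ell$. This forces the pivot indices $d_i$ to coincide for the two inputs, and the RREF vectors in $\F_q^\ell$ to coincide; applying $\phi$ on each side gives $B_i' = \psi(B_i) = L B_i R^{-1}$ as required.

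The only delicate step, and the one I would verify most carefully, is the uniqueness of $B_i$ in Step~3: one must invoke the pivot characterization to exclude nonzero elements of $\widehat S$ that are supported entirely outside $\{d_1,\dots,d_t\}$. Once that is in hand, the proof is essentially a translation of the standard canonicity of RREF into this matrix-tuple setting, and the canonicity claim of Part~2 becomes an immediate consequence of the fact that $A \mapsto L A R^{-1}$ is invisible in the coefficient coordinates determined by $\vA$.
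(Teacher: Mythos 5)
Your proof is correct and takes essentially the same approach as the paper's: work in the coefficient space $\F_q^\ell$ determined by the ordered basis $\vA$, identify the algorithm's selected indices as pivot positions, and deduce canonicity from the fact that the coefficient vector of a matrix is preserved under $A \mapsto L A R^{-1}$. The paper's proof is terser (it argues via disjointness of the cosets $A_i + \langle A_{i+1},\dots,A_\ell\rangle$ and an induction on Step~2), whereas you make the RREF picture explicit and verify uniqueness of each $B_i$; but these are the same underlying argument, and your version simply fills in details the paper leaves implicit.
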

\begin{proof}
Since $A_1, \dots, A_\ell$ are linearly independent, 
for any different integers $i, i' \in [\ell]$, 
    \[\left( A_i + \langle \{A_{i+1}, \dots, A_\ell \} \rangle\right) \cap \left( A_{i'} + \langle \{A_{i'+1}, \dots, A_\ell \} \rangle\right)\] is an empty set.
    Hence, the output of the algorithm is a linear basis of $S$. 
    On the other hand, \[\left( A_i + \langle \{A_{i+1}, \dots, A_\ell \} \rangle\right)\] for any $i$ can be computed in polynomial time by taking an arbitrary linear basis of $S$, writing every matrix in the linear basis as a linear combination of $A_1, \dots, A_\ell$, and solving a linear equation to determine if there is a matrix in $S$ is in $ A_i + \langle \{A_{i+1}, \dots, A_\ell \}$. 
    The first property is obtained. 

    For the second property, 
    let $t, d_1, \dots, d_t$ and $t', d_1', \dots, d_t'$ be the variables for $\vA$ and $\vA'$ respectively. 
    By induction on Step 2 of the algorithm, we have $t = t'$ and $d_i = d_i'$ for every $i\in[t]$. Hence, the second property holds. 
\end{proof}

Now we give an algorithm to deal with the case that the matrix algebra $\cG$ is isomorphic to full matrix algebra $\M(m, \F_r)$ for some $m$ and $r$, supposing an isomorphism from $\cG$ to $\M(m, \F_r)$ is given. 
The goal of the algorithm is to find a non-zero singular matrix canonically in the sense that the output matrix is invariant under conjugation operation.

\begin{framed}
\noindent \textbf{Full Matrix Algebra Algorithm}

\noindent \textbf{Input:} Matrix Algebra $\cG$ with linear basis $G_1, \dots, G_\ell \in \GL(n, \F_q)$ and isomorphism $f: \cG \rightarrow \M(m, \F_{r})$.

\noindent \textbf{Output:} Matrix $G \in \cG$.

\begin{enumerate}
\item %Let $E_i = f(G_i)$ for every $i \in [\ell]$. 
Let $\alpha$ be the smallest integer such that $f(G_\alpha)$ is not $c \cdot I$ for some $c \in \F_r$. 
\item Compute the characteristic polynomial of $\charpoly(f(G_\alpha ))$, and an arbitrary root $\lambda \in \F_{r^m}$ of $\charpoly(f(G_\alpha ))$. Denote $\lambda_1 = \lambda, \lambda_2 = \lambda^{r}, \lambda_3 = \lambda^{r^2} \dots,\lambda_{m} = \lambda^{\left(r^{m - 1}\right)}$. 
\item For $i = 1, \dots, m$, 
\begin{enumerate}
\item Compute a matrix $P_i \in \GL(m, \F_{r^m})$ such that 
\[P_i f(G_\alpha)  P_i^{-1} = \diag(\lambda_i, \lambda_i^{r}, \dots, \lambda_i^{r^{m - 1}}) \in \M(m, \F_{r^{m}}).\] 
\item For $j = 1, \dots, m$, 
compute $S_{i, j} = \langle \{G \in \cG: e_1 P_i f(G) P_i^{-1} = 0\}\rangle$. If $\dim(S_{i, j}) > 0$, 
return the first matrix in the output of the Subspace Linear Basis Algorithm for $S_{i, j}$ with $G_1, \dots, G_\ell$.
%and return $H_1$, where $H_1$ is the first matrix in the output of the Subspace Linear Basis Algorithm. as $H_1$, and return 
\item Let $V_{i, j} = \langle \{G \in \cG :  e_1 P_i f(G) P_i^{-1} = b \cdot e_j \text{ for some } b \in \F_{r^m}\}\rangle$.
\item Let $B_i$ be the first matrix of the output of the Subspace Linear Basis Algorithm for $V_{i, 2}$ with $G_1, \dots, G_\ell$. % and denote the first matrix in the output as $B_{i}$. 
If $\charpoly(f(B_i))$ is an irreducible polynomial in $\F_r[x]$, then 
\begin{enumerate}
\item Compute all the $m$ matrices $A_{i, 1}, \dots, A_{i, m} \in V_{i, 1}$ such that $f(A_{i, j})$ has the same characteristic polynomial with $f(B_i)$ for each $j \in [m]$.
\item Compute $C_{i, j} \in \cG$ be the rank 1 matrices such that $C_{i, j} = A_{i, j} + B_i + Z$ for some $Z \in \langle V_{i, 3} \cup \dots \cup V_{i, m} \rangle$. %, and $C_{i, j}$ be the matrix in $\cG$ such that $P_i f(C_{i, j}) P_i^{-1} = D_{i, j}$. 
\end{enumerate}
\end{enumerate}
\item If there exists a $B_i$ such that $\charpoly(f(B_i))$ is not irreducible, return the lexically smallest $B_i$ with respect to $G_1, \dots, G_\ell$ among all the $B_i$ with a reducible $\charpoly(f(B_i))$.  
\item Otherwise, return the lexically smallest matrix $C_{i, j}$ for all $i, j \in [m]$ with respect to linear basis $G_1, \dots, G_\ell$.  
\end{enumerate}
\vspace{-.4cm}
\end{framed}

We first prove some basic properties of the Full Matrix Algebra Algorithm. 

\begin{lemma}\label{lem:diagonal_new}
Suppose we run the Full Matrix Algebra Algorithm for a matrix algebra $\cG$ with a linear basis $G_1, \dots, G_\ell \in \GL(n, \F_q)$ and an isomorphism $f: \cG \rightarrow \M(m, \F_{r})$ for some $m > 1$ such that for each $k \in [\ell]$, $f(G_k)$ is either $c \cdot I$ for some $c\in \F_r$ or a matrix with an irreducible characteristic polynomial. 
We have the following observations:
\begin{enumerate}
    \item Let $P_i$ and $P_i^\dagger$ be two valid matrices obtained in Step 3(a) of the algorithm for any $i \in [m]$, i.e., \[P_i f(G_\alpha) P_i^{-1} = P_i^\dagger f(G_\alpha) (P_i^\dagger)^{-1} = \diag(\lambda_i, \lambda_i^r, \dots, \lambda_i^{r^{\ell-1}}). \]
        %obtained from $C_i$ with $\tilde C_1 = \tilde C_1'$.  
        Then there exists a \emph{diagonal} matrix $Y \in \GL(m, F_{r^m})$ such that for every $j \in [\ell]$ \[Y P_i f(G_j) P_i^{-1} Y^{-1} = P_i^\dagger f(G_j) (P_i^\dagger)^{-1}.\]
        \item If for some $i \in [m]$,  $\dim_{\F_r}(S_{i, 1}) = \dots = \dim_{\F_r}(S_{i, m})= 0$ for $S_{i, 1}, \dots, S_{i, m}$ obtained in Step 3(b)(i), then the following properties hold:
        \begin{enumerate}
            \item 
        $\dim_{\F_r}(V_{i, 1}) = \dots = \dim_{\F_r}(V_{i, m}) = m$. 
    \item Every non-zero matrix $H \in V_{i, 1}$, $P_i f(H) P_i^{-1}$ is a diagonal matrix such that every diagonal entry is non-zero.   
    %$H=\diag(t, t^r, \dots, t^{r^{m-1}})$ for some $t \in \F_{r^m}$.
    \item For every $j \in \{2, \dots, m\}$, every non-zero matrix $H\in V_{i, j}$ satisfies that, for any $k, k'\in[m]$, the $k$-th row  $k'$-th column of $P_i f(H) P_i^{-1}$ is non-zero if and only if \[k'\equiv k+j-1 (\mathrm{mod} \  m).\] 
    \item For any $H \in \vG$, $H$ can be uniquely written as $\sum_{j = 1}^m U_{H, j}$ for $U_{H, j} \in V_{i, j}$ for each $j \in [m]$. 
    \item  For any $G\in \cC$ such that $G$ is of rank $1$, then $P_i f(G) P_i^{-1}$ is also of rank $1$, and every entry in $P_i f(G) P_i^{-1}$ is non-zero.
    \item Let $H_1\in V_{i, 1}$ and $H_2\in V_{i, 2}$ such that the product of non-zero entries of $P_i f(H_1) P_i^{-1}$ is equal to the product of non-zero entries of $P_i f(H_2) P_i^{-1}$. Then there exists a unique rank-$1$ matrix $E$ in $\cG$ such that $U_{E, 1} = H_1$ and $U_{E, 2} = H_2$ for $U_{E, 1}, \dots, U_{E, m}$ defined as the property 2(d).
    Furthermore, such $E$ can be computed in  $\poly(n, \log q)$ time. 
    \end{enumerate}
        \end{enumerate}
    \end{lemma}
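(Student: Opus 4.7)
The plan is to treat part 1 as a short centralizer argument and to derive all of part 2 from a single Galois-twist identity relating $P_i$ to its Frobenius conjugate.

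For part 1, set $D := \diag(\lambda_i,\lambda_i^r,\dots,\lambda_i^{r^{m-1}})$. By Fact~\ref{fact:eigenval} its diagonal entries are pairwise distinct, so the centralizer of $D$ in $\M(m,\F_{r^m})$ is exactly the subalgebra of diagonal matrices. Writing $Y := P_i^\dagger P_i^{-1}$, the equality $P_i f(G_\alpha) P_i^{-1} = P_i^\dagger f(G_\alpha) (P_i^\dagger)^{-1} = D$ gives $YD=DY$, so $Y$ is diagonal; the required conjugation identity $Y P_i f(G_j) P_i^{-1} Y^{-1} = P_i^\dagger f(G_j) (P_i^\dagger)^{-1}$ is then immediate for every $j$.

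For part 2, the key identity is the following Galois twist. Let $\tau$ denote entry-wise Frobenius on $\M(m,\F_{r^m})$. Since $f(G_\alpha)\in\M(m,\F_r)$ is $\tau$-invariant, applying $\tau$ to $P_i f(G_\alpha) P_i^{-1}=D$ shows $\tau(P_i)$ diagonalizes $f(G_\alpha)$ to the cyclic shift of $D$; choosing the cyclic permutation matrix $\Pi$ whose conjugation realigns this shift with $D$, the matrix $\Pi\tau(P_i)$ satisfies the hypothesis of part 1, so $\Pi\tau(P_i)=YP_i$ for some diagonal $Y=\diag(y_1,\dots,y_m)$. Applying $\tau$ to $M := P_i f(H) P_i^{-1}$ for an arbitrary $H\in\cG$ and using $\tau(f(H))=f(H)$, this identity forces the entry-wise recursion
\[M_{k+1,l+1} \;=\; M_{k,l}^{\,r}\cdot y_{l+1}/y_{k+1},\]
with indices read modulo $m$. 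Propagation from row $1$ drives (b) and (c): if $H\in V_{i,j}$ then row $1$ of $M$ is supported only at column $j$, and induction on $k$ forces $M_{k,l}=0$ unless $l-k\equiv j-1\pmod m$. The $\F_r$-linear map $G\mapsto e_1 P_i f(G) P_i^{-1}$ is injective by the hypothesis $\dim_{\F_r}S_{i,1}=0$ and lands in an $\F_r$-space of the same dimension $m^2$, so it is bijective; this yields $\dim_{\F_r}V_{i,j}=m$ for every $j$, and together with the disjoint supports of the $V_{i,j}$ in the $M$-picture gives the direct-sum decomposition of (a) and (d).

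For (e), if $G\in\cG$ has rank $1$ then so does $M$; any zero entry of $M$ propagates through the recursion to an entire zero row or column, and from there to all rows or columns, contradicting $G\ne 0$. For (f), a rank-$1$ matrix with no zero entries factors as $M=uv^{T}$, and $H_1\in V_{i,1}$, $H_2\in V_{i,2}$ prescribe $u_k v_k=a_k$ and $u_k v_{k+1}=b_k$. The ratios $v_{k+1}/v_k=b_k/a_k$ then determine $v$ up to a global scalar, whose existence is equivalent to the cyclic closure $\prod_{k=1}^m a_k=\prod_{k=1}^m b_k$ appearing in the hypothesis; this scalar cancels in $M=uv^{T}$, so the reconstructed $E=f^{-1}(P_i^{-1}MP_i)$ is unique and can be written down in $\poly(n,\log q)$ time. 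That $M$ actually lies in $P_i f(\cG) P_i^{-1}$ is verified by induction on the diagonal offset: the recursion is imposed on offsets $0$ and $1$ by $H_1$ and $H_2$ themselves, and an elementary algebraic identity in the $u_k$ and $v_l$ propagates it to every higher offset.

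The main obstacle I expect is setting up this Galois twist in the correct basis---pinning down the cyclic permutation $\Pi$ and the diagonal $Y$ so that the per-entry recursion is recorded with the right indices---because every structural claim in part 2 is unlocked by that single identity. Once the recursion is in hand, parts (a)--(e) reduce to short propagation and dimension-counting arguments, and the existence, uniqueness, and computability in (f) reduce to the rank-$1$ factorization described above, with the cyclic closure condition in the hypothesis playing precisely the role of the compatibility needed for $M$ to lie in $P_i f(\cG) P_i^{-1}$.
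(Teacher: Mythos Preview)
Your proposal is correct, and your approach is genuinely different from the paper's in part 2. Both proofs handle part 1 identically (the centralizer argument). For parts 2(b) and 2(c), the paper works more locally: it first identifies $V_{i,1}$ with the polynomials in $G_\alpha$ (whose diagonalizations are automatically diagonal), and then for $j\ge 2$ uses the module structure---showing $H_1 H_j, H_j H_1 \in V_{i,j}$ for $H_1\in V_{i,1}$ and matching up eigenvalues of a specific $H_1$ acting on the left and right to pin down the support pattern. For existence in 2(f), the paper does not verify membership directly but instead counts: the number of pairs $(H_1,H_2)$ satisfying the product condition is exactly $(r^m-1)^2/(r-1)$, which equals the number of rank-$1$ matrices in $\M(m,\F_r)$, so the uniqueness argument upgrades to a bijection.

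Your Frobenius-twist approach trades these separate arguments for a single identity: the entrywise recursion $M_{k+1,l+1}=M_{k,l}^{\,r}\cdot y_{l+1}/y_{k+1}$ simultaneously forces the diagonal support pattern, the nonvanishing in 2(b), 2(c), 2(e), and---by an $\F_r$-dimension count---characterizes $P_i f(\cG) P_i^{-1}$ exactly as the matrices satisfying the recursion. The existence step in 2(f) then becomes the algebraic check that a rank-$1$ $uv^T$ with prescribed diagonal and superdiagonal automatically satisfies the recursion on every offset; this reduces to showing $\phi_k\psi_l=y_{l+1}/y_{k+1}$ where $\phi_k=u_{k+1}/u_k^r$, $\psi_l=v_{l+1}/v_l^r$, which follows from the telescoping $\phi_k=\phi_1\,y_2/y_{k+1}$. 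Your route is more unified and makes the role of the Galois structure explicit; the paper's route avoids setting up the Frobenius conjugate of $P_i$ and is slightly more hands-on, with the counting argument in 2(f) sidestepping the verification entirely.
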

    \begin{proof}
    Since $m > 1$, not all the matrices in $\M(m, \F_r)$ is $c \cdot I$ for some $c \in \F_r$. 
Hence, $\alpha$ in Step 1 exists, and $G_\alpha$ has an irreducible characteristic polynomial. 
Since $G_\alpha$ has an irreducible polynomial, by Fact~\ref{fact:eigenval}, $\charpoly(f(G_\alpha))$ has distinct roots over $\F_{r^m}$, and $\lambda_1, \dots, \lambda_m$ defined by Step 2 of the algorithm are the distinct roots of $\charpoly(f(G_\alpha))$. 

Now we focus on the algorithm for one $i \in [m]$. Since the characteristic polynomial of $G_\alpha$ has distinct roots over $\F_{r^m}$, 
the matrix $P_i \in \M(m, \F_{r^m})$ always exists.
So, for the first property, we have 
    \[P_i^\dagger P_i^{-1}\diag(\lambda_i, \lambda_i^r, \dots, \lambda_i^{r^{\ell-1}})P_i (P_i^\dagger)^{-1} = P_i^\dagger f(G_\alpha) (P_i^\dagger)^{-1} = \diag(\lambda_i, \lambda_i^r, \dots, \lambda_i^{r^{\ell-1}}),\]
    and thus $P_i^\dagger P_i^{-1}$ is invertible and must be a diagonal matrix. 
    Let $Y = P_i^\dagger P_i^{-1}$. The first property follows.

    For the property 2(a), note that for every nonzero $c\in \F_{r^m}$ and any $j\in [m]$, there exists at most one matrix $D\in V_{i, j}$, such that the first row and $j$-th column of $P_i f(D)P_i^{-1}$ equals $c$. As otherwise, there exists a non-zero matrix $G \in \cG$ such that $P_i f(G) P_i^{-1}$ has the first row being all-zero, a contradiction to the condition of $\dim_{\F_r}(S_{i, 1}) = 0$. It follows that $\dim_{\F_r}(S_{i, j})\leq m$ for every $j\in m$. 

On the other hand, let $U$ be the linear span (with coefficients in $\F_r$) of the first rows of $P_i f(G) P_i^{-1}$ for all the $G\in \cG$. By $\dim_{\F_r}(S_{i, 1})=0$, we have that the $\dim_{\F_r}(U) = m^2$ as $\dim_{\F_r}(\M(m, \F_r)) = m^2$. It follows for every $j \in [m]$ and $c\in \F_{r^m}$, there exists $H \in V_{i, j}$ such that the first row $j$-th column of $P_i f(H) P_i^{-1}$ equals $c$. And thus we have for all $j \in[m]$, $\dim_{\F_r}(V_{i, j})=m$, and the property 2(a) follows.

For the property 2(b), let $g(x)\in\F_r[x]$ be an arbitrary polynomial with coefficients in $\F_r$. Then $g(P_i f(G_\alpha) P_i^{-1})$ is a diagonal matrix over $\F_{r^m}$, so $G_\alpha$ is in $V_{i, 1}$. As $\charpoly(f(G_\alpha))$ is irreducible over $\F_r$, $\langle \{g( P_i f(G_\alpha) P_i^{-1})\mid g(x)\in\F_r[x]\}\rangle$ is a linear space of dimension $m$ over $\F_r$. 
Since $g( P_i f(G_\alpha) P_i^{-1}) = P_i g(f(G_\alpha)) P_i^{-1}$ for any polynomial $g \in \F_r[x]$, 
by the property 2(a),  
\[V_{i, 1} = \langle\{G \in \cG: \exists g(x)\in\F_r[x], P_i f(G) P_i^{-1}= g( P_i f(G_\alpha) P_i^{-1})\}\rangle.\] 
Furthermore, every diagonal entry of $P_i f(G) P_i^{-1}$ is non-zero for any non-zero $G \in V_{i, 1}$ is non-zero, because otherwise $\dim_{\F_r}(S_{i, k}) > 0$ for some $k \in [m]$, contradicting to the that $\dim_{\F_r}(S_{i, k}) = 0$ for every $k\in [m]$. 
The property 2(b) then follows.

   % Observe that $g(P_i E_\alpha P_i^{-1})=\diag(g(\lambda_i), g(\lambda_i^r), \dots, g(\lambda_i^{r^{m-1}}))$ for any $g(x) \in \F_r[x]$. As $g(x)\in\F_r[x]$, we have $g(\lambda_i^{r})=(g(\lambda_i))^r$. The property 2(b) then follows.

    For the property 2(c), let $j$ be an arbitrary number in $\{2, \dots, m\}$. 
    Let $H_1$ be a matrix in $V_{i, 1}$ and $H_j$ be a matrix in $V_{i, j}$. As $P_i f(H_1) P_i^{-1}$ is a diagonal matrix by the property 2(b) and $P_i f(H_j) P_i^{-1}$ has only $j$-th column non-zero in the first row, we have $H_1 H_j$ and $H_j H_1$ are in $V_{i, j}$ because 
    \[P_i f(H_1 H_j) P_i^{-1} = P_i f(H_1)f(H_j) P_i^{-1} =  P_i f(H_1) P_i^{-1} P_i f(H_j) P_i^{-1}\] 
    and \[P_i f( H_j H_1) P_i^{-1} = P_i f(H_j)f(H_1) P_i^{-1} =  P_i f(H_j) P_i^{-1} P_i f(H_1) P_i^{-1}.\] 
    Furthermore, we have for $\tilde H_1\neq H_1\in V_{i, j}$, $\tilde H_1'H_j\neq H_1 H_j$, and $H_j\tilde H_1\neq H_j H_1$. Similarly, for $\tilde H_j \neq H_j \in V_{i, j}$, $H_1\tilde H_j\neq H_1H_j$ and $\tilde H_j'H_1\neq H_jH_1$. It follows that for any $H_j, \tilde H_j \in V_{i, j}$, there exist $H_1, \tilde H_1\in V_{i, 1}$, such that $\tilde H_j=H_1H_j=H_j\tilde H_1$. 

    Now take $T \in V_{i, j}$, and let $t_{k, k'}$ be the $k$-th row $k'$-th column of $P_i f(T) P_i^{-1}$. 
    Let $H_1 \in V_{i, 1}$ be the matrix such that $P_i f(H_1) P_i^{-1}=\diag(\lambda_i, \dots, \lambda_i^{r^{m-1}})$. %Note that by the property 2(b), $\lambda_i=\lambda_1^{r^{i-1}}$. 
    Then 
    $$P_i f(H_1T) P_i^{-1} =P_i f(H_1) P_i^{-1} P_i f(T) P_i^{-1} =\begin{bmatrix}
        0 & \dots & \lambda_i t_{1, j} & \dots & 0 \\
        \lambda_i^r t_{2, 1} & \dots & \lambda_i^r t_{2, j} & \dots & \lambda_i^r t_{2, m}\\
        \dots\\
        \lambda_i^{r^{m-1}} t_{m, 1} & \dots & \lambda_i^{r^{m-1}} t_{m, j} & \dots & \lambda_i^{r^{m-1}} t_{m, m}\\
        \end{bmatrix}.$$
    Take $\tilde H_1 \in V_{i, 1}$ such that $H_1T=T\tilde H_1$. Suppose $P_i f(\tilde H_1)P_i^{-1}=\diag(\mu_1, \dots, \mu_m)$. By $H_1T=T\tilde H_1$, $\mu_j=\lambda_i$. By the property 2(b), we have \[\tilde P_i f(H_1) P_i^{-1}=\diag(\lambda_i^{m - j + 1}, \dots, \lambda_i^{m - 1}, \lambda_i, \lambda_i^r, \dots, \lambda_i^{m - j}).\] 
    By $H_1T=T\tilde H_1$, for any $k\in [m]$, the $k$-th row $k'$-th column of $P_i f(T) P_i^{-1}$ is zero for all the $k'\not\equiv k+j-1 (\mathrm{mod} \  m)$. 

    Furthermore, for any $k, k' \in [m]$ such that $k' \equiv k+j-1 (\mathrm{mod} \  m)$, the $k$-th row $k'$-th column of $P_i f(T) P_i^{-1}$ must be non-zero, because otherwise $\dim_{\F_r}(S_{i, k}) > 0$, contradicting to the assumption that $\dim_{\F_r}(S_{i, k}) = 0$ for every $k\in [m]$. 
    The property 2(c) then follows. 
    
    The property 2(d) is obtained by property 2(a), 2(b), and 2(c). 

    For the property 2(e), let $G$ be an arbitrary matrix in $\cG$ such that $G$ is a rank 1 matrix. 
    Since $f$ is an isomorphism between $\cG$ and $\M(m, \F_r)$, $P_i f(G)P_i^{-1}$ is also a rank one matrix. 
    Hence, $P_i f(G) P_i^{-1}= u^T v$, where $u, v\in \F_{r^m}^m$. As $P_i f(G) P_i^{-1}$ is rank-$1$, $P_i f(G) P_i^{-1}$ has a non-zero entry. Suppose $k$-th row $k'$-th column of $P_i f(G) P_i^{-1}$ is non-zero. By the property 2(b), 2(c) and 2(d), letting 
    \[h = \left\{\begin{array}{ll} (k' - k + 1 \pmod{m}) & \text{ if } (k' - k + 1 \pmod{m}) \neq 0\\
    m & \text{ if } (k' - k + 1 \pmod{m}) = 0
    \end{array}\right.,
    \]
    the $k''$-th row $k'''$-th column of $P_i f(G) P_i^{-1}$ is non-zero for any $k'', k'''\in[m]$ satisfying $k''' \equiv k'+h-1 (\mathrm{mod} \  m)$. 
    This implies that every entry of $u$ and $v$ is non-zero, and it follows that every entry in $P_i f(G) P_i^{-1}$ is non-zero, and thus the property 2(e) holds.

    Now we prove the property 2(f). 
    We first show that every rank 1 matrix $H$ in $\cG$ is determined by $U_{H, 1}$ and $U_{H, 2}$, and satisfies the condition that the product of non-zero entries of $P_i f(U_{H, 1})P_i^{-1}$ is equal to the product of non-zero entries of $P_i f(U_{H, 2})P_i^{-1}$, where $U_{H, 1}, \dots, U_{H, m}$ is matrices defined by the property 2(d). 
    Let $P_i f(U_{H, 1})P_i^{-1} = \diag(\theta_1, \dots, \theta_m)$ and         
        $$P_i f(U_{H,2})P_i^{-1} = \begin{bmatrix}
        0 & \kappa_1 & 0 & \dots  & 0 \\
        0 & 0 & \kappa_2 & \dots & 0 \\
        \vdots & \vdots & \ddots & \ddots & \vdots \\
        0 & 0 & \dots & 0 & \kappa_{m-1}\\
        \kappa_m & 0 & \dots & 0 & 0 \\
        \end{bmatrix}.$$ 
%        By our assumption, $\prod_{i\in[\ell]}\lambda_i=\prod_{i\in[\ell]}\mu_i$, which is non-zero by Claim~\ref{claim:non-zero}.

        We show that $H$ is fully determined by $U_{H, 1}$ and $U_{H, 2}$. 
        %We construct a rank-$1$ matrix $E$ such that $U_{E, 1} = H_1$ and $U_{E, 2} = H_2$ as follows: 
        For $P_i f(H)P_i^{-1}$ to be a rank-$1$ matrix, there must exist $u=(u_1, \dots, u_m)$, $v=(v_1, \dots, v_m)\in \F_{r^m}^m$, such that $P_i f(H)P_i^{-1}=u^T v$. Therefore, $\theta_k=u_k v_k$, and $\kappa_k=u_k v_{(k \mod m) + 1}$. Then the ratios $u_1/u_m, u_2/u_1, \dots, u_m/u_{m - 1}$, and $v_1/v_2, v_2/v_3, \dots, v_m/v_1$, are all determined by $\theta_1, \dots, \theta_m$ and $\kappa_1, \dots, \kappa_m$. 
        Note that 
        \[\frac{\kappa_1}{\theta_1}\cdot \ldots \cdot \frac{\kappa_m}{\theta_m} =\frac{v_2}{v_1}\cdot \ldots \cdot \frac{v_1}{v_m} = 1.\]
        We have the product of non-zero entries of $P_i f(U_{H, 1})P_i^{-1}$ is equal to the product of non-zero entries of $P_i f(U_{H, 2})P_i^{-1}$.

        For any $k$ and $k'$ such that $k'>k+1$, the $k$ -th row $k'$-th column of $P_i f(H)P_i^{-1}$ is 
        \[u_kv_{k'}=u_kv_{k+1}\cdot (v_{k+2}/v_{k+1})\cdot\ldots\cdot(v_{k'}/v_{k'-1})=\kappa_k\cdot (v_{k+2}/v_{k+1})\cdot\ldots\cdot(v_{k'}/v_{k'-1}).\] For $k'<k$, the $k$-th row $k'$-th column of $P_i f(H)P_i^{-1}$ is 
        \[u_k v_{k'}=u_kv_k\cdot (v_{k-1}/v_k)\cdot\ldots\cdot (v_{k'}/v_{k'+1})=\theta_k\cdot (v_{k-1}/v_k)\cdot\ldots\cdot (v_{k'}/v_{k'+1}).\] Therefore, all other entries of $P_i f(H)P_i^{-1}$ are determined by $\theta_k$ and $\kappa_k$. This shows that such $H$ is determined by $U_{H, 1}$ and $U_{H, 2}$. 

        On the other hand, note that the number of $(H_1, H_2)\in V_{i, 1}\times V_{i, 2}$ satisfying products of non-zero entries of $P_i f(H_1)P_i^{-1}$ and $P_i f(H_2)P_i^{-1}$ being the same is $(r^m-1)\cdot(r^m-1)/(r-1)$, which matches the number of rank-$1$ matrices in $\M(m, \F_r)$. This can be seen by the property 2(a) and the fact that the product of non-zero entries of $P_i f(H)P_i^{-1}$
        for any $H\in V_{i, j}$ is in $\F_r$ for every $j \in [m]$. 
        
        Finally, the running time to compute a matrix $E$ based on given $H_1$ and $H_2$ can be done in $\poly(n, \log q)$ time by $q^\ell = r^{m^2}$ using the fact that $f$ is an isomorphism.  Then the property 2(f) holds.
\end{proof}

Now we are ready to show the Full Matrix Algebra Algorithm finds a singular matrix canonically.

\begin{claim}\label{claim:full_matrix_algebra_main}
    The following properties for the Full Matrix Algebra Algorithm hold:
    \begin{enumerate}
    \item For a matrix algebra $\cG$ with a linear basis $G_1, \dots, G_\ell \in \GL(n, \F_q)$ and an isomorphism $f: \cG \rightarrow \M(m, \F_{r})$ for some $m > 1$ such that for each $i \in [\ell]$, $f(G_i)$ is 
    either $c \cdot I$ for some $c\in \F_r$ or a matrix with an irreducible characteristic polynomial,
    in $\poly(n, \log q)$ time, the output is a matrix $G \in \cG$ that is either not full rank or has a reducible characteristic polynomial over $\F_r$. 
    \item For different executions of the algorithm with the same input, the output is fixed. 
    \item The algorithm is canonical in the following sense: 
    Let $\cG$ with linear basis $G_1, \dots, G_\ell$ and isomorphism $f: \cG \rightarrow \M(m, \F_r)$ be one input, and $\cG'$ with linear basis $G_1', \dots, G_\ell'$ and isomorphism $f': \cG' \rightarrow \M(m', \F_{r'})$ be another input to the Full Matrix Algebra Algorithm such that there exist an invertible matrix $L$ such that $G_i' = L G_i L^{-1}$ for each $i \in [\ell]$. 
    For any invertible matrix $L$ such that $G_i' = L G_i L^{-1}$ for each $i \in [\ell]$, 
    $G' = L G L^{-1}$, where $G$ and $G'$ are the outputs by the algorithm for $G_1, \dots, G_\ell, f$ and $G_1', \dots, G_\ell', f'$ respectively. 
    
    \end{enumerate}
\end{claim}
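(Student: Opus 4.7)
The plan is to verify the three properties in sequence: correctness with polynomial running time, determinism of the output, and finally the conjugation equivariance, which is the main obstacle.

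For property 1, the running time bound is largely routine. Under the assumption $r^m \leq q^{O(n)}$ (which follows from how $f$ was produced upstream via Theorem~\ref{thm:algebra_struct}), each arithmetic operation over $\F_{r^m}$ costs $\poly(n, \log q)$ bits; Step 2 invokes Theorem~\ref{thm:factoring} to find a root in polynomial time; Steps 3(a)--3(b) reduce to solving linear systems over $\F_{r^m}$ and applying the Subspace Linear Basis Algorithm, which is polynomial by Claim~\ref{claim:subspace_linear_basis}; and the outer loops have at most $m \leq n$ iterations. For correctness, the returned $G \in \cG$ is always either singular or has a reducible characteristic polynomial: in the $S_{i,j}$ branch, every element $G \in S_{i,j}$ has $P_i f(G) P_i^{-1}$ with a zero first row, so $G$ is singular via the isomorphism $f$; in the Step 4 branch, $B_i$ is returned precisely because $\charpoly(f(B_i))$ is reducible; in the Step 5 branch, $\charpoly(f(B_i))$ is irreducible so by Lemma~\ref{lem:diagonal_new} property 2(b) the commutative algebra $V_{i,1}$ contains exactly $m$ matrices with characteristic polynomial equal to $\charpoly(f(B_i))$, and Lemma~\ref{lem:diagonal_new} property 2(f) then guarantees each constructed $C_{i,j}$ exists and is rank $1$.

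For property 2, the algorithm's non-determinism comes from the choice of root $\lambda$ in Step 2 and the conjugating matrix $P_i$ in Step 3(a) (the randomization in Theorem~\ref{thm:factoring} is subsumed by which root is produced). By Fact~\ref{fact:eigenval}, the $m$ roots $\lambda_1, \ldots, \lambda_m$ form a Frobenius orbit, so different choices of $\lambda$ merely permute the index $i \in [m]$; since Steps 4 and 5 take the lexically smallest matrix over all $i$ (and $j$), the output is invariant under this permutation. For a fixed $\lambda$, Lemma~\ref{lem:diagonal_new} property 1 says two valid choices of $P_i$ differ by conjugation by a diagonal matrix $Y$, and such conjugation preserves the cyclic support pattern defining each $V_{i,j}$, so the subspaces $S_{i,j}$ and $V_{i,j}$ are independent of the choice of $P_i$. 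The matrix $B_i$ is then determined by the canonical Subspace Linear Basis Algorithm applied to $V_{i,2}$, the $A_{i,j}$'s are determined by their characteristic polynomials inside $V_{i,1}$, and each $C_{i,j}$ is uniquely determined by the pair $(A_{i,j}, B_i)$ via the uniqueness statement of Lemma~\ref{lem:diagonal_new} property 2(f).

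The main obstacle is the conjugation equivariance of property 3, because the two isomorphisms $f$ and $f'$ are independently specified inputs. The key tool is the Skolem--Noether theorem: the composition $G \mapsto f'(L G L^{-1})$ is another $\F_r$-algebra isomorphism from $\cG$ to $\M(m, \F_r)$, so there exists $Q \in \GL(m, \F_r)$ with $f'(L G L^{-1}) = Q f(G) Q^{-1}$ for all $G \in \cG$ (in particular forcing $m' = m$ and $r' = r$). Consequently, the index $\alpha$ is the same in both runs, since the condition $f(G_\alpha) \neq c I$ is conjugation-invariant, and $\charpoly(f'(G'_\alpha)) = \charpoly(Q f(G_\alpha) Q^{-1}) = \charpoly(f(G_\alpha))$, so the same set of roots in $\F_{r^m}$ is available in Step 2. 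For each choice of $\lambda_i$, the matrix $P_i Q^{-1}$ is a valid choice of $P_i'$ in Step 3(a) for $\cG'$, and then
\[
P_i' f'(L G L^{-1}) (P_i')^{-1} = P_i Q^{-1} \cdot Q f(G) Q^{-1} \cdot Q P_i^{-1} = P_i f(G) P_i^{-1}.
\]
Therefore $S_{i,j}' = L S_{i,j} L^{-1}$ and $V_{i,j}' = L V_{i,j} L^{-1}$; applying the canonicity clause of Claim~\ref{claim:subspace_linear_basis} (with $L = R$) gives $B_i' = L B_i L^{-1}$ and $A_{i,j}' = L A_{i,j} L^{-1}$, and the uniqueness in Lemma~\ref{lem:diagonal_new} property 2(f) then forces $C_{i,j}' = L C_{i,j} L^{-1}$. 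Since the lexicographic order with respect to the basis $G_1, \ldots, G_\ell$ on $\cG$ corresponds, under $G \mapsto L G L^{-1}$, to the lexicographic order with respect to $G_1', \ldots, G_\ell'$ on $\cG'$, the lexically smallest outputs satisfy $G' = L G L^{-1}$. Any residual ambiguity from choosing $P_i'$ not of the form $P_i Q^{-1}$ is absorbed by property 2 applied to $\cG'$ alone.
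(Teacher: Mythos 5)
Your proof of Properties~1 and~2 is sound and follows essentially the same approach as the paper. However, Property~3 has a genuine gap in the appeal to Skolem--Noether.

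You write that the composition $G \mapsto f'(LGL^{-1})$ is ``another $\F_r$-algebra isomorphism from $\cG$ to $\M(m, \F_r)$,'' and invoke Skolem--Noether to produce a single $Q \in \GL(m, \F_r)$ with $f'(LGL^{-1}) = Q f(G) Q^{-1}$. But $\cG$ is handed to you only as an $\F_q$-algebra, and the two maps $f$ and $f'\circ(L\cdot L^{-1})$ are only guaranteed to be $\F_q$-algebra isomorphisms. They each transport the center $Z(\cG) \cong \F_r$ onto the scalar matrices $\F_r\cdot I$, but their restrictions to $Z(\cG)$ may differ by a nontrivial element of $\mathrm{Gal}(\F_r/\F_q)$. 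Skolem--Noether (which requires the codomain to be central simple over the base field of linearity) then only gives $f'(LGL^{-1}) = Q\,\sigma(f(G))\,Q^{-1}$ for some $Q \in \GL(m,\F_r)$ and a possibly nontrivial $\sigma \in \mathrm{Gal}(\F_r/\F_q)$ acting entrywise. Nothing in the hypotheses of the claim rules out $\sigma \neq \mathrm{id}$: the claim only assumes $G_i' = L G_i L^{-1}$ and places no compatibility condition on $f$ versus $f'$.

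The consequence is that $P_iQ^{-1}$ need not be a valid $P_i'$ for the same index $i$: conjugating $\sigma(f(G_\alpha))$ by $P_i$ yields $\diag(\sigma(\lambda_i),\ldots)$ rather than $\diag(\lambda_i,\ldots)$, so the index $i$ for $\cG$ corresponds to a possibly different index $i'$ for $\cG'$, and within each $V_{i,1}$ the ordering of $A_{i,1},\ldots,A_{i,m}$ (and hence of $C_{i,j}$) may be permuted. The paper's proof handles exactly this: it notes explicitly that $g$ is ``an automorphism of $\M(m,\F_r)$ up to an automorphism of $\F_r$,'' introduces the field isomorphism $h$, uses Fact~\ref{fact:eigenval} to find the shift $s$ with $\lambda' = (h(\lambda))^{r^s}$, and pairs the loop variables by a bijection $i \leftrightarrow i'$ together with a permutation $\rho:[m]\to[m]$ on the inner index so that $C_{i',j}' = L\,C_{i,\rho(j)}\,L^{-1}$. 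The lexical minimum over all $i,j$ is then invariant under both re-indexings. Your closing sentence about ``residual ambiguity absorbed by property~2'' does not repair this, because property~2 only compares runs with the \emph{same} input and says nothing about relating runs on $(\cG,f)$ and $(\cG',f')$. To fix the proof, either replace the Skolem--Noether step with the paper's Galois-twist bookkeeping, or observe that in the algorithm Steps~4 and~5 minimize over the full Frobenius orbit of indices and argue the invariance directly from that.
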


\begin{proof}
Since $f$ is an isomorphism between $\cG$ and $\M(m, \F_r)$, $\cG$ has the same number of matrices as $\M(m, \F_r)$, 
we have $\ell \leq n^2$, and $q^\ell = r^{m^2}$, and thus $m = O(\poly(n, \log q))$ and $\log r = O(\poly(n, \log q))$.

    Since $m > 1$, not all the matrices in $\M(m, \F_r)$ is $c \cdot I$ for some $c \in \F_r$. 
Hence, $\alpha$ in Step 1 exists, and $G_\alpha$ has an irreducible characteristic polynomial. 
Since $G_\alpha$ has an irreducible polynomial, by Fact~\ref{fact:eigenval}, $\charpoly(f(G_\alpha))$ has distinct roots over $\F_{r^m}$, and $\lambda_1, \dots, \lambda_m$ defined by Step 2 of the algorithm are the distinct roots of $\charpoly(f(G_\alpha))$. By Theorem~\ref{thm:factoring}, Step 2 can be done in $\poly(n, \log q)$ time. 

For each $i$ of Step 3, Step 3(a), 3(b), and 3(c) take $\poly(n, \log q)$ time. 
For Step 3(d), by Theorem~\ref{thm:factoring}, it takes $\poly(n, \log q)$ time to determine whether the characteristic polynomial is an irreducible polynomial. 
To compute $A_{i, 1}, \dots, A_{i, m}$, we only need to compute the roots of the characteristic polynomial of $B_i$ over $\F_{r^m}$ and find all the matrices $A\in V_{i, 1}$ such that 
the first row first column of $P_i f(A) P_i^{-1}$ equal to a root of the characteristic polynomial of $B_i$ for any $k\in[m]$. By Lemma~\ref{lem:diagonal_new}, there are $m$ such matrices and they can be computed in $\poly(n, \log q)$ time by Theorem~\ref{thm:factoring}.
By Lemma~\ref{lem:diagonal_new}, $D_{i, j}$ for each $j \in [m]$ is unique for fixed $A_{i, j}$ and $B_i$, and can be computed in $\poly(n, \log q)$ time. 

By Claim~\ref{claim:subspace_linear_basis} and Lemma~\ref{lem:diagonal_new}, the output of the algorithm is either not full rank or has a reducible characteristic polynomial over $\F_r$, and the running time of the algorithm is $\poly(n, \log q)$. Thus the first property holds. 

Now we prove the second property. 
We first show that for any $i \in [m]$, the different choices of $P_i$
does not change the output. 
Let $Y$ be an arbitrary invertible diagonal matrix in $\GL(m, \F_{r^m})$, we have
\[\langle \{G \in \cG: e_1 P_i f(G) P_i^{-1} = 0\}\rangle = \langle \{G \in \cG: e_1 Y P_i f(G) P_i^{-1} Y^{-1} = 0\}\rangle\]
and 
\[\begin{split}
& \langle \{G \in \cG :  e_1 P_i f(G) P_i^{-1} = b \cdot e_j \text{ for some } b \in \F_{r^m}\}\rangle \\
= & \langle \{G \in \cG :  e_1 Y P_i f(G) P_i^{-1} Y^{-1} = b \cdot e_j \text{ for some } b \in \F_{r^m}\}\rangle\\
\end{split}\]
Hence $S_{i, j}$ and $V_{i, j}$ do not change for all the $j \in [m]$, and thus $B_i, A_{i, 1}, \dots, A_{i, m}, C_{i, 1}, \dots, C_{i, m}$ do not change by different choices of $P_i$ by Claim~\ref{claim:subspace_linear_basis}. 
Together with Fact~\ref{fact:eigenval}, the second property holds. 

Now we prove the third property. Since $f$ is an isomorphism  from $\cG$ to $\M(m, \F_r)$, $f'$ is an isomorphism from $\cG'$ to $\M(m', \F_r)$, and there exists an invertible matrix $L$ such that $G_i' = L G_i L^{-1}$ for any $i\in [\ell]$. 
Then $g : \M(m, \F_r) \rightarrow \M(m', \F_{r'})$ such that $g(A) = f'(L^{-1} f^{-1}(A) L)$ for any $A \in \M(m, \F_r)$ is an isomorphism from $\M(m, \F_r)$ to $\M(m',\F_{r'})$. Hence, $m = m'$, $r = r'$, and $g$ is an automorphism of $\M(m, \F_r)$ up to an automorphism of $\F_r$.

Let $h$ denote an isomorphism from the finite field $\F_r$ for $f$ to the finite field $\F_{r'}$ for $f'$.

For Step 1 of the algorithm, for any matrix $A$, 
$A = c I$ for some $c \in \F_r$ if and only if $g(A)$ is $c' I$ for some $c' \in \F_{r'}$. 
Hence, the integer $\alpha$ selected in Step 1 of the algorithm is the same for $\cG$ and $\cG'$.

For Step 2 of the algorithm,  $\charpoly(f(G_\alpha))$ and $\charpoly(f'(G_\alpha'))$ are same up to the isomorphism $h$.
Let $\lambda$ and $\lambda'$ be the roots selected for $\cG$ and $cG'$ respectively. 
By Fact~\ref{fact:eigenval}, there is an integer $s \in \{0, 1, \dots, m-1\}$ such that $\lambda' = (h(\lambda))^{r^{s}}$. 

For Step 3, let $i$ and $i'$ be for loop variables for $\cG$ and $\cG'$ respectively such that $\lambda_{i'}$ for $\cG'$ equals $h(\lambda_i)$ with $\lambda_i$  for $\cG'$. 
Thus, we have $S_{i', j}' = L S_{i, j} L^{-1}$ and $V_{i', j}' = L V_{i, j} L^{-1}$ for any $j\in [m]$. And consequently, $B_i' = L B_iL^{-1}$, and there is a bijection $\rho : [m]\rightarrow [m]$ such that $A_{i', j}' = L A_{i, \rho(j)} L^{-1}$ and  $C_{i', j}' = L C_{i, \rho(j)} L^{-1}$ for any $j\in[m]$. 
By Claim~\ref{claim:subspace_linear_basis}, the third property holds.
\end{proof}

Now we give the algorithm for Lemma~\ref{lemma:conjugate_algebra_main}. 
Roughly speaking, we consider three cases of the algorithm: (1) if $\cG$ is not semisimple, then $\cG$ has a nontrivial radical, and we find a canonical singular matrix in the radical; (2) if the algorithm is a direct sum of at least two components, then the algorithm find a canonical singular matrix by comparing the identities of all the components; (3) if $\cG$ is isomorphic to a full matrix algebra $\M(m, \F_r)$, then we use the Full Matrix Algebra to handle the case of $m > 2$, or certify $\cG$ is isomorphic to a finite field.

\begin{framed}
\noindent \textbf{Conjugation Matrix Algebra Algorithm}

\noindent \textbf{Input:} Matrix tuple $\vG = (G_1, \dots, G_\ell) \in \GL(n, \F_q)^\ell$.

\noindent \textbf{Output:} 
Matrix $C \in \Env(\vG)$.

\begin{enumerate}
\item Run the Linear Basis Algorithm to compute a linear basis of $\cG = \Env(\vG)$, and denote the output as a matrix tuple $\vG^\dagger = (G_1^\dagger, \dots, G_\tau^\dagger) \in \M(n, \F_q)^\tau$ for some integer $\tau$. 
\item For $i = 1, \dots, \tau$, if $G_i^\dagger$ is not full rank, then return $G_i^\dagger$.

\item Run the  Matrix Algebra Structure Algorithm for $\cG$.
\begin{enumerate}
\item If the output is the nontrivial radical $\rad(\cG)$, then 
return the row vector space of the first matrix in the output of 
the Subspace Linear Basis Algorithm for $\rad(\cG)$ with $G_1^\dagger, \dots, G_\tau^\dagger$. %, and return the row vector space of the first matrix in the output of the Subspace Linear Basis Algorithm. 

%$\rowvectorspace(H)$, where $H$ is the first matrix of the output . 
\item 
If the output is a direct sum decomposition of $\cG = \cC_1 \oplus \dots \oplus \cC_s$ for some $s > 1$, then 
%\begin{enumerate}
%\item If $\cG$ is not commutative, then return the output of the Conjugation Matrix Algebra Algorithm with input being the output of the Subspace Linear Basis Algorithm for $\langle \{ G \in \cG : \forall G' \in \cG,  G G' = G' G\} \rangle$ with $G_1^\dagger, \dots, G_\tau^\dagger$.
%\item Otherwise, 
compute the identity $I_i$ of $\cC_i$ for each $i\in [s]$ and return the lexically smallest matrix among $I_1, \dots, I_s$ with respect to basis $G_1^\dagger, \dots, G_\tau^\dagger$. 
%\end{enumerate}

\item If the output is an isomorphism $f : \cG \rightarrow \M(m, \F_r)$ for some $m$ and $r$, then 
\begin{enumerate}
    \item For $i = 1, \dots, \tau$, if $1 < \dim(\langle \{ G \in \cG : G G_i^\dagger = G_i^\dagger G\}\rangle) < \tau$, then return the output of the Conjugation Matrix Algebra Algorithm with input being the output of the Subspace Linear Basis Algorithm for $\langle \{ G \in \cG : G G_i^\dagger = G_i^\dagger G\}\rangle$ with $G_1^\dagger, \dots, G_\tau^\dagger$.  
    \item If $m > 1$, then 
run the Full Matrix Algebra Algorithm with $\cG$ using $G_1^\dagger, \dots, G_\tau^\dagger$ as a linear basis and $f$, and denote the output as $C$. If $C$ is not full rank, then return $C$, otherwise, return the output of the Conjuration Matrix Algebra Algorithm with input being the output of the Subspace Linear Basis Algorithm for $\langle \{ G \in \cG: G C = C G\} \rangle$ with $G_1^\dagger, \dots, G_\tau^\dagger$.

\item Otherwise, return the zero matrix in $\cG$. 
\end{enumerate}
\end{enumerate}
\end{enumerate}
\vspace{-.3cm}
\end{framed}

\begin{proof}[Proof of Lemma~\ref{lemma:conjugate_algebra_main}]
We first show that the algorithm either outputs a desirable matrix in $\cG$ or recurses with a subalgebra of $\cG$ with a dimension smaller than $\cG$ such that the subalgebra contains some matrix that is not full rank.
Then, the first property is followed by an induction on the dimension of $\cG$. 

By Proposition~\ref{prop:linear_basis}, $\vG^\dagger$ is a linear basis of $\cG = \Env(\vG)$, and thus every matrix in the matrix tuple $\vG^\dagger$ is a linear combination of matrices in $\vG$. 

If the algorithm outputs a matrix in Step 2, then the output is a non-zero matrix that is no full rank.

For Step 3, by Theorem~\ref{thm:algebra_struct}, one of Step 3(a), 3(b), and 3(c) applies. 
If the output of the Matrix Algebra Structure Algorithm is a nontrivial radical $\rad(\cG)$, then by the definition of radical for a matrix algebra, $\rad(\cG)$ is a linear subspace of $\langle \{G_1^\dagger, \dots, G_\tau^\dagger  \} \rangle$ such that every matrix in the subspace is not full rank. 
Hence, the output obtained in Step 3(a) is a desirable non-zero matrix that is not full rank.

If the output of the Matrix Algebra Structure Algorithm is a direct sum decomposition of $\cG = C_1 \oplus \dots C_s$ for some $s > 1$,
%If $\cG$ is not commutative, then one of $C_i$ for some $i\in [s]$ is isomorphic to $\M(m_i, \F_{r_i})$ for some $m_i > 1$. Hence, the centre of $\cG$, i.e., $\centre(\cG) = \langle \{G \in \cG : \forall G' \in \cG, G G' = G' G\}\rangle$, is a subalgebra of $\cG$, and contains some non-zero matrix that is not full rank (because the identity of each $C_i$ is not full rank, and in the centre). 
%Hence, the algorithm recurses with a subalgebra of dimension smaller than $\vG$.
%If $\cG$ is commutative, 
then the identity for each component in the direct sum decomposition is a matrix that is not full rank. 
Hence, the  lexically smallest identity among the identities of each component in the direct sum decomposition is a desirable matrix.

Consider the case that $\cG$ is isomorphic to $\M(m, \F_r)$ for some $m$ and $r$ with an isomorphism mapping $f : \cG \rightarrow \M(m, \F_r)$. 
If a matrix $G$ in $\cG$ is full rank and has a nontrivial centralizing algebra, i.e., $\centre_\cG(G)=\{G' \in \cG\mid G'G=GG'\}$, then by the following fact, 
$\centre_\cG(G)$ is a nontrivial subalgebra of $\cG$ containing some non-zero matrices not full rank because $\centre_{\M(m, \F_r)}(f(G))$ has a reducible characteristic polynomial by the following fact, and thus, we can recurse with $\centre_\cG(G)$. 
\begin{fact} Let $C$ be a matrix in the matrix algebra $\cC=\M(m, \F_r)$ for some $m > 1$ and $r$. Then 
\begin{enumerate}
    \item $\centre_\cC(C)=\cC$ if and only if $C=c I$ for some $c\in \F_r$. 
    \item $\centre_\cC(C)$ is isomorphic to a field if and only if $\charpoly(C)$ is irreducible.
\end{enumerate}
\end{fact}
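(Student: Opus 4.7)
The plan is to dispatch the two parts with standard centralizer arguments in $\M(m, \F_r)$.

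For part 1, one direction is trivial: scalars commute with everything. For the converse, I would test commutativity of $C$ with the matrix units $E_{ij}$. Writing $C=(c_{k\ell})$, the equation $C E_{ij} = E_{ij} C$ forces the $j$-th column of $CE_{ij}$ (which is column $i$ of $C$ pushed into position $j$) to match $E_{ij}C$ (row $j$ of $C$ pushed into row $i$). Comparing entries yields $c_{k i}=0$ for $k\ne i$, $c_{jk}=0$ for $k\ne j$, and $c_{ii}=c_{jj}$ for all $i,j$. So $C$ is a scalar matrix.

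For part 2, the ``if'' direction is the easy half. If $\charpoly(C)$ is irreducible, it has degree $m$ and is forced to equal the minimal polynomial $m_C(x)$ (since $m_C\mid \charpoly(C)$ and they share the same irreducible factors). Thus $C$ is a cyclic (non-derogatory) matrix, and a standard result says the centralizer equals the polynomial algebra $\F_r[C]$, which is $\F_r[x]/(\charpoly(C))$, a field (indeed $\cong \F_{r^m}$).

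For the ``only if'' direction, assume $\charpoly(C)$ is reducible and produce a zero divisor or a non-commutative subring in $\centre_\cC(C)$. I would split into two cases based on the irreducible factorization $\charpoly(C)=\prod p_i(x)^{k_i}$. \emph{Case A}: there are at least two distinct factors, so $\charpoly(C)=f(x)g(x)$ with coprime non-constant $f,g$. By Cayley--Hamilton $f(C)g(C)=0$; using that $\charpoly(C)$ and $m_C(x)$ share irreducible factors, neither $f(C)$ nor $g(C)$ can be zero, giving a pair of non-zero zero divisors in $\centre_\cC(C)$. \emph{Case B}: $\charpoly(C)=p(x)^k$ with $p$ irreducible of degree $d$ and $k\ge 2$. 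Sub-case B1: $p(C)\ne 0$; then $p(C)\in \centre_\cC(C)$ is a non-zero nilpotent (since $p(C)^k=0$), hence a zero divisor. Sub-case B2: $p(C)=0$; then $C$ is similar to $\diag(C_p,\dots,C_p)$ ($k$ copies of the companion matrix of $p$), and a direct block computation shows $\centre_\cC(C)\cong \M(k,\F_{r^d})$, which is non-commutative (and not a field) because $k\ge 2$. In every case the centralizer fails to be a field.

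The only delicate step will be Sub-case B2, where I need to know the shape of the centralizer of a semisimple but non-cyclic matrix; this is a standard computation with block matrices once $C$ is put in rational canonical form, so no serious obstacle is expected.
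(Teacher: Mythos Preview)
The paper does not actually prove this Fact; it is stated inline during the proof of Lemma~5.3 (the Conjugation Matrix Algebra Algorithm lemma) and invoked as a standard result without argument. Your proposal therefore supplies a proof where the paper gives none.

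Your argument is correct. Part~1 via matrix units is the textbook approach. For part~2, the forward direction (irreducible $\Rightarrow$ centralizer $\cong \F_r[x]/(\charpoly(C))$ via non-derogatory/cyclic matrices) is standard, and your case split for the converse is sound: in Case~A the coprimality of $f,g$ together with the shared-irreducible-factors property of $m_C$ and $\charpoly(C)$ indeed forces $f(C),g(C)\neq 0$; in Case~B1 nilpotence of $p(C)$ gives the zero divisor; and in Case~B2 the block-centralizer computation for $\diag(C_p,\dots,C_p)$ correctly yields $\M(k,\F_{r^d})$, which is not a field for $k\ge 2$. The only place to be slightly careful is your phrasing ``$f(C)g(C)=0$ by Cayley--Hamilton'': strictly speaking Cayley--Hamilton gives $\charpoly(C)(C)=0$, and since $\charpoly(C)=fg$ you get $f(C)g(C)=0$ because $f,g$ commute (both are polynomials in $C$). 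This is clearly what you meant, so there is no real gap.
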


For the case of 3(c)ii, by Claim~\ref{claim:full_matrix_algebra_main}, either we directly get a desirable matrix, or we can recurse with a non-trivial subalgebra of $\cG$ containing some non-zero matrix that is not full rank. 

For the case of 3(c)iii, the $\cG$ is isomorphic to a finite field. So, we output a zero matrix as desired. 

Hence, the algorithm either outputs a desirable matrix or recurses with a subalgebra of $\cG$ with a dimension smaller than $\cG$ such that the subalgebra contains some matrix that is not full rank.
By an induction on the dimension of $\cG$, the first property follows. 

The second property of the lemma is obtained by Claim~\ref{claim:subspace_linear_basis}, Claim~\ref{claim:full_matrix_algebra_main}, and an induction on each step of the algorithm. 
\end{proof}

\subsection{\IBCtuple construction for finite field matrix tuples}\label{sec:field}

We give an algorithm to compute an \IBCtuple (see Definition~\ref{def:ibctuple}) with respect to an arbitrary row tuple of a matrix tuple $\vA=(A_1,  \dots, A_\ell)\in\GL(n, \F_q)^{\ell}$ if the matrix algebra generated by $A_1^{-1}A_2, A_1^{-1}A_3, \dots, A_1^{-1}A_\ell$ is isomorphic to a finite field. 

%Let $\vA:=(A_0, A_1, \dots, A_\alpha)\in\GL(n, q)^{\alpha+1}$. That is, each $A_i$ is invertible. For $i\in[\alpha]$, let $B_i=A_0^{-1}A_i$. Suppose $B_i$'s generate a finite field. Our goal is to give a canonical form for $\vA$ under the left-right action.

%In this section, we need to show the following algorithm give a canonical form of $n\times n$ matrix tuple $\vA = (A_0, \dots, A_\alpha)$ after running the algorithm on $B_1, \dots, B_\alpha$ with $B_i = A_0^{-1}A_i$ suppose $B_1, \dots, B_\alpha$ form a finite field. 

%We first give a subroutine to compute an \IBCtuple  of $\vA$ based on a given row tuple $\vd$ of $\vA$. 

%(Recall that $\vd$ is a row tuple of $\vA$ if there is a $w \in \F_p^n$ such that $w \vA = \vd$). 
\begin{framed}
\noindent \textbf{Finite Field Single \IBCtuple Algorithm}

\noindent \textbf{Input:} $\vA=(A_1, A_1, \dots, A_\ell)\in\GL(n, \F_q)^\ell$ and a row tuple $\va \in \rowtuplespace(\vA)$. %For $i\in[\alpha]$, let $B_i=A_0^{-1}A_i$. Suppose $B_i$'s generate a finite field. 

\noindent \textbf{Output:} \IBCtuple $\vB_\va$ for $\va$.

\noindent \textbf{Algorithm:}
\begin{enumerate}
    \item Set $t = 1$, $\va_1 = \va$, and $H$ be the linear space spanned by $\va_1^{(1)}$. 
    \item For $i=1$ to $t$,
    \begin{enumerate}
        \item for $j = 1$ to $\ell$, if $\va_i^{(j)} \notin H$
        \begin{enumerate}
            \item Set $t = t + 1$, and let $\va_{t}$ be the row tuple of $\vA$ such that $\va_{t}^{(1)} = \va_{i}^{(j)}$. 
            
            %the first row vector of $\vd_{t'}$ equals $\ell$-th vector of $\vd_i$ and there is a $w \in \F_p^n$ satisfying $w \vA = \vd_{t'}$. \ynote{Added:}
   % Note that such $\vd_{t'}$ exists as $A_0$ is invertible, so its row span is the whole space $\F_q^n$. 
    \item Let $H$ be $\langle H \cup \{\va_{t}^{(1)}\}\rangle$. 
    \end{enumerate}
    \end{enumerate}
    \item Return \[\vB_{\va} = \begin{bmatrix}
        \va_1 \\ \vdots \\ \va_{t}
    \end{bmatrix}.\] 
\end{enumerate}
\vspace{-.3cm}
\end{framed}

We first prove some useful observations for the algorithm to compute a single \IBCtuplenospace. 
\begin{claim}\label{claim:left-right-to-conjugation-first}

We have the following observations for the Finite Field Single \IBCtuple Algorithm for a matrix tuple for a matrix tuple $\vA=(A_1,  \dots, A_\ell)\in\GL(n, \F_q)^{\ell}$ if the matrix algebra generated by $A_1^{-1}A_2, A_1^{-1}A_3, \dots, A_1^{-1}A_\ell$ is isomorphic to a finite field: 
\begin{enumerate}
    \item For any input $\va \in \rowtuplespace(\vA)$ and any $i, j \in [\ell]$, 
    \[\left\langle \left\{ \va_{k}^{(i)} : k\in [t]\right\}\right\rangle = \left\langle \left\{ \va_{k}^{(j)} : k\in [t]\right\}\right\rangle.\]
    \item $t = \dim(\cG)$, where $\cG$ is the matrix algebra generated by $A_1^{-1} A_2, \dots, A_1^{-1}A_\ell$. 
    \item For any $\va \in \rowtuplespace(\vA)$ and any non-zero row tuple $\vc \in \rowtuplespace(\vB_{\va})$, $\rowtuplespace(\vB_{\vc}) = \rowtuplespace(\vB_\va)$. 
    \item     For any two non-zero row tuples $\va, \va' \in \rowtuplespace(\vA)$. 
    Let $\va_1, \dots, \va_{t}$ and $\va_1', \dots, \va'_{t'}$ be the row tuples selected by the algorithm for $\va$ and $\va'$ respectively. Then the following conditions hold:
    \begin{enumerate}
    \item $t = t'$. 
    \item 
    For any $2 \leq k \leq t$, 
    if $\va_k$ is obtained for $i$ and $j$ in Step 2 of the algorithm for $\vA$, then $\va_k'$ is obtained for the same $i$ and $j$ in Step 2 of the algorithm for $\vA'$. 
    
    \item Let $R$ be an arbitrary matrix such that $(\va'_k)^{(1)} = \va_k^{(i)} R$ for any $k \in [t]$, then $\va_k' = \va_k R$ for any $k \in [t]$. 
    \item 
    The output of the algorithm with input $\va + \va'$ is $\vB_{\va} + \vB_{\va'}$. 
    \end{enumerate}
\end{enumerate}
\end{claim}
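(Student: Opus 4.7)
The plan is to reduce the entire claim to the algebraic structure of $\cG := \Env(A_1^{-1}A_2, \ldots, A_1^{-1}A_\ell)$, which by hypothesis is a finite field. Writing $G_j := A_1^{-1}A_j$ (so $G_1 = I$, and $G_2, \ldots, G_\ell$ generate $\cG$), every row tuple $\vb \in \rowtuplespace(\vA)$ is determined by its first coordinate: if $v := \vb^{(1)}$ then $\vb^{(j)} = v G_j$, so the map $W: v \mapsto v A_1^{-1} \vA$ is an injective $\F_q$-linear correspondence between row vectors and row tuples. The one algebraic fact that drives everything is: for any non-zero $v \in \F_q^m$, the map $\psi_v: \cG \to \F_q^m$ sending $G \mapsto vG$ is injective. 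Indeed, its kernel is an $\F_q$-subspace of $\cG$ stable under multiplication by $\cG$ (using commutativity of $\cG$: if $vG = 0$ and $G' \in \cG$ then $v(GG') = (vG)G' = 0$ and $v(G'G) = v(GG') = 0$), hence an ideal of the field $\cG$; since $\psi_v(I) = v \neq 0$, that ideal must be zero. Consequently $v \cdot \cG := \{vG : G \in \cG\}$ has $\F_q$-dimension equal to $\dim \cG$.

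First I would show that the subspace $H := \langle v_1, \ldots, v_t \rangle$ constructed by the algorithm equals $v_1 \cdot \cG$. On termination, $v_k G_j \in H$ for all $k, j$, so $H$ is closed under right multiplication by each $G_j$ and hence by all of $\cG$; containing $v_1$, it therefore contains $v_1 \cdot \cG$. Conversely, a routine induction records every $v_k$ as $v_1 G_{(k)}$ for some $G_{(k)} \in \cG$: set $G_{(1)} = I$ and $G_{(k)} = G_{(i)} G_j$ at the step that adds $v_k = v_i G_j$, giving $H \subseteq v_1 \cdot \cG$. Property~2 is immediate since $t = \dim H = \dim \cG$. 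Property~1 follows because $G_j$ is a non-zero element of the field $\cG$, so invertible in $\cG$; hence $\langle \va_k^{(j)} : k \in [t] \rangle = H G_j = v_1 \cdot (\cG G_j) = v_1 \cdot \cG = H$, which is the same for every $j$. For property~3, any non-zero $\vc \in \rowtuplespace(\vB_\va) = W(H)$ has the form $W(v_\vc)$ with $v_\vc \in H \setminus \{0\}$, and applying the same analysis to $v_\vc$ yields $v_\vc \cdot \cG = H$; thus $\rowtuplespace(\vB_\vc) = W(v_\vc \cdot \cG) = W(H) = \rowtuplespace(\vB_\va)$.

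Property~4 exploits the fact that the algorithm's decisions are governed by the algebra elements $G_{(k)}$ alone, not by the choice of $v_1$. Under the isomorphism $\psi_{v_1}: \cG \to H$, the condition $v_i G_j \in \langle v_1, \ldots, v_{k-1}\rangle$ translates to $G_{(i)} G_j \in \langle G_{(1)}, \ldots, G_{(k-1)}\rangle$ inside $\cG$, a purely algebraic condition; induction on $k$ then shows the same $G_{(k)}$'s and the same triggering pairs $(i, j)$ arise for both $\va$ and $\va'$, giving parts (a) and (b). For part (c), interpreting the hypothesis with the natural choice $i = 1$, the assumption $v_k' = v_k R$ for all $k$ combined with $v_k = v_1 G_{(k)}$ and $v_k' = v_1' G_{(k)}$ yields $v_1 G_{(k)} R = v_1' G_{(k)} = v_1 R G_{(k)}$ (using $v_1' = v_1 R$ from $k = 1$), so $v_1 (G_{(k)} R - R G_{(k)}) = 0$ for every basis element $G_{(k)}$ of $\cG$; by $\F_q$-linearity of the commutator, $v_1(G R - R G) = 0$ for every $G \in \cG$, and applying this with $G = G_{(k)} G_j$ gives $\va_k^{(j)} R = v_1 G_{(k)} G_j R = v_1 R G_{(k)} G_j = v_k' G_j = (\va_k')^{(j)}$. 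Part (d) combines the pattern-invariance of (a) and (b) with the $\F_q$-linearity of $W$: if $v_1 + v_1' = 0$ then $\va' = -\va$ and both sides vanish, else the algorithm on $\va + \va'$ uses the same $G_{(k)}$ sequence and outputs the row tuples $W((v_1 + v_1') G_{(k)}) = W(v_1 G_{(k)}) + W(v_1' G_{(k)}) = \va_k + \va_k'$. The main delicate step is the commutator-propagation argument in part (c); the key is that the $G_{(k)}$'s form an $\F_q$-basis of $\cG$, so that the relations $v_1(G_{(k)} R - R G_{(k)}) = 0$ immediately extend by linearity to all of $\cG$.
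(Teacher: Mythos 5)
Your proof is correct, and it reaches the same destination as the paper's proof but by a somewhat cleaner route. The paper argues each part more or less independently: Property~1 by a dimension count using only invertibility of $A_1, \ldots, A_\ell$; Property~2 by noting that the $\va_u^{(1)}$'s are of the form $\va_1^{(1)}Q_u$ for monomials $Q_u$ and that at termination every monomial lies in $\langle I, Q_2, \ldots, Q_t\rangle$; Property~3 again by dimension counting; and Property~4 by a terse appeal to induction. You instead isolate the single structural fact that $\psi_v: G \mapsto vG$ is injective on $\cG$ for $v \neq 0$ (equivalently, $H = v_1 \cdot \cG$ with $\dim H = \dim \cG$), and derive everything from it. This is a genuine improvement in exposition: the paper's argument for Property~2 implicitly relies on exactly this injectivity to pass from ``$\va_i^{(j)}$ not in $\langle \va_1^{(1)}, \ldots, \va_t^{(1)}\rangle$'' to ``$Q_i B_{j-1}$ not in $\langle I, Q_2, \ldots, Q_t\rangle$,'' and for Property~4(c) the paper's one-line justification really does hide the commutator-propagation step that you spell out. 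Your reading of the hypothesis in 4(c) as $i = 1$ is the intended one (the $\va_k^{(i)}$ in the statement appears to be a typo for $\va_k^{(1)}$, as otherwise the conclusion would be false). One tiny imprecision: in the degenerate case $\va + \va' = 0$ for Property~4(d), the algorithm applied to the zero row tuple returns a single-row output while $\vB_\va + \vB_{\va'}$ has $\dim\cG$ rows, so ``both sides vanish'' is not quite equality of matrix tuples when $\dim\cG > 1$; this corner case is silently excluded in the paper's usage of the claim and does not affect the substance.
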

\begin{proof}
%The first property is obtained by the observation that $H'$ is the linear span of all the row vectors of all the output row tuples. 
For the first property, let $H_s$ be the linear span of $\{\va_1^{(s)},\dots, \va_t^{(s)}\}$ for any $s \in [\ell]$, and $U$ be the linear span of $H_1 \cup \dots \cup H_\ell$. 
By the algorithm, $H_1 = U$. On the other hand, $H_s\subseteq U$ for any $s \in [\ell]$. We then observe that $\dim(H_s)=t$ for any $s \in [\ell]$, as $A_1, \dots, A_\ell$ are invertible. It follows that $H_s = U$ for every $s \in [\ell]$. 
Thus, the first property holds. 

To prove the second property, let $B_s = A_{1}^{-1}A_{s + 1}$ for any $s \in [\ell- 1]$.
We observe that for any non-zero row tuple $\va \in \rowtuplespace(\vA)$, \[\va^{(1)} B_s = \va^{(s + 1)}\] for any $s \in [\ell - 1]$. 
By the algorithm, 
for each $2 \leq u \leq t$, 
there is a matrix $Q_u$, which is a monomial of $B_1, \dots, B_s$, such that $\va_u^{(1)} = \va_1^{(1)} Q_u$. 
On the other hand, every monomial $Q$ of $B_1, \dots, B_s$ must be a linear combination of $I, Q_2, \dots, Q_u$ because otherwise there exist some $i \in [t], j\in [\ell]$ such that $\va_i^{(j)}$ is not a linear combination of $\va_1^{(1)}, \dots, \va_t^{(1)}$, contradicting to the algorithm. 
Thus, $t$ equals the dimension of the matrix algebra generated by $B_1, \dots, B_s$, and the second property holds. 

For the third property, we observe that $\rowvectorspace(\vB_\vc)$ is a subspace of $\rowvectorspace(\vB_\va)$ by the algorithm. Consequently, $\rowtuplespace(\vB_\vc)$ is a subspace of $\rowtuplespace(\vB_\va)$. 
On the other hand, the second property implies that 
\[\dim(\rowtuplespace(\vB_\va)) = \dim(\rowtuplespace(\vB_\vc)).\] 
Hence, we have \[\rowtuplespace(\vB_{\vc}) = \rowtuplespace(\vB_\va).\] 

Now we prove the last property. 
The property 4(a) is obtained by the second property.
The property 4(b) is obtained by induction on each $i$ and $j$ for Step 2 of the algorithm with respect to $\va$ and $\va'$. The property 4(c) is obtained by property 4(b) and the fact that $\vc^{(1)} B_s = \vc^{(s + 1)}$ for any $s \in [\ell - 1]$ for any $\vc \in \rowtuplespace(\vA)$. The property 4(d) is also obtained by 4(b). 
\end{proof}

We prove the output of the algorithm is an \IBCtuple of the input matrix tuple and has the input row tuple as the first row of the output \IBCtuplenospace. Furthermore, we also show that the output is canonical under the left-right action of the input matrix tuple.

\begin{lemma}\label{lem:single_ibctuple_finite_field}
The following properties hold for the Finite Field Single \IBCtuple Algorithm:

    \begin{enumerate}
        \item 
        For an input of the Finite Field Single \IBCtuple Algorithm, which contains a matrix tuple $\vA = (A_0, A_1, \dots, A_\ell)\in\GL(n, q)^{\ell}$ such that $A_1^{-1}A_2, \dots, A_1^{-1}A_\ell$ generate a matrix algebra that is isomorphic to a finite field, and a row tuple $\va \in \rowtuplespace(\vA)$, 
        the algorithm outputs an \IBCtuple $\vB_\va$ of $\vA$ such that $e_1 \vB_\va = \va$ in $\poly(n, \log q)$ time.

        \item The algorithm is canonical in the following sense: Let $\vA', \va'$ be another input of the algorithm such that there exist invertible matrices $L$ and $R$ such that $\vA' = L \vA R^{-1}$ and $\va' = \va R^{-1}$, then 
        for any invertible matrices $L$ and $R$ such that $\vA' = L \vA R^{-1}$ and $\va' = \va R^{-1}$, 
        Then $\vB_{\va'} = \vB_{\va} R^{-1}$, where $\vB_{\va}$ and $\vB_{\va'}$ are the outputs for $\vA, \va$ and $\vA', \va'$ respectively. 
    \end{enumerate}
\end{lemma}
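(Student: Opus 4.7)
The plan is to verify the two properties in turn, with Part~1 (correctness and running time) being the substantial content and Part~2 (canonicity) a routine induction. For the running time in Part~1, by Claim~\ref{claim:left-right-to-conjugation-first}(2) we have $t = \dim(\cG) \leq n$, where $\cG$ is the enveloping algebra of $A_1^{-1}A_2, \dots, A_1^{-1}A_\ell$, so Step~2 executes at most $n\ell$ membership tests in a subspace of $\F_q^n$, each in $\poly(n, \log q)$ time. The identity $e_1\vB_\va = \va$ is immediate from Steps~1 and~3.

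The heart of Part~1 is to exhibit a minimum block-diagonalization of $\vA$ in which $\vB_\va$ appears as an indecomposable block. I will transport the analysis to the right $\cG$-module structure on $\F_q^n$. Writing $\va_k = v_k \vA$ and $w_k := v_k A_1 = \va_k^{(1)}$, the update $\va_{t+1}^{(1)} = \va_i^{(j)}$ of Step~2 translates to $w_{t+1} = w_i(A_1^{-1}A_j)$, so $U := \langle w_1, \dots, w_t\rangle$ is precisely the cyclic $\cG$-submodule generated by $w_1$. Since $\cG \cong \F_{q^k}$ is a field with $k = \dim(\cG) = t$ and the $w_k$ are $\F_q$-linearly independent, $U$ is simple of $\F_q$-dimension $k$, and $\F_q^n$ is a free $\cG$-module, so I can choose a $\cG$-submodule complement $U'$ with $\F_q^n = U \oplus U'$. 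Letting $R \in \GL(n, \F_q)$ have rows forming a basis of $U$ followed by a basis of $U'$, and $L \in \GL(n, \F_q)$ have first $t$ rows $v_1, \dots, v_t$ together with remaining rows $w_k A_1^{-1}$ for any $\F_q$-basis $w_{t+1}, \dots, w_n$ of $U'$, the $\cG$-invariance of $U$ and $U'$ forces $L\vA R^{-1}$ to be block-diagonal with $\vB_\va R^{-1}$ (modulo the trailing all-zero columns) as the top-left block $\vD_1$. The main obstacle is then to show $\vD_1$ is indecomposable; for this I observe that any further block-diagonalization of $\vD_1$ would make the matrix algebra generated by $D_{1,1}^{-1}D_{1,j}$ for $j \geq 2$ a nontrivial direct sum of smaller algebras, but this algebra is isomorphic to the restriction of $\cG$ to $U$, which is faithful (since $\cG$ is simple and acts nontrivially on $U$) and therefore equals the field $\cG$ itself, which admits no such decomposition.

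For Part~2, I prove by induction on the outer loop index $i$ in Step~2 that on input $(\vA', \va')$ the algorithm produces row tuples $\va_k' = \va_k R^{-1}$ and maintains $H' = H R^{-1}$. The inductive step uses $\va_i^{(j)} \in H$ iff $\va_i^{(j)}R^{-1} = (\va_i')^{(j)} \in H R^{-1} = H'$, so membership checks agree; moreover each newly added first coordinate satisfies $(\va_{t+1}')^{(1)} = (\va_i')^{(j)} = \va_i^{(j)}R^{-1}$, from which the relation $\va_{t+1}' = \va_{t+1}R^{-1}$ follows by using $A_1' = LA_1R^{-1}$ to recover the unique left multiplier $v_{t+1}' = v_{t+1}L^{-1}$. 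Stacking these equalities yields $\vB_{\va'} = \vB_\va R^{-1}$, which is exactly the canonicity required.
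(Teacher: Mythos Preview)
Your approach to Part~1 via the right $\cG$-module structure on $\F_q^n$ is sound and genuinely different from the paper's. The paper does not invoke semisimplicity; instead it builds a \emph{full} block-diagonalization iteratively by repeatedly running the algorithm on fresh row tuples $\vc_2, \vc_3, \dots$ chosen with $\vc_i^{(1)} \notin \rowvectorspace(\vB_{\vc_1}) \oplus \dots \oplus \rowvectorspace(\vB_{\vc_{i-1}})$, and then uses Claim~\ref{claim:left-right-to-conjugation-first}(2),(3) to show the resulting row-vector and row-tuple spaces give direct-sum decompositions of $\F_q^n$ and $\rowtuplespace(\vA)$, after which Fact~\ref{fact:ibctuple_diagonalzation_most_basic} yields the block-diagonalization. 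Your route---observe that $\F_q^n$ is a vector space over the field $\cG \cong \F_{q^k}$, so the cyclic submodule $U$ has a $\cG$-complement $U'$---reaches a block-diagonalization in one step and is conceptually cleaner; the paper's route, in exchange, exhibits all blocks at once and makes their mutual equivalence visible.

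There is, however, a genuine slip in your indecomposability argument. A block-diagonalization $\vD_1 \sim \diag(\vE_1, \vE_2)$ does \emph{not} force the algebra $\cG' = \Env(D_{1,1}^{-1}D_{1,j}:j\ge 2)$ to split as a nontrivial direct sum of algebras: it only conjugates $\cG'$ into the block-diagonal subalgebra $\M(t_1, \F_q) \times \M(t_2, \F_q)$, where a field may well sit diagonally (e.g.\ $\F_q \cdot I_2$ inside $\M(1,\F_q) \times \M(1,\F_q)$). What actually breaks is the \emph{module}: the block structure exhibits a nontrivial $\cG'$-submodule of $\F_q^t$, but under your identification $\phi(u)=uR^{-1}|_{\text{first }t}$ one has $\F_q^t \cong U$ as $\cG$-modules, and you already noted $U$ is simple (one-dimensional over $\F_{q^k}$). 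Equivalently, the first block is a nonzero $\F_{q^k}$-module of $\F_q$-dimension $t_1 < t=k$, which is impossible. Replace the ``algebra becomes a direct sum'' claim by either of these module statements and the argument is complete. Your treatment of Part~2 is correct and is essentially the paper's argument spelled out as an explicit induction.
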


\begin{proof}
    The running time of the algorithm is clearly $\poly(n, \log q)$. So to prove the first property, we only need to show the output is an \IBCtuple for $\vA$. 
    Let $\sigma$ denote the number of rows in $\vB_\va$.
    Consider the following subroutine to selected row tuples:
    \begin{enumerate}
        \item $i = 1$, $\vc_1 = \va$, and compute $\vB_{\vc_1}$ by the  Finite Field Single IBC-Tuple Algorithm, $W_1 = \rowvectorspace(\vB_{\vc_1})$. 
        \item Repeat the following procedure until $W_i = \rowvectorspace(\vA)$
        \begin{enumerate}
            \item $i = i + 1$. Arbitrarily select a row tuple $\vc_i \in \rowtuplespace(\vA)$ such that $\vc_i^{(1)}$ is not in $W_{i-1}$.
            \item Compute $\vB_{\vc_i}$ by the  Finite Field Single IBC-Tuple Algorithm with $\vc_i$. 
            \item Let $W_i = \langle W_{i-1} \cup \rowvectorspace(\vB_{\vc_i})\rangle$.
        \end{enumerate}
    \end{enumerate}
    We observe that for the above subroutine, $\vB_{\vc_j}$ satisfies the condition that $\rowvectorspace(\vB_{\vc_j}) \cap W_{j - 1}$ contains only the zero row vector for any $2 \leq j \leq i$, because if there is a non-zero vector in $\rowvectorspace(\vB_{\vc_j}) \cap W_{j - 1}$, then every row vector in $\rowvectorspace(\vB_{\vc_j})$ is in $W_{j - 1}$ by Claim~\ref{claim:left-right-to-conjugation-first}, contradicting to the choice of $\vc_j$. 
    By the subroutine above, we have \[\rowvectorspace(\vA) = \rowvectorspace(\vB_{\vc_1}) \oplus \dots \oplus \rowvectorspace(\vB_{\vc_i}).\]
    By Claim~\ref{claim:left-right-to-conjugation-first}, we have $i = n / \sigma$, and 
    \[\rowtuplespace(\vA) = \rowtuplespace(\vB_{\vc_1}) \oplus \dots \oplus \rowtuplespace(\vB_{\vc_{i}}).\]
    By Fact~\ref{fact:ibctuple_diagonalzation_most_basic}, there is a characteristic block-diagonalization of $\vA$ with $n / \sigma$ blocks such that every $\vB_{\vc_j}$ corresponds to a block in the block-diagonalization. 
    On the other hand, by Claim~\ref{claim:left-right-to-conjugation-first}, every block in the block-diagonalization is indecomposable, so the first property holds.  

    For the second property, 
    we observe that the outputs for $\vA, \va$ and $L \vA, \va$ are the same for any invertible matrix $L$. 
    And for any invertible matrix $R$, the output for $\vA R^{-1}, \va R^{-1}$ is equal to the output for $\vA, \va$ multiplying $R^{-1}$. Hence, the second property holds.
\end{proof}
\subsection{Algorithm for full rank square matrix tuples}

We give the algorithm for Lemma~\ref{lem:conjugation_canonical_main}. 
For an input $\vA = (A_1, \dots, A_\ell)$, 
Our algorithm first runs the Conjugation Matrix Algebra Algorithm for the matrix algebra generated by $A_1^{-1} A_2, \dots, A_1^{-1}A_\ell$. By Section~\ref{sec:conjugation_sub}, if the matrix algebra is not isomorphic to a finite field, then the output of our algorithm is a non-zero singular matrix in the matrix algebra, and the row vector space of this singular matrix is a characteristic block-compatible row vector subspace of $\vA$. We further construct a characteristic block-compatible row tuple subspace using the row vector subspace. 
Otherwise, the matrix algebra is isomorphic to a finite field. Then we use the algorithm given in Section~\ref{sec:field} to compute an \IBCtuple of $\vA$, and compute its \IBCtuple space and \IBCtuple space kernel of the input matrix tuple.

\begin{framed}
\noindent \textbf{Full Rank \IBCtuple Space Algorithm}

\noindent \textbf{Input:} Matrix tuple $\vA = (A_1, \dots, A_\ell) \in \GL(n, \F_q)^\ell$ such that $A_i$ is full rank for every $i\in[\ell]$.

\noindent \textbf{Output:} 
\IBCtuple $\vB$ of $\vA$ with \IBCtuple space $\ibcspace_\vB$, \IBCtuple space kernel $\ibcspacekernel_\vB$, row tuple subspace $T_1 = \rowtuplespace(\vA)$, and row tuple space parameters $t_{\vB, 1}, \dots, t_{\vB, \sigma}$, or 
row vector subspace $S < \rowvectorspace(\vG)$.

\noindent \textbf{Algorithm:} 

\begin{enumerate}
\item Run the Conjugation Matrix Algebra Algorithm with $(A_1^{-1} A_2, \dots, A_1^{-1}A_\ell)$. If the output is a non-zero matrix $C$, then return $\langle \{ \va \in \rowtuplespace(\va) : \va^{(1)} \in \rowvectorspace(C)\}\rangle$.
\item Arbitrarily select a linear basis $\va_1, \dots, \va_n$ of $\rowtuplespace(\vA)$, and run the Finite Field Single \IBCtuple Algorithm for each $\va_1, \dots, \va_n$. 
\item Let $\vB = \vB_{\va_1}$, $\ibcspace_\vB = \langle \{\vB_{\va_1}, \dots, \vB_{\va_n}\}\rangle$, $\ibcspacekernel_\vB$ be the subspace of $\ibcspace_\vB$ that contains only the zero matrix tuple. 
\item Let $T_1 = \rowtuplespace(\vA)$, $\sigma$ be the number of rows of $\vB$, and $t_{\vB,1} = \dots = t_{\vB_\sigma} = 1$. 
\item Return $\vB, \ibcspace_\vB, \ibcspacekernel_\vB, T_1, t_{\vB, 1}, \dots, t_{\vB_\sigma}$.
\end{enumerate}
\vspace{-.3cm}
\end{framed}

\begin{proof}[Proof of Lemma~\ref{lem:conjugation_canonical_main}]

For the first property of the lemma, 
the correctness of the algorithm is obtained by Fact~\ref{fact:ibctuple_invariant_basic}, Lemma~\ref{lemma:conjugate_algebra_main}, Claim~\ref{claim:left-right-to-conjugation-first}, and Lemma~\ref{lem:single_ibctuple_finite_field}. 
The running time of the algorithm is obtained by Lemma~\ref{lemma:conjugate_algebra_main} and Lemma~\ref{lem:single_ibctuple_finite_field}. 
The space property, row tuple space property, dimension property, and block-compatible property of the \IBCtuples are obtained by Claim~\ref{claim:left-right-to-conjugation-first}. 

For the second property of the lemma, if the output of the algorithm is a row tuple subspace, then the second property is obtained by Lemma~\ref{lemma:conjugate_algebra_main}. 
Otherwise, let $\va_1, \dots, \va_n$ and $\va_1', \dots, \va_n'$ are the row tuples selected in Step 2 of the algorithm for $\vA$ and $\vA'$ respectively. 
Otherwise, by Lemma~\ref{lem:single_ibctuple_finite_field}, the \IBCtuples output by the algorithm for $\vA$ and $\vA'$ are right-equivalent. 
Observe that the \IBCtuple space output by the algorithm is independent of the choice of the row tuples selected in Step 2 of the algorithm. 
Let $\vB$ and $\vB'$ be the \IBCtuples output by the algorithm for $\vA$ and $\vA'$ respectively. 
We have 
\[\ibcspace_{\vB'} = \langle \{\vB_{\va_1'}, \dots, \vB_{\va_n'}\}\rangle 
= \langle \{\vB_{\va_1 R^{-1}}, \dots, \vB_{\va_n R^{-1}}\}\rangle = \{ \vC R^{-1} : \vC \in \ibcspace_{\vB}\}.\]
Since both $K_\vB$ and $K_{\vB'}$ contain only the zero matrix tuple by the algorithm, we also have $\ibcspacekernel_{\vB'} = \{\vC R^{-1} : \vC \in \ibcspacekernel_{\vB}\}$. 
Hence, the second property of the lemma holds. 
\end{proof}

\section{\IBCtuples from direct sum row tuple decomposition}\label{sec:direct_sum_decomposition}

In this section, we extend the result obtained in Section~\ref{sec:conjugation} to the case where the matrix tuple, which is not necessarily square and where every matrix in the tuple is full rank, is associated with a hierarchical row tuple decomposition such that the decomposition is a direct sum of row tuple space.
Specifically, we prove the following lemma in this section.

\begin{lemma}\label{lem:direct_sum_ibctuple_algo}
    There is a Direct Sum Decomposition Algorithm that satisfies the following conditions:
    \begin{enumerate}
        \item 
            Given a matrix tuple $\vA$, a depth-0 hierarchical row tuple decomposition $T_1, \dots, T_\zeta$ of $\vA$ such that $\rowtuplespace(\vA) = T_1 \oplus \dots \oplus T_\zeta$,  $\rowvectorspace(T_1) = \rowvectorspace(\vA)$, and $\dim(T_1) = \dots = \dim(T_\zeta)$, 
        the output of the algorithm satisfies one of the following conditions:
        \begin{enumerate}
            \item The output is a nontrivial characteristic block-compatible row tuple subspace $S < T_1$. 
            \item The output is an \IBCtuple $\vB$ of $\vA$ with $\ibcspace_{\vB}$,  $\ibcspacekernel_\vB$, and $t_{\vB, 1}, \dots, t_{\vB, \eta}$, where $\eta$ is the number of rows of $\vB$, satisfying the following conditions:
            \begin{enumerate}
                \item Every \IBCtuple of $\vA$ is equivalent to $\vB$. 
                \item $\vB$ satisfies the space property with \IBCtuple space $\ibcspace_\vB$ and \IBCtuple space kernel $\ibcspacekernel_{\vB}$ such that $\ibcspacekernel_{\vB}$ contains only the zero matrix tuple.
                \item $\vB$ satisfies row tuple space property with parameters $t_{\vB, 1},\dots, t_{\vB_\eta}$ such that each of $K_1, \dots, K_\zeta$ contains only the zero row tuple.
                \item $\vB$ satisfies the dimension property and block-compatible property.  
            \end{enumerate}
        \end{enumerate}
        \item The algorithm is canonical in the following sense: Let $\vA' = (\vA_1', \dots, \vA_\ell')$ be a matrix tuple and $T_1', \dots, T_\zeta'$ be a depth-0 hierarchical row tuple decomposition $T_1, \dots, T_\zeta$ of $\vA'$ such that there exist invertible matrices $L$ and $R$ satisfying $\vA' = L \vA R^{-1}$ and $T_i' = T_i R^{-1}$ for any $1 \leq i \leq \zeta$. 
        Then, the output for $\vA$ and $\vA'$ satisfies the following conditions: (Let $L$ and $R$ be arbitrary invertible matrices such that $\vA' = L \vA R^{-1}$ and $T_i' = T_i R^{-1}$ for any $i \in [\zeta]$.)
        \begin{enumerate}
            \item If the output for $\vA$ is a subspace $S < T_i$, then the output for $\vA'$ is $S R^{-1}$. 
            \item  If the output for $\vA$ is an \IBCtuple $\vB$ with $\ibcspace_\vB$ and $\ibcspacekernel_{\vB}$, then
        the output for $\vA'$ is an \IBCtuple $\vB'$ with $\ibcspace_{\vB'}$ and $\ibcspacekernel_{\vB'}$ satisfying the following conditions:
        \begin{enumerate}
        \item $\vB'$ is right-equivalent to $\vB$. \item 
        $\ibcspace_{\vB'} = \{\vC \in \ibcspace_\vB: \vC R^{-1}\}$.
        \item 
        $\ibcspacekernel_{\vB'} = \{\vC \in \ibcspacekernel_\vB: \vC R^{-1}\}.$
        \end{enumerate}
        \end{enumerate}

    \end{enumerate}
\end{lemma}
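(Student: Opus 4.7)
The plan is to reduce the direct-sum case to the full-rank square setting of Lemma~\ref{lem:conjugation_canonical_main} by canonically merging the $\zeta$ characteristic subspaces $T_1,\dots,T_\zeta$ of $\rowtuplespace(\vA)$ into the rows and coordinates of a single square full-rank matrix tuple $\vC$ whose \IBCtuples are in natural bijection with those of $\vA$, along the lines sketched in Section~\ref{sec:direct_sum_overview}. The algorithm then runs Lemma~\ref{lem:conjugation_canonical_main} on $\vC$ and translates its output back to $\vA$.

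Concretely, I would first build canonical $\F_q$-linear bijections $\psi_j\colon T_1\to T_j$ for $j=2,\dots,\zeta$, using the equi-dimension of the $T_j$ together with their characteristic and block-compatible properties and the condition $\rowvectorspace(T_1)=\rowvectorspace(\vA)$. The matching can be pinned down by canonical pivot data extracted via Fact~\ref{fact:ibctuple_invariant_basic}: for each $j$ choose a canonical coordinate index $k_j$ (generalizing the ``second coordinate of $T_1$ equals fourth coordinate of $T_2$'' trick from the overview) and declare $\psi_j(\va)$ to be the unique element of $T_j$ whose $k_j$-th coordinate matches that of $\va$. Second, I would fix a canonical basis $\va_1,\dots,\va_d$ of $T_1$ (say the row-echelon basis of $\mtupletomatrix(T_1)$) and assemble $\vC$ by concatenating, in a canonical basis of $\rowvectorspace(\vA)$, the coordinates of $\psi_1(\va_i),\dots,\psi_\zeta(\va_i)$ across $i\in[d]$. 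The hypothesis $\rowvectorspace(T_1)=\rowvectorspace(\vA)$ is what forces every resulting matrix to be full rank, and the equi-dimension hypothesis is what forces $\vC$ to be square.

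Feeding $\vC$ to the Full Rank \IBCtuple Space Algorithm of Lemma~\ref{lem:conjugation_canonical_main} yields either a proper characteristic block-compatible subspace $S<\rowtuplespace(\vC)$ or an \IBCtuple $\vB_\vC$ with its space, kernel, and parameters. In the first case, transport $S$ along the canonical identification $\rowtuplespace(\vC)\cong T_1$ to the desired proper characteristic block-compatible subspace of $T_1$. In the second case, lift $\vB_\vC$ to an \IBCtuple $\vB$ of $\vA$ by expanding each row of $\vB_\vC$ into $\zeta$ row tuples of $\vA$ via the $\psi_j^{-1}$; the space $\ibcspace_\vB$, kernel $\ibcspacekernel_\vB$, and parameters $t_{\vB,i}$ are inherited directly from those of $\vB_\vC$. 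The four properties of Definition~\ref{def:four_prop} transfer because any block-diagonalization of $\vA$ induces, through the $\psi_j$, a block-diagonalization of $\vC$ with matching block structure, and the conclusions that all \IBCtuples of $\vA$ are equivalent, that $\ibcspacekernel_\vB$ is zero, and that each $K_i$ is trivial follow from the corresponding statements in Lemma~\ref{lem:conjugation_canonical_main} for $\vC$.

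The main obstacle is the equivariance of every canonical choice --- the coordinate indices $k_j$, the bijections $\psi_j$, the basis of $T_1$, and the basis of $\rowvectorspace(\vA)$ --- under the action $\vA'=L\vA R^{-1}$, $T_i'=T_iR^{-1}$. Each choice must be made so that the constructed tuple $\vC'$ for $\vA'$ equals $L'\vC(R')^{-1}$ for matrices $L',R'$ determined by $L,R$ and the construction; once this equivariance is verified by expressing every step in terms of characteristic invariants and tracing Fact~\ref{fact:ibctuple_invariant_basic}, the canonicality clause of Lemma~\ref{lem:conjugation_canonical_main} delivers the canonicality clause of the present lemma directly, and the $\poly(n,m,\ell,\log q)$ running time follows from the analogous bound for $\vC$, whose dimensions are polynomial in those of $\vA$.
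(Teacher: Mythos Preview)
Your overall strategy---merge the $T_i$'s into one tuple and invoke Lemma~\ref{lem:conjugation_canonical_main}---is the paper's, but there is a real gap at the point where you assert that $\vC$ is square with every matrix full rank.

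The equi-dimension hypothesis $\dim(T_1)=\cdots=\dim(T_\zeta)=:d$ controls only the number of rows of $\vC$; the column space is still $\rowvectorspace(\vA)$, whose dimension is $wd$ for some $1\le w\le\ell$ unrelated to the hypotheses. Already in the example of Section~\ref{sec:direct_sum_overview} the merged tuple is $2\times 4$, not square. Nor are its matrices all invertible: a coordinate of some $T_j$ may vanish identically. So $\vC$ cannot be fed directly to the Full Rank \IBCtuple Space Algorithm.

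The paper inserts a second reduction (the Single Row Tuple Subspace Algorithm, Claim~\ref{claim:single_tuple_multiple_spbspace}) precisely to bridge this. One first runs the Row Vector Space Direct Sum Algorithm on $T_1$ to obtain $\rowvectorspace(\vA)=W_1\oplus\cdots\oplus W_w$ with each $\dim W_j=d$; then checks that every $(T_i)^{(k)}$ has either a matching or a zero projection onto each $W_j$ (returning a characteristic subspace of $T_1$ when this fails); and only then projects each coordinate of $\vC$ onto each $W_j$ and identifies $W_j\cong\F_q^d$ to manufacture a genuinely square tuple in $\M(d,\F_q)$. Your single-coordinate matching $\psi_j$ is a special case of this projection idea, but without the $W$-decomposition it need not be well defined, and the failure branches---which are exactly where further characteristic subspaces of $T_1$ are produced---are missing from your plan.

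A smaller point: you propose fixing canonical (row-echelon) bases of $T_1$ and of $\rowvectorspace(\vA)$, but such bases are not $R$-equivariant. The paper instead makes \emph{arbitrary} basis choices at these spots and verifies that the final output (the subspace $S$, or the space $\ibcspace_\vB$) is independent of them; canonicality is argued at the level of outputs, not intermediate bases.
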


The high-level idea for our algorithm is to construct a new matrix tuple satisfying the condition of Lemma~\ref{lem:conjugation_canonical_main} in Section~\ref{sec:conjugation} 
so that the \IBCtuples for the input matrix tuple and the \IBCtuples for the new input matrix tuple have one-to-one correspondence.
If, during the process, any of the input row tuple subspaces can be refined into a smaller characteristic block-compatible row tuple subspace, then we output the refined row tuple subspace. Otherwise, we can use the algorithm obtained in  Section~\ref{sec:conjugation} to compute the \IBCtuples for the new input matrix tuple and then transform them into the \IBCtuples for the input matrix tuple by the correspondence between \IBCtuples of the two matrix tuples.

There are two steps to construct the new matrix tuple. 
In the first step, we transform the input matrix tuple with a direct sum row tuple decomposition into an intermediate matrix tuple over the same row vector space with a single row tuple decomposition by matching row tuples from different characteristic block-compatible row tuple subspaces together.

In the second step, we transform the intermediate matrix tuple into a desirable square matrix tuple.

Both steps rely on decomposing the row vector space of the input matrix tuple as a direct sum of characteristic block-compatible row vector subspaces and investigating the projection of row tuples in the input matrix tuple and intermediate matrix tuple on these characteristic block-compatible row vector subspaces.

Before we give the details of our algorithm, we first review the projection of a row vector subspace on a direct sum decomposition of the entire row vector space. 
Let $W$ be a row vector space, and $W_1, \dots, W_w \leq W$ be row vector subspaces such that $W=W_1\oplus \dots\oplus W_w$. 
Recall that for a row vector $v \in W$, $v$ can be uniquely written as $v_1 + \dots + v_d$ such that $v_j \in W_j$ for any $j \in [w]$. The row vector $v_j$ is the \emph{projection} of $v$ on $W_j$ with respect to $W_1, \dots, W_w$, denoted as $\rowvecproj_{W_1, \dots, W_w}(v, W_j)$.
Consequently, the projection of a row vector subspace of $V\leq W$ on $W_j$ (with respect to $W_1, \dots, W_w$) is the subspace spanned by the projections of row vectors in $V$ on $W_j$ with respect to $W_1, \dots, W_w$.
%A row vector subspace has a zero projection on $R_i$ if every row vector in the subspace has a zero projection on $R_i$. 

We say a row vector subspace $V \leq W$ has a \emph{zero projection} on $W_j$ (with respect to $W_1, \dots, W_w$), if every row vector in $V$ has a zero projection on $W_j$ with respect to $w_1, \dots, W_w$. 
We say a row vector subspace $V \leq W$ has a \emph{matching
%full\ynote{Perhaps ``matching'' projection is a better term?} \xnote{sounds good} 
projection} on $W_j$ (with respect to $W_1, \dots, W_w$), if $\dim(V) = \dim(W_j)$ and every non-zero row vector in $V$ has a non-zero projection on $W_j$ with respect to $W_1, \dots, W_w$.  \\

We use the following algorithm to decompose the row vector space spanned by the row vectors in the row tuples of a row tuple subspace into a direct sum of row vector subspaces. % subspace supposing the linear span of the row vectors of the row tuples in the row tuple subspace is equal to the entire row vector subspace. 

%For a row tuple subspace $T$, the $k$-th row vector subspace of $T$ for some $k$ is the row vector space spanned by the $k$-th coordinate row vectors of all the row tuples in the row tuple space.\ynote{I assume that this will be (or has been) defined in the preliminary section.}
%We say a row tuple subspace $T$ has a non-zero projection on $R_j$ if there exists $1 \leq k \leq \alpha$ such that $k$-th row vector subspace of the row tuple subspace has a non-zero projection on $R_j$.

%We view the matrix tuples as a row tuple space.  We first decompose the row tuple space into a direct sum of canonical row tuple subspaces such that for each row tuple subspace and each $R_j$ and each $1 \leq k \leq \alpha$, the $k$-th row vector subspace of the row tuple subspace has either a zero projection on $R_j$ or a matching projection on $R_j$. 
%Furthermore, the canonical row tuple subspaces have mutually trivial intersections. 

\begin{framed}
\noindent \textbf{Row Vector Space Direct Sum Algorithm}

\noindent \textbf{Input:} Characteristic block-compatible row tuple subspace $T$ with row tuple length $\ell$.

\noindent \textbf{Output:} Characteristic block-compatible row vector subspaces $W_1, \dots, W_w$ for some integer $w$ such that $\rowvectorspace(T) = W_1 \oplus \dots \oplus W_w$, or non-trivial characteristic block-compatible row tuple subspace $S < T$.

\noindent \textbf{Algorithm:}

\begin{enumerate}
\item For $i = 1, \dots, \ell$, if $0 < \dim\left(T^{(i)}\right) < \dim(T)$, then return $\left\langle \left\{\va \in T: \va^{(i)} = 0\right \}\right\rangle$.  
\item Let $w = 0$. 
\item For $i = 1, \dots, \ell$, 
\begin{enumerate}
    \item If $\dim\left(T^{(i)} \cap \left(W_1 \oplus \dots \oplus W_w\right)\right) = 0$, then $w = w + 1$, $W_w = T^{(i)}$, $u_w = i$, and continue.
    \item If $0 < \dim\left(T^{(i)} \cap (W_1 \oplus \dots \oplus W_w)\right) < \dim\left(T^{(i)}\right)$, return \[\langle \va \in T: \va^{(i)}  \in T^{(i)} \cap(W_1 \oplus \dots \oplus W_w) \rangle.\]
    \item %If the projection of $T^{(i)}$ on $W_j$ with respect to $W_1, \dots, W_w$ is neither a zero projection nor a matching projection, then 
    For each $1 \leq j \leq w$,  
    if $0 < \dim(W_{i, j}) < \dim(W_j)$, then  return \[\langle \{\va \in T: \va^{(u_j)} \in W_{i, j}\}\rangle,\] where $W_{i, j}$ is the linear space spanned by the projection of $T^{(i)}$ on $W_j$ with respect to $W_1, \dots, W_w$. 
\end{enumerate}
\item Return $W_1, \dots, W_w$. 
\end{enumerate}
\vspace{-.3cm}
\end{framed}

\begin{claim}\label{claim:row_vector_space_direct_sum}
The Row Vector Space Direct Sum Algorithm has the following properties:
\begin{enumerate}
\item
    Given a characteristic block-compatible row tuple subspace $T$ such that every row tuple is in $(\F_q^m)^\ell$, in $\text{poly}(m, \ell, \dim(T), \log q)$ time, the Row Vector Space Direct Sum Algorithm gives one of the following outputs:
    \begin{enumerate}
        \item A sequence of characteristic block-compatible row vector subspaces $W_1, \dots, W_w$ such that \[\rowvectorspace(T) = W_1 \oplus \dots \oplus W_w,\] and for every $1 \leq \ell' \leq \ell$, the projection $T^{(\ell')}$ on $W_j$ with respect to $W_1, \dots, W_w$ is either a zero projection or a matching projection. 
        \item A non-trivial characteristic block-compatible row tuple subspace $S$ such that $S < T$. 

    \end{enumerate}
    \item The algorithm is canonical in the following sense: 
    Let $T'$ be another row tuple space such that every row tuple of $T'$ is also in $(\F_q^m)^\ell$ and there exists an invertible matrix $R$ satisfying $T' = T R^{-1}$. The outputs for $T$ and $T'$ satisfy the following conditions:
    (Let $R$ be an arbitrary matrix such that such that $T' = T R^{-1}$.)
    \begin{enumerate}
        \item If the output for $T$ is a sequence of row vector spaces $W_1, \dots, W_w$, then the output for $T'$ is $W_1 R^{-1}, \dots, W_w R^{-1}$. 
        \item If the output for $T$ is a subspace $S < T$, then the output for $T'$ is $S R^{-1}$. 
    \end{enumerate}
\end{enumerate}
\end{claim}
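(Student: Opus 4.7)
The plan is to establish both items simultaneously by showing that every early-return branch produces a bona fide characteristic block-compatible proper subspace of $T$, that the loop in Step 3 maintains a clean invariant on the running data $(W_1,\dots,W_w)$, and that every subspace and dimension examined by the algorithm transforms equivariantly under right multiplication $T\mapsto T R^{-1}$. The running-time bound is immediate: each iteration only performs elementary linear algebra on subspaces of $(\F_q^m)^\ell$ of dimension at most $m\ell$, and the outer loop runs $\ell$ times.

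For the early-return correctness, I would apply Fact~\ref{fact:ibctuple_invariant_basic} systematically. In Step 1, the subspace $\{\va\in T:\va^{(i)}=0\}$ is precisely the set of $\va\in T$ whose $i$-th row vector lies in the characteristic block-compatible row vector subspace $\{0\}$ (item 2), so it is characteristic block-compatible by item 4; it is a proper subspace of $T$ exactly when $0<\dim(\rowtupvs{T}{i})<\dim(T)$. In Step 3(b), $\rowtupvs{T}{i}\cap(W_1\oplus\dots\oplus W_w)$ is characteristic block-compatible as an intersection of two such spaces (item 1), and then item 4 produces a characteristic block-compatible row tuple subspace properly contained in $T$. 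In Step 3(c), item 5 directly supplies both that $W_{i,j}$ is characteristic block-compatible and that $\{\va\in T:\va^{(u_j)}\in W_{i,j}\}$ is characteristic block-compatible, and this row tuple subspace is properly contained in $T$ whenever the dimension check $0<\dim(W_{i,j})<\dim(W_j)$ triggers.

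The main invariant I would carry through Step 3 is that, after processing index $i$, (a) each $W_j$ is characteristic block-compatible and equals some $\rowtupvs{T}{u_j}$ with $\dim(W_j)=\dim(T)$, (b) the sum $W_1\oplus\dots\oplus W_w$ is direct and equals $\langle\rowtupvs{T}{1}\cup\dots\cup\rowtupvs{T}{i}\rangle$, and (c) for every $i'\le i$ and every $j\le w$ the projection of $\rowtupvs{T}{i'}$ onto $W_j$ is either zero or matching. The base case $w=i=0$ is vacuous. For the inductive step, Step 1 ensures each nonzero $\rowtupvs{T}{i}$ has dimension exactly $\dim(T)$; if Step 3(a) fires it safely appends $W_{w+1}=\rowtupvs{T}{i}$, preserving directness; otherwise the iteration survives Step 3(b) only if $\rowtupvs{T}{i}\subseteq W_1\oplus\dots\oplus W_w$, and surviving Step 3(c) forces each $W_{i,j}\in\{0,W_j\}$. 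The key dimensional observation is that $W_{i,j}=W_j$ combined with $\dim(\rowtupvs{T}{i})=\dim(W_j)$ forces the projection onto $W_j$ restricted to $\rowtupvs{T}{i}$ to be a bijection, so every nonzero element of $\rowtupvs{T}{i}$ has nonzero image in $W_j$, which is exactly the matching-projection condition. Once the loop terminates, invariant (b) yields $\rowvectorspace(T)=W_1\oplus\dots\oplus W_w$ and invariant (c) yields the projection dichotomy.

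Canonicality follows by induction on iterations. Under $T'=TR^{-1}$ one has $\rowtupvs{(T')}{i}=\rowtupvs{T}{i}R^{-1}$, and right multiplication by the invertible $R^{-1}$ commutes with intersections, direct sums, spans, and projections of row vector subspaces; hence the integer $w$, the indices $u_j$, the branch taken at every step, and the subspace built by every formula in the algorithm all transform by $R^{-1}$ on the right, giving both subcases. The step I expect to require the most care is the inductive maintenance of invariant (b), since the well-definedness of the projections invoked in Step 3(c) via Fact~\ref{fact:ibctuple_invariant_basic}(5) relies on the directness of $W_1\oplus\dots\oplus W_w$; happily, the Step 3(a) check is exactly the condition that preserves this directness, so the invariant is self-sustaining.
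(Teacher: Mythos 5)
Your proposal is correct and follows essentially the same approach as the paper's proof: both rely on Fact~\ref{fact:ibctuple_invariant_basic} to certify that each early-return subspace is characteristic block-compatible and proper, track the invariant that the $W_j$'s form a direct sum equal to the running span of the $T^{(i)}$'s with only zero/matching projections, and prove canonicality by induction on iterations via equivariance under $T\mapsto TR^{-1}$. Your write-up is somewhat more explicit than the paper's terse version (which compresses the dimension-counting step behind the matching-projection dichotomy), but the logical content matches.
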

\begin{proof}
    For the first property, we observe that if the algorithm does not terminate at Step 3 for some $i$, 
    then it is either the case of Step 3(a) or the conditions of none of Step 3(a), 3(b), and 3(c) applies. 
    
    If Step 3(a) is executed for $i$, then the projection of $T^{(i)}$ has a full projection on $W_w$. 
    If none of Step 3(a), 3(b), and 3(c) was executed for $i$, $T^{(i)}$ has either a matching projection or a zero projection on $W_j$ for every $1 \leq j \leq w$.
    
    For either case, $T^{(i)}$ has either a matching projection or a zero projection on $W_j$ for every $1 \leq j \leq w$. 
    By Fact~\ref{fact:ibctuple_invariant_basic}, the output is characteristic block-compatible. 
    The first property is then obtained by the algorithm. 

    For the second property, by the condition of $T' = T R^{-1}$, we have \[(T')^{(i)} = T^{(i)} R^{-1}\] for every $1 \leq i \leq \ell$. Hence, if the algorithm for $T$ terminates for some $i$ in Step 1, then the algorithm for $T'$ terminates for the same $i$ in Step 1, and then the second property holds. For Step 3, by induction on $i$, in the $i$-th iteration of Step 3, the algorithm for $T$ and the algorithm for $T'$ take the same case, and after the $i$-th iteration of Step 3, the values of $w$ are the same for $T$ and $T'$, and $W_i' = W_i R^{-1}$ for any $1 
    \leq i \leq w$ suppose $W_1, \dots, W_w$ denote the row vector spaces for $T$ and $W_1', \dots, W_w'$ denote the row vector spaces for $T'$. Thus, the second property holds. 
\end{proof}

The following algorithm considers the case where the input matrix tuple is associated with a decomposition of the row vector space. 
The algorithm either finds a nontrivial characteristic block-compatible row tuple subspace or constructs a square matrix tuple with every matrix in the tuple being full rank. 
For the latter case, the algorithm uses the result in Section~\ref{sec:conjugation} to compute an \IBCtuple of the square matrix tuple together with the \IBCtuple space and \IBCtuple space kernel. Then the algorithm constructs an \IBCtuple (and its \IBCtuple space and \IBCtuple space kernel) of the input matrix tuple corresponding to the \IBCtuple of the square matrix tuple.

\begin{framed}
\noindent \textbf{Single Row Tuple Subspace Algorithm}

\noindent \textbf{Input:} Matrix tuple $\vA \in \M(n \times m, \F_q)^\ell$, and characteristic block-compatible row vector spaces $W_1, \dots, W_w$ satisfying the following conditions:
\begin{enumerate}
    \item $\rowvectorspace(\vA) = W_1 \oplus \dots \oplus W_w$.
    \item For every $1 \leq j \leq w$, there is a $k \in [\ell]$ such that $W_j = \rowvectorspace(\vA^{(k)})$. 
    \item For every $1 \leq k \leq \ell, 1 \leq j \leq w$, the projection of $(\rowtuplespace(\vA))^{(k)}$ on $W_j$ with respect to $W_1, \dots, W_w$ is either a matching projection or a zero projection.
\end{enumerate}

\noindent \textbf{Output:} A representative \IBCtuple sequence for $\vA$ that contains only a single \IBCtuple $\vB$, and $\ibcspace_{\vB}, \ibcspacekernel_\vB$, or a non-trivial subspace $S < \rowtuplespace(\vA)$.

\noindent \textbf{Algorithm:}
\begin{enumerate}
    \item Construct a matrix tuple $\vP = (P_1, \dots, P_{\ell \cdot w}) \in \M(n \times m, \F_q)^{\ell \cdot w}$ as follows: for any $1 \leq k \leq \ell$ and $1 \leq j \leq w$, $P_{((k-1)\cdot w + j)}$ is the matrix such that $e_i P_{((k-1)\cdot w + j)}$ equals the projection of $e_i A_k$ on $W_j$. 
    \item For every $1 \leq j \leq w$, let $a_j$ be the smallest integer such that $a_j = (k-1) \cdot w + j$ for some $k \in [\ell]$ and the $P_{a_j}$ is a non-zero matrix. 
    If $a_j$ exists, then let $Q_{j}$ be an arbitrary $m \times n$ matrix such that $P_ {a_j} Q_{j} = I_n$, 
    %\ynote{Why is this $I_n$? The rank of $\vA_{proj, t_j}$ is at most $\dim(R_j)$ right?}, 
    otherwise, let $Q_{j}$ be the zero $m\times n$ matrix. 
    \item Construct a matrix tuple $\vC= (C_1, \dots, C_{\ell \cdot w}) \in \M(n, \F_q)^{\ell\cdot w}$ as follows: For each $1 \leq k \leq \ell$ and $1 \leq j \leq w$, $C_{(k-1)\cdot w + j}= P_{(k-1)\cdot w + j} \cdot Q_{j}$. 
    %\item Let $t$ be the smallest integer such that the projection of $VA_i$ on $R_j$ for any $1 \leq i \leq \alpha$ and $1 \leq j < t$ is a zero projection, and there is an $1 \leq i \leq \alpha$ such that the projection of $VA_i$ on $R_t$ is non-zero. 
    %\item Let $\beta$ be the smallest integer such that the projection of $VA_i$ on $R_t$ for any $1 \leq i < \beta$ is a zero projection, and the projection of $VA_\beta$ on $R_t$ is non-zero. 
    %\item We let $f_t(v) = v$ for any $v \in R_t$. For any $j\neq t$ such that the projection of $VA_i$ on $R_j$ is non-zero for some $1 \leq i \leq \alpha$, we define a bijection $f_j$ from vectors in $R_j$ to vectors in $R_j$ as follows: Let $\alpha_j$ be the smallest integer that makes the projection of $VA_{\alpha_j}$ non-zero. We let $f_j(v) = v'$ for $v\in R_j$ and $v' \in R_t$ such that let $v_0$ be the vector such that the projection of $v_0A_\beta$ on $R_t = v$ and $v'$ be the projection of $v_0 A_{\alpha_j}$ on $R_j$.
    %\item We construct the $n'\times n'$ matrix tuple $\vA'$ for $n' = \dim(R_t)$ as follows: 
    %\begin{enumerate}
    %    \item let $w = 0$, and $\vA''$ be a $n \times m$ matrix tuple does not contain any matrix initially.
    %    \item For any $1 \leq i \leq \alpha$ and $1 \leq j \leq d$, if $f_j$ is defined, let $w = w + 1$ and $A_w''$, the $w$-th matrix in $\vA''$ be the matrix defined as following: $k$-th row of $A_w'$ is $f(u_k)$ where $u_k$ is the projection of $e_k A_i$ on $R_j$. 
    %    \item Find an arbitrary $m \times m$ invertible matrix $Q$ such that any $r \in R$ multiply by any  
    %\end{enumerate}
    \item Run the Full Rank \IBCtuple Space Algorithm on $\vC$. 
    \begin{enumerate}
        \item If the output is a row tuple subspace $S_0 < \rowtuplespace(\vC)$, then return $\langle \{\va \in \rowtuplespace(\vA): \exists v \in \F_q^n, \vc \in S,  \text{ s.t. } \vc = v \vC, \va = v \vA \}\rangle$. 
        \item If the output is a representative \IBCtuple sequence containing a single \IBCtuple $\vE$ for $\vC$ with $\ibcspace_{\vE}$ and $\ibcspacekernel_{\vE}$, then output a representative \IBCtuple sequence for $\vA$ containing a single \IBCtuple $\vB$ with $\ibcspace_\vB$ and $\ibcspacekernel_{\vB}$ as follows: Let $\sigma$ be the number of rows in $\vE$. 
        Let $\vB$ be the matrix tuple with $\sigma$ rows such that for any $i\in[\sigma]$, $e_i \vB = v_i \vA$, where $v_i$ is the row vector such that $e_i \vE = v_i \vC$, 
        $\ibcspace_\vB = \langle \{\vB^\dagger \in \M(\sigma \times m, \F_q)^\ell: \exists \vE^\dagger \in \ibcspace_{\vE} \text{ s.t. } \forall i \in [\sigma], \exists v_i \in\F_q^^n, e_i \vB^\dagger = v_i \vA, e_i \vE^\dagger = v_i \vC\} \rangle$, and $\ibcspacekernel_{\vB}$ be the linear space that contains only the zero matrix tuple of $\ibcspace_\vB$. 
        %For any $1 \leq k \leq \alpha$, $1 \leq j \leq d$, and $1 \leq i \leq n$, the $((\sum_{j' = 1}^{j-1} \dim(R_{j'})) + i)$-th\ynote{Was: $((\sum_{j' = 1}^{j-1} \dim(R_{j})) + i)$-th} column of the $k$-th matrix, is $i$-th column of $A'_{(k - 1) \cdot  \alpha + j}$. Other columns of each matrix of the canonical form are zero.
        %, for each $1 \leq k \leq \alpha$ and $1 \leq j \leq d$ such that $(k-1)\cdot d + j$-th matrix of $A_{proj}$ is a non-zero matrix (denote this matrix by $A_{proj, (k-1)\cdot d + j}$), then $\{v A_{proj, (k-1)\cdot d + j} : v A'_{(k-1)\cdot d + j} \in S\}$ is a non-trivial canonical subspace of $R_j$. We return all the subspaces obtained. 
    \end{enumerate}
\end{enumerate}
\vspace{-.3cm}
\end{framed}

\begin{claim}\label{claim:single_tuple_multiple_spbspace}
        There is a Single Row Tuple Subspace Algorithm that achieves the following:
    \begin{enumerate}
        \item For a matrix tuple $\vA \in \M(n \times m, \F_q)^\ell$ and a sequence of row vector spaces $W_1, \dots, W_w$  such that $\rowvectorspace(\vA) = W_1 \oplus \dots \oplus W_w$, for every $1 \leq j \leq w$ there is a $k \in [\ell]$ such that $W_j = (\rowvectorspace(\vA))^{(k)}$, and for every $1 \leq k \leq \ell, 1 \leq j \leq w$, the projection of $(\rowtuplespace(\vA))^{(k)}$ on $W_j$ with respect to $W_1, \dots, W_w$ is either a matching projection or a zero projection,
        the output of the algorithm satisfies one of the following conditions:
        \begin{enumerate} 
            \item The output is an \IBCtuple $\vB$ of $\vA$ with $\ibcspace_{\vB}$ and $\ibcspacekernel_\vB$ satisfying the following conditions:
            \begin{enumerate}
                \item Every \IBCtuple of $\vA$ is equivalent to $\vB$. 
                \item $\vB$ satisfies the space property with \IBCtuple space $\ibcspace_\vB$ and \IBCtuple space kernel $\ibcspacekernel_{\vB}$ such that $\ibcspacekernel_{\vB}$ contains only the zero matrix tuple.
                \item Let $T_1 = \rowtuplespace(\vA)$ and $K_1$ be the linear space contains only zero row tuple of $T_1$. $\vB$ satisfies row tuple space property with $t_{\vB, 1} = \dots, = t_{\vB, \sigma} = 1$, where $\sigma$ is the number of rows of $\vB$. 
                \item $\vB$ satisfies the dimension property and block-compatible property.  
            \end{enumerate}
            \item The output is a nontrivial characteristic block-compatible row tuple subspace $S$, which is a subspace of $\rowtuplespace(\vA)$.
        \end{enumerate}
        \item The algorithm is canonical in the following sense: Let $\vA'\in \M(n \times m, \F_q)^\ell$ be another matrix tuple and $W_1', \dots, W_w'$ be a sequence of row vector spaces such that there exist invertible matrices $L$ and $R$ satisfying $\vA' = L \vA R^{-1}$ and for any invertible matrices $L$ and $R$ such that $\vA' = L \vA R^{-1}$, $W_i' = W_i R^{-1}$ for any $1 \leq i \leq w$. 
        Then, the output for $\vA$ and $\vA'$ satisfies the following conditions: (Let $L$ and $R$ be arbitrary invertible matrices such that $\vA' = L \vA R^{-1}$.)
        \begin{enumerate}
            \item If the output for $\vA$ is a subspace $S < \rowtuplespace(\vA)$, then the output for $\vA'$ is $S R^{-1}$. 
            \item  If the output for $\vA$ is an \IBCtuple $\vB$ with $\ibcspace_\vB$ and $\ibcspacekernel_{\vB}$, then
        the output for $\vA'$ is an \IBCtuple $\vB'$ of $\vA$ with $\ibcspace_{\vB'}$ and $\ibcspacekernel_{\vB'}$ satisfying the following conditions:
        \begin{enumerate}
        \item $\vB'$ is right-equivalent to $\vB$. \item 
        $\ibcspace_{\vB'} = \{\vC \in \ibcspace_\vB: \vC R^{-1}\}$.
        \item 
        $\ibcspacekernel_{\vB'} = \{\vC \in \ibcspacekernel_\vB: \vC R^{-1}\}.$
        \end{enumerate}
        \end{enumerate}

    \end{enumerate}

\end{claim}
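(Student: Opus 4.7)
The plan is to verify that the constructed matrix tuple $\vC$ lands in the setting of Lemma~\ref{lem:conjugation_canonical_main}, translate the output back to $\vA$ via the natural row-vector correspondence, and track canonicity through each construction step.

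First I will verify that $\vC \in \M(n, \F_q)^{\ell \cdot w}$ and that every non-zero $C_{(k-1)w+j}$ is invertible. The matching/zero projection hypothesis on $(\rowtuplespace(\vA))^{(k)}$ projected to $W_j$ means that for each pair $(k, j)$, the matrix $P_{(k-1)w+j}$ either vanishes or maps $\F_q^n$ bijectively onto $W_j$. Hence for the chosen index $a_j = (k'-1)w + j$, the matrix $P_{a_j}$ is a row-permuted basis presentation of $W_j$, and a right-inverse $Q_j$ with $P_{a_j} Q_j = I_n$ exists. For any non-zero $P_{(k-1)w+j}$, composing with $Q_j$ gives an automorphism of $\F_q^n$, so $C_{(k-1)w+j}$ is full rank; the zero entries of $\vP$ remain zero. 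To avoid degeneracy in Lemma~\ref{lem:conjugation_canonical_main}, which requires every matrix to be full rank, I will either preprocess $\vC$ by discarding the zero slots (tracking the removal uniformly) or observe that the Full Rank \IBCtuple Space Algorithm can be stated tolerating zero coordinates (they contribute no row tuples and no effect on the radical computation).

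Next I will establish a canonical bijection $\Phi: \rowtuplespace(\vA) \to \rowtuplespace(\vC)$ sending $v\vA$ to $v\vC$. This map is well-defined and a linear isomorphism since both sides are parametrized by the same quotient $\F_q^n/\ker(\vA)$, and $\ker(\vA)=\ker(\vP)=\ker(\vC)$ because multiplying by invertible $Q_j$ on the right preserves row-null-spaces. The key structural claim is that $\Phi$ induces a bijection between characteristic block-compatible row tuple subspaces of $\vA$ and those of $\vC$, and between \IBCtuples of $\vA$ and of $\vC$. This follows because any block-diagonalization $\vD = L \vA R^{-1}$ of $\vA$ decomposes $\rowvectorspace(\vA)$, and by the block-compatibility of each $W_j$ (which I will verify using Fact~\ref{fact:ibctuple_invariant_basic} and the assumption that each $W_j$ is characteristic) the blocks of $\vD$ respect the $W_1 \oplus \dots \oplus W_w$ decomposition. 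Hence the analogous block-diagonalization for $\vC$ obtained by applying the same $L$ on the left (and a suitable $R'$ on the right determined by $R$ and the $Q_j$'s) exhibits the same indecomposable decomposition. Under this correspondence, step (a) of the algorithm simply transports a non-trivial characteristic block-compatible subspace of $\rowtuplespace(\vC)$ back to one of $\rowtuplespace(\vA)$, which inherits the required properties by Fact~\ref{fact:ibctuple_invariant_basic}.

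For step (b), I will show that the representative \IBCtuple $\vE$ for $\vC$ produced by Lemma~\ref{lem:conjugation_canonical_main} pulls back to an \IBCtuple $\vB$ of $\vA$ via $\Phi^{-1}$, and that its \IBCtuple space and kernel transfer verbatim. The four properties of Definition~\ref{def:four_prop} propagate because $\Phi$ is a linear bijection respecting block-compatibility; in particular $\ibcspacekernel_{\vB} = 0$ because $\ibcspacekernel_{\vE} = 0$, the row tuple space parameters collapse to the single subspace $T_1 = \rowtuplespace(\vA)$, and the block-compatible property follows by applying the correspondence blockwise.

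The main obstacle, and the delicate part of the argument, is canonicity in step~(2) of the claim, because the matrices $Q_j$ (and implicitly the indices $a_j$) are chosen non-canonically. Given the hypothesis $\vA' = L\vA R^{-1}$ with $W_j' = W_j R^{-1}$, the constructed tuples satisfy $\vP' = L \vP \cdot (\text{block-diagonal of restrictions of } R^{-1})$, and the indices $a_j$ agree for $\vA$ and $\vA'$ because they depend only on which coordinates of $\vA$ meet $W_j$ non-trivially — a property preserved by the left--right action. The critical step is to show that there exists an invertible $P$ with $\vC' = P \vC P^{-1}$ for some simultaneous choice of $Q_j'$ matching our chosen $Q_j$; concretely, setting $P = L$ and choosing $Q_j' = R Q_j L^{-1}$ restricted appropriately makes $\vC' = L \vC L^{-1}$. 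Then canonicity of the Full Rank \IBCtuple Space Algorithm (Lemma~\ref{lem:conjugation_canonical_main}, property 2) yields $\vE' = L \vE L^{-1}$, and pulling back through $\Phi$ gives $\vB' = \vB R^{-1}$ with the matching \IBCtuple space and kernel. Although the algorithm allows arbitrary $Q_j$, the final output $(\vB, \ibcspace_\vB, \ibcspacekernel_\vB)$ depends only on $\vA$ and the $W_j$'s, not on the specific $Q_j$, because it is characterized intrinsically via $\Phi$ — this independence is what I will verify to close the canonicity argument.
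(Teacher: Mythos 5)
Your proposal follows essentially the same route as the paper: build $\vC$, exploit a linear correspondence between row tuples of $\vA$ and of $\vC$, pull the output of Lemma~\ref{lem:conjugation_canonical_main} back through that correspondence, and handle canonicity by relating $\vC$ and $\vC'$ by a conjugation. Two remarks. First, the bulk of the paper's proof is spent making precise what you call ``$\Phi$ induces a bijection between \ldots{} \IBCtuples of $\vA$ and of $\vC$'': the paper proves, in both directions, that a minimum block-diagonalization of one of $\vA$, $\vC$ yields a block-diagonalization of the other with matching blocks, by checking the three direct-sum conditions of Fact~\ref{fact:ibctuple_diagonalzation_most_basic} on $\rowtuplespace$ and $\rowvectorspace$ (using $W_j = \rowvectorspace(\vA^{(a_j)})$ and the matching/zero projection hypothesis), and only then invokes Claim~\ref{claim:sequential_ibctuple_simultaneous}. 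Your proposal names this as the thing to do but does not carry it out; the direction from a block-diagonalization of $\vA$ to one of $\vC$ in particular is where the paper needs the identity $v\vC^{(a_j)} = v\vC^{(a_{j'})}$ and would be worth writing out. Second, your worry about the $Q_j$'s being non-canonical actually evaporates once you observe that $C_{(k-1)w+j}$ equals the unique $n\times n$ matrix $M_{k,j}$ with $P_{(k-1)w+j} = M_{k,j}P_{a_j}$ (since $P_{a_j}$ has full row rank), so $\vC$ is \emph{independent} of the choice of $Q_j$ and one directly gets $\vC' = L\vC L^{-1}$; that is a cleaner way to close the argument than choosing a compatible $Q_j'$. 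Your flag about zero blocks in $\vC$ versus the full-rank hypothesis of the Full Rank \IBCtuple Space Algorithm is a fair reading concern, though in that algorithm only $C_1$ (or whichever index plays the role of $A_1$) must be invertible, and the zero matrices merely contribute zero generators to $\Env(\vG)$.
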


\begin{proof}
For the first property, we first show that if Step 4(a) of the algorithm is taken, then the output of the algorithm satisfies condition (a) of the first property. 

Let $\vC'$ be an arbitrary matrix tuple such that $\rowtuplespace(\vC') \leq \rowtuplespace(\vC)$. 
Denote the number of rows for $\vC'$ by $\sigma_{\vC'}$. 
Let $f(\vC')$ be the matrix tuple with a number of rows the same as $\vC'$ such that $\rowtuplespace(f(\vC')) \leq \rowtuplespace(\vA)$, and for every $1 \leq i \leq \sigma_{\vC'}$, $e_i f(\vC') = v \vA$, where $v$ is the row vector such that $e_i \vC' = v \vC$. 
Conversely, for a matrix tuple $\vA'$ with $\rowtuplespace(\vA') \leq \rowvectorspace(\vA)$, 
we use $f^{-1}(\vA')$ to denote the matrix tuple $\vC'$ with $\rowtuplespace(\vC') \leq \rowvectorspace(\vC)$ such that $f(\vC') = \vA'$.
By the construction of $\vC$ as Step 3 of the algorithm, $f(\vC')$ and $f^{-1}(\vA')$ exist and unique for any $\vC'$ and $\vA'$ such that $\rowtuplespace(\vC') \leq \rowtuplespace(\vC)$ and $\rowtuplespace(\vA') \leq \rowtuplespace(\vA)$. 

We show that for any minimum block-diagonalization $\vG = \diag(\vG_1, \dots, \vG_g)$ for $\vC$ such that $\vG = L_{\vG} \vC (R_{\vG})^{-1}$ for some invertible matrices $L_\vG$ and $R_\vG$, there is a block-diagonalization $\vD = \diag(\vD_1, \dots, \vD_g)$ of $\vA$ with $\vD = L_\vG \vA (R')^{-1}$ for some invertible $R'$ such that $\vA_{i, \vD, L_\vG} = f(\vC_{i, \vG, L_\vG})$ for any $1 \leq i \leq g$. 
By Definition~\ref{def:ibctuple} and Fact~\ref{fact:ibctuple_diagonalzation_most_basic}, $\vC_{1, \vG, L_\vG}, \dots, \vC_{k, \vG, L_\vG}$ are \IBCtuples of $\vC$ such that \[\rowtuplespace(\vC) = \rowtuplespace(\vC_{1, \vG, L_\vG})\oplus \dots \oplus \rowtuplespace(\vC_{k, \vG, L_\vG}),\] and \[\rowvectorspace(\vC) = \rowvectorspace(\vC_{1, \vG, L_\vG})\oplus \dots \oplus \rowvectorspace(\vC_{k, \vG, L_\vG}).\]
Let 
\[W_{j, i} = \left\langle\left\{v \in W_j : v Q_j \in \rowvectorspace\left(\vC_{i, \vG, L_\vG}^{(a_j)}\right)\right\}\right\rangle\]
for any $1 \leq j \leq w$ and $1 \leq i \leq g$.
By the construction of $\vC$, we have \[W_j = W_{j, 1} \oplus \dots \oplus W_{j, g}\] for every $1 \leq j \leq w$. On the other hand, by the definition of $f$ function, we have \[\rowvectorspace(f(\vC_{i, \vG, L_\vG})) = W_{1, i} \oplus \dots \oplus W_{w, i}\] for every $1 \leq i \leq g$.
By the construction of $\vC$, we have \[\rowtuplespace(\vA) = \rowtuplespace(f(\vC_{1, \vG, L_\vG})) \oplus \dots \oplus \rowtuplespace(f(\vC_{g, \vG, L_\vG}))\] 
and $\rowvectorspace(\vA) = \rowvectorspace(\vC_{1, \vG, L_\vG}) \oplus \dots \oplus \rowvectorspace(\vC_{g, \vG, L_\vG})$. 
By Fact~\ref{fact:ibctuple_diagonalzation_most_basic}, the block-diagonalization $\vD$ exists. 

Next we show that if $\vA$ has a minimum block-diagonalization $\vD = \diag(\vD_1, \dots, \vD_d)$ with $\vD = L_\vD \vA (R_\vD)^{-1}$ for invertible matrices $L_\vD$ and $R_\vD$, then $\vC$ has a block-diagonalization $\vG = \diag(\vG_1, \dots, \vG_d)$ with $\vC_{i, \vG, L_\vD} = f^{-1}(\vA_{i, \vD, L_\vD})$. 
By Fact~\ref{fact:ibctuple_diagonalzation_most_basic}, $\rowvectorspace(\vA) = \rowvectorspace(\vA_{1, \vD, L_\vD}) \oplus \dots \oplus \rowvectorspace(\vA_{d, \vD, L_\vD})$. 
Since \[\rowvectorspace(\vA) = W_1 \oplus \dots \oplus W_w\] and 
$W_j = \rowvectorspace(\vA^{(a_j)})$ for every $1 \leq j \leq w$, 
we have 
\[W_j = \rowvectorspace((\vA_{1, \vD, L_\vD})^{(a_j)}) \oplus \dots \oplus \rowvectorspace((\vA_{d, \vD, L_\vD})^{(a_j)})\] for every $1 \leq j \leq w$. 
On the other hand, 
by the construction of $\vC$, we have $v \cdot \vC^{(a_j)} = v \cdot \vC^{(a_{j'})}$ for any $j, j' \in [w]$, which implies that $v P_{a_j} Q_j = v P_{a_{j'}} Q_{j'}$ for any $j, j' \in [w]$. 
Hence, 
\[\rowvectorspace((\vA_{i, \vD, L_\vD})^{(a_j)}) Q_j =((\vA_{i, \vD, L_\vD})^{(a_{j'})}) Q_{j'}\] for any $j, j'\in[w]$ and $1 \leq i \leq [d]$. 
Since $\vD$ is a block-diagonalization of $\vA$, we have \[\rowvectorspace(\vC) = \rowvectorspace(f^{-1}(\vA_{1, \vD, L_\vD})) \oplus \dots \oplus \rowvectorspace(f^{-1}(\vC_{d, \vD, L_\vD})).\]
In addition, by the definition of $\vC$, we 
have \[\rowtuplespace(\vC) = \rowtuplespace(f^{-1}(\vA_{1, \vD, L_\vD})) \oplus \dots \oplus \rowtuplespace(f^{-1}(\vC_{d, \vD, L_\vD})).\]
By Fact~\ref{fact:ibctuple_diagonalzation_most_basic}, $\vC$ has a block-diagonalization $\vG$ as required. 

By the above two properties and Claim~\ref{claim:sequential_ibctuple_simultaneous}, for every \IBCtuple $\vE$ of $\vC$, $f(\vE)$ is an \IBCtuple of $\vA$.  By the correspondence between $\vA$ and $\vC$, and  Lemma~\ref{lem:conjugation_canonical_main}, if the output is a row tuple subspace, the output is characteristic block-compatible. Thus, the first property of the claim holds.

For the second property, 
by the induction on each step of the algorithm,
if the algorithm on $\vA$ takes Step 4(a), then by the induction on steps of the algorithm and Lemma~\ref{lem:conjugation_canonical_main}, the algorithm for $\vA'$ also takes Step 4(a), and then the condition (a) for the second property holds. 
Otherwise, the algorithm for both $\vA$ and $\vA'$ takes Step 4(b). 
By the first property of this claim, the output for $\vA$ contains an \IBCtuple $\vB$ of $\vA$, and the \IBCtuple space and the \IBCtuple space kernel of $\vB$.
And the output for $\vA'$ contains an \IBCtuple $\vB'$ of $\vA'$, and the \IBCtuple space and the \IBCtuple space kernel of $\vB'$. 
Since $\vA$ is equivalent to $\vA'$, and all the \IBCtuples of $\vA$ are equivalent by Lemma~\ref{lem:conjugation_canonical_main}, 
the condition (b) of the second property holds. 
\end{proof}

Now we are ready to give the algorithm for Lemma~\ref{lem:direct_sum_ibctuple_algo}. 
The algorithm first computes a sequence of characteristic block-compatible row vector subspaces such that the row vector space of the input matrix tuple is the direct sum of these subspaces. 
Second, the algorithm constructs a new matrix tuple such that every row tuple in the new matrix tuple corresponds to a row tuple in each input row tuple subspace. This step utilizes the projection of row vectors in each given row tuple subspaces on the row vector subspaces obtained in the first step. 
After running the Single Row Tuple Subspace Algorithm on the new matrix tuple,
the final solution is obtained by transforming the solution of the Single Row Tuple Subspace Algorithm to the solution of the input matrix tuple.

\begin{framed}
\noindent \textbf{Direct Sum Decomposition Algorithm}

\noindent \textbf{Input:} 
Matrix tuple $\vA \in \M(n\times m, \F_q)^\ell$, and row tuple subspaces $T_1, \dots, T_\zeta \leq \rowtuplespace(\vA)$ such that $\rowtuplespace(\vA) = T_1 \oplus \dots \oplus T_\zeta$,  $\rowvectorspace(T_1) = \rowvectorspace(\vA)$, and $\dim(T_1) = \dots = \dim(T_\zeta)$.

\noindent \textbf{Output:} Representative \IBCtuple sequence for $\vA$ that contains only a single \IBCtuple $\vB$ with \IBCtuple space $\ibcspace_{\vB}$ and \IBCtuple space kernel $\ibcspacekernel_\vB$, or nontrivial characteristic block-compatible row tuple subspace $S < T_1$.

\noindent \textbf{Algorithm:}
    \begin{enumerate}
        \item Run the Row Vector Space Direct Sum Algorithm for $T_1$. If the output is a subspace of $T_1$, then return the subspace; otherwise, denote the output as $W_1, \dots, W_w$. 
        \item For all the $i\in[\zeta], j\in[w], k \in[\ell]$, 
        %every $1 \leq i \leq \zeta$, $1 \leq j \leq w$, and $1 \leq k \leq \ell$, 
        if the projection of $(T_i)^{(k)}$ on $W_j$ with respect to $W_1, \dots, W_w$ is neither a zero projection nor a matching projection, then 
        \begin{enumerate}
            \item Let $W = \langle\{v \in W_j : \exists \va \in T_i, \text{ the projection of } \va^{(k)} \text{ on } W_j \text{ is } v \}\rangle$.
            \item Let $k'$ be the smallest integer such that the projection of $T_1^{(k')}$ on $W_j$ is a matching projection. Return $\langle \va \in T_1 : \text{ the projection of } \va^{(k')} \text{ on } W_j \text{ is in } W\rangle$.
        \end{enumerate}
        \item Let $c=\dim(T_1)$. Construct a  matrix tuple $\vC = (C_1, \dots, C_{\ell \cdot \zeta}) \in \M(c \times m, \F_q)^{\ell \cdot \zeta}$ as follows:
        \begin{enumerate}
            \item Let $\va_1, \dots, \va_c$ be an arbitrary linear basis of $T_1$, and $(C_1, \dots, C_\ell)$ be the matrix tuple such that $e_i (C_1, \dots, C_\ell) = \va_i$ for every $i\in[c]$.
            \item For $2 \leq i \leq \zeta$, 
            let $k_0$ be the smallest integer such that $\dim((T_i)^{(k_0)}) > 0$, 
            $j$ be the smallest integer such that the projection of $(T_i)^{(k_0)}$ on $W_j$ is non-zero,
            and $k_1$ be the smallest integer such that the projection of $(T_1)^{(k_1)}$ on $W_j$ is non-zero. Let  
            $(C_{(i - 1)\cdot \ell + 1}, \dots, C_{i\cdot \ell})$ be that matrix tuple such that 
            the projection of $(e_r (C_{(i - 1)\cdot \ell + 1}, \dots, C_{i\cdot \ell}))^{(k_0)}$ on $W_j$ equals the projection of $\va_{r}^{(k_1)}$ on $W_j$ for all $r \in [c]$.

            %let $\ell''$ be the smallest integer such that $T_{\ell'}$ is correlated to $T_{\ell''}$. Let $w_{\ell'}$ be the smallest integer such that $T_{\ell'}$ and $T_{\ell''}$ are correlated on $R_{w_{\ell'}}$. Let $k_{\ell'}$ be the smallest integer such that the projection of $k_{\ell'}$-th row vector subspace of $T_{\ell'}$ on $R_{w_{\ell'}}$ is non-zero. Let $k_{\ell''}$ be the smallest integer such that the projection of $k_{\ell''}$-th row vector subspace of $T_{\ell''}$ on $R_{w_{\ell'}}$ is non-zero. 
            %Construct $((\ell'-1)\cdot \alpha + 1)$ to $(\ell' \cdot \alpha)$-th matrix of $\vA''$ satisfying the following conditions
            %\begin{enumerate}
            %    \item The span of the row tuples induced by $(A''_{(\ell'-1)\cdot \alpha + 1}, \dots, A''_{(\ell' \cdot \alpha)})$ is $T_{\ell'}$. 
            %    \item For each vector $v \in \F_p^c$, the projection of $v A''_{(\ell' - 1)\cdot \alpha + k_{\ell'}}$ on $R_w$ equals the projection of $v A''_{(\ell'' - 1)\cdot \alpha + k_{\ell''}}$ on $R_w$.
                
                %the projection of $v A'_{\beta, i}$ $((\sum_{\beta' = 1}^{\beta - 1} u_{\beta'}) + k_{\ell})$-th matrix of $A'_\beta$ on $R_w$ is the same as the projection of $((\sum_{\beta' = 1}^{\beta - 1} u_{\beta'}) + k_{\ell'})$-th matrix of $A'_\beta$. 
            %\end{enumerate}
            
         \end{enumerate}
        
        %such that the $i$-th row of $((j - 1) \cdot \alpha + k)$-th matrix of $vA_{\beta}'$ for any $1 \leq i \leq u, 1 \leq j \leq n_\beta, 1 \leq k \leq \alpha$ is the same as the $((j - 1) \cdot u + i)$-th row of $k$-th matrix of $\vA_{\beta}$.
        \item Run the Single Row Tuple Subspace Algorithm on $\rowtuplespace(\vC)$. 
        \begin{enumerate}
            \item If the output is a row tuple space $S < \rowvectorspace(\vC)$, then return $\langle \{\va \in T_1 : \exists \vc \in S, \va^{(1)} = \vc^{(1)} \}\rangle$.
            \item Otherwise, let $\vE$, $\ibcspace_\vE$, and $\ibcspacekernel_\vE$ denote the output. Let $f$ be a function maps a matrix tuple $\vE = (E_1, \dots, E_{\ell \cdot \zeta})$ with row tuples in $\rowtuplespace(\vC)$ to matrix tuple $\vF = (F_1, \dots, F_\zeta)$ with row tuples in $\rowtuplespace(T_1 \cup \dots \cup T_\zeta)$ as follows: let $\sigma_\vE$ denote the number of rows of $\vE$, let $\vF_i = (E_{(i - 1)\cdot \ell + 1}, \dots, E_{i \cdot \ell})$ for any $i \in [\zeta]$, and
            $\vF$ be the matrix tuple of $\sigma_\vE \cdot \zeta$ rows such that the $((i - 1)\cdot \sigma_\vE + 1)$ to $(i\cdot \sigma_\vE)$-th row of $\vF$ is equal to $\vF_i$. 
            %$\vF = \begin{bmatrix}
            %    \vF_1 \\ \vdots \\ \vF_\zeta
            %\end{bmatrix}$. 
            \item Return $\vB = f(\vE), \ibcspace_\vB = \langle f(\vE') : \vE' \in \ibcspace_{\vE} \rangle$, $\ibcspacekernel_\vB$ contains only the zero matrix tuple in $\ibcspace_\vB$, and $t_{\vB, 1}, \dots, t_{\vB, \sigma_\vE \cdot \zeta}$ such that $t_{\vB, (i-1) \cdot \sigma_\vE + i'} = i$ for any $i \in [\zeta], i' \in [\sigma_\vE]$.
        \end{enumerate}
        %the algorithm of Claim~\ref{claim:single_tuple_multiple_spbspace} on $\vA''$. If the result is a canonical refinement for row vector subspaces, then we return the obtained new row vector subspaces, otherwise, denote the resulted canonical form as $\vC' = (C'_1, \dots, C'_{\alpha\cdot \ell})$. 
        %\item Construct $n \times m$ canonical form $\vC = (C_1, \dots, C_\alpha)$ of $\vA$ as follow: for any $1 \leq k \leq \alpha$, $1 \leq i_0 \leq \ell$, $1 \leq i_1 \leq c$, and $1 \leq j \leq m$, $C_{k}[(i_0 - 1)\cdot c + i_1, j] = C'_{(i_0 - 1)\cdot \alpha + k}[i_1, j]$
    \end{enumerate}
    \vspace{-.3cm}
\end{framed}

\begin{proof}[Proof of Lemma~\ref{lem:direct_sum_ibctuple_algo}]
By the definition of $\vC$ and the function $f$, 
for an arbitrary minimum block-diagonalization $\vG = \diag(\vG_1, \dots, \vG_g)$ for $\vC$ such that $\vG = L_{\vG} \vC (R_{\vG})^{-1}$ for some invertible matrices $L_\vG$ and $R_\vG$, there is a block-diagonalization $\vD = \diag(\vD_1, \dots, \vD_g)$ of $\vA$ with $\vD =L_\vD  \vA (R_\vD)^{-1}$ for some invertible matrices $L_\vD$ and $R_\vD$ such that $\vA_{i, \vD, L_\vD} = f(\vC_{i, \vG, L_\vG})$ for any $1 \leq i \leq g$. 

Conversely, we show that if $\vA$ has a minimum block-diagonalization $\vD = \diag(\vD_1, \dots, \vD_d)$ with $\vD = L_\vD \vA (R_\vD)^{-1}$ for invertible matrices $L_\vD$ and $R_\vD$, then $\vC$ has a block-diagonalization $\vG = \diag(\vG_1, \dots, \vG_d)$ satisfying $\vG = L_{\vG} \vC (R_{\vG})^{-1}$ for some invertible matrices $L_\vG$ and $R_\vG$ such that for any $1 \leq i \leq d$, 
$\vC_{i, \vG, L_\vD} = f^{-1}(L_{\vA, i}\vA_{i, \vD, L_\vD})$ with some invertible matrices $L_{\vA, 1}, \dots, L_{\vA, d}$, where $f^{-1}(\vF)$ for some $\vF$ satisfying $\rowtuplespace(\vF) \leq \rowtuplespace(\vA)$ is $\vE$ such that $f(\vE) = \vF$. 
By Fact~\ref{fact:ibctuple_diagonalzation_most_basic}, we have \[\rowtuplespace(\vA) = \rowtuplespace(\vA_{1, \vD, L_\vD}) \oplus  \dots \oplus \rowtuplespace(\vA_{d, \vD, L_\vD})\] and \[\rowvectorspace(\vA) = \rowvectorspace(\vA_{1, \vD, L_\vD}) \oplus  \dots \oplus \rowvectorspace(\vA_{d, \vD, L_\vD}).\] 
By Claim~\ref{claim:row_vector_space_direct_sum}, 
for each $W_j$, there exists a $k$ such that $\rowvectorspace((T_1)^{(k)}) = W_j$. 
Hence, Let $W_{j, i} = W_j \cap \rowvectorspace(\vA_{i, \vD, L_\vD})$ for any $j \in [w], i \in [d]$, 
we have 
\[W_j = W_{j, 1} \oplus \dots \oplus W_{j, k}\]
for any $j \in [w]$
and 
\[\rowvectorspace(\vA_{i, \vD, L_\vD}) = W_{1, i} \oplus \dots \oplus W_{w, i}\]
for any $i \in [d]$. 
Thus, for any $\va \in T_r$ for some $r \in [\zeta]$, if there exists a $j \in [w]$ and a $k \in [\ell]$ such that the projection of $\va^{(k)}$ on $W_j$ is in $W_{j, i}$ for some $i \in [d]$, then $\va \in \rowtuplespace(\vA_{i, \vD, L_\vD})$. 
By the construction of $\vC$, letting \[\vC_{i} = \langle \{\va \in \rowtuplespace(\vC): \rowvectorspace(\va) \leq \rowvectorspace(\vA_{i, \vD, L_\vD})\}\rangle.\] 
We have 
\[\rowtuplespace(\vC) = \rowtuplespace(\vC_1) \oplus \dots \oplus \rowtuplespace(\vC_d)\]
and 
\[\rowvectorspace(\vC) = \rowvectorspace(\vC_1) \oplus \dots \oplus \rowvectorspace(\vC_d).\]
Hence, $\vC$ has a required block-diagonalization. 

By the above two properties and Claim~\ref{claim:sequential_ibctuple_simultaneous}, for every \IBCtuple $\vE$ of $\vC$, $f(\vE)$ is an \IBCtuple of $\vA$. On the other hand, if the output is obtained in Step 1 or Step 4(a), the output is a subspace of $T_1$, and by Claim~\ref{claim:row_vector_space_direct_sum} and Claim~\ref{claim:single_tuple_multiple_spbspace}, the row tuple subspace is characteristic block-compatible. 
Thus, the first property of the claim holds.

For the second property, 
by the induction on each step of the algorithm,
if the algorithm on $\vA$ takes Step 4(a), then by the induction on steps of the algorithm and Lemma~\ref{lem:conjugation_canonical_main}, the algorithm for $\vA'$ also takes Step 4(a), and then the condition (a) for the second property holds. 
Otherwise, the algorithm for both $\vA$ and $\vA'$ takes Step 4(b). 
By the first property of this claim, the output for $\vA$ contains an \IBCtuple $\vB$ of $\vA$, and the \IBCtuple space and the \IBCtuple space kernel of $\vB$.
And the output for $\vA'$ contains an \IBCtuple $\vB'$ of $\vA'$, and the \IBCtuple space and the \IBCtuple space kernel of $\vB'$. 
Since $\vA$ is equivalent to $\vA'$, and all the \IBCtuples of $\vA$ are equivalent by Lemma~\ref{lem:conjugation_canonical_main}, by the correspondence between $T_1, \dots, T_\zeta$ and $T_1', \dots, T_\zeta'$
the condition (b) of the second property holds.
\end{proof}

\section{\IBCtuples from hierarchical row tuple decomposition}\label{sec:quotient_matrix_tuple_main}

In this section, we give an algorithm to compute a representative \IBCtuple sequence for a matrix tuple with a general hierarchical row tuple decomposition. 

In our approach, we first investigate the \IBCtuples of the quotient matrix tuples regarding the hierarchical row tuple decomposition defined as follows. 

\begin{definition}[Quotient matrix tuple]\label{def:quotient_matrix_tuple}
    Let $\vA\in\M(n\times m, \F_q)^\ell$, and $T_1, \dots, T_\zeta$ be a depth-$\beta$ hierarchical row tuple decomposition of $\vA$ with parameters $h_0, \dots, h_\beta$ for some integer $\beta$. For any  $i\in[\beta]$, let $W_i:=\langle \rowvectorspace(T_{h_i}) \cup \dots \cup \rowvectorspace(T_\zeta)\rangle$, $n_i := n - \dim(\langle \{T_{h_i} \cup \dots \cup T_\zeta\} \rangle )$ and $m_i:=\dim(\rowvectorspace(\vA)/ W_\gamma)$. 

    For any $\gamma \in [\beta]$,
    a matrix tuple $\vQ \in \M(n_\gamma \times m_\gamma, \F_q)^\ell$ with formatting vector $(v_1 + W_\gamma, \dots, v_{m_\gamma} + W_\gamma)$ is a \emph{$\gamma$-quotient matrix tuple} of $\vA$ with respect to $T_1, \dots, T_\zeta$ and parameters $h_0, \dots, h_\beta$ if the following conditions hold:
    \begin{enumerate}

        \item $v_1, \dots, v_{m_\gamma}  \in \rowvectorspace(\vA$) such that $v_1 + W_\gamma, \dots, v_{m_\gamma} + W_\gamma$ is a linear basis of $\rowvectorspace(\vA)/ W_\gamma$. 

        %matrix tuple with $n' = \sum_{j = 1}^{h_\gamma - 1} q_{j}$. $\vQ$ contains $\ell$ matrices of size $n'\times m'$.
        \item For any $v \in \F_q^n$, there is a $x \in \F_q^{n_\gamma}$ such that $x \vQ \cdot (v_1 + W_\gamma, \dots, v_{m_\gamma} + W_\gamma)^T = v \vA + W_\gamma$. %, where $\va (v_1 + W_\gamma, \dots, v_{m'} + W_\gamma)^T$ for a row tuple $\va = (a_1, \dots, a_\ell)$ of length $\ell$ with $a_1, \dots, a_\ell \in \F^{m'}$ is defined as \[(a_1 (v_1 + W_\gamma, \dots, v_{m'} + W_\gamma)^T, \dots, a_\ell (v_1 + W_\gamma, \dots, v_{m'} + W_\gamma)^T).\]
    \end{enumerate}
    \end{definition}

The high-level intuition of a $\gamma$-quotient matrix tuple is a matrix tuple that shrinks all the row vectors in $W_\gamma$ into zero row vectors and then removes the zero row tuples in the remaining matrix tuple.  

The purpose of the quotient matrix tuple is that the quotient matrix tuple has a simpler structure than the original input matrix tuple so that we can first compute the \IBCtuples of the quotient matrix tuple and then compute the \IBCtuples of the input matrix tuple with the guidance of the \IBCtuples of the quotient matrix tuple. 

The major result of this section is an algorithm to canonically compute a representative \IBCtuple sequence for a matrix tuple with a depth-$\beta$ hierarchical row tuple decomposition based on a representative \IBCtuple sequence for a $\beta$-quotient matrix tuple (Lemma~\ref{lemma:extension_main} and Lemma~\ref{lemma:extension_main_canonical}). 

With this algorithm, we can compute a representative \IBCtuple sequence for the matrix tuple with a depth-$\beta$ hierarchical row tuple decomposition as follows: We first use the algorithm in Section~\ref{sec:direct_sum_decomposition} to compute a representative \IBCtuple sequence for a $1$-quotient matrix tuple of the input matrix tuple. 
Then by induction, for any $1 \leq \gamma \leq \beta$, using the algorithm in this section, we can compute a representative \IBCtuple sequence for a $(\gamma + 1)$-quotient matrix tuple of the input matrix tuple by a representative \IBCtuple sequence for a $\gamma$-quotient matrix tuple based on the fact that any $\gamma$-quotient matrix tuple of a matrix tuple is also a $\gamma$-quotient matrix tuple of any $(\gamma+1)$-quotient matrix tuple of the matrix tuple.
Finally, we can compute a representative \IBCtuple sequence for the input matrix tuple by the representative \IBCtuple sequence for the $\beta$-quotient matrix tuple.

In this section, we prove the following two lemmas.

\begin{lemma}\label{lemma:extension_main}
    There is an \IBCtuple Space Algorithm with the following input 
    \begin{enumerate}
        \item a matrix tuple $\vA \in \M(n \times m, \F_q)^\ell$ with a depth-$\beta$ hierarchical row tuple decomposition $T_1, \dots, T_\zeta$ with parameters $h_0, \dots, h_\beta$ for $\vA$,
        \item a $\beta$-quotient matrix tuple $\vQ$ of $\vA$ with formatting vector $(v_1 + W_\beta, \dots, v_{m_\beta} + W_\beta)$ with respect to $T_1, \dots, T_\zeta$,
        \item a representative \IBCtuple sequence $\vB_1, \dots, \vB_k$ of $\vQ$ with \IBCtuple space $\ibcspace_{\vB_i}$, \IBCtuple space kernel $\ibcspacekernel_{\vB_i}$, and row tuple space parameters $t_{\vB_i, 1}, \dots, t_{\vB_i, \sigma_i}$ for every $i \in [k]$ such that every $\vB_i$ for $i \in [k]$ satisfies the space property, row tuple space property, dimension property, and block-compatible property,
    \end{enumerate}
    such that with $\poly(n, m, \ell, \log q)$ running time, the output of the algorithm satisfies one of the following conditions:
        \begin{enumerate}
            \item The output is a nontrivial subspace $S < T_i$ for some $i\in [\zeta]$ such that $\dim(S / W_\gamma) > 0$. %  - \dim(S \cap K_i) = \dim(T_i) - \dim(K_i)$. 
            %\item The output is a nontrivial subspace $S < T_{h_j}$ for some $j \in \{0, 1, \dots, \beta\}$ such that $K_{h_j} < S$. 
            \item A representative relevant \IBCtuple sequence $\vG_1, \dots, \vG_{g}$ of $\vA$ with \IBCtuple space $\ibcspace_{\vG_i}$, \IBCtuple space kernel $\ibcspacekernel_{\vG_i}$, and row tuple space parameters $t_{\vG_i, 1}, \dots, t_{\vB_i, \sigma_{\vG_i}}$ for every $i \in [g]$ such that every $\vG_i$ for $i \in [g]$ satisfies the space property, row tuple space property, dimension property, and block-compatible property.
        \end{enumerate}
    \end{lemma}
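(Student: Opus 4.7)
The plan follows the strategy sketched in Section~\ref{sec:overview_hier}: lift the given representative IBC-tuple sequence of $\vQ$ to row-submatrix tuples of $\vA$, compress the extra information coming from $W_\beta$ into a new matrix tuple $\vE$, and then call the Direct Sum Decomposition Algorithm on $\vE$.

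First I would compute, for each $\vB_i$ in the input sequence, a canonical \emph{essential extension} $\vC_i$ in $\vA$, i.e., a row-submatrix tuple of $\vA$ whose image modulo $W_\beta$ (under the formatting vector $(v_1+W_\beta,\dots,v_{m_\beta}+W_\beta)$) equals $\vB_i$, and whose row vectors meet $W_\beta$ only in the subspace forced by \emph{every} IBC-tuple of $\vA$ that sits above $\vB_i$. Existence of such extensions (and a polynomial-time algorithm to produce them) should follow from an inductive argument: among all lifts, choose one whose intersection with $W_\beta$ has the smallest possible dimension, and observe using the space property and block-compatible property of $\vB_i$ that this minimal intersection is the same for every lift of every right-equivalent $\vB_i'$. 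Linearity of the essential extension with respect to the linear structure of $\ibcspace_{\vB_i}$ (Theorem~\ref{thm:main_space_structure} applied inside $\vQ$) will propagate to the space of essential extensions in $\vA$.

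Next I would build the compression matrix tuple $\vE$. It has two kinds of rows: (i) rows corresponding to a linear basis of $\langle T_{h_\beta}\cup \dots\cup T_\zeta\rangle$, which already live entirely in $W_\beta$; and (ii) for every essential extension $\vC_i$, new rows canonically encoding $\rowvectorspace(\vC_i)\cap W_\beta$ expressed over $W_\beta$. Every row tuple of $\vE$ has all coordinates in $W_\beta$. By construction, $\rowtuplespace(\vE)$ splits as a direct sum of the characteristic block-compatible subspaces attached to the $T_{h_\beta},\dots,T_\zeta$ pieces and the new pieces attached to each equivalence class of essential extensions; the block-compatible property for the $\vB_i$'s guarantees this decomposition is canonical and satisfies the hypothesis of Lemma~\ref{lem:direct_sum_ibctuple_algo}. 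Invoke that lemma for each summand (treating each as a single row tuple subspace in a depth-$0$ hierarchical decomposition of the appropriate restricted matrix tuple).

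If any of the calls returns a nontrivial characteristic block-compatible row tuple subspace strictly inside one of its input summands, I pull that subspace back through the essential-extension correspondence to obtain a subspace $S<T_i$ in the original decomposition of $\vA$ with $\dim(S/W_\beta)>0$, and return it. Otherwise, each call returns a representative IBC-tuple plus its IBC-tuple space and kernel; combine these with the $\vB_i$'s by taking the matrix tuple obtained from gluing the essential extension $\vC_i$ on top of the corresponding IBC-tuple of $\vE$, using the \IBCtuple Selection Algorithm of Lemma~\ref{lem:canonical_with_ibctuples} style row-by-row assembly inside $\vA$. This produces the desired representative IBC-tuple sequence $\vG_1,\dots,\vG_g$ for $\vA$, together with $\ibcspace_{\vG_i}$, $\ibcspacekernel_{\vG_i}$, and row tuple space parameters inherited from the corresponding data for $\vB_i$ and for the IBC-tuple of $\vE$.

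The main obstacle will be verifying that the four properties of Definition~\ref{def:four_prop} transfer correctly across this two-step lift. The space and row tuple space properties require showing that the linear span of all right-equivalent IBC-tuples of $\vA$ obtained this way is exactly the direct sum of the spans of the essential-extension lifts and the spans of the IBC-tuples in $\vE$, modulo the two kernels; this hinges on the canonicity of essential extensions. The block-compatible property needs that, for any block-diagonalization of $\vA$, projecting an essential extension of $\vB_i$ onto a block yields an essential extension again (of the corresponding projected $\vB_i$ inside a block of a block-diagonalization of $\vQ$). The dimension property will fall out from the analogous property for $\vB_i$ together with the fact that the $W_\beta$-component is controlled by $\vE$, whose dimension property comes from Lemma~\ref{lem:direct_sum_ibctuple_algo}. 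Throughout, canonicity is preserved because every step---essential extension, compression, direct sum call, and gluing---only uses data defined invariantly under the equivalence action, as was verified for the component algorithms in Sections~\ref{sec:conjugation} and~\ref{sec:direct_sum_decomposition}.
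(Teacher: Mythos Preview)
Your outline captures the high-level architecture the paper uses (essential extensions, a compression matrix tuple $\vE$ over $W_\beta$, then the Direct Sum Decomposition Algorithm), but two structural points are wrong and would cause the argument to fail.

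First, the paper does \emph{not} treat all $\vB_i$ uniformly. After computing the essential extension spaces $\ibcextensionspace_{\vB_i,\vA}$, it splits the $\vB_i$ into those \emph{irrelevant} to $W_\beta$ (where $(I-Z_{\vB_i}Y_{\vB_i})\mtupletomatrix(\vC)=0$ for every $\vC\in\ibcextensionspace_{\vB_i,\vA}$) and those \emph{relevant} to $W_\beta$. For the irrelevant ones, Lemma~\ref{lem:irrelevant_ibctuple} shows that each essential extension is already an \IBCtuple of $\vA$, so these contribute directly to the output sequence with no compression needed. Only the relevant $\vB_i$ feed into the compression matrix tuple, and crucially they are \emph{all merged into a single} \IBCtuple of $\vA$ via the Single \IBCtuple Construction Algorithm (Lemma~\ref{lemma:single_ibctuple_relevant}). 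Your plan of ``gluing the essential extension $\vC_i$ on top of the corresponding IBC-tuple of $\vE$'' for each $i$ would produce the wrong number and the wrong structure of output \IBCtuples: the output sequence has length $k_1+1$ (one per irrelevant $\vB_i$, plus one for all relevant $\vB_i$ together), not $k$.

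Second, you propose to invoke Lemma~\ref{lem:direct_sum_ibctuple_algo} ``for each summand.'' This cannot work: the hypotheses of that lemma require the full direct-sum decomposition $T_{\vE,1}\oplus\cdots\oplus T_{\vE,\rho}=\rowtuplespace(\vE)$ with $\rowvectorspace(T_{\vE,1})=\rowvectorspace(\vE)$ and all $T_{\vE,i}$ of equal dimension, and its output \IBCtuple of $\vE$ generally has rows in \emph{every} summand. The paper calls the algorithm once on all of $\vE$, with the summands reordered so that $T_{\vE,k+1}$ (corresponding to $T_{h_\beta}$, which spans $W_\beta$) comes first; this ordering is what makes the hypothesis $\rowvectorspace(T_1)=\rowvectorspace(\vE)$ hold. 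Calling it summand-by-summand would neither satisfy the hypotheses nor produce the right \IBCtuple of $\vE$. A smaller but related omission: before computing essential extensions the paper first runs the Row Vector Space Direct Sum Algorithm on $T_{h_\beta}$ to obtain the decomposition $W_\beta=W_{\beta,1}\oplus\cdots\oplus W_{\beta,w}$; the Essential Extension Algorithm is defined relative to these $W_{\beta,j}$ and may itself return a refining subspace at this stage.
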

    \begin{lemma}\label{lemma:extension_main_canonical}
    The \IBCtuple Space Algorithm is canonical in the following sense: 
        For two inputs $\vA, T_1, \dots, T_\zeta, h_0, \dots, h_\beta, \vQ, \vB_1, \dots, \vB_k$ and $\vA', T_1', \dots, T_\zeta', h_0', \dots, h_\beta, \vQ', \vB_1', \dots, \vB_k'$ such that $\vA, \vA' \in \M(n\times m, \F_q)^\ell$ and there exist invertible matrices $L \in \GL(n, \F_q), R\in \GL(m, \F_q)$ satisfying the following conditions:
        \begin{enumerate}
            \item[i.] $\vA' = L \vA R^{-1}$;
            \item[ii.] $T_i' = T_i R^{-1}$ for every $i \in [\zeta]$;
            \item[iii.] $h_i = h_i'$ for any $i \in \{0, 1, \dots, \beta\}$;
            %\item[iii.] $(R')^{-1} (v_1' + W_\beta', \dots, v_{m_\beta}' + W_\beta')^T  = (v_1 R^{-1}  + W_\beta R^{-1}, \dots, v_{m_\beta} R^{-1} + W_\beta R^{-1})^T$.
            \item[iv.] $\vB_{i}$ is right-equivalent to $\vB_{i}'$ for any $i \in [k]$;
            \item[v.] $\ibcspace_{\vB_i'} = \{\vE (R')^{-1} : \vE \in \ibcspace_{\vB_i}\}$ and $\ibcspacekernel_{\vB_i'} = \{\vE (R')^{-1} : \vE \in \ibcspacekernel_{\vB_i}\}$ for every $i \in [k]$, where $R' \in \GL(m_\beta, \F_q)$ is a matrix such that \[(R')^{-1} (v_1' + W_\beta', \dots, v_{m_\beta}' + W_\beta')^T  = (v_1 R^{-1}  + W_\beta R^{-1}, \dots, v_{m_\beta} R^{-1} + W_\beta R^{-1})^T,\]
        \end{enumerate}
        The output for $\vA$ and $\vA'$ satisfies the following conditions: (Let $L$ and $R$ be arbitrary invertible matrices satisfying (i) - (v))
        \begin{enumerate}
            \item If the output for $\vA$ is a subspace $S < T_i$ for some $i\in [\zeta]$, then the output for $\vA'$ is $S R^{-1}$. 
            \item  If the output for $\vA$ is a representative \IBCtuple sequence $\vG_1, \dots, \vG_g$ with $\ibcspace_{\vG_i}$ and $\ibcspacekernel_{\vG_i}$ for every $i \in [g]$, then
        the output for $\vA'$ is a representative \IBCtuple sequence $\vG_1',\dots, \vG_g'$ with $\ibcspace_{\vG_1}, \dots, \ibcspace_{\vG_g'}$ and $\ibcspacekernel_{\vG_1}, \dots, \ibcspacekernel_{\vG_g'}$ satisfying the following conditions for every $i\in[g]$
        \begin{enumerate}
        \item $\vG_i'$ is right-equivalent to $\vG_i$; \item 
        $\ibcspace_{\vG_i'} = \{\vC \in \ibcspace_{\vG_i}: \vC R^{-1}\}$;
        \item 
        $\ibcspacekernel_{\vG_i'} = \{\vC \in \ibcspacekernel_{\vG_i}: \vC R^{-1}\}.$
        \end{enumerate}
        \end{enumerate}
\end{lemma}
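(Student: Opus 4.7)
The plan is to prove canonicity by induction on the steps of the \IBCtuple Space Algorithm, showing that at every stage the intermediate objects constructed for $\vA'$ are precisely the image under $L$ (on the left) and $R^{-1}$ (on the right) of the corresponding objects for $\vA$. The first step is to verify that the correspondence passes through the quotient construction: from (ii)--(iii) one checks that $W_\gamma' = W_\gamma R^{-1}$ for every $\gamma$, and from (v) the formatting vectors satisfy $(v_i' + W_\beta') = (v_i R^{-1} + W_\beta R^{-1})$ after the change of coordinates encoded by $R'$, so that $\vQ' = L' \vQ (R')^{-1}$ for an invertible $L'$ determined by $L$ and the choice of basis rows surviving the quotient. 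This reduces every assertion about $\vQ, \vB_i$ versus $\vQ', \vB_i'$ to the linear action by $R'$ on the $m_\beta$-dimensional column space.

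Next I would trace canonicity through the essential extension construction. The algorithm produces, from each $\vB_i$, a canonical essential extension $\vC_i$ in $\vA$ (in the sense made precise in the overview of Section~\ref{sec:overview_hier}). Because essential extensions are characterized intrinsically by minimality of their intersection with $W_\beta$, and because $L \vA R^{-1} = \vA'$, $T_j R^{-1} = T_j'$, and $W_\beta R^{-1} = W_\beta'$, the unique canonical essential extension for $\vB_i'$ must coincide with $L \vC_i R^{-1}$. I would make this rigorous by writing down the explicit invariant defining the canonical essential extension (e.g.\ the lexicographically minimal row-submatrix tuple inside the appropriate affine space of extensions, with respect to the fixed linear basis of $T_{h_\beta},\dots,T_\zeta$) and observing that this invariant is preserved under the simultaneous change of basis induced by $L, R$. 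The same style of argument then handles the compression matrix tuple $\vE$: its rows are built from row tuples in $T_{h_\beta},\dots,T_\zeta$ together with the new row tuples canonically encoding the $W_\beta$-intersections of essential extensions, and all of these ingredients transform by $R^{-1}$, so $\vE' = L'' \vE (R'')^{-1}$ with invertible $L'', R''$ determined by the input data.

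Now I would invoke the canonicity of the Direct Sum Decomposition Algorithm (Lemma~\ref{lem:direct_sum_ibctuple_algo}(2)) applied to $\vE$ and $\vE'$. If the Direct Sum Decomposition Algorithm returns a proper characteristic subspace $S_{\vE} < \rowtuplespace(\vE)$, then the output for $\vE'$ is $S_{\vE} (R'')^{-1}$; pulling this back through the compression construction canonically identifies a proper subspace of some $T_i$, and one verifies that the pullback for $\vA'$ is precisely the $R^{-1}$-image of the pullback for $\vA$. Otherwise the algorithm returns representative \IBCtuples of $\vE$ together with their \IBCtuple spaces and kernels, and right-equivalence for the $\vE'$ output implies right-equivalence for the final $\vG_i', \vG_i$ via the canonical assembly procedure: $\vG_i$ is read off from $\vE$'s \IBCtuplenospace by combining one block of rows from $T_{h_\beta},\dots,T_\zeta$ with the canonical essential extensions it prescribes, and this assembly commutes with the $L,R$-action. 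The final step is to apply the induction hypothesis for the corresponding \IBCtuple spaces and kernels, which transfer linearly.

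The main obstacle I anticipate is the essential extension step: because the space of extensions of a single \IBCtuple of $\vQ$ inside $\vA$ is not a linear space a priori, canonicity requires a careful description of \emph{which} canonical choice the algorithm makes and a proof that this choice is intrinsic (depends only on the equivalence class of the input data). Once that invariant description is in place, the remaining steps reduce to routine tracking of the left-right action through linear-algebraic constructions and an application of the canonicity statements already established in Section~\ref{sec:direct_sum_decomposition} and Section~\ref{sec:conjugation}.
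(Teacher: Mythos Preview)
Your overall strategy---induction on the steps of the \IBCtuple Space Algorithm, checking that each subroutine transforms correctly under the $(L,R)$-action---is exactly the paper's approach. The paper's proof is a one-line citation: it invokes the canonicity statements already established for each subroutine (Claim~\ref{claim:row_vector_space_direct_sum} for Step~1, Lemma~\ref{lemma:extension_canonical} for Step~2, and Claim~\ref{claim:relevant_ibctuple_space} for Step~4) and concludes by induction.

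Where your proposal diverges from the paper is in the treatment of essential extensions, and here you have a genuine misconception. You describe the algorithm as producing ``a canonical essential extension $\vC_i$'' for each $\vB_i$, speculate that canonicity might be enforced via a lexicographically-minimal choice, and flag as your main obstacle that the extensions do not form a linear space. In fact the Essential Extension Algorithm does not pick out a single extension at all: its output is a pair of linear \emph{spaces} $\ibcextensionspace_{\vB_i,\vA}$ and $\ibcextensionspacekernel_{\vB_i,\vA}$ of extensions, obtained by starting with the full linear span of all extensions and iteratively intersecting with the kernels of projections onto the $W_{\beta,j}$ components (Step~2 of the algorithm). No lexicographic invariant is involved. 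Canonicity of this construction is the content of Lemma~\ref{lemma:extension_canonical}, which shows directly that $\ibcextensionspace_{\vB',\vA'}=\{\vE R^{-1}:\vE\in\ibcextensionspace_{\vB,\vA}\}$ by tracking the projection conditions through the action. Once you accept that lemma, the obstacle you anticipate disappears, and the rest of your outline (compression matrix tuple, then Lemma~\ref{lem:direct_sum_ibctuple_algo}, then assembly via the Single \IBCtuple Construction Algorithm) is handled inside Claim~\ref{claim:relevant_ibctuple_space}.
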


In Section~\ref{sec:quotient_matrix_tuple}, we give some additional definitions and properties regarding the quotient matrix tuple. 
Section~\ref{sec:ibc_extension} defines the extension of \IBCtuples of quotient matrix tuples, the major tool to understand the \IBCtuples of quotient matrix tuples and the \IBCtuples of the input matrix tuple. 
Section~\ref{sec:irrelevant_ibctuples} and Section~\ref{sec:relevant_ibctuples} construct \IBCtuples of the input matrix tuple for two different cases. 
Section~\ref{sec:quotient_matrix_tuple_ibc_final} proves Lemma~\ref{lemma:extension_main} and Lemma~\ref{lemma:extension_main_canonical}.

\subsection{Quotient matrix tuple}\label{sec:quotient_matrix_tuple}

We give some additional definitions regarding the quotient matrix tuples and prove some basic properties of the quotient matrix tuples.

For a matrix tuple with a depth-$\beta$ hierarchical row tuple decomposition $T_1, \dots, T_\zeta$ of $\vA$
with parameters $h_0, \dots, h_\beta$ for some integer $\beta$, for any $\vQ$ that is a $\gamma$-quotient matrix tuple of $\vA$ with formatting vector $(v_1 + W_\gamma, \dots, v_{m_\gamma} + W_\gamma)$ for some $\gamma \in [\beta]$, 
we denote  
    %In addition, , we use $T_{\vQ, i}$ to denote 
        \[T_{\vQ, i} := \left\langle \left\{x \vQ : x\in \F^{n_\gamma}, x \vQ \left(v_1 + W_\gamma, \dots, v_{m_\gamma} + W_\gamma\right)^T \in T_i / W_\gamma\right\}\right\rangle\]
for any $i \in [h_\gamma - 1]$. 
In addition, 
the $(\beta + 1)$-quotient matrix tuple $\vQ$ of $\vA$ is always $\vA$ with formatting vector $(e_1, \dots, e_m)$ and $T_{\vQ, i} = T_i$ for every $i \in [\zeta]$.

In the rest of this subsection, we prove some useful properties for quotient matrix tuples. 

\begin{fact}\label{fact:basic_quotient_matrix_tuple_1}
For a matrix type $\vA$ and a depth-$\beta$ hierarchical row tuple decomposition $T_1, \dots, T_\zeta$ of $\vA$ with parameters $h_0, \dots, h_\beta$ for some integer $\beta$. 
For any $\gamma \in [\beta]$, let $\va_1 + W_\gamma, \dots, \va_{n_\gamma} + W_\gamma$ be a linear basis of $\rowtuplespace(\vA) / W_\gamma$, and $v_1 + W_\gamma, \dots, v_{m_\gamma} + W_\gamma$ be a linear basis of $\rowvectorspace(\vA) / W_\gamma$, then there is a unique matrix tuple $\vQ$ such that 
\begin{equation}\label{equ:quotient_matrix_tuple}\vQ \cdot (v_1 + W_\gamma, \dots, v_{m_\gamma} + W_\gamma)^T = \begin{bmatrix}
    \va_1 + W_\gamma \\ \vdots \\ \va_{n_\gamma} + W_\gamma
\end{bmatrix},\end{equation}
and $\vQ$ is a $\gamma$-quotient matrix tuple for $\vA$ with $(v_1 + W_\gamma, \dots, v_{m_\gamma} + W_\gamma)$ as the formatting vector. 
\end{fact}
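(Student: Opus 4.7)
The plan is to establish this fact by treating it as a standard change-of-coordinates statement in the quotient space, then verifying both conditions of Definition~\ref{def:quotient_matrix_tuple} directly.

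First I would construct $\vQ$ entrywise. Since $v_1 + W_\gamma, \dots, v_{m_\gamma} + W_\gamma$ is a linear basis of $\rowvectorspace(\vA)/W_\gamma$, and every coordinate of each $\va_i + W_\gamma$ lies in $\rowvectorspace(\vA)/W_\gamma$, each coordinate has a unique expression as an $\F_q$-linear combination of $v_1 + W_\gamma, \dots, v_{m_\gamma} + W_\gamma$. Collecting these coefficients across all $\ell$ coordinates and all $n_\gamma$ rows determines a unique matrix tuple $\vQ \in \M(n_\gamma \times m_\gamma, \F_q)^\ell$ satisfying Equation~(\ref{equ:quotient_matrix_tuple}). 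Uniqueness follows because any other matrix tuple satisfying~(\ref{equ:quotient_matrix_tuple}) would produce different coefficients against the same linearly independent basis, contradicting the uniqueness of representation in $\rowvectorspace(\vA)/W_\gamma$.

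Next I would verify the two defining properties of a $\gamma$-quotient matrix tuple. The first condition — that $v_1 + W_\gamma, \dots, v_{m_\gamma} + W_\gamma$ is a basis of $\rowvectorspace(\vA)/W_\gamma$ — is exactly one of our hypotheses, so nothing is to prove. For the second condition, let $v \in \F_q^n$ be arbitrary. Then $v\vA + W_\gamma$ is an element of $\rowtuplespace(\vA)/W_\gamma$, which by hypothesis has $\va_1 + W_\gamma, \dots, \va_{n_\gamma} + W_\gamma$ as a basis. Hence there exist unique scalars $x_1, \dots, x_{n_\gamma} \in \F_q$ with $v\vA + W_\gamma = \sum_{i = 1}^{n_\gamma} x_i (\va_i + W_\gamma)$. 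Letting $x = (x_1, \dots, x_{n_\gamma}) \in \F_q^{n_\gamma}$ and applying Equation~(\ref{equ:quotient_matrix_tuple}), we obtain
\[
x \vQ \cdot (v_1 + W_\gamma, \dots, v_{m_\gamma} + W_\gamma)^T = \sum_{i=1}^{n_\gamma} x_i (\va_i + W_\gamma) = v\vA + W_\gamma,
\]
as required.

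There isn't really a hard step here — the statement is essentially the observation that a linear map between two finite-dimensional quotient spaces is determined uniquely by its matrix in chosen bases. The only subtlety worth flagging is to be careful that the products $\vQ \cdot (v_1 + W_\gamma, \dots, v_{m_\gamma} + W_\gamma)^T$ and $x\vQ \cdot (v_1 + W_\gamma, \dots, v_{m_\gamma} + W_\gamma)^T$ are well-defined using the multiplication-with-column-vectors convention introduced in Section~\ref{sec:prelim}, extended to cosets in the obvious linear way; since $W_\gamma$ is a subspace, this extension is unambiguous.
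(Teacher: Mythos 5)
Your proposal is correct and follows essentially the same approach as the paper: you construct $\vQ$ entrywise by expanding each coordinate of $\va_i + W_\gamma$ in the basis $(v_1 + W_\gamma, \dots, v_{m_\gamma} + W_\gamma)$, and uniqueness falls out of uniqueness of basis representation. The only difference is that you spell out the verification of the second defining condition of a $\gamma$-quotient matrix tuple (that every $v\vA + W_\gamma$ is realized as some $x\vQ\cdot(v_1+W_\gamma,\dots,v_{m_\gamma}+W_\gamma)^T$), which the paper leaves implicit by just appealing to Definition~\ref{def:quotient_matrix_tuple} — a harmless and arguably welcome elaboration.
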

\begin{proof}
    For any $i \in [n_\gamma], k \in [\ell]$, $(\va_i + W_\gamma)^{(k)}$ is a linear combination of $v_1 + W_\gamma, \dots, v_{m_\gamma} + W_\gamma$.
    Let $q_{k, i, j} \in \F$ such that \[(\va_i + W_\gamma)^{(k)} = \sum_{j = 1}^{m_\gamma} q_{k, i, j} \cdot v_{j} + W_\gamma.\]
    Let $\vQ = (Q_1, \dots, Q_\ell)$ be the matrix tuple such that the $i$-th column $j$-th row of $Q_k$ is $q_{k, i, j}$. 
    Then $\vQ$ is the matrix tuple satisfying Equation (\ref{equ:quotient_matrix_tuple}). 
    By Definition~\ref{def:quotient_matrix_tuple} is a $\gamma$-quotient matrix tuple for $\vA$ with $(v_1 + W_\gamma, \dots, v_{m_\gamma} + W_\gamma)$ as the formatting vector. 
\end{proof}

\begin{claim}\label{claim:ibctuple_compatible_quotient_basic}
Let $\vA$ be a matrix tuple and $W$ be a characteristic block-compatible row vector subspace of $\vA$. Let $\vD = (\vD_1, \dots, \vD_d)$ be a block-diagonalization of $\vA$ and 
$L, R$ be invertible matrices such that $\vD = L\vA R^{-1}$. 
Then we have 
\begin{equation}\label{equ:quotient_ibctuple_diag_1}\rowvectorspace(\vA) / W = (\rowvectorspace(\vA_{1, \vD, L}) / W) \oplus \dots \oplus (\rowvectorspace(\vA_{d, \vD, L}) / W),\end{equation}
and \begin{equation}\label{equ:quotient_ibctuple_diag_2}\rowtuplespace(\vA) / W = (\rowtuplespace(\vA_{1, \vD, L}) / W) \oplus \dots \oplus (\rowtuplespace(\vA_{d, \vD, L}) / W).\end{equation}
\end{claim}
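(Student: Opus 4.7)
The plan is to derive both decompositions from two facts already established in the paper: Fact~\ref{fact:ibctuple_diagonalzation_most_basic}, which splits $\rowvectorspace(\vA)$ and $\rowtuplespace(\vA)$ along the blocks of any block-diagonalization, and Fact~\ref{fact:ibctuple_invariant_fund}, which, since $W$ is block-compatible as a row vector subspace, yields the internal decomposition $W=(W\cap\rowvectorspace(\vA_{1,\vD,L}))\oplus\dots\oplus(W\cap\rowvectorspace(\vA_{d,\vD,L}))$. Writing $V_i:=\rowvectorspace(\vA_{i,\vD,L})$ and $T_i:=\rowtuplespace(\vA_{i,\vD,L})$, these give us $\rowvectorspace(\vA)=V_1\oplus\dots\oplus V_d$, $\rowtuplespace(\vA)=T_1\oplus\dots\oplus T_d$, and a compatible splitting of $W$.

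For Equation~\eqref{equ:quotient_ibctuple_diag_1}, I first establish that the sum $(V_1/W)+\dots+(V_d/W)$ fills $\rowvectorspace(\vA)/W$: given $v\in\rowvectorspace(\vA)$, decompose $v=v_1+\dots+v_d$ with $v_i\in V_i$ and note $v+W=(v_1+W)+\dots+(v_d+W)$. Directness is the real content. Suppose $(v_1+W)+\dots+(v_d+W)=0+W$, i.e.\ $v_1+\dots+v_d\in W$ with $v_i\in V_i$. Using the block-compatible splitting of $W$, write $v_1+\dots+v_d=w_1+\dots+w_d$ with $w_i\in W\cap V_i$. Since $\rowvectorspace(\vA)=V_1\oplus\dots\oplus V_d$ is a direct sum, the two expansions must agree coordinatewise, so $v_i=w_i\in W$ and hence $v_i+W=0+W$ for each $i$.

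For Equation~\eqref{equ:quotient_ibctuple_diag_2} the same argument applies almost verbatim, just carried out coordinate by coordinate in $(\F_q^m)^\ell$. Given $\va\in\rowtuplespace(\vA)$, split $\va=\va_1+\dots+\va_d$ with $\va_i\in T_i$ so that $\va+W=\sum_i(\va_i+W)$, handling surjectivity. For directness, if $\sum_i(\va_i+W)=0+W$, then every coordinate $\va_1^{(k)}+\dots+\va_d^{(k)}$ lies in $W$; applying the argument above to each coordinate (using $\va_i^{(k)}\in V_i$) forces $\va_i^{(k)}\in W$ for every $i,k$, i.e.\ $\va_i+W=0+W$.

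I do not expect any real obstacle: once the block-compatibility of $W$ is translated via Fact~\ref{fact:ibctuple_invariant_fund} into a direct sum decomposition of $W$ aligned with the block decomposition of $\rowvectorspace(\vA)$, both claims reduce to the standard linear-algebra observation that if $V=V_1\oplus\dots\oplus V_d$ and $W\leq V$ splits as $W=(W\cap V_1)\oplus\dots\oplus(W\cap V_d)$, then $V/W=(V_1/W)\oplus\dots\oplus(V_d/W)$. The only care needed is the extension to the row-tuple setting, which is just the coordinatewise application of this fact.
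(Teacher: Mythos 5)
Your proof is correct and is essentially the paper's approach: both rest on Fact~\ref{fact:ibctuple_diagonalzation_most_basic} for the block decompositions of $\rowvectorspace(\vA)$ and $\rowtuplespace(\vA)$ and on Fact~\ref{fact:ibctuple_invariant_fund} for the block-aligned splitting $W=\bigoplus_i\bigl(W\cap\rowvectorspace(\vA_{i,\vD,L})\bigr)$. The main difference is in presentation: you establish directness by comparing the two decompositions of $v_1+\dots+v_d$ directly (once as an element of $\bigoplus_i\rowvectorspace(\vA_{i,\vD,L})$, once as an element of $W=\bigoplus_i(W\cap\rowvectorspace(\vA_{i,\vD,L}))$), which makes the role of block-compatibility of $W$ fully explicit; the paper instead argues by contradiction, showing the quotients $\rowvectorspace(\vA_{i,\vD,L})/W$ intersect pairwise trivially, in a way that leaves the use of $W$'s internal splitting somewhat implicit. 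For the tuple statement, the paper introduces the auxiliary subspace $S$ of row tuples with all coordinates in $W$ and applies Fact~\ref{fact:ibctuple_invariant_fund} at the tuple level, whereas you reduce to the vector case coordinate by coordinate; both work, and yours is the slightly more transparent reduction. The abstract lemma you distil at the end — if $V=\bigoplus V_i$ and $W=\bigoplus(W\cap V_i)$ then $V/W=\bigoplus(V_i/W)$ — is exactly the linear algebra underlying both write-ups.
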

\begin{proof}
Since 
\begin{equation}\label{equ:quotient_ibctuple_diag}\rowvectorspace(\vA) = \rowvectorspace(\vA_{1, \vD, L}) \oplus \dots \oplus  \rowvectorspace(\vA_{d, \vD, L}),\end{equation} we have 
\[\rowvectorspace(\vA) / W = \langle \{ (\rowvectorspace(\vA_{1, \vD, L}) / W) \cup \dots \cup  \rowvectorspace(\vA_{d, \vD, L})/W\} \rangle.\]
So, we only need to show for every $i \in [d]$
\[\rowvectorspace(\vA_{i, \vD, L})/W\} \cap \langle\rowvectorspace(\vA_{1, \vD, L})/W\}\cup \dots, \cup \rowvectorspace(\vA_{i-1, \vD, L})/W\} \rangle = \emptyset.\]
Suppose this is not true for some $i \in [d]$, then there exists a vector $v \in \rowvectorspace(\vA_{1, \vD, L}) \oplus \dots \oplus  \rowvectorspace(\vA_{i-1, \vD, L})$ and a vector $u \in W$ such that $v + u \in \rowvectorspace(\vA_{i, \vD, L}) $. 
Since $u \in \rowvectorspace(\vA_{1, \vD, L}) \oplus \dots \oplus \rowvectorspace(\vA_{i, \vD, L})$, 
let $u_1, u_2$ be the row vectors in $\rowvectorspace(\vA_{1, \vD, L}) \oplus \dots \oplus \rowvectorspace(\vA_{i - 1, \vD, L})$ and $\rowvectorspace(\vA_{i, \vD, L})$ such that $u = u_1 + u_2$. 
Hence, $v + u_1 = v + u - u_2$ is in $\rowvectorspace(\vA_{i, \vD, L})$
and also in $\rowvectorspace(\vA_{1, \vD, L}) \oplus \dots \oplus \rowvectorspace(\vA_{i - 1, \vD, L})$. This contradicts Equation (\ref{equ:quotient_ibctuple_diag}). 
Hence, Equation (\ref{equ:quotient_ibctuple_diag_1}) holds. 

Now we prove Equation (\ref{equ:quotient_ibctuple_diag_2}). 
Since 
\begin{equation}\label{equ:quotient_ibctuple_diag_10}\rowtuplespace(\vA) = \rowtuplespace(\vA_{1, \vD, L}) \oplus \dots \oplus  \rowtuplespace(\vA_{d, \vD, L}),\end{equation}
 we have 
\[\rowtuplespace(\vA) / W = \langle \{ (\rowtuplespace(\vA_{1, \vD, L}) / W) \cup \dots \cup  \rowtuplespace(\vA_{d, \vD, L})/W\} \rangle.\]
Let $S = \langle \{\va \in \rowtuplespace(\vA): \forall i\in [\ell], \va^{(i)} \in W\}\rangle$. 
By Equation (\ref{equ:quotient_ibctuple_diag_10}) and Fact~\ref{fact:ibctuple_invariant_basic}, $S$ is a characteristic block-compatible row tuple space, and thus by Fact~\ref{fact:ibctuple_invariant_fund}, \[S = (S \cap \rowtuplespace(\vA_{1, \vD, L})) \oplus \dots \oplus (S \cap \rowtuplespace(\vA_{d, \vD, L})).\]

We only need to show for every $i \in [d]$
\[\rowtuplespace(\vA_{i, \vD, L})/W\} \cap \langle\rowtuplespace(\vA_{1, \vD, L})/W\}\cup \dots, \cup \rowtuplespace(\vA_{i-1, \vD, L})/W\} \rangle = \emptyset.\]
Suppose this is not true for some $i \in [d]$, then there exists a row tuple $\va \in \rowtuplespace(\vA_{1, \vD, L}) \oplus \dots \oplus  \rowtuplespace(\vA_{i-1, \vD, L})$ and a row tuple $\vb \in S$ such that $\va + \vb \in \rowtuplespace(\vA_{i, \vD, L}) $. 
Since $\vb \in \rowtuplespace(\vA_{1, \vD, L}) \oplus \dots \oplus \rowtuplespace(\vA_{i, \vD, L})$, 
let $\vb_1, \vb_2$ be the row tuples in $\rowtuplespace(\vA_{1, \vD, L}) \oplus \dots \oplus \rowtuplespace(\vA_{i - 1, \vD, L})$ and $\rowtuplespace(\vA_{i, \vD, L})$ such that $\vb = \vb_1 + \vb_2$. 
Hence, $\va + \vb_1 = \va + \vb - \vb_2$ is in $\rowtuplespace(\vA_{i, \vD, L})$
and also in $\rowtuplespace(\vA_{1, \vD, L}) \oplus \dots \oplus \rowtuplespace(\vA_{i - 1, \vD, L})$. This contradicts Equation (\ref{equ:quotient_ibctuple_diag_10}). 
Hence, Equation (\ref{equ:quotient_ibctuple_diag_2}) holds. 
\end{proof}

\begin{claim}\label{claim:correspondance_ibctuple_diag_quotient}
For a matrix type $\vA$ and a depth-$\beta$ hierarchical row tuple decomposition $T_1, \dots, T_\zeta$ of $\vA$ with parameters $h_0, \dots, h_\beta$ for some integer $\beta$. 
Let $\vD = \diag(\vD_1, \dots, \vD_d)$ be a block-diagonalization of $\vA$, and $L$ and $R$ be arbitrary invertible matrices such that $\vD = L \vA R^{-1}$. 
Fix an integer $\gamma \in [\beta]. $
For each $i \in [d]$, let $\va_{i, 1} + W_\gamma, \dots, \va_{i, \sigma_i} + W_\gamma$ be a linear basis of $\rowtuplespace(\vA_{i, \vD, L}) / W_\gamma$ , $v_{i, 1} +W_\gamma, \dots, v_{i, \delta_i} + W_\gamma$ be a linear basis of $\rowvectorspace(\vA_{i, \vD, L}) / W_\gamma$, and $\vQ_{i}$ be the $\sigma_i \times \delta_i$ matrix tuple such that 
\[\vQ_i (v_{i, 1} +W_\gamma, \dots, v_{i, \delta_i} + W_\gamma)^T = \begin{bmatrix}
    \va_{i, 1} + W_\gamma \\ \vdots \\ \va_{i, \sigma_i} + W_\gamma
\end{bmatrix}, \]
then $\vQ_{\vA, \vD, L} := \diag(\vQ_1, \dots, \vQ_d)$ is a $\gamma$-quotient matrix tuple of $\vA$ with 
\[
(v_{1, 1} + W_\gamma, \dots, v_{1, \sigma_1} + W_\gamma, v_{2, 1} + W_\gamma,  \dots v_{d, \sigma_d}  + W_\gamma)
\]
as the formatting matrix tuple. 
\end{claim}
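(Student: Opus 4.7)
\medskip

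\noindent\textbf{Proof proposal.} The plan is to reduce the claim to a direct-sum decomposition statement, for which we can invoke Claim~\ref{claim:ibctuple_compatible_quotient_basic}, and then verify the two conditions of Definition~\ref{def:quotient_matrix_tuple} blockwise.

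First I would verify that $W_\gamma$ is a characteristic block-compatible row vector subspace of $\vA$: by definition of the hierarchical row tuple decomposition each $T_j$ is characteristic and block-compatible, so by parts~3 and~1 of Fact~\ref{fact:ibctuple_invariant_basic} the subspace $\langle \rowvectorspace(T_{h_\gamma}) \cup \dots \cup \rowvectorspace(T_\zeta) \rangle = W_\gamma$ is also characteristic and block-compatible. This lets me apply Claim~\ref{claim:ibctuple_compatible_quotient_basic} with $W = W_\gamma$ to get
\[
\rowvectorspace(\vA)/W_\gamma \;=\; \bigoplus_{i=1}^d \rowvectorspace(\vA_{i,\vD,L})/W_\gamma,\qquad
\rowtuplespace(\vA)/W_\gamma \;=\; \bigoplus_{i=1}^d \rowtuplespace(\vA_{i,\vD,L})/W_\gamma.
\]
Consequently the concatenation of the bases $v_{i,1}+W_\gamma, \dots, v_{i,\delta_i}+W_\gamma$ of the summands is a linear basis of $\rowvectorspace(\vA)/W_\gamma$ of cardinality $\sum_i \delta_i = m_\gamma$, which verifies condition~(1) of Definition~\ref{def:quotient_matrix_tuple}. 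Similarly $\sum_i \sigma_i = n_\gamma$ so the dimensions of the block-diagonal $\vQ_{\vA,\vD,L}$ are correct; and each $\vQ_i$ is well-defined because the coordinates of any row tuple in $\rowtuplespace(\vA_{i,\vD,L})$ lie in $\rowvectorspace(\vA_{i,\vD,L})$, so the right-hand side of the defining equation lies in the span of the chosen column basis.

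Next I would verify condition~(2) of Definition~\ref{def:quotient_matrix_tuple}. Given $v\in\F_q^n$, set $u = vL^{-1}$ and split $u = (u_1,\dots,u_d)$ according to the block sizes of $\vD$, so that $v\vA = u L \vA = \sum_i u_i \vA_{i,\vD,L}$. Taking cosets yields $v\vA + W_\gamma = \sum_i (u_i\vA_{i,\vD,L} + W_\gamma)$. Since $\va_{i,1}+W_\gamma, \dots, \va_{i,\sigma_i}+W_\gamma$ is a basis of $\rowtuplespace(\vA_{i,\vD,L})/W_\gamma$, there exists $x_i \in \F_q^{\sigma_i}$ with $u_i \vA_{i,\vD,L} + W_\gamma = x_i \cdot (\va_{i,1}+W_\gamma,\dots,\va_{i,\sigma_i}+W_\gamma)^T$, which by the defining equation of $\vQ_i$ equals $x_i \vQ_i \cdot (v_{i,1}+W_\gamma,\dots,v_{i,\delta_i}+W_\gamma)^T$. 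Concatenating $x := (x_1,\dots,x_d)\in \F_q^{n_\gamma}$ and using the block-diagonal structure of $\vQ_{\vA,\vD,L}$, I obtain $x\vQ_{\vA,\vD,L} \cdot (v_{1,1}+W_\gamma,\dots,v_{d,\delta_d}+W_\gamma)^T = v\vA + W_\gamma$, as required.

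There is no real obstacle here; the only thing to be slightly careful about is (a) checking that $W_\gamma$ qualifies as a characteristic block-compatible row vector space in the sense of Claim~\ref{claim:ibctuple_compatible_quotient_basic} (handled above via Fact~\ref{fact:ibctuple_invariant_basic}), and (b) keeping track of the block structure when assembling the local witnesses $x_i$ into a global one. The essential content of the claim is really the direct-sum decomposition of the quotient spaces supplied by Claim~\ref{claim:ibctuple_compatible_quotient_basic}; everything else is bookkeeping.
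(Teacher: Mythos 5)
Your proof is correct and follows essentially the same route as the paper's: the heavy lifting is Claim~\ref{claim:ibctuple_compatible_quotient_basic}, which supplies the direct-sum decompositions of $\rowtuplespace(\vA)/W_\gamma$ and $\rowvectorspace(\vA)/W_\gamma$, after which one just observes that concatenating the per-block bases gives bases of the global quotient spaces. The one cosmetic difference is that the paper, having established those bases, simply invokes Fact~\ref{fact:basic_quotient_matrix_tuple_1} (which already says: any choice of bases of $\rowtuplespace(\vA)/W_\gamma$ and $\rowvectorspace(\vA)/W_\gamma$ determines a unique matrix tuple via the defining equation, and it is a $\gamma$-quotient matrix tuple), whereas you re-derive condition~(2) of Definition~\ref{def:quotient_matrix_tuple} by hand via the decomposition $v\vA = \sum_i u_i \vA_{i,\vD,L}$ with $u = vL^{-1}$; that computation is correct but is exactly the content of Fact~\ref{fact:basic_quotient_matrix_tuple_1}, so citing it would shorten the argument. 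Your extra step of checking that $W_\gamma$ is characteristic block-compatible (via Fact~\ref{fact:ibctuple_invariant_basic}) is a detail the paper leaves implicit and is a reasonable thing to spell out.
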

\begin{proof}
    By the definition of $\vQ_i$ for each $i\in[d]$, 
    we have 
    \[\vQ_{\vA, \vD, L} (v_{1, 1} + W_\gamma, \dots, v_{1, \sigma_1} + W_\gamma, v_{2, 1} + W_\gamma,  \dots v_{d, \sigma_d}  + W_\gamma)^T = \begin{bmatrix}
    \va_{1, 1} + W_\gamma \\ \vdots \\ \va_{1, \sigma_1} + W_\gamma \\ \va_{2, 1} + W_\gamma  \\ \vdots \\ \va_{d, \sigma_d} + W_\gamma
\end{bmatrix}.\]
By Claim~\ref{claim:ibctuple_compatible_quotient_basic},
\[v_{1, 1} + W_\gamma, \dots, v_{1, \sigma_1} + W_\gamma, v_{2, 1} + W_\gamma,  \dots v_{d, \sigma_d}  + W_\gamma\]
is a linear basis of $\rowvectorspace(\vA) / W_\gamma$ and \[\va_{1, 1} + W_\gamma, \dots, \va_{1, \sigma_1} + W_\gamma, \va_{2, 1} + W_\gamma, \dots, \va_{d, \sigma_d} + W_\gamma\] is a linear basis of  $\rowtuplespace(\vA) / W_\gamma$. 
By Fact~\ref{fact:basic_quotient_matrix_tuple_1}, $\vQ_{\vA, \vD, L}$ is a $\gamma$-quotient matrix tuple for $\vA$. 
\end{proof}
%The $\gamma$-quotient matrix tuple of $\vA$  is not unique. But we have the following conditions between two quotient matrix tuples of the same matrix and the same row tuple subspaces.

    \begin{claim}\label{claim:quotient_marix_tuple_basic}
    Let $\vA \in \M(n\times m, \F_q)^{\ell}$ be a matrix tuple, and $T_1, \dots, T_\zeta$ be a hierarchical row tuple decomposition with parameters $h_0, \dots, h_\beta$ for $\vA$. 
    Let $\vQ$ with formatting vector $(v_1 + W_\gamma, \dots, v_{m_\gamma} + W_\gamma)$ be a $\gamma$-quotient matrix tuple with respect to $T_1, \dots, T_\zeta$ for  $\vA$ for some $\gamma \in [\beta]$. 
    We have the following observations:
    \begin{enumerate}
        \item %For Let $n_\gamma\times m'$ matrix tuple $\vQ$ with formatting vector $(v_1 + W_\gamma, \dots, v_{m'} + W_\gamma)$ be a $\gamma$-quotient matrix tuple for $\vA$ with respect to $T_1, \dots, T_\zeta$. 
        The sequence $T_{\vQ, 1}, \dots, T_{\vQ, h_\gamma - 1}$ with parameters $h_0, \dots, h_{\gamma - 1}$ is a row tuple decomposition of $\vQ$.

        \item For any $1 < \gamma' < \gamma$, let $\vQ'$ be a $\gamma'$-quotient matrix tuple of $\vA$  with respect to $T_1, \dots, T_\zeta$ and $\vQ$ be a $\gamma$-quotient matrix tuple of $\vA$  with respect to $T_1, \dots, T_\zeta$. 
        $\vQ'$ is a $\gamma'$-quotient matrix tuple of $\vQ$ with respect to $T_{\vQ, 1}, \dots, T_{\vQ, h_\gamma - 1}$ with parameters $h_0, \dots, h_{\gamma - 1}$.

        \item Let $\vQ'$ be another $\gamma$-quotient matrix tuples for $\vA$ with respect to $T_1, \dots, T_\zeta$. There exist invertible matrices $L'$ and $R'$ such that $\vQ' = L' \vQ (R')^{-1}$, and $T_{\vQ, i}' = T_{\vQ, i} (R')^{-1}$ for every $i \in [h_\gamma - 1]$.
    
    \item Let $\vA' \in \M(n\times m, \F_q)^\ell$ be another matrix tuples,  
    $T_1', \dots, T_\zeta'$ be a hierarchical row tuple decomposition for $\vA'$ with parameters $h_0, \dots, h_\gamma$ same as $T_1, \dots, T_\zeta$, and 
    $W_\gamma' = \langle \rowvectorspace(T_{h_i}') \cup \dots, \cup \rowvectorspace(T_\zeta')\rangle$.
    Let $\vQ'$ with formatting vector $(v_1' + W'_\gamma, \dots, v_{m_\gamma}' + W'_\gamma)$ be a $\gamma$-quotient matrix tuple for $\vA'$ with respect to $T_1', \dots, T_\zeta'$. 
    
    If there are invertible matrices $L$ and $R$ such that $\vA' = L \vA R^{-1}$ and $T_i' = T_i R^{-1}$ for every $i \in [\zeta]$, then  
    there exist invertible matrices $L'$ and $R'$ satisfying the following conditions:
    \begin{enumerate}
        \item $\vQ' = L' \vQ (R')^{-1}$,
        \item $T_{\vQ', i} = T_{\vQ, i} (R')^{-1}$ for any $i \in [h_\gamma - 1]$,  
    \item $v_i' + W_\gamma' = (\sum_{j = 1}^{m_\gamma} r_{i, j}' v_j) R^{-1} + W_\gamma R^{-1}$, where $r_{i, j}'$ is the $i$-th row $j$-th column of $R'$.    
    \end{enumerate}
    %$(v_1 + W_\beta, \dots, v_{m'} + W_\beta) = (v_1' Q + W_\beta'Q, \dots, v_{m'}' Q + W_\beta' Q )Q'$,
    \end{enumerate}
\end{claim}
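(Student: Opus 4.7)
The plan is to treat all four parts as consequences of the single observation that a $\gamma$-quotient matrix tuple is, up to a choice of basis of $\rowtuplespace(\vA)/W_\gamma$ and $\rowvectorspace(\vA)/W_\gamma$, just a matrix representation of the quotient map. Concretely, if $\pi:\rowvectorspace(\vA)\to\rowvectorspace(\vA)/W_\gamma$ is the quotient, and $(v_1+W_\gamma,\dots,v_{m_\gamma}+W_\gamma)$ is the chosen formatting vector, then the rows of $\vQ$ are indexed by a basis of $\rowtuplespace(\vA)/W_\gamma$ and encode exactly the images under $\pi$ of the row tuples of $\vA$. I will first establish a clean dictionary between $\rowtuplespace(\vA)/W_\gamma$ and $\rowtuplespace(\vQ)$ via a $\F_q$-linear isomorphism $\phi$ determined by the formatting vector, and then push the four claims through this dictionary.

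For part (1), I would verify each of the four conditions of Definition~\ref{def:row_tuple_decomposition} for the sequence $T_{\vQ,1},\dots,T_{\vQ,h_\gamma-1}$ with parameters $h_0,\dots,h_{\gamma-1}$. Under $\phi$, $T_{\vQ,i}$ corresponds to $T_i/W_\gamma$ for $1\le i\le h_\gamma-1$, and $\phi$ is a bijection between $\rowtuplespace(\vQ)$ and $\rowtuplespace(\vA)/W_\gamma$. Condition~2 then follows from $\rowtuplespace(\vA)=\langle T_1\cup\dots\cup T_\zeta\rangle$ and the fact that $T_{h_\gamma},\dots,T_\zeta$ all map to zero in the quotient. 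Conditions~3 and~4 translate verbatim from the corresponding conditions for the hierarchical decomposition of $\vA$ at depths $0,\dots,\gamma-1$, because the subspaces $W_0/W_\gamma\supset W_1/W_\gamma\supset\dots\supset W_{\gamma-1}/W_\gamma\supset 0$ are precisely the ``inner levels'' of the original tower truncated at $W_\gamma$. The only nonobvious part is the characteristic and block-compatible property of each $T_{\vQ,i}$; I would get this from Fact~\ref{fact:ibctuple_invariant_basic} together with the fact that any automorphism or block-diagonalization of $\vQ$ lifts, via part (3), to one on $\vA$ modulo $W_\gamma$.

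For part (2), the key is that the formatting data chosen for $\vQ'$ at level $\gamma'$ can be re-used at level $\gamma$: pick bases of $\rowtuplespace(\vA)/W_{\gamma'}$ and $\rowvectorspace(\vA)/W_{\gamma'}$ that refine bases of $\rowtuplespace(\vA)/W_\gamma$ and $\rowvectorspace(\vA)/W_\gamma$. Under $\phi$, $W_{\gamma'}/W_\gamma$ is exactly the span of the row vectors of $T_{\vQ,h_{\gamma'}}\cup\dots\cup T_{\vQ,h_\gamma-1}$, so quotienting $\vQ$ by this subspace produces precisely $\vQ'$ (up to the change-of-basis matrices relating the two chosen formatting vectors). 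Applying part (3) then absorbs these basis changes into left-right actions on the quotient side.

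For parts (3) and (4), both amount to uniqueness statements up to basis choice. Two $\gamma$-quotient matrix tuples of the same $\vA$ are Fact~\ref{fact:basic_quotient_matrix_tuple_1} applied with two different choices of bases $(\va_1+W_\gamma,\dots)$ and $(v_1+W_\gamma,\dots)$; the change of basis matrices give invertible $L'$ and $R'$ with $\vQ'=L'\vQ(R')^{-1}$, and $T_{\vQ',i}=T_{\vQ,i}(R')^{-1}$ follows because $T_{\vQ,i}$ is defined intrinsically as the preimage in $\rowtuplespace(\vQ)$ of $T_i/W_\gamma$ under the formatting-induced map. For part (4), given $\vA'=L\vA R^{-1}$ with $T_i'=T_iR^{-1}$, the matrix $R$ sends $W_\gamma$ to $W_\gamma'$, hence descends to an isomorphism $\rowvectorspace(\vA)/W_\gamma\to\rowvectorspace(\vA')/W_\gamma'$, which in the chosen bases is some invertible $R'\in\GL(m_\gamma,\F_q)$ satisfying exactly the condition on formatting vectors stated in (4c). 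Similarly $L$ descends to $L'$, and $\vQ'=L'\vQ(R')^{-1}$ follows by direct computation using the defining equation~(\ref{equ:quotient_matrix_tuple}).

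The main obstacle I expect is bookkeeping in part (1): carefully showing that the fourth condition of Definition~\ref{def:row_tuple_decomposition} transports correctly, since it involves the subtle object $\langle\{\va\in\rowtuplespace(\vA):\forall j, \va^{(j)}\in W_i\}\rangle/W_{i+1}$, which must be identified with its analogue in $\vQ$ via $\phi$. Once this identification is established, the rest of the claim is essentially definitional.
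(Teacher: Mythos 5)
Your overall strategy mirrors the paper's: view $\vQ$ as a coordinate representation of the quotient of $\vA$ by $W_\gamma$, set up the dictionary between $\rowtuplespace(\vQ)$ and $\rowtuplespace(\vA)/W_\gamma$ induced by the formatting vector, and transport the claims through it. For part 4 this is exactly what the paper does, just more concretely: the paper defines $R'$ entry-by-entry from the coefficients $r'_{i,j}$ in the expansion $v_i'+W_\gamma'=(\sum_j r'_{i,j}v_j)R^{-1}+W_\gamma R^{-1}$, verifies invertibility by symmetry, then derives $\vQ'=L'\vQ(R')^{-1}$ and $T_{\vQ',i}=T_{\vQ,i}(R')^{-1}$ by direct computation with the defining equation of the quotient tuple. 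Your ``$R$ descends to $R'$'' description is the same argument at a higher level of abstraction. For parts 1--3 the paper is even terser than you, citing only the definitions.

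There is, however, a genuine gap in your treatment of part 1. To conclude that $T_{\vQ,1},\dots,T_{\vQ,h_\gamma-1}$ form a hierarchical row tuple decomposition of $\vQ$, you need each $T_{\vQ,i}$ to be characteristic and block-compatible \emph{for $\vQ$}, i.e.\ invariant under all automorphisms $(L'',R'')$ with $\vQ=L''\vQ(R'')^{-1}$ and under all block-diagonalizations of $\vQ$. Your proposed justification---that ``any automorphism or block-diagonalization of $\vQ$ lifts, via part (3), to one on $\vA$ modulo $W_\gamma$''---is not supplied by part 3. Part 3 says only that two $\gamma$-quotient matrix tuples of the \emph{same} $\vA$ are related by some left-right action; it says nothing about whether an arbitrary automorphism of a fixed $\vQ$ arises from an automorphism of $\vA$, and in general it need not (the quotient can have more symmetry than the original). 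So this lifting step needs a separate argument, or the characteristic/block-compatible conclusion must be obtained another way (e.g.\ by noting that the specific $T_{\vQ,i}$ are preimages of the characteristic $T_i/W_\gamma$ and arguing directly about how $\vQ$-automorphisms interact with the formatting-vector identification). To be fair, the paper's own proof of part 1 is a one-line ``by Definition'' and does not explicitly address this point either, so the gap is shared; but as written, your invocation of part 3 does not close it.
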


\begin{proof}
The first property is by Definition~\ref{def:row_tuple_decomposition}. 
The second and third properties are by Definition~\ref{def:quotient_matrix_tuple}.

For the fourth property, by the definition of quotient matrix tuple, for every $1 \leq i \leq m_\gamma$, $v_i'  + W_\gamma' $ is a linear combination of $v_1 R^{-1} + W_\gamma R^{-1}, \dots, v_{m_\gamma} R^{-1} + W_\gamma R^{-1}$. 
Let $r_{i, j}'\in \F_q$ be the coefficients such that   
\[v_i' + W_\gamma' = \left(\sum_{j = 1}^{m_\gamma} r_{i, j}' v_j\right) R^{-1} + W_\gamma R^{-1}\] for any $1 \in [m_\gamma]$. 
Let $R'$ be the matrix such that the $i$-th row $j$-th column of $R'$ equals $r_{i, j}'$.
Because every $v_i + W_\gamma$ can also be represented as a linear combination of $v_1' R + W_\gamma' R, \dots, v_{m_\gamma}' R + W_\gamma' R$,
the $m_\gamma \times m_\gamma$ matrix with $r_{i, j}'$ as the $i$-th row $j$-th column is an invertible matrix.

Also by the definition of quotient matrix tuple, for any $v\in \F_q^n$, there is a unique vector $w \in \F_q^{n_\gamma}$ such that \begin{equation}\label{equ:basic_quotient_matrix_tuple_1}(v L) \vA + W_\gamma =  w \vQ (v_1 + W_\gamma, \dots, v_{m_\gamma} + W_\gamma)^T,\end{equation} 
%= & w \vQ ((\sum_{j = 1}^{m'} q_{1, j}' v_j' Q) + W_\gamma' Q, \dots, (\sum_{j = 1}^{m'} q_{m', j}' v_j' Q) + W_\gamma' Q)^T \\
%= & w \vQ Q' ((v_1' + W_\gamma', \dots, v_{m'}' + W_\gamma') Q)^T
and there is a unique vector $w' \in \F_q^{n_\gamma}$ such that \[v \vA' + W_\gamma' =w' \vQ' (v_1' + W_\gamma', \dots, v_{m'}' + W_\gamma')^T.\]
Since $\vA' = L \vA R^{-1}$, we have 
\begin{equation}\label{equ:basic_quotient_matrix_tuple_2}\begin{split}v \vA' + W_\gamma'= &(vL) \vA R^{-1} + W_\gamma R^{-1} \\
= & w \vQ \left(v_1 R^{-1}+ W_\gamma R^{-1}, \dots, v_{m_\gamma} R^{-1} + W_\gamma R^{-1}\right)^T \\
= & w \vQ (R')^{-1} R' \left(v_1 R^{-1}+ W_\gamma R^{-1}, \dots, v_{m_\gamma} R^{-1} + W_\gamma R^{-1}\right)^T \\
= & w \vQ (R')^{-1}\left(\left(\sum_{j = 1}^{m_\gamma} r_{1, j}' v_j\right)R^{-1} + W_\gamma R^{-1}, \dots, \left(\sum_{j = 1}^{m_\gamma} r_{m_\gamma, j}' v_j\right)R^{-1} + W_\gamma R^{-1}\right)^T \\
= & w \vQ (R')^{-1}(v_1' + W_\gamma', \dots, v_{m_\gamma}' + W_\gamma')^T.
\end{split}\end{equation}
Because Equation~(\ref{equ:basic_quotient_matrix_tuple_1}) and (\ref{equ:basic_quotient_matrix_tuple_2}) hold for all the $v \in \F^n$, and it is additive, there is an invertible matrix  $L'$ such that $\vQ' = L' \vQ (R')^{-1}$. 
Finally, by the definition of $T_{\vQ, i}$ and $T_{\vQ', i}$, 
we have 
\[\begin{split} T_{\vQ', i} = & \left\langle \left\{x \vQ' : x\in \F_q^{n_\gamma}, x \vQ' (v_1' + W_\gamma', \dots, v_{m_\gamma}' + W_\gamma')^T \in T_i' / W_\gamma'\right\}\right\rangle \\
= & \left\langle \left\{x L' \vQ (R')^{-1} : x\in \F_q^{n_\gamma}, x L' \vQ (R')^{-1} (v_1' + W_\gamma', \dots, v_{m_\gamma}' + W_\gamma')^T \in T_i' / W_\gamma'\right\}\right\rangle \\
= & \left\langle \left\{x \vQ (R')^{-1} : x\in \F_q^{n_\gamma}, x\vQ (R')^{-1} (v_1' + W_\gamma', \dots, v_{m_\gamma}' + W_\gamma')^T \in T_i' / W_\gamma'\right\}\right\rangle \\
= & \left\langle \left\{x \vQ (R')^{-1} : x\in \F_q^{n_\gamma}, x  \vQ  (v_1 R^{-1} + W_\gamma R^{-1}, \dots, v_{m_\gamma} R^{-1}  + W_\gamma R^{-1})^T \in T_i' / W_\gamma'\right\}\right\rangle \\
= & \left\langle \left\{x \vQ (R')^{-1} : x\in \F^{n_\gamma}, x  \vQ  (v_1 + W_\gamma, \dots, v_{m_\gamma}  + W_\gamma)^T \in T_i / W_\gamma\right\}\right\rangle \\
= & T_{\vQ, i} (R')^{-1}.
\end{split}\]
Thus, the fourth property holds. 
\end{proof}

\begin{comment}
\begin{definition}
    For an $n \times m$ matrix tuple $\vA$, a matrix tuple $\vB$ is a quotient matrix tuple of $\vA$ with respect to a subspace $W$ of $\F_p^m$ if there exits a basis of $V/W$, denoted as $v_1 + W, v_2 + W, \dots, v_\beta + W$ such that $\vA' = (A_1', \dots, A_\alpha')$ is an $n \times \beta$ matrix tuple such that 
    for each $e_i$ with $1 \leq i \leq n$, $e_i A_k' (v_1 + W, v_2 + W, \dots, v_\beta+ W)^T = (e_i A_k) + W$, and the span of row tuples of $\vB$ is the same as the span of row tuples of $\vA'$. 
\end{definition}

Let $\vA = (A_1, \dots, A_\alpha)$ be an $n \times m$ matrix tuple and $R$ be a linear subspace of $\F_p^m$. 
We say $\vA'$ be the quotient matrix tuple of $\vA$ with respect to $R$ if $\vA'$ satisfies the following conditions:
\begin{enumerate}
    \item Let $r = \dim(R)$, $T = \{v \vA = (v_1, \dots, v_\alpha): v \in \F_p^n, v_k \in R \text{ for all } 1 \leq k \leq \alpha\}$ and $t$ be $\dim(T)$. 
    \item There exist two invertible matrices $P$ and $Q$ satisfying the following conditions:
    \begin{enumerate}
        \item $rowspan(RQ) = rowspan(e_1, \dots, e_r)$.
        \item the linear span of the first $t$ rows tuples of $P\vA$ is $T$. 
        \item $\vA' = \vA[t+1: n, r + 1, m]$.  
    \end{enumerate}
\end{enumerate}
\end{comment}

\subsection{\IBCtuple extensions}\label{sec:ibc_extension}

To understand the \IBCtuples of $\vA$, we first investigate the \IBCtuples of a quotient matrix tuple $\vQ$. As a result, suppose we obtain a representative \IBCtuple sequence of $\vQ$. Next, we investigate the relations between \IBCtuples of $\vQ$ and \IBCtuples of $\vA$ by extending the \IBCtuples of $\vQ$ to extensions of $\vA$, defined as follows.

\begin{definition}[Row Tuple Extension and \IBCtuple Extension]\label{def:extension}
Let $\vA \in \M(n\times m, \F_q)^\ell$ be a matrix tuple with a depth-$\beta$ hierarchical row tuple decomposition $T_1, \dots, T_\zeta$ and parameters $h_0, \dots, h_\beta$.
For a $\gamma \in [\beta]$,
Let $\vQ \in \M(n_\gamma \times m_\gamma, \F_q)^\ell$ be a $\gamma$-quotient matrix tuple of $\vA$ with formatting vector $(v_1 + W_\gamma, \dots, v_{m_\gamma} + W_\gamma)$. 
For a row tuple $\vq \in T_{\vQ, i}$ for some $i \in [h_\gamma - 1]$, 
a row tuple $\va \in \rowtuplespace(\vA)$ is a \emph{row tuple extension} of $\vq$ in $\vA$ if $\va \in T_{i}$ and $\va + W_\gamma = \vq (v_1 + W_\gamma, \dots, v_{m_\gamma} + W_\gamma)^T$.

%Let $\vQ' \in \M(\sigma \times m_\gamma, \F_q)^\ell$ be a matrix tuple such that $\rowtuplespace(\vQ') \leq \rowtuplespace(\vQ)$, and there are parameters $u_1, \dots, u_\sigma$ such that $e_i \vQ' \in T_{\vQ, u_i}$ for every $i \in [\sigma]$. A matrix tuple $\vE \in \M(\sigma \times M, \F_q)^\ell$ is a \emph{matrix tuple extension} of $\vQ'$ in $\vA$ if $e_i \vE$ is an extension of $e_i \vQ' \in T_{u_i}$ for any $i \in [\sigma]$. 

For an \IBCtuple $\vB$ of $\vQ$ with $\sigma$ rows satisfying the row tuple space property with respect to the hierarchical row tuple decomposition $T_{\vQ, 1}, \dots, T_{\vQ, h_\gamma - 1}$ and row tuple space parameters $t_{\vB, 1}, \dots, t_{\vB, \sigma}$, 
a matrix tuple $\vE$ is an \emph{\IBCtuple extension} of $\vB' \in \ibcspace_\vB$ in $\vA$ %if $\vE$ is an extension of $\vB$ in $\vA$ 
if $e_i \vE$ is a row tuple extension of $e_i \vB' \in T_{\vQ, t_{\vB, i}}$ for every $i \in [\sigma]$.

%the following conditions hold:
%\begin{enumerate}
%    \item $\rowtuplespace(\vC) \leq \rowtuplespace(\vA)$.
%    \item $e_i \vC \in T_{t_{\vB, i}}$ for any $1 \leq i \leq \sigma$. 
%    \item $e_i \vC + W_\gamma = e_i \vB (v_1 + W_\gamma, \dots, v_{m'} + W_\gamma)^T$ for any $1 \leq i \leq \sigma$. 
%\end{enumerate}

%    Let $W_0$ be a row vector space, and $W$ be a subspace of $W_0$.     Let $T$ be a linear space of row tuples over $W_0$.     A row tuple $\va$ of the matrix tuple is a row tuple extension of $\vd \in T / W$ if $\va \in T$ and $\va + W = \vd$.

%    Let $\vA$ be a matrix tuple with row vectors over $V$. Let $\vE$ be a matrix tuple such that $e_i \vE$ is an element in $T_i / W$ for any $1 \leq i \leq \sigma$.     We say a matrix tuple $\vA$ is an extension of $\vE$ with respect to $T_1, \dots, T_\sigma$ if $(e_i \vA) + W = e_i \vE$ and $e_i \vA \in T_i$ for any $1 \leq i \leq \sigma$. 
\end{definition}

Similar to the compatible block-compatible property for \IBCtuples as defined in Definition~\ref{def:characteristic_subspace} and Definition~\ref{def:ibctuple_invariant_subspace}, we extend the characteristic and block-compatible properties to linear spaces of \IBCtuple extensions. 

\begin{definition}[Characteristic Block-Compatible Extension Space]\label{def:ibctuple_invariant_extension}
Let $\vA$ be a matrix tuple, $T_1, \dots, T_\zeta$ be a hierarchical row tuple decomposition of $\vA$ with parameters $h_0, \dots, h_\beta$, and $\vQ$ be a $\beta$-quotient matrix tuple of $\vA$, and $\vB$ be an \IBCtuple of $\vQ$ satisfying the space property with \IBCtuple space $\ibcspace_\vB$ and \IBCtuple space kernel $\ibcspacekernel_{\vB}$. 
Let $\ibcextensionspace$ be a linear space of \IBCtuple extensions of matrix tuples in $\ibcspace_\vB$. 

$E$ is \emph{characteristic} if for any invertible matrices $L$ and $R$ such that $\vA = L \vA R^{-1}$, 
\[E = \langle\{ \vE R^{-1} : \vE \in E\} \rangle.\]
%$\vE R^{-1}$ is in $E$ for any $\vE \in E$. 

$E$ is \emph{block-compatible} if for any block-diagonalization $\vD = \diag(\vD_1, \dots, \vD_d)$ of $\vA$, any invertible matrices $L$ and $R$ such that $\vD = L\vA R^{-1}$, $\proj_{\vD, L}(\vE, i)$ is in $\ibcextensionspace$  
for 
every $\vE \in \ibcextensionspace$ and $i\in [d]$. 
\end{definition}

We prove some basic methods to obtain \IBCtuple extension spaces satisfying the characteristic and block-compatible properties in Claim~\ref{claim:ibctuple_compatible_exntesion_full} and Claim~\ref{claim:ibctuple_invariant_extension_refinement}.

\begin{claim}\label{claim:ibctuple_compatible_exntesion_full}
Let $\vA$ be a matrix tuple, $T_1, \dots, T_\zeta$ be a hierarchical row tuple decomposition of $\vA$ with parameters $h_0, \dots, h_\beta$, $\vQ$ be a $\gamma$-quotient matrix tuple of $\vA$ with formatting vector $(v_1 + W_\gamma, \dots, v_{m_\gamma} + W_\gamma)$, and $\vB$ be an \IBCtuple of $\vQ$ satisfying the space property (with \IBCtuple space $\ibcspace_\vB$ and \IBCtuple space kernel $\ibcspacekernel_{\vB}$) and block-compatible property. 
Then \[E = \langle \{\vE: \vE \text{ is an extension of } \vB' \in \ibcspace_\vB\}\rangle \text{ and  } F = \langle \{\vE: \vE \text{ is an extension of } \vB' \in \ibcspacekernel_\vB\}\rangle\] are characteristic block-compatible extension spaces. 
\end{claim}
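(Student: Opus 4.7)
The plan is to verify the two defining conditions of Definition~\ref{def:ibctuple_invariant_extension}---characteristic and block-compatible---separately for $E$ and $F$. In both cases, by linearity I reduce to a single generator, namely an arbitrary extension $\vE$ of a fixed $\vB' \in \ibcspace_\vB$ (respectively $\vB' \in \ibcspacekernel_\vB$), and transfer the relevant symmetry of $\vA$ to a corresponding symmetry of $\vQ$ via Claim~\ref{claim:quotient_marix_tuple_basic} and Claim~\ref{claim:correspondance_ibctuple_diag_quotient}. The extension condition then splits cleanly into the row tuple containment $e_i\vE \in T_{t_{\vB,i}}$ (controlled by the characteristic/block-compatible structure of the hierarchical decomposition) and the formatting-vector identity $e_i\vE + W_\gamma = e_i\vB'(v_1+W_\gamma,\dots,v_{m_\gamma}+W_\gamma)^T$ (which will be transported using the induced matrices).

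For the characteristic property, given an automorphism $(L,R)$ of $\vA$, I invoke Claim~\ref{claim:quotient_marix_tuple_basic}(4) (with $\vA' = \vA$, $T'_i = T_i$) to obtain an induced automorphism $(L',R')$ of $\vQ$ satisfying $(v_1 R^{-1}+W_\gamma,\dots,v_{m_\gamma}R^{-1}+W_\gamma)^T = (R')^{-1}(v_1+W_\gamma,\dots,v_{m_\gamma}+W_\gamma)^T$. A short argument shows that if $\vB^\star = \vQ_{i,\vD,L_0}$ is any IBC-tuple of $\vQ$ with $\vD = L_0\vQ\rho^{-1}$, then using the identity $(L')^{-1}\vQ = \vQ(R')^{-1}$ one rewrites $\vD = L_0(L')^{-1}\vQ(\rho(R')^{-1})^{-1}$, exhibiting $\vB^\star (R')^{-1}$ as the $i$-th IBC-tuple for a different block-diagonalization of $\vQ$; hence $\ibcspace_\vB$ is closed under right multiplication by $(R')^{-1}$, and by the same symmetric argument so is $\ibcspacekernel_\vB$ (otherwise a kernel element would become an IBC-tuple right-equivalent to $\vB$, contradicting the space property). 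Combining this with $T_{t_{\vB,i}}R^{-1} = T_{t_{\vB,i}}$ and the transported formatting identity above shows $\vE R^{-1}$ is an extension of $\vB'(R')^{-1}$, so $\vE R^{-1} \in E$ (resp.\ $\in F$).

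For the block-compatible property, given a block-diagonalization $\vD = L\vA R^{-1}$ of $\vA$, Claim~\ref{claim:correspondance_ibctuple_diag_quotient} produces the block-diagonal $\gamma$-quotient matrix tuple $\vQ_{\vA,\vD,L} = \diag(\vQ_1,\dots,\vQ_d)$, and Claim~\ref{claim:quotient_marix_tuple_basic}(3) provides invertible $\hat L, \hat R$ with $\vQ_{\vA,\vD,L} = \hat L \vQ \hat R^{-1}$, giving a genuine block-diagonalization of $\vQ$. Applying the block-compatible property of $\vB$ to this block-diagonalization yields $\proj(\vB', i) \in \ibcspace_\vB$ (resp.\ in $\ibcspacekernel_\vB$). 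I then verify that $\proj_{\vD, L}(\vE, i)$ is an extension of the corresponding projection of $\vB'$: by Claim~\ref{claim:ibctuple_compatible_quotient_basic}, the decomposition $\rowvectorspace(\vA) = \bigoplus_j \rowvectorspace(\vA_{j,\vD,L})$ descends to a direct sum modulo $W_\gamma$ that coincides with the block decomposition induced by $\vQ_{\vA,\vD,L}$, so the formatting-vector identity projects componentwise, and the row tuple containment $\proj_{\vD,L}(e_k\vE, i) \in T_{t_{\vB,k}}$ follows from block-compatibility of each $T_{t_{\vB,k}}$ (Definition~\ref{def:ibctuple_invariant_subspace}). Thus $\proj_{\vD,L}(\vE, i) \in E$ (resp.\ $\in F$).

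The main obstacle I anticipate is the bookkeeping around the induced matrices: there are two layers of quotients and changes of basis (the automorphism pair $(L',R')$ and the pair $(\hat L,\hat R)$ realizing $\vQ_{\vA,\vD,L}$ as a block-diagonalization of $\vQ$), and the formatting-vector identity from Claim~\ref{claim:quotient_marix_tuple_basic}(4) must be combined carefully with the projection identities to show that the extension relation is preserved under both operations. Once these identifications are made explicit, both properties follow by a direct componentwise verification.
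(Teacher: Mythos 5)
Your proposal is correct and follows essentially the same route as the paper's proof: for the characteristic property you invoke Claim~\ref{claim:quotient_marix_tuple_basic}(4) to transfer the automorphism $(L,R)$ of $\vA$ to an induced pair $(L',R')$ for $\vQ$ and then verify the extension relation is preserved, and for block-compatibility you invoke Claim~\ref{claim:correspondance_ibctuple_diag_quotient} together with Claim~\ref{claim:quotient_marix_tuple_basic} to realize $\vQ_{\vA,\vD,L}$ as a block-diagonalization of $\vQ$ and apply the block-compatible property of $\vB$. Your argument is a bit more explicit than the paper's in two places — the closure of $\ibcspacekernel_\vB$ under right multiplication by $(R')^{-1}$ (the paper just says the proof for $F$ is similar), and the use of Claim~\ref{claim:ibctuple_compatible_quotient_basic} to justify that the formatting-vector identity projects componentwise — but the underlying ideas are the same.
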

\begin{proof}
We prove that $E$ is characteristic. 
Let $L$ and $R$ be arbitrary invertible matrices such that $\vA = L \vA R^{-1}$. 
We only need to show for any $\vE \in E$, 
$\vE R^{-1}$ is also in $E$.
Let $\sigma$ be the number of rows of $\vB$, and $t_{\vB, 1}, \dots, t_{\vB, \sigma}$ be the parameters of $\vB$. 
Since $T_1, \dots, T_\zeta$ are characteristic by the  hierarchical row tuple decomposition definition, $T_i = T_i R^{-1}$ for any $i \in [\zeta]$. 
Hence, $\vE R^{-1}$ is a matrix tuple such that $e_j \vE R^{-1}$ is a row tuple in $T_{t_{\vB, j}}$. 
So to prove $\vE R^{-1}$ is in $E$, we only need to show that there is a $\vB' \in \ibcspace_{\vB}$ such that $\vE R^{-1} + W_\gamma = \vB' (v_1 + W_\gamma, \dots, v_{m_\gamma} + W_\gamma)^T$. 

By the fourth property of Claim~\ref{claim:quotient_marix_tuple_basic},
there exists invertible matrices $L'$ and $R'$ such that $\vQ = L' \vQ R'^{-1}$ and 
\[\begin{bmatrix}
    v_1 + W_\gamma \\ \vdots \\ v_{m_\gamma} + W_\gamma
\end{bmatrix}  = R' \begin{bmatrix}
    v_1 R^{-1} + W_\gamma R^{-1} \\ \vdots \\ v_{m_\gamma} R^{-1} + W_\gamma R^{-1}
\end{bmatrix}.
\]
Let $\vB_\vE$ be the \IBCtuple of $\vQ$ such that $\vE$ is an extension of $\vB_\vE$. 
Using the fact that $W_\gamma$ is characteristic, 
we have 
\begin{align*}\vE R^{-1} + W_\gamma = & (\vE + W_\gamma) R^{-1} \\
= & \vB_{\vE} \begin{bmatrix}
    v_1 + W_\gamma \\ \vdots \\ v_{m_\gamma} + W_\gamma
\end{bmatrix} R^{-1}\\
= & \vB_\vE \begin{bmatrix}
    v_1 R^{-1} + W_\gamma \\ \vdots \\ v_{m_\gamma}R^{-1} + W_\gamma
\end{bmatrix} \\
= & \vB_{\vE} (R')^{-1} \begin{bmatrix}
    v_1 + W_\gamma \\ \vdots \\ v_{m_\gamma} + W_\gamma
\end{bmatrix}.\end{align*}
Note that for any \IBCtuple $\vB''$ of $\vQ$, $\vB'' (R')^{-1}$ is also an \IBCtuple for $\vQ$ by $\vQ = L' \vQ (R')^{-1}$, and $\vB'' (R')^{-1}$ is right-equivalent to $\vB''$. 
Hence, $\vE R^{-1}$ is in $E$ by letting $\vB' = \vB_\vE (R')^{-1}$.

there is a $\vB' \in \ibcspace_{\vB}$ such that $\vE R^{-1} + W_\gamma = \vB' (v_1 + W_\gamma, \dots, v_{m_\gamma} + W_\gamma)^T$. 
Hence, $E$ is characteristic.

Now we prove $E$ is block-compatible.
Let $\vD = \diag(\vD_1, \dots, \vD_d)$ be a block-diagonalization of $\vA$, and $L$ and $R$ be arbitrary invertible matrices such that $\vD = L \vA R^{-1}$. 
    We show that for every $\vE \in E$, $\proj_{\vD, L} (\vE, i)$ is in $E$ for every $i\in[d]$. 

    Let $\vC$ be the matrix tuple such that 
    \[\vC \cdot (v_1 + W_\gamma, \dots, v_{m_\gamma} + W_\gamma)^T = \vE + W_\gamma.\]
    By Definition~\ref{def:ibctuple}, $\vC$ is a matrix tuple in $\ibcspace_\vB$. 
    Let $\vC_{i}$ be the matrix tuple such that 
    \[\vC_i \cdot (v_1 + W_\gamma, \dots, v_{m_\gamma} + W_\gamma)^T = \proj_{\vD, L}(\vE, i) + W_\gamma\]
    for any $i\in[d]$. 
    By the block-compatible property, $\vC_i$ is $\proj_{\vQ_{\vA, \vD, L}, L_\vQ}(\vC, i)$ for any $i\in [d]$, where $\vQ_{\vA, \vD, L}$ is the matrix tuple defined as Claim~\ref{claim:correspondance_ibctuple_diag_quotient} and $L_\vQ$ is an arbitrary invertible matrix such that $\vQ_{\vA, \vD, L} = L_\vQ \vQ R_{\vQ}^{-1}$ for some invertible matrix $R_{\vQ}$. 
    Hence,
    $\vC_i$ is a matrix tuple in $\ibcspace_\vB$. By the correspondence between $\vD$ and $\vQ_{\vA, \vD, L}$, we have
    \[\vC_i \cdot (v_1 + W_\gamma, \dots, v_{m_\gamma} + W_\gamma)^T = \proj_{\vD, L}(\vE, i) + W_\gamma.\]
    Hence, $\proj_{\vD, L}(\vE, i)$ is in $E$ for every $i \in [d]$. 
    
    The proof for the characteristic and block-compatible properties of $F$ is similar. 
\end{proof}

\begin{claim}\label{claim:ibctuple_invariant_extension_refinement}
Let $\vA \in \M(n\times m, \F_q)^\ell$ be a matrix tuple, $T_1, \dots, T_\zeta$ be a hierarchical row tuple decomposition of $\vA$ with parameters $h_0, \dots, h_\beta$, $\vQ$ be a $\gamma$-quotient matrix tuple of $\vA$ with formatting vector $(v_1 + W_\gamma, \dots, v_{m_\gamma} + W_\gamma)$, and $\vB$ be an \IBCtuple of $\vQ$ satisfying the space property (with \IBCtuple space $\ibcspace_\vB$ and \IBCtuple space kernel $\ibcspacekernel_{\vB}$) and block-compatible property.

    Let $\ibcextensionspace$ be a characteristic block-compatible linear space of \IBCtuple extensions for matrix tuples in $\ibcspace_\vB$ such that every matrix tuple in $\ibcextensionspace$ has $\sigma$ rows. 
    We have the following properties:
    \begin{enumerate}
        \item For any row vector $u \in \F_q^{\sigma \cdot \ell}$, $\langle \{ u \cdot \mtupletomatrix(\vE) : \vE \in \ibcextensionspace \} \rangle$ is a characteristic block-compatible row vector space of $\vA$. 
        \item Let $u$ be a row vector in  $\F_q^{\sigma \cdot \ell}$ and $W_1', \dots, W_w'$ be a sequence of characteristic block-compatible row vector subspaces such that \[\langle \{ u \cdot \mtupletomatrix(\vE) : \vE \in \ibcextensionspace \} \rangle = W_1'\oplus \dots \oplus W_w'.\] 
        Then \[\langle \{ \vE \in \ibcextensionspace: \rowvecproj_{W_1', \dots, W_w'} (u \cdot \mtupletomatrix(\vE), W_k') = 0\} \rangle\] is a characteristic block-compatible linear space of \IBCtuple extensions for any $k \in [w]$. 
    \end{enumerate}
\end{claim}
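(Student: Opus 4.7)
The plan is to mirror the proof strategies of Fact~\ref{fact:ibctuple_invariant_basic}, properties 3 and 5, replacing row tuples with \IBCtuple extensions. The key algebraic observation that makes everything work is that for any $\vE = (E_1, \dots, E_\ell) \in \ibcextensionspace$ and any invertible $R$, we have $u\cdot\mtupletomatrix(\vE R^{-1}) = (u\cdot\mtupletomatrix(\vE))\,R^{-1}$, and analogously for any block-diagonalization $\vD = L\vA R^{-1}$, setting $\pi_{d'}:=\rowvecproj_{\rowvectorspace(\vA_{1,\vD,L}),\dots,\rowvectorspace(\vA_{d,\vD,L})}(\,\cdot\,,\rowvectorspace(\vA_{d',\vD,L}))$, we have $u\cdot\mtupletomatrix(\proj_{\vD,L}(\vE, d')) = \pi_{d'}(u\cdot\mtupletomatrix(\vE))$. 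The first identity is immediate from the definition of $\mtupletomatrix$; the second follows since $u\cdot\mtupletomatrix(\vE) = \sum_{j,i} u_{(j-1)\sigma+i}\,e_i E_j$, and projecting $\vE$ onto the $d'$-th block applies $\pi_{d'}$ coordinatewise to each row tuple $e_i\vE$, hence to each $e_i E_j$, and $\pi_{d'}$ is linear.

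For the first property, write $W := \langle\{u\cdot\mtupletomatrix(\vE):\vE\in \ibcextensionspace\}\rangle$. To show $W$ is characteristic, fix invertible $L,R$ with $\vA = L\vA R^{-1}$. For any generator $u\cdot\mtupletomatrix(\vE)$, the first identity above gives $u\cdot\mtupletomatrix(\vE)\,R^{-1} = u\cdot\mtupletomatrix(\vE R^{-1})$; since $\ibcextensionspace$ is characteristic, $\vE R^{-1}\in \ibcextensionspace$, so this vector lies in $W$. For block-compatibility, fix a block-diagonalization $\vD = L\vA R^{-1}$ and $d'\in[d]$. The second identity gives $\pi_{d'}(u\cdot\mtupletomatrix(\vE)) = u\cdot\mtupletomatrix(\proj_{\vD,L}(\vE,d'))$; since $\ibcextensionspace$ is block-compatible, $\proj_{\vD,L}(\vE,d')\in \ibcextensionspace$, so $\pi_{d'}(u\cdot\mtupletomatrix(\vE))\in W$.

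For the second property, let $\ibcextensionspace' := \langle\{\vE\in \ibcextensionspace : \rowvecproj_{W_1',\dots,W_w'}(u\cdot\mtupletomatrix(\vE), W_k') = 0\}\rangle$. For the characteristic property, fix $L,R$ with $\vA = L\vA R^{-1}$ and $\vE\in \ibcextensionspace'$. Since $\ibcextensionspace$ is characteristic, $\vE R^{-1}\in \ibcextensionspace$. Decompose $u\cdot\mtupletomatrix(\vE) = w_1+\dots+w_w$ with $w_j\in W_j'$; since each $W_j'$ is characteristic, $w_j R^{-1}\in W_j'$, so by uniqueness of the decomposition in $W_1'\oplus\dots\oplus W_w'$, $\rowvecproj_{W_1',\dots,W_w'}(u\cdot\mtupletomatrix(\vE)R^{-1}, W_k') = w_k R^{-1} = 0\cdot R^{-1} = 0$, so $\vE R^{-1}\in \ibcextensionspace'$. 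For block-compatibility, fix a block-diagonalization $\vD = L\vA R^{-1}$, $d'\in[d]$, and $\vE\in \ibcextensionspace'$. Since $\ibcextensionspace$ is block-compatible, $\proj_{\vD,L}(\vE,d')\in \ibcextensionspace$. Using the identity, $u\cdot\mtupletomatrix(\proj_{\vD,L}(\vE,d')) = \pi_{d'}(u\cdot\mtupletomatrix(\vE)) = \pi_{d'}(w_1)+\dots+\pi_{d'}(w_w)$; since each $W_j'$ is block-compatible, $\pi_{d'}(w_j)\in W_j'$, so this is the $W_1'\oplus\dots\oplus W_w'$ decomposition of $\pi_{d'}(u\cdot\mtupletomatrix(\vE))$, and its $W_k'$ component is $\pi_{d'}(w_k) = \pi_{d'}(0) = 0$, placing $\proj_{\vD,L}(\vE,d')$ in $\ibcextensionspace'$.

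I do not anticipate any real obstacle here: the argument is a direct transcription of the corresponding proofs for characteristic block-compatible row tuple / row vector subspaces, with the two boxed identities above serving as the bridge between operations on $\ibcextensionspace$ and operations on $W$. The only point that requires minor care is verifying that $\rowvectorspace(\ibcextensionspace) \leq \rowvectorspace(\vA)$ so that $\pi_{d'}$ is well-defined on $W$, but this is immediate from the fact that every row of every matrix in an extension tuple is a row vector of $\vA$.
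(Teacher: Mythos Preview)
Your proposal is correct and follows essentially the same approach as the paper's proof: both rely on the two identities $u\cdot\mtupletomatrix(\vE R^{-1}) = (u\cdot\mtupletomatrix(\vE))R^{-1}$ and $u\cdot\mtupletomatrix(\proj_{\vD,L}(\vE,d')) = \pi_{d'}(u\cdot\mtupletomatrix(\vE))$, then transfer the characteristic and block-compatible hypotheses on $\ibcextensionspace$ and on the $W_j'$ through these identities. Your treatment of the block-compatible part of property~2 is marginally more streamlined than the paper's (which sums over all blocks and invokes Fact~\ref{fact:ibctuple_invariant_fund} to decompose $W_k'$ across blocks, rather than directly projecting the $w_j$-decomposition), but the underlying argument is identical.
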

\begin{proof}
Let $\vD = \diag(\vD_1, \dots, \vD_d)$ be an arbitrary block-diagonalization of $\vA$, and $L, R$ be arbitrary invertible matrices such that $\vD = L \vA R^{-1}$. Let $L', R'$ be invertible matrices such that $\vA = L' \vA (R')^{-1}$.

For the first property, 
let $W$ denote $\langle \{ u \cdot \mtupletomatrix(\vE) : \vE \in \ibcextensionspace \} \rangle$. 
By the linearity of $\ibcextensionspace$, for any $v \in W$, there is a $\vE \in \ibcextensionspace$ such that $v = u \cdot \mtupletomatrix(\vE)$. 

Since $E = \langle \{\vE' (R')^{-1} : \vE' \in E\}\rangle$, 
we have $v (R')^{-1} = u \cdot \mtupletomatrix(\vE) (R')^{-1} = u \cdot \mtupletomatrix(\vE (R')^{-1}) \in W$. Hence, $W$ is characteristic.

For the block-compatible property of $W$, 
we have $\vE = \sum_{i = 1}^d \proj_{\vD, L}(\vE, i)$, and $\proj_{\vD, L}(\vE, i) \in \ibcextensionspace$ by Definition~\ref{def:ibctuple_invariant_extension}. 
Hence, 
\[u \cdot \mtupletomatrix(\vE) = u\cdot \left( \sum_{i = 1}^d \mtupletomatrix(\proj_{\vD, L}(\vE, i))\right) = \sum_{i = 1}^d  \left(u\cdot \mtupletomatrix(\proj_{\vD, L}(\vE, i)) \right).\]
Since $\rowvectorspace(\proj_{\vD, L}(\vE, i))$ is a subspace of $\rowvectorspace(\vA_{i, \vD, L})$ for any $i\in[d]$, $u\cdot \mtupletomatrix(\proj_{\vD, L}(\vE, i))$ is in $\rowvectorspace(\vA_{i, \vD, L})$ for any $i\in[d]$. 
On the other hand, 
\[\rowvecproj_{\rowvectorspace(\vA_{1, \vD, L}), \dots, \rowvectorspace(\vA_{d, \vD, L})}(u \cdot \mtupletomatrix(\vE), \rowvectorspace(\vA_{i, \vD, L}))\]
is a vector in $\rowvectorspace(\vA_{i, \vD, L})$. 
By Fact~\ref{fact:ibctuple_diagonalzation_most_basic}, we have 
\[\rowvecproj_{\rowvectorspace(\vA_{1, \vD, L}), \dots, \rowvectorspace(\vA_{d, \vD, L})}(u \cdot \mtupletomatrix(\vE), \rowvectorspace(\vA_{i, \vD, L})) = u \cdot \mtupletomatrix(\proj_{\vD, L}(\vE, i))\]
for every $i \in [d]$. 
Since $u \cdot \mtupletomatrix(\proj_{\vD, L}(\vE, i))$ is in $W$ for every $i\in[d]$, $W$ is a block-compatible row vector space. Thus, we have the first property. 

For the second property, let $\ibcextensionspace'$ denote \[\langle \{ \vE \in \ibcextensionspace: \rowvecproj_{W_1', \dots, W_w'} (u \cdot \mtupletomatrix(\vE), W_k') = 0\} \rangle.\]

For the characteristic property of $E'$, we only need to show that $\vE R^{-1}$ is in $E'$ for any $\vE \in E'$. Let $\vE$ be an arbitrary matrix tuple in $E'$. Since $\vE R^{-1}$ is in $E$ by the characteristic property of $E$, we only need to show 
$\rowvecproj_{W_1', \dots, W_w'}(u \cdot \mtupletomatrix(\vE (R')^{-1}), W_k') = 0$.

Let $u_i = \rowtuplespace_{W_1', \dots, W_w'} (u\cdot \mtupletomatrix(\vE), W_i') \in W_i'$ for any $i \in [w]$.  
Since $W_1', \dots, W_w'$ are characteristic, 
$u_i R^{-1}$ is also in $W_i'$ for any $i\in[w]$.
Hence, \[\rowvecproj_{W_1', \dots, W_w'}(u\cdot\mtupletomatrix(\vE (R')^{-1}), W_i') = u_i R^{-1}.\] 
Since $u_k = 0$, we have $u_k (R')^{-1} = 0$. 
Thus, $\vE (R')^{-1}$ is in $E$. And $E'$ is characteristic.

For the block-compatible property, 
we only need to show that $\proj_{\vD, L}(\vE, i)$ is also in $\ibcextensionspace'$ for any $\vE \in \ibcextensionspace'$ and $i \in [d]$. 
Since $\ibcextensionspace$ is block-compatible, $\proj_{\vD, L}(\vE, i)$ is in $\ibcextensionspace$. 
By the definition of $\ibcextensionspace'$, we only need to show that 
\begin{equation}\label{equ:ibctuple_comp_extension_basic_operation} \rowvecproj_{W_1', \dots, W_w'}(u \cdot \mtupletomatrix(\proj_{\vD, L}(\vE, i)), W_k') = 0\end{equation}
for every $i \in [d]$. 
Since
\[\vE = \sum_{i = 1}^d \proj_{\vD, L}(\vE, i),\]
we have 
\[\mtupletomatrix(\vE) =\mtupletomatrix\left(\sum_{i = 1}^d \proj_{\vD, L}(\vE, i)\right) = \sum_{i = 1}^d \mtupletomatrix( \proj_{\vD, L}(\vE, i) ),\]
and thus 
\[\rowvecproj_{W_1', \dots, W_w'}(u \cdot \mtupletomatrix(\vE), W_k') = \sum_{i = 1}^d \rowvecproj_{W_1', \dots, W_w'}(u \cdot \mtupletomatrix(\proj_{\vD, L}(\vE, i)), W_k'). \] 
Note that $\rowvecproj_{W_1', \dots, W_w'}(u \cdot \mtupletomatrix(\vE), W_k')$ is a zero vector, and 
$\rowvecproj_{W_1', \dots, W_w'}(u \cdot \mtupletomatrix(\proj_{\vD, L}(\vE, i)), W_k')$ is a vector in $\rowvectorspace(\vA_{j, \vD, L}) \cap W_k'$. 
By Fact~\ref{fact:ibctuple_invariant_fund}, we have 
\[W_k' =( \rowvectorspace(\vA_{1, \vD, L}) \cap W_k') \oplus \dots \oplus (\rowvectorspace(\vA_{d, \vD, L}) \cap W_k'),\]
and thus Equation (\ref{equ:ibctuple_comp_extension_basic_operation}) holds for every $i\in[d]$.
\end{proof}

The major goal of this section is to construct a linear space spanned by essential extensions, which are defined as follows.

\begin{definition}\label{def:essential_extension}
Let $\vA \in \M(n\times m, \F_q)^\ell$ be a matrix tuple with a depth-$\beta$ hierarchical row tuple decomposition $T_1, \dots, T_\zeta$ and parameters $h_0, \dots, h_\beta$, and $\vQ \in \M(n_\beta \times m_\beta, \F_q)^\ell$ be a $\beta$-quotient matrix tuple of $\vA$. 
Let $\vB$ be an \IBCtuple of $\vQ$, and $\vC$ be an \IBCtuple extension of $\vB'$ for some $\vB' \in \ibcspace_{\vB}$. 

$\vC$ is \emph{an essential extension} if 
for any block-diagonalization $\vD = \diag(\vD_1, \dots, \vD_d)$ of $\vA$ and any invertible matrices $L$ and $R$ satisfying $\vD = L \vA R^{-1}$ such that there is an \IBCtuple extension $\vF$ of $\vB'$ in $\vA$ satisfying  $\rowtuplespace(\vF) \leq \rowtuplespace(\vA_{i, \vD, L})$ for some $i \in [d]$, 
then $(\rowvectorspace(\vC) \cap W_\beta) \leq (\rowvectorspace(\vA_{i, \vD, L}) \cap W_\beta)$. 
\end{definition}

%For an \IBCtuple $\vB$, there can be many extensions of $\vB$. 
We use the following algorithm to compute a characteristic block-compatible extension space such that for every extension in the linear space, if the extension is an extension of an \IBCtuplenospace, then the extension is essential.

\begin{framed}
\noindent \textbf{Essential Extension Algorithm}

\noindent \textbf{Input:} 
\begin{enumerate}
    \item Matrix tuple $\vA \in \M(n \times m, \F_q)^\ell$ with hierarchical row tuple decomposition $T_1, \dots, T_\zeta$ and parameters $h_0, \dots, h_\beta$ of $\vA$; 
    \item $\beta$-quotient matrix tuple $\vQ \in \M(n_\beta \times m_\beta, \F_q)^\ell$ of $\vA$ with formatting vector $(v_1 + W_\beta, \dots, v_{m_\beta} + W_\beta)$; 
    \item \IBCtuple $\vB \in \M(\sigma \times m_\beta, \F_q)^\ell$ of $\vQ$ satisfying the four properties of Definition~\ref{def:ibctuple_rel} with \IBCtuple space $\ibcspace_\vB$, \IBCtuple space kernel $\ibcspacekernel_\vB$, and row tuple space parameters $t_{\vB, 1}, \dots, t_{\vB, \sigma}$;
    \item Characteristic block-compatible row vector subspaces $W_{\beta, 1}, \dots, W_{\beta, w}$ as the output of the Row Vector Space Direct Sum Algorithm for $T_{h_\beta}$
    such that $W_\beta = W_{\beta, 1} \oplus \dots \oplus W_{\beta, w}$. % and there exist $u_1, \dots, u_w$ s.t. $T_{h_\beta}^{(u_j)} = W_{\beta, j}$ for every $j \in [w]$. 
\end{enumerate}

\noindent \textbf{Output:} $S < T_k$ for some $k \in [\zeta]$, or $W < W_{\beta, j}$ for some $j\in[w]$, 
or $\ibcextensionspace_{\vB, \vA}$ and $\ibcextensionspacekernel_{\vB, \vA}$.

\noindent \textbf{Algorithm:}

\begin{enumerate}
    \item Let $Y_\vB, Z_\vB$ be the output of the Row Vector Extraction Algorithm for $\vB$, and \[\hspace{-.2cm}\ibcextensionspace_{\vB, \vA} = \langle \{\vE :  \vE \text{ is an extension of } \vC \in \ibcspace_\vB\}\rangle, \ibcextensionspacekernel_{\vB, \vA} = \langle \{\vE : \vE \text{ is an extension of } \vC \in \ibcspacekernel_\vB\}\rangle. \]
    
    \item For $i = 1, \dots, \sigma \cdot \ell$, and $j = 1, \dots, w$, 
    let 
    \[\ibcextensionspace_{\vB, \vA}^\dagger = \langle \{\vC \in \ibcextensionspace_{\vB, \vA} : \rowvecproj_{W_{\beta, 1}, \dots, W_{\beta, w}}(e_i (I - Z_\vB Y_\vB ) \mtupletomatrix(\vC), W_{\beta, j}) = 0\}\rangle,\]
    $\ibcextensionspacekernel_{\vB, \vA}^\dagger = \ibcextensionspacekernel_{\vB, \vA} \cap \ibcextensionspace_{\vB, \vA}^\dagger$, $d = \dim(\ibcextensionspace_{\vB, \vA}^\dagger / W_\beta) - \dim(\ibcextensionspacekernel_{\vB, \vA}^\dagger / W_\beta)$, and run the following steps:
    \begin{enumerate}
        \item If $d = \dim(\ibcspace_\vB) - \dim(\ibcspacekernel_\vB)$, then use $\ibcextensionspace_{\vB, \vA}^\dagger$ to replace $\ibcextensionspace_{\vB, \vA}$, $\ibcextensionspacekernel_{\vB, \vA}^\dagger$ to replace $\ibcextensionspacekernel_{\vB, \vA}$. 
        \item If $0 < d  < \dim(\ibcspace_\vB) - \dim(\ibcspacekernel_\vB)$, then return 
        $\langle \{e_1 \vC : \vC \in \ibcextensionspace_{\vB, \vA}^\dagger\}\rangle$ for $T_{t_{\vB, 1}}$.
        \item If $d = 0$ and        
        $0 < \dim(\ibcextensionspace_{\vB, \vA} / W_\beta) - \dim(\ibcextensionspace_{\vB, \vA}^\dagger / W_\beta) < \dim(W_{\beta, j})$, then return %$\langle \{\va \in T_{h_\beta} : \rowvecproj_{W_{\beta, 1}, \dots, W_{\beta, w}}(\va^{(u)}, W_{\beta, j}) \in W\}\rangle$, where 
        \[\langle \{ \rowvecproj_{W_{\beta,1},\dots,W_{\beta, w}}(e_i (I - Z_\vB Y_\vB ) \mtupletomatrix(\vC), W_{\beta, j}) : \vC \in \ibcextensionspace_{\vB, \vA}\}\rangle.\] %and $u$ is the smallest integer such that  $T_{h_\beta}^{(u)} = W_{\beta, j}$. %\dim(\langle \rowvecproj_{W_{\beta, 1}, \dots, W_{\beta, w}}(\va^{(u)}, W_{\beta, j}) \in W: \va \in T_{h_\beta}\rangle) = \dim(W_\beta, j)$. 

%            \item Return $\langle \{\va \in T_{h_\beta} : \rowvecproj(\va^{(u)}, W_{\beta, j}) \in W\}\rangle$.
            %the linear space spanned by row tuples in $T_{h_\beta}$ whose $\ell'$-th coordinate has a projection on $W_{\beta, j}$ in $W_{\beta, j}'$. 
%        \end{enumerate}
        
%        \item Otherwise, continue.

    \end{enumerate}
    \item For each $i \in [\sigma]$, if $\langle \{ e_i \vC : \vC \in \ibcextensionspace_{\vB, \vA}\}\rangle \neq T_{t_{\vB, i}}$, then return $\langle \{ e_i \vC : \vC \in \ibcextensionspace_{\vB, \vA}\}\rangle$ for  $T_{t_{\vB, i}}$. %refine $T_{u_i}$ by $H_i$. If $H_i = T_{u_i}$, then $H_i' = T_{u_i} \cap span(T_{h_\beta}, \dots, T_\ell)$ directly. 
    \item Return $\ibcextensionspace_{\vB, \vA}$ and $\ibcextensionspacekernel_{\vB, \vA}$.
\end{enumerate}
\vspace{-.3cm}
\end{framed}

In the rest of this subsection, we show the Essential Extension Algorithm outputs a canonical linear space spanned by essential extension space (Lemma~\ref{lemma:essential_extension_ibctuple_compatible} and Lemma~\ref{lemma:extension_canonical}).

\begin{claim}\label{claim:properties_ibctuple_extension}
If for the input $\vA, T_1, \dots, T_\zeta, \vQ, \vB, W_{\beta, 1}, \dots, W_{\beta, w}$ of the Essential Extension Algorithm, 
the output is $\ibcextensionspace_{\vB, \vA}$ and $\ibcextensionspacekernel_{\vB, \vA}$, then $\ibcextensionspace_{\vB, \vA}$ and $\ibcextensionspacekernel_{\vB, \vA}$ satisfy the following conditions:
    \begin{enumerate}
        \item Let $S$ be the set of pairs $(i, j)$ such that 
        \begin{equation}\label{equ: extension_property_basic}\rowvecproj_{W_{\beta, 1}, \dots, W_{\beta, w}}(e_i (I - Z_\vB Y_\vB)\mtupletomatrix(\vC), W_{\beta, j}) = \vecz\end{equation} for every $\vC \in \ibcextensionspace_{\vB, \vA}$.
        An extension $\vC'$ for an \IBCtuple of $\ibcspace_\vB$ in $\vA$ is in $\ibcextensionspace_{\vB, \vA}$ iff Equation (\ref{equ: extension_property_basic}) holds for every $(i, j) \in S$. 
        \item For any $\vC \in \ibcextensionspace_{\vB, \vA}$, $(I - Z_\vB Y_\vB)\mtupletomatrix(\vC)$ is a zero matrix iff $\vC \in \ibcextensionspacekernel_{\vB, \vA}$, and for any $\vC, \vC' \in \ibcextensionspace_{\vB, \vA}$, 
        \[(I - Z_\vB Y_\vB)\mtupletomatrix(\vC) = (I - Z_\vB Y_\vB)\mtupletomatrix(\vC')\] iff $\vC - \vC' \in \ibcextensionspacekernel_{\vB, \vA}$. 
        \item For every \IBCtuple $\vB' \in \ibcspace_{\vB}$, there is an \IBCtuple $\vB''$ strongly correlated to $\vB'$ such that $\vB''$ has an extension $\vC \in \ibcextensionspace_{\vB, \vA}$.  
        \item  (Space property) 
        %A matrix tuple $\vC$ is in $\mathfrak{H}$ iff $\vC$ is an extension of an element in $\mathfrak{H}_\vB$ with parameters $u_1, \dots, u_\sigma$ for $\vB$ such that the projection of $i$-th row of $(I -Z_\vB  Y_\vB)\mtupletomatrix(\vC)$ on $R_j$ is zero for any $(i, j) \in X$. For any $\vC \in \mathfrak{H}$, 
        For any $\vC \in \ibcextensionspace_{\vB, \vA}$, $\vC$ is in $\ibcextensionspacekernel_{\vB, \vA}$ iff Equation (\ref{equ: extension_property_basic}) holds  for any $(i, j) \in S$. 
        \item (Row tuple space property) For every $i \in [\sigma]$, $\langle \{e_i \vC : \vC \in \ibcextensionspace_{\vB, \vA}\}\rangle = T_{t_{\vB, i}}$, and any $\vC \in \ibcextensionspace_{\vB, \vA}$ is in $\ibcextensionspacekernel_{\vB, \vA}$ iff $e_i \vC \in \ibcspacekernel_{t_{\vB, i}}$.
        \item (Dimension property) For any $\vC, \vC'$ in $\ibcextensionspace_{\vB, \vA}$ but not in $\ibcextensionspacekernel_{\vB, \vA}$ and any $\vF \in \ibcextensionspacekernel_{\vB, \vA}$, 
        \[\dim(\rowvectorspace(\vF)) < \dim(\rowvectorspace(\vC)) = \dim(\rowtuplespace(\vC')).\]         

    \end{enumerate}
\end{claim}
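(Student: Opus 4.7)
The plan is to prove the six properties by first unpacking the geometric meaning of the operator $\vC \mapsto (I - Z_\vB Y_\vB)\mtupletomatrix(\vC)$ and then maintaining a loop invariant through Step~2 of the algorithm. As a preliminary step, I would observe that for any extension $\vC$ of some $\vB' \in \ibcspace_\vB$, we have $\vC + W_\beta = \vB'(v_1 + W_\beta, \ldots, v_{m_\beta} + W_\beta)^T$ by Definition~\ref{def:extension}. Combined with Fact~\ref{fact:ibctuple_char_matrix}(3)(c), which says that $(I - Z_\vB Y_\vB)$ annihilates any \IBCtuple right-equivalent to $\vB$, and with the linearity of the operator, each row of $(I - Z_\vB Y_\vB)\mtupletomatrix(\vC)$ lies in $W_\beta$. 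This makes $\rowvecproj_{W_{\beta,1},\ldots,W_{\beta,w}}(\cdot, W_{\beta,j})$ well-defined on these rows, and lets us read off the $W_\beta$-content of $\vC$ not already accounted for by the selected basis positions of $\vB$.

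Properties~1 and 5 follow from the algorithm's structure. For property~1, the final $\ibcextensionspace_{\vB, \vA}$ is obtained by intersecting with exactly those constraints $(i,j)$ accepted in Step~2(a); by definition of $S$, these are exactly the pairs for which the projection vanishes on the whole output space, so an extension $\vC'$ lies in $\ibcextensionspace_{\vB, \vA}$ if and only if Equation~(\ref{equ: extension_property_basic}) holds for all $(i,j) \in S$. For property~5, Step~3 explicitly verifies $\langle \{e_i \vC : \vC \in \ibcextensionspace_{\vB, \vA}\}\rangle = T_{t_{\vB, i}}$; reaching Step~4 guarantees this, and the kernel-side equivalence is then derived from property~4, proved below.

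For properties~3, 4, and 2, I would maintain throughout Step~2 the loop invariants that (i) $\ibcextensionspace_{\vB, \vA}$ consists of extensions of elements of $\ibcspace_\vB$; (ii) $\ibcextensionspacekernel_{\vB, \vA} = \ibcextensionspace_{\vB, \vA} \cap \langle \{\vE : \vE \text{ is an extension of some } \vC \in \ibcspacekernel_\vB\}\rangle$; and (iii) the induced map $\ibcextensionspace_{\vB, \vA}/\ibcextensionspacekernel_{\vB, \vA} \to \ibcspace_\vB/\ibcspacekernel_\vB$ (via reduction modulo $W_\beta$ and reading off the coefficient tuple against the formatting vector) is a surjection, so has rank $\dim(\ibcspace_\vB) - \dim(\ibcspacekernel_\vB)$. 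The acceptance condition $d = \dim(\ibcspace_\vB) - \dim(\ibcspacekernel_\vB)$ in Step~2(a) is precisely what preserves~(iii). Property~3 then follows from~(iii): every $\vB' \in \ibcspace_\vB$ lifts modulo $\ibcextensionspacekernel_{\vB, \vA}$ to some extension, producing (via Fact~\ref{fact:ibctuple_correlated_necessary_condition}) a strongly correlated $\vB''$ with an extension in $\ibcextensionspace_{\vB, \vA}$. Property~4 restates that the accumulated constraints in $S$ exactly cut out $\ibcextensionspacekernel_{\vB, \vA}$ inside $\ibcextensionspace_{\vB, \vA}$, and property~2 is the linear-algebra consequence that $(I - Z_\vB Y_\vB)\mtupletomatrix$ restricted to $\ibcextensionspace_{\vB, \vA}$ is linear with kernel exactly $\ibcextensionspacekernel_{\vB, \vA}$, giving both the vanishing equivalence and its difference version.

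Finally, property~6 follows from property~2 together with the dimension property of $\vB$: if $\vC \in \ibcextensionspace_{\vB, \vA} \setminus \ibcextensionspacekernel_{\vB, \vA}$, its reduction modulo $W_\beta$ is an \IBCtuple of $\vQ$ right-equivalent to $\vB$, whose row vector space has the maximal dimension among elements of $\ibcspace_\vB \setminus \ibcspacekernel_\vB$; any $\vF \in \ibcextensionspacekernel_{\vB, \vA}$ reduces into $\ibcspacekernel_\vB$ and hence has strictly smaller $\vQ$-side dimension, while the $W_\beta$-side row contributions agree (both projecting trivially on the $\vB$-basis positions), so summing yields the strict inequality. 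The main obstacle will be property~2's ``only if'' direction: showing that after every iteration of Step~2, any $\vC \in \ibcextensionspace_{\vB, \vA}$ with $(I - Z_\vB Y_\vB)\mtupletomatrix(\vC) = 0$ already lies in $\ibcextensionspacekernel_{\vB, \vA}$. The point is that if such a $\vC$ existed outside $\ibcextensionspacekernel_{\vB, \vA}$, some pair $(i,j)$ would impose a non-trivial constraint on $\ibcextensionspace_{\vB, \vA}$ while preserving $d$, which Step~2(a) would have accepted, contradicting the termination of the loop.
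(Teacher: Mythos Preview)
Your loop invariants (i)--(iii) match the paper's auxiliary properties (a)--(b), and your treatment of properties~1, 3, 5 is essentially the paper's. The gap is in property~2's ``only if'' direction. You write that if $\vC \in \ibcextensionspace_{\vB,\vA}\setminus\ibcextensionspacekernel_{\vB,\vA}$ had $(I-Z_\vB Y_\vB)\mtupletomatrix(\vC)=0$, then ``some pair $(i,j)$ would impose a non-trivial constraint on $\ibcextensionspace_{\vB,\vA}$ while preserving $d$, which Step~2(a) would have accepted, contradicting the termination of the loop.'' But the loop is a finite for-loop and always terminates; and such a $\vC$ has \emph{every} $(i,j)$-projection zero, so it lies in every $\ibcextensionspace^\dagger$ and is never removed. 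You have not said which $(i,j)$ you mean, nor why $d$ would equal $\dim(\ibcspace_\vB)-\dim(\ibcspacekernel_\vB)$ there. The missing step is the paper's property~(d): for each $(i,j)$ at which Step~2(a) was \emph{not} taken, the $(i,j)$-projection is nonzero for \emph{every} element of $\ibcextensionspace_{\vB,\vA}\setminus\ibcextensionspacekernel_{\vB,\vA}$. This requires the case analysis when none of 2(a)/2(b)/2(c) fires (possible because the algorithm outputs $\ibcextensionspace_{\vB,\vA},\ibcextensionspacekernel_{\vB,\vA}$ rather than a subspace): since 2(a) and 2(b) do not fire one gets $d=0$, and since 2(c) does not fire one gets $\dim(\ibcextensionspace_{\vB,\vA}/W_\beta)-\dim(\ibcextensionspace^\dagger_{\vB,\vA}/W_\beta)=\dim(W_{\beta,j})$; together these force $\ibcextensionspacekernel_{\vB,\vA}=\ibcextensionspacekernel^\dagger_{\vB,\vA}$, so any $\vC\notin\ibcextensionspacekernel_{\vB,\vA}$ has nonzero $(i,j)$-projection. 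Your sketch does not supply this.

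A smaller issue: in property~6 you assert that ``the $W_\beta$-side row contributions agree'' for $\vC\in\ibcextensionspace_{\vB,\vA}\setminus\ibcextensionspacekernel_{\vB,\vA}$ and $\vF\in\ibcextensionspacekernel_{\vB,\vA}$. By property~2, $(I-Z_\vB Y_\vB)\mtupletomatrix(\vF)=0$ while $(I-Z_\vB Y_\vB)\mtupletomatrix(\vC)\neq 0$, so these contributions do \emph{not} agree. The inequality instead comes from the fact that $\mtupletomatrix(\vF)=Z_\vB Y_\vB\mtupletomatrix(\vF)$ bounds $\dim(\rowvectorspace(\vF))$ by the number of $Y_\vB$-selected rows, which is $\dim(\rowvectorspace(\vB))\le\dim(\rowvectorspace(\vC))$. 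Finally, note that property~4 as stated in the paper reads ``for any $(i,j)\in S$,'' which is trivially satisfied by every $\vC\in\ibcextensionspace_{\vB,\vA}$; the intended condition is over all $(i,j)$ (equivalently $(i,j)\notin S$), making property~4 a restatement of property~2 rather than something about the accumulated constraints cutting out the kernel.
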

\begin{proof}
We first prove the following properties:
\begin{enumerate}
        \item[(a)] Every matrix tuple in $\ibcextensionspace_{\vB, \vA}$ is an extension of some matrix tuple in $\ibcspace_\vB$ in $\vA$.
        A matrix tuple $\vC \in \ibcextensionspace_{\vB, \vA}$ is an extension of some matrix tuple of $\ibcspacekernel_\vB$ in $\vA$ iff $\vC$ is in $\ibcextensionspacekernel_{\vB, \vA}$.
        %All the elements in $\mathfrak{H}$ that are extensions of some elements in $\mathfrak{G}_\vB$ form $\mathfrak{G}$.
        \item[(b)] $\dim(\ibcextensionspace_{\vB, \vA}) - \dim(\ibcextensionspacekernel_{\vB, \vA}) = \dim(\ibcspace_\vB) - \dim(\ibcspacekernel_\vB)$.
        \item[(c)] For any $\vC \in \ibcextensionspacekernel_{\vB, \vA}$, 
        $(I - Z_\vB Y_\vB ) \mtupletomatrix(\vC)$ is a zero matrix tuple. 
        %the projection of $ (I - Z_\vB Y_\vB ) \mtupletomatrix(\vC)$'s $i$-th vector on $R_j$ is zero for any $i$ and $j$. 
        \item[(d)] Let $\vC$ and  $\vC'$ be two matrix tuples in $\ibcextensionspace_{\vB, \vA}$ but not in $\ibcextensionspacekernel_{\vB, \vA}$, then for any $i \in [\sigma], j \in [w]$
        \[\rowvecproj(e_i(I - Z_\vB Y_\vB ) \mtupletomatrix(\vC), W_{\beta, j}) \neq 0\] iff
        \[\rowvecproj(e_i(I - Z_\vB Y_\vB ) \mtupletomatrix(\vC'), W_{\beta, j}) \neq 0.\] 
        %the projection of $i$-th vector of $(I - Z_\vB Y_\vB)\mtupletomatrix(\vC)$ on $R_j$ is zero if and only if the projection of $i$-th vector of $(I - Z_\vB Y_\vB) \mtupletomatrix(\vC')$ on $R_j$ is zero. 
\end{enumerate}
    For the property (a), let $\vB', \vB''$ be two arbitrary matrix tuples in $\ibcspace_{\vB}$, and $\vC', \vC''$ be extensions of $\vB'$ and $\vB''$ respectively. By the linearity of $\ibcspace_{\vB}$, $\vB + \vB'$ is also in $\ibcspace_{\vB}$. By Definition~\ref{def:extension}, $\vC' + \vC''$ is an extension of $\vB' + \vB''$.
    By the fact that $\ibcspacekernel_\vB$ is a subspace of $\ibcspace_\vB$, the property (a) holds for $\ibcextensionspace_{\vB, \vA}$ and $\ibcextensionspacekernel_{\vB, \vA}$ after Step 1 of he Essential Extension Algorithm. 
    By induction on Step 2 of the Essential Extension Algorithm, 
    the property (a) also holds for the algorithm's output. % is obtained by the Essential Extension Algorithm and the fact that $\ibcspacekernel_\vB$ is a subspace of $\ibcspace_\vB$.

    For the property (b), let $U$ be any of $\ibcextensionspace_{\vB, \vA}, \ibcextensionspacekernel_{\vB, \vA}, \ibcextensionspace_{\vB,  \vA}^\dagger, \ibcextensionspacekernel_{\vB, \vA}^\dagger$ at any step of the algorithm, 
    if $\vC, \vC' \in U$ are the extensions of the same $\vB' \in \ibcspace_\vB$, 
    then for any $\vC'' \in U$, $\vC'' + (\vC - \vC')$ is also in $U$ by the linearity. 
    Since $\ibcspace_{\vB}$ is a subspace of $\ibcspacekernel_{\vB}$, after Step 1 of the algorithm, the property (b) holds. 
    For Step 2, if the property (b) holds before the iteration of Step 2 for some $i$ and $j$, then the property (b) also holds after the iteration of Step 2 for $i$ and $j$. 
    By induction, the property (b) holds after Step 2, and thus the property (b) holds for the output of the algorithm. 

    For the property (c), we analyze Step 2 of the Essential Extension Algorithm for some $i$ and $j$.  
    Because the algorithm returns linear spaces of matrix tuples, 
    it can be neither the case of 2(b) nor 2(c). 
    If it is the case of 2(a) for some $i$ and $j$,     
    then the final $\ibcextensionspacekernel_{\vB, \vA}$ as the output of the algorithm is a subspace of $\ibcextensionspacekernel_{\vB, \vA}^\dagger$ constructed for the $i$ and $j$, and thus 
    \begin{equation}\label{equ: extension_property_basic_0}\rowvecproj(e_i (I - Z_\vB Y_\vB) \mtupletomatrix(\vC), W_{\beta, j}) = 0\end{equation} for every $\vC \in \ibcextensionspacekernel_{\vB, \vA}$ after Step 2 for the $i$ and $j$ by the definition of $\ibcextensionspacekernel_{\vB, \vA}^\dagger$. 
    %then the $i$-th row vector of $(I - Z_\vB Y_\vB) \mtupletomatrix(\vC)$ for every $\vC \in \mathfrak{H}$ has the zero projection on $R_j$. 
    
    If none of the three cases of Step 2 for $i$ and $j$ applies, then either 
    \begin{equation}\label{equ: extension_property_basic_1}\dim(\ibcextensionspace_{\vB, \vA} / W_\beta) = \dim(\ibcextensionspace_{\vB, \vA}^\dagger / W_\beta)\end{equation}
    or \begin{equation}\label{equ: extension_property_basic_2}\dim(\ibcextensionspace_{\vB, \vA} / W_\beta) - \dim(\ibcextensionspace_{\vB, \vA}^\dagger / W_\beta) = \dim(W_{\beta, j})\end{equation}
    holds for $\ibcextensionspace_{\vB, \vA}, \ibcextensionspacekernel_{\vB, \vA}, \ibcextensionspace_{\vB,  \vA}^\dagger, \ibcextensionspacekernel_{\vB, \vA}^\dagger$ after Step 2 for the $i$ and $j$, because we always have \[0 \leq \dim(\ibcextensionspace_{\vB, \vA} / W_\beta) - \dim(\ibcextensionspace_{\vB, \vA}^\dagger / W_\beta) \leq \dim(W_{\beta, j})\] by the definition of $\ibcextensionspace_{\vB, \vA}^\dagger$. On the other hand, 
    Equation (\ref{equ: extension_property_basic_1}) is impossible because otherwise we have \[\dim(\ibcextensionspace_{\vB, \vA} / W_\beta) = \dim(\ibcextensionspacekernel_{\vB, \vA}^\dagger / W_\beta),\]
    which contradicts the property (b) at the end of the algorithm. Hence, Equation (\ref{equ: extension_property_basic_2}) holds if none of the three cases of Step 2 for $i$ and $j$ applies. 
    Consequently, $\ibcextensionspacekernel_{\vB, \vA} = \ibcextensionspacekernel_{\vB, \vA}^\dagger$ because otherwise $\dim(\ibcextensionspace'_{\vB, \vA} / W_\beta)$ cannot be equal to $\dim(\ibcextensionspacekernel'_{\vB, \vA} / W_\beta)$. Thus, 
    Equation (\ref{equ: extension_property_basic_0}) holds for every $\vC \in \ibcextensionspacekernel_{\vB, \vA}$ after Step 2 for $i$ and $j$. 
    By induction on $i$ and $j$, the property (c) holds.

    For the property (d), for each $i$ and $j$, if after Step 2 for $i$ and $j$, $\ibcextensionspace_{\vB, \vA}$ has some matrix tuple $\vC$ such that the projection of $e_i (I - Z_\vB Y_\vB) \mtupletomatrix(\vC)$ on $W_{\beta, j}$ with respect to $W_{\beta, 1}, \dots, W_{\beta, w}$ is non-zero, then it must the case that none of Step 2(a), 2(b), and 2(c) applies. 
    For this case, because $\ibcextensionspacekernel_{\vB, \vA} = \ibcextensionspacekernel_{\vB, \vA}'$ and $\dim(\ibcextensionspace_{\vB, \vA}' / W_\beta) = \dim(\ibcextensionspacekernel_{\vB, \vA} / W_\beta)$ after Step 2 for the $i$ and $j$, 
    any $\vC \in \ibcextensionspace_{\vB, \vA}$ such that the projection of $e_i (I - Z_\vB Y_\vB) \mtupletomatrix(\vC)$ on $W_{\beta, j}$ with respect to $W_{\beta, 1}, \dots, W_{\beta, w}$ is non-zero if and only if $\vC$ is in $\ibcextensionspace_{\vB, \vA}$ but not in $\ibcextensionspacekernel_{\vB, \vA}$. 
    Thus, the property (d) holds.

    %all the matrix tuples in $\ibcextensionspacekernel_{\vB, \vA}$ for that iteration of Step 2 are extensions of $\ibcspacekernel_\vB$, and all the elements in $\ibcextensionspace_{\vB, \vA}$ but not in $\ibcextensionspacekernel_{\vB, \vA}$ are extensions of $\ibcspace_{\vB}$ but not $\ibcspacekernel_{\vB}$. Hence, if Step 2(e) is taken, then for any $\vC \in \mathfrak{H}$, $(I - Z_\vB Y_\vB) \mtupletomatrix(\vC)$'s $i$-th row vector has a non-zero projection on $R_j$ if and only if $\vC$ is in $\mathfrak{H}$ but not in $\mathfrak{H}'$. By the definition of $\mathfrak{G}$, we have this property.

    Now, we prove the properties of the claim.
    The first and second property of the claim are obtained by the property (c) and (d). The third property is obtained by the property (b). 
    The fourth property is obtained by the property (c) and (d). 
    The fifth property is obtained by Step 3 of the algorithm and the fact that after Step 1 of the algorithm, any $\vC \in \ibcextensionspace_{\vB, \vA}$ is in $\ibcextensionspacekernel_{\vB, \vA}$ iff $e_i \vC$ is in $K_{t_{\vB, i}}$. 
    The last property is obtained by the first and the second properties and the fact that $\dim(\rowvectorspace(\vB')) < \dim(\rowvectorspace(\vB''))$ for any $\vB' \in \ibcspacekernel_\vB$ and any $\vB'' \in \ibcspace_\vB$ but not in $\ibcspacekernel_\vB$ by the dimension property of $\vB$. 
\end{proof}

\begin{claim}\label{claim:essential_property_basic}
If for the input $\vA, T_1, \dots, T_\zeta, \vQ, \vB, W_{\beta, 1}, \dots, W_{\beta, w}$ of the Essential Extension Algorithm with $(v_1 + W_\beta, \dots, v_{m_\beta} + W_\beta)$ as the formatting matrix tuple of $\vQ$, 
the output is $\ibcextensionspace_{\vB, \vA}$ and $\ibcextensionspacekernel_{\vB, \vA}$, then $\ibcextensionspace_{\vB, \vA}$ and $\ibcextensionspacekernel_{\vB, \vA}$ has the following properties for any block-diagonalization $\vD  = \diag(\vD_1, \dots, \vD_d)$ of $\vA$, and arbitrary invertible marices $L$ and $R$ such that $\vD = L \vA R^{-1}$:
\begin{enumerate}
    \item Let $\vB_0 \in \ibcspace_\vB$ be an \IBCtuple of $\vQ$ with $\sigma$ rows such that there is an integer $i\in[d]$ satisfying \[e_j \vB_0 (v_1 + W_\beta, \dots, v_{m_\beta} + W_\beta)^T \in \langle \{ \va + W_\beta: \va \in \rowtuplespace(\vA_{i, \vD, L}) \} \rangle\]
    for every $j \in [\sigma]$. 
    Then there is an \IBCtuple extension $\vC  \in \ibcextensionspace_{\vB, \vA}$ for an \IBCtuple $\vB_1 \in \ibcspace_{\vB}$ strongly correlated to $\vB_0$ such that $\rowtuplespace(\vC) \leq \rowtuplespace(\vA_{i, \vD, L})$. 
    \item 
    Let $\vB_0 \in \ibcspace_\vB$ be an \IBCtuple of $\vQ$, and $\vC_0$ be an \IBCtuple extension of $\vB_0$ in $\ibcextensionspace_{\vB, \vA}$ but not in $\ibcextensionspacekernel_{\vB, \vA}$ such that $\rowtuplespace(\vC_0) \leq \rowtuplespace(\vA_{i, \vD, L})$ for some $i \in [d]$. 
    Then for any \IBCtuple $\vB_1 \in \ibcspace_{\vB}$ that is correlated to $\vB_0$, there is an \IBCtuple $\vB_2 \in \ibcspace_{\vB}$ strongly correlated to $\vB_1$ and a $\vC_2 \in \ibcextensionspace_{\vB, \vA}$ as an \IBCtuple extension of $\vB_2$ such that $\rowvectorspace(\vC_2) \leq \rowtuplespace(\vA_{i, \vD, L})$. 
\end{enumerate}

\end{claim}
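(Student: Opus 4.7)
The overall strategy exploits two facts established earlier in this section: (i) the output extension spaces $\ibcextensionspace_{\vB, \vA}$ and $\ibcextensionspacekernel_{\vB, \vA}$ of the Essential Extension Algorithm are characteristic and block-compatible, because Step~1 produces such spaces by Claim~\ref{claim:ibctuple_compatible_exntesion_full} and each iteration of Step~2(a) is a refinement of the form handled by Claim~\ref{claim:ibctuple_invariant_extension_refinement}; and (ii) by Claim~\ref{claim:correspondance_ibctuple_diag_quotient}, the block-diagonalization $\vD = \diag(\vD_1,\dots,\vD_d)$ of $\vA$ induces a compatible block-diagonalization $\vQ_{\vA,\vD,L} = \diag(\vQ_1,\dots,\vQ_d)$ of $\vQ$, so that projection $\proj_{\vD, L}(\cdot, i)$ on $\vA$ corresponds under the formatting vector to the projection $\proj_{\vQ_{\vA,\vD,L}, L_\vQ}(\cdot, i)$ on $\vQ$.

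For property~1, I first invoke Claim~\ref{claim:properties_ibctuple_extension}(3) to obtain some $\vB' \in \ibcspace_\vB$ strongly correlated to $\vB_0$ with an extension $\vC \in \ibcextensionspace_{\vB, \vA}$. Setting $\vC_1 := \proj_{\vD, L}(\vC, i)$, block-compatibility gives $\vC_1 \in \ibcextensionspace_{\vB, \vA}$, and by construction $\rowtuplespace(\vC_1) \leq \rowtuplespace(\vA_{i, \vD, L})$. The $\vC_1$ is an extension of $\vB_1 := \proj_{\vQ_{\vA,\vD,L}, L_\vQ}(\vB', i) \in \ibcspace_\vB$ (by the block-compatible property of $\vB$). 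The hypothesis on $\vB_0$ means its row tuples lie in $\rowtuplespace(\vQ_i)$, so $\proj_{\vQ_{\vA,\vD,L}, L_\vQ}(\vB_0, i) = \vB_0$; hence
\[
\vB_1 - \vB_0 \;=\; \proj_{\vQ_{\vA,\vD,L}, L_\vQ}(\vB' - \vB_0, i),
\]
and since $\vB' - \vB_0 \in \ibcspacekernel_\vB$ by strong correlation, block-compatibility of $\ibcspacekernel_\vB$ puts $\vB_1 - \vB_0$ in $\ibcspacekernel_\vB$. Combined with the space property, this shows $\vB_1$ is right-equivalent to $\vB$ and strongly correlated to $\vB_0$, as desired.

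For property~2, the idea is to reduce to property~1 after pre-multiplying by an appropriate invertible matrix. By the definition of ``correlated,'' there is an invertible $L'$ such that $L'\vB_0$ is strongly correlated to $\vB_1$; in particular $L'\vB_0 \in \ibcspace_\vB$. The hypothesis that $\rowtuplespace(\vC_0) \leq \rowtuplespace(\vA_{i, \vD, L})$ means the row tuples of $\vB_0$ (under the formatting vector) lie in $\rowtuplespace(\vQ_i)$; since each row of $L'\vB_0$ is a linear combination of rows of $\vB_0$, the same holds for $L'\vB_0$. I can therefore apply property~1 to $L'\vB_0$ to get $\vB_2 \in \ibcspace_\vB$ strongly correlated to $L'\vB_0$ with an extension $\vC_2 \in \ibcextensionspace_{\vB, \vA}$ satisfying $\rowtuplespace(\vC_2) \leq \rowtuplespace(\vA_{i, \vD, L})$. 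Writing $\vB_2 - \vB_1 = (\vB_2 - L'\vB_0) + (L'\vB_0 - \vB_1)$ expresses it as a sum of two elements of the linear subspace $\ibcspacekernel_\vB$, so $\vB_2$ is strongly correlated to $\vB_1$.

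The main technical obstacle is ensuring that projection on a block of $\vA$ corresponds correctly to projection on the matching block of $\vQ$ under the formatting map --- i.e., verifying that for an extension $\vC$ of $\vB'$, the row tuple $\proj_{\vD, L}(\vC, i)$ is an extension of $\proj_{\vQ_{\vA,\vD,L}, L_\vQ}(\vB', i)$. This follows by chasing the definitions through Claim~\ref{claim:ibctuple_compatible_quotient_basic} and Claim~\ref{claim:correspondance_ibctuple_diag_quotient}, since the quotient operation $+W_\beta$ commutes with the block decomposition. Once this compatibility is in place, both properties are essentially applications of the block-compatible property together with linearity of $\ibcspacekernel_\vB$.
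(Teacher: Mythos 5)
Your proposal is correct and follows essentially the same strategy as the paper: use Claim~\ref{claim:properties_ibctuple_extension} to obtain a strongly correlated IBC-tuple with an extension, project onto block $i$ via the block-compatibility of $\ibcextensionspace_{\vB, \vA}$ and of $\ibcspacekernel_\vB$ (with the correspondence between block decompositions of $\vA$ and of $\vQ$ coming from Claims~\ref{claim:ibctuple_compatible_quotient_basic} and \ref{claim:correspondance_ibctuple_diag_quotient}), and reduce property~2 to property~1 via the invertible matrix from the definition of ``correlated.'' Your version is actually slightly cleaner than the paper's — in property~1 you project only onto block $i$ rather than all blocks, and in property~2 you track ``strongly correlated to $\vB_1$'' explicitly (via $\vB_2 - \vB_1 = (\vB_2 - L'\vB_0) + (L'\vB_0 - \vB_1)$) where the paper's write-up only states ``correlated to $\vB_0$'' at the end.
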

\begin{proof}
For the first property, by the third property Claim~\ref{claim:properties_ibctuple_extension}, there exists an \IBCtuple $\vB'$ strongly correlated to $\vB_0$ such that $\vB'$ has an extension $\vC'$ in $\ibcextensionspace_{\vB, \vA}$ but not in $\ibcextensionspacekernel_{\vB, \vA}$. 
Let $\vC_k'$ be $\proj_{\vD, L}(\vC', k)$ for all the $k \in [d]$. 
By Claim~\ref{claim:ibctuple_compatible_exntesion_full} and Claim~\ref{claim:ibctuple_invariant_extension_refinement}, $\vC_k' \in \ibcextensionspace_{\vB, \vA}$ for all the $k \in [d]$. 
By the algorithm, there are $\vB_1',\dots,\vB_d' \in \ibcspace_{\vB}$ such that $\vC_k'$ is an extension of $\vB_k'$ for all the $k \in [d]$. 

Let $\vQ_{\vA, \vD, L}$ be the matrix tuple defined in Claim~\ref{claim:correspondance_ibctuple_diag_quotient}. 
By Claim~\ref{claim:quotient_marix_tuple_basic}, there exist invertible matrices $L_\vQ$ and $R_\vQ$ such that $\vQ_{\vA, \vD, L} = L_\vQ \vQ R_\vQ$. 
By the construction of $\vQ_{\vA, \vD, L}$, 
$\rowtuplespace(\vB_0)$ is a subspace of $\rowtuplespace(\vQ_{i, \vQ_{\vA, \vD, L}, L_\vQ})$, and 
$\rowtuplespace(\vB_k')$ is a subspace of $\rowtuplespace(\vQ_{k, \vQ_{\vA, \vD, L}, L_\vQ})$ for every $k \in [d]$. 
Since $\vB_0 - \vB' = \vB_0 - (\vB_1' + \dots + \vB_d')$ is in $\ibcspacekernel_{\vB}$, 
$\vB_k'$ is in $\ibcspacekernel_{\vB'}$ for all the $k\in[d]$ except $k = i$.
Hence, $\vC_i'$ is an extension of an \IBCtuple strongly correlated to $\vB_0$ with all row tuples in $\rowtuplespace(\vA_{i, \vD, L})$.
Thus the first property holds.

Now we prove the second property. 
We observe that if a matrix tuple $\vC^\dagger$ that is an extension of an \IBCtuple $\vB^\dagger$ of $\vQ$ in $\vA$ has non-zero projection only on $k'$-th block of $\vD$, then $\vB^\dagger (v_1 + W_\beta, \dots, v_{m_\beta}+ W_\beta)^T$ has every row tuple in $\rowtuplespace(\vA_{k', \vD, L}) / W_\beta$. 

Because $\ibcextensionspacekernel_{\vB, \vA}$ contains all the matrix tuples in $\ibcextensionspace_{\vB, \vA}$ that are extensions of $\ibcspacekernel_{\vB}$, and $\dim(\ibcextensionspace_{\vB, \vA}) - \dim(\ibcextensionspacekernel_{\vB, \vA}) = \dim(\ibcspace_\vB) - \dim(\ibcspacekernel_\vB)$,
there is always a $\vB'' \in \ibcspace_{\vB}$ strongly correlated to $\vB_1$ (and thus correlated to $\vB_0$) such that there is matrix tuple $\vC'' \in \ibcextensionspace_{\vB, \vA}$ that is an extension of $\vB''$. Since 
$\vB''$ and $\vB_0$ are correlated, by Definition~\ref{def:ibctuple}, there exist an invertible matrix $L''$ such that $L''\vB_0$ is in $\ibcspace_\vB$ and $L'' \vB_0 - \vB''$ is in $\ibcspacekernel_\vB$.

Because $\rowtuplespace(\vC_0) \leq \rowtuplespace(\vA_{i, \vD, L})$, $\vB_0 (v_1 + W_\beta, \dots, v_{m_\beta}+ W_\beta)^T$ has every row tuple in $\rowtuplespace(\vA_{i, \vD, L}) / W_\beta$, and thus $L'' \vB_0 (v_1 + W_\beta, \dots, v_{m'}+ W_\beta)^T$ has every row tuple in $\rowtuplespace(\vA_{i, \vD, L}) / W_\beta$.
By the first property, $\proj_{\vD, L}(\vC'', i)$ is an extension of an \IBCtuple correlated to $\vB_0$ with all row tuples in $\rowtuplespace(\vA_{i, \vD, L})$. 
\end{proof}

\begin{lemma}\label{lemma:essential_extension_ibctuple_compatible}
    For the input $\vA \in \M(n \times m, \F_q)^\ell, T_1, \dots, T_\zeta, \vQ, \vB, W_{\beta, 1}, \dots, W_{\beta, w}$ of the Essential Extension Algorithm, 
    the running time of the algorithm is $\poly(n, m, \ell, \log q)$, and 
the output satisfies the following conditions:
\begin{enumerate}
    \item If the output is $S < T_k$ for some $k\in[\zeta]$, then 
    $S$ is a characteristic block-compatible row tuple subspace such that $\dim(S / W_\gamma) > 0$, where $\gamma$ is the integer such that $h_{\gamma-1} \leq k < h_\gamma$.
    \item If the output is $W < W_{\beta, j}$ for some $j\in[w]$, then 
    $W$ is a characteristic block-compatible row vector space such that $\dim(W) > 0$. 
    \item If the output is $\ibcextensionspace_{\vB, \vA}$ and $\ibcextensionspacekernel_{\vB, \vA}$, then $\ibcextensionspace_{\vB, \vA}$ and $\ibcextensionspacekernel_{\vB, \vA}$ are characteristic block-compatible extension space of $\ibcspace_\vB$ and $\ibcspacekernel_\vB$ respectively such that for every $\vC \in \ibcextensionspace_{\vB, \vA}$ such that $\vC$ is an extension of an \IBCtuple in $\ibcspace_\vB$, then $\vC$ is an essential extension. 
\end{enumerate}
\end{lemma}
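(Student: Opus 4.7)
The plan is to analyze the algorithm step by step and invoke the structural claims we already have (Claim~\ref{claim:ibctuple_compatible_exntesion_full}, Claim~\ref{claim:ibctuple_invariant_extension_refinement}, Claim~\ref{claim:properties_ibctuple_extension}, and Claim~\ref{claim:essential_property_basic}) to handle the three possible return branches. The running-time bound is routine: Step~1 calls the Row Vector Extraction Algorithm once; Step~2 loops over at most $\sigma\ell\cdot w = O(n m\ell)$ pairs, each iteration requiring the computation of a subspace of matrix tuples (a polynomial-size linear algebra task over $\F_q$), a dimension comparison, and possibly a projection; Step~3 does $\sigma$ further subspace comparisons. Each primitive operation fits in $\poly(n,m,\ell,\log q)$ time, so the total running time satisfies the claimed bound.

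Next I handle the two ``early-return'' branches. When the algorithm returns $S < T_k$ in Step~2(b) or Step~3, $S$ is of the form $\langle \{e_i \vC : \vC\in \ibcextensionspace_{\vB,\vA}^\dagger\}\rangle$ (Step~2(b)) or $\langle \{e_i \vC : \vC\in \ibcextensionspace_{\vB,\vA}\}\rangle$ (Step~3). At the moment of the return, $\ibcextensionspace_{\vB,\vA}$ (resp.\ $\ibcextensionspace_{\vB,\vA}^\dagger$) is, by induction over the iterations of Step~2 using Claim~\ref{claim:ibctuple_invariant_extension_refinement}, a characteristic block-compatible extension space (the initial step is Claim~\ref{claim:ibctuple_compatible_exntesion_full}). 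Taking a fixed $i$-th row of extensions is a special case of the constructions in Fact~\ref{fact:ibctuple_invariant_basic}, so $S$ is a characteristic block-compatible row-tuple subspace of $T_k$. The strict inclusion $S<T_k$ and the condition $\dim(S/W_\gamma)>0$ follow directly from the triggering inequalities: in Step~2(b) we have $0<d<\dim(\ibcspace_\vB)-\dim(\ibcspacekernel_\vB)$, which quantitatively forces both that $S$ is a proper subspace (by the row-tuple space property for $\vB$) and that the quotient by $W_\gamma$ is nonzero; in Step~3 we use that the row-tuple space property for $\vB$ (the fifth property of Claim~\ref{claim:properties_ibctuple_extension} for the next iteration was about to fail), giving a proper characteristic block-compatible subspace with nontrivial $W_\gamma$-quotient. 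The case of returning $W<W_{\beta,j}$ in Step~2(c) is analogous: by the first property of Claim~\ref{claim:ibctuple_invariant_extension_refinement}, the linear span $\langle\{\rowvecproj_{W_{\beta,1},\ldots,W_{\beta,w}}(e_i(I-Z_\vB Y_\vB)\mtupletomatrix(\vC),W_{\beta,j}):\vC\in\ibcextensionspace_{\vB,\vA}\}\rangle$ is a characteristic block-compatible row vector subspace of $W_{\beta,j}$, and the triggering inequality $0<\dim(\ibcextensionspace_{\vB,\vA}/W_\beta)-\dim(\ibcextensionspace_{\vB,\vA}^\dagger/W_\beta)<\dim(W_{\beta,j})$ makes the inclusion strict and the dimension positive.

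The main obstacle is the last branch—proving that when the algorithm returns $\ibcextensionspace_{\vB,\vA}$ and $\ibcextensionspacekernel_{\vB,\vA}$, every $\vC\in\ibcextensionspace_{\vB,\vA}$ that extends some $\vB_0\in\ibcspace_\vB$ is an essential extension in the sense of Definition~\ref{def:essential_extension}. The characteristic and block-compatible properties follow, as above, from Claim~\ref{claim:ibctuple_compatible_exntesion_full} together with iterated use of Claim~\ref{claim:ibctuple_invariant_extension_refinement} over Step~2(a). For essentiality, fix a block-diagonalization $\vD=\diag(\vD_1,\ldots,\vD_d)$ of $\vA$ with $\vD=L\vA R^{-1}$ and an index $i$ such that some extension $\vF$ of $\vB_0$ satisfies $\rowtuplespace(\vF)\leq \rowtuplespace(\vA_{i,\vD,L})$. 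I will produce an extension $\vC_1\in\ibcextensionspace_{\vB,\vA}$ of an \IBCtuple $\vB_1$ that is strongly correlated to $\vB_0$ and satisfies $\rowtuplespace(\vC_1)\leq \rowtuplespace(\vA_{i,\vD,L})$; this exists by the first part of Claim~\ref{claim:essential_property_basic} applied with $\vB_0$ (note $\vB_0$ admits an extension whose rows lie in block $i$, which is the hypothesis of that part). Then $\vC-\vC_1\in\ibcextensionspace_{\vB,\vA}$ is an extension of $\vB_0-\vB_1\in\ibcspacekernel_\vB$, hence lies in $\ibcextensionspacekernel_{\vB,\vA}$ by property (a) in the proof of Claim~\ref{claim:properties_ibctuple_extension}. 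By the second property of Claim~\ref{claim:properties_ibctuple_extension}, $(I-Z_\vB Y_\vB)\mtupletomatrix(\vC)=(I-Z_\vB Y_\vB)\mtupletomatrix(\vC_1)$.

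To convert this identity into the desired containment $\rowvectorspace(\vC)\cap W_\beta\leq \rowvectorspace(\vA_{i,\vD,L})\cap W_\beta$, the key observation is that the matrix $(I-Z_\vB Y_\vB)\mtupletomatrix(\vC)$ isolates the ``new'' rows of the lift of $\vC$ over $\vB$: its row span equals the contribution coming from $W_\beta$ together with the row vectors not already determined by $\vB$'s own row relations. Since $\vC$ is an extension of $\vB_0$, the rows of $\mtupletomatrix(\vC)$ modulo $W_\beta$ are fixed by $\vB_0$, so the difference $\mtupletomatrix(\vC)-\mtupletomatrix(\vC_1)$ has all rows in $W_\beta$; applying $(I-Z_\vB Y_\vB)$ kills the $\vB$-reconstructable part, and the equation shows the $W_\beta$-part of $\vC$ coincides, after projection, with the $W_\beta$-part of $\vC_1$. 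Because $\rowvectorspace(\vC_1)\leq \rowvectorspace(\vA_{i,\vD,L})$, the same holds after intersection with $W_\beta$. It remains to upgrade this projection-level equality to genuine containment of $\rowvectorspace(\vC)\cap W_\beta$ in $\rowvectorspace(\vA_{i,\vD,L})\cap W_\beta$; I plan to do this by combining Case~2(a) (which forced every coordinate of every $\vC\in\ibcextensionspace_{\vB,\vA}$ to have zero projection onto any $W_{\beta,j}$ for which such a restriction was possible) with block-compatibility of $W_{\beta,j}$ (Claim~\ref{claim:row_vector_space_direct_sum}) and with Fact~\ref{fact:ibctuple_invariant_fund}, which decomposes $W_{\beta,j}$ as a direct sum over the blocks of $\vD$. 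A standard extraction argument then shows that the only $W_{\beta,j}$-components surviving in $\vC$ are those coming from block $i$, giving the required containment and completing the proof.
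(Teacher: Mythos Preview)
Your overall structure matches the paper's proof: the running-time bound and the two early-return branches are handled the same way (via Claim~\ref{claim:ibctuple_compatible_exntesion_full} for the initialization and iterated use of Claim~\ref{claim:ibctuple_invariant_extension_refinement} through Step~2), and for essentiality you correctly invoke Claim~\ref{claim:essential_property_basic} to produce $\vC_1\in\ibcextensionspace_{\vB,\vA}$ with $\rowtuplespace(\vC_1)\le\rowtuplespace(\vA_{i,\vD,L})$ extending some $\vB_1$ strongly correlated to $\vB_0$, and then use Claim~\ref{claim:properties_ibctuple_extension} to get $(I-Z_\vB Y_\vB)\mtupletomatrix(\vC)=(I-Z_\vB Y_\vB)\mtupletomatrix(\vC_1)$.

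The gap is in your final paragraph. Your proposed ``upgrade'' via Case~2(a), block-compatibility of the $W_{\beta,j}$, and Fact~\ref{fact:ibctuple_invariant_fund} is not the right mechanism and does not obviously yield the containment: those tools control which $(i,j)$-projections of $(I-Z_\vB Y_\vB)\mtupletomatrix(\vC)$ vanish uniformly over $\ibcextensionspace_{\vB,\vA}$, but they say nothing about how $\rowvectorspace(\vC)\cap W_\beta$ relates to a particular block of $\vD$ for a \emph{fixed} $\vC$. What you are missing is the direct linear-algebra identity
\[
\rowvectorspace(\vC)\cap W_\beta \;=\; \rowvectorspace\bigl((I-Z_\vB Y_\vB)\mtupletomatrix(\vC)\bigr),
\]
valid for every $\vC\in\ibcextensionspace_{\vB,\vA}$ that extends an \IBCtuple $\vB_0\in\ibcspace_\vB\setminus\ibcspacekernel_\vB$. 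One inclusion is immediate (rows of $(I-Z_\vB Y_\vB)\mtupletomatrix(\vC)$ lie in $\rowvectorspace(\vC)$ and in $W_\beta$, the latter because $(I-Z_\vB Y_\vB)\mtupletomatrix(\vB_0)=0$ by Fact~\ref{fact:ibctuple_char_matrix}). For the other, take $v=u\cdot\mtupletomatrix(\vC)\in W_\beta$ and split $v=(uZ_\vB)\cdot Y_\vB\mtupletomatrix(\vC)+u\cdot(I-Z_\vB Y_\vB)\mtupletomatrix(\vC)$. The second summand is already in $W_\beta$, so the first is too; but the rows of $Y_\vB\mtupletomatrix(\vC)$ are linearly independent modulo $W_\beta$ (since $\vB_0$ is right-equivalent to $\vB$, the corresponding rows of $\mtupletomatrix(\vB_0)$ are independent, and the formatting vector is a basis of $\rowvectorspace(\vA)/W_\beta$), forcing $uZ_\vB=0$ and hence $v\in\rowvectorspace((I-Z_\vB Y_\vB)\mtupletomatrix(\vC))$.

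Once you have this identity, the matrix equality you already derived gives $\rowvectorspace(\vC)\cap W_\beta=\rowvectorspace(\vC_1)\cap W_\beta\le\rowvectorspace(\vA_{i,\vD,L})\cap W_\beta$, which is exactly the essential-extension condition. This is what the paper means by its one-line appeal to Claim~\ref{claim:properties_ibctuple_extension}; replace your last paragraph with this argument and the proof is complete.
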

\begin{proof}
    The running time of the algorithm is obtained by Fact~\ref{fact:ibctuple_char_matrix}, the fact that $\zeta \leq n$ by Definition~\ref{def:row_tuple_decomposition}, $w \leq m$ as $W_\beta = W{\beta, 1} \oplus \dots \oplus W_{\beta, w}$, and the observation of $\dim(\ibcspace_{\vB}), \dim(\ibcspacekernel_\vB) < \sigma \cdot n$ by Definition~\ref{def:ibctuple}. 

    In the rest of this proof, we prove the properties for the output of the algorithm. 
    We first show that $\ibcextensionspace_{\vB, \vA}$, $\ibcextensionspacekernel_{\vB, \vA}$, $\ibcextensionspace_{\vB, \vA}^\dagger$, and $\ibcextensionspacekernel_{\vB, \vA}^\dagger$ obtained in every step of the algorithm are characteristic and block-compatible. 
    By Claim~\ref{claim:ibctuple_compatible_exntesion_full},  $\ibcextensionspace_{\vB, \vA}$ and $\ibcextensionspacekernel_{\vB, \vA}$ obtained in Step 1 of the algorithm are characteristic block-compatible. 
        
    Since every $\vC \in \ibcextensionspace_{\vB, \vA}$ is an extension of $\vB' \in \ibcspace_{\vB}$, and $(I - Z_\vB Y_\vB) \mtupletomatrix(\vB)$ is a zero matrix by Fact~\ref{fact:ibctuple_char_matrix}, 
    $e_i (I - Z_\vb Y_\vB)\mtupletomatrix(\vC)$ is a vector in $W_\beta$ for every $i \in [\sigma \cdot \ell]$ and $\vC \in \ibcextensionspace_{\vB, \vA}$. 
    On the other hand,  $W_{\beta,1}, \dots, W_{\beta, w}$ are characteristic block-compatible row vector subspaces such that $W_\beta = W_{\beta, 1} \oplus \dots \oplus W_{\beta, w}$. 
    By Claim~\ref{claim:ibctuple_invariant_extension_refinement}, $\ibcextensionspace_{\vB, \vA}^\dagger$ and $\ibcextensionspacekernel_{\vB, \vA}^\dagger$ for every $i$ and $j$ obtained in Step 2 of the algorithm are characteristic block-compatible. Hence, if the output is $\ibcextensionspace_{\vB, \vA}$ and $\ibcextensionspacekernel_{\vB, \vA}$, then they are characteristic block-compatible.  

    Now we prove that for the output $\ibcextensionspace_{\vB, \vA}$, 
    every $\vC \in \ibcextensionspace_{\vB, \vA}$ that is an extension of an \IBCtuple $\vB'$ in $\ibcspace_{\vB}$, $\vC$ is essential. 
    Let $\vD = \diag(\vD_1, \dots, \vD_d)$ be an arbitrary block-diagonalziation of $\vA$, and $L, R$ be arbitrary invertible matrices such that $\vD = L \vA R^{-1}$. Suppose $\vF$ is an extension of $\vB'$ in $\vA$ such that $\rowtuplespace(\vF) \leq \rowtuplespace(\vA_{i, vD, L})$ for some $i\in [d]$. 
    By Claim~\ref{claim:essential_property_basic}, 
    there is a $\vC'$ such that $\rowtuplespace(\vC') \leq \rowtuplespace(\vA_{i, \vD, L})$ and $\vC'$ is an extension of an \IBCtuple strongly correlated to $\vB'$. 
    By Claim~\ref{claim:properties_ibctuple_extension}, $\rowvectorspace(\vC) \cap W_\beta = \rowvectorspace(\vC') \cap W_\beta$. Hence, $\vC$ is essential. 
    Thus, the third property of the claim holds.

    If the output is a row vector space, then the output is obtained in Step 2(c) for some $i$ and $j$. The condition of \[0 < \dim(\ibcextensionspace_{\vB, \vA} / W_\beta) - \dim(\ibcextensionspace_{\vB, \vA}^\dagger / W_\beta) < \dim(W_{\beta, j})\] implies that 
    $W$ is a nontrivial strict subspace of $W_{\beta, j}$. By Claim~\ref{claim:ibctuple_invariant_extension_refinement}, $W$ is characteristic block-compatible. 
    %Since $\rowvectorspace(T_{h_\beta}) / W_{\beta + 1} = W_\beta$ by Definition~\ref{def:row_tuple_decomposition},  and $W_\beta = W_{\beta, 1} \oplus \dots \oplus W_{\beta, w}$, 
    %there must exists a $u$ such that $\langle \{\va^{(u)} T^{(u)}$ has a 

    In the rest of this proof, we consider the case that the output is a row tuple space $S < T_k$ for some $k \in [\zeta]$. 
    If the output is obtained in Step 2(b) for some $i$ and $j$, then by the construction of $\ibcextensionspace_{\vB, \vA}$, $\ibcextensionspacekernel_{\vB, \vA}$, $\ibcextensionspace_{\vB, \vA}^\dagger$, and $\ibcextensionspacekernel_{\vB, \vA}^\dagger$,
    it means that there exists some $\vB' \in \ibcspace_{\vB}$ but not in $\ibcspacekernel_{\vB}$ such that every \IBCtuple  in $\ibcspace_{\vB}$ strongly correlated to $\vB'$ has no \IBCtuple extension in $\ibcextensionspace_{\vB, \vA}^\dagger$, and there exists another $\vB'' \in \ibcspace_{\vB}$ but not in $\ibcspacekernel_{\vB}$ such that $\vB''$ has an \IBCtuple extension in $\ibcextensionspace_{\vB, \vA}^\dagger$. By the row tuple space property of $\vB$, the output row tuple subspace satisfies $0 < \dim(S / W_\gamma) < \dim(T_{t_{\vB, 1}} / W_\gamma)$. Furthermore, the output is characteristic block-compatible by the fact that $\ibcextensionspace_{\vB, \vA}$ is characteristic block-compatible and Claim~\ref{claim:ibctuple_invariant_extension_refinement}. 
    
    If the output is obtained in Step 3 for some $i$, 
    then by the row tuple space property of $\vB$, $\langle \{ e_i \vC : \vC \in \ibcextensionspace_{\vB, \vA}\}\rangle$ is a subset of $T_{t_{\vB, i}}$.  
    On the other hand, by Step 2 of the algorithm, 
    we have \[\dim(\ibcextensionspace_{\vB, \vA}) - \dim(\ibcextensionspace_{\vB, \vA}) = \dim(\ibcspace_\vB) - \dim(\ibcspacekernel_\vB)\]
    for $\ibcextensionspace_{\vB, \vA}$ and $\ibcextensionspacekernel_{\vB, \vA}$ after Step 2. 
    By the row tuple space property of $\vB$, this means that every \IBCtuple $\vB' \in \ibcspace_\vB$ has a strongly correlated \IBCtuple $\vB'' \in \ibcspace_\vB$ such that there is a $\vC \in \ibcextensionspace_{\vB, \vA}$ such that $\vC$ is an extension of $\vB''$. 
    So we have $\langle \{ e_i \vC : \vC \in \ibcextensionspace_{\vB, \vA}\}\rangle / W_\gamma = T_{t_{\vB, i}} / W_\gamma$ for $\gamma$ such that $h_{\gamma - 1} \leq t_{\vB, i} < h_\gamma$, and thus $\dim(\langle \{ e_i \vC : \vC \in \ibcextensionspace_{\vB, \vA}\}\rangle / W_\gamma) > 0$. 
    Finally, $\langle \{ e_i \vC : \vC \in \ibcextensionspace_{\vB, \vA}\}\rangle $ is characteristic block-compatible by the fact that $\ibcextensionspace_{\vB, \vA}$ is characteristic block-compatible. 
\end{proof}

\begin{lemma}\label{lemma:extension_canonical}

Suppose $\vA, T_1, \dots, T_\zeta, \vQ, \vB, W_{\beta, 1}, \dots, W_{\beta, w}$ and  $\vA', T_1', \dots, T_\zeta', \vQ', \vB', W_{\beta, 1}', \dots, W_{\beta, w}'$ are two inputs of the Essential Extension Algorithm such that there exist invertible matrices $L, R$ satisfying the following conditions:
\begin{enumerate}
    \item[(a)] $\vA' = L \vA R^{-1}$.
    \item[(b)] $T_i' = T_i R^{-1}$ for any $i \in [\zeta]$, and the parameters for $T_1, \dots, T_\zeta$ and  $T_1', \dots, T_\zeta'$ are the same.  
    %\item $\vQ' = L_1 \vQ R_1^{-1}$.
    \item[(c)] $\vB$ and $\vB'$ are right-equivalent with the parameters $t_{\vB, i} = t_{\vB', i}$ for any $i \in \sigma$, where $\sigma$ is the number of rows in $\vB$ and $\vB'$. 
%    \item[(d)] $\ibcspace_{\vB'} = \langle \{\vE R^{-1} : \vE \in \ibcspace_{\vB'}\}\rangle$ and $\ibcspacekernel_{\vB'} = \langle \{\vE R^{-1} : \vE \in \ibcspacekernel_{\vB'}\}\rangle$. 
    \item[(d)] $W_{\beta, j}' = W_{\beta, j} R^{-1}$. 
\end{enumerate}
    The outputs of the Essential Extension Algorithms for the two inputs satisfy the following conditions: (Let $L, R$ be arbitrary invertible matrices $L, R$ such that conditions (a) - (e) are satisfied.)
    \begin{enumerate}
        \item If the algorithm for $\vA, T_1, \dots, T_\zeta, \vQ, \vB, W_{\beta, 1}, \dots, W_{\beta, w}$ outputs a row vector subspace $W$, then the algorithm for $\vA', T_1', \dots, T_\zeta', \vQ', \vB', W_{\beta, 1}', \dots, W_{\beta, w}'$ outputs $W R^{-1}$.  
        \item If the algorithm for $\vA, T_1, \dots, T_\zeta, \vQ, \vB, W_{\beta, 1}, \dots, W_{\beta, w}$ outputs a row tuple subspace $S$, then the algorithm for $\vA', T_1', \dots, T_\zeta', \vQ', \vB', W_{\beta, 1}', \dots, W_{\beta, w}'$ outputs $S R^{-1}$.  
        \item If the algorithm for $\vA, T_1, \dots, T_\zeta, \vQ, \vB, W_{\beta, 1}, \dots, W_{\beta, w}$ outputs \IBCtuple extension spaces $\ibcextensionspace_{\vB, \vA}$ and $\ibcspacekernel_{\vB, \vA}$, then the algorithm for $\vA', T_1', \dots, T_\zeta', \vQ', \vB', W_{\beta, 1}', \dots, W_{\beta, w}'$ outputs $\{\vE R^{-1} : \vE \in \ibcextensionspace_{\vB, \vA}\}$ and $\{\vE R^{-1} : \vE \in \ibcextensionspacekernel_{\vB, \vA}\}$.
    \end{enumerate}
\end{lemma}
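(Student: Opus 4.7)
The plan is to prove the lemma by induction on the execution of the Essential Extension Algorithm, showing that at every step the intermediate objects for $\vA'$ are exactly the image under right-multiplication by $R^{-1}$ of the corresponding objects for $\vA$. Since the algorithm only branches based on dimensions of various subspaces, and right-multiplication by the invertible $R^{-1}$ preserves dimensions, the two executions will take the same branches throughout, and the final outputs will correspond in the required way.

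For the base case (Step 1), I first observe that by hypothesis (c) and Fact~\ref{fact:ibctuple_char_matrix}(3a), we have $Y_\vB = Y_{\vB'}$ and $Z_\vB = Z_{\vB'}$. Next I need to show that the initial extension spaces satisfy
\[
\ibcextensionspace_{\vB', \vA'} = \{ \vE R^{-1} : \vE \in \ibcextensionspace_{\vB, \vA}\}, \qquad \ibcextensionspacekernel_{\vB', \vA'} = \{ \vE R^{-1} : \vE \in \ibcextensionspacekernel_{\vB, \vA}\}.
\]
This follows from unwinding Definition~\ref{def:extension}: a matrix tuple $\vE$ is an extension of some $\vC \in \ibcspace_\vB$ in $\vA$ iff $e_i \vE \in T_{t_{\vB, i}}$ for each $i$ and $\vE + W_\beta = \vC(v_1 + W_\beta, \dots, v_{m_\beta} + W_\beta)^T$. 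Using hypothesis (b) that $T_i' = T_i R^{-1}$, hypothesis (c) that $t_{\vB, i} = t_{\vB', i}$, together with the relation between formatting vectors that is forced by the right-equivalence in (c) (the same matrix $R'$ appearing in Lemma~\ref{lemma:extension_main_canonical}(v) transforms the formatting vectors consistently with how $R$ acts on $W_\beta$), this correspondence is preserved by right-multiplication by $R^{-1}$.

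For the inductive step (Step 2), fix an iteration with indices $(i, j)$ and assume by induction that the current $\ibcextensionspace_{\vB', \vA'}$ and $\ibcextensionspacekernel_{\vB', \vA'}$ are the $R^{-1}$-images of those for $\vA$. The key computation is that for any $\vC \in \ibcextensionspace_{\vB, \vA}$,
\[
e_i(I - Z_\vB Y_\vB) \mtupletomatrix(\vC R^{-1}) = e_i(I - Z_\vB Y_\vB) \mtupletomatrix(\vC) \cdot R^{-1},
\]
and using hypothesis (d) together with the fact that $\rowvecproj$ commutes with right-multiplication by $R^{-1}$ when the subspaces all transform consistently (each $W_{\beta, j}' = W_{\beta, j} R^{-1}$), one obtains
\[
\rowvecproj_{W_{\beta, 1}', \dots, W_{\beta, w}'}\bigl(e_i(I - Z_\vB Y_\vB)\mtupletomatrix(\vC R^{-1}), W_{\beta, j}'\bigr) = \rowvecproj_{W_{\beta, 1}, \dots, W_{\beta, w}}\bigl(e_i(I - Z_\vB Y_\vB)\mtupletomatrix(\vC), W_{\beta, j}\bigr) R^{-1}.
\]
Therefore the refined subspaces $\ibcextensionspace^\dagger_{\vB', \vA'}$ and $\ibcextensionspacekernel^\dagger_{\vB', \vA'}$ for $\vA'$ are exactly the $R^{-1}$-images of those for $\vA$, their quotient dimensions modulo $W_\beta'$ (respectively $W_\beta$) are the same, and the algorithm takes the same branch among Steps 2(a), 2(b), 2(c). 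In branches 2(b) and 2(c) the returned subspace is a straightforward $R^{-1}$-image of the one for $\vA$ by the display above, yielding conclusions (1) and (2) of the lemma. Similarly for Step 3, the condition $\langle \{e_i \vC : \vC \in \ibcextensionspace_{\vB, \vA}\}\rangle \neq T_{t_{\vB, i}}$ is preserved under the $R^{-1}$-transformation by (b) and the induction hypothesis.

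The main technical subtlety, and the step I expect to be the main obstacle, is rigorously verifying the base case identification of $\ibcextensionspace_{\vB', \vA'}$ with $\{\vE R^{-1} : \vE \in \ibcextensionspace_{\vB, \vA}\}$. The right-equivalence of $\vB$ and $\vB'$ takes place in $\M(\sigma \times m_\beta, \F_q)^\ell$, but extensions live over $\vA$, and the bridge between them passes through the formatting vectors of $\vQ$ and $\vQ'$, which are only determined up to an invertible transformation of $\rowvectorspace(\vA)/W_\beta$. I need to argue that the extension space as a whole is independent of this ambiguity, which should follow by combining the characteristic property proved in Claim~\ref{claim:ibctuple_compatible_exntesion_full} for $E$ (applied to automorphisms of $\vA$ that fix $\vA$ and the $T_i$) with the compatibility of formatting vectors enforced in Lemma~\ref{lemma:extension_main_canonical}(v); once this identification is in place, the rest of the induction is routine bookkeeping.
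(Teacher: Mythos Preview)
Your approach is essentially the same as the paper's: induction on the execution of the algorithm, showing all intermediate objects (the spaces $\ibcextensionspace_{\vB,\vA}$, $\ibcextensionspacekernel_{\vB,\vA}$, $\ibcextensionspace_{\vB,\vA}^\dagger$, $\ibcextensionspacekernel_{\vB,\vA}^\dagger$) for the primed input are exactly the $R^{-1}$-images of those for the unprimed input, hence the same branches are taken. For the base case, the paper obtains the bridging matrices $L',R'$ between $\vQ$ and $\vQ'$ directly from Claim~\ref{claim:quotient_marix_tuple_basic} (property~4), which immediately gives $\ibcspace_{\vB'}=\ibcspace_\vB(R')^{-1}$ and the extension correspondence; your reference to Lemma~\ref{lemma:extension_main_canonical}(v) is a bit circular since that lemma's proof invokes the present one, so you should cite Claim~\ref{claim:quotient_marix_tuple_basic} instead.
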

\begin{proof}
Let $L$ and $R$ be arbitrary invertible matrices such that the conditions (a) - (e) hold.  
    By Claim~\ref{claim:quotient_marix_tuple_basic}, there exist invertible matrices $L'$ and $R'$ such that the following conditions hold:
    \begin{enumerate}
        \item $\vQ' = L' \vQ (R')^{-1}$.
        \item $T_{\vQ', i} = T_{\vQ, i} (R')^{-1}$ for every $1 \leq i \leq h_\beta - 1$.
        \item $v_i'R^{-1} + W_\beta' R^{-1} = (\sum_{j = 1}^{m'} v_j r_{i, j}') + W_\beta$, where $r_{i, j}'$ is the $i$-th row $j$-th column of $(R')^{-1}$. 
    \end{enumerate}
    Consequently, we have $\ibcspace_\vB = \ibcspace_{\vB'} (R')^{-1}$ and $\ibcspacekernel_\vB = \ibcextensionspacekernel_{\vB'} (R')^{-1}$, and if $\vC$ is an \IBCtuple extension of $\vB^\dagger \in \ibcspace_\vB$ in $\vA$, then $\vC R^{-1}$ is an \IBCtuple extension of $\vB^\dagger (R')^{-1} \in \ibcspace_{\vB'}$ in $\vA'$. 
    Hence, 
    %Let $\mathfrak{H}_\vA, \mathfrak{G}_\vA$ and $\mathfrak{H}_{\vA'}, \mathfrak{G}_{\vA'}$ be the linear spaces obtained in the first step for $\vA$ and $\vA'$ respectively. 
    after the first step, we have $\ibcextensionspace_{\vB, \vA} = \ibcextensionspace_{\vB', \vA'}R^{-1}$ and 
    $\ibcextensionspacekernel_{\vB, \vA} = \ibcextensionspace_{\vB', \vA'}R^{-1}$. 
    %On the other hand, since $T_{i} = T_{i}' (R)^{-1}$ for every $1 \leq \ell' \leq \ell$, we also have $R_{\vA, j} = R_{\vA', j} Q$ for any $1 \leq j \leq d$.
    In addition, since $\vB$ and $\vB'$ are right-equivalent, we have $Y_\vB  = Y_{\vB'}$ and $Z_\vB = Z_{\vB'}$ by Fact~\ref{fact:ibctuple_char_matrix}. 
    Thus, for any $\vC \in \ibcextensionspace_{\vB, \vA}$, 
    \begin{equation}\label{equ:extension_1}(I - Z_\vB Y_\vB )\mtupletomatrix(\vC) R^{-1} = (I - Z_{\vB'} Y_{\vB'} )\mtupletomatrix(\vC R^{-1}),\end{equation}
    and for any $1 \leq i \leq \sigma \cdot \ell, 1 \leq j \leq r$,
    \begin{equation}\label{equ:extension_2}\begin{split} & \rowvecproj(e_i (I - Z_\vB Y_\vB)\mtupletomatrix(\vC), W_{\beta, j}) R^{-1} \\ = & \rowvecproj(e_i (I - Z_{\vB'}Y_{\vB'} )\mtupletomatrix(\vC R^{-1}), W_{\beta, j}').\end{split} \end{equation}
    Similarly, if $\vC$ is an \IBCtuple extension of $\vB^\dagger \in \ibcspacekernel_\vB$ in $\vA$, then $\vC R^{-1}$ is an \IBCtuple extension of $\vB^\dagger (R')^{-1} \in \ibcspacekernel_{\vB'}$ in $\vA'$, and 
    Equation (\ref{equ:extension_1}) and (\ref{equ:extension_2}) hold for $\vC$. 

    Now consider Step 2 of the algorithm. 
    If for some $i$ and $j$ of Step 2, $\ibcextensionspace_{\vB', \vA'} = \ibcextensionspace_{\vB, \vA}R^{-1}$ and 
    $\ibcextensionspacekernel_{\vB', \vA'} = \ibcextensionspace_{\vB, \vA}R^{-1}$, 
    then $\ibcextensionspace_{\vB', \vA'}^\dagger = \ibcextensionspace_{\vB, \vA}^\dagger R^{-1}$ and 
    $\ibcextensionspacekernel_{\vB, \vA}' = \ibcextensionspace_{\vB', \vA'}' R^{-1}$ hold. 
    Thus, if one of 2(a), 2(b), or 2(c) applies for $\vA$, then the same case applies for $\vA'$. 
    By induction on $i$ and $j$ for Step 2, for every $i$ and $j$, we have $\ibcextensionspace_{\vB', \vA'} = \ibcextensionspace_{\vB, \vA}R^{-1}$ and 
    $\ibcextensionspacekernel_{\vB', \vA'} = \ibcextensionspace_{\vB, \vA}R^{-1}$. 
    And if the case of 2(b) or 2(c) applies for $\vA$ with some $i$ and $j$, then the same case applies for $\vA'$ with the same $i$ and $j$, and thus the first condition or the second condition of the claim is satisfied. 

    If the algorithm for  $\vA$ passes Step 2, then the algorithm for  $\vA'$ also passes Step 2, and we have $\ibcextensionspace_{\vB', \vA'} = \ibcextensionspace_{\vB, \vA}R^{-1}$ and 
    $\ibcextensionspacekernel_{\vB', \vA'} = \ibcextensionspace_{\vB, \vA}R^{-1}$ before Step 3.
    Based on the condition of $T_{i}' = T_i R^{-1}$ for every $1 \leq i \leq \zeta$ and $t_{\vB, j} = t_{\vB', j}$ for any $1 \leq j \leq \sigma$, by Step 3 and Step 4 of the algorithm, the claim holds.
\end{proof}

\subsection{\IBCtuples irrelevant to $W_\beta$}\label{sec:irrelevant_ibctuples}
In this subsection and the next subsection, we illustrate how to use \IBCtuple extensions for \IBCtuples in the quotient matrix tuple $\vQ$ to obtain \IBCtuples in the original matrix tuple $\vA$.
Our analysis is divided into two cases based on whether the \IBCtuples of $\vQ$ are relevant to $W_\beta$ defined as follows. 

\begin{definition}
Given an \IBCtuple $\vB$ of $\vQ$ which is a $\beta$-quotient matrix tuple of $\vA$ with respect to a hierarchical row tuple decomposition $T_1, \dots, T_\zeta$ and parameters $h_0, \dots, h_\beta$, 
we say $\vB$ is \emph{irrelevant} to $W_\beta$ if for every $\vC \in \ibcextensionspace_{\vB, \vA}$, \[(I - Z_\vB Y_\vB) \mtupletomatrix(\vC) = 0,\] where $Y_\vB$ and $Z_\vB$ are the output of the Row Vector Extraction Algorithm for $\vB$, and $ \ibcextensionspace_{\vB, \vA}$ is the output of the Essential Extension Algorithm for $\vB$. 
Otherwise we say $\vB$ is \emph{relevant} to $W_\beta$. 

A sequence of \IBCtuples $\vB_1, \dots, \vB_k$ of $\vQ$
is a \emph{representative irrelevant \IBCtuple sequence} for $\vQ$ with respect to $W_\beta$ if the following two conditions hold:

\begin{enumerate}
    \item All the \IBCtuples $\vB_1, \dots, \vB_k$ are irrelevant to $W_\beta$. 
    \item Any two different \IBCtuples in the sequence are not equivalent. 
    \item Any \IBCtuple of $\vQ$ that is irrelevant to $W_\beta$ is equivalent to one of $\vB_1, \dots, \vB_k$. 
\end{enumerate}

%Similar to Definition~\ref{def:representative_ibctuple_sequence}, 
A sequence of \IBCtuples $\vB_1, \dots, \vB_k$ of $\vQ$
is a \emph{representative relevant \IBCtuple sequence} for $\vQ$ with respect to $W_\beta$ if the following two conditions hold:

\begin{enumerate}
    \item All the \IBCtuples $\vB_1, \dots, \vB_k$ are relevant to $W_\beta$. 
    \item Any two different \IBCtuples in the sequence are not equivalent. 
    \item Any \IBCtuple of $\vQ$ that is relevant to $W_\beta$ is equivalent to one of $\vB_1, \dots, \vB_k$. 
\end{enumerate}

\end{definition}
Section~\ref{sec:irrelevant_ibctuples} focuses on the \IBCtuples of $\vQ$ that are irrelevant to $W_\beta$, and Section~\ref{sec:relevant_ibctuples} focuses on the \IBCtuples of $\vQ$ that are relevant to $W_\beta$. 

For an \IBCtuple $\vB$ of $\vQ$ that is irrelevant to $W_\beta$, 
we show that each matrix tuple in $\ibcextensionspace_{\vB, \vA}$ but not in $\ibcextensionspacekernel_{\vB, \vA}$ is an \IBCtuple of $\vA$, where $\ibcextensionspace_{\vB, \vA}$ and  $\ibcextensionspacekernel_{\vB, \vA}$ are the output of the Essential Extension Algorithm for $\vB$.

\begin{lemma}\label{lem:irrelevant_ibctuple}
    Let $\vA\in \M(n\times m, \F_q)^\ell$ be a matrix tuple and $T_1, \dots, T_\zeta$ be a hierarchical row tuple decomposition of $\vA$ with parameters $h_0, \dots, h_\beta$. 
    Let $\vQ$ be a $\beta$-quotient matrix tuple for $\vA$ with formatting vector $(v_1 + W_\beta, \dots, v_{m_\beta} + W_\beta)$. Let $\vB$ be an \IBCtuple of $\vQ$ with $\sigma$ rows satisfying the space property with \IBCtuple space $\ibcspace_\vB$ and \IBCtuple space kernel $\ibcspacekernel_\vB$, the row tuple space property with parameters $t_{\vB, 1}, \dots, t_{\vB, \sigma}$, the dimension property, and the block-compatible property. Let $\ibcextensionspace_{\vB, \vA}$ and $\ibcextensionspacekernel_{\vB, \vA}$ be the output of the Essential Extension Algorithm for $\vB$.
    
    If $\vB$ is an \IBCtuple of $\vQ$ that is irrelevant to $W_\beta$, then every matrix tuple in $\ibcextensionspace_{\vB, \vA}$ but not in $\ibcextensionspacekernel_{\vB, \vA}$ is an \IBCtuple of $\vA$ that satisfies the space property with \IBCtuple space $\ibcextensionspace_{\vB, \vA}$ and \IBCtuple space kernel $\ibcextensionspacekernel_{\vB, \vA}$, the row tuple space property with parameters $t_{\vB, 1}, \dots, t_{\vB, \sigma}$, the dimension property, and the block-compatible property.  
\end{lemma}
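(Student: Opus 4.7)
The plan is to first establish that every $\vC \in \ibcextensionspace_{\vB, \vA} \setminus \ibcextensionspacekernel_{\vB, \vA}$ is in fact an \IBCtuplenospace of $\vA$, and then to read off the four properties from what has already been proved about $\ibcextensionspace_{\vB, \vA}$ and $\ibcextensionspacekernel_{\vB, \vA}$.

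The irrelevance hypothesis is the workhorse. By definition, $(I-Z_\vB Y_\vB)\mtupletomatrix(\vC)=0$ for every $\vC \in \ibcextensionspace_{\vB, \vA}$, so $\mtupletomatrix(\vC)=Z_\vB Y_\vB\mtupletomatrix(\vC)$, and hence $\rowvectorspace(\vC)$ is spanned by at most $\dim(\rowvectorspace(\vB))$ vectors coming from the "$Y_\vB$-slots'' of $\mtupletomatrix(\vC)$. In particular the dimensions of $\rowvectorspace(\vC)$ and $\rowtuplespace(\vC)$ match those of $\vB$ (up to the quotient by $W_\beta$), and the intersection $\rowvectorspace(\vC)\cap W_\beta$ is forced to be exactly the minimal piece needed for $\vC$ to be an extension of its underlying $\vB' \in \ibcspace_\vB \setminus \ibcspacekernel_\vB$.

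With that in hand, fix $\vC \in \ibcextensionspace_{\vB, \vA}\setminus \ibcextensionspacekernel_{\vB, \vA}$ and let $\vB' \in \ibcspace_\vB$ be the \IBCtuplenospace of $\vQ$ whose extension $\vC$ is. Pick any minimum block-diagonalization $\vD=\diag(\vD_1,\dots,\vD_d)$ of $\vA$ with $\vD = L\vA R^{-1}$. By the block-compatibility of $\ibcextensionspace_{\vB, \vA}$ (Lemma~\ref{lemma:essential_extension_ibctuple_compatible}) and of $\ibcextensionspacekernel_{\vB, \vA}$, each $\proj_{\vD,L}(\vC,i)$ lies in $\ibcextensionspace_{\vB, \vA}$ and the corresponding projection of the underlying $\vB'$ on $\vQ_{\vA,\vD,L}$ (Claim~\ref{claim:correspondance_ibctuple_diag_quotient}) lies in $\ibcspace_\vB$. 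Because $\vB$ is an \IBCtuple of $\vQ$, its block-compatible property combined with Claim~\ref{claim:ibctuple_projection_iso} forces at most one index $i^\star$ for which $\proj_{\vQ_{\vA,\vD,L},L_\vQ}(\vB',i^\star) \notin \ibcspacekernel_\vB$, and for that $i^\star$ the corresponding block of $\vQ$ is equivalent to $\vB$. Combining this with the irrelevance condition above — which prevents $\proj_{\vD,L}(\vC,i)$ from being an extension of a kernel element of $\ibcspace_\vB$ unless it is itself in $\ibcextensionspacekernel_{\vB, \vA}$ — all non-kernel content of $\vC$ is concentrated in the single block $\vD_{i^\star}$. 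Applying Claim~\ref{claim:sequential_ibctuple_simultaneous} (with $d_0=0$) to $\vC$ then produces a minimum block-diagonalization of $\vA$ in which $\vC$ is exactly the first block, and indecomposability of that block follows from indecomposability of $\vB$ in $\vQ$, via the block-to-block correspondence of Claim~\ref{claim:correspondance_ibctuple_diag_quotient}.

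Once $\vC$ is known to be an \IBCtuplenospace of $\vA$, the remaining four properties are essentially bookkeeping. The space property with $\ibcspace_\vC = \ibcextensionspace_{\vB, \vA}$ and $\ibcspacekernel_\vC = \ibcextensionspacekernel_{\vB, \vA}$ is the content of Claim~\ref{claim:properties_ibctuple_extension}(4), after observing that every right-equivalent \IBCtuplenospace to $\vC$ that lives inside $\ibcextensionspace_{\vB, \vA}$ is produced as an extension of a right-equivalent \IBCtuplenospace to $\vB$; the row tuple space property with the parameters $t_{\vB,1},\dots,t_{\vB,\sigma}$ inherited from $\vB$ follows from Claim~\ref{claim:properties_ibctuple_extension}(5); the dimension property is Claim~\ref{claim:properties_ibctuple_extension}(6) together with Fact~\ref{fact:ibctuple_char_matrix}(3b); and the block-compatible property is the block-compatibility of $\ibcextensionspace_{\vB, \vA}$ and $\ibcextensionspacekernel_{\vB, \vA}$ proved in Lemma~\ref{lemma:essential_extension_ibctuple_compatible}. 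The only step where I expect real work is the first one: carefully translating "$\vB$ is indecomposable in $\vQ$ and irrelevant to $W_\beta$'' into "every projection of $\vC$ onto a block of $\vA$ is either in $\ibcextensionspacekernel_{\vB, \vA}$ or equals $\vC$ modulo $\ibcextensionspacekernel_{\vB, \vA}$'', because this is where the quotient correspondence and the $(I-Z_\vB Y_\vB)$-vanishing condition must interact cleanly.
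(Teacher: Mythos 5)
There is a genuine gap in the crucial step where you conclude that $\vC$ is an \IBCtuplenospace. You invoke Claim~\ref{claim:sequential_ibctuple_simultaneous} with $d_0=0$ to ``produce a minimum block-diagonalization of $\vA$ in which $\vC$ is exactly the first block,'' but the hypothesis of that claim is that the candidate tuple is \emph{already} an \IBCtuplenospace of $\vA$ satisfying all four properties of Definition~\ref{def:four_prop}. That is precisely what Lemma~\ref{lem:irrelevant_ibctuple} is trying to establish, so the argument is circular. Your preliminary observations — that $(I-Z_\vB Y_\vB)\mtupletomatrix(\vC)=0$ pins down $\dim(\rowvectorspace(\vC))=\dim(\rowvectorspace(\vB))$, and that the projections of $\vC$ onto blocks of any minimum block-diagonalization concentrate in a single block modulo $\ibcextensionspacekernel_{\vB,\vA}$ — are correct and relevant, but they do not by themselves give you a block-diagonalization of $\vA$ realizing $\vC$ as a block; they only constrain how $\vC$ interacts with a fixed block-diagonalization chosen a priori.

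The paper avoids this circularity by a direct construction rather than a reduction to Claim~\ref{claim:sequential_ibctuple_simultaneous}. Pick $\vB'\in\ibcspace_\vB$ with $\vC$ an extension of $\vB'$, take a block-diagonalization $\vQ'=\diag(\vQ_1',\vQ_2')$ of $\vQ$ realizing $\vB'$ as the first block, and assemble a new matrix tuple $\vA'$ whose first $\sigma$ rows are $\vC$, whose next $n_\beta-\sigma$ rows are arbitrary row tuple extensions of the rows of $\vQ_2'$, and whose remaining $n-n_\beta$ rows span $\langle T_{h_\beta}\cup\dots\cup T_\zeta\rangle$. One checks that $\vA'=L\vA$ for an invertible $L$. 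The irrelevance hypothesis is then used exactly where you signaled the difficulty: $(I-Z_\vB Y_\vB)\mtupletomatrix(\vC)=0$ forces $\dim(\rowvectorspace(\vC))=\dim(\rowvectorspace(\vB))$, which in turn yields the dimension balance $\dim(\rowvectorspace(\vC))+\dim(\rowvectorspace(\vC'))=m$ needed to find $R$ with $\vA'R^{-1}$ block-diagonal and $\vC$ the first block. Indecomposability of that block reduces to indecomposability of $\vB'$ in $\vQ$. The bookkeeping for the four properties via Claim~\ref{claim:properties_ibctuple_extension} and Lemma~\ref{lemma:essential_extension_ibctuple_compatible} is essentially as you describe, so that part of your proposal is fine; only the construction of the witnessing block-diagonalization needs to be replaced by this direct argument.
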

\begin{proof}
    By Definition~\ref{def:quotient_matrix_tuple}, 
    $\vQ \in \M(n_\beta \times m_\beta, \F_q)^\ell$ and $\vB \in \M(\sigma \times m_\beta, \F_q)^\ell$.
    Let $\vC$ be a matrix tuple in $\ibcextensionspace_{\vB, \vA}$ but not in $\ibcextensionspacekernel_{\vB, \vA}$, and $\vB'$ be a matrix tuple in $\ibcspace_\vB$ such that $\vC$ is an extension of $\vB'$.
    By the Essential Extension Algorithm, $\vB'$ is not in $\ibcspacekernel_\vB$, and thus is an \IBCtuple of $\vQ$ by the space property of $\vB$.
    By Definition~\ref{def:ibctuple}, there is a $\vQ'$ with invertible matrices $L'$ and $R'$ such that $\vQ' = L' \vQ (R')^{-1}$, $\vQ' = \diag(\vQ_1', \vQ_2')$, and $\vB' = \vQ_{1, \vQ', L'}$. By Definition~\ref{def:quotient_matrix_tuple}, $\vQ'$ is also a $\beta$-quotient matrix tuple for $\vA$ with respect to $T_1, \dots, T_\zeta$. % with $\sigma \times m_1'$ $\vQ_1'$ and $(n' - \sigma) \times (m' - m_1')$ $\vQ_2'$, and $\vB'$ equals the first $\sigma$ rows of  $(P')^{-1}\vQ$. 

    We construct a new matrix tuple $\vA'$ as follows:
    \begin{enumerate}
        \item For each $i\in[\sigma]$, The $i$-th row of $\vA'$ equals the $i$-th row of $\vC$. 
        \item For each $i \in \{\sigma+1, \dots, n_\beta\}$, the $i$-th row of $\vA'$ is an arbitrary row tuple extension of $i$-th row of $\vQ'$ in $\vA$. 
        \item The $e_{n_\beta + 1} \vA', \dots, e_n \vA'$ form an arbitrary linear basis of $\langle T_{h_\beta},\dots, T_\zeta \rangle$. 
    \end{enumerate}
    There is an invertible matrix of $L$ such that $\vA' = L \vA$, because by the construction of $\vA'$, $e_i \vA'$ is a linear combination of row tuples of $\vA$ for every $1 \leq i \leq n$, and $e_1 \vA', \dots, e_n \vA'$ are linearly independent. 

    Next, we show that there is an invertible matrix $R$ such that $\vA' = \diag(\vA_1, \vA_2) R^{-1}$, in which  $\vA_1$ has $\sigma$ rows, and $\vA_2$ has $n - \sigma$ rows. 
    We only need to show that \begin{equation}\label{equ:irrelevant_ibctuple}\dim(\rowvectorspace(\vC)) + \dim(\rowvectorspace(\vC')) = m,\end{equation} where \[\vC' = \begin{bmatrix}
        e_{\sigma + 1} \\
        \vdots \\
        e_n
    \end{bmatrix} \vA'.\] 
    Since $(I - Z_\vB Y_\vB) \mtupletomatrix(\vC)$ is a zero matrix, we have \[\dim(\rowvectorspace(\vC)) = \dim(\rowvectorspace(\vB)) = \dim(\rowvectorspace(\vB')) = \dim(\rowvectorspace(\vQ_1')).\]
    On the other hand, \[\begin{split}\dim(\rowvectorspace(\vC')) = & \dim(\rowvectorspace(\vQ_2' / W_\beta)) + \dim(W_\beta) \\ = & (m_{\beta} - \dim(\rowvectorspace(\vQ_1')) + (m - m_\beta) \\ = & m - \dim(\rowvectorspace(\vQ_1').\end{split}\] 
    Hence, Equation (\ref{equ:irrelevant_ibctuple}) holds.
    Thus, there is a matrix tuple $\vA'' = \diag(\vA_1, \vA_2)$ and invertible matrices $L$ and $R$ such that $\vA'' = L \vA R^{-1}$ and $\vC = \vA_{1, \vA'', L}$.
        In addition, $\vA_1$ must be indecomposable because otherwise, $\vB'$ is not an \IBCtuple of $\vQ$. Hence, $\vC$ is an \IBCtuple by Definition~\ref{def:ibctuple}. 

    The four properties of \IBCtuples in $\ibcextensionspace_{\vB, \vA}$ but not in $\ibcextensionspacekernel_{\vB, \vA}$ are obtained by Claim~\ref{claim:properties_ibctuple_extension} and  Lemma~\ref{lemma:essential_extension_ibctuple_compatible}.
\end{proof}

\subsection{\IBCtuples relevant to $W_\beta$}\label{sec:relevant_ibctuples}
In this section, we give an algorithm to compute a relevant representative \IBCtuple sequence for the input matrix tuple together with \IBCtuple space and \IBCtuple space kernel for each \IBCtuple in the sequence. 

\subsubsection{Compression Matrix Tuple}\label{sec:compression_matrix_tuple}

We give an algorithm to ``compress'' an extension for an \IBCtuple $\vB$ of a $\beta$-quotient matrix tuple of $\vA$ to a row tuple with all coordinates in $W_\beta$. 

\begin{framed}
\noindent \textbf{Extension Compression Algorithm}

\noindent \textbf{Input:} 
\begin{enumerate}
    \item Matrix tuple $\vA \in \M(n \times m, \F_q)^\ell$ with hierarchical row tuple decomposition $T_1, \dots, T_\zeta$ and parameters $h_0, \dots, h_\beta$ of $\vA$; 
    %\item $\beta$-quotient matrix tuple $\vQ \in \M(n_\beta \times m_\beta, \F_q)^\ell$ of $\vA$ with formatting vector $(v_1 + W_\beta, \dots, v_{m_\beta} + W_\beta)$; 
    \item \IBCtuple $\vB \in \M(\sigma \times m_\beta, \F_q)^\ell$ of a $\beta$-quotient matrix tuple of $\vA$ satisfying the four properties of Definition~\ref{def:ibctuple_rel} with row tuple space parameters $t_{\vB, 1}, \dots, t_{\vB, \sigma}$;
    \item Extension space $\ibcextensionspace_{\vB, \vA}$ and matrix tuple $\vC \in \ibcextensionspace_{\vB, \vA}$.
\end{enumerate}

\noindent \textbf{Output:} A row tuple $\EC(\vC)$.

\begin{enumerate}
    \item Let
    $Y_\vB, Z_\vB$ be the output of the Row Vector Extraction Algorithm for $\vB$, 
    $s = \min_{i \in [\sigma]} t_{\vB,i}$, $u = \min (\{i \in [\sigma]: t_{\vB, i} = s\})$, and %Let $H$ be a linear space of row tuples which initially equals $T_s \cap span(T_{h_\gamma}, \dots, T_\ell)$, where $\gamma$ is the integer such that $h_{\gamma - 1} \leq s < h_\gamma$. 
    $g = 0$. 
    \item For $i = 1$ to $\sigma$, if $t_{\vB, i} = s$, then $g = g + 1$ and let $\vC_g$ be an arbitrary matrix tuple in $\ibcextensionspace_{\vB, \vA}$ such that $e_u \vC_g = e_i \vC$. 
    %\item For each $1 \leq k' \leq k$, let $\vC_{k'}$ be an arbitrary matrix tuple in $\mathfrak{H}$ such that $i_1$-th row tuple of $\vC_{k'}$ equals $\va_{k'}$. 
    \item Return a row tuple $\EC(\vC)$ of length $g \cdot \sigma \cdot \ell$ as follows: for any $1 \leq g' \leq g$, $1 \leq i \leq \sigma$ and $1 \leq \ell' \leq \ell$, 
    let $((g' - 1) \cdot \sigma \cdot \ell + (i - 1)\cdot \ell + \ell')$-th coordinate of $\EC(\vC)$ equals $((i - 1)\cdot \ell + \ell')$-th row of $(I - Z_\vB Y_\vB) \mtupletomatrix(\vC_{g'})$. 
\end{enumerate}
\vspace{-.3cm}
\end{framed}

\begin{claim}\label{claim:basic_compression}
    Let $\vA\in \M(n\times m, \F_q)^\ell$ be a matrix tuple and $T_1, \dots, T_\zeta$ be a hierarchical row tuple decomposition of $\vA$ with parameters $h_0, \dots, h_\beta$. 
    Let $\vB$ be an \IBCtuple of a $\beta$-quotient matrix tuple for $\vA$ with $\sigma$ rows satisfying the space property with \IBCtuple space $\ibcspace_\vB$ and \IBCtuple space kernel $\ibcspacekernel_\vB$, the row tuple space property with parameters $t_{\vB, 1}, \dots, t_{\vB, \sigma}$, the dimension property, and the block-compatible property. Let $\ibcextensionspace_{\vB, \vA}$ and $\ibcextensionspacekernel_{\vB, \vA}$ be the output of the Essential Extension Algorithm for $\vB$.
    We have the following observations for the Extension Compression Algorithm: 
    \begin{enumerate}
        \item The running time of the algorithm is $\poly(n, m, \ell, \log q)$.
        \item The outputs for different matrix tuples in $\ibcextensionspace_{\vB, \vA}$ have the same length. 
        \item For every $\vC \in \ibcextensionspace_{\vB, \vA}$, $\EC(\vC)$ is the zero row tuple iff $\vC \in \ibcextensionspacekernel_{\vB, \vA}$.
        \item Let $\vC_1$ and $\vC_2$ be two matrix tuples in $\ibcextensionspace_{\vB, \vA}$ such that $\vC_1$ is an extension of $\vB_1\in \ibcspace_\vB$ and $\vC_2$ is an extension of $\vB_2 \in \ibcextensionspace_\vB$. $\vB_1$ is strongly correlated to $\vB_2$ if and only if $\EC(\vC_1) = \EC(\vC_2)$.
        \item For any $\vC_1, \vC_2\in \ibcextensionspace_{\vB, \vA}$, the $\EC(\vC_1 + \vC_2)$ equals $\EC(\vC_1) + \EC(\vC_2)$.  
        \item Let $\EC(\ibcextensionspace_{\vB, \vA})$ denote the space spanned by the outputs of the algorithm for all the matrix tuples in $\ibcextensionspace_{\vB, \vA}$. 
        For every $\va \in \EC(\ibcextensionspace_{\vB, \vA})$, there is a $\vC \in \ibcextensionspace_{\vB, \vA}$ such that $\EC(\vC) = \va$.  
        \item   
        The algorithm is canonical in the following sense: Let $\vA', T_1', \dots, T_\zeta', \vB', \ibcextensionspace_{\vB', \vA'}, \vC'$ be another input of the Extension Compression Algorithm such that there are invertible matrices $L$ and $R$ satisfying the following conditions:
        \begin{enumerate}
            \item $\vA' = L \vA R^{-1}$.
            \item $T_i' = T_i R^{-1}$ for any $i \in [\zeta]$, and the parameters for $T_1, \dots, T_\zeta$ and  $T_1', \dots, T_\zeta'$ are the same.  
            \item $\vB$ and $\vB'$ are right-equivalent with the parameters $t_{\vB, i} = t_{\vB', i}$ for any $i \in [\sigma]$. 
            \item $\vC' = \vC R^{-1}$. 
        \end{enumerate}
        Then for any invertible matrices $L$ and $R$ satisfying conditions (a) - (d), $\EC(\vC') = \EC(\vC) R^{-1}$. 
    \end{enumerate}
\end{claim}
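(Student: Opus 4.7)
The plan is to verify the seven properties in order, leaning primarily on Fact~\ref{fact:ibctuple_char_matrix} (which controls the behavior of $Y_\vB, Z_\vB$ under right equivalence and characterizes when $Y_\vB\cdot\mtupletomatrix(\cdot)$ is full row rank) and on Claim~\ref{claim:properties_ibctuple_extension} (which gives the space, row tuple, and dimension properties for $\ibcextensionspace_{\vB,\vA}$ and $\ibcextensionspacekernel_{\vB,\vA}$). The first, second, and seventh properties are essentially bookkeeping: Step~1 is constant-time, Step~2 does $g \le \sigma$ linear-algebra solves inside the fixed-dimension space $\ibcextensionspace_{\vB,\vA}$, and Step~3 reads off coordinates, giving a $\poly(n,m,\ell,\log q)$ bound; the output length $g\cdot\sigma\cdot\ell$ depends only on $\vB$ and $t_{\vB,\cdot}$, not on $\vC$; and canonicity of $s,u,g$, together with $Y_{\vB'}=Y_\vB$, $Z_{\vB'}=Z_\vB$ from Fact~\ref{fact:ibctuple_char_matrix}, yields the last item once one tracks that the choices $\vC_g$ may be taken as $\vC_g R^{-1}$ in the primed instance.

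The main work concentrates on properties 3--6, and the key auxiliary observation is the following: any two valid choices $\vC_g,\vC_g'$ in Step~2 for the same $i$ differ by an element $\vC_g-\vC_g'\in\ibcextensionspace_{\vB,\vA}$ whose $u$-th row vanishes, which forces $\vC_g-\vC_g'\in\ibcextensionspacekernel_{\vB,\vA}$ by the row tuple space property (Claim~\ref{claim:properties_ibctuple_extension}(5)), and hence $(I-Z_\vB Y_\vB)\mtupletomatrix(\vC_g-\vC_g')=0$ by Claim~\ref{claim:properties_ibctuple_extension}(2). Thus the output $\EC(\vC)$ does not depend on the choices in Step~2. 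Given this, property~3 follows because $\vC\in\ibcextensionspacekernel_{\vB,\vA}$ forces one (hence all by the argument above) choice $\vC_g$ to lie in $\ibcextensionspacekernel_{\vB,\vA}$, making each block $(I-Z_\vB Y_\vB)\mtupletomatrix(\vC_g)$ zero; conversely, if $\EC(\vC)=0$, then in particular the block corresponding to the row $i=u$ equals $(I-Z_\vB Y_\vB)\mtupletomatrix(\vC)$ (taking $\vC_g=\vC$ is a valid choice), so Claim~\ref{claim:properties_ibctuple_extension}(2) gives $\vC\in\ibcextensionspacekernel_{\vB,\vA}$.

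For property~5, linearity of $\EC$ is immediate once one notes that if $\vC_{1,g}$ and $\vC_{2,g}$ are valid Step~2 choices for $\vC_1$ and $\vC_2$, then $\vC_{1,g}+\vC_{2,g}$ is a valid Step~2 choice for $\vC_1+\vC_2$, and the map $(I-Z_\vB Y_\vB)\mtupletomatrix(\cdot)$ is linear. Property~6 follows from property~5 together with the fact that the union of single-vector outputs generates the target space. Property~4 is where the dimension property of $\vB$ enters: $\vB_1$ and $\vB_2$ strongly correlated means $\vB_1-\vB_2$ is not an \IBCtuple equivalent to $\vB$, which by the space property places $\vB_1-\vB_2\in\ibcspacekernel_\vB$; then any extension of $\vB_1-\vB_2$, in particular $\vC_1-\vC_2$ up to an element of $\ibcextensionspacekernel_{\vB,\vA}$, lies in $\ibcextensionspacekernel_{\vB,\vA}$, and Claim~\ref{claim:properties_ibctuple_extension}(2) gives $\EC(\vC_1)=\EC(\vC_2)$. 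The converse uses the dimension property of Claim~\ref{claim:properties_ibctuple_extension}(6): if $\EC(\vC_1)=\EC(\vC_2)$ then the blocks agree for every $i$ with $t_{\vB,i}=s$, which forces enough rows of $(I-Z_\vB Y_\vB)\mtupletomatrix(\vC_1-\vC_2)$ to vanish to reduce the rank, placing $\vB_1-\vB_2$ in $\ibcspacekernel_\vB$.

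The main obstacle is the well-definedness argument described above: one must verify that the row tuple space property really does upgrade ``$e_u\vC = 0$'' to ``$\vC\in\ibcextensionspacekernel_{\vB,\vA}$'' when $u$ is the chosen pivot index, and be careful that this is used consistently across all $g$ blocks (which is why $u$ is fixed as the minimum index attaining $t_{\vB,i}=s$, not varied with $g$). All remaining arguments are then short consequences of Fact~\ref{fact:ibctuple_char_matrix} and Claim~\ref{claim:properties_ibctuple_extension}.
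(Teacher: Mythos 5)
Your proposal is correct and follows the paper's strategy: the well-definedness of the Step~2 choices via the row tuple space property (Claim~\ref{claim:properties_ibctuple_extension}(5)), together with Claim~\ref{claim:properties_ibctuple_extension}(2), drives properties 3--6, and Fact~\ref{fact:ibctuple_char_matrix} handles the bookkeeping. Two small points worth tightening: the converse of property~4 does not actually need the dimension property --- once $\EC(\vC_1-\vC_2)=0$ yields $\vC_1-\vC_2\in\ibcextensionspacekernel_{\vB,\vA}$ via properties 3 and 5, Claim~\ref{claim:properties_ibctuple_extension}'s extension-of-kernel characterization puts $\vB_1-\vB_2\in\ibcspacekernel_\vB$, and the space property alone then gives strong correlation; and for property~7 you should cite Lemma~\ref{lemma:extension_canonical} explicitly, since that is what certifies $\ibcextensionspace_{\vB',\vA'}=\ibcextensionspace_{\vB,\vA}R^{-1}$ and hence that $\vC_g R^{-1}$ is a legal Step~2 choice in the primed instance.
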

\begin{proof}
The first property is obtained by the definition of the algorithm and Fact~\ref{fact:ibctuple_char_matrix}.

For the second property, we observe that all the matrix tuples in $\ibcextensionspace_{\vB, \vA}$ have the same row tuple space parameters $t_{\vB, 1}, \dots, t_{\vB, \sigma}$. So, $s$ and $t$ computed in the first step of the algorithm are the same for all the matrix tuples in $\ibcextensionspace_{\vB, \vA}$, and thus $g$ computed after the second step of the algorithm is the same for all the matrix tuples in $\ibcextensionspace_{\vB, \vA}$. Hence, the output row tuples for all the matrix tuple $\ibcextensionspace_{\vB, \vA}$ have the same length. 

The third property is obtained by the second property of Claim~\ref{claim:properties_ibctuple_extension} and the observation that $\vC_1, \dots, \vC_g$ obtained in Step 2 of the algorithm are in $\ibcextensionspacekernel_{\vB, \vA}$ iff $\vC$ is in $\ibcextensionspacekernel_{\vB, \vA}$ by the row tuple space property of $\vB$ as Definition~\ref{def:four_prop}.

%For any $\vC$, all the different possible $\vC_{k'}$s have the same $i_1$-th row tuple. By the row tuple space property, for any $1 \leq k'\leq k$, all these possible $\vC_{k'}$ are strongly correlated, and thus by the second property of Claim~\ref{claim:properties_ibctuple_extension}, have the same $(I - Z_\vB Y_\vB) \mtupletomatrix(\vC_{k'})$. Hence, for any execution of the algorithm on $\vC$, the output is fixed.  

For the fourth property, 
we first show that if $\vB_1$ is strongly correlated to $\vB_2$, then the outputs of the algorithm for $\vC_1$ and $\vC_2$ are the same.
Let $i_1, \dots, i_g$ denote all the integers such that $t_{\vB, i_j} = s$ for any $j \in [g]$.
Let $\vC_{1, 1}, \dots, \vC_{1, g}$ and $\vG_{2, 1}, \dots, \vC_{2, g}$ denote the matrix tuples obtained for $\vC_1$ and $\vC_2$ respectively in Step 2 of the algorithm. 
By the row tuple space property of Claim~\ref{claim:properties_ibctuple_extension},
for any $j\in[g]$, 
$e_{i_j} \vC_1 - e_{i_j} \vC_2$ is in $K_{t_{\vB, i_j}}$, and thus $e_{u} \vC_{1, j} - e_u \vC_{2, j}$ is in  $K_{t_{\vB, i_j}}$ for any $j\in [g]$. 
Let $\vB_{1, j}$ and $\vB_{2, j}$ be the matrix tuples in $\ibcspace_\vB$ such that $\vC_{1, j}$ and $\vC_{2, j}$ are extensions of $\vB_{1, j}$ and $\vB_{2, j}$ respectively for any $j\in[g]$.
By the row tuple space property for $\vB$, $\vB_{1, j}$ is strongly correlated to $\vB_{2, j}$ for any $j\in[g]$.  Hence, by the second property of Claim~\ref{claim:properties_ibctuple_extension}, the outputs for $\vC_1$ and $\vC_2$ are the same. 

Now we show that if $\vB_1$ is not strongly correlated to $\vB_2$, then the outputs of the algorithm for $\vC_1$ and $\vC_2$ are different.
By Definition~\ref{def:ibctuple}, $\vB_1 - \vB_2$ is not an \IBCtuplenospace, and thus not in $\ibcspacekernel_{\vB}$ by the space property of $\vB$. 
Hence, $\vC_1 - \vC_2$ is not in $\ibcextensionspacekernel_{\vB, \vA}$ by the Essential Extension Algorithm. By the second property of Claim~\ref{claim:properties_ibctuple_extension}, $(I - Z_\vB Y_\vB) \mtupletomatrix(\vC_1) \neq (I - Z_\vB Y_\vB) \mtupletomatrix(\vC_2)$, and thus the outputs of the algorithm for $\vC_1$ and $\vC_2$ are different.

For the fifth property, denote $\vC_1$ and $\vC_2$ as 
\[ \vC_1 = \begin{bmatrix}
    \vc_{1,1} \\ \vdots \\ \vc_{1, \sigma}
\end{bmatrix}, \vC_2 = \begin{bmatrix}
    \vc_{2, 1} \\ \vdots \\ \vc_{2, \sigma}
\end{bmatrix}. \]
Let $\vC_{1, 1}, \dots, \vC_{1, g}$ and $\vC_{2, 1}, \dots, \vC_{2, g}$ be the matrix tuples obtained for $\vC_1$ and $\vC_2$ respectively for Step 2 of the algorithm. 
For any $j \in [g]$, $\vC_{1, j} + \vC_{2, j}$ satisfies $e_{i_j} (\vC_{1, j} + \vC_{2, j}) = \vc_{1, i_{j}} + \vc_{2, i_{j}}$. 

On the other hand, 
let $\vC^\dagger_1,\dots,\vC^\dagger_g$ denote the matrix tuples selected in Step 2 for $\vC_1 + \vC_2$ as the input of the algorithm.
$e_u \vC^\dagger_{j}$ is also $\vc_{1, i_{j}} + \vc_{2, i_{j}}$ for any $j \in [g]$. 
By the third property of this claim, for any $j \in [g]$, $\vC^\dagger_{j} = \vC_{1, j} + \vC_{2, j} + \vC^\diamond_j$ for some $\vC^\diamond_j \in \ibcextensionspacekernel_{\vB, \vA}$.  
By Claim~\ref{claim:properties_ibctuple_extension}, the output of the algorithm for $\vC_1 + \vC_2$ equals the output of $\vC_1$ plus the output of $\vC_2$. 

The sixth property is obtained by the third and fourth properties. 

For the last property, 
since $t_{\vB, k} = t_{\vB', k}$ for any $k \in [\sigma]$. So, $s$ and $t$ computed in the first step of the algorithm are the same for $\vC$ and $\vC'$.
Let $g'$ be the integer $g$ at the end of the Step 2 for $\vC'$, and $i_1', \dots, i_g'$ denote all the integers such that $t_{\vB', i_j'} = s$ for any $j \in [g']$.
We have $g = g'$ and $i_j = i_j'$ for any $j \in [g]$ by $t_{\vB, k} = t_{\vB', k}$ for any $k \in [\sigma]$, and thus $e_{i_j'} \vC' = e_{i_j} \vC R^{-1}$ for every $j \in [g]$. 

On the other hand, by Lemma~\ref{lemma:extension_canonical}, we have $\ibcextensionspace_{\vB', \vA'} = \ibcextensionspace_{\vB, \vA} R^{-1}$, and $\ibcextensionspacekernel_{\vB', \vA'} = \ibcextensionspacekernel_{\vB, \vA} R^{-1}$. 
Let $\vC_1', \dots, \vC_g'$ denote the matrix tuples obtained in Step 2 of the algorithm for $\vC'$. 
For every $j \in [g]$, we have $\vC_j' = (\vC_j + \vC_j^\diamond) R^{-1}$ for some $\vC_j^{\diamond} \in \ibcextensionspacekernel_{\vB, \vA}$, 
and thus $\mtupletomatrix(\vC_j') = \mtupletomatrix(\vC_j) R^{-1}$ by Claim~\ref{claim:properties_ibctuple_extension}. 
Hence, $\EC(\vC') = \EC(\vC) R^{-1}$.
\end{proof}

Given a representative relevant \IBCtuple sequence for a $\beta$-quotient matrix tuple $\vQ$ of $\vA$, we construct a new matrix tuple $\vE$ by compressing essential extensions of \IBCtuples right-equivalent to $\vB_1, \dots, \vB_k$ to row tuples.

\begin{framed}
\noindent \textbf{Compression Matrix Tuple Algorithm}

\noindent \textbf{Input:} 
\begin{enumerate}
    \item Matrix tuple $\vA \in \M(n \times m, \F_q)^\ell$ with hierarchical row tuple decomposition $T_1, \dots, T_\zeta$ and parameters $h_0, \dots, h_\beta$ of $\vA$, and $\beta$-quotient matrix tuple $\vQ$ of $\vA$ with respect to $T_1, \dots, T_\zeta$. 
    %\item $\beta$-quotient matrix tuple $\vQ \in \M(n_\beta \times m_\beta, \F_q)^\ell$ of $\vA$ with formatting vector $(v_1 + W_\beta, \dots, v_{m_\beta} + W_\beta)$; 
    \item Representative relevant \IBCtuple sequence $\vB_1, \dots, \vB_k$ with row tuple parameters $t_{\vB_i, 1}, \dots, t_{\vB_i, \sigma_i}$, essential extension space $\ibcextensionspace_{\vB_i, \vA}$, and essential extension space  kernel $\ibcextensionspacekernel_{\vB_i, \vA}$ for each $i \in [k]$. 
\end{enumerate}

\noindent \textbf{Output:} 
Matrix tuples $\vE \in \M(n_\vE \times \dim(W_\beta), \F_q)^{\ell_\vE}, \vG \in \M(n_\vE \times m, \F_q)^{\ell_\vE}, \vE_1, \dots, \vE_\rho$ for some integers $n_\vE, \ell_\vE$; vector $(v_1, \dots, v_{\dim(W_\beta)}) \in (\F_q^\ell)^{\dim(W_\beta)}$,% such that $\vE \cdot (v_1, \dots, v_{\dim(W_\beta)})^T = \vG$.
depth-$0$ hierarchical row tuple decomposition $T_{\vE, 1}, \dots, T_{\vE, \rho}$;integers $u_{\vE, 1}, \dots, u_{\vE, \rho}$.
%\end{enumerate}
\noindent \textbf{Algorithm:}

\begin{enumerate}
\item Let $\rho = 0$. 
\item For $i = 1, \dots, k,$
\begin{enumerate}
    %\item Run the Essential Extension Algorithm for $\vB_i$ to obtain $\ibcextensionspace_{\vB_i, \vA}$ and $\ibcextensionspacekernel_{\vB_i, \vA}$. 
    \item Let $\rho  =\rho + 1$, $r_{\rho} := \dim(\ibcextensionspacekernel_{\vB, \vA})$, and $s_\rho := \dim(\ibcextensionspace_{\vB, \vA})$. Compute an arbitrary linear basis of $\ibcextensionspacekernel_{\vB_i, \vA}$ denoted as $\vC_1, \dots, \vC_{r_\rho}$, and arbitrary matrix tuples $\vC_{r_\rho + 1}, \dots, \vC_{s_\rho}$ such that $\vC_1, \dots, \vC_{s_\rho}$ is a linear basis of $\ibcextensionspace_{\vB_i, \vA}$. 
    \item Compute $\EC(\vC_{r_\rho + 1}), \dots, \EC(\vC_{s_\rho})$ by running the Extension Compression Algorithm for $\vC_{r_\rho + 1}, \dots, \vC_{s_\rho}$. 
    \item 
    Let $n_{\vE, \rho} := s_\rho - r_\rho$, $\ell_\rho$ be the length of $\EC(\vC_{\dim(\ibcextensionspacekernel_{\vB_i, \vA}) + 1})$, $u_{\vE, \rho} := 0$, and 
    $\vE_\rho \in \M(n_{\vE, \rho} \times m, \F_q)^{\ell_\rho}$ be the matrix tuple such that $e_j \vE_\rho = \EC(\vC_{r_\rho + j})$ for any $j \in [n_{\vE, \rho}]$.
    %\[\vE_i = \begin{bmatrix} \EC(\vC_{\dim(\ibcextensionspacekernel_{\vB_i, \vA}) + 1}) \\ \vdots \\ \EC(\vC_{\dim(\ibcextensionspace_{\vB_i, \vA})}) \end{bmatrix}.\]

    %Let $\ell' = \min_{1 \leq i \leq \sigma} u_i$,  $i'$ be the smallest index such that $\va_{i'} \in T_{\ell'}$ for any $(\va_1, \dots, \va_\sigma) \in H$, and $\gamma$ be the integer such that $h_{\gamma -1} \leq \ell' < h_\gamma$. 
    %For each $(\va_1, \dots, \va_\sigma) \in H$, we construct a new tuple as follows: start with an empty tuple $\vb$, for each $1 \leq i \leq \sigma$, if $\va_i$ is an element in $T_{\ell'} \setminus (T_{\ell'} \cap span(T_{h_\gamma}, \dots, T_\ell))$, then take an arbitrary element $(\vd_1,\dots,\vd_\sigma) \in H$ with $\vd_{i'} = \va_i$, and add $(d_1, \dots, d_{\alpha\cdot\sigma}) (I-YZ)$ as the new $\alpha\cdot\sigma$ coordinates of $\vb$. 
    %\item Let $U$ be the space spanned by all the $\vb$ constructed from elements in $H$. 
\end{enumerate}

\item For $i = h_\beta, \dots, \zeta$, let $\rho = \rho +1$, $\ve_{\rho, 1}, \dots, \ve_{\rho, \dim(T_i)}$ be an arbitrary linear basis of $T_i$, $n_{\vE, \rho} := \dim(T_i)$, $\ell_{\vE, \rho} := \ell$, $u_{\vE, \rho} := i$, 
and $\vE_\rho \in \M(n_{\vE, \rho} \times m, \F_q)^\ell$ such that $e_j \vE_{\rho} = \ve_{\rho, j}$ for any $j\in[\dim(T_i)]$. 

\item Let $\ell_\vE := \sum_{i = 1}^\rho \ell_{\vE, i}$.  
Let $\vG_{i} = (G_{i, 1}, \dots, G_{i, \ell_{\vE}}) \in \M(n_{\vE, i} \times m, \F_q)^{\ell_{\vE}}$  for each $i\in [\rho]$ be the matrix tuple such that $(G_{i, (\sum_{j = 1}^{i - 1}n_{\vE, j})+ 1}, \dots,G_{i, \sum_{j = 1}^{i}n_{\vE, j}}) = \vG_i$, and all the other matrices in $\vG_{i}$ are zero matrices.

%Construct a $(\sum_{i=1}^{\rho} n_i) \times m$ matrix tuple $\vE' = (E_1, \dots, E_{\sum_{i = 1}^{\rho} \ell_i})$ as follows: For $i = 1, \dots, \rho$, the induced matrix tuple
%$(E_{(\sum_{j = 1}^{i-1} \ell_j) + 1}, \dots, E_{\sum_{j = 1}^{i} \ell_j})$ has 
%$(\sum_{j = 1}^{i - 1} n_j) + 1$ to $(\sum_{j = 1}^{i} n_j)$ row equal to $\vE_i$, and all the other rows are zero. 
\item Let $n_\vE := \sum_{i = 1}^\rho n_{\vE, i}$, 
$v_1, \dots, v_{\dim(W_\beta)}$ be an arbitrary basis of $W_\beta$, 
\[\vG = \begin{bmatrix}
    \vG_{1} \\ \vdots \\ \vG_{\rho}
\end{bmatrix},\]
and $\vE$ be the matrix tuple such that $\vE \cdot (v_1, \dots, v_{\dim(W_\beta)})^T = \vG$.
For every $i\in [\rho]$, let $T_{\vE, i} = \langle \{e_{(\sum_{k = 1}^{i - 1} n_{\vE, k}) + j} \vE : j \in [n_{\vE, i}]\}\rangle$. 
\item Return $\vE$, $\vG$, $\vE_1, \dots, \vE_\rho$, $(v_1, \dots, v_{\dim(W_\beta)})$, $T_{\vE, 1}, \dots, T_{\vE, \rho}$, and $u_{\vE, 1}, \dots, u_{\vE, \rho}$. 
\end{enumerate}
\vspace{-.3cm}
\end{framed}

\begin{claim}\label{claim:vE_direct_sum}
    For a matrix tuple $\vA$ with hierarchical row tuple decomposition $T_1, \dots, T_\zeta$ and parameters $h_0, \dots, h_\beta$ of $\vA$, a $\beta$-quotient matrix tuple $\vQ$ of $\vA$ with respect to $T_1, \dots, T_\zeta$, and a representative relevant \IBCtuple sequence $\vB_1, \dots, \vB_k$ for $\vQ$ with with essential extension space $\ibcextensionspace_{\vB_i, \vA}$ and essential extension space kernel $\ibcextensionspacekernel_{\vB_i, \vA}$ for each $i\in[k]$, 
    let $\vE, \vG, (v_1, \dots, v_{\dim(W_\beta)})$ be output of the Compression Matrix Tuple Algorithm for $\vA, \vB_1, \dots, \vB_k$, $\vF = \diag(\vF_1, \dots, \vF_f)$ be a block-diagonalization of $\vE$, and $L, R$ be invertible matrices such that $\vF = L \vE R^{-1}$. 
    %Let $(v_1, \dots, v_{\dim(W_\beta)})$ be the vector such that $\vE \cdot (v_1, \dots, v_{\dim(W_\beta)})^T = \vE'$. 
    Let \[\vG_{j, \vF, L} = \vE_{j, \vF, L} \cdot (v_1, \dots, v_{\dim(W_\beta)})^T\] for every $j \in [k]$. The following properties hold:
    \begin{enumerate}
    \item $W_\beta = \rowvectorspace(\vG_{1, \vF, L}) \oplus \dots \oplus \rowvectorspace(\vG_{k, \vF, L})$.

    \item For each $\vB_i$ with $i\in [k]$, let 
    \[\begin{split}  \ibcspace_{\vB_i, j} := & \langle \{\vB \in \ibcspace_{\vB_i}: \exists \vC \in \ibcextensionspace_{\vB_i, \vA} \text{ s.t. } \vC \text{ is an extension of an \IBCtuple right-equivalent to } \vB, \\ & \ \ \ \ \ \ \ \ \ \ \ \ \ \ \ \rowvectorspace(\EC(\vC)) \leq \rowvectorspace(\vG_{j, \vF, L}) \} \rangle\end{split}\]
    for any $j\in[f]$. 
    $\ibcspace_{\vB_i} / \ibcspacekernel_{\vB_i} = (\ibcspace_{\vB_i, 1}  / \ibcspacekernel_{\vB_i}) \oplus \dots \oplus (\ibcspace_{\vB_i, k}  / \ibcspacekernel_{\vB_i})$. 

    \item For any $i \in [k], j \in [f]$, and any \IBCtuple $\vB\in\ibcspace_{\vB_i, j}$, every \IBCtuple $\vB'$ correlated to $\vB$ is in $\ibcspace_{\vB_i, j}$. 
    \end{enumerate}
\end{claim}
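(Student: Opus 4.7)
The plan is to prove the three parts in sequence, leveraging the block-diagonalization of $\vE$ together with the compression map and the identification $\vG=\vE\cdot(v_1,\dots,v_{\dim(W_\beta)})^T$.

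For Part 1, I would apply Fact~\ref{fact:ibctuple_diagonalzation_most_basic} to the block-diagonalization $\vF=L\vE R^{-1}$ to get $\rowvectorspace(\vE)=\bigoplus_j\rowvectorspace(\vE_{j,\vF,L})$. The formatting $(v_1,\dots,v_{\dim(W_\beta)})$ is a basis of $W_\beta$, so the map $x\mapsto x\cdot(v_1,\dots,v_{\dim(W_\beta)})^T$ is a linear isomorphism $\F_q^{\dim(W_\beta)}\to W_\beta$. Transferring the direct sum through this isomorphism yields $\rowvectorspace(\vG)=\bigoplus_j\rowvectorspace(\vG_{j,\vF,L})$. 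It then remains to show $\rowvectorspace(\vG)=W_\beta$: the inclusion $\subseteq$ follows because compressions $\EC(\vC)$ have all coordinates in $W_\beta$ (since $(I-Z_{\vB_i}Y_{\vB_i})\mtupletomatrix(\vC)$ kills the $\vB_i$-part and the $\vC-\vB_i V$ part lies in $W_\beta$ by the definition of extension), and because bases of $T_{h_\beta},\dots,T_\zeta$ lie in $W_\beta$ by the hierarchical decomposition axiom; the inclusion $\supseteq$ follows because those same bases of $T_{h_\beta},\dots,T_\zeta$ span $W_\beta$ in their coordinates.

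For Part 2, I would combine two canonical linear isomorphisms. First, Claim~\ref{claim:properties_ibctuple_extension}(b) (in spirit) establishes a natural isomorphism $\phi\colon\ibcspace_{\vB_i}/\ibcspacekernel_{\vB_i}\to\ibcextensionspace_{\vB_i,\vA}/\ibcextensionspacekernel_{\vB_i,\vA}$, obtained by sending a class $[\vB]$ to a class of extension $[\vC]$, well-defined because $\vC-\vC'\in\ibcextensionspacekernel$ whenever $\vC,\vC'$ both extend matrix tuples with the same $\ibcspacekernel$-class. Second, Claim~\ref{claim:basic_compression}(3),(5),(6) imply that $\EC$ descends to an injection $\bar\EC\colon\ibcextensionspace/\ibcextensionspacekernel\hookrightarrow\rowtuplespace(\vE_i)$ with image exactly $\rowtuplespace(\vE_i)$. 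Now for any class $[\vB]\in\ibcspace_{\vB_i}/\ibcspacekernel_{\vB_i}$ I would take a representative $\vC$ of its image, decompose its $\EC(\vC)\in W_\beta^{\ell_{\vE,i}}$ coordinatewise via the direct sum of Part 1 as $\EC(\vC)=\sum_j w_j$ with $w_j$ having each coordinate in $\rowvectorspace(\vG_{j,\vF,L})$, and pull each $w_j$ back via Claim~\ref{claim:basic_compression}(6) to obtain $\vC_j\in\ibcextensionspace$ extending some $\vB^{(j)}$; by construction $\vB^{(j)}\in\ibcspace_{\vB_i,j}$ (witnessed by $\vC_j$). Claim~\ref{claim:basic_compression}(5) gives $\vC-\sum_j\vC_j\in\ibcextensionspacekernel$, hence $\vB\equiv\sum_j\vB^{(j)}\pmod{\ibcspacekernel_{\vB_i}}$, proving the sum covers everything. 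Directness follows because a kernel relation $\sum_j\vB^{(j)}\in\ibcspacekernel_{\vB_i}$ with each $\vB^{(j)}\in\ibcspace_{\vB_i,j}$ would force $\sum_j\EC(\vC^{(j)})=0$ in $W_\beta$; by the direct sum of Part 1, each $\EC(\vC^{(j)})=0$, so each $\vB^{(j)}\in\ibcspacekernel_{\vB_i}$.

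For Part 3, I would observe that right-equivalence is an equivalence relation on IBC-tuples (by composition of invertible right-multiplications), so any two IBC-tuples right-equivalent to $\vB$ are right-equivalent to one another, and hence the set of IBC-tuples right-equivalent to $\vB$ coincides with the set of IBC-tuples right-equivalent to any $\vB'$ right-equivalent to $\vB$. Since correlation implies right-equivalence by Definition~\ref{def:ibctuple_rel}, the defining condition of $\ibcspace_{\vB_i,j}$---existence of an essential extension of an IBC-tuple right-equivalent to $\vB$ with compression in $\rowvectorspace(\vG_{j,\vF,L})$---is invariant under replacing $\vB$ by any correlated $\vB'$. Closure under linear combinations, by writing $\vB'=\vB R^{-1}$ for the right-equivalence witness $R$ and noting $\vB^{(s)}R^{-1}$ remains a generator of $\ibcspace_{\vB_i,j}$ for each generator $\vB^{(s)}$, completes the argument.

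The hardest step will be Part 2, specifically verifying the compatibility of the fiberwise pullback via $\EC$ with the defining condition of $\ibcspace_{\vB_i,j}$. The issue is that even though Claim~\ref{claim:basic_compression}(6) guarantees a preimage $\vC_j$ of $w_j$, one must argue that the extended IBC-tuple $\vB^{(j)}$ lies in $\ibcspace_{\vB_i,j}$ rather than in some smaller or larger subspace---and that the resulting direct sum is compatible across different representative IBC-tuples $\vB_1,\dots,\vB_k$. The proof must carefully use the canonical structure of $\EC$ to ensure the decomposition is well-defined at the level of quotients.
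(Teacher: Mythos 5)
Your arguments for Part 1 and Part 2 track the paper's own proof closely: Part 1 via Fact~\ref{fact:ibctuple_diagonalzation_most_basic} and the basis-change $x\mapsto x\cdot(v_1,\dots,v_{\dim(W_\beta)})^T$, and Part 2 via the bijection $\ibcspace_{\vB_i}/\ibcspacekernel_{\vB_i}\to\rowtuplespace(\vE_i)$ induced by $\EC$ together with the refinement of that space by the blocks of $\vF$. The one point you should make explicit in Part 2 — which you yourself flagged as the weak spot — is why $w_j$ actually lands in $\rowtuplespace(\vE_i)$ (so that Claim~\ref{claim:basic_compression}(6) can be invoked). This is not automatic from coordinatewise projection; it requires that $T_{\vE,i}=\rowtuplespace(\vE_i)$ is block-compatible with respect to the block-diagonalization $\vF$ of $\vE$, so that Fact~\ref{fact:ibctuple_invariant_fund} gives $\rowtuplespace(\vE_i)=\bigoplus_j\left(\rowtuplespace(\vE_i)\cap\rowtuplespace(\vE_{j,\vF,L})\right)$. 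The paper asserts this ("the block-diagonalization $\vF$ of $\vE$ implies ..."), and with that in place your argument matches theirs.

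Part 3, however, contains a genuine gap. You write $\vB'=\vB R^{-1}$, expand $\vB=\sum_s c_s\vB^{(s)}$ in generators of $\ibcspace_{\vB_i,j}$, and assert that each $\vB^{(s)}R^{-1}$ "remains a generator." But membership in the generating set requires, among other things, that the candidate tuple belong to $\ibcspace_{\vB_i}$, which is the span of \emph{IBC-tuples of $\vQ$} right-equivalent to $\vB_i$. The matrix $R$ you have in hand only satisfies $\rowtuplespace(\vB R^{-1})\leq\rowtuplespace(\vQ)$; it does not satisfy $\rowtuplespace(\vQ)R^{-1}=\rowtuplespace(\vQ)$. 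So for an arbitrary generator $\vB^{(s)}$, the row tuples of $\vB^{(s)}R^{-1}$ need not lie in $\rowtuplespace(\vQ)$ at all, and $\vB^{(s)}R^{-1}$ need not be an IBC-tuple of $\vQ$, let alone an element of $\ibcspace_{\vB_i}$. Your observation that "right-equivalence is an equivalence relation" correctly shows the \emph{defining predicate} is the same statement for $\vB$ and for $\vB'$, but it does not carry the linear span: neither $\ibcspace_{\vB_i}$ nor the generating set of $\ibcspace_{\vB_i,j}$ is closed under the particular right-action of $R^{-1}$ you are using. The paper's proof of Part 3 works differently: it uses the \emph{left}-multiplication half of correlation, via Fact~\ref{fact:ibctuple_correlated_necessary_condition}, to write a tuple strongly correlated to $\vB'$ as $L\vB+\vK$ with $\vK\in\ibcspacekernel_{\vB_i}$, and then tracks this relation through the Step-2 selection $\vC_1^\dagger,\dots,\vC_g^\dagger$ of the Extension Compression Algorithm and the linearity of $\EC$ (Claim~\ref{claim:basic_compression}) to show the resulting essential extension still compresses into $\rowtuplespace(\vE_{i,j})$. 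You would need to supply an argument of this kind; the purely right-equivalence argument does not suffice.
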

\begin{proof}
The first property is obtained by the definition of block-diagonalization and the relation between $\vE$ and $\vG$. 

For the second property, let $i$ be an arbitrary integer in $[k]$. By the correspondence between $\vG$ and $\vE_1, \dots, \vE_\rho$, the block-diagonalization $\vF$ of $\vE$ implies that \[\rowtuplespace(\vE_i) = \rowtuplespace(\vE_{i, 1}) \oplus \dots \oplus \rowtuplespace(\vE_{i, f}),\] where \[\vE_{i, j} =\left\langle \left\{ u \vE_i : u \in \F_q^{n_{\vE, i}}, \rowvectorspace(u \vE_i) \leq \rowvectorspace(\vG_{j, \vF, L}) \right\}\right\rangle.\] %is the linear space spanned by row tuples of $\vE_i$ with all row vectors in $\rowvectorspace(\vG_j)$. 

On the other hand, by the linearality of $\ibcspace_{\vB_i}$ and the fifth property of Claim~\ref{claim:basic_compression}, there is a bijection \[f: \ibcspace_{\vB_i} / \ibcspacekernel_{\vB_i} \rightarrow \langle \{\EC(\vC) : \vC \in \ibcextensionspace_{\vB_i, \vA}\} \rangle \] such that for any $\vB' \in \ibcspace_{\vB_i}$, $f(\vB' + \ibcspacekernel_{\vB_i}) = \EC(\vC)$ for some $\vC \in \ibcextensionspace_{\vB_i, \vA}$ that is an extension of a matrix tuple in $\vB' + \ibcspacekernel_{\vB_i}$. 
Note that \[\rowtuplespace(\vE_{i}) = \langle \{\EC(\vC) : \vC \in \ibcextensionspace_{\vB_i, \vA}\} \rangle .\] 
By the definition of $\ibcspace_{\vB_i, 1}, \dots, \ibcspace_{\vB_i, f}$, 
for any $1 \leq j \leq f$, 
$\ibcspace_{\vB_i, j}$ contains all the $\vB' \in \ibcspace_{\vB_i}$ such that $f(\vB' + \ibcspacekernel_{\vB_i}) \in \rowtuplespace(\vE_{i, j})$.
Hence, the second property holds. 

Now we prove the last property. By the third property of Claim~\ref{claim:properties_ibctuple_extension}, there is a $\vC^\dagger \in \ibcextensionspace_{\vB_i, \vA}$ such that $\vC^\dagger$ is an extension of an \IBCtuple $\vB^\dagger$ strongly correlated to $\vB$ satisfying $\EC(\vC^\dagger) \in \rowtuplespace(\vE_{i, j})$. 
Let $\vC^\dagger_1, \dots, \vC^\dagger_g$ be the matrix tuples selected in Step 2 of the Extension Compression Algorithm for $\vC^\dagger$. 
By the Extension Compression Algorithm, \[\rowvectorspace\left((I - Z_{\vB_i}Y_{\vB_i})\mtupletomatrix\left(\vC^{\dagger}_{g'}\right)\right)\] is a subspace of $\rowvectorspace(\vG_{j,\vF, L})$ for every $g' \in [g]$. 
By the second property of Claim~\ref{claim:properties_ibctuple_extension} and the first property of the current claim, 
$\EC(\vC^\dagger_{g'})$ is in $\rowtuplespace(\vE_{i, j})$ for every $ g' \in [g]$.

On the other hand, by Claim~\ref{claim:properties_ibctuple_extension}, there is an \IBCtuple $\vB'' \in \ibcspace_{\vB_i}$ strongly correlated to $\vB'$ such that there is a $\vC'' \in \ibcextensionspace_{\vB_i, \vA}$ as an extension of $\vB''$. 
Since $\vB''$ is strongly correlated to $\vB'$ and $\vB'$ is correlated to $\vB$, $\vB''$ is also correlated to $\vB$. By Fact~\ref{fact:ibctuple_correlated_necessary_condition},  there is an inverbile matrix $L$ such that $\vB'' = L \vB + \vK$ for some $\vK \in \ibcspacekernel_{\vB_i}$. 
Hence, $\vC''$ equals a linear combination of $\vC^\dagger_1, \dots, \vC^\dagger_1$ plus a matrix tuple in $\ibcextensionspacekernel_{\vB_i, \vA}$. 
By Claim~\ref{claim:basic_compression}, $\EC(\vC'')$ is also a row tuple of $\rowtuplespace(\vG_{j, \vF, L})$.
By the definition of $\ibcspace_{\vB_i, j}$, $\vB'$ is in $\ibcspace_{\vB_i, j}$. 
\end{proof}

\begin{claim}\label{claim:submatrix_tuples_correlated}

    For a matrix tuple $\vA$ with hierarchical row tuple decomposition $T_1, \dots, T_\zeta$ and parameters $h_0, \dots, h_\beta$ of $\vA$, a $\beta$-quotient matrix tuple $\vQ$ of $\vA$ with respect to $T_1, \dots, T_\zeta$, and a representative relevant \IBCtuple sequence $\vB_1, \dots, \vB_k$ for $\vQ$, 
    let $\vE, \vG, (v_1, \dots, v_{\dim(W_\beta)})$ be the output of the Compression Matrix Tuple Algorithm for $\vA, \vB_1, \dots, \vB_k$. 
    Let $\vD = L\vA R^{-1} = \diag(\vD_1, \dots, \vD_d)$ be a block-diagonalization of $\vA$.  
    There is a block-diagonalization $\vF = \diag(\vF_1, \dots, \vF_\xi)$ of $\vE$ with \[\xi = | \{i \in [d] : \rowvectorspace( \vA_{i, \vD, L}) \cap W_\beta \neq  \emptyset\}|.\] 
\end{claim}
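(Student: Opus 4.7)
The plan is to construct the asserted block-diagonalization of $\vE$ by pushing the block structure of $\vD$ through the basis $(v_1,\dots,v_{\dim(W_\beta)})$ of $W_\beta$, using that every row tuple of $\vE$ comes either as an $\EC$-image of a matrix tuple in some $\ibcextensionspace_{\vB_i,\vA}$ or as a row tuple of $T_{h_\beta}\cup\dots\cup T_\zeta$, both of which behave well under block projection.

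First I would decompose $W_\beta$ with respect to $\vD$. Since $W_\beta=\langle\rowvectorspace(T_{h_\beta})\cup\dots\cup\rowvectorspace(T_\zeta)\rangle$ and each $T_i$ is characteristic block-compatible (Definition~\ref{def:row_tuple_decomposition}), the first and third clauses of Fact~\ref{fact:ibctuple_invariant_basic} show that $W_\beta$ is a characteristic block-compatible row vector subspace of $\vA$. Fact~\ref{fact:ibctuple_invariant_fund} then yields
\[
W_\beta \;=\; \bigoplus_{i=1}^{d}\bigl(W_\beta\cap\rowvectorspace(\vA_{i,\vD,L})\bigr),
\]
and by definition of $\xi$ exactly $\xi$ of these summands are nonzero. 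Let $i_1<\dots<i_\xi$ enumerate them, set $W_{\beta,j}:=W_\beta\cap\rowvectorspace(\vA_{i_j,\vD,L})$, and let $U_j\leq\F_q^{\dim(W_\beta)}$ be the coordinate subspace corresponding to $W_{\beta,j}$ under the basis $(v_1,\dots,v_{\dim(W_\beta)})$. Then $\F_q^{\dim(W_\beta)}=U_1\oplus\dots\oplus U_\xi$.

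Next I would decompose $\rowtuplespace(\vE)$ accordingly. The rows of $\vE$ coming from Step~3 of the Compression Matrix Tuple Algorithm are linear bases of the characteristic block-compatible subspaces $T_{h_\beta},\dots,T_\zeta$, so Fact~\ref{fact:ibctuple_invariant_fund} splits each such row tuple as a sum of projections $\proj_{\vD,L}(\cdot,i_j)$ with coordinates in $W_{\beta,j}$. The rows coming from Step~2 are of the form $\EC(\vC)$ for $\vC\in\ibcextensionspace_{\vB_i,\vA}$; by Lemma~\ref{lemma:essential_extension_ibctuple_compatible} the space $\ibcextensionspace_{\vB_i,\vA}$ is block-compatible, so $\vC=\sum_{i'=1}^{d}\proj_{\vD,L}(\vC,i')$ with each summand still in $\ibcextensionspace_{\vB_i,\vA}$, and additivity of $\EC$ (Claim~\ref{claim:basic_compression}, part~5) gives $\EC(\vC)=\sum_{i'=1}^{d}\EC(\proj_{\vD,L}(\vC,i'))$. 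Inspecting the construction of $\EC$ via the Row Vector Extraction matrices $Y_{\vB_i},Z_{\vB_i}$, the $i'$-th summand has every coordinate lying in $W_\beta\cap\rowvectorspace(\vA_{i',\vD,L})$, hence vanishes unless $i'\in\{i_1,\dots,i_\xi\}$, and for $i'=i_j$ its coordinates lie in $W_{\beta,j}$, i.e.\ in $U_j$ after passing through the basis.

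Defining $T_{\vE,j}'$ as the span of all these $j$-th projections (both from Step~2 and Step~3 contributions), the two preceding paragraphs give $\rowtuplespace(\vE)=T_{\vE,1}'\oplus\dots\oplus T_{\vE,\xi}'$ and, simultaneously, $\rowvectorspace(\vE)=U_1'\oplus\dots\oplus U_\xi'$ where $U_j':=\rowvectorspace(T_{\vE,j}')\leq U_j$. I would then invoke the second part of Fact~\ref{fact:ibctuple_diagonalzation_most_basic} to promote this pair of direct sums into a block-diagonalization of $\vE$ with exactly $\xi$ blocks. The main obstacle I anticipate is the directness (as opposed to mere spanning) of the decomposition of $\rowtuplespace(\vE)$: one must check that a nonzero element of the span of Step~2 rows cannot coincide with an element coming from Step~3 once restricted to a single block $i_j$, which forces a careful bookkeeping between $\EC$-images and the genuine row tuples of $T_{h_\beta},\dots,T_\zeta$ using that all rows of $\vE$ are, by construction, linearly independent after the basis change. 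The directness within Step~2 alone reduces to Claim~\ref{claim:basic_compression}(4) together with the bijection between $\ibcspace_{\vB_i}/\ibcspacekernel_{\vB_i}$ and $\EC$-images sketched in Claim~\ref{claim:vE_direct_sum}, which is where most of the technical work will concentrate.
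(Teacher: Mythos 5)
Your overall route matches the paper's: decompose $W_\beta = \bigoplus_{j}(W_\beta\cap\rowvectorspace(\vA_{i_j,\vD,L}))$ via block-compatibility of $W_\beta$ (Fact~\ref{fact:ibctuple_invariant_fund}), split each row tuple of $\vE$ along the blocks using additivity of $\EC$ and block-compatibility of $\ibcextensionspace_{\vB_i,\vA}$, and reassemble via Fact~\ref{fact:ibctuple_diagonalzation_most_basic}. The paper simply packages the same decomposition as an explicit block-diagonal $\vE^\dagger$ with bases $\vg_{i,1},\dots,\vg_{i,t_i}$ of $H_i$, then shows $\vG = L_{\vG}\vG^\dagger$ for an invertible $L_\vG$; your ``decompose the two spaces and invoke Fact~\ref{fact:ibctuple_diagonalzation_most_basic}'' is the same move phrased without the auxiliary construction.

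There is, however, a real gap in your step 3. You assert that ``inspecting the construction of $\EC$ via $Y_{\vB_i},Z_{\vB_i}$, the $i'$-th summand has every coordinate lying in $W_\beta\cap\rowvectorspace(\vA_{i',\vD,L})$.'' But $\EC(\proj_{\vD,L}(\vC,i'))$ is not computed from $\proj_{\vD,L}(\vC,i')$ alone: Step~2 of the Extension Compression Algorithm picks \emph{arbitrary} auxiliaries $\vC_{g'}\in\ibcextensionspace_{\vB_i,\vA}$ subject only to $e_u\vC_{g'}=e_i\,\proj_{\vD,L}(\vC,i')$, and the output row vectors come from $(I-Z_{\vB_i}Y_{\vB_i})\mtupletomatrix(\vC_{g'})$. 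These auxiliaries need not lie in $\rowtuplespace(\vA_{i',\vD,L})$, so ``inspection'' does not give containment in $W_{\beta,j}$. The paper closes this gap by showing $\proj_{\vD,L}(\vC_{g'},i'')\in\ibcextensionspacekernel_{\vB_i,\vA}$ for $i''\neq i'$ and then invoking Claim~\ref{claim:basic_compression}(3) to replace each $\vC_{g'}$ by its block-$i'$ projection without changing $\EC$; only after that substitution are all row vectors of $(I-Z_{\vB_i}Y_{\vB_i})\mtupletomatrix(\vC_{g'})$ in $W_{\beta,i'}$. You should add this substitution argument explicitly.

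Relatedly, the obstacle you flag — directness of $T'_{\vE,1}\oplus\dots\oplus T'_{\vE,\xi}$ across Step~2 versus Step~3 rows — is not the real difficulty. By Step~4 of the Compression Matrix Tuple Algorithm the blocks $\vG_1,\dots,\vG_\rho$ are padded into disjoint slices of the $\ell_\vE$-tuple, so Step~2 rows and Step~3 rows of $\vG$ have disjoint supports and their spans are automatically in direct sum; within a fixed $\vG_i$ from Step~2, directness of the $\EC$-images follows from Claim~\ref{claim:basic_compression}(3) and (4). The genuine work is the $\EC$-versus-projection compatibility discussed above.
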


\begin{proof}
We describe a method to construct a desired block-diagonalization of $\vE$. 
Without loss of generality, we assume 
\[\{i \in [d] : \rowvectorspace( \vA_{i, \vD, L}) \cap W_\beta \neq  \emptyset\} = \{1, \dots, \xi\}.\]
For each $1 \leq i \leq \xi$, let $W_{\beta, i}$ denote $\rowtuplespace(\vA_{i, \vD, L}) \cap W_\beta$, and
$v_{i, 1}, \dots, v_{i, \dim(W_{\beta, i})}$ be a linear basis of $W_{\beta, i}$. 
Let 
\[H_{i} = \langle \{\vg \in \rowtuplespace(\vG) : \rowvectorspace(\vg) \subset W_{\beta, i}\}\rangle,\]
$t_i = \dim(H_i)$, and
$\vg_{i, 1}, \dots, \vg_{i, t_i}$ be a linear basis of $H_i$. 
Let \[\vG^\dagger =\begin{bmatrix}
    \vg_{1, 1} \\ \vdots\\ \vg_{1, t_1} \\ \vg_{2, 1} \\ \\ \vdots \\ \vg_{\xi, t_\xi}
\end{bmatrix}. \]
Let $\vE^\dagger$ be the matrix tuple such that \[\vE^\dagger (v_{1, 1}, \dots, v_{1, \dim(W_{\beta, 1})}, v_{2, 1}, \dots, v_{\xi, \dim(W_{\beta, \xi})})^T = \vG^\dagger.\] 
$\vE^\dagger$ is a block-diagonalization $\xi$ blocks by the choice of $\vg_{i, 1}, \dots, \vg_{i, t_i}$ and $v_{i,1}, \dots, v_{i, \dim(W_{\beta, i})}$ for all the $i \in [\xi]$ and Fact~\ref{fact:ibctuple_diagonalzation_most_basic}.

In the rest of the proof, we show that there exists invertible matrices $L_\vE$ and $R_\vE$ such that $\vE = L_\vE \vE^\dagger (R_\vE)^{-1}$ by showing that there is a matrix $L_{\vG}$ such that $\vG = L_{\vG} \vG^\dagger$. 
By the construction of $\vG$, it is sufficient to show that for any $1 \leq j \leq \rho$, 
there is a matrix $L_{\vE_j}$ such that $\vE_j = L_{\vE_j} \vE^\diamond_j$, 
where $\vE^\diamond_j$ is constructed as follows: For each $1 \leq i \leq \xi$, take a linear basis $\ve_{j, i, 1}, \dots, \ve_{j, i, t_{j, i}}$ of 
$\langle\{u \vE_j : u \in \F_q^{n_{\vE, j}}, \rowvectorspace(u \vE_j) \leq W_{\beta, i}\}\rangle$. 
%the linear space spanned by all the row tuples of $\vE_j$ whose all coordinates are in $W_{\beta, i}$. 
Let 
\[\vE_j^\diamond =\begin{bmatrix}
    \ve_{j, 1, 1} \\ \vdots\\ \ve_{j, 1, t_{j, 1}} \\ \ve_{j, 2, 1} \\ \\ \vdots \\ \ve_{j, \xi, t_{j, \xi}}
\end{bmatrix}.  \]
To achieve this goal, by the construction of $\vE_1, \dots, \vE_\rho$, it is sufficient to show that for any $1 \leq j \leq k$, for any $\vC \in \ibcextensionspace_{\vB_j, \vA}$, there is a row tuple $\ve'\in \rowtuplespace(\vE_{j}^\diamond)$ such that $\ve' = \EC(\vC)$. 
%$\ve'$ equals the row tuple obtained for $\vC$ by the algorithm to compress $\vC$ to a row tuple. 
%for any row tuple $\ve$ of $\vE$, there is a row vector $v$ such that $v \vE'' = \ve$. 
%By the algorithm, there is an integer $\ell$ such that $\ve$ of $\vE$ correspond to $\ve'$ of $\vE_\ell$. 
%By the linearity of $\mathfrak{H}_{\vB_{\ell}}$,  there exists a $\vC \in \mathfrak{H}_{\vB_{\ell}}$ such that $(I - Z_{\vB_\ell} Y_{\vB_\ell})\mtupletomatrix(\vC) = \ve'$. 
By the characteristic and block-compatible properties of essential extensions of \IBCtuples in $\ibcspace_{\vB_{j}}$ (Lemma~\ref{lemma:essential_extension_ibctuple_compatible}), 
$\proj_{\vD, L}(\vC, i)$ is in $\ibcextensionspace_{\vB_j, \vA}$ for any $i \in [d]$. 
We prove the following two observations, and then the claim follows by the fifth property of Claim~\ref{claim:basic_compression}.
\begin{enumerate}
    \item $\EC(\proj_{\vD, L}(\vC, i))$ is a zero row tuple for any $\xi< i \leq d$. 
    \item For every $1 \leq i \leq \xi$, $\rowvectorspace(\EC(\proj_{\vD, L}(\vC, i)) \leq  W_{\beta, i}$, and thus $\EC(\proj_{\vD, L}(\vC, i))$ is a linear combination of $\ve_{j, i, 1}, \dots, \ve_{j, i, t_{j, i}}$. 
\end{enumerate}

For the first property, $\proj_{\vD, L}(\vC, i)$ is in $\ibcextensionspacekernel_{\vB_j, \vA}$ for any $i > \xi$ by the space property of \IBCtuple extensions in $\ibcextensionspace_{\vB_j, \vA}$ (the fourth property of Claim~\ref{claim:properties_ibctuple_extension}) and the observation that $\rowvectorspace(\vA_{i, \vD, L})$ does not contain any non-zero vector in $W_\beta$. Then the first property is by the third property of Claim~\ref{claim:basic_compression}.

For the second property, 
we show that for any $\vC' \in \ibcextensionspace_{\vB_j, \vA}$ such that $\rowtuplespace(\vC')$ is a subspace of $\rowtuplespace(\vA_{i, \vD, L})$, then \[\rowvectorspace(\EC(\vC')) \leq W_{\beta, i}.\]

First, let $\vC''$ be an arbitrary matrix tuple in $\ibcextensionspace_{\vB_j, \vA}$ such that $\rowtuplespace(\vC'')$ is a subspace of $\rowtuplespace(\vA_{i, \vD, L})$. 
Every row vector of $(I - Z_{\vB_j} Y_{\vB_j}) \mtupletomatrix(\vC'')$ is in $W_\beta$ by Fact~\ref{fact:ibctuple_char_matrix}. 
Because $\rowtuplespace(\vC'')$ is a subspace of $\rowtuplespace(\vA_{i, \vD, L})$,   $\rowvectorspace((I - Z_{\vB_j} Y_{\vB_j}) \mtupletomatrix(\vC''))$ is a subspace of $\rowvectorspace(\vA_{i, \vD, L})$. 
Hence, every row vector of $(I - Z_{\vB_j} Y_{\vB_j}) \mtupletomatrix(\vC'')$ is in $W_{\beta, i}$. 

%Note that every row vector of $(I - Z_{\vB_j} Y_{\vB_j}) \mtupletomatrix(\vC_i)$ is in $W_\beta$ and also in $\rowvectorspace(\vA_{i, \vD, L})$.Since $W_\beta$ is the linear space spanned by the row vectors of row tuples in $U_\beta$, and $U_\beta$ is the direct product of $S_{\vD, Q, k'} \cap U_\beta$ for all the $k'$ by the simultaneous diagonalization $\vD$, $W_\beta$ is the direct sum of $W'_{k'}$ for all the $k'$ by the simultaneous diagonalization $\vD$. Thus, if a row vector is in $W_\beta$ and also in the linear space spanned by row vectors of tuples in $S_{\vD, Q, \ell}$, then it is in $W'_\ell$. 

Second, let $\vC_1', \dots, \vC_k'$ be the matrix tuples selected in Step 2 of the Extension Compression Algorithm for $\vC'$. 
By the row tuple space property of essential extensions in $\ibcextensionspace_{\vB_j, \vA}$, $\proj_{\vD, L}(\vC_{k'}', i')$ is in $\ibcextensionspacekernel_{\vB_j, \vA}$ for any $1 \leq k' \leq k$ and $i' \neq i$, because 
$e_u \vC_{k'}'$ equals a row tuple in $\rowtuplespace(\vA_{i, \vD, L})$, where $u$ is the parameter selected in the first step of the Extension Compression Algorithm for $\vC'$. 
By the third property of Claim~\ref{claim:basic_compression}, we have
\[\EC(\vC_{k'}) = \EC(\proj_{\vD, L}(\vC_{k'}', i))\] for any $1 \leq k' \leq k$. Hence, every row vector of $(I - Z_{\vB_j} Y_{\vB_j}) \mtupletomatrix(\vC_{k'}')$ is in $W_{\beta, i}$ for any $1 \leq k' \leq k$.

Thus, $\rowvectorspace(\EC(\vC')) \leq W_{\beta, i}$. And the claim follows.
\end{proof}

\begin{claim}\label{claim:running_time_compression}
For an input $\vA \in \M(n\times m, \F_q)^{\ell}, T_1, \dots, T_\zeta, \vQ, \vB_1, \dots, \vB_k$ of the Compression Matrix Tuple Algorithm, 
we have the following observations for the Compression Matrix Tuple Algorithm:
\begin{enumerate}
    \item The running time of the algorithm is $\poly(n, m, \ell, \log q)$.
    \item The algorithm is canonical in the following sense: Let $\vA' \in \M(n\times m, \F_q)^{\ell}, T_1', \dots, T_\zeta', \vQ'$, $\vB_1', \dots, \vB_k'$ be another input to the algorithm such that there exist invertible matrices $L$ and $R$ satisfying the following conditions:
    \begin{enumerate}
        \item $\vA' = L \vA R^{-1}$.
        \item $T_i' = T_i R^{-1}$ for any $i \in [\zeta]$, and the parameters for $T_1, \dots, T_\zeta$ and  $T_1', \dots, T_\zeta'$ are the same.  
        \item $\vB_i$ and $\vB_i'$ are right-equivalent with the parameters $t_{\vB, i} = t_{\vB', i}$ for any $i \in [\sigma]$. 
        \item $\vE_{\vB_i', \vA'} = \vE_{\vB_i, \vA} R^{-1}$ for any $i \in [\sigma]$. 
    \end{enumerate}
    %Let $\vE, \vG, \vE_1, \dots, \vE_k, T_{\vE, 1}, \dots, T_{\vE, \rho}, u_{\vE, 1}, \dots, u_{\vE, \rho}$ and $\vE', \vG', \vE_1', \dots, \vE_k', T_{\vE', 1}, \dots, T_{\vE', \rho'},$ $u_{\vE', 1}, \dots, u_{\vE', \rho'}$ be the outputs for $\vA$ and $\vA'$ respectively. 
    For any invertible matrices $L$ and $R$ satisfying (a)-(d), there are invertible matrices $L_\vE$ and $R_\vE$ such that the outputs of the two inputs satisfy the following conditions:
    \begin{enumerate}
    \item $\vE' = L_\vE \vE R_\vE^{-1}$. 
    \item $\vG' = L_\vE \vG R^{-1}$.
    \item $\vE_i' = \vE_i R^{-1}$ for every $i \in [k]$.
    \item $\rho' = \rho$, and $u_{\vE, i} = u_{\vE', i}$ for every $i \in [\rho]$. 
    \item $T_{\vE', i} = T_{\vE, i} R_{\vE}^{-1}$ for every $i \in [\rho]$.
    \end{enumerate}
\end{enumerate}
\end{claim}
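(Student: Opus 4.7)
I will prove the two parts in parallel. For the running time, the plan is to observe that each phase consists of standard linear-algebraic subroutines (computing bases, extending bases of subspaces, solving linear systems) together with polynomially many calls to the Extension Compression Algorithm, which runs in $\poly(n, m, \ell, \log q)$ time by Claim~\ref{claim:basic_compression}. Since $k \le n$ (the $\vB_i$'s are mutually non-equivalent) and $\zeta - h_\beta + 1 \le \zeta \le n$, the outer loops of Steps~2 and 3 each iterate at most $n$ times, and each iteration processes objects of size $\poly(n, m, \ell)$. Steps~4 and 5 are just matrix-tuple concatenation and solving a single linear system for $\vE$ given $\vG$ and the basis $v_1, \dots, v_{\dim W_\beta}$. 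This portion is routine.

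For the canonical property, I fix invertible matrices $L, R$ satisfying (a)-(d). The algorithm contains three sources of arbitrary choices: (i) the linear basis $\vC_1, \dots, \vC_{s_\rho}$ of $\ibcextensionspace_{\vB_i, \vA}$ whose first $r_\rho$ vectors span $\ibcextensionspacekernel_{\vB_i, \vA}$, chosen in Step~2; (ii) the basis $\ve_{\rho, 1}, \dots, \ve_{\rho, \dim(T_i)}$ of $T_i$ chosen in Step~3; and (iii) the basis $v_1, \dots, v_{\dim W_\beta}$ of $W_\beta$ chosen in Step~5. My plan is to construct $L_\vE$ and $R_\vE$ by \emph{aligning} the choices on the $\vA'$-side with those on the $\vA$-side through the given $R$, and then to verify conditions (a)-(e) directly.

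For Step~2, having chosen $\vC_1, \dots, \vC_{s_\rho}$ on the $\vA$-side, I would choose $\vC_j' := \vC_j R^{-1}$ on the $\vA'$-side. This is a legitimate choice because hypothesis (d) gives $\ibcextensionspace_{\vB_i', \vA'} = \{\vE R^{-1} : \vE \in \ibcextensionspace_{\vB_i, \vA}\}$ and the analogous identity for the kernels, so the kernel-vs-complement structure is preserved. Applying Claim~\ref{claim:basic_compression}(7) (whose hypotheses follow from (a)-(c) and the right-equivalence of $\vB_i$ and $\vB_i'$) gives $\EC(\vC_j') = \EC(\vC_j) R^{-1}$, and so $\vE_i' = \vE_i R^{-1}$ for every $i \in [k]$, establishing (c). For Step~3, I align analogously: since $T_i' = T_i R^{-1}$ by (b), the choice $\ve_{\rho, j}' := \ve_{\rho, j} R^{-1}$ is a valid basis for $T_i'$ and produces $\vE_i' = \vE_i R^{-1}$ for $i > k$. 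Since the parameters $u_{\vE, \rho}$ are determined by the index $\rho$ and $h_\beta$ alone, which are unchanged, $\rho' = \rho$ and $u_{\vE',i} = u_{\vE,i}$, giving (d). Stacking the identities $\vE_i' = \vE_i R^{-1}$ and padding with zeros gives $\vG' = \vG R^{-1}$.

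The main obstacle is handling Step~5, where the basis $v_1', \dots, v_{\dim W_\beta}'$ of $W_\beta' = W_\beta R^{-1}$ cannot in general be made to agree with $v_1 R^{-1}, \dots, v_{\dim W_\beta} R^{-1}$, so $R_\vE$ must nontrivially encode the change-of-basis. Concretely I write $v_i' = \sum_j (R_\vE)_{i,j} (v_j R^{-1})$ for an invertible $R_\vE$. Setting $L_\vE := I$, I compute
\[
\vE' R_\vE (v_1 R^{-1}, \dots, v_{\dim W_\beta} R^{-1})^T = \vE' (v_1', \dots, v_{\dim W_\beta}')^T = \vG' = \vG R^{-1} = \vE (v_1 R^{-1}, \dots, v_{\dim W_\beta} R^{-1})^T,
\]
and since $v_1 R^{-1}, \dots, v_{\dim W_\beta} R^{-1}$ are linearly independent, this forces $\vE' R_\vE = \vE$, i.e.\ $\vE' = L_\vE \vE R_\vE^{-1}$, giving (a); together with $L_\vE = I$ this also yields (b). Condition (e) then follows because $T_{\vE, i}$ is defined purely from the row-block structure of $\vE$, which is preserved under right-multiplication, so $T_{\vE', i} = T_{\vE, i} R_\vE^{-1}$. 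The only remaining verification is that the ``transported'' choices on the $\vA'$-side are consistent with the algorithm's requirements (correct kernel structure, correct row tuple space parameters, etc.); this is guaranteed by Lemma~\ref{lemma:extension_canonical} together with Claim~\ref{claim:basic_compression}(7).
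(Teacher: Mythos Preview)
Your running-time argument matches the paper's.

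For canonicity, your approach and the paper's differ in how the algorithm's arbitrary basis choices are handled. You \emph{align} the bases on the $\vA'$-side with those on the $\vA$-side (taking $\vC_j' := \vC_j R^{-1}$ and $\ve_{\rho,j}' := \ve_{\rho,j} R^{-1}$), which lets you set $L_\vE = I$. The paper instead lets the two executions make independent arbitrary choices and absorbs the discrepancy into $L_\vE = \diag(L_1, \dots, L_\rho)$, where each $L_i$ is the change-of-basis between the two choices of a basis for $\EC(\ibcextensionspace_{\vB_i', \vA'}) = \EC(\ibcextensionspace_{\vB_i, \vA})\,R^{-1}$ (this equality coming from Claim~\ref{claim:basic_compression}). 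Your argument therefore establishes conditions (a)--(e) only for one particular pair of executions, not for whatever outputs the algorithm actually produces on the two inputs; to close the gap you would need to add the observation that any two valid outputs on the same input differ by a block-diagonal left-multiplication, and then compose with your aligned pair --- which is exactly where the paper's $L_i$'s come from. As a side remark, output condition~(c) as literally stated, $\vE_i' = \vE_i R^{-1}$, only holds under your aligned choices; the paper's own proof silently proves the weaker $\vE_i' = L_i \vE_i R^{-1}$, so the claim's statement carries a small inconsistency that the paper's argument quietly relaxes.
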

\begin{proof}
    The first property is obtained by the algorithm and the fact that $\zeta \leq n$ by Definition~\ref{def:row_tuple_decomposition}, and $k \leq n$. 

    For the second property, since the parameters for $T_1, \dots, T_\zeta$ and $T_1', \dots, T_\zeta'$ are same, $\rho' = \rho$. 
    By Claim~\ref{claim:basic_compression}, for every $i\in [\rho]$, there is an invertible matrix $L_i$ such that $\vE_i' = L_i \vE_i R^{-1}$. 
    By the construction of $\vG$ and $\vG'$, 
    $\vG' = L_\vE \vG R^{-1}$ by letting $L_\vE = \diag(L_1, \dots, L_\rho)$. 
    Let \[(v_1, \dots, v_{\dim(W_\beta)}) \text{ and } (v_1', \dots, v_{\dim(W_\beta')}')\] be the vectors such that \[\vE\cdot (v_1, \dots, v_{\dim(W_\beta)})^T = \vG \text{ and } \vE' \cdot    (v_1', \dots, v_{\dim(W_\beta')}')^T = \vG'\] hold.
    Since $T_i' = T_i R^{-1}$ for every $i \in [\zeta]$, we have $W_\beta' = W_\beta R^{-1}$. 
    Hence $\dim(W_\beta) = \dim(W_\beta')$.
    Since $v_1, \dots, v_{\dim(W_\beta)}$ is a linear basis of $W_\beta$, and $v_1', \dots, v_{\dim(W_\beta')'}$ is a linear basis of $W_\beta'$, 
    there is an invertible matrix $R_\vE$ such that \[R_\vE (v_1', \dots, v_{\dim(W_\beta')}')^T =(v_1 R^{-1}, \dots, v_{\dim(W_\beta)}R^{-1})^T.\]
    Hence, we have 
    \[\begin{split}\vE' \cdot (v_1', \dots, v_{\dim(W_\beta')}')^T = & \vG' \\ = & L_\vE \vG R^{-1} \\ = & L_\vE \vE \cdot (v_1, \dots, v_{\dim(W_\beta)})^T R^{-1} \\ = & L_\vE \vE \cdot (v_1 R^{-1}, \dots, v_{\dim(W_\beta)}R^{-1})^T. \end{split}\]
    Since $R_\vE$ is invertible, and $v_1', \dots, v_{\dim(W_\beta')'}$ are linearly independent, $v_1 R^{-1}, \dots, v_{\dim(W_\beta)R^{-1}}$ are also linearly independent. 
    Consequently, $\vE' = L_\vE \vE' R_\vE^{-1}$.
\end{proof}

\subsubsection{A single \IBCtuple from Compression Matrix Tuple}

Now we give an algorithm to construct an \IBCtuple of the matrix tuple $\vA$ based on a given \IBCtuple of the compression matrix tuple $\vE$. 

Roughly speaking, our algorithm retrieves the extensions in $\ibcextensionspace_{\vB_i, \vA}$ for \IBCtuples $\vB_1, \dots, \vB_k$ in the representative relevant \IBCtuple sequence such that compression row tuples of the extensions by the Extension Compression Algorithm correspond to the row tuples in the given \IBCtuple of the compression matrix tuple. 
Then we use these extensions to construct an \IBCtuple for the input matrix tuple. 

\begin{framed}
\noindent \textbf{Single \IBCtuple Construction Algorithm}

\noindent \textbf{Input:} 

\begin{enumerate}
    \item Hierarchical row tuple decomposition $T_1, \dots, T_\zeta$ with parameters $h_0, \dots, h_\beta$ for a matrix tuple $\vA$. 
    \item Representative relevant \IBCtuple sequence $\vB_1, \dots, \vB_k$ for a $\beta$-quotient matrix tuple of $\vA$ with respect to $T_1, \dots, T_\zeta$ with parameters $t_{\vB_i, 1}, \dots, t_{\vB_i, \sigma_i}$, essential extension space $\ibcextensionspace_{\vB_i, \vA}$, essential extension space kernel $\ibcextensionspacekernel_{\vB_i, \vA}$ for every $1 \leq i \leq k$. 
    \item $\vE, \vG, \vE_1,\dots, \vE_\rho, (v_1, \dots, v_{\dim(W_\beta)}), T_{\vE, 1}, \dots, T_{\vE, \rho}$ as the output of the Compression Matrix Tuple Algorithm for $\vB_1, \dots, \vB_k$. 
    \item \IBCtuple $\vF$ of $\vE$ with row tuple space parameters $t_{\vF, 1}, \dots, t_{\vF, \eta}$, where $\eta$ is the number of rows of $\vF$. 
\end{enumerate}

\noindent \textbf{Output:} Matrix tuple $\vH$ of $\sigma$ rows with parameters $t_{\vH, 1}, \dots, t_{\vH, \sigma}$.

\noindent \textbf{Algorithm:}

\begin{enumerate}
    \item Let $h = 0, r= 0$. For every $1 \leq k' \leq k$, let $s_{k'} = \min_{i \in [\sigma_{k'}]} t_{\vB_{k'},i}$ and $u_{k'} = \min (\{i \in [\sigma_{k'}]: t_{\vB_{k'}, i} = s_{k'}\})$, and $H_{k'} = T_{s_{k'}} \cap \langle T_{h_\beta} \cup \dots \cup T_{\zeta}    \rangle$. 
    \item For $k' = 1, \dots, k$ and for $i = 1, \dots, \eta$, if $t_{\vF, i} = k'$, 
    
    %let $s$ be the smallest integer among $u_1, \dots, u_\sigma$ for $\vB_{k'}$, $i_1, i_2, \dots, i_{w}$ be the integer such that $u_{i_j} = s$ for all the $1 \leq j \leq w$, $H$ be a linear space of row tuples which initially equals $T_s \cap span(T_{h_\gamma}, \dots, T_\ell)$, where $\gamma$ is the integer such that $h_{\gamma - 1} \leq s < h_\gamma$, and for $i = s_{k'}$ to $t_{k'}$,
    \begin{enumerate}
        \item Let $\vG$ be an arbitrary matrix tuple in $\ibcextensionspace_{\vB_{k'}, \vA}$ such that $\EC(\vG)$ corresponds to $e_i \vF (v_1', \dots, v_{\dim(W_\beta)}')^T$. 
        \item If $e_{u_{k'}}\vG$ is not in $H$, then let 
        $h = h + 1$, $\vH_h = \vG$, $H_{k'}$ be $\langle H_{k'} \cup \{e_{j} \vG : j\in[\sigma_{k'}] \text{ s.t. } t_{\vB_{k'}, j} = s_{k'}\}\rangle$, and $t_{\vH_h, j} = t_{\vB_{k'}, j}$ for every $j \in [\sigma_{k'}]$.
    \end{enumerate}
    \item Let $h = h + 1$. For $i = 1, \dots, \eta$, if $t_{\vF, i} > k$, then let $r = r + 1$, 
    $r$-th row of $\vH_h$ be the row tuple in $\rowtuplespace(\vE_{t_{\vF, i}})$ corresponding to $e_i \vF (v_1, \dots, v_{\dim(W_\beta)})^T$, and $t_{\vH_h, r} = t_{\vF, i} - k + h_\beta - 1$. 
    %\item Let $S = \{i \in [\eta]: 1 \leq t_{\vF, i} \leq \zeta - h_\beta + 1\}$ and $\va_i$ be the row tuple of $\vA$ that corresponds to $e_i \vF (v_1', \dots, v_{\dim(W_\beta)}')^T$ for any $i \in S$. Let $h = h+1$ and $\vC_h$ be the matrix tuple with $|S|$ rows such that $e_j \vC_h = \va_{i_j}$, where $i_j$ is the $j$-th smallest integer in $S$, and $t_{\vC_h, j} = t_{\vF, i_j} + h_\beta - 1$. 
    
    %For $i = s_{k+1}$ to $t_{k+1}$, let $\sigma' = \sigma' + 1$, and $\va_{\sigma'}$ be the row tuple in $\vA$ that corresponds to $i$-th row tuple of $\vF$. 
    \item Let $\vH = \begin{bmatrix}
        \vH_1 \\ \vdots \\ \vH_h
    \end{bmatrix}$, and $\sigma$ denote the number of rows in $\vH$. 
    For each $1 \leq i \leq \sigma$, let $b$ and $c$ be the integer such that $e_i \vH$ is $b$-th row of $\vH_c$, and set $t_{\vH, i} = t_{\vH_c, b}$.
    \item Return $\vH$ with $t_{\vH, 1}, \dots, t_{\vH, \sigma}$. % and $u_1, \dots, u_{h-1}$. 
\end{enumerate}
\vspace{-.3cm}
\end{framed}

We show that if an \IBCtuple of the compression matrix tuple is provided, then the Single \IBCtuple Construction Algorithm generates an \IBCtuple of the input matrix corresponding to the given \IBCtuple of the compression matrix tuple.

Lemma~\ref{lemma:single_ibctuple_relevant} and Lemma~\ref{lemma:single_ibctuple_relevant_properties} establish that the algorithm's output is an \IBCtuple of the input matrix tuple, and it satisfies the four properties defined in Definition~\ref{def:four_prop}. Lemma~\ref{lemma:canonical_single_ibctuple_construction} further illustrates that the output is canonical in the sense that while it may vary for the same input, it remains right-equivalent.

\begin{claim}\label{claim:property_ibctuple_reconstruction}
    Let $\vQ$ be the $\beta$-quotient matrix tuple of $\vA$ from which a representative relevant \IBCtuple sequence $\vB_1, \dots, \vB_k$ is constructed. 
    Suppose all of $\vB_1, \dots, \vB_k$ satisfy the space property, row tuple space property, dimension property, and block-compatible property. 
    Let $\vE$ and $\vG$ be the output of the Compression Matrix Tuple Algorithm for $\vB_1, \dots, \vB_k$, and $\vH$ be the output of the Single \IBCtuple Construction Algorithm for an \IBCtuple $\vF$ of $\vE$. The following properties hold:
    \begin{enumerate}
        \item Let $\vH_1, \dots, \vH_h$ be the matrix tuples obtained in Steps 2 and 3 of the algorithm so that \[\vH = \begin{bmatrix}
            \vH_1 \\ \vdots \\ \vH_h
        \end{bmatrix},\] and let $\vB_1', \dots, \vB_{h-1}'$ be the \IBCtuples of $\vQ$ such that $\vH_j$ is an extension of $\vB_j'$ for each $j \in [h-1]$. 
        $\vB_j'$ is not in the linear space spanned by $\vB_1', \dots, \vB_{j-1}'$ and their correlated \IBCtuples for each $j \in [h-1]$.
        \item For each $i \in [k]$, let $B_i$ be the subset of $\{\vB_1', \dots, \vB_{h-1}'\}$ containing all the \IBCtuples are right-equivalent to $\vB_i$. The linear space spanned by \IBCtuples in $B_i$ and their correlated \IBCtuples equals $\ibcspace_{\vB_i, \rowvectorspace(\vF)}$, where  
        \[\begin{split}\ibcspace_{\vB_i, \rowvectorspace(\vF)} = \langle \{\vB \in \ibcspace_{\vB_i}: & \exists \vC \in \ibcextensionspace_{\vB_i, \vA} \text{ s.t. } \EC(\vC) \leq \rowvectorspace(\vF)\cdot (v_1, \dots, v_{\dim(W_\beta)})^T, \\ &  \text{ and } \vC \text{ is an extension of } \vB\},\rangle\end{split}\]
        where $(v_1, \dots, v_{\dim(W_\beta)})$ is the vector such that $\vE \cdot (v_1, \dots, v_{\dim(W_\beta)})^T = \vG$. 
        \item For any non-zero $u \in \F^\sigma$, $u \vH$ is non-zero row tuple, where $\sigma$ is the number of rows of $\vH$. 
    \end{enumerate}
  
\end{claim}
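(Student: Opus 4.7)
The plan is to analyze the three properties in the order stated, exploiting the bookkeeping variables $H_{k'}$ maintained by the Single \IBCtuple Construction Algorithm together with the bijection between $\ibcspace_{\vB_i}/\ibcspacekernel_{\vB_i}$ and compressed row tuples established in Claim~\ref{claim:basic_compression} and the row tuple space property of $\vB_i$.

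For property 1, I would fix an index $j\in[h-1]$ with $\vH_j$ an extension of some $\vB_j'$ right-equivalent to $\vB_{k'}$, and proceed by contradiction: assume $\vB_j'$ lies in the span $U$ of $\vB_1', \dots, \vB_{j-1}'$ together with their correlated \IBCtuples. By Fact~\ref{fact:ibctuple_correlated_necessary_condition} (applied inside $\ibcspace_{\vB_{k'}}$), every element of $U$ right-equivalent to $\vB_{k'}$ can be written as $L\vB_{j''}' + \vK$ for some $j'' < j$, some invertible $L$, and some $\vK \in \ibcspacekernel_{\vB_{k'}}$. Combining this with the row tuple space property and the compatibility of $\EC$ with addition (Claim~\ref{claim:basic_compression}, property~5), one shows that $e_{u_{k'}}\vH_j$ must already belong to the subspace $H_{k'}$ recorded at step $j$. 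This contradicts the condition under which $\vH_j$ was added in Step 2(b).

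For property 2, I would argue two inclusions. The easy inclusion ``$\subseteq$'' is direct from construction: by Step 2(a) each $\vH_j$ with $\vB_j'$ right-equivalent to $\vB_i$ comes from an extension $\vG \in \ibcextensionspace_{\vB_i,\vA}$ whose compression $\EC(\vG)$ is a row tuple of $\vF$ (composed with the formatting vector), so $\vB_j'$ lies in $\ibcspace_{\vB_i,\rowvectorspace(\vF)}$, and correlated \IBCtuples remain in this space by the third property of Claim~\ref{claim:vE_direct_sum} applied to the single-block diagonalization $\vF$ of the relevant summand. For the reverse inclusion, I would count dimensions: using the bijection from $\ibcspace_{\vB_i}/\ibcspacekernel_{\vB_i}$ to $\rowtuplespace(\vE_i)$ (via $\EC$, Claim~\ref{claim:basic_compression}~(6)), the dimension of $\ibcspace_{\vB_i,\rowvectorspace(\vF)}/\ibcspacekernel_{\vB_i}$ equals the dimension of $\rowtuplespace(\vF)\cap \rowtuplespace(\vE_i)$. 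On the other hand, the algorithm's for-loop over those $i$ with $t_{\vF,i}=k'$ exhausts a linear basis of this intersection, and the condition ``$e_{u_{k'}}\vG\notin H_{k'}$'' together with the row tuple space property guarantees that the \IBCtuples $\vB_j'$ placed in $B_i$ together with their correlated \IBCtuples fill out the target space. Equality of dimensions then forces the reverse inclusion.

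Property 3 follows from property 1 together with the structural separation between Steps 2 and 3. Any linear combination $u\vH$ decomposes according to the block rows $\vH_1,\dots,\vH_h$. The last block $\vH_h$ has its rows lying in $T_{h_\beta}\cup\dots\cup T_\zeta$ (hence in $W_\beta$ coordinate-wise), and these rows were chosen as images of a linearly independent family in $\rowtuplespace(\vF)$, so a nontrivial combination there is nonzero. For a nontrivial combination involving some $\vH_j$ with $j\le h-1$, the image under the map $\va\mapsto \va + W_\beta$ lands inside $\rowtuplespace(\vQ)/W_\beta$; by property 1 and the row tuple space property, the combination's image there is a nonzero element of $\ibcspace_{\vB_i}$ for the relevant $i$, and so $u\vH$ itself cannot vanish.

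I expect the main obstacle to be the reverse inclusion in property 2: matching the span generated by the algorithm's greedy selection (controlled by $H_{k'}$) with the abstractly defined space $\ibcspace_{\vB_i,\rowvectorspace(\vF)}$ requires carefully tracking how right-equivalence and correlation interact with $\EC$ when passing from an \IBCtuple $\vB\in\ibcspace_{\vB_{k'}}$ to the chosen extension $\vG$, and verifying that the dimension count above is tight. The key technical input is that $\EC$ respects the additive structure and that two extensions of strongly correlated \IBCtuples have identical $\EC$ values (Claim~\ref{claim:basic_compression}, properties~4 and~5), which is precisely what converts the greedy independence in $H_{k'}$ into linear independence modulo correlation.
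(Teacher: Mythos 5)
Your treatment of properties~1 and~3 tracks the paper's own argument closely: property~1 is a contradiction argument using the invariant $H_{k'}$ together with the row tuple space property (the paper cites Claim~\ref{claim:sequential_ibctuple_simultaneous} and Step~2 of the algorithm to the same effect), and property~3 is the same two-stage decomposition, first handling the rows of $\vH_1,\dots,\vH_{h-1}$ modulo $W_\beta$ and then handling $\vH_h$ inside $W_\beta$.

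The place where you diverge from the paper, and where I think there is a real gap, is the reverse inclusion in property~2. The paper gives a direct existential argument: for each $\vB'\in\ibcspace_{\vB_i,\rowvectorspace(\vF)}$, it invokes Claim~\ref{claim:properties_ibctuple_extension} (property~3 there) to produce an \IBCtuple $\vB''$ \emph{strongly correlated} to $\vB'$ that actually admits an essential extension $\vC''\in\ibcextensionspace_{\vB_i,\vA}$, observes that $\EC(\vC'')$ lands in the span of the compressed rows of $\vF$, and then traces through Step~2 of the algorithm to conclude $\vB'$ is in the span of $B_i$ and its correlated \IBCtuplesnospace. You instead propose a dimension count. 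The dimension count is not obviously wrong, but as written it begs the question: the assertion that ``the \IBCtuples $\vB_j'$ placed in $B_i$ together with their correlated \IBCtuples fill out the target space'' is exactly the statement being proved, and nothing in your argument explains why a row of $\vF$ whose extension is filtered out in Step~2(b) (because $e_{u_{k'}}\vG\in H_{k'}$) already has its underlying \IBCtuple inside the span. Moreover, your ``bijection from $\ibcspace_{\vB_i}/\ibcspacekernel_{\vB_i}$ to $\rowtuplespace(\vE_i)$'' is not a consequence of Claim~\ref{claim:basic_compression}~(6) alone: for surjectivity of the induced map from $\ibcspace_{\vB_i}/\ibcspacekernel_{\vB_i}$ you need to know that every class has a representative that actually lies in $\ibcextensionspace_{\vB_i,\vA}$, which is precisely Claim~\ref{claim:properties_ibctuple_extension}~property~3, not cited in your proposal. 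So the dimension count, if you wanted to pursue it, would still need that ingredient; in effect it collapses back to the paper's direct argument, and the cleaner route is to produce, as the paper does, the strongly correlated representative with an essential extension and then chase it through the greedy loop.

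A smaller point for property~1: after assuming $\vB_j'\in U$, you can't in general write such an element as $L\vB_{j''}'+\vK$ for a \emph{single} $j''<j$; it is a linear combination over several $j''$'s. This does not break the argument (the additivity of $\EC$, Claim~\ref{claim:basic_compression}~(5), and the row tuple space property still let you put $e_{u_{k'}}\vH_j$ in $H_{k'}$), but the invocation of Fact~\ref{fact:ibctuple_correlated_necessary_condition} as stated is imprecise.
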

\begin{proof}
    The first property is obtained by the row tuple space property of $\vB_1, \dots, \vB_k$ and Step 2 of the algorithm. 
    
    For the second property, 
    by Claim~\ref{claim:vE_direct_sum}, the linear space spanned by \IBCtuples in $B_i$ and their correlated books is a subspace of $\ibcspace_{\vB_i, \rowvectorspace(\vF)}$. 
    On the other hand, 
    by Claim~\ref{claim:properties_ibctuple_extension}, for every \IBCtuple $\vB' \in \ibcspace_{\vB_i, \rowvectorspace(\vF)}$, there is an \IBCtuple $\vB''$ strongly correlated to $\vB'$ such that $\vB''$ has an extension $\vC'' \in \ibcextensionspace_{\vB_i, \vA}$ such that $\rowvectorspace(\EC(\vC))$ is a subspace of $\rowvectorspace(\vF)\cdot (v_1, \dots, v_{\dim(W_\beta)})^T$. 
    By the correspondence between $\EC(\vC)$ for $\vC \in \ibcextensionspace_{\vB_i, \vA}$ and row tuples of $\vG$, there is a row tuple in $\rowtuplespace(\vF \cdot (v_1, \dots, v_{\dim(W_\beta)}))^T$ corresponding to $\EC(\vC'')$.
    By the Single \IBCtuple Construction Algorithm, $\vB'$ is in the linear space spanned by \IBCtuples in $B_i$. 
    Hence, the second property holds.

    For the last property, let $\sigma'$ denote the number of rows of \[\begin{bmatrix} \vH_1 \\ \vdots \\ \vH_{h-1} \end{bmatrix}.\]
    By Claim~\ref{claim:sequential_ibctuple_simultaneous} and Step 2 of the algorithm, for any $j \in [\sigma']$, $e_j \vH + W_\beta$ is not a linear combination of $e_1 \vH + W_\beta, \dots, e_{j - 1}\vH + W_\beta$, and thus $e_j \vH$ is independent of $e_1 \vH, \dots, e_{j-1}\vH$. 
    Also, if a linear combination of $e_1 \vH, \dots, e_{\sigma'}\vH$ has all row vectors in $W_\beta$, then the coefficients of the linear combination are all zero.

    On the other hand, 
    each of $e_{\sigma' + 1}\vH, \dots, e_{\sigma}\vH$ has all row vectors in $W_\beta$. 
    If $e_j\vH$ is a linear combination of $e_1\vH, \dots, e_{j-1}\vH$ for some $\sigma' + 1 \leq j \leq \sigma$, then the coefficients of the linear combination for $e_1\vH, \dots, e_{\sigma'}\vH$ are all zero. 
    In addition, by the construction of $\vH$, since each of $e_{\sigma' + 1}\vH, \dots, e_{\sigma}\vH$ corresponds to a row tuple of $\vF$ which produces $\vH$, by the definition of \IBCtuplenospace, $e_{\sigma' + 1}\vH, \dots, e_{\sigma}\vH$ are linearly independent. 
\end{proof}

\begin{lemma}\label{lemma:canonical_single_ibctuple_construction}
For a matrix tuple $\vA \in \M(n \times m, \F_q)^\ell$ with a depth-$\beta$ hierarchical row tuple decomposition $T_1, \dots, T_\zeta$, a representative relevant \IBCtuple sequence $\vB_1, \dots, \vB_k$ for a $\beta$-quotient matrix tuple $\vQ$ of $\vA$, a compression matrix tuple $\vE$ with respect to $\vB_1, \dots, \vB_k$, and an \IBCtuple $\vF$ of $\vE$, 
we have the following observations for the Single \IBCtuple Construction Algorithm:
\begin{enumerate}
    \item The running time of the Single \IBCtuple Construction Algorithm is $\poly(n, m, \ell, \log q)$. 
    \item The algorithm is canonical in the following sense: For another input of the algorithm, denoted as $\vA', T_1', \dots, T_\zeta', \vB_1', \dots, \vB_k', \vE', \vF'$, such that there are invertible matrices $L, R, L_\vE, R_\vE$ satisfying
    \begin{enumerate}
        \item $\vA' = L \vA R^{-1}$. 
        \item $T_1, \dots, T_\zeta$ and $T_1', \dots T_\zeta'$ have the same parameters and $T_i' = T_i R^{-1}$ for each $i \in [\zeta]$.
        \item $\vB_i$ and $\vB_i'$ are right-equivalent for every $i \in [k]$. 
        \item $\vE' = L_\vE \vE R_\vE^{-1}$.
        \item $\vF$ and $\vF'$ are right-equivalent.
    \end{enumerate}
    the outputs of the two inputs to the Single \IBCtuple Construction Algorithm are right-equivalent and have the same parameters. 
\end{enumerate}

\end{lemma}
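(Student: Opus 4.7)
The plan is to prove both claims in the lemma by tracking how the invertible matrices $L, R, L_\vE, R_\vE$ propagate through each step of the Single \IBCtuple Construction Algorithm, using the canonicity results already established for the Extension Compression Algorithm (Claim~\ref{claim:basic_compression}) and the Compression Matrix Tuple Algorithm (Claim~\ref{claim:running_time_compression}).

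For the running time, note that the outer index $k'$ ranges over $[k]$ with $k \leq n$, the inner index $i$ ranges over $[\eta]$ with $\eta \leq n_\vE = \poly(n,m,\ell)$, and each iteration of Step 2 performs a constant number of linear-algebra operations (solving a linear system to find a pre-image under $\EC$, computing a sum of row-tuple spans, and a containment test in $H_{k'}$) over $\F_q^{\poly(n,m,\ell)}$. Hence the total running time is $\poly(n,m,\ell,\log q)$.

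For canonicity, I would fix arbitrary $L, R, L_\vE, R_\vE$ satisfying (a)--(e) and show by induction on the joint iteration $(k', i)$ of Step 2 that after each iteration, the counter $h$ is identical for the two executions, the integer $r$ matches, the accumulated subspaces satisfy $H_{k'}' = H_{k'} R^{-1}$, and the selected matrix tuples satisfy $\vH_j' = \vH_j R^{-1}$ together with $t_{\vH_j', b} = t_{\vH_j, b}$ for every $b$. The base case is immediate because the initial values of $H_{k'}$ and $H_{k'}'$ are related by $R^{-1}$ by conditions (b) and the definition of $K_{s_{k'}}$. At iteration $(k', i)$, condition (e) gives an invertible $R_\vF$ with $\vF' = \vF R_\vF^{-1}$, so $e_i \vF' = (e_i \vF) R_\vF^{-1}$; combined with condition (d) on the formatting vector of $\vE$ arising from Claim~\ref{claim:running_time_compression}, the row vector $e_i \vF' \cdot (v_1', \dots, v_{\dim(W_\beta)}')^T$ equals $e_i \vF \cdot (v_1, \dots, v_{\dim(W_\beta)})^T R^{-1}$. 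By the canonicity of the Extension Compression Algorithm (Claim~\ref{claim:basic_compression}, last property), any essential extension $\vG$ chosen for $\vA$ yields $\vG R^{-1}$ as a valid choice for $\vA'$.

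The main obstacle is that Step 2(a) chooses $\vG$ \emph{arbitrarily} among all matrix tuples in $\ibcextensionspace_{\vB_{k'}, \vA}$ whose compression equals a prescribed row tuple. Two such choices differ by an element of $\ibcextensionspacekernel_{\vB_{k'}, \vA}$ (by Claim~\ref{claim:properties_ibctuple_extension}, properties 1 and 2). I need to show that replacing $\vG$ by $\vG + \vK$ with $\vK \in \ibcextensionspacekernel_{\vB_{k'}, \vA}$ leaves the resulting $\vH$ right-equivalent to the original. The key observation is that by the row tuple space property of $\vB_{k'}$ (Claim~\ref{claim:properties_ibctuple_extension}, property 5), every row $e_j \vK$ with $t_{\vB_{k'},j} = s_{k'}$ lies in $K_{s_{k'}} \leq H_{k'}$ throughout, so the inclusion test in Step 2(b) produces identical acceptance decisions and the same updated $H_{k'}$. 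Moreover, rows added to $\vH$ that differ by $\vK$-contributions are related by an invertible left-multiplication of the $\vH$-block by some matrix $M$ with $M\vH$ still inside $\ibcextensionspace_{\vB_{k'}, \vA}$; since the canonicity claim only requires right-equivalence (equality up to right multiplication by an invertible $m\times m$ matrix), and since the column structure is unchanged, this left ambiguity is harmless for the right-equivalence statement once combined with the $R^{-1}$ transformation.

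Finally, Step 3 is handled analogously: rows there correspond to row tuples of $\vE_{t_{\vF, i}}$, and $\vE_i' = \vE_i R^{-1}$ by Claim~\ref{claim:running_time_compression}, so the corresponding rows in $\vH$ and $\vH'$ differ exactly by $R^{-1}$ with identical parameters $t_{\vH, r}$. Assembling $\vH$ from $\vH_1, \dots, \vH_h$ preserves the equality $\vH' = \vH R^{-1}$ modulo the left ambiguities described above, and the parameters $t_{\vH, j}$ are read directly from $t_{\vB_{k'}, j}$ (which agree by (c)) and from the $u_{\vE, i}$ indices (which agree by the canonicity of the Compression Matrix Tuple Algorithm). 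This yields the required right-equivalence with identical parameters.
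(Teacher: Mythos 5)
Your running-time analysis is fine. The canonicity argument, however, has a genuine gap, and it's not a cosmetic one: the whole plan of showing $\vH_j' = \vH_j R^{-1}$ for the single matrix $R$ from hypothesis (a) breaks down at the step where you relate $e_i\vF'\cdot(v_1',\dots,v_{\dim(W_\beta)}')^T$ to $e_i\vF\cdot(v_1,\dots,v_{\dim(W_\beta)})^T R^{-1}$. Condition (e) only says that $\vF$ and $\vF'$ are right-equivalent, i.e.\ $\vF'=\vF R_\vF^{-1}$ for \emph{some} invertible $R_\vF$ on $\F_q^{\dim(W_\beta)}$. There is no reason this $R_\vF$ should equal the $R_\vE$ from Claim~\ref{claim:running_time_compression}; both $\vF R_\vE^{-1}$ and $\vF'$ are \IBCtuples of $\vE'$ that are right-equivalent to $\vF$, but they need not be equal. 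Carrying out your computation honestly gives $e_i\vF'\cdot(v')^T = e_i\vF\,R_\vF^{-1}R_\vE^{-1}\cdot(v)^T R^{-1}$, and the extraneous factor $R_\vF^{-1}R_\vE^{-1}$ does not cancel. As a result $\vG R^{-1}$ is \emph{not} a valid choice for the second execution in Step~2(a), and the inductive invariant $\vH_j' = \vH_j R^{-1}$, $H_{k'}'=H_{k'}R^{-1}$ fails from the first iteration onward.

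The paper gets around precisely this issue by \emph{not} attempting a single uniform $R$: it shows only that each piece $\vH_j$ is right-equivalent to $\vH_j'$ by its own matrix ($\vH_j'=\vH_j R_j^{-1}$ for $j\in[h-1]$, and $\vF'\cdot(v')^T=\vF\cdot(v)^T R_h^{-1}$ for a further $R_h$), and then assembles the conclusion from the direct-sum decomposition
\[
\rowvectorspace(\vH)=\bigoplus_{j=1}^{h-1}\rowvectorspace\bigl(Y_{\vB_j}\mtupletomatrix(\vH_j)\bigr)\;\oplus\;\rowvectorspace\bigl(\vF\cdot(v_1,\dots,v_{\dim(W_\beta)})^T\bigr),
\]
and the analogous statement for $\vH'$, which let the separate $R_j$'s be pasted into one invertible matrix. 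This decomposition is exactly what is missing from your argument: without it, having several unrelated $R_j$'s gives you nothing. Relatedly, your discussion of the arbitrary choice in Step~2(a) (``rows added to $\vH$ that differ by $\vK$-contributions are related by an invertible left-multiplication ... harmless for right-equivalence'') conflates two unrelated ambiguities: replacing $\vG$ by $\vG+\vK$ with $\vK\in\ibcextensionspacekernel_{\vB_{k'},\vA}$ produces an additive, not a left-multiplicative, perturbation of $\vH$, and even if it were left-multiplicative that would bear on left-equivalence, not on the right-equivalence the lemma claims. The correct handling is the one in Claim~\ref{claim:properties_ibctuple_extension} (the kernel rows lie in $K_{s_{k'}}\le H_{k'}$, so acceptance decisions and the induced span of $H_{k'}$ are unchanged), plus the final direct-sum argument to convert piecewise right-equivalences into one global one.
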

\begin{proof}
The running time of the definition of the algorithm and the observation that all the matrix tuples have at most $\poly(n, m, \ell)$ rows, $m$ columns, and $\poly(n, m, \ell)$ length, and all the linear spaces are of dimension at most $\poly(n, m, \ell)$.

Since $\vF$ and $\vF'$ are right-equivalent, 
$\vF$ and $\vF'$ have the same number of rows, and $t_{\vF, i} = t_{\vF', i}$ for each $i \in \eta$, where $\eta$ is the number of rows of $\vF$. 
%We have  $t_{\vF, \eta'} = t_{\vF', \eta'}$ for any $1 \leq \eta' \leq \eta$.
And thus $s_{k'}, u_{k'}$, and $H_{k'}$ for every $k' \in [k]$ are the same for $\vF$ and $\vF'$. 
By induction for Step 2 of the algorithm, Step 2 for any $i$ and $k'$ on $\vF$ produces a new matrix tuple if and only if Step 2 for the same $i$ and $k'$ on $\vF'$ produces a new matrix tuple.
Hence, $h$ for both $\vF$ and $\vF'$ are the same, and for any $\vH_j$ and $\vH_j'$ are essential extensions of right-equivalent \IBCtuples for any $j \in [h - 1]$. 
For Step 3, since $\vF$ and $\vF'$ are right-equivalent, by induction, 
$\vH_h$ and $\vH'_h$ are right-equivalent. 

On the other hand, 
for any $j \in [h - 1]$, \[\rowvectorspace((I - Z_{\vB_j} Y_{\vB_j}) \mtupletomatrix(\vH_j))\] is a subspace of $W_\beta$, and \[\rowvectorspace((I - Z_{\vB_j'} Y_{\vB_j'}) \mtupletomatrix(\vH_j'))\] is a subspace of $W_\beta'$. 
By Claim~\ref{claim:property_ibctuple_reconstruction} and the algorithm, we have \begin{equation}\label{equ:reconstruction_1}\begin{split} \rowvectorspace(\vH)  = &  \rowvectorspace(Y_{\vB_1}\mtupletomatrix(\vH_1))\oplus \dots \oplus \rowvectorspace( Y_{\vB_{h-1}} \mtupletomatrix(\vH_{h-1})) \\ &  \oplus \rowvectorspace(\vF\cdot (v_1, \dots, v_{\dim(W_\beta)})^T)\end{split}\end{equation}
and 
\begin{equation}\label{equ:reconstruction_2}\begin{split} \rowvectorspace(\vH')  = & \rowvectorspace( Y_{\vB_1'}\mtupletomatrix(\vH_1'))\oplus \dots \oplus \rowvectorspace(Y_{\vB_{h-1}'} \mtupletomatrix(\vH_{h-1}'))  \\  & \oplus \rowvectorspace(\vF'\cdot (v_1', \dots, v'_{\dim(W_\beta)'})^T),\end{split}\end{equation}
where $(v_1, \dots, v_{\dim(W_\beta)})$ and $(v_1', \dots, v_{\dim(W_\beta')}')$ are the vectors such that \[\vE \cdot (v_1, \dots, v_{\dim(W_\beta)})^T = \vG\] and \[\vE' \cdot (v_1', \dots, v_{\dim(W_\beta')}')^T = \vG'\] by the Compression Matrix Tuple Algorithm.
By Lemma~\ref{lemma:extension_canonical}, there exist invertible matrices $R_1, \dots, R_{h-1}$ such that \[\vH_j' = \vH_j R_j^{-1}\] for any $j \in [h-1]$. 
By the correspondence between $\EC(\vH_1), \dots, \EC(\vH_{h-1})$ and row tuples of $\vF$, there is an invertible matrix $R_h$ such that \[\vF'\cdot (v_1', \dots, v'_{\dim(W_\beta')})^T = \vF\cdot (v_1, \dots, v_{\dim(W_\beta)})^T
R_h^{-1},\]
and for every $j \in [h - 1]$,
\[(I - Z_{\vB_j'} Y_{\vB_j'}) \mtupletomatrix(\vH_j') = (I - Z_{\vB_j} Y_{\vB_j}) \mtupletomatrix(\vH_j) R_h^{-1}.\]
By Equation (\ref{equ:reconstruction_1}) and (\ref{equ:reconstruction_2}),
$\vH$ and $\vH'$ are right-equivalent.
\end{proof}

\begin{lemma}\label{lemma:single_ibctuple_relevant}
    Let $\vQ$ be the $\beta$-quotient matrix tuple of $\vA$ from which a representative relevant \IBCtuple sequence $\vB_1, \dots, \vB_k$ is constructed. 
    Suppose all of $\vB_1, \dots, \vB_k$ satisfy the space property, row tuple space property, dimension property, and block-compatible property. 
    Let $\vE$ and $\vG$ be the output of the Compression Matrix Tuple Algorithm for $\vB_1, \dots, \vB_k$.
    For a matrix tuple $\vH$ which is the output of the Single \IBCtuple Construction Algorithm for an \IBCtuple of $\vE$, $\vH$ is an \IBCtuple of $\vA$. 
\end{lemma}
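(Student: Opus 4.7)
The plan is to exhibit an explicit block-diagonalization $\vD = \diag(\vD_1, \dots, \vD_d)$ of $\vA$ with invertible $L, R$ satisfying $\vD = L\vA R^{-1}$ and an index $j_0 \in [d]$ such that $\vH = \vA_{j_0, \vD, L}$, and to show that the corresponding block $\vD_{j_0}$ is indecomposable. By Definition~\ref{def:ibctuple} this exactly says that $\vH$ is an \IBCtuple of $\vA$.

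First I would set up the block structure on the compression side. Since $\vF$ is an \IBCtuple of $\vE$, fix a minimum block-diagonalization $\vF^\circ = \diag(\vF^\circ_1, \dots, \vF^\circ_g)$ of $\vE$ together with invertible $L_\vE, R_\vE$ such that $\vF^\circ = L_\vE \vE R_\vE^{-1}$ and $\vF = \vE_{i_0, \vF^\circ, L_\vE}$ for some $i_0 \in [g]$. Via the formatting vector $(v_1, \dots, v_{\dim(W_\beta)})$ produced by the Compression Matrix Tuple Algorithm, this gives a direct sum decomposition $W_\beta = \rowvectorspace(\vG_{1,\vF^\circ,L_\vE}) \oplus \dots \oplus \rowvectorspace(\vG_{g,\vF^\circ,L_\vE})$ (Claim~\ref{claim:vE_direct_sum}(1)) and an induced direct sum decomposition of each $\ibcspace_{\vB_i}/\ibcspacekernel_{\vB_i}$ indexed by $[g]$ (Claim~\ref{claim:vE_direct_sum}(2)), in such a way that for every $i\in[k]$ and $j\in[g]$, the subspace $\ibcspace_{\vB_i,j}$ is closed under correlation (Claim~\ref{claim:vE_direct_sum}(3)).

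Next I would use Claim~\ref{claim:sequential_ibctuple_simultaneous} to pick out, one block at a time, \IBCtuples of $\vA$ that together cover the contribution of every $j \in [g] \setminus \{i_0\}$ and also the contribution of every \IBCtuple of $\vQ$ that is \emph{irrelevant} to $W_\beta$ (for the latter, Lemma~\ref{lem:irrelevant_ibctuple} supplies \IBCtuples of $\vA$ lying in $\ibcextensionspace_{\vB,\vA}\setminus \ibcextensionspacekernel_{\vB,\vA}$). This constructs a partial minimum block-diagonalization of $\vA$. What remains after stripping off these blocks is exactly the piece whose row tuple space is spanned by extensions of \IBCtuples in $\bigcup_{i\in[k]} \ibcspace_{\vB_i, i_0}$ together with those rows of $\vH_h$ coming from row tuples of $\vF$ with parameter $t_{\vF,\cdot} > k$. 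By Claim~\ref{claim:property_ibctuple_reconstruction}(2) these extensions are precisely the span of $\{\vB'_1, \dots, \vB'_{h-1}\}$ under correlation, so the rows of $\vH_1, \dots, \vH_{h-1}$ together with $\vH_h$ span the same row tuple space and, via Claim~\ref{claim:property_ibctuple_reconstruction}(3), have the right rank; Fact~\ref{fact:ibctuple_diagonalzation_most_basic}(2) then lets me refine the partial block-diagonalization by appending a block whose associated row-submatrix tuple is exactly $\vH$. This identifies $\vH$ with $\vA_{j_0, \vD, L}$ for some $j_0$.

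The final, and I expect hardest, step is to establish that the block $\vD_{j_0}$ attached to $\vH$ is indecomposable. The argument is by contradiction: suppose $\vH$ admits a further splitting $\vH = L_0 \diag(\vH^a, \vH^b) R_0^{-1}$. Then applying the Extension Compression construction blockwise (and using the fifth and sixth properties of Claim~\ref{claim:basic_compression} together with Claim~\ref{claim:submatrix_tuples_correlated} applied inside the block $\vD_{j_0}$) would produce a non-trivial direct sum decomposition of $\rowtuplespace(\vF)$ and $\rowvectorspace(\vF)$ compatible with a block-diagonalization of $\vE$, contradicting the fact that $\vF$ is an \IBCtuple of $\vE$; the rows of $\vH$ coming from $\vH_h$ contribute the row tuples of $\vF$ with $t_{\vF,\cdot}>k$, and the rows coming from $\vH_1, \dots, \vH_{h-1}$ contribute, through the $\EC$ map, the row tuples with $t_{\vF,\cdot}\leq k$, so a split of $\vH$ really does force a split of $\vF$. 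The main obstacle is making this last translation rigorous, because the $\EC$ map is only defined up to $\ibcextensionspacekernel_{\vB_i, \vA}$, so one must track the rows in $\vH_h$ and the rows in $\vH_1, \dots, \vH_{h-1}$ simultaneously and appeal to Claim~\ref{claim:basic_compression}(4) to see that the would-be decomposition descends cleanly to $\vF$.
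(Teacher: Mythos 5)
Your plan follows essentially the same route as the paper's proof: both exhibit a block-diagonalization of $\vA$ built from a minimum block-diagonalization of $\vE$, with $\vH$ corresponding to one block, and both argue indecomposability by pushing a hypothetical splitting of $\vH$ back through the compression map to contradict minimality on the $\vE$ side via Claim~\ref{claim:submatrix_tuples_correlated}. The paper constructs all blocks explicitly in one pass ($\vC_1 = \vH$, then outputs of Single \IBCtuple Construction on the other $\vE_{i,\vE',L'}$, then irrelevant \IBCtuples in a second sweep), whereas you frame it as stripping off the complementary blocks and identifying $\vH$ as what remains; that is a cosmetic repackaging of the same construction, invoking the same supporting results (Claim~\ref{claim:vE_direct_sum}, Claim~\ref{claim:sequential_ibctuple_simultaneous}, Claim~\ref{claim:property_ibctuple_reconstruction}, Fact~\ref{fact:ibctuple_diagonalzation_most_basic}, Lemma~\ref{lem:irrelevant_ibctuple}). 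The difficulty you flag at the end — tracking the $\EC$ map modulo $\ibcextensionspacekernel_{\vB_i,\vA}$ so that a split of $\vH$ genuinely descends to a split of $\vF$ — is real but is handled in the paper at the same level of informality, by appealing to Claim~\ref{claim:submatrix_tuples_correlated} without further elaboration; your instinct to bring in Claim~\ref{claim:basic_compression}(4) at that point is the right way to make the step rigorous.
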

\begin{proof}
We first show that there is a block-diagonalization $\vD = \diag(\vD_1, \dots, \vD_d)$ of $\vA$ and invertible matrices $L, R$ satisfying $\vD = L \vA R^{-1}$ such that $\vH = \vA_{1, \vD, L}$.

Let $\vF$ be the \IBCtuple of $\vE$ from which $\vH$ is obtained by the Single \IBCtuple Construction Algorithm. 
By Definition~\ref{def:ibctuple}, there is a minimum block-diagonalization $\vE' = \diag(\vE'_1, \dots, \vE'_{d'})$ of $\vE$ 
and invertible matrices $L'$ and $R'$ satisfying $\vE' = L' \vE (R')^{-1}$
such that $\vE_{1, \vE', L'} = \vF$. 
%Let $L''$ and $R''$ be arbitrary invertible matrices such that $L'' \vE (R'')^{-1}$. 
We describe a method to construct the desirable $\vD$ based on $\vE$ and $\vE'$ as follows:
\begin{enumerate}
    \item Let $d = d'$. Let $\vC_1 = \vH$, and for every $2 \leq i \leq d$, let $\vC_i$ be the output of the Single \IBCtuple Construction Algorithm for $\vE_{i, \vE', L'}$. 
    \item Let $\vB_1', \dots, \vB_{k_0}'$ be a representative irrelevant \IBCtuple sequence of $\vQ$. 
    For $j = 1, \dots, k_0$, 
    \begin{enumerate}
        \item Select a maximal sequence $\vC_{d+1}, \dots, \vC_{d + t} \in \ibcextensionspace_{\vB_{j}', \vA}$ satisfying the following two conditions:
        \begin{enumerate}
            \item For any $1 \leq i' \leq t$, $\vC_{d + i'}$ is not an extension of an \IBCtuple in the linear space spanned by $\vB_{j, d + 1}, \dots, \vB_{j, d + i' - 1}$ and their correlated \IBCtuples in $\ibcspace_{\vB_j'}$, where $\vB_{j, d + 1}, \dots, \vB_{j, d + i' - 1}$ are \IBCtuples in $\ibcspace_{\vB_j'}$ such that $\vC_{d + 1}, \dots, \vC_{d + i' - 1}$ are the extensions of $\vB_{j, d + 1}, \dots, \vB_{j, d + i' - 1}$.  
            \item The linear space spanned by $\vB_{j, \xi + 1}, \dots, \vB_{j, \xi + t}$ and their correlated \IBCtuples equals $\ibcspace_{\vB_j'}$.
        \end{enumerate}
        %in the linear space spanned by all the matrix tuples in $\ibcextensionspace_{\vB_{j}', \vA}$ correlated to $\vC_{\xi + 1}, \dots, \vC_{\xi + i' -  1}$ for any $1 \leq i' \leq t$ and the linear space spanned by all the matrix tuples in $\ibcextensionspace_{\vB_{j}', \vA}$ correlated to $\vC_{\xi + 1}, \dots, \vC_{\xi + t}$ equals $\ibcextensionspace_{\vB_j', \vA}$.
        \item Set $d = d + t$. 
    \end{enumerate} 
    \item Let \[\vA' = \begin{bmatrix}
        \vC_1 \\ \vdots \\ \vC_d
    \end{bmatrix},\] and $R$ be an arbitrary matrix such that the $(\sum_{j = 1}^{i-1} \delta_j ) + 1$ to $(\sum_{j = 1}^{i} \delta_j)$-th row of $R$ form a linear basis of $\rowvectorspace(\vC_i)$ for any $i \in d$, where $\delta_j = \dim(\rowvectorspace(\vC_j))$. 
    
    %first $\dim(\rowvectorspace(\vC_1))$ rows of $R$ is a linear basis of $\rowvectorspace(\vC_1)$, and the last $\dim(\rowvectorspace(\vC_2)) + \dots + \dim(\rowvectorspace(\vC_\xi))$ rows of $R$ is a linear basis of $\langle \rowvectorspace(\vC_2))\cup \dots\cup \rowvectorspace(\vC_\xi))$. 
    \item Return $\vA' R^{-1}$. 
\end{enumerate}
In the rest of this proof, we show that $R$ is a square invertible matrix, and $\vA' R^{-1}$ is the desirable $\vD$. 
We first show that there is an invertible matrix $L$ such that $\vA' = L \vA$. 

By the construction of $\vA'$, the sequence of \IBCtuples $\vB_1, \dots, \vB_k, \vB'_1, \dots, \vB_{k_0}'$ is a representative \IBCtuple sequence of $\vQ$. 
For each $i \in [d]$, if $\vC_i$ is obtained by the Single \IBCtuple Construction Algorithm, then let $\vC_{i, 1}, \dots, \vC_{i, h_i}$ denote the matrix tuples selected by the Single \IBCtuple Construction Algorithm to construct $\vC_i$, otherwise, let $h_i = 1$ and $\vC_{i, 1} = \vC_i$. 
If $\vC_{i, j}$ is an extension of an \IBCtuple for $\vQ$, we use $\vB_{i, j}$ to denote the \IBCtuple for $\vQ$ such that $\vC_{i, j}$ is an extension of $\vB_{i, j}$. 
Let $B$ be the set of all \IBCtuples obtained, and 
for each of $\vB \in \{\vB_1, \dots, \vB_k, \vB'_1, \dots, \vB_{k_0}'\}$,
let $B_\vB$ be all 
the \IBCtuples in $B$ right-equivalent to $\vB$. 
By Claim~\ref{claim:vE_direct_sum}, Claim~\ref{claim:property_ibctuple_reconstruction} and the method to construct $\vA'$ above, for each of $\vB \in \{\vB_1, \dots, \vB_k, \vB'_1, \dots, \vB_{k_0}'\}$,
the linear space spanned by \IBCtuples in $B_\vB$ and their correlated \IBCtuples is $\ibcspace_{\vB}$, and for each $\vB' \in B_{\vB}$, $\vB'$ is not in the linear space spanned by \IBCtuples in $B_\vB \setminus \{\vB'\}$ and their correlated \IBCtuplesnospace. 
By Claim~\ref{claim:sequential_ibctuple_simultaneous}, $\rowtuplespace(\vA') / W_\beta = \rowtuplespace(\vA) / W_\beta$. 

On the other hand, by the definition of matrix tuple $\vE$ and $\vC_1, \dots, \vC_d$, we have
\[\langle T_{h_\beta} \cup \dots \cup T_\ell \rangle = (\rowtuplespace(\vC_1) \cap \langle T_{h_\beta} \cup \dots \cup T_\ell \rangle) \oplus \dots \oplus (\rowtuplespace(\vC_\xi) \cap \langle T_{h_\beta} \cup \dots \cup T_\ell \rangle).\] 
Hence $\rowtuplespace(\vA) = \rowtuplespace(\vA')$, and thus there is an invertible matrix $L$ such that $\vA' = L \vA$. 

Next, we show that the matrix $R$ constructed above is an invertible matrix such that $\vA' R^{-1}$ is the desirable block-diagonal matrix tuple. By Claim~\ref{claim:property_ibctuple_reconstruction},
for $\vC_i$ obtained in Step 1 for the construction of $\vA'$, we have 
\[\begin{split}& \rowvectorspace(\vC_i) / W_\beta \\ = & (\rowvectorspace(\vC_{i, 1}) / W_\beta) \oplus \dots \oplus (\rowvectorspace(\vC_{i, h_i - 1}) / W_\beta) \\
= & \rowtuplespace(\vB_{i, 1}\cdot (v_1 + W_\beta, \dots, v_{m_\beta} + W_\beta)^T)  \oplus \dots  \\ & \oplus \rowtuplespace(\vB_{i, h_i - 1}\cdot (v_1 + W_\beta, \dots, v_{m_\beta} + W_\beta)^T),   \end{split}\]
where $(v_1 + W_\beta, \dots, v_{m_\beta} + W_\beta)$ is the formatting vector of $\vQ$. 
For $\vC_i$ obtained in Step 2 for the construction of $\vA'$, we have 
\[\rowvectorspace(\vC_i) / W_\beta = \rowtuplespace(\vB_{i, 1}\cdot (v_1 + W_\beta, \dots, v_{m_\beta} + W_\beta)^T).\] 
By 
Claim~\ref{claim:sequential_ibctuple_simultaneous}, Claim~\ref{claim:vE_direct_sum}, 
\[\rowvectorspace(\vA) / W_\beta = (\rowvectorspace(\vC_1) / W_\beta) \oplus \dots \oplus (\rowvectorspace(\vC_d) / W_\beta).\]
In addition, by $\vE$ and the Single \IBCtuple Construction Algorithm, 
\[\rowvectorspace(\vA') \cap W_\beta = (\rowvectorspace(\vC_1) \cap W_\beta) \oplus \dots \oplus (\rowvectorspace(\vC_d) \cap W_\beta).\]
By Fact~\ref{fact:ibctuple_diagonalzation_most_basic}, $R$ is an invertible matrix such that $\vA' R^{-1}$ is a block-diagonal matrix tuple $\vD$ such that $\vC = \vA_{1, \vD, L}$.
In addition, $\vD_1$ must be indecomposable, because otherwise $\vE'$ is not a minimum block-diagonalization of $\vE$ by Claim~\ref{claim:submatrix_tuples_correlated}. Thus, $\vC$ is an \IBCtuple of $\vA$. 
\end{proof}

\begin{lemma}\label{lemma:single_ibctuple_relevant_properties}
For the input $\vB_1, \dots, \vB_k, \vE, \vG, \vF$ of the Single \IBCtuple Construction Algorithm satisfying the following conditions:
\begin{enumerate}
    \item Each of $\vB_1, \dots, \vB_k$ satisfies the space property, row tuple space property, dimension property, and the block-compatible property. 
    \item $\vF$ is an \IBCtuple of $\vE$ such that the \IBCtuple space of $\vE$ is obtained by the Direct Sum \IBCtuple Algorithm, and $\vF$ is an \IBCtuple in the \IBCtuple space. 
\end{enumerate}
    the output $\vH$ of the Single \IBCtuple Construction Algorithm satisfies the space property, row tuple space property with parameters $t_{\vH, 1}, \dots, t_{\vH, \sigma}$, dimension property, and block-compatible property. 
\end{lemma}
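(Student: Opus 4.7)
The plan is to establish the four properties of Definition~\ref{def:four_prop} for $\vH$ by combining the corresponding properties of each $\vB_{k'}$ (via Claim~\ref{claim:properties_ibctuple_extension} and Lemma~\ref{lemma:essential_extension_ibctuple_compatible} for the associated extension spaces) with the analogous properties for $\vF$ (which, being an \IBCtuple produced by the Direct Sum Decomposition Algorithm of Lemma~\ref{lem:direct_sum_ibctuple_algo}, satisfies all four properties on $\vE$), glued together by the canonicity result of Lemma~\ref{lemma:canonical_single_ibctuple_construction}. The central object will be an explicit description of $\ibcspace_\vH$ and $\ibcspacekernel_\vH$ as the linear spans of all outputs of the Single \IBCtuple Construction Algorithm run on all $\vF' \in \ibcspace_\vF$ (respectively $\vF' \in \ibcspacekernel_\vF$) and on all internal legal choices of essential extensions $\vG \in \ibcextensionspace_{\vB_{k'}, \vA}$ (respectively $\vG \in \ibcextensionspacekernel_{\vB_{k'}, \vA}$) in Step 2.

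First I would set $t_{\vH, i}$ as in Step 4 of the algorithm: if $e_i \vH$ comes from row $j$ of some essential extension of an \IBCtuple right-equivalent to $\vB_{k'}$, put $t_{\vH, i} = t_{\vB_{k'}, j}$; if $e_i \vH$ comes from Step 3, put $t_{\vH, i} = u_{\vE, t_{\vF, i}}$. Using Equations (\ref{equ:reconstruction_1}) and (\ref{equ:reconstruction_2}) from the proof of Lemma~\ref{lemma:canonical_single_ibctuple_construction}, each $\vH' \in \ibcspace_\vH$ decomposes uniquely (up to elements lying in the proposed kernels of $\vF$ and of the extension spaces) as a sum of pieces controlled by the $\vB_{k'}$'s and a piece controlled by $\vF$. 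The row tuple space property then reduces to two independent facts: the projection of $\ibcextensionspace_{\vB_{k'}, \vA}$ onto its $j$-th row equals $T_{t_{\vB_{k'}, j}}$ (property 5 of Claim~\ref{claim:properties_ibctuple_extension}), and the projection of $\ibcspace_\vF$ onto its $i$-th row equals the appropriate $T_{\vE, t_{\vF, i}}$, which corresponds to $T_{u_{\vE, t_{\vF, i}}}$ via $(v_1, \dots, v_{\dim(W_\beta)})^T$. The kernel characterization follows by tracking membership in $\ibcspacekernel_\vF$ and the extension kernels row by row.

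For the space property I will show that the output is an \IBCtuple iff the chosen $\vF'$ lies outside $\ibcspacekernel_\vF$ and every $\vG_j$ chosen lies outside the corresponding $\ibcextensionspacekernel_{\vB_{k'}, \vA}$. The forward direction uses Lemma~\ref{lemma:single_ibctuple_relevant}: its proof reassembles $\vH'$ as the first block of a block-diagonalization of $\vA$ exactly when none of the selected pieces collapse, which is a direct translation of the non-kernel hypotheses. The reverse direction uses Claim~\ref{claim:sequential_ibctuple_simultaneous}: if any piece collapses into the kernel, the resulting matrix tuple loses a row and therefore cannot arise as an indecomposable block of full rank. The dimension property then follows because, by property 6 of Claim~\ref{claim:properties_ibctuple_extension} and the dimension property of $\vF$, the row vector space contributed by each piece has constant dimension across $\ibcspace_\vH \setminus \ibcspacekernel_\vH$, and the direct sum decomposition in Equation (\ref{equ:reconstruction_1}) makes these dimensions additive; the characterization of strongly correlated \IBCtuples via a strict dimension drop of the difference transfers immediately from the pieces to $\vH$.

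The block-compatible property will be the main obstacle. Given a block-diagonalization $\vD = \diag(\vD_1, \ldots, \vD_d)$ of $\vA$ with $\vD = L \vA R^{-1}$, Claim~\ref{claim:submatrix_tuples_correlated} yields a corresponding block-diagonalization $\vF^\circ$ of $\vE$. The block-compatible property of $\vF$ (on $\vE$, relative to $\vF^\circ$) combined with the block-compatible property of each $\ibcextensionspace_{\vB_{k'}, \vA}$ (Lemma~\ref{lemma:essential_extension_ibctuple_compatible}) projects each piece in the reconstruction of an arbitrary $\vH' \in \ibcspace_\vH$ to a piece that remains in the corresponding space; the delicate point is that the Single \IBCtuple Construction Algorithm's choices must be shown to commute with $\proj_{\vD, L}$ up to elements of $\ibcspacekernel_\vH$, because different Step 2 selections differ by kernel elements. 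Resolving this commutation cleanly, so that $\proj_{\vD, L}(\vH') \in \ibcspace_\vH$ and similarly for the kernel, is where the interaction between the two-level choice structure (extensions of $\vB_{k'}$ inside $\vA$, and $\vF'$ inside $\vE$) must be handled with care; Lemma~\ref{lemma:canonical_single_ibctuple_construction} supplies exactly the right tool by guaranteeing that any two valid reconstructions of projected inputs are right-equivalent, hence differ only by an element of $\ibcspacekernel_\vH$.
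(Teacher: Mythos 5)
Your proposal takes essentially the same route as the paper's proof: define $\ibcspacekernel_\vH$ via the stacked extension kernels with last block zero, define $\ibcspace_\vH$ as the span of outputs over $\vF'$ right-equivalent to $\vF$, and then derive the remaining three properties by transporting the four properties of the $\ibcextensionspace_{\vB_{k'}, \vA}$ spaces (Claim~\ref{claim:properties_ibctuple_extension}, Lemma~\ref{lemma:essential_extension_ibctuple_compatible}) and of $\vF$ (Lemma~\ref{lem:direct_sum_ibctuple_algo}) through the correspondence between $\vE$ and $\vA$. The paper's own proof is in fact terser than yours — it explicitly verifies only the space property and then asserts the row tuple space, dimension, and block-compatible properties follow from ``the correspondence,'' whereas you correctly flag the block-compatible property (commuting the algorithm's internal choices with $\proj_{\vD, L}$ modulo $\ibcspacekernel_\vH$) as the delicate step and sketch how Claim~\ref{claim:submatrix_tuples_correlated} and Lemma~\ref{lemma:canonical_single_ibctuple_construction} can be used to close it. One small imprecision: since $\ibcspacekernel_\vF$ is trivial by Lemma~\ref{lem:direct_sum_ibctuple_algo}, your phrase ``outputs on $\vF' \in \ibcspacekernel_\vF$'' is vacuous; the kernel of $\vH$ comes entirely from the internal Step-2 ambiguity (the second clause of your description), which matches the paper's Equation~(\ref{equ:ibctuple_space_kernel_relevant}).
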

\begin{proof}
    Let $T_1, \dots, T_\zeta$ be the hierarchical row tuple decomposition with parameters $h_0, \dots, h_\beta$ for a matrix tuple $\vA$,  
    $\vB_1, \dots, \vB_k$ be the representative relevant \IBCtuple sequence for a $\beta$-quotient matrix tuple of $\vA$ with respect to $T_1, \dots, T_\zeta$ with parameters $t_{\vB_i, 1}, \dots, t_{\vB_i, \sigma_i}$, essential extension space $\ibcextensionspace_{\vB_i, \vA}$, essential extension space kernel $\ibcextensionspacekernel_{\vB_i, \vA}$ for every $1 \leq i \leq k$.
    
    Let 
    $\vE, \vG, \vE_1,\dots, \vE_\rho, (v_1, \dots, v_{\dim(W_\beta)}), T_{\vE, 1}, \dots, T_{\vE, \rho}$ be the output of the Compression Matrix Tuple Algorithm for $\vB_1, \dots, \vB_k$, and \IBCtuple $\vF$ be the \IBCtuple of $\vE$ such that $\vH$ is obtained by running the Single \IBCtuple Construction From Compression Algorithm for $\vF$. 
     
    We first prove the space property. 
    Denote \[\vH = \begin{bmatrix}
        \vH_1 \\ \vdots \\ \vH_h
    \end{bmatrix}.\] 
    Let $\vB_i'$ denote the \IBCtuple of $\vQ$ such that $\vH_i$ is an \IBCtuple extension of $\vB_i'$ for each $i \in [h-1]$. 
    By Claim~\ref{claim:properties_ibctuple_extension}, for any $\vH'$ in 
    \begin{equation}\label{equ:ibctuple_space_kernel_relevant}\ibcspacekernel = \left\langle \left\{\begin{bmatrix}
            \vH_{1}' \\ \vdots \\ \vH_{h-1}' \\ 0
        \end{bmatrix} :\forall i \in [h - 1], \vH_{i} \in \ibcextensionspacekernel_{\vB_{i}', \vA}\right\}\right\rangle,
    \end{equation}
    $\vH + \vH'$ is also a valid output for the Single \IBCtuple Construction Algorithm for $\vF$. 
    Furthermore, if $\vH + \vH''$ is a valid output for $\vF$, then $\vH''$ is in $K$. 

    On the other hand, for another \IBCtuple $\vF'$ of $\vE$ that is right-equivalent to $\vF$, 
    let $\vH'$ be an output of the Single \IBCtuple Construction Algorithm for $\vF'$. 
    By Lemma~\ref{lem:direct_sum_ibctuple_algo}, $\vF + \vF'$ is also an \IBCtuple for $\vE$, and $\vH + \vH'$ is a valid output for the Single \IBCtuple Construction Algorithm for $\vF + \vF'$. 

    Hence, let $\ibcspace$ be the linear space spanned by all the possible outputs by the \IBCtuples right-equivalent to $\vF$ for the Single \IBCtuple Construction Algorithm. 
    Then, every matrix tuple in $\ibcspace$ but not in $\ibcspacekernel$ is a valid output for some \IBCtuple right-equivalent to $\vF$.
    Hence, let $\ibcspace_\vH = \ibcspace$ and $\ibcspacekernel_\vH = \ibcspacekernel$. We obtain the space property of $\vH$.

    Note that the essential extensions of $\vB_1, \dots, \vB_k$ satisfy the row tuple space property, dimension property, and block-compatible property by Claim~\ref{claim:properties_ibctuple_extension} and Lemma~\ref{lemma:essential_extension_ibctuple_compatible},
    and the \IBCtuple $\vF$ satisfies the row tuple space property, dimension property, and block-compatible property by Lemma~\ref{lem:direct_sum_ibctuple_algo}. 
    Together with the correspondence between essential extensions $\vB_1, \dots, \vB_k$ and $\vE$, we obtain the row tuple space property, dimension property, and block-compatible property for $\vH$. 
\end{proof}

\subsubsection{Relevant representative \IBCtuple sequence}

Finally, we give an algorithm to canonically compute an \IBCtuple of $\vA$, which contains row tuples in $\rowtuplespace(\langle T_{h_\beta},\dots, T_\zeta \rangle)$, together with its \IBCtuple space and \IBCtuple space kernel. 
Actually, the relevant representative \IBCtuple sequence contains only a single \IBCtuple all the \IBCtuples in the compression matrix tuple are equivalent. 
\begin{framed}
\noindent \textbf{Relevant \IBCtuple Space Algorithm}

\noindent \textbf{Input:} 
\begin{enumerate}
    \item Matrix tuple $\vA$ with hierarchical row tuple decomposition $T_1, \dots, T_\zeta$ and parameters $h_0, \dots, h_\beta$.
    \item Representative relevant \IBCtuple sequence $\vB_1, \dots, \vB_k$ for a $\beta$-quotient matrix tuple of $\vA$ with respect to $T_1, \dots, T_\zeta$ with parameters $t_{\vB_i, 1}, \dots, t_{\vB_i, \sigma_i}$, essential extension space $\ibcextensionspace_{\vB_i, \vA}$, essential extension space kernel $\ibcextensionspacekernel_{\vB_i, \vA}$ for every $1 \leq i \leq k$. 
\end{enumerate}

\noindent \textbf{Output:}  
\IBCtuple $\vH$ of $\vA$ with $\ibcspace_\vH, \ibcspacekernel_{\vH}, t_{\vH, 1}, \dots, t_{\vH, \sigma}$, or a nontrivial row tuple subspace $S < T_{h_\beta}$. 

\begin{enumerate}
    \item Run the Compression Matrix Tuple  Algorithm for $\vB_{1}, \dots, \vB_{k}$ to get $\vE, \vG, T_{\vE, 1}, \dots, T_{\vE, \rho}$. 
    \item Run the Direct Sum Decomposition Algorithm with $\vE$ with direct sum decomposition $T_{\vE, k + 1}, \dots, T_{\vE, \rho}, T_{\vE, 1}, \dots, T_{\vE, k}$. 
    If the output is $S < T_{\vE, k+1}$, then return the corresponding subspace of $S'$ in $T_{h_\beta}$, otherwise, let $\vF$, $\ibcspace_{\vF}, \ibcspacekernel_{\vF}, t_{\vF, 1}, \dots, t_{\vF, \eta}$ be the output of the algorithm with $\vF$ having $\eta$ rows. 
    
        \item Run Single \IBCtuple Construction Algorithm on $\vF$, and let 
        \[\vH = \begin{bmatrix}
            \vH_{1} \\ \vdots \\ \vH_{h}
        \end{bmatrix}\] and $t_{\vH, 1}, \dots, t_{\vH, \sigma}$ be the output, and 
        $\vB_1', \dots, \vB_{h-1}' \in \{\vB_1, \dots, \vB_k\}$ such that $H_j$ is an extension of an \IBCtuple right-equivalent to $\vB_j'$ for any $j \in [h-1]$.
        
        %let $s_1, \dots, s_{h-1}$ be the integer such that $\vC_{j} \in \ibcextensionspace_{\vB_{s_{j}}, \vA}$ for any $1 \leq j \leq h -1$.
        %\item Let $\ibcextensionspacekernel_{\vC} = \left\langle \left\{\begin{bmatrix}     \vC_1' \\ \vdots \\ \vC_{h-1}' \\ 0\end{bmatrix} :\forall j \in [h - 1], \vC_{j} \in \ibcspacekernel_{\vB_{s_{j}}, \vA}\right\}\right\rangle$.
        \item Compute an arbitrary linear basis $\vF_1, \dots, \vF_{\dim(\ibcspace_{\vF})}$ of $\ibcspace_{\vF}$, and compute $\vH_{1}, \dots, \vH_{\dim(\ibcspace_{\vF})}$ as the outputs of the Single \IBCtuple Construction Algorithm for $\vF_1, \dots, \vF_{\dim(\ibcspace_{\vF})}$.
        \item Return $\vH, \ibcspace_\vH, \ibcspacekernel_\vH, t_{\vH, 1}, \dots, t_{\vH, \sigma}$, where  $\ibcspacekernel_{\vH}$ is obtained by Equation (\ref{equ:ibctuple_space_kernel_relevant}), and $\ibcspace_{\vH} = \langle \{ \vH_{1}, \dots, \vH_{\dim(\vF)} \} \cup \ibcspacekernel_{\vH} \rangle$. 
\end{enumerate}
\vspace{-.3cm}
\end{framed}

\begin{claim}\label{claim:relevant_ibctuple_space}

    The Relevant \IBCtuple Space Algorithm satisfies the following properties:
    \begin{enumerate}
        \item 
            Given a matrix tuple $\vA = \M(n \times m, \F_q)^\ell$ with a hierarchical row tuple decomposition $T_1, \dots, T_\zeta$ and parameters $h_0, \dots, h_\beta$, 
            and a representative relevant \IBCtuple sequence $\vB_1, \dots, \vB_k$ for a $\beta$-quotient matrix tuple of $\vA$ with essential extension space $\ibcextensionspace_{\vB_i, \vA}$ and essential extension space kernel $\ibcextensionspacekernel_{\vB_i, \vA}$ for every $i \in [k]$, in $\poly(n, m, \ell, \log q)$ running time, 
            the output of the algorithm satisfies the following conditions:
            
        \begin{enumerate}
            \item The output is a characteristic block-compatible nontrivial subspace $S < T_{h_\beta}$. 
            \item The output is an \IBCtuple $\vH$ of $\vA$ satisfying the space property with $\ibcspace_\vH, \ibcspacekernel_\vH$, row tuple space property with parameters $t_{\vH, 1}, \dots, t_{\vH, \sigma}$, dimension property, and the block-compatible property. 
        \end{enumerate}
        \item The algorithm is canonical in the following sense: For two inputs $\vA, T_1, \dots, T_\zeta, \vB_1, \dots, \vB_k$ with $\ibcextensionspace_{\vB_i, \vA}$ and $\ibcextensionspacekernel_{\vB_i, \vA}$ for each $i\in[k]$ and $\vA', T_1', \dots, T_\zeta', \vB_1', \dots, \vB_k'$ with $\ibcextensionspace_{\vB_i', \vA'}$ and $\ibcextensionspacekernel_{\vB_i', \vA'}$ for each $i\in[k]$ satisfying the following conditions:
        \begin{enumerate}
        \item[i.] $\vA' = L \vA R^{-1}$. 
        \item[ii.] $T_1, \dots, T_\zeta$ and $T_1', \dots T_\zeta'$ have the same parameters and $T_i' = T_i R^{-1}$ for each $i \in [\zeta]$.
        \item[iii.] $\vB_i$ and $\vB_i'$ are right-equivalent for every $i \in [k]$. 
        \item[iv.] $\ibcextensionspace_{\vB_i', \vA'} = \{\vC R^{-1} :\vC \in \ibcextensionspace_{\vB, \vA}\}$ and $\ibcextensionspacekernel_{\vB_i', \vA'} = \{\vC R^{-1} :\vC \in \ibcextensionspacekernel_{\vB, \vA}\}$ for every $i \in [d]$.
        \end{enumerate}
        the outputs $\vH, \ibcspace_\vH, \ibcspacekernel_\vH, t_{\vH, 1}, \dots, t_{\vH, \sigma}$ and $\vH', \ibcspace_{\vH'}, \ibcspacekernel_{\vH'}, t_{\vH', 1}, \dots, t_{\vH', \sigma'}$ satisfying the following conditions for arbitrary $L$ and $R$ satisfying the condition (i) - (iv)
        \begin{enumerate}
            \item $\vH$ and $\vH'$ are right-equivalent. 
            \item $\ibcspace_{\vH'} = \{\vH'' R^{-1} :\vH'' \in \ibcspace_{\vH}\}$ and $\ibcspacekernel_{\vH'} = \{\vH'' R^{-1} :\vH'' \in \ibcspacekernel_{\vH}\}$.
            \item $\sigma = \sigma'$ and $t_{\vH, i} = t_{\vH', i}$ for any $i\in [\sigma]$. 
        \end{enumerate}

    \end{enumerate}
\end{claim}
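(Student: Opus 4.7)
The plan is to obtain both properties compositionally from the lemmas already proved for the three subroutines invoked by the algorithm. The running time bound is immediate: Claim~\ref{claim:running_time_compression} gives a $\poly(n,m,\ell,\log q)$ bound for Step 1, Lemma~\ref{lem:direct_sum_ibctuple_algo} for Step 2, and Lemma~\ref{lemma:canonical_single_ibctuple_construction} for each invocation of the Single \IBCtuple Construction Algorithm in Steps 3 and 4; since $\dim(\ibcspace_\vF)\leq \poly(n,m,\ell)$, Step 4 runs in time $\poly(n,m,\ell,\log q)$ as well.

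For the correctness of Case 1(a), if the Direct Sum Decomposition Algorithm outputs a subspace $S<T_{\vE,k+1}$, then by Lemma~\ref{lem:direct_sum_ibctuple_algo} this $S$ is a characteristic block-compatible subspace of $T_{\vE,k+1}$, where $T_{\vE,k+1}$ corresponds to (a linear basis of) $T_{h_\beta}$ via the formatting vector $(v_1,\dots,v_{\dim(W_\beta)})$ from the Compression Matrix Tuple Algorithm. I would then invoke the correspondence established in Claim~\ref{claim:submatrix_tuples_correlated} and Claim~\ref{claim:vE_direct_sum} to translate $S$ into a nontrivial subspace $S^\sharp$ of $T_{h_\beta}$, and use the canonicity part of Claim~\ref{claim:running_time_compression} together with the fact that the correspondence intertwines block-diagonalizations and left-right actions to verify that $S^\sharp$ inherits the characteristic block-compatible properties from $S$.

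For Case 1(b), Lemma~\ref{lemma:single_ibctuple_relevant} yields that each of $\vH, \vH_{1},\dots,\vH_{\dim(\ibcspace_{\vF})}$ is an \IBCtuple of $\vA$, and Lemma~\ref{lemma:single_ibctuple_relevant_properties} gives the space, row tuple space, dimension, and block-compatible properties for $\vH$. The subtle point here is that $\ibcspace_\vH$ as defined in Step 5 must coincide with the linear span of all possible outputs of the Single \IBCtuple Construction Algorithm over all \IBCtuples right-equivalent to $\vF$ (the definition used in Lemma~\ref{lemma:single_ibctuple_relevant_properties}); this I would verify by combining the linearity proven inside the proof of Lemma~\ref{lemma:single_ibctuple_relevant_properties} (which shows closure under addition and under $\ibcspacekernel_\vH$) with Lemma~\ref{lem:direct_sum_ibctuple_algo}'s description of $\ibcspace_\vF$ as a span together with $\ibcspacekernel_\vF$, so that a basis of $\ibcspace_\vF$ indeed spans everything modulo $\ibcspacekernel_\vH$.

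For the canonicity (property 2), I would chain the canonicity statements for the three subroutines. Under the hypotheses (i)--(iv), Claim~\ref{claim:running_time_compression} provides invertible $L_\vE, R_\vE$ making the outputs of Step 1 correspond; Lemma~\ref{lem:direct_sum_ibctuple_algo} then gives either the second conclusion (when the output is a subspace, which maps correctly under $R^{-1}$ through the correspondence above) or an \IBCtuple $\vF'$ right-equivalent to $\vF$ with $\ibcspace_{\vF'}$ and $\ibcspacekernel_{\vF'}$ related to $\ibcspace_\vF, \ibcspacekernel_\vF$ via $R_\vE^{-1}$; Lemma~\ref{lemma:canonical_single_ibctuple_construction} then certifies that the $\vH'$ produced for $\vF'$ is right-equivalent to $\vH$ with matching parameters. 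The descriptions of $\ibcspace_{\vH'}$ and $\ibcspacekernel_{\vH'}$ follow by applying Lemma~\ref{lemma:canonical_single_ibctuple_construction} to each basis element $\vF_i$ together with the explicit formula for $\ibcspacekernel_\vH$ in Equation~(\ref{equ:ibctuple_space_kernel_relevant}), which is preserved since $\ibcextensionspacekernel_{\vB_i',\vA'}=\ibcextensionspacekernel_{\vB_i,\vA}R^{-1}$ by hypothesis (iv). The main obstacle I anticipate is the bookkeeping needed to check that the linear span $\ibcspace_\vH$ is well-defined and transforms correctly under $R^{-1}$ despite the arbitrary choice of linear basis in Step 4; this is handled by showing both spans coincide with the canonically defined ``span of all valid outputs over right-equivalent $\vF$''.
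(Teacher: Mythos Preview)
Your proposal is correct and follows essentially the same route as the paper's own proof, which simply cites Lemma~\ref{lem:direct_sum_ibctuple_algo}, Claim~\ref{claim:running_time_compression}, Lemma~\ref{lemma:single_ibctuple_relevant}, and Lemma~\ref{lemma:single_ibctuple_relevant_properties} for property~1 and Lemma~\ref{lem:direct_sum_ibctuple_algo}, Claim~\ref{claim:running_time_compression}, and Lemma~\ref{lemma:canonical_single_ibctuple_construction} for property~2. Your version is more explicit about the bookkeeping (the correspondence between $T_{\vE,k+1}$ and $T_{h_\beta}$, and the well-definedness of $\ibcspace_\vH$ independent of the basis choice in Step~4), which the paper leaves implicit.
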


\begin{proof}
The first property is obtained by the definition of the algorithm, Lemma~\ref{lem:direct_sum_ibctuple_algo}, Claim~\ref{claim:running_time_compression}, Lemma~\ref{lemma:single_ibctuple_relevant}, and 
Lemma~\ref{lemma:single_ibctuple_relevant_properties}. 
The second property is obtained by Lemma~\ref{lem:direct_sum_ibctuple_algo}, Claim~\ref{claim:running_time_compression}, and Lemma~\ref{lemma:canonical_single_ibctuple_construction}. 
\end{proof}

\subsection{\IBCtuples from quotient matrix tuple}\label{sec:quotient_matrix_tuple_ibc_final}

We put all the pieces from the previous subsections together to give an algorithm to compute a representative \IBCtuple sequence with \IBCtuple space and \IBCtuple space kernel for each \IBCtuple in the sequence.

In our algorithm, we first decompose $W_\beta$ as a direct sum of row vector subspaces for the purpose of computing the essential extensions by the Essential Extension Algorithm.  
Then, we run the Essential Extension Algorithm for each \IBCtuple in the given representative \IBCtuple sequence of the quotient matrix tuple. 
Based on these essential extensions, we partition the representative \IBCtuple sequence of the quotient matrix tuple into a representative relevant \IBCtuple sequence and a representative irrelevant \IBCtuple sequence, depending on whether the \IBCtuples are relevant to $W_\beta$. 
If an \IBCtuple is irrelevant to $W_\beta$, then by Lemma~\ref{lem:irrelevant_ibctuple}, the essential extensions obtained are \IBCtuples of the input matrix tuple. 
Otherwise, we run the Relevant \IBCtuple Space Algorithm to obtain to compute the \IBCtuples of the input matrix tuple that contains row vectors in $W_\beta$.

\begin{framed}
\noindent \textbf{\IBCtuple Space Algorithm}

\noindent \textbf{Input:}

\begin{enumerate}
    \item Matrix tuple $\vA$ with hierarchical row tuple decomposition $T_1, \dots, T_\zeta$ and parameters $h_0, \dots, h_\beta$.
    \item $\beta$-quotient matrix tuple $\vQ$ of $\vA$ with formatting vector $(v_1 + W_\beta, \dots, v_{m_\beta} + W_\beta)$.
    \item Representative \IBCtuple sequence $\vB_1, \dots, \vB_k$ for $\vQ$ with parameters $t_{\vB_i, 1}, \dots,$ $ t_{\vB_i, \sigma_i}$, \IBCtuple space $\ibcspace_{\vB_i}$, \IBCtuple space kernel $\ibcspacekernel_{\vB_i}$ for every $i \in [k]$. 
\end{enumerate}

\noindent \textbf{Output:} A representative \IBCtuple sequence $\vH_1, \dots, \vH_r$ of $\vA$ with parameters $t_{\vH_i, 1}, \dots,$ $ t_{\vH_i, \delta_i}$, \IBCtuple space $\ibcspace_{\vH_i}$, \IBCtuple space kernel $\ibcspacekernel_{\vH_i}$ for every $i \in [r]$, or a row tuple subspace $S< T_i$ for some $1 \leq i \leq \zeta$.

\noindent \textbf{Algorithm:}

\begin{enumerate}
    \item Run the Row Vector Space Direct Sum Algorithm with $T_{h_\beta}$. If the output is a subspace $S < T_{h_\beta}$, then return $S$. Otherwise let $W_{\beta, 1}, \dots, W_{\beta, w}$ denote the output.
    \item For $i = 1, \dots, k$, run the Essential Extension Algorithm for $\vB_i$, 
    \begin{enumerate}
        \item If the output is a subspace $S < T_i$ for some $i \in [\zeta]$, then return $S$. 
        \item If the output is $W_{\beta, j}'$ for some $j\in [w]$, then 
        let $k$ be the smallest integer such that $T_{h_\beta}^{(k)} = W_{\beta, j}$, and return $\{\va \in T_{h_\beta}: \va^{(k)} \in W_{\beta, j}\}$. 
        %return the output of Canonical Row Tuple Space Refinement for $T_{h_\beta}$ with $W_{\beta, j}'$. 
        \item Otherwise, let  $\ibcextensionspace_{\vB_i, \vA}$ and $\ibcextensionspacekernel_{\vB_i, \vA}$ be the output.
    \end{enumerate}
    \item Partition $\vB_1, \dots, \vB_k$ into a representative relevant \IBCtuple sequence $\vB_{i_1}, \dots, \vB_{i_{k_0}}$ and a representative irrelevant \IBCtuple sequence $\vB_{j_1}, \dots, \vB_{j_{k_1}}$ such that $i_1 < \dots < i_{k_0}$ and $j_1 < \dots < j_{k_1}$.
    \item Run the Relevant \IBCtuple Space Algorithm with $\vB_{i_1}, \dots, \vB_{i_{k_0}}$. If the output is a subspace of $T_{h_\beta}$, then return this subspace. Otherwise, let $\vH, \ibcspace_\vH, \ibcspacekernel_\vH, t_{\vH,1}, \dots, t_{\vH, \sigma_\vH}$ be the output. 
    \item Return an \IBCtuple sequence $\vH_1, \dots, \vH_{k_1 + 1}$ as follows:
    \begin{enumerate}
        \item For each $k'\in [k_1]$, $\vH_{k'}$ is an arbitrary matrix tuple in $\ibcextensionspace_{\vB_{j_{k'}}, \vA} \setminus \ibcextensionspacekernel_{\vB_{j_{k'}}, \vA}$, $\ibcspace_{\vH_{k'}} = \ibcextensionspace_{\vB_{j_{k'}}, \vA}$, $\ibcspacekernel_{\vH_{k'}} = \ibcextensionspacekernel_{\vB_{j_{k'}}, \vA}$, $t_{\vH_{k'}, i} = t_{\vB_{j_{k'}}, i}$. 
        \item $\vH_{k_1 + 1} = \vH$, $\ibcspace_{\vH_{k_1 + 1}} = \ibcspace_\vH$, $\ibcspacekernel_{\vH_{k_1 + 1}} = \ibcspacekernel_\vH$, $t_{\vH_{k_1 + 1, j}} = t_{\vH, j}$. 
    \end{enumerate}
    %$\vC_{j_1}, \dots, \vC_{j_{k_1}}$, where $\vC_{j_{k'}}$ is an arbitrary matrix tuple in $\vB_{j_{k'}, \vA}$ but not in $\vK_{j_{k'}, \vA}$ for each $1 \leq k' \leq k_1$. 
\end{enumerate}
\vspace{-.3cm}
\end{framed}

\begin{proof}[Proof of Lemma~\ref{lemma:extension_main}]
The lemma is obtained by the observation that for every $j \in [w]$, there is a $k \in [\ell]$ such that $T^{[k]} = W_{\beta, j}$ according to the Row Vector Space Direct Sum Algorithm,
and Fact~\ref{fact:ibctuple_invariant_basic}, Claim~\ref{claim:row_vector_space_direct_sum}, Lemma~\ref{lemma:essential_extension_ibctuple_compatible}, Lemma~\ref{lem:irrelevant_ibctuple}, 
Claim~\ref{claim:relevant_ibctuple_space}. 
\end{proof}

\begin{proof}[Proof of Lemma~\ref{lemma:extension_main_canonical}]
The lemma is obtained by an induction of each step of the algorithm and Claim~\ref{claim:row_vector_space_direct_sum}, Lemma~\ref{lemma:extension_canonical}, and  Claim~\ref{claim:relevant_ibctuple_space}. 
\end{proof}

\section{Canonical form algorithm}\label{sec:overall_algorithm}
In this section, we first present an algorithm to initialize the hierarchical row tuple decomposition and another algorithm to maintain the hierarchical row tuple decomposition as defined in Definition~\ref{def:row_tuple_decomposition} if any row tuple subspace in the decomposition is updated in Section~\ref{sec:overall_algorithm}. 
Then we give our canonical form algorithm in Section~\ref{sec:overall_final}
\subsection{Hierarchical row tuple decomposition initialization and update}

Before presenting our algorithm for initializing and maintaining a hierarchical row tuple decomposition, we first introduce a few subroutines used by these algorithms.

The following subroutine decomposes a given row tuple subspace into a direct sum of several row tuple subspaces based on the projection of the row vectors in the row tuples onto a given decomposition of the row vector subspace.

%We first present an algorithm to compute a hierarchical row tuple decomposition for the given matrix tuple,and an algorithm to update the hierarchical row tuple decomposition with a given characteristic block-compatible row tuple space that is a subspace of some row tuple space in the decomposition. 

\begin{framed}
\noindent \textbf{Direct Sum Row Tuple Decomposition With Row Vector Decomposition Algorithm}

\noindent \textbf{Input:} Characteristic block-compatible row tuple subspace $T$ and characteristic block-compatible row vector subspaces $W_1, \dots, W_w$ such that $\rowvectorspace(T) = W_1 \oplus \dots \oplus W_w$.

\noindent \textbf{Output:} Characteristic block-compatible row tuple subspaces $T_1, \dots, T_\zeta$ such that $T = T_1 \oplus \dots \oplus T_\zeta$, or nontrivial characteristic block-compatible row vector subspace $W < W_i$ for some $1 \leq i \leq w$.

\noindent \textbf{Algorithm:}
\begin{enumerate}
    \item Let $T_1 = T$, $\zeta = 1$, and $\ell$ denote the row tuple length for row tuples in $T$. 
    \item For $j=1, \dots, w$, and for $k = 1, \dots, \ell$, if there exists a row tuple $\va$ in $T_\zeta$ such that 
    $\rowvecproj_{W_1, \dots, W_w}(\va^{(k)}, W_j)$ is non-zero, 
    \begin{enumerate}
        \item Let $W = \langle\{\rowvecproj_{W_1, \dots, W_w}(\va^{(k)}, W_j) : \va \in T_\zeta \}\rangle)$. If $\dim(W) < \dim(W_j)$, then return $W$. 
        
        %the projection of $(T_\zeta)^{(k)}$ on $W_j$ is smaller than $\dim(W_j)$, then return the projection of $(T_\zeta)^{(k)}$ on $W_j$. 
        %\ynote{Projections to a subspace of $R_i$ are up to a choice of the complement space.}\xnote{Here $R_i$ is a direct sum. So, the projection is fixed, right?} 
        %we can refine $R_i$             canonically by taking the projection as a subspace of $R_i$. 
        \item Otherwise, let $k_\zeta = k$, $j_\zeta = j$, $\zeta = \zeta + 1$, and 
        and \[T_\zeta = \langle \{\va \in T_{\zeta - 1} : \rowvecproj_{W_1, \dots, W_w}(\va^{(k)}, W_j) = 0\}\rangle.\] 
        %$T_{\zeta}$ be the subspace of $T_{\zeta - 1}$ such that $T_{\zeta}$ contains all the row tuples whose $k$-th row vectors have zero projections on $W_j$. 
        
    \end{enumerate}
    %\item (It must be the case that $T_\ell$ only contains the zero row tuple vector.) 
    \item $\zeta = \zeta - 1$. 
    \item For $i = \zeta, \dots, 1$, and for $i' = i - 1, \dots, 1$, let \[T = \langle \{\va \in T_{i'} : \rowvecproj_{W_1, \dots, W_w}(\va^{(k_i)}, W_{j_i}) = 0\}\rangle,\] %be the row tuple subspace of $T_{i'}$ which contains all the row tuples whose $k_i$-th row vectors have zero projections on $W_{j_i}$. 
    and $T_{i'} \leftarrow T$. 
        %Using $T_{\ell'}$ to do the Gaussian elimination such that for all $0 \leq k < \ell'$, $T_k$ has a zero projection for the $j_{\ell'}$-th row vector subspace of $T_k$ on $R_{i_{\ell'}}$.
    %\end{enumerate}
    %\ynote{Is this a double For-loop? Then those updates on $T_{i'}$ seems to update $T_1, \dots, T_\ell$ multiple times... It is a bit confusing.} \xnote{yes, it is a double loop. }
    
    \item For $i = 1, \dots, \zeta$, for $j = 1, \dots, d$, and for $k = 1, \dots, \ell$, if there is a row tuple $\va \in T_i$ such that $\rowvecproj_{W_1, \dots, W_w}(\va^{(k)}, W_j)$ is non-zero, then 
    \begin{enumerate}
        \item If there exist a non-zero tuple $\va \in T_i$ such that $\rowvecproj_{W_1, \dots, W_w}(\va^{(k)}, W_j)$ is zero, then
        return \[\langle \{\rowvecproj_{W_1, \dots, W_w}(\va^{(k_i)}, W_{j_i}) : \va \in T_i, \rowvecproj_{W_1, \dots, W_w}(\va^{(k)}, W_{j}) = 0\} \rangle.\]
        
        %the linear space spanned by the projection of $\va^{(k_i)}$ on $W_{j_i}$ for all the $\va \in T_i$ such that the projection of $\va^{(k)}$ on $W_j$ is zero. 
        
        %let $T_{i}'$ be all the row tuples whose $k$-th row vector projected on $R_j$ is zero. 
        %return linear space spanned by the projection of $k_i$-th row vectors of row tuples in $T_i'$ on $R_{j_i}$.
        \item If $\dim(T_i) < \dim(W_j)$, return $\langle \{\rowvecproj_{W_1, \dots, W_w}(\va^{(k)}, W_{j}) : \va \in T_i\} \rangle$.
    \end{enumerate}
    \item Return $T_1, \dots, T_\zeta$.
\end{enumerate}    
\vspace{-.3cm}
\end{framed}
% Here is an example which can be used/expanded later if needed. 
\iffalse
\begin{example}
Let $A_1=\begin{bmatrix}
    1 & 0 & 0 & 0 \\
    0 & 1 & 0 & 0 \\
    0 & 0 & 1 & 0 \\
    0 & 0 & 0 & 1 
\end{bmatrix}$, $A_2=\begin{bmatrix}
    0 & 0 & 1 & 0 \\
    0 & 0 & 0 & 1 \\
    1 & 0 & 0 & 0 \\
    0 & 1 & 0 & 0 
\end{bmatrix}$.
Suppose $R_1=\vspan\{e_1, e_2\}$, $R_2=\vspan\{e_3\}$, and $R_3=\vspan\{e_4\}$. 

At the very beginning, the row tuple space $T_1$ is the whole space $\F_q^8$, namely corresponding to row vectors in $\F_q^4$. 

In Step 2, starting with $R_1$, first we use $A_1$, and get $T_2$ is the space spanned by the last two rows, namely corresponding to row vectors spanned by $e_3$ and $e_4$. 

Then within $T_2$, still with $R_1$, we see that both $A_1'$ and $A_2'$ are good, so we are done. 

Now consider $R_2$. For $R_2$, because $T_2$ on $A_2$' 
\end{example}
\fi

\begin{claim}\label{claim:tuple_partition_by_vector}
The Direct Sum Row Tuple Decomposition With Row Vector Decomposition Algorithm has the following properties:
\begin{enumerate}
\item
    Given a characteristic block-compatible row tuple subspace $T$ such that every row tuple is in $(\F_q^m)^\ell$ and a sequence of characteristic block-compatible row vector subspaces $W_1, \dots, W_w$ such that $\rowvectorspace(T) = W_1 \oplus \dots \oplus W_w$, in $\poly(m, \ell, \dim(T), \log q)$ time, the Direct Sum Row Tuple Decomposition Without Pivot Algorithm gives one of the following outputs:
    \begin{enumerate}
        \item A sequence of characteristic block-compatible row tuple space $T_1, \dots, T_\zeta$ satisfying the following conditions:
        \begin{enumerate}
            \item $T = T_1 \oplus \dots \oplus T_\zeta$.
            \item For every $1 \leq i \leq \zeta$, every $1 \leq j \leq w$ and every $1 \leq k \leq \ell$,
            the projection of $(T_i)^{(k)}$ on $W_j$ with respect to $W_1, \dots, W_w$ is either a zero projection or a matching projection. 
        \end{enumerate}
        \item A nontrivial characteristic block-compatible subspace $S < W_j$ for some $1 \leq j \leq w$. 
    \end{enumerate}
    \item The algorithm is canonical in the following sense: 
    Let $T'$ be another row tuple space with every row tuple in $(\F_q^m)^\ell$, and $W_1', \dots, W_w'$ be another sequence of row vector spaces such that $\rowvectorspace(T') = W_1' \oplus \dots \oplus W_w'$ and there exists an invertible matrix $R$ satisfying $T' = T R^{-1}$ and $W_j' = W_j R^{-1}$ for every $ 1\leq j \leq w$. The outputs for $T, W_1, \dots, W_w$ and $T', W_1', \dots, W_w'$ satisfy the following conditions:
    (Let $R$ be an arbitrary matrix such that such that $T' = T R^{-1}$ and $W_j' = W_j R^{-1}$ for every $ 1\leq j \leq w$.)
    \begin{enumerate}
        \item If the output for $T$ is a sequence of row tuple spaces $T_1, \dots, T_\zeta$, then the output for $T'$ is $T_1 R^{-1}, \dots, T_\zeta R^{-1}$. 
        \item If the output for $T$ is a subspace $S < W_j$ for some $1 \leq j \leq \zeta$, then the output for $T'$ is $S R^{-1}$. 
    \end{enumerate}
\end{enumerate}

\end{claim}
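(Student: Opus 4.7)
The plan is to verify the two properties by analyzing Steps~2--5 of the algorithm in order. The running time bound follows immediately since $\zeta \le \dim(T)$, the outer loop in Steps 2, 4, and 5 iterates $O(\dim(T)\cdot w\cdot \ell)$ times, and each inner operation (computing a projection subspace, or restricting to the kernel of a projection linear map) amounts to solving a linear system of size $\poly(m,\ell,\dim(T))$. I will therefore focus on the correctness of the output.

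First I will analyze Step~2. After its execution the algorithm has produced a descending chain $T = T_1 \supseteq T_2 \supseteq \dots \supseteq T_\zeta$, where each $T_{i+1}$ is obtained from $T_i$ by restricting to row tuples whose $k_i$-th coordinate has zero projection on $W_{j_i}$. By repeated application of the fifth item of Fact~\ref{fact:ibctuple_invariant_basic} (together with the hypothesis that $W_1,\dots,W_w$ are characteristic block-compatible with $\rowvectorspace(T)=W_1\oplus\dots\oplus W_w$), each $T_i$ is characteristic block-compatible; moreover, whenever the algorithm reaches Step~2(a) the subspace $W$ spanned by the relevant projections is characteristic block-compatible by the same fact, so returning $W$ yields a valid output of type~(b). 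Thus after Step~2 either the algorithm has already terminated correctly, or we have a chain of characteristic block-compatible subspaces $T_1\supseteq\dots\supseteq T_\zeta$ together with the distinguished pairs $(k_i,j_i)$ having the property that the projection of $(T_i)^{(k_i)}$ onto $W_{j_i}$ is surjective (i.e., equals $W_{j_i}$), while for each $i$, the projection of $(T_{i+1})^{(k_i)}$ onto $W_{j_i}$ vanishes.

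Next I will handle Step~4, whose goal is to convert the nested chain into a direct sum decomposition. The key invariant is that after the $i$-th outer iteration of Step~4, for every $i'<i$ the refined $T_{i'}$ consists of row tuples all of whose coordinates project trivially on $W_{j_{i''}}$ through $k_{i''}$ for every $i''\ge i$. By the fifth item of Fact~\ref{fact:ibctuple_invariant_basic} this refinement preserves the characteristic block-compatible property. I will argue by induction on $i$ (from $\zeta$ down to $1$) that the refined subspaces $T_1,\dots,T_i$ together with the old $T_{i+1},\dots,T_\zeta$ satisfy $T = T_1\oplus\dots\oplus T_\zeta$, using the criterion that the pairs $(k_i,j_i)$ distinguish blocks from one another: the projection onto $W_{j_i}$ through coordinate $k_i$ is non-zero on $T_i$ by construction, but vanishes on $T_{i'}$ for every $i'\ne i$ after refinement. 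This separation argument gives the direct sum.

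Step~5 then inspects each resulting $T_i$ and each pair $(j,k)$: if the projection of $(T_i)^{(k)}$ on $W_j$ is neither matching nor zero, the algorithm returns either the image of the projection or the restriction kernel's image, both of which are characteristic block-compatible proper subspaces of $W_j$ by the fifth item of Fact~\ref{fact:ibctuple_invariant_basic}. If Step~5 terminates without returning, then every projection is either zero or matching, establishing condition~(a)(ii); condition~(a)(i) has been established after Step~4. This completes the proof of the first property. The main obstacle I anticipate is the direct sum assertion in Step~4: I need to verify carefully that the refinement process does not accidentally identify two blocks or destroy the spanning property $\sum_i T_i = T$, and this will require a careful bookkeeping argument using the order in which $(j,k)$ pairs were visited in Step~2.

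For the second (canonicity) property, the proof strategy is a straightforward induction on the steps of the algorithm, using the hypothesis that $T' = TR^{-1}$ and $W_j' = W_j R^{-1}$ for every $j\in[w]$. At each step, the tested condition and the resulting subspace transform equivariantly under the right action of $R^{-1}$: for any row tuple $\va\in T$, the projection $\rowvecproj_{W_1,\dots,W_w}(\va^{(k)},W_j)$ satisfies $\rowvecproj_{W_1',\dots,W_w'}((\va R^{-1})^{(k)},W_j') = \rowvecproj_{W_1,\dots,W_w}(\va^{(k)},W_j)\, R^{-1}$, because the row vector $R^{-1}$ map respects the direct sum decomposition $W_1\oplus\dots\oplus W_w$. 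Consequently the indices $\zeta,(k_1,j_1),\dots,(k_\zeta,j_\zeta)$ computed in Step~2, and the subspaces $T_i$ refined in Step~4, agree for the two inputs up to multiplication by $R^{-1}$, and similarly Step~5 either returns corresponding subspaces $S$ and $SR^{-1}$ or accepts the output in both executions. This yields both canonical cases (a) and (b) of the second property.
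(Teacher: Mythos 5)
Your proposal follows the same plan as the paper's proof (track the descending chain from Step~2, observe the "last" subspace before the $\zeta$-decrement is trivial, use Fact~\ref{fact:ibctuple_invariant_basic} item~5 to preserve the characteristic block-compatible property under every refinement, and argue canonicity by equivariance under $R^{-1}$). The canonicity argument is correct.

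However, you explicitly flag the direct-sum assertion after Step~4 as the "main obstacle" and leave it as a promise ("will require a careful bookkeeping argument"), which is precisely the one place where content is missing. Your proposed "separation argument" -- that the projection onto $W_{j_i}$ through coordinate $k_i$ vanishes on $T_{i'}$ for $i' \neq i$ but not on $T_i$ -- would give at best disjointness of the refined subspaces, not the spanning $T = T_1 + \dots + T_\zeta$; and even "not on $T_i$" needs justification, because Step~4 also shrinks $T_i$, so it is not immediate that the matching projection survives. What closes the gap is a short dimension count: the chain from Step~2 gives $\dim(T_i^{\mathrm{orig}}) = \dim(T_{i+1}^{\mathrm{orig}}) + \dim(W_{j_i})$ with $T_{\zeta+1}^{\mathrm{orig}} = \{0\}$, so $\dim(T) = \sum_{i=1}^\zeta \dim(W_{j_i})$. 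The joint map $\pi : T \to W_{j_1} \times \dots \times W_{j_\zeta}$ sending $\va$ to its projections on each $W_{j_i}$ through coordinate $k_i$ is injective (its kernel is the trivial last set), hence by the dimension count it is an isomorphism. The refined $T_i$ after Step~4 is exactly $\pi^{-1}$ of the $i$-th coordinate subspace, so the direct sum decomposition, the surjectivity of $\pi_i|_{T_i}$, and its injectivity (hence the matching projection) all follow at once. Without this or an equivalent argument, the central claim of the proof is left unjustified.
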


\begin{proof} 
By the algorithm, after Step 2, $T_\zeta \subsetneq T_{\zeta - 1} \subsetneq \dots \subsetneq T_1$. $T_\zeta$ contains only the zero row tuple because every row vector in any row tuple of $T_\zeta$ has a zero projection on each of $W_1, \dots, W_w$. 
Hence, after Step 4, 
for each $1 \leq i \leq \zeta$,
the $T_i^{(k_i)}$ has a matching projection on $W_{j_i}$,
and the $T_{i'}^{(k_i)}$ for any $i' \neq i$ has a zero projection on $W_{j_i}$. 
This implies that $T$ is the direct sum of $T_1, \dots, T_\zeta$.
Furthermore, Step 5 ensures $(T_i)^{(k)}$ has either a zero projection or a matching projection on $W_j$ for every $1 \leq i \leq \zeta, 1 \leq j \leq w, 1 \leq k \leq \ell$. 
By Fact~\ref{fact:ibctuple_invariant_basic}, the output is always characteristic and block-compatible. 
Hence, the first property holds.

Now, we prove the algorithm is canonical. 
By the algorithm and by induction, after Step 2, we have $T_{i}' = T_{i} R^{-1}$ for each $i$, where $T_i'$ denote the sets constructed by the algorithm for $T'$, and $j_i$ and $k_i$ are the same for the executions for $T$ and $T'$. 
Hence, if the algorithm for $T$ returns a row vector subspace $S$ in Step 2(a), the output for $T'$ is $S R^{-1}$.  
Otherwise, the algorithm for $T$ and $T'$ has the same $\zeta$ after Step 3. 
By induction for Step 4, 
either the algorithm for $T$ and $T'$ terminates during Step 4 with outputs satisfies the second property of the claim, 
or $T_i' = T_i R^{-1}$ for every $1 \leq i \leq \zeta$. 
Similarly, by induction for Step 5, the second property holds for the outputs of the algorithm for $T$ and $T'$. 
\end{proof}

Next, we give an algorithm to decompose a given row tuple space into a direct sum of a few row tuple subspaces canonically with a row tuple subspace already distinguished. 
\begin{framed}
\noindent \textbf{Direct Sum Row Tuple Decomposition Algorithm}

\noindent \textbf{Input:} Characteristic block-compatible row tuple subspace $V$, and $T < V$ also characteristic block-compatible such that $\rowvectorspace(V) = \rowvectorspace(T)$.

\noindent \textbf{Output:} Characteristic block-compatible row tuple subspaces $T_1, \dots, T_\zeta$ such that $V = T \oplus T_1 \oplus \dots \oplus T_\zeta$, or a nontrivial characteristic block-compatible subspace $S < T$.

\noindent \textbf{Algorithm:}

\begin{enumerate}
\item Run the Row Vector Space Direct Sum Algorithm with $T$. Return the output if the output is a subspace of $T$. Otherwise, denote the output as $W_1, \dots, W_w$. 
\item Let $b$ be the smallest integer such that $\dim(T^{(b)}) > 0$, and $c$ be the smallest integer such that the projection of $T^{(b)}$ on $W_c$ is a matching projection. 
\item Run the Direct Sum Row Tuple Decomposition With Row Vector Decomposition Algorithm for $\langle \{\va \in V : \rowvecproj_{W_1, \dots, W_w} (\va^{(b)},  W_c) = 0\}\rangle$.
If the output is a sequence of row tuple spaces $T_1, \dots, T_{\zeta}$, then return $T_1, \dots, T_{\zeta}$. 
\item Otherwise, the output of Step 3 is $W < W_i$ for some $1 \leq i \leq w$. 
Let $j$ be the smallest integer such that the projection of $T^{(j)}$ on $W_i$ is non-zero. Return $\langle \{\va \in T: \rowvecproj_{W_1, \dots, W_w}(\va^{(j)}, W_i) \in W\}\rangle$.
%\text{the projection of } \va^{(j)} \text{ on } W_i \text{ is in } S\}\rangle$. 
\end{enumerate}
\vspace{-.3cm}
\end{framed}

\begin{claim}\label{claim:tuple_decomposition_direct_sum}
The Direct Sum Row Tuple Decomposition Algorithm has the following properties:
\begin{enumerate}
\item
    Given a characteristic block-compatible row tuple subspace $V$ with every row tuple of $V$ in $(\F_q^m)^\ell$ and a characteristic block-compatible subspace $T$ of $V$ with $\rowvectorspace(T) = \rowvectorspace(V)$, in $\text{poly}(m, \ell, \dim(V), \log q)$ time, the algorithm gives one of the following outputs:
    \begin{enumerate}
        \item A sequence of characteristic block-compatible row tuple subspaces $T_1, \dots, T_\zeta$ satisfying the following conditions: (i.) $V = T \oplus T_1 \oplus \dots \oplus T_\zeta$;
        and (ii.)
            Let $W_1, \dots, W_w$ be the output of the Row Vector Space Direct Sum Algorithm for $T$. 
            For every $1 \leq i \leq \zeta$, $1 \leq j \leq w$, and $1 \leq k \leq \ell$, the projection of $(T_i)^{(k)}$ on $W_j$ with respect to $W_1, \dots, W_w$ is either a zero projection or a matching projection. 
        \item A characteristic block-compatible nontrivial subspace $S < T$.
    \end{enumerate}
    \item The algorithm is canonical in the following sense: 
    Let $V'$ and $T'$ be another input for the algorithm such that there exists an invertible matrix $R$ satisfying 
    $V' = V R^{-1}$ and 
    $T' = T R^{-1}$. The outputs for $V, T$ and $V', T'$ satisfy the following conditions:
    (Let $R$ be an arbitrary invertible matrix such that such that $V' = V R^{-1}$ and  $T' = T R^{-1}$.)
    \begin{enumerate}
        \item If the output for $V$ and $T$ is a sequence of row tuple spaces $T_1, \dots, T_\zeta$, then the output for $V'$ and $T'$ is $T_1 R^{-1}, \dots, T_\zeta R^{-1}$. 
        \item If the output for $V$ and $T$ is a subspace $S < W_j$ for some $1 \leq j \leq \zeta$, then the output for $V'$ and $T'$ is $S R^{-1}$. 
    \end{enumerate}
\end{enumerate}

\end{claim}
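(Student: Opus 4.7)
The plan is to verify the two advertised properties of the algorithm by tracing each branch and invoking the guarantees of its two subroutines. For the running time, Step~1 runs in polynomial time by Claim~\ref{claim:row_vector_space_direct_sum}, Step~3 by Claim~\ref{claim:tuple_partition_by_vector}, and Steps~2 and~4 are elementary linear algebra on inputs of the stated size; the total is polynomial.

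For the decomposition branch, I would first invoke Claim~\ref{claim:row_vector_space_direct_sum} so that when Step~1 succeeds, the output $W_1, \dots, W_w$ satisfies $\rowvectorspace(T) = W_1 \oplus \cdots \oplus W_w$ with each $T^{(k)}$ admitting either a zero or a matching projection on every $W_j$. The choice of $b, c$ in Step~2 is then well-defined, and the matching projection of $T^{(b)}$ on $W_c$ makes the linear map sending $\va \in T$ to $\rowvecproj(\va^{(b)}, W_c) \in W_c$ surjective with kernel $T \cap V_0$, where $V_0 := \langle \{\va \in V : \rowvecproj(\va^{(b)}, W_c) = 0\}\rangle$. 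The analogous map from $V$ is likewise surjective with kernel $V_0$. Once $T \cap V_0 = \{0\}$ is established, a dimension count gives $V = T \oplus V_0$; combined with the subroutine's decomposition $V_0 = T_1 \oplus \cdots \oplus T_\zeta$ from Claim~\ref{claim:tuple_partition_by_vector}, this produces the desired $V = T \oplus T_1 \oplus \cdots \oplus T_\zeta$, with the characteristic, block-compatible, and projection properties transferring directly.

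For the subspace branch, if Step~1 outputs $S < T$ then Claim~\ref{claim:row_vector_space_direct_sum} supplies the needed properties; if Step~3 outputs $W < W_i$, then Step~4 pulls back to $\langle \{\va \in T : \rowvecproj(\va^{(j)}, W_i) \in W\}\rangle$, which I would show is characteristic block-compatible via Fact~\ref{fact:ibctuple_invariant_basic} and strictly contained in $T$ using the matching projection of $T^{(j)}$ on $W_i$ together with $W < W_i$. Canonicity follows by propagating an invertible matrix $R$ through each step: Claim~\ref{claim:row_vector_space_direct_sum} gives $W_j' = W_j R^{-1}$; the minimality conditions defining $b$ and $c$ are invariant under $R$ because the projection structures transform equivariantly; hence $V_0' = V_0 R^{-1}$; and Claim~\ref{claim:tuple_partition_by_vector} propagates equivariance through Step~3, with Step~4's pullback transforming covariantly by inspection.

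The main obstacle will be verifying $T \cap V_0 = \{0\}$, i.e., the injectivity of the $b$-th coordinate projection on $T$. This is not immediate from the raw hypotheses ($T < V$, $\rowvectorspace(T) = \rowvectorspace(V)$) and will require exploiting the specific structure handed back by the Row Vector Space Direct Sum Algorithm — notably that each $W_j$ arises as $T^{(u_j)}$ for a distinguished coordinate $u_j$, ordered by the iteration in that subroutine — together with the characteristic block-compatible structure of $T$ and $V$. If this intersection were nontrivial, $\{\va \in T : \va^{(b)} = 0\}$ would itself be a valid case~(b) output, so the proof may instead branch: either Step~1 is already in a position to detect the obstruction and return a proper subspace of $T$, or $T \cap V_0 = \{0\}$ holds automatically from the normalization imposed by Step~1.
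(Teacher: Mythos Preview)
Your approach matches the paper's: invoke Claim~\ref{claim:row_vector_space_direct_sum} for Step~1, Claim~\ref{claim:tuple_partition_by_vector} for Step~3, and Fact~\ref{fact:ibctuple_invariant_basic} for the characteristic and block-compatible closure properties, with canonicity propagated step by step through the subroutines' canonicity guarantees. Your final hesitation about $T \cap V_0 = \{0\}$ is unwarranted and resolves exactly along the second alternative you name: Step~1 of the Row Vector Space Direct Sum Algorithm returns a proper subspace of $T$ whenever $0 < \dim(T^{(i)}) < \dim(T)$ for some $i$, so if it instead outputs $W_1,\dots,W_w$ then $\dim(T^{(b)}) = \dim(T)$ and $\va \mapsto \va^{(b)}$ is already injective on $T$; the matching projection of $T^{(b)}$ onto $W_c$ then makes the composite $\va \mapsto \rowvecproj(\va^{(b)}, W_c)$ injective on $T$, giving $T \cap V_0 = \{0\}$ and hence $V = T \oplus V_0$ by the dimension count you indicate. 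The paper's own proof does not spell out this step, so your more careful treatment here is an improvement rather than a gap.
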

\begin{proof}
    For the first property, by Claim~\ref{claim:row_vector_space_direct_sum}, if the output is a subspace of $T$, then the condition (b) of the first property satisfies. Otherwise,  $\rowvectorspace(T) = W_1 \oplus \dots \oplus W_w$ and for every $1 \leq \ell' \leq \ell$, the projection $T^{(\ell')}$ on $W_j$ with respect to $W_1, \dots, W_w$ is either a zero projection or a matching projection.
    By Claim~\ref{claim:tuple_partition_by_vector}, if the output is a sequence of row tuple subspaces $T_1, \dots, T_\zeta$, then the condition (a) of the first property satisfies. Otherwise, the condition (b) of the first property satisfies using the fact that if the projection of $T^{(j)}$ on $W_i$ with respect to $W_1, \dots, W_w$  is a non-zero projection, then it is a matching projection. 
    By Fact~\ref{fact:ibctuple_invariant_basic},  for all the cases, the first property holds. 

    Now, we prove the algorithm is canonical. by Claim~\ref{claim:row_vector_space_direct_sum}, if the output of the Row Vector Space Direct Sum Algorithm is a subspace of $T$, then the condition (b) of the second property holds. 
    Otherwise, let $W_1, \dots, W_w$ denote the row vector spaces obtained in Step 1 for $V, T$, and let $W_1', \dots, W_w'$ denote the row vector spaces obtained in Step 1 for $T'$, $W_i' = W_i R^{-1}$ for every $1 \leq i \leq w$. Then $b$ and $c$ obtained in Step 2 for $V, T$, and $V', T'$ are the same, respectively. 
    Hence, $\langle \{\va \in V : \text{the projection of } \va^{(b)} \text{ on } W_c \text{ is zero}\}\rangle$ equals $\langle \{\va \in V' : \text{the projection of } \va^{(b)} \text{ on } W_c' \text{ is zero}\}\rangle R^{-1}$. By Claim~\ref{claim:tuple_partition_by_vector}, either the condition (a) or the condition (b) of the second property holds. 
\end{proof}

\begin{framed}
\noindent \textbf{Initial Hierarchical Row Tuple Decomposition Algorithm}

\noindent \textbf{Input:} Matrix Tuple $\vA \in \M(n \times m, \F_q)^\ell$.

\noindent \textbf{Output:} Hierarchical row tuple decomposition $T_1, \dots, T_\zeta$ with parameters $h_0, \dots, h_\beta$.

\noindent \textbf{Algorithm:}

\begin{enumerate}
\item Let $T = \rowtuplespace(\vA)$. 
%\item Run the Row Vector Space Direct Sum Algorithm with $T$. If the output is a matrix tuple subspace $S$, then let $T = S$ and run Step 2 again. Otherwise, denote the output as $W_1, \dots, W_w$, and let $W = \rowvectorspace(T)$.
\item Let $W = \rowvectorspace(T)$,  $V = \langle \{\va \in \rowtuplespace(\vA) : \forall i \in [\ell], \va^{(i)} \in W\} \rangle$. 
Run the Direct Sum Row Tuple Decomposition Algorithm with $T$ and $V$. 
If the output is a subspace $S < T$, then let $T = S$ and go to Step 2. 
Otherwise, denote the output as $U_1, \dots, U_\xi$. 
\item Let $n_0 = \dim(\rowtuplespace(\vA) / W)$, $m_0 = \dim(\rowvectorspace(\vA) / W)$, $\ve_1, \dots, \ve_{n_0} \in \rowtuplespace(\vA)$ be row tuples such that $\ve_1 + W, \dots, \ve_{n_0} + W$ form a linear basis of $\rowtuplespace(\vA) / W$, and $v_1, \dots, v_{m_0} \in \rowvectorspace(\vA)$ such that $v_1 + W_, \dots, v_{m_0} + W$ form a linear basis of $\rowvectorspace(\vA) / W$. 
Let $\vC \in \M(n_0 \times m_0, \F_q)^\ell$ be the matrix tuple such that 
$e_i \vC_i \cdot (v_1 + W, \dots, v_{m_0} + W)^T = \ve_i + W$ for each $i\in[n_0]$.
%\[\vC \cdot (v_1 + W, \dots, v_{m_0} + W)^T = %\begin{bmatrix}
%    \ve_1 + W \\ \vdots \\ \ve_{n_0} + W
%\end{bmatrix}.\] 
\item Run Initial Hierarchical Row Tuple Decomposition Algorithm for $\vC$ and let $T_{\vC, 1}, \dots, T_{\vC, \zeta_{\vC}}$ with parameters $h_{\vC, 1}, \dots, h_{\vC, \beta_\vC}$ be the output. 
\item Return $T_1, \dots, T_\zeta$ with $\zeta = \zeta_\vC + 1 + \xi$ and $h_0, \dots, h_\beta$ with $\beta = \beta_{\vC}  +1$ such that \[T_i = \langle \{ \va \in \rowtuplespace(\vA) : \va + W \in \{\vc \cdot (v_1 + W, \dots, v_{m_0} + W)^T  : \vc \in T_{\vC, i}\}\} \rangle\]
for each $i \in [\zeta_\vC]$, $T_{\zeta_\vC + 1} = T$, $T_{\zeta_\vC + 1 + i} = U_i$ for each $i \in [\zeta]$,
$h_i = h_{\vC, i}$ for each $i \in [\beta - 1]$, and $h_\beta = \zeta_{\vC} + 1$. 
\end{enumerate}
\vspace{-.4cm}
\end{framed}

\begin{lemma}\label{lemma:initialization}
The Initial Hierarchical Row Tuple Decomposition Algorithm has the following properties:
\begin{enumerate}
\item
    Given a matrix tuple $\vA \in \M(n\times m, \F_q)^\ell$, in $\text{poly}(n, m, \ell, \log q)$ time, the algorithm outputs a hierarchical row tuple decomposition $T_1, \dots, T_\zeta$ of $\vA$ with parameters $h_0, \dots, h_\beta$ such that for every $i \in [\beta + 1]$, 
    let $W_{i, 1}, \dots, W_{i, w_i} < W_{i-1} / W_i$ be the output of the Row Vector Space Direct Sum Algorithm for $T_{h_{i - 1}}/ W_i$, 
    $(T_{i'} / W_i)^{(k)}$ has either a matching projection or a zero projection on $W_{i, j}$ with respect to $W_{i, 1}, \dots, W_{i, w_i}$ for any $h_{i-1} \leq i' < h_i, j\in[w_i], k\in[\ell]$. 
    \item The algorithm is canonical in the following sense: 
    Let $\vA'$ be another input for the algorithm such that there exists an invertible matrix $L$ and $R$ satisfying $\vA' = L\vA R^{-1}$. The outputs for $\vA$ and $\vA'$ satisfy the following conditions:
    $\zeta$, $\beta$, and parameters $h_0, \dots, h_\beta$ are the same for $\vA$ and $\vA'$, and 
    $T_i' = T_i R^{-1}$ for every $i\in[\zeta]$, where $R$ is an arbitrary invertible matrices such that $\vA' = L \vA R^{-1}$ for some invertible matrix $L$.

\end{enumerate}

\end{lemma}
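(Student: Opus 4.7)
The plan is to prove both properties by induction on $\dim(\rowvectorspace(\vA))$. The base case is $\rowvectorspace(\vA) = \{0\}$, where the algorithm returns trivially (no levels). For the inductive step, I would analyze the while-loop in Step 2, then the recursive construction in Steps 3--5, and finally verify the four conditions of Definition~\ref{def:row_tuple_decomposition} together with the matching/zero projection property and canonicity.

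For the loop analysis, the key invariants to maintain are $T \leq V$ and $\rowvectorspace(V) = \rowvectorspace(T) = W$; these hold initially (since $T = V = \rowtuplespace(\vA)$ when $W = \rowvectorspace(\vA)$) and are preserved whenever $T$ is updated to $S < T$, because $S \leq V_{\text{new}}$ follows from $\rowvectorspace(S) = W_{\text{new}}$. By Claim~\ref{claim:tuple_decomposition_direct_sum}, each iteration either succeeds (producing the decomposition $V = T \oplus U_1 \oplus \dots \oplus U_\xi$) or strictly decreases $\dim(T)$, so the loop terminates in $O(n)$ iterations. The matching/zero projection property at level $\beta$ is then a direct consequence of Claim~\ref{claim:row_vector_space_direct_sum} (yielding the row vector decomposition $W_{\beta, 1}, \dots, W_{\beta, w_\beta}$ of $W$ with the required projection behavior on $T$) combined with Claim~\ref{claim:tuple_partition_by_vector}, which guarantees matching or zero projections on each $U_j$.

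I would next invoke the inductive hypothesis on $\vC$. Since $\dim(\rowvectorspace(\vC)) = m_0 = \dim(\rowvectorspace(\vA)/W) < \dim(\rowvectorspace(\vA))$ whenever $W \neq \{0\}$ (the edge case $W = \{0\}$ forces $T = \{0\}$ and collapses the bottom level), the recursion yields a valid hierarchical decomposition $T_{\vC, 1}, \dots, T_{\vC, \zeta_\vC}$ of $\vC$ with the claimed properties. The lifts $T_1, \dots, T_{\zeta_\vC} \leq \rowtuplespace(\vA)$ are defined as preimages of $T_{\vC, i}$ under the canonical identification $\rowtuplespace(\vA)/W \cong \rowtuplespace(\vC)$ induced by the formatting vector $(v_1+W, \dots, v_{m_0}+W)$. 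Conditions (1)--(3) of Definition~\ref{def:row_tuple_decomposition} can be verified directly: conditions (1) and (2) from the construction of $\zeta$, $\beta$ and the $h_i$; condition (3) by observing that the identification $\rowvectorspace(\vA)/W \cong \rowvectorspace(\vC)$ maps $W_{i+1}$ onto $W_{\vC, i+1}$ for $i < \beta$, while for $i = \beta$ condition (3) reduces to $\rowvectorspace(T) = W$, which holds since every $U_j \leq V$ and $\rowvectorspace(V) = W$. The main obstacle is condition (4): at each level, the row tuples whose coordinates all lie in $W_i$, modulo $W_{i+1}$, must decompose as a direct sum of $T_{h_i}/W_{i+1}, \dots, T_{h_{i+1}-1}/W_{i+1}$. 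At level $\beta$, this is precisely the decomposition $V = T \oplus U_1 \oplus \dots \oplus U_\xi$ (modulo $W_{\beta+1} = \{0\}$); for $i < \beta$, I would prove it by observing that the preimage lift commutes with the quotient by $W$: specifically, the row tuples of $\vA$ with coords in $W_i$, viewed modulo $W_{i+1}$, are in canonical bijection with the row tuples of $\vC$ with coords in $W_{\vC, i}$ modulo $W_{\vC, i+1}$, because any lift differs from its quotient representative only by elements of $W \leq W_{i+1}$. The direct sum structure in $\vC$ (guaranteed by the inductive hypothesis) therefore transfers directly to $\vA$.

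For the canonicity (property 2), I would propagate the relation $\vA' = L\vA R^{-1}$ through each step of the algorithm. By Claim~\ref{claim:tuple_decomposition_direct_sum}(2), the Step 2 loop for $\vA'$ runs the same number of iterations and produces outputs $T', V', U_1', \dots, U_\xi'$ related to those of $\vA$ by $R^{-1}$, with the same $\xi$. By Claim~\ref{claim:quotient_marix_tuple_basic}(4), the quotient matrix tuple $\vC'$ is equivalent to $\vC$ via invertible matrices $L_\vC, R_\vC$ determined by $R$ and the chosen bases, so the inductive hypothesis yields recursive outputs with $T_{\vC', i}' = T_{\vC, i} R_\vC^{-1}$, identical parameters, and $\zeta_{\vC'} = \zeta_\vC$. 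The preimage-based lifting then inherits the relation $T_i' = T_i R^{-1}$ by unwinding the formatting-vector relation in Claim~\ref{claim:quotient_marix_tuple_basic}(4)(c), completing the canonicity proof. The polynomial running time follows from $O(n)$ loop iterations per recursive level, recursion depth at most $\dim(\rowvectorspace(\vA)) \leq m$, and polynomial per-step cost guaranteed by Claim~\ref{claim:tuple_decomposition_direct_sum} and Claim~\ref{claim:quotient_marix_tuple_basic}.
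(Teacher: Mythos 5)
Your proof follows the same structure as the paper's (termination of the Step~2 loop in at most $n$ iterations, recursion depth bounded by $m$ because $\dim(W)>0$ at Step~3, and appeals to Claim~\ref{claim:tuple_decomposition_direct_sum} together with induction over the recursion for both correctness and canonicity), just spelled out in considerably more detail than the paper's very terse three-sentence argument. In particular, your explicit verification of conditions (1)--(4) of Definition~\ref{def:row_tuple_decomposition} and the careful tracking of the $R^{-1}$-relation through the quotient via Claim~\ref{claim:quotient_marix_tuple_basic} fill in steps the paper leaves implicit; the approach is correct and essentially identical.
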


\begin{proof}
    We observe that Step 2 of the algorithm is executed for at most $n$ times because every execution of Step 2, the dimension of $T$ reduces by at least one. 
    On other hand hand, since 
    at Step 3 of the algorithm, 
    $\dim(T) > 0$ by Claim~\ref{claim:tuple_decomposition_direct_sum} and thus $\dim(W) > 0$, the recursion depth of the algorithm is at most $m$. 
    By Claim~\ref{claim:tuple_decomposition_direct_sum}, the running time algorithm is $\poly(n, m, \ell, \log q)$. 
    The first property is then obtained by Claim~\ref{claim:tuple_decomposition_direct_sum}. 

    The second property is obtained by  Claim~\ref{claim:tuple_decomposition_direct_sum} and an induction on each step of the algorithm.
\end{proof}

\begin{framed}
\noindent \textbf{Hierarchical Row Tuple Decomposition Refinement Algorithm}

\noindent \textbf{Input:} 
 Hierarchical row tuple decomposition $T_1, \dots, T_\zeta$ with parameters $h_0, \dots, h_\beta$ for a matrix tuple $\vA$, and 
characteristic block-compatible row tuple subspace $S < T_{r}$ for some $r \in [\zeta]$ 
such that $\dim(S / W_{\gamma+1}) > 0$, where $\gamma$ is the integer such that $h_{\gamma} \leq r < h_{\gamma + 1}$. 

%satisfying one of the following conditions: (Let $k$ be the integer such that $h_k \leq \xi < h_{k+1}$.)
%\begin{enumerate}
%    \item $\xi = h_k$,  %and $\langle T_{h_k} \cup \dots \cup T_\zeta\rangle < S$ 
%and $(T_{h_k} \cap \langle T_{h_{k+1}} \cup \dots \cup T_\zeta \rangle) / W_{k + 1} < S / W_{k + 1} < T_{h_{k}} / W_{k+1}$.
%    \item $S / W_{k+1} = T_{\xi} / W_{k+1}$.
%\end{enumerate}

\noindent \textbf{Output:} Hierarchical row tuple decomposition $U_1, \dots, U_\eta$ and parameters $g_1, \dots, g_\delta$.

\noindent \textbf{Algorithm:}
\begin{enumerate}
    \item If $S/W_{\gamma+1} = T_{r} / W_{\gamma+1}$, then return $T_1, \dots, T_{\xi - 1}, S, T_{\xi + 1}, \dots, T_\zeta$ with $h_0, \dots, h_\beta$. 
    \item Let $V = \langle \{\va \in \rowtuplespace(\vA) : \forall i \in [\ell], \va^{(i)} \in \langle \rowvectorspace(S) \cup W_{\gamma + 1}\rangle\} \rangle$. 
Run Direct Sum Row Tuple Decomposition Algorithm with $S / W_{\gamma + 1}$ and $V / W_{\gamma + 1}$.     
    \begin{enumerate}
    \item If the output is a subspace $S^\dagger$ of $S / W_{\gamma + 1}$, then go back to Step 2 with the new linear space $S$ as $\langle \{\va \in S : \va + W_{\gamma + 1} \in S^\dagger \} \rangle$. 
    \item Otherwise, the output is $S^\dagger_1, \dots, S^\dagger_\mu < V/W_{\gamma + 1}$. % such that $T/W_{k + 1} = (S / W_{k + 1}) \oplus S^\dagger_1 \oplus \dots \oplus S^\dagger_\mu$. 
        %Let $\eta = \xi + 1$, $S_1 = S$, $S_{i + 1} = \langle \{\va \in T : \va + W_{k + 1} \in S_i^\dagger\}\rangle$ for every $1 \leq i \leq \xi$. 
    \end{enumerate}
    \item If $\dim(\rowvectorspace(V)) < \dim(W_\gamma)$, then 
    \begin{enumerate}
    \item Let $P = \langle T_{h_\gamma} \cup \dots \cup T_\zeta\rangle$, $W = \rowvectorspace(V)$,
    $n_0 = \dim(P / W)$, $m_0 = \dim(\rowvectorspace(P) / W)$, $\ve_1, \dots, \ve_{n_0} \in P$ be row tuples such that $\ve_1 + W, \dots, \ve_{n_0} + W$ form a linear basis of $P / W$, and $v_1, \dots, v_{m_0} \in \rowvectorspace(P)$ such that $v_1 + W_, \dots, v_{m_0} + W$ form a linear basis of $\rowvectorspace(P) / W$. 
    Let $\vC \in \M(n_0 \times m_0, \F_q)^\ell$ be the matrix tuple such that $e_i \vC_i \cdot (v_1 + W, \dots, v_{m_0} + W)^T = \ve_i + W$ for each $i\in[n_0]$.
    \item Run Initial Hierarchical Row Tuple Decomposition Algorithm for $\vC$ and let $T_{\vC, 1}, \dots, T_{\vC, \zeta_{\vC}}$ with parameters $h_{\vC, 1}, \dots, h_{\vC, \beta_\vC}$ be the output. 
    \end{enumerate}
    \item Construct a sequence of row tuple subspaces as follows:
    \begin{enumerate}
    \item  Let $\eta = h_{\gamma} - 1, \delta = \gamma - 1$. For $i = 1, \dots, \eta$, let $U_i = T_i$. 
    For $j = 1, \dots, \delta$, $g_j = h_j$. 
    \item If $\dim(\rowvectorspace(V)) < \dim(W_\gamma)$, then
    \begin{enumerate}
        \item For $j = 0, \dots, \beta_\vC$, $\delta= \delta + 1$, $g_\delta = \eta + h_{\vC, j}$.
        \item For $i = 1, \dots, \zeta_{\vC}$, 
        $\eta = \eta + 1$ and $U_\eta = \langle \{\va \in P : \va + W \in T_{\vC, i}\} \rangle$.
    \end{enumerate}

    \item $\eta = \eta + 1$, $U_\eta = S$, $\delta = \delta + 1$, $g_\delta = \eta$. 
    \item For $i = 1, \dots, \mu$, $\eta = \eta + 1$, $U_\eta = \langle \{\va \in V: \va + W_{\gamma+1} \in S_i^\dagger\}\rangle$.
    \item For $i = h_{\gamma + 1}, \dots, \zeta$, let $\eta = \eta + 1$ and $U_{\eta} = T_i$.
    For $j = \gamma + 1, \dots, \beta$, $\delta = \delta + 1$ and $g_\delta = h_j$. 
    \end{enumerate}
    %\item Set $h_0', \dots, h_{\beta + 1}'$ as follows:
    %\begin{enumerate}
    %    \item For $i = 1, \dots, k$, $h_i' = h_i$.
    %    \item Let $h_{k + 1}'$ be the smallest integer $i$ such that $\rowvectorspace(T_{i}')$ is not a subspace of $V$.
    %    \item For $i = k + 2, \dots, \beta + 1$, let $h_i'$ be the integer such that $T_{h_i'}' = T_{h_{i-1}}$. 
    %\end{enumerate}
    \item Return $U_1, \dots, U_\eta$ and $g_1, \dots, g_\delta$. 
\end{enumerate}
\vspace{-.3cm}
\end{framed}

\begin{lemma}\label{lemma:row_tuple_decomposition_refinement}
The Hierarchical Row Tuple Decomposition Refinement Algorithm has the following properties:
\begin{enumerate}
\item
    Given a depth-$\beta$ hierarchical row tuple decomposition $T_1, \dots, T_\zeta$ with parameters $h_0, \dots, h_\beta$ for a matrix tuple in $\M(n \times m, \F_q)^\ell$ such that for every $i \in [\beta + 1]$, 
    let $W_{i, 1}, \dots, W_{i, w_i} < W_{i-1} / W_i$ be the output of the Row Vector Space Direct Sum Algorithm for $T_{h_{i - 1}}/ W_i$, 
    $(T_{i'} / W_i)^{(k)}$ has either a matching projection or a zero projection on $W_{i, j}$ with respect to $W_{i, 1}, \dots, W_{i, w_i}$ for any $h_{i-1} \leq i' < h_i, j\in[w_i], k\in[\ell]$ and a subspace $S$ of $T_r$ for some $1 \leq r \leq \zeta$ 
    such that $\dim(S / W_{\gamma + 1}) > 0$, where $\gamma$ is the integer such that $h_\gamma \leq r < h_{\gamma + 1}$,
    %satisfying one of the following two cases: (Let $k$ be the integer such that $h_k \leq \xi < h_{k+1}$.)
%\begin{enumerate}
%    \item[i.] $\xi = h_k$,  %and $\langle T_{h_k} \cup \dots \cup T_\zeta\rangle < S$ 
%and $(T_{h_k} \cap \langle T_{h_{k+1}} \cup \dots \cup T_\zeta \rangle) / W_{k + 1} < S / W_{k + 1} < T_{h_{k}} / W_{k+1}$;
%    \item[ii.] $S / W_{k+1} = T_{\xi} / W_{k+1}$,
%\end{enumerate}
    in $\poly(n, m, \ell, \log q)$ time, 
    the algorithm outputs a new depth-$\gamma$ hierarchical row tuple decomposition $U_1, \dots, U_\eta$ with parameters $g_1, \dots, g_\delta$ satisfying the following properties
    \begin{enumerate}
    \item 
    Let $X_i = \langle \rowvectorspace(U_{g_i}) \cup \dots\cup \rowvectorspace(U_\eta) \rangle$ for every $i \in \{0, \delta + 1\}$. 
    For every $i \in [\delta + 1]$, 
    let $X_{i, 1}, \dots, X_{i, x_i} < X_{i-1} / X_i$ be the output of the Row Vector Space Direct Sum Algorithm for $U_{g_{i - 1}}/ X_i$, 
    $(U_{i'} / X_i)^{(k)}$ has either a matching projection or a zero projection on $X_{i, j}$ with respect to $X_{i, 1}, \dots, X_{i, x_i}$ for any $g_{i-1} \leq i' < g_i, j\in[x_i], k\in[\ell]$. 
    \item The output satisfies one of the following conditions:
    \begin{enumerate}
        \item $\delta > \beta$.
        \item $\delta = \beta$, and $\eta > \zeta$. 
        %        $U_{h_k} < T_{h_k}$, and $U_{h_i} = T_{h_i}$ for every $i \neq k$.
        \item $\delta = \beta, \eta = \zeta$, $g_i = h_i$ for every $1 \leq i \leq \beta$,
        $U_r < T_r$, and $U_j = T_j$ for every $j \in \{1, \dots, r - 1\} \cup \{r + 1, \dots, \eta\}$. 
    \end{enumerate}
    \end{enumerate}
    \item The algorithm is canonical in the following sense: 
    Let $T_1', \dots, T_\zeta'$ be another depth-$\beta$ hierarchical row tuple decomposition  with parameters the same to $T_1, \dots, T_\zeta$ and a subspace $S'$ of $T_r'$ such that there exists an invertible matrix $R$ satisfying $T_i' = T R^{-1}$ for every $1 \leq i \leq \zeta$ and $S' = S R^{-1}$, then the outputs for $T_1, \dots, T_\zeta, S$ and $T_1', \dots, T_\zeta', S'$ satisfy the following condition:
    Let $R$ be an arbitrary invertible matrix such that such that $T_i' = T_i R^{-1}$ for any $1 \leq i \leq \zeta$ and  $S' = S R^{-1}$. 
    If the output for $T_1, \dots, T_\zeta, S$ is $U_1, \dots, U_\eta$ with parameters $g_1, \dots, g_\delta$, then the output for 
    $T_1', \dots, T_\zeta', S'$ is $U_1 R^{-1}, \dots, U_\eta R^{-1}$ with the same parameters $g_1, \dots, g_\delta$. 
\end{enumerate}
\end{lemma}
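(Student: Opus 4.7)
The plan is to verify the lemma's two properties — the structural/progress claim about the output (property~1) and canonicality (property~2) — by tracing each branch of the algorithm and invoking the correctness and canonicality of its subroutines. First, well-definedness and the $\poly(n, m, \ell, \log q)$ running time bound follow because Step~2's loop strictly shrinks $\dim(S/W_{\gamma+1})$ while keeping it positive (by Claim~\ref{claim:tuple_decomposition_direct_sum}), Step~3 is executed at most once, and each subroutine runs in polynomial time by Claim~\ref{claim:row_vector_space_direct_sum}, Claim~\ref{claim:tuple_decomposition_direct_sum}, and Lemma~\ref{lemma:initialization}.

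Next, I would verify that $(U_1, \dots, U_\eta; g_1, \dots, g_\delta)$ is a hierarchical row tuple decomposition by splitting the output into three regions: the inherited prefix $U_1, \dots, U_{h_\gamma - 1}$; the middle layers added in Step~3 when $\dim(\rowvectorspace(V)) < \dim(W_\gamma)$; and the suffix consisting of $S, S_1^\dagger, \dots, S_\mu^\dagger$ followed by $T_{h_{\gamma+1}}, \dots, T_\zeta$. The conditions of Definition~\ref{def:row_tuple_decomposition} at depths below $\gamma$ are inherited directly; the conditions on the middle layers come from Lemma~\ref{lemma:initialization} applied to the quotient matrix tuple $\vC$ built in Step~3(a); and the conditions at depth $\gamma$ follow from the Direct Sum Row Tuple Decomposition Algorithm's guarantee that $V/W_{\gamma+1} = (S/W_{\gamma+1}) \oplus S_1^\dagger \oplus \dots \oplus S_\mu^\dagger$, together with the identification of $V/W_{\gamma+1}$ as exactly the depth-$\gamma$ block by construction of $V$ as the span of row tuples with all coordinates in $\langle \rowvectorspace(S) \cup W_{\gamma+1}\rangle$. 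The matching/zero projection condition 1(a) is then checked layer by layer, invoking Claim~\ref{claim:row_vector_space_direct_sum} and Claim~\ref{claim:tuple_decomposition_direct_sum} for the depth-$\gamma$ layer and Lemma~\ref{lemma:initialization} for the middle layers inserted in Step~3.

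For the progress condition 1(b), I would case-split on the branch taken. If Step~3 fires, a new parameter is inserted strictly before the old $h_\gamma$-th position, so $\delta > \beta$, giving case~(i). Otherwise $\delta = \beta$; if Step~1 returns directly, then $U_r = S < T_r$ with everything else unchanged, giving case~(iii); if the main body runs with $\mu > 0$, then $\eta = \zeta + \mu > \zeta$, giving case~(ii); and the edge case $\mu = 0$ without Step~3 firing forces $V/W_{\gamma+1} = S/W_{\gamma+1}$, so again $U_r = S$ strictly refines $T_r$ with everything else unchanged, reducing to case~(iii).

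Finally, canonicality (property~2) follows by propagating the canonicality guarantees of each subroutine along the algorithm's control flow. The main obstacle I expect is the non-canonical choice of lifts $\ve_i$ and $v_j$ in Step~3(a) when forming the quotient matrix tuple $\vC$. This is resolved by observing that the $U_i$ inserted in Step~4(b)(ii) is defined as the pullback $\langle\{\va \in P : \va + W \in T_{\vC, i}\}\rangle$, which depends only on the coset $T_{\vC, i} \leq P/W$ and not on the particular chosen lifts; and by Lemma~\ref{lemma:initialization} applied to $\vC$, the subspace $T_{\vC, i}$ itself is canonical under the equivalence induced on the quotient by the ambient $L, R$. Combining this with the canonicality clauses of Claim~\ref{claim:row_vector_space_direct_sum} and Claim~\ref{claim:tuple_decomposition_direct_sum} yields the required invariance $U_i' = U_i R^{-1}$.
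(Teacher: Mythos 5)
Your overall approach — bounding the running time by the strictly shrinking $S$ across restarts of Step~2, verifying the output layer by layer via Claim~\ref{claim:row_vector_space_direct_sum}, Claim~\ref{claim:tuple_decomposition_direct_sum}, and Lemma~\ref{lemma:initialization}, case-splitting on which branch of the algorithm fires for condition~1(b), and propagating subroutine canonicality for property~2 — matches the paper's proof.

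However, your case analysis for condition~1(b) contains a concrete arithmetic error. When Step~3 does not fire, the output has
\[
\eta \;=\; (h_\gamma - 1) + 1 + \mu + (\zeta - h_{\gamma+1} + 1) \;=\; \zeta + (\mu + 1) - (h_{\gamma+1} - h_\gamma),
\]
not $\zeta + \mu$: the $h_{\gamma+1} - h_\gamma$ input subspaces $T_{h_\gamma}, \dots, T_{h_{\gamma+1}-1}$ at level $\gamma$ are replaced by the $\mu+1$ new subspaces $U_{h_\gamma} = S$ and the pullbacks of $S_1^\dagger, \dots, S_\mu^\dagger$. So establishing $\eta > \zeta$ for case~1(b)(ii) requires $\mu + 1 > h_{\gamma+1} - h_\gamma$, not merely $\mu > 0$; your claimed count $\eta = \zeta + \mu$ is only correct in the special case $h_{\gamma+1} - h_\gamma = 1$. (The paper's own proof also asserts 1(b)(ii) ``by the algorithm'' without spelling out this inequality, so you are not missing something the paper supplies, but your stated formula is affirmatively wrong.) Separately, your edge case $\mu = 0$ without Step~3 firing is vacuous: once Step~1 does not fire, $S/W_{\gamma+1}$ is a strict subspace of $T_r/W_{\gamma+1} \leq V/W_{\gamma+1}$ while $\rowvectorspace(S/W_{\gamma+1}) = \rowvectorspace(V/W_{\gamma+1})$ (since $\rowvectorspace(V) = \langle\rowvectorspace(S)\cup W_{\gamma+1}\rangle$), so by the output specification of the Direct Sum Row Tuple Decomposition Algorithm the complement $S_1^\dagger \oplus \dots \oplus S_\mu^\dagger$ is nonzero and $\mu \geq 1$.
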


\begin{proof}
    We prove the first property. The running time of the algorithm is obtained by Claim~\ref{claim:tuple_decomposition_direct_sum} and the observation that Step 2(a) is executed for at most $n$ times because $\dim(S)$ for the input $S$ is at most $n$.

    By Fact~\ref{fact:ibctuple_invariant_basic}, Claim~\ref{claim:tuple_decomposition_direct_sum}, Lemma~\ref{lemma:initialization}, all the row tuple subspaces are characteristic block-compatible. 
    So, to prove the first property of the lemma, we only need to show that the output is a hierarchical row tuple decomposition satisfying the conditions 1(a) and 1(b).

    If the input satisfies $S / W_{\gamma + 1} = T_r / W_{\gamma + 1}$ and $\dim(S / W_{\gamma + 1}) > 0$, then by Definition~\ref{def:row_tuple_decomposition} and Step 1 of the algorithm, the output is also a hierarchical row tuple decomposition of the matrix tuple satisfying condition 1(a) and  1(b)(iii) of the first property. 
    
    Now, we assume the input satisfies $S / W_{\gamma + 1} < T_r / W_{\gamma + 1}$. 
    By Claim~\ref{claim:tuple_decomposition_direct_sum}, 
    if the algorithm goes back to the start of Step 2 when during the execution of Step 2, then the new row tuple space $S$ is a strict subspace of old $S$, and is not a subspace of $\langle \{T_{h_{\gamma+1}} \cup \dots \cup T_{\zeta}\}\rangle$. 
    Hence, the new $S$ also satisfies satisfies $S / W_{\gamma + 1} = T_r / W_{\gamma + 1}$ and $\dim(S / W_{\gamma + 1}) > 0$, and the Step 2 restarts. 
    
    If the algorithm passes Step 2 after several restarts of Step 2, by Claim~\ref{claim:tuple_decomposition_direct_sum}, we have \[V/W_{\gamma + 1} = (S / W_{\gamma + 1}) \oplus S^\dagger_1 \oplus \dots \oplus S^\dagger_\mu.\]
    Hence, if $\dim(\rowvectorspace(V)) = \dim(W_\gamma)$, by the algorithm, the conditions 1(a) and 1(b)(ii) of the first property are satisfied. 
    Otherwise, $\dim(\rowvectorspace(V)) < \dim(W_\gamma)$, and Step 3 of the algorithm is executed. 
    Since $\dim(P / W) > 0$, 
    by the construction of $\vC$ and Lemma~\ref{lemma:initialization}, the conditions 1(a) and 1(b)(i) are satisfied. 
    
    %\[\langle T_{h_\gamma} \cup \dots \cup T_{\zeta}\rangle / W_{\gamma+1} = T_{h_\gamma} / W_{\gamma+1} \oplus \dots\oplus T_{h_{\gamma+1} - 1} / W_{\gamma+1}\] by Definition~\ref{def:row_tuple_decomposition}, and $W_{\gamma + 1} < V < W_\gamma$ by the condition of $\dim(S / W_{\gamma +  1}) > 0$, we have $\langle T_{h_k} \cup \dots \cup T_{\zeta}\rangle / V = T_{h_k} / V \oplus \dots\oplus T_{h_{k+1} - 1} / V$, the condition (a) of the first property is satisfied by the algorithm.

    The second property of the claim is obtained by the correspondence between $T_1, \dots, T_\zeta, S$ and $T_1', \dots, T_\zeta', S'$, Claim~\ref{claim:tuple_decomposition_direct_sum}, Lemma~\ref{lemma:initialization}, and an induction on all the steps of the algorithm. 
\end{proof}

\subsection{Canonical form algorithm}\label{sec:overall_final}

In this section, we give our canonical form algorithms for Theorem~\ref{thm:main} and Corollary~\ref{cor:main}. 

\begin{framed}
\noindent \textbf{Matrix Tuple Equivalence Canonical Form Algorithm}

\noindent \textbf{Input:} Matrix tuple $\vA \in \M(n \times m, \F_q)^\ell$.

\noindent \textbf{Output:} 
Matrix tuple in $\M(n \times m, \F_q)^\ell$.

\begin{enumerate}
\item 
Run the Initial Matrix Tuple Decomposition Algorithm to get a hierarchical row tuple decomposition $T_1, \dots, T_\zeta$ with parameters $h_0, \dots, h_\beta$.

\item 
Let $\vQ_{\beta + 1} = \vA$ with formatting vector $v_i = (e_1, \dots, e_m)$ and depth-$\beta$ hierarchical row tuple decomposition $T_{\vQ_{\beta + 1}, i} = T_i$ for each $i\in [\zeta]$ with parameters $h_0, \dots, h_\beta$. 
\item 
For $i = \beta, \dots, 1$, construct the $i$-quotient matrix tuple $\vQ_i$ of $\vA$ from $\vQ_{i + 1}$ with formatting vector $(v_{i, 1} + W_i, \dots, v_{i, m_i} + W_i)$ and depth-$(i-1)$ hierarchical row tuple decomposition $T_{\vQ_i, 1}, \dots, T_{\vQ_i, h_{i} - 1}$ with parameters $h_0, \dots, h_{i - 1}$. 
\item For $i = 1$ to $\beta + 1$, 
\begin{enumerate}
\item If $i = 1$, then run the Direct Sum Decomposition Algorithm for $\vQ_i$ with $T_{\vQ_i, 1}, \dots, T_{\vQ_{i}, h_i - 1}$, otherwise, run the \IBCtuple Space Algorithm for $\vQ_i$ with $T_{\vQ_i, 1}, \dots, T_{\vQ_{i}, h_i - 1}$ and representative \IBCtuple sequence $\vB_{i - 1, 1}, \dots, \vB_{i - 1, k_{i-1}}$ for $\vQ_{i-1}$. \item If the output is a row tuple subspace $S \leq T_{\vQ_i, j}$ for some $j \in [h_i - 1]$, then run the Hierarchical Row Tuple Decomposition Refinement Algorithm with $T_1, \dots, T_\zeta$ and $\langle \va \in \rowtuplespace(T_j) : \va + W_i \in \vQ_i \cdot (v_{i, 1} + W_i, \dots, v_{i, m_i} + W_i)\}\rangle$, use the output to replace the hierarchical row tuple decomposition $T_1, \dots, T_\zeta$, and go to Step 2 with the new hierarchical row tuple decomposition. 
\item Otherwise, let $\vB_{i, 1}, \dots, \vB_{i, k_i}$ be the output of Step (a). 

\end{enumerate}
%\item Jump to Step 3 with $\vB_{\beta + 1, 1}, \dots, \vB_{\beta+1, k_{\beta + 1}}$. 
%\end{enumerate}
\item Return the output of the \IBCtuple Selection Algorithm for $\vA$ with the representative \IBCtuple sequence $\vB_{\beta + 1, 1}, \dots, \vB_{\beta+ 1, k_{\beta + 1}}$. 
\end{enumerate}
\vspace{-.4cm}
\end{framed}

\begin{proof}[Proof of Theorem~\ref{thm:main}]
    For two input matrix tuples $\vA$ and $\vA'$ such that there exist invertible matrices $L$ and $R$ such that $\vA' = L \vA R^{-1}$, 
    the outputs for $\vA$ and $\vA'$ are the same by 
    Lemma~\ref{lem:canonical_with_ibctuples}, 
    Lemma~\ref{lem:direct_sum_ibctuple_algo}, Lemma~\ref{lemma:extension_main_canonical},
    Lemma~\ref{lemma:initialization}, Lemma~\ref{lemma:row_tuple_decomposition_refinement}, and induction on each step of the algorithm. 

    Now we prove the algorithm's running time. By Definition~\ref{def:row_tuple_decomposition}, any hierarchical row tuple decomposition $T_1, \dots, T_\zeta$ with parameters $h_0, \dots, h_\beta$ of a matrix tuple in $\M(n\times m, \F_q)^\ell$ satisfies $\zeta \leq n$, $\beta \leq \zeta$, and $\dim(T_i) \leq n$ for any $i \in [\zeta]$. 
    By Lemma~\ref{lemma:row_tuple_decomposition_refinement}, it takes at most $\poly(n)$ times to jump from Step 4(b) to Step 2. 
    By Lemma~\ref{lem:canonical_with_ibctuples}, 
    Lemma~\ref{lem:direct_sum_ibctuple_algo}, Lemma~\ref{lemma:extension_main},
    Lemma~\ref{lemma:initialization}, and Lemma~\ref{lemma:row_tuple_decomposition_refinement}, the running time of the algorithm is $\poly(n, m, \ell, \log q)$. 
\end{proof}

\begin{framed}
\noindent \textbf{Matrix Tuple Conjugation Canonical Form Algorithm}

\noindent \textbf{Input:} Matrix tuple $\vA \in \M(n, \F_q)^\ell$.

\noindent \textbf{Output:} 
Matrix tuple in $\M(n, \F_q)^\ell$.

\begin{enumerate}
\item Let $\vC = (C_1, \dots, C_{\ell + 1}) \in \M(n, \F_q)^{\ell + 1}$ be the matrix tuple such that 
$C_i = A_i$ for every $i \in [\ell]$, and $C_{\ell + 1} = I_n$. 
\item Run the Matrix Tuple Conjugation Canonical Form Algorithm for $\vC$, and denote the output as $\vE = (E_1', \dots, E_{\ell + 1})$. 
\item Return $(E_1 E_{\ell + 1}^{-1}, \dots, E_\ell E_{\ell + 1}^{-1})$. 
\end{enumerate}
\vspace{-.4cm}
\end{framed}

%\begin{proposition}\label{prop:conj_to_eq}
%A canonical form algorithm for matrix tuple equivalence implies a canonical algorithm for matrix tuple conjugation.
%\end{proposition}
\begin{proof}[Proof of Corollary~\ref{cor:main}]
We first prove the output is obtained by a conjugation action for the input matrix tuple. 
By Theorem~\ref{thm:main}, there exist invertible matrices $L$ and $R$ such that $\vE = L \vC R^{-1}$.

Let $L$ and $R$ be arbitrary invertible matrices such that $\vE = L \vC R^{-1}$.
Since $C_{\ell + 1} = I$, we have $E_{\ell + 1} = L R^{-1}$. 
Hence, $(E_1 E_{\ell + 1}^{-1}, \dots, E_\ell E_{\ell + 1}^{-1}) = (L E_1 R^{-1} R L^{-1}, \dots, L E_\ell R^{-1} R L^{-1}) =(L E_1  L^{-1}, \dots, L E_\ell L^{-1})$. So the output is a conjugation of the input matrix tuple. 

Next, we show that the outputs for conjugate inputs are identical. Let $\vA=(A_1, \dots, A_\ell), \vA'=(A_1', \dots, A_\ell')\in\M(n, \F)^\ell$ be two conjugate matrix tuples, i.e., there is an invertible matrix $L$ such that $\vA' = L \vA L^{-1}$. 
Let $\vC$ and $\vC'$ be the matrix tuples constructed in Step 1 of the algorithm. 
We have $\vC' = L \vC L^{-1}$. 
Let $\vE$ and $\vE'$ be the matrix tuples obtained in Step 2 of the algorithm.
By Theorem~\ref{thm:main}, we have $\vE = \vE'$. Thus, the algorithm outputs for $\vA$ and $\vA'$ are the same. 
%Consider $\vA'=(I_n, A_1, \dots, A_\ell), \vB'=(I_n, B_1, \dots, B_\ell)\in \M(n, \F)^{\ell+1}$. It is easy to verify that $\vA$ and $\vB$ are conjugate if and only if $\vA'$ and $\vB'$ are equivalent. Therefore, given $\vA=(A_1, \dots, A_\ell)\in\M(n, \F)^\ell$, to compute a canonical form for $\vA$, we construct $\vA'$ as above and use $A$ to get a canonical form $\vA'^*\in \M(n, \F)^{\ell+1}$ for $\vA'$ under matrix equivalence. Suppose $\vA'^*=(A_0^*, A_1^*, \dots, A_\ell^*)$. We can then return $\vA^*=(A_1^*A_0^{*-1}, \dots, A_\ell^*A_0^{*-1})$ as the canonical form for $\vA$. 
\end{proof}

\begin{proof}[Proof of %Theorem~\ref{thm:indecomposable_intro} and 
Theorem~\ref{thm:main_space_structure}]
    By the Matrix Tuple Equivalence Canonical Form Algorithm and Theorem~\ref{thm:main}, 
    in Step 5 of the Matrix Tuple Equivalence Canonical Form Algorithm, the sequence $\vB_{\beta + 1, 1}, \dots, \vB_{\beta+ 1, k_{\beta + 1}}$ is a representative \IBCtuple sequence for $\vA$. 
    
    By Lemma~\ref{lemma:extension_main}, every $\vB_{\beta + 1, 1}, \dots, \vB_{\beta+ 1, k_{\beta + 1}}$ satisfies the space property for \IBCtuplesnospace. 
    Hence, by Definition~\ref{def:representative_ibctuple_sequence}, every \IBCtuple is equivalent to one of $\vB_{\beta + 1, 1}, \dots, \vB_{\beta+ 1, k_{\beta + 1}}$, and thus satisfies the space property. 
    By Definition~\ref{def:four_prop}, the block space property is equivalent to the condition of Theorem~\ref{thm:main_space_structure}. 
    Hence, Theorem~\ref{thm:main_space_structure} holds.
%    Theorem~\ref{thm:indecomposable_intro} is obtained by taking an indecomposable matrix tuple as input and Theorem~\ref{thm:main_space_structure}. 
\end{proof}

\bibliographystyle{alphaurl}
\bibliography{references}

\end{document}